\DeclareMathAlphabet{\mathscr}{OT1}{pzc}{m}{it}
\newcommand{\bC}{\mathbb C}
\newcommand{\bR}{\mathbb R}
\newcommand{\bZ}{\mathbb Z}
\newcommand{\n}{\textup}
\newcommand{\f}{\mathfrak}
\newcommand{\ca}{\mathcal}
\newcommand{\newfn}{\DeclareMathOperator}
\newfn{\Tr}{Tr}
\newfn{\End}{End}
\newfn{\Hom}{Hom}
\newfn{\Dom}{Dom}
\newfn{\Sp}{Sp}
\newfn{\qdet}{qdet}
\newfn{\sgn}{sgn}
\newfn{\Res}{Res}
\newfn{\ii}{i}
\newfn{\e}{e}
\def\cprime{$'$}
\renewcommand{\Im}{\mathrm {Im} \, }
\renewcommand{\L}{\mathrm L}
\newcommand{\dd}{\, \mathrm d}
\newcommand{\sd}[2][]{\frac{\mathrm d^{#1}}{\mathrm d #2^{#1}}}
\newcommand{\Vac}{\Psi_{\emptyset}}
\newcommand{\set}[2]{\left\{ \, {#1} : {#2}  \, \right\} }
\newcommand{\inner}[2]{\left\langle {#1}, {#2} \right \rangle }
\newcommand{\innerrnd}[2]{\left({#1}, {#2}\right) }
\newtheorem{thm}{Theorem}[section]
\newtheorem{cor}[thm]{Corollary}
\newtheorem{lem}[thm]{Lemma}
\newtheorem{prop}[thm]{Proposition}
\newtheorem{conj}[thm]{Conjecture}
\newtheorem{rem}[thm]{Remark}
\newtheorem{defn}[thm]{Definition}
\newtheorem{exm}[thm]{Example}
\newtheorem*{notn}{Notation}
\newcommand{\boxedenv}[1]{ \fbox{ \addtolength{\linewidth}{-6mm} \begin{minipage}{\linewidth}  #1 \end{minipage} } }
\newcommand{\boxedalign}[1]{\[ \fbox{ \addtolength{\linewidth}{-2\fboxsep}\addtolength{\linewidth}{-2\fboxrule} \begin{minipage}{\linewidth}  \vspace{-6mm}  \begin{align}#1 \end{align} \end{minipage} } \nonumber \] }
\newcommand{\boxedgather}[1]{\[ \fbox{ \addtolength{\linewidth}{-2\fboxsep}\addtolength{\linewidth}{-2\fboxrule} \begin{minipage}{\linewidth}  \vspace{-6mm}  \begin{gather}#1 \end{gather} \end{minipage} } \nonumber \] }
\newcommand{\rf}[1]{(\ref{#1})}
\newcommand{\rfeqn}[1]{Eqn. (\ref{#1})}
\newcommand{\rfeqnser}[2]{Eqns. (\ref{#1}-\ref{#2})}
\newcommand{\rft}[1]{Thm. \ref{#1}}
\newcommand{\rfc}[1]{Cor. \ref{#1}}
\newcommand{\rfl}[1]{Lemma \ref{#1}}
\newcommand{\rfls}[2]{Lemmas \ref{#1} and \ref{#2}}
\newcommand{\rflser}[2]{Lemmas \ref{#1}-\ref{#2}}
\newcommand{\rfp}[1]{Prop. \ref{#1}}
\newcommand{\rfcn}[1]{Conj. \ref{#1}}
\newcommand{\rfd}[1]{Defn. \ref{#1}}
\newcommand{\rfr}[1]{Rem. \ref{#1}}
\newcommand{\rfex}[1]{Example \ref{#1}}
\newcommand{\rff}[1]{Figure \ref{#1}}
\newcommand{\nc}[3]{\nomenclature[#1]{#2}{#3 \dotfill}}
\renewcommand{\nomgroup}[1]{ \ifthenelse{\equal{#1}{R}}{\item[\textbf{Roman letters}]}{\ifthenelse{\equal{#1}{G}}{\item[\textbf{Greek letters}]}{}}}
\title{A non-symmetric Yang-Baxter Algebra for the Quantum Nonlinear Schr\"odinger Model}
\author{Bart Vlaar}
\begin{document}

\frontmatter

\maketitle

\newpage

\chapter*{Abstract}

We study certain non-symmetric wavefunctions associated to the quantum nonlinear Schr\"o-dinger (QNLS) model, introduced by Komori and Hikami using representations of the degenerate affine Hecke algebra. In particular, they can be generated using a vertex operator formalism analogous to the recursion that defines the symmetric QNLS wavefunction in the quantum inverse scattering method. Furthermore, some of the commutation relations encoded in the Yang-Baxter equation are generalized to the non-symmetric case.

\newpage

\chapter*{Acknowledgements}

\begin{quote}
\emph{``Feeling gratitude and not expressing it is like wrapping a present and not giving it.''} \hfill William Arthur Ward
\end{quote}

There are several people to whom thanks are due. I am greatly indebted to my supervisor Dr. Christian Korff for introducing the topic to me, for his insights and always constructive comments and for his support, especially when things did not go according to plan. It has been a pleasure to work with him. I would like to thank the ISMP group at Glasgow for allowing me to present some of my own work and learn of theirs and others'.
More generally, I thank the (now) School of Mathematics and Statistics for allowing me to study in Glasgow and providing a stimulating environment for research. I also acknowledge the financial support from EPSRC.\\

The 522 office, and the postgraduate students at large, have been a great bunch of people. I have enjoyed working alongside them and have enjoyed with them the occasional distraction, such as the odd hill-walking trip, the quotidian lunchtime puzzle and too many interesting discussions on a plethora of topics, including, but in no way restricted to, mathematical ones. \\

I doubt that this thesis would have been possible without Lindy, my wife. I am also immensely grateful to her for raising our son Hamish with me, who has been the greatest diversion from and strongest motivation for my work. My parents (of both consanguine and affine type) also receive a special mention for their interest and support, and providing me with homes away from home. I am grateful to all my family and friends for putting up with me and helping me in their way.

\newpage

\chapter*{Author's declaration}

This thesis is submitted in accordance with the regulations for the degree of Doctor of Philosophy at the University of Glasgow. \\

Chapters \ref{chIntro} and \ref{chQISM} (except Section \ref{altformulae}) and Sections \ref{secsymmgroup} and \ref{secdAHA} cover some background and review existing theory.
The rest of this thesis is the author's original work unless explicitly stated otherwise.

\newpage

\tableofcontents

\newpage

\mainmatter

\numberwithin{equation}{chapter}

\chapter{Introduction}
\label{chIntro}

The quantum nonlinear Schr\"odinger (QNLS) model for a 1-dimensional bosonic gas was introduced by Lieb and Liniger \cite{LiebLi} in 1963 and has been studied extensively since, e.g. \cite{YangYang, Gaudin1983, Dorlas, Korepin, Sklyanin1989, KomoriHikami, Hikami, Gutkin1988, Gutkin1982, Gutkin1985, HeckmanOpdam1997, Gaudin1971-1,Gaudin1971-2,Gaudin1971-3,Emsiz,EmsizOS}. It describes a system of $N$
\nc{rnc}{$N$}{Number of particles}%
particles moving along a circle or an infinite line with pairwise contact interaction whose strength is determined by a constant $\gamma \in \bR$%
\nc{gcl}{$\gamma$}{Coupling constant}%
. Most of the theory deals with the repulsive case ($\gamma>0$). \\

The QNLS model was introduced \cite{LiebLi} as the first example of a parameter-dependent boson gas for which eigenstates and eigenvalues of the quantum Hamiltonian can be calculated exactly. Earlier, Girardeau \cite{Girardeau} studied a related system which does not contain a (nontrivial) parameter but which can be obtained from the Lieb-Liniger system in the limit $\gamma \to \infty$. We also remark that a free system of bosons is obtained in the limit $\gamma \to 0$, which is an important test case for results on the QNLS model. \\

Assume the particle coordinates are given by $\bm x = (x_1,\ldots,x_N) \in J^N$ 
\nc{rxl}{$\bm x = (x_1,\ldots,x_N)$}{Particle locations}%
for some closed interval $J \subset \bR$%
\nc{rjc}{$J$}{Closed interval containing permitted particle locations}%
.
In units where Planck's constant $\hbar$ equals 1 and the mass of each particle $\frac{1}{2}$, the Hamiltonian $H_\gamma$ for the QNLS model is formally given by
\begin{equation} \label{Hamiltonian}
H_\gamma = - \Delta + 2\gamma \sum_{1 \leq j < k \leq N} \delta(x_j-x_k),
\end{equation}%
\nc{rhca}{$H_{\gamma}$}{Hamiltonian for the QNLS model \nomrefeqpage}%
\nc{gdc}{$\Delta = \sum_{j=1}^N \partial_j^2$}{Laplacian operator}%
\nc{gdl}{$\delta(x)$}{Dirac delta}%
with associated eigenvalue problem
\[ H_\gamma \Psi = E \Psi, \]%
\nc{rec}{$E$}{Eigenvalue of Hamiltonian}%
for some $E \in \bR$, where $\Psi$ is an element of a yet-to-be-determined function space.
We have written $\Delta = \sum_{j=1}^N \partial_j^2$ for the Laplacian, where $\partial_j = \frac{\partial}{\partial x_j}$.
\nc{rdlva}{$\partial_{j} = \frac{\partial}{\partial x_j}$}{$j$-th partial derivative}
We emphasize that the definition \rfeqn{Hamiltonian} is entirely formal, i.e. we have not specified the domain of $H_\gamma$. We will address this in due course. \\

Despite having been studied for a long time, the QNLS model still has open questions attached to it. 
Dealing with these issues is all the more important since the QNLS model is in many ways a prototypical integrable model; it is often chosen as a test case for (new) methods.
With this thesis we hope to make some advances in the theoretical understanding of the QNLS model.

\section{Function spaces and the symmetric group}

To place the Hamiltonian \rfeqn{Hamiltonian} on a more rigorous footing, we introduce the following standard terminology for function spaces.
Given $U$ a subset of $\bR^N$ or $\bC^N$ which has a nonempty interior, let $\ca F(U)$
\nc{rfc}{$\ca F(U)$}{Vector space of functions: $U \to \bC$} 
denote the vector space of complex-valued functions on $U$ and consider its subspace
\[ \ca C(U) = \set{f \in \ca F(U)}{f \n{ is continuous}}. \]%
\nc{rccxa}{$\ca C(U)$}{Vector space of continuous functions: $U \to \bC$}%
If $U$ is open, we have the further subspaces
\begin{align*}
\ca C^k(U) &= \set{f \in \ca F(U)}{f \n{ has derivatives up to order }k\n{ which are continuous}}, \quad k \in \bZ_{> 0}, \\
\ca C^\infty(U) &= \set{f \in \ca F(U)}{f \n{ is smooth}}, \\
\ca C^\omega(U) &= \set{f \in \ca F(U)}{f \n{ is real-analytic}}, \\
\ca P(U) &= \set{f \in \ca F(U)}{f \n{ is polynomial}}.
\end{align*}%
\nc{rccxb}{$\ca C^k(U)$}{Vector space of $k$-times continuously differentiable \\ functions: $U \to \bC$}%
\nc{rccxc}{$\ca C^\infty(U)$}{Vector space of smooth functions: $U \to \bC$}%
\nc{rccxd}{$\ca C^\omega(U)$}{Vector space of real-analytic functions: $U \to \bC$}%
\nc{rpcw}{$\ca P(U)$}{Vector space of polynomial functions: $U \to \bC$}%
\vspace{-5mm}

Given that each particle's location is restricted to be in the closed interval $J \subset \bR$, 
note that the term in \rf{Hamiltonian} proportional to $\gamma$ is linked to an arrangement of hyperplanes 
\[ V_{j \, k} = \set{\bm x \in J^N}{x_j=x_k}, \qquad \n{for } j<k. \]%
\nc{rvc}{$V_{j\, k}$}{Hyperplane $\set{\bm x \in J^N}{x_j = x_k}$}%
The corresponding set of \emph{regular vectors} is given by
\[ J^N_\n{reg} = J^N \setminus \bigcup_{1 \leq j<k \leq N} V_{j\,k} = \set{\bm x \in J^N}{x_j \ne x_k \n{ if } j \ne k}. \]%
\nc{rjc}{$J^N_\n{reg}$}{Set of regular vectors}%
The \emph{alcoves} are the connected components of $J^N_\n{reg}$.
The \emph{fundamental alcove} is given by
\[ J^N_+ = \set{\bm x \in J^N}{x_1>\ldots>x_N}. \]%
\nc{rjc}{$J^N_{+}$}{Fundamental alcove}%
\vspace{-7mm}

We recall some basic facts and definitions related to the symmetric group $S_N$. %
\nc{rsca}{$S_N$}{Symmetric group}%
The transposition of two elements $j,k \in \{1,\ldots,N\}$, $i \ne j$ is denoted $s_{j \, k}$.
\nc{rsl}{$s_{j \, k}$}{Transposition swapping $j$ and $k$}
For ease of notation we will sometimes write $1 \in S_N$ as $s_{j \, j}$ for some $j=1,\ldots,N$. 
For $j=1,\ldots,N-1$ the transposition $s_{j \, j+1}$ is written $s_j$
\nc{rsl}{$s_j = s_{j \, j+1}$}{Simple transposition swapping $j$ and $j+1$} and called \emph{simple}.
$S_N$ is generated by the simple transpositions $\set{s_j}{1 \leq j \leq N-1}$ and has the corresponding \emph{presentation}
\[ S_N = \left\langle s_1,\ldots,s_{N-1} | s_j^2=1, \, s_js_{j+1}s_j=s_{j+1}s_js_{j+1}, \, s_j s_k = s_k s_j \, \n{ if }  |j-k|>1 \right\rangle. \]
$S_N$ has an obvious (left) action on $J^N$ defined by $w (x_1,\ldots, x_N) = (x_{w 1},\ldots, x_{w N})$, where $(x_1,\ldots,x_N) \in J^N$ and $w \in S_N$. %
\nc{rwl}{$w$}{Arbitrary element of $S_N$}%
In fact $S_N$ acts as a Weyl group associated to the collection of hyperplanes $V_{j \, k}$, i.e. the transpositions $s_{j \, k}$ act as reflections in the hyperplane $V_{j \,k}$, which are isometries with respect to the standard Euclidean inner product defined by
\[ \inner{\bm x}{\bm y} := \sum_{j=1}^N x_j y_j \qquad \n{ for } \bm x,\bm y \in J^N, \]
in other words $\inner{w \bm x}{\bm y} = \inner{\bm x}{w^{-1} \bm y}$ for all $w \in S_N$ and all $\bm x,\bm y \in J^N$.
Furthermore, $J^N_\n{reg}$ is an invariant subset under the above action, and $S_N$ also acts on the collection of hyperplanes and on the collection of alcoves. The latter action is transitive, so that $J^N_\n{reg} = \bigcup_{w \in S_N} w J^N_+$. 
Also, there is a left action of $S_N$ on $\ca F(J^N)$, defined by 
\[ (w f)(\bm x) = f(w^{-1} \bm x), \] 
for $w \in S_N$, $f \in \ca F(J^N) \to \bC$, and $\bm x \in J^N$.
The vector space $\ca C(J^N)$ is an invariant subset under this action, and in the case $J=\bR$, so are
$\ca C^k(\bR^N)$, $\ca C^\infty(\bR^N)$, $\ca C^\omega(\bR^N)$ and $\ca P(\bR^N)$. 
We note that $S_N$ acts on the sets obtained from the above by replacing $J^N$ by $J^N_\n{reg}$ or $\bC^N$.
Given $z \in \bC$, we denote by $\bar z$ its complex conjugate.
The corresponding complex Euclidean inner product is defined by
\[ \inner{\bm z}{\bm z'} := \sum_{j=1}^N z_j \bar z'_j \qquad \n{ for } \bm z,\bm z' \in \bC^N, \]
and satisfies $\inner{w \bm z}{\bm z'} = \inner{\bm z}{w^{-1} \bm z'}$ for all $w \in S_N$ and $\bm z,\bm z' \in \bC^N$.
If $X$ is a set acted upon by $S_N$, then $X^{S_N}$ denotes the subset of elements of $X$ that are left fixed by $S_N$.

\section{The QNLS problem revisited; the coordinate Bethe ansatz}

It is well-known \cite{LiebLi,Dorlas,Gutkin1988} that the eigenvalue problem of the Hamiltonian $\rf{Hamiltonian}$ should be interpreted as the system of equations
\begin{align}
-\Delta \Psi|_{J^N_\n{reg}} &= E \Psi|_{J^N_\n{reg}}, \label{QNLS1} \\
\left(\partial_j-\partial_k\right)\Psi|_{V_{j \, k}^+}-\left(\partial_j-\partial_k\right)\Psi|_{V_{j \, k}^-} &=  2 \gamma \Psi|_{V_{j \, k}}, \qquad \n{for } 1 \leq j<k \leq N, \label{QNLS2} 
\end{align}
for a $\Psi \in \ca C(J^N)^{S_N}$ whose restriction to $J^N_\n{reg}$ is twice continuously differentiable.
Here we have used the notation
\begin{equation} \label{Vjkplusmin} f|_{V_{j \, k}^\pm} = \lim_{x_k \to x_j \atop x_j \gtrless x_k} f \end{equation}%
\nc{rvcz}{$V_{j \, k}^\pm$}{Indicates a limit $x_k \to x_j$ is taken with $x_j \gtrless x_k$ \nomrefeqpage}%
for $f \in \ca C(J^N)$.
The equations \rf{QNLS2} are called the \emph{derivative jump conditions}. \\

Furthermore, if $J$ is bounded without loss of generality we may assume that $J=[-L/2,L/2]$, for some $L \in \bR_{> 0}$. %
\nc{rlca}{$L$}{Length of bounded interval $J$}%
In this case for the Hamiltonian to be (formally) self-adjoint it is necessary to apply the following boundary conditions to $\Psi$ and its derivative:
\begin{equation} \label{QNLSperiodic}
\begin{aligned}
\Psi(\bm x)|_{x_j=-L/2} &= \Psi(\bm x)|_{x_j=L/2}, \\
\partial_j \Psi(\bm x)|_{x_j=-L/2} &= \partial_j \Psi(\bm x)|_{x_j=L/2},
\end{aligned}
\qquad \n{for } j =1,\ldots,N.
\end{equation}

Because $\Psi$ is $S_N$-invariant, it is sufficient to impose the conditions
\begin{align}
-\Delta \Psi|_{J^N_+} &= E \Psi|_{J^N_+}, \label{QNLS1'} \\
\left(\partial_j-\partial_k\right)\Psi|_{V_{j \, j+1}^+} &= \gamma \Psi|_{V_{j \, j+1}}, \qquad \n{for } j=1,\ldots, N-1, \label{QNLS2'} 
\end{align}
where we may now take $\Psi \in \ca C(\overline{J^N_+})$ with twice continuously differentiable restriction to $J^N_+$. The conditions \rf{QNLSperiodic} can also be simplified:
\begin{equation} \label{QNLSperiodic'}
\begin{aligned}
\Psi(L/2,x_1,\ldots,x_{N-1}) &= \Psi(x_1,\ldots,x_{N-1},-L/2), \\
\partial_x \Psi(x,x_1,\ldots,x_{N-1})|_{x=L/2} &= \partial_x \Psi(x_1,\ldots,x_{N-1},x)|_{x=-L/2}.
\end{aligned}
\end{equation}

Lieb and Liniger \cite{LiebLi} solved this system by modifying an approach which Bethe used to analyse the one-dimensional Heisenberg model \cite{Bethe}. This method is now known as the \emph{(coordinate) Bethe ansatz}. 
Write $\e^{\ii \bm \lambda} \in \ca C^\omega(J^N)$ for the \emph{plane wave} with wavenumbers given by $\bm \lambda = (\lambda_1,\ldots,\lambda_N) \in \bC^N$\nc{gll}{$\bm \lambda = (\lambda_1,\ldots,\lambda_N)$}{Vector of wavenumbers}, i.e. the function defined by
\begin{equation}
\e^{\ii \bm \lambda}(\bm x) = \e^{\ii \inner{\bm \lambda}{\bm x}} = \e^{\ii \sum_{j=1}^N \lambda_j x_j}.
\end{equation}%
\nc{rel}{$\e^{\ii \bm \lambda}$}{Plane wave with wavenumbers given by $\bm \lambda$ \nomrefeqpage}%
This Bethe ansatz results in the statement that the \emph{Bethe wavefunction} $\Psi_{\lambda_1,\ldots,\lambda_N} \in \ca C(J^N)$ given by
\begin{equation} \label{Bethewavefn} \Psi_{\lambda_1,\ldots,\lambda_N}(\bm x) = \frac{1}{N!} \sum_{w \in S_N} \left( \prod_{1 \leq j<k \leq N} \frac{\lambda_{w j}-\lambda_{w k}-\ii \gamma}{\lambda_{w j}-\lambda_{w k}} \right) \e^{\ii \sum_{j=1}^N \lambda_{w j}x_j}, \qquad \n{for } \bm x \in J^N_+,  
\end{equation}%
\nc{gyc}{$\Psi_{\bm \lambda}$}{Bethe wavefunction \nomrefeqpage}%
solves the QNLS problem, i.e. it satisfies \rfeqnser{QNLS1}{QNLS2} with $E=\sum_{j=1}^N \lambda_j^2$.
Furthermore, if $\lambda_1,\ldots,\lambda_N$ are distinct real numbers satisfying the \emph{Bethe ansatz equations} (BAEs), viz.
\begin{equation} \label{BAEintro} \e^{\ii \lambda_j L} = \prod_{k=1 \atop k \ne j}^N \frac{\lambda_j-\lambda_k + \ii \gamma}{\lambda_j-\lambda_k - \ii \gamma}, \qquad \n{for } j =1,\ldots, N, 
\end{equation}
then in addition $\Psi_{\bm \lambda}$ satisfies Eqns. \rf{QNLSperiodic}. 

\begin{exm}[$N=2$] \label{Bethewavefn2}
We present the explicit expression of the Bethe wavefunction for $N=2$ and $J=[-L/2,L/2]$ for some $L>0$.
On the fundamental alcove $\set{(x_1,x_2) \in J^2}{x_1>x_2}$ we have
\[ \Psi_{\lambda_1,\lambda_2}(x_1,x_2) = \frac{1}{2} \left( \frac{\lambda_1\! - \!\lambda_2\! - \!\ii \gamma}{\lambda_1\! - \!\lambda_2} \e^{\ii (\lambda_1 x_1+\lambda_2 x_2)} + \frac{\lambda_1\! - \!\lambda_2+ \ii \gamma}{\lambda_1\! - \!\lambda_2} \e^{\ii (\lambda_2 x_1+\lambda_1 x_2)}\right); \]
the reader may verify that this leads to the following expressions for arbitrary $(x_1,x_2) \in J^2$:
\begin{align*} 
\Psi_{\lambda_1,\lambda_2}(x_1,x_2) &= \frac{1}{2} \left( \frac{\lambda_1\! - \!\lambda_2\! - \! \sgn(x_1\! - \!x_2) \ii \gamma}{\lambda_1\! - \!\lambda_2} \e^{\ii (\lambda_1 x_1+\lambda_2 x_2)} + \right. \\
& \hspace{10mm} \left. + \frac{\lambda_1\! - \!\lambda_2+ \sgn(x_1\! - \!x_2) \ii \gamma}{\lambda_1\! - \!\lambda_2} \e^{\ii (\lambda_2 x_1+\lambda_1 x_2)}\right).
\end{align*}
This function satisfies the system \rfeqnser{QNLS1}{QNLS2}, which in this case reads:
\begin{align*}
-(\partial_1^2+\partial_2^2)\Psi_{\lambda_1,\lambda_2}(x_1,x_2) = (\lambda_1^2+\lambda_2^2) \Psi_{\lambda_1,\lambda_2}(x_1,x_2), \qquad & \n{if } -L/2 < x_1 \ne x_2 < L/2, \\
\lefteqn{\hspace{-100mm} \lim_{x_2 \to x_1 \atop x_1 > x_2} (\partial_1-\partial_2) \Psi_{\lambda_1,\lambda_2}(x_1,x_2)-
\lim_{x_2 \to x_1 \atop x_1 < x_2} (\partial_1-\partial_2) \Psi_{\lambda_1,\lambda_2}(x_1,x_2) =
2 \gamma \Psi_{\lambda_1,\lambda_2}(x_1,x_1), }\\[-4mm]
& \n{if } -L/2<x_1<L/2.
\end{align*}
Furthermore, if $\frac{\lambda_1-\lambda_2+\ii \gamma}{\lambda_1-\lambda_2-\ii \gamma} = \e^{\ii \lambda_1 L} = \e^{-\ii \lambda_2 L}$, then we have
\begin{align*} 
\Psi_{\lambda_1,\lambda_2}(L/2,x)&=\Psi_{\lambda_1,\lambda_2}(-L/2,x) & 
\Psi_{\lambda_1,\lambda_2}(x,L/2)&=\Psi_{\lambda_1,\lambda_2}(x,-L/2) \\
(\partial_1 \Psi_{\lambda_1,\lambda_2})(L/2,x) &= (\partial_1 \Psi_{\lambda_1,\lambda_2})(-L/2,x) &
(\partial_2 \Psi_{\lambda_1,\lambda_2})(x,L/2) &= (\partial_2 \Psi_{\lambda_1,\lambda_2})(x,-L/2).
 \end{align*}
We will keep returning to the $N=2$ case throughout the thesis.
\end{exm}

\section{Quantum inverse scattering method}

An important solution technique for the QNLS model, and an example of a method using the QNLS model as a test case, has been the \emph{quantum inverse scattering method} (QISM).
The QISM is a quantized version of the classical inverse scattering method, a technique used for solving certain nonlinear partial differential equations, and was developed by the Faddeev school \cite{Faddeev1995,Sklyanin1982,SklyaninFaddeev,SklyaninTakhtajanFaddeev,KorepinBI} after Baxter's pioneering work on exactly solvable models in statistical mechanics and his method of commuting transfer matrices; see \cite{Baxter1989} for a text book account and references therein. \\

The application of the QISM to the QNLS model will be reviewed in Chapter \ref{chQISM}.
We make some general remarks here.
The QISM revolves around the so-called \emph{monodromy matrix} $\ca T_\lambda = {\tiny \begin{pmatrix} A_\lambda & \gamma B_\lambda  \\ C_\lambda & D_\lambda  \end{pmatrix}}$, a parameter-dependent 2x2-matrix whose entries are operators on the relevant state space (in the case of the QNLS model, the Fock space, which is the direct sum of all spaces $\L^2(J^N)^{S_N}$ of symmetrized square-integrable functions on the finite box $J^N$). 
$\ca T$ satisfies the Yang-Baxter equation related to the Yangian of $\f{gl}_2$ whence commutation relations are obtained for its entries.
The algebra generated by $A_\lambda, B_\lambda , C_\lambda, D_\lambda $ is called the \emph{Yang-Baxter algebra}.
The relevance of the monodromy matrix to physical models lies in the fact that the \emph{transfer matrices} $T_\lambda=A_\lambda+D_\lambda $ form a self-adjoint commuting family and commute with (in fact are generating functions for) the integrals of motion, including the Hamiltonian. It can be shown that under certain conditions on the $\lambda_j$ (also called \emph{Bethe ansatz equations}), the functions 
\begin{equation} \label{PsiQISMintro} \Psi_{\lambda_1,\ldots,\lambda_N} := B_{\lambda_N} \ldots B_{\lambda_1} \Vac \end{equation}
are eigenfunctions of the transfer matrix, and hence of the Hamiltonian, where $\Vac$ is a reference state (in the QNLS case it is $1 \in \L^2(J^0) \cong \bC$). 

\section{Root systems and affine Hecke algebras}

An important contribution by Gaudin \cite{Gaudin1971-3} was the realization that the Lieb-Liniger system can be naturally generalized in terms of (classical) crystallographic root systems. These generalizations can also be solved by the Bethe ansatz approach and have been the subject of further study \cite{Gutkin1982,GutkinSutherland,HeckmanOpdam1997,EmsizOS,Emsiz}. It has been highlighted by Heckman and Opdam \cite{HeckmanOpdam1997} that representations of certain degenerations of affine Hecke algebras play an essential role, providing the second method for solving the QNLS problem. \\

In particular, a \emph{non-symmetric} function $\psi_{\lambda_1,\ldots,\lambda_N}$ can be constructed that solves the QNLS eigenvalue problem \rfeqnser{QNLS1}{QNLS2}. Upon symmetrization one recovers the Bethe wavefunction $\Psi_{\lambda_1,\ldots,\lambda_N}$; hence we will refer to $\psi_{\lambda_1,\ldots,\lambda_N}$ as the \emph{pre-wavefunction}.
This function $\psi_{\lambda_1,\ldots,\lambda_N}$ was introduced into the theoretical picture of the QNLS model by Komori and Hikami \cite{KomoriHikami,Hikami} in analogy to the non-symmetric Jack polynomials in the Calogero-Sutherland-Moser model \cite{Polychronakos,BernardGHP,Hikami1996}. Furthermore, Hikami \cite{Hikami} has made clear the connection with Gutkin's \emph{propagation operator} or \emph{intertwiner} \cite{Gutkin1982, EmsizOS}, which intertwines two representations of the relevant degeneration of the affine Hecke algebra. 
We will review this approach in Chapter \ref{chdAHA} in more detail. 

\section{Norm formulae, completeness and integrability}

In general, the norms of quantum-mechanical wavefunctions are important for the calculation of probabilities because of the following reason. 
If $\Psi$ is a wavefunction of a quantum-mechanical system whose $\L^2$-norm $\| \Psi \|$ equals 1, then $| \Psi(\bm x) |^2$ can be interpreted as the probability density of finding the quantum system at location $\bm x$. 
If $\Psi$ is an arbitrary wavefunction, then $\tilde \Psi = \Psi / \| \Psi \|$ is normalized and the aforementioned notion of probability density can be assigned to $| \tilde \Psi(\bm x) |^2 = | \Psi(\bm x)|^2 / \| \Psi \|^2$. \\

As for the Bethe wavefunctions, their norms were conjectured by Gaudin \cite{Gaudin1983} to be given by so-called determinantal formulae. 
This was proven by Korepin \cite{Korepin} using QISM techniques. Emsiz \cite{Emsiz} has conjectured similar formulae for the norms of the eigenfunctions associated with more general crystallographic root systems which have been verified for systems of small rank \cite{BustamanteVDDlM}. \\

An important and difficult problem has been to determine the completeness of the Bethe wavefunctions. The spectrum of the Hamiltonian changes drastically depending on the type of interaction (repulsive or interactive) and the geometry (line or circle). 
In the attractive case, where $\gamma < 0$, bound states occur (due to multi-particle binding), leading to a mixed spectrum. For the system on the line, completeness of the Bethe eigenstates was shown by Oxford \cite{Oxford}. For the corresponding system on the circle this question is still an open problem. \\

For the repulsive case, the completeness of the Bethe wavefunctions in $\L^2(\bR^N)^{S_N}$, where the Hamiltonian has a purely continuous spectrum, was proven by Gaudin \cite{Gaudin1971-1,Gaudin1971-2, Gaudin1983}. 
For the corresponding problem on a bounded interval $J$ (i.e. the system of quantum particles on a circle), completeness (in $\L^2(J^N)^{S_N}$) and orthogonality of the set of eigenfunctions of the QNLS Hamiltonian, viz. $\set{\Psi_{\bm \lambda}}{\bm \lambda \n{ satisfies the BAEs } \rf{BAEintro}}$, were proven by Dorlas \cite{Dorlas} using completeness of the plane waves, a continuity argument at $\gamma=0$ and QISM techniques. To do this, it is crucial to specify the right domain of $H_\gamma$, so that it becomes essentially self-adjoint (i.e. its closure is self-adjoint).\\

The QNLS model is a \emph{quantum integrable system}, by which we mean that infinitely many integrals of motion (conserved charges) exist: operators on $\L^2(J^N)^{S_N}$ which are simultaneously diagonalized by the solutions of \rfeqnser{QNLS1}{QNLS2}, i.e. the Bethe wavefunctions. Because of completeness of the Bethe wavefunctions, these operators mutually commute.

\section{Experimental construction of the QNLS model}

So far we have discussed the rich mathematical structures of the QNLS model. Its physical significance has been demonstrated by recent experiments \cite{Amerongen,AmerongenEWKD} where systems described by the QNLS model, consisting of magnetically confined ultra-cold rubidium atoms, have been manufactured. These atoms are trapped using the magnetic field generated by electrical currents on a microchip; because of the low temperature, the movement of the atoms is reduced and an effectively one-dimensional system is created. Theoretical advances in the QNLS model may be of importance to such experiments and any new technology that arises out of them: by virtue of integrability it may be possible to obtain exact data which can help to calibrate equipment.  

\section{Present work}

The general problem that this thesis aims to address is the disparateness of the QISM and the Hecke algebra approach. Each has their own advantages; the QISM yields recursive relations for the Bethe wavefunctions whereas the degenerate affine Hecke algebra can be immediately generalized to other reflection groups. 
However, since both methods solve the QNLS problem there should be connections, and we will highlight some of them. \\

In particular, we will focus on the pre-wavefunctions $\psi_{\lambda_1,\ldots,\lambda_N}$ and demonstrate that they are more important to the theory of the QNLS model than previously thought. 
In \rft{brecursion} we will prove that the pre-wavefunctions can be generated by operators $b^\pm_\mu: \ca C(\bR^N) \to \ca C(\bR^{N+1})$ (for which we will give explicit formulae) as follows:
\[ \psi_{\lambda_1,\ldots,\lambda_N} = b^-_{\lambda_N}  \ldots b^-_{\lambda_1} \Vac = b^+_{\lambda_1}  \ldots b^+_{\lambda_N}  \Vac, \]
where $(\lambda_1, \ldots, \lambda_N) \in \bC^N$.
In particular, we remark that the pre-wavefunctions are defined using the affine Hecke algebra method, but satisfy a QISM-type relation (cf. \rfeqn{PsiQISMintro}); this ties these two solution methods for the QNLS model more closely together. 
Furthermore, we obtain the relation 
\[ B_\lambda  \ca S^{(N)} = \ca S^{(N+1)} b^\pm_\lambda , \]
where $\ca S^{(N)} = \frac{1}{N!} \sum_{w \in S_N} w$, leading to a new proof that $B_{\lambda_N} \ldots B_{\lambda_1} \Vac$ indeed is the Bethe wavefunction.
These operators $b^\pm_\mu$ can also be seen as operators densely defined on non-symmetric\footnote{Or rather, ``not-necessarily-symmetric''.} Fock space, the direct sum of all the spaces $\L^2(J^N)$ for a bounded interval $J$, on which we can define further operators $a_\mu, c^\pm_\mu , d_{\mu}$. These, together with $\ca S^{(N+1)} b^\pm_\mu$, restrict to the QISM operators $A_\mu, B_\mu , C_\mu , D_\mu $ and we will highlight similar commutation relations that they satisfy, prompting the concept of a \emph{non-symmetric Yang-Baxter algebra}.

\section{Outline of thesis}

In Chapters \ref{chQISM} and \ref{chdAHA} we will review the main theories in existence that solve the QNLS model. In both chapters we present some original work.
Chapter \ref{chInterplay} is a short chapter highlighting some connections between these two solution methods, which is largely original work. The main body of original work, the theory surrounding the operators $a_\mu, b^\pm_\mu, c^\pm_\mu , d_{\mu}$, is found in Chapter \ref{ch5}. 
Finally, in Chapter \ref{chSummary} we provide some concluding remarks.\\

There are two appendices with detailed calculations, to which will be referred in Chapters 3, 4 and 5 where needed. This is followed by a list of symbols on page \pageref{listofsymbols} and a list of references on page \pageref{references}.


\newpage

\numberwithin{equation}{section}

\chapter[The quantum inverse scattering method]{The quantum inverse scattering method (QISM)} \label{chQISM}

In this chapter we recall that the QNLS system of $N$ spinless quantum particles on a line, a line segment or a circle formally corresponds to the $N$-particle sector of a bosonic nonrelativistic quantum field theory which can be studied with the aid of the \emph{quantum inverse scattering method} (QISM), also known as the \emph{algebraic Bethe ansatz} (ABA). This is a quantum version of the (classical) inverse scattering methods and was introduced and developed by the (then) Leningrad school led by Faddeev \cite{Faddeev1995, Sklyanin1982, SklyaninFaddeev, SklyaninTakhtajanFaddeev,KorepinBI}, which was preceded by Baxter's seminal work \cite{Baxter1989} on exactly solvable models in statistical mechanics. It also contributed to the development of quantum groups. 
Besides the work of the Faddeev school specific to the QNLS model \cite{Sklyanin1989, Korepin} we should also mention Gutkin's exposition \cite{Gutkin1988}.  
For a mathematical background, we refer the reader to \cite{ReedSimon,ReedSimon2}. \\

We will start off by defining various Hilbert spaces and then briefly review the quantum field-theoretic context of the QNLS Hamiltonian and the quantum inverse scattering method, centred around the so-called monodromy matrix, Its entries are the generators of a spectrum generating algebra (Yang-Baxter algebra) of the Hamiltonian and their commutation relations are encoded in the famous Yang-Baxter equation.
We will discuss the use of the Yang-Baxter algebra in constructing eigenfunctions of the QNLS Hamiltonian and highlight some further properties of the Yang-Baxter algebra, providing a connection to Yangians. Having reviewed the existing theory, we present new integral formulae for the entries of the monodromy matrix which we believe have computational advantages.

\section{Hilbert spaces}

We will consider systems of quantum particles whose movement is restricted to a one-dimensional set (an infinite line, a finite line segment or a circle). 
In quantum mechanics, the possible states of these systems are described by wavefunctions, which are elements of a Hilbert space. 
Let $J \subset \bR$ be a closed interval. The reader should keep in mind two cases: $\bR$ itself and an interval of some finite length $L>0$.
In any event, for $N$ a nonnegative integer, consider 
\[ \L^2(J^N) = \set{f: J^N \to \bC}{\int_{J^N} \dd^N \bm x |f(\bm x)|^2 < \infty}, \]%
\nc{rlcw}{$\L^2(J^N)$}{Hilbert space of square-integrable functions: $J^N \to \bC$}%
the set of square-integrable functions on $J^N$, which is a Hilbert space with respect to the inner product\footnote{For $\ca H_N(J)$ it is also common to use the inner product defined by $\frac{1}{N!} \int_{J^N} \dd^N \bm x \, f(\bm x) \overline{g(\bm x)}$. However, by not including the $1/N!$ factor we may express certain adjointness relations more easily.}
\begin{equation} \label{Hilbertinnerprod} \innerrnd{f}{g}_N = \int_{J^N} \dd^N \bm x \, f(\bm x) \overline{g(\bm x)}, \end{equation}
for $f,g \in \L^2(J^N)$.
The corresponding norm is denoted by $\| \cdot \|_N$. 

\begin{defn}
Let $N$ be a nonnegative integer and $J \subset \bR$ a closed interval.
The \emph{(non-symmetric) $N$-particle sector} and the \emph{symmetric $N$-particle sector} are the two Hilbert spaces 
\[ \f h_N(J) =  \L^2(J^N), \qquad \ca H_N(J) = \f h_N(J)^{S_N} \subset \f h_N(J),\]%
\nc{rhl}{$\f h_{N} = \f h_N(J)$}{Non-symmetric $N$-particle sector $\L^2(J^N)$}%
\nc{rhcx}{$\ca H_{N} = \ca H_N(J)$}{Symmetric $N$-particle sector $\L^2(J^N)^{S_N}$}%
respectively. Note that $\f h_0(J) = \ca H_0(J) = \bC$, which is spanned by the constant $1\in \f h_0(J) = \ca H_0(J)$, i.e. the constant function: $J^0 \to \bC$ with value 1, which we will denote by $\Vac$ and refer to as the \emph{pseudo-vacuum}.
\nc{gyc}{$\Vac=1 \in \f h_0(J)$}{Pseudo-vacuum}
\end{defn}

\begin{rem}
The established QISM for the QNLS is formulated in terms of the symmetric $N$-particle sectors defined above. However, since the non-symmetric $N$-particle sectors can be seen as an intermediate step in the construction of the $N$-particle sectors, we discuss them here as well. In Chapter \ref{ch5} they will play a more central role.
\end{rem}

\boxedenv{
\begin{notn}[Symmetric and non-symmetric objects] \label{notn1}
In the rest of this thesis, whenever there is a pair of objects (i.e. sets, functions on those sets, operators acting on such functions) of which one is the symmetrized counterpart of the other, we will write the symmetric object with a capital letter and the more general, not necessarily symmetric, object with a lower-case letter, as we have done already for $\f h_N(J)$ and $\ca H_N(J)$.\end{notn} 
}

\begin{rem}
The $N$-particle sectors are state spaces for several particles moving along $J$.
In particular, the symmetric $N$-particle sector $\ca H_N(J)$ is the state space of a system of indistinguishable particles or bosons, i.e. quantum particles whose wavefunction is invariant under exchange of coordinates 
(or more generally, invariant under exchange of ``quantum numbers'', i.e. eigenvalues of ``observables'', certain formally self-adjoint operators on the state space).
This concept is the reason for the use of the symmetric group in the definition of $\ca H_N(J)$. \end{rem}

Let $w \in S_N$. Note that $\innerrnd{wf}{g}_N = \innerrnd{f}{w^{-1}g}_N$, for all $f, g \in \f h_N(J)$.
Hence, the inner product on $\ca H_N(J)$ satisfies
\[ \innerrnd{F}{G}_N = N! \int_{J^N_+} \dd^N \bm x \, F(\bm x) \overline{G(\bm x)}, \]
where we recall the fundamental alcove $J^N_+= \set{\bm x \in J^N}{x_1 > \ldots > x_N}$.\\

Consider the direct sum of all non-symmetric $N$-particle sectors $\bigoplus_{N \geq 0} \f h_N(J)$. 
Formally define an inner product on this space as follows
\[ \innerrnd{f_0 \oplus f_1 \oplus f_2 \oplus \ldots}{g_0 \oplus g_1 \oplus g_2 \oplus \ldots} := \innerrnd{f_0}{g_0}_0 + \innerrnd{f_1}{g_1}_1 + \innerrnd{f_2}{g_2}_2 + \ldots, \]
where for each $N \in \bZ_{\geq 0}$, $f_N,g_N \in \f h_N(J)$.
We emphasize that this does not properly define an inner product as the infinite sum may not converge.
As usual we may also (formally) define the associated norm $\| f \| = \sqrt{\innerrnd{f}{f}}$ for $f \in \bigoplus_{N \geq 0} \f h_N(J)$; we will also formally consider the inner product given by this formula and the corresponding norm on the direct sum of symmetric $N$-particle sectors, which is a subspace. 
Finally, we note that the restriction of the formal inner product $\innerrnd{}{}$ to any $N$-particle sector yields the existing inner product $\innerrnd{}{}_N$, so that we may denote all these inner products simply by $\innerrnd{}{}$ if convenient.
We are now ready for

\begin{defn}
The \emph{(non-symmetric) Fock space} and the \emph{symmetric Fock space} are given by
\[ \f h(J) = \set{f \in \bigoplus_{N \geq 0} \f h_N(J)}{\| f \| < \infty}, \qquad \ca H(J) = \set{F \in \bigoplus_{N \geq 0} \ca H_N(J)}{\| F \| < \infty} \subset \f h(J), \]%
\nc{rhl}{$\f h = \f h(J)$}{Non-symmetric Fock space $\oplus_{N \geq 0} \f h_N(J)$}%
\nc{rhcx}{$\ca H = \ca H(J)$}{Symmetric Fock space $\oplus_{N \geq 0} \ca H_N(J)$}%
respectively.
These are Hilbert spaces with respect to the inner product $\innerrnd{}{}$ defined above and they contain the dense subspaces of \emph{finite vectors} (cf. \cite[Prop. 6.2.2]{Gutkin1988})
\[ \f h_\n{fin}(J) = \set{f \in \f h(J)}{\exists M \in \bZ_{\geq 0}: \, f \in \bigoplus_{N=0}^M \f h_N(J)}, \qquad \ca H_\n{fin}(J) = \f h_\n{fin}(J) \cap \ca H. \] 
\nc{rhl}{$\f h_{\n{fin}} = \f h_\n{fin}(J)$}{Dense subspace of $\f h(J)$ of finite vectors}%
\nc{rhcx}{$\ca H_{\n{fin}} = \ca H_\n{fin}(J)$}{Dense subspace of $\ca H(J)$ of finite vectors}%
\end{defn}

\vspace{-5mm}
\begin{rem}
The Fock spaces allows for linear combinations of elements from different $N$-particle sectors (i.e. superpositions of states containing a different number of particles).
\end{rem}

We note that the $N$-particle sectors $\f h_N(J)$ and $\ca H_N(J)$, and also the Fock spaces $\f h(J)$ and $\ca H(J)$ are \emph{separable} Hilbert spaces, i.e. admit countable orthonormal bases. 

\begin{rem}
Essentially all Hilbert spaces used in physical theories are separable.
\end{rem}

\subsection{Test functions}

In our case, the $N$-particle sectors $\f h_N(J)$ and $\ca H_N(J)$ contain dense subspaces 
\[ \f d_N(J) := \ca C_\n{cpt}^\infty(J^N), \qquad \ca D_N(J) = \ca C_\n{cpt}^\infty(J^N)^{S_N} = \f d_N(J) \cap \ca H_N,\]%
\nc{rdlx}{$\f d_{N} = \f d_N(J)$}{Vector space of non-symmetric test functions $\ca C_\n{cpt}^\infty(J^N)$}%
\nc{rdcx}{$\ca D_{N} = \ca D_N(J)$}{Vector space of symmetric test functions $\ca C_\n{cpt}^\infty(J^N)^{S_N}$}%
respectively, 
consisting of \emph{test functions} on $J^N$, i.e. smooth functions from $J^N$ to $\bC$ with compact support.
We form the spaces
\[ \f d(J) := \bigoplus_{N \geq 0} \f d_N(J), \qquad \ca D(J) = \bigoplus_{N \geq 0} \ca D_N(J) = \f d(J) \cap \ca H(J),  \]%
\nc{rdlx}{$\f d = \f d_N$}{Vector space of non-symmetric test functions $\bigoplus_{N \geq 0} \f d_N(J)$}%
\nc{rdcx}{$\ca D = \ca D_N$}{Vector space of symmetric test functions $\bigoplus_{N \geq 0} \ca D_N(J)$}%
which are dense proper subsets of $\f h(J)$ and $\ca H(J)$, respectively.

\begin{rem} 
These dense subspaces are used to define (possibly unbounded) operators on $\f h$ and $\ca H$. 
In physics one is interested in Hamiltonians, certain (formally) self-adjoint operators with real unbounded spectrum, as the eigenvalues are interpreted as ``energy''.
Operators defined everywhere on an infinite-dimensional Hilbert space are necessarily bounded by virtue of the Hellinger-Toeplitz theorem (see e.g. \cite{ReedSimon}).
Hence, we must allow for operators that are only defined on dense subsets.
\end{rem}

Densely defined operators $T$ on $\f h(J)$ have a uniquely defined \emph{formal adjoint} $T^*$, which is an operator on $\f h(J)$ satisfying $\innerrnd{f}{T^*g} = \innerrnd{Tf}{g}$ for all $f,g\in \f d(J)$. Note that this formal adjoint may not be densely defined (it may even have trivial domain). Similarly we may consider formal adjoints of operators densely defined on $\ca H(J)$.\\

For all aforementioned Hilbert spaces $\ca X = \f h_N, \ca H_N, \f h, \ca H, \f d_N, \ca D_N, \f d, \ca D$ we will use the shorthand notation $\ca X = \ca X(J)$ if $J$ is clear from the context. 

\subsection{Periodicity and boundary conditions} \label{periodicity}

The $N$-particle sector of the Fock space associated to the system on a circle of circumference $L$ is obtained by imposing $L$-periodicity on the elements of the $N$-particle sector $\f h_N(\bR)$ and hence given by $\L^2 \left(\left(\bR/L \bZ \right)^N \right)$,
i.e. the Hilbert space of (symmetrized) $L$-periodic square-integrable functions$: \bR^N \to \bC$. 
Alternatively, we may define these $N$-particle sectors in terms of the $N$-particle sectors for the bounded interval $J=[-L/2,L/2]$ and use the Hilbert space
\[ \set{f \in \f h_N}{\forall \bm x \in J^N, \, \forall j, \, f(\bm x)|_{x_j=-L/2} = f(\bm x)|_{x_j=L/2}}; \]
note that this does not depend on our particular choice\footnote{Another common choice is $J=[0,L]$. The choice $J=[-L/2,L/2]$ has the benefit that, again, certain adjointness relations can be expressed more conveniently and that at least formally the system with $J=\bR$ is recovered in the limit $L \to \infty$.} of $J$.
The same discussion applies to the subspaces of symmetric functions.\\


Because of the physical interpretation of elements of a Hilbert space $\ca H$ as probability amplitudes in quantum mechanics (i.e. their squared absolute values are probability densities for quantum particles; this is also the reason why $\L^2$ functions are used), we are only interested in the subspace $\set{F \in \ca H}{F|_{\partial \Dom(F)} = 0}$, i.e. the set of functions which vanish at the boundary of their domain. This is a closed linear subspace, and hence a Hilbert space in its own right.
\begin{itemize}
\item For $J=\bR$, $F \in \ca H_N(J) = \L^2(\bR^N)$ implies $\lim_{x_j \to \infty} F(\bm x)=0$ and hence all elements of the Hilbert space are physically meaningful. 
\item However, in the Hilbert space $\ca H_N([-L/2,L/2])$ with $L \in \bR_{>0}$, this subspace is 
\[ \set{F \in \ca H_N([-L/2,L/2])}{F|_{x_j = \pm L/2} = 0 \n{ for all } j=1,\ldots,N},\]
a subspace of the Hilbert space of $L$-periodic functions. 
It turns out that in this case it is often more natural to study the subspace of $\ca H(J)$ of $L$-periodic functions.
\end{itemize}

\section{Quantum field theory} \label{seQFThy}

In the standard setup of the QISM \cite{Faddeev1995,KorepinBI} the Hamiltonian and the monodromy matrix for the QNLS model are introduced as expressions in terms of quantum field operators. We will briefly highlight some basic notions of non-relativistic quantum field theory (the formalism known as ``second quantization'').\\

\boxedenv{
\begin{notn}[Deleting and appending single variables] \label{notn2} Let $N$ be a nonnegative integer.
\begin{itemize}
\item Let $\bm x = (x_1,\ldots,x_N) \in J^N$ and $y \in J$. Then $(\bm x,y) = (x_1,\ldots,x_N,y) \in J^{N+1}$ and $(y,\bm x) = (y,x_1,\ldots,x_N) \in J^{N+1}$. 
\item Let $\bm x = (x_1,\ldots,x_{N+1}) \in J^{N+1}$ and $j=1,\ldots,N+1$. Then  
\[\bm x_{\hat \jmath} =  (x_1,\ldots,\hat x_j,\ldots,x_{N+1}) = (x_1,\ldots,x_{j-1},x_{j+1},\ldots,x_{N+1}) \in J^N.\]
\end{itemize}
\end{notn}
}

For each $y \in \bR$ two \emph{quantum fields} $\Phi(y), \Phi^*(y)$ are introduced as $\End(\ca D)$-valued distributions,  given by $\Phi(y)(\ca D_0)=0$ and 
\begin{equation} \label{quantumfields} \begin{aligned}
\left( \Phi(y) F \right) (\bm x) &= \sqrt{N+1}F(\bm x,y), && \n{for } F \in \ca D_{N+1}, \bm x \in J^N,  \\
\left( \Phi^*(y) F \right) (\bm x) &= \frac{1}{\sqrt{N+1}} \sum_{j=1}^{N+1} \delta(y-x_j) F(\bm x_{\hat \jmath}), && \n{for } F \in \ca D_N, \bm x \in J^{N+1}, 
\end{aligned} \end{equation}%
\nc{gvc}{$\Phi(y),\Phi^*(y)$}{Quantum fields \nomrefeqpage}%
for any positive integer $N$.
These definitions can be made rigorous using so-called \emph{smeared fields}, for which we refer the reader to \cite{ReedSimon2}. 
The key properties of the quantum fields is that they are formally adjoint and satisfy the \emph{canonical commutation relations}:
\[  [\Phi(x),\Phi^*(y)] = \delta(x-y), \quad [\Phi(x),\Phi(y)]=[\Phi^*(x),\Phi^*(y)]=0 \qquad (x,y \in J).\]

The quantum fields can be differentiated (in a distributional sense):
\begin{align*}
\left( \partial_y \Phi(y) F \right) (\bm x) &= \sqrt{N+1} \, \partial_y F(\bm x,y), && \n{for } F \in \ca D_{N+1}, \bm x \in \bR^{N}, \\
\left( \partial_y \Phi^*(y) F\right) (\bm x) &= \frac{1}{\sqrt{N+1}} \sum_{j=1}^{N+1} \delta'(y-x_j) F(\bm x_{\hat \jmath}), && \n{for } F \in \ca D_N, \bm x \in \bR^{N+1}.
\end{align*}
Note that $\delta'$, the distributional derivative of the Dirac delta, is characterized by $\innerrnd{\delta'}{F}=-\innerrnd{\delta}{F'} = -F'(0)$ for all $F \in \ca D_1$.

\subsection{The Hamiltonian}

We can (formally) express the QNLS Hamiltonian \rf{Hamiltonian} from the Introduction as follows
\begin{equation} \label{HamiltonianQuantumFields} H_\gamma =  \int_J \dd x \,\left( (\partial_x \Phi^*)(x)(\partial_x \Phi)(x) +\gamma \Phi^*(x)^2 \Phi(x)^2 \right). \end{equation}
In fact, $H_\gamma$ is an operator-valued distribution densely defined on $\End(\ca D)$.
From \rfeqn{HamiltonianQuantumFields} it follows that $H_\gamma$ is formally self-adjoint\footnote{We recall that $H_\gamma$ with a suitable choice of domain is essentially self-adjoint \cite{Dorlas}; in the case of bounded $J$ we recall the statement from Subsect. \ref{periodicity} that one needs to consider the subspace of $\ca H(J)$ of functions satisfying $L$-periodic boundary conditions}.
The derivation of \rfeqn{HamiltonianQuantumFields} uses that,
for all $F \in \ca D_N^\infty$ and $\bm x \in J^N$,
\begin{gather*}
\int_J \dd x \,\left( \partial_x \Phi^*(x) \right) \left( \partial_x \Phi(x) \right) = -\Delta, \\
\left( \Phi^*(y)^2 \Phi(y)^2 F \right) (\bm x) =  \sum_{j \ne k} \delta(y-x_j) \delta(y-x_k) F(\bm x_{\hat \jmath,\hat k},y,y)
\end{gather*}

\begin{rem}
The convention to write $\Phi^*(x)$ (and any of its derivatives) to the left of $\Phi(x)$ (and any of its derivatives) is called \emph{normal ordering}.
Physically this is required so that the system reproduces the correct ground-state energy.
Note that \rfeqn{HamiltonianQuantumFields} is what one would obtain by quantizing the Hamiltonian of the classical non-linear Schr\"odinger (CNLS) model using normal ordering:
\[ H_\gamma^\n{class}(\phi) = \int_{\bR} \dd x \left( |\partial_x \phi(x)|^2 +\gamma |\phi(x)|^4\right). \]
This explains the name ``quantum nonlinear Schr\"odinger model''. We also note that the CNLS model can be solved by the classical inverse scattering method.
\end{rem} 

\subsection{The monodromy matrix and the transfer matrix} \label{ssemonodromymatrix}

In the introductory remarks at the start of this chapter we touched upon an object called the monodromy matrix as an important tool to study the QNLS model. Here we will make this more precise. We will first define the monodromy matrix and then show that it is an important tool in solving the eigenvalue problem of the QNLS Hamiltonian. 
Define the (local) $\ca L$-matrix for the QNLS as
\begin{equation} \label{Lmatrix} \ca L_\lambda(x) = \begin{pmatrix} -\ii \lambda/2 &  \gamma \Phi^*(x) \\ \Phi(x) & \ii \lambda/2 \end{pmatrix}. \end{equation}%
\nc{rlcx}{$\ca L_{\lambda}(x)$}{QNLS (local) $\ca L$-matrix \nomrefeqpage}%
Using the time-ordered exponential \cite{Sklyanin1982,Sklyanin1989,KorepinBI} we can (formally) construct the (non-local) \emph{monodromy matrix} $\ca T_\lambda$, an operator on  $\End(\bC^2 \otimes \ca H)$ densely defined on $\End(\bC^2 \otimes \ca D)$:
\begin{equation} \label{Tmatrix} \ca T_\lambda = \quad : \! \exp_+ \int_J \dd x  \ca L_\lambda(x)\! : \quad = \quad \sum_{n \geq 0} \int_{J_+^n} \dd^n \bm x \, : \ca L_\lambda(x_n) \ldots \ca L_\lambda(x_1): 
\end{equation}%
\nc{rtcx}{$\ca T_{\lambda}$}{QNLS monodromy matrix \nomrefeqpage}%
where $: \ldots :$ indicates that normal ordering of the quantum fields $\Phi^*(x), \Phi(x)$ is applied.
That is, when expanding the product of matrices $\ca L_\lambda(x_n) \ldots \ca L_\lambda(x_1)$, in the resulting expressions any $\Phi^*(x)$ is moved to the left of any $\Phi(y)$.\\

The monodromy matrix is a 2x2-matrix with entries given by: 
\[ \ca T_\lambda = \begin{pmatrix} \ca T^{1 \, 1}_\lambda & \ca T^{1 \, 2}_\lambda \\ \ca T^{2 \, 1}_\lambda & \ca T^{2 \, 2}_\lambda \end{pmatrix} = \begin{pmatrix} A_\lambda & \gamma B_\lambda  \\ C_\lambda & D_\lambda  \end{pmatrix}. \]
We will see in \rfp{ABCDproperties} \ref{ABCDbounded} that the matrix entries of $\ca T_\lambda$ are bounded on the dense subspace $\ca H_\n{fin}$, which implies that $\ca T_\lambda$ can be viewed as an element of $\End(\bC^2 \otimes \ca H)$.\\

The \emph{transfer matrix} is obtained by taking the partial trace over $\bC^2$ of the monodromy matrix: 
\begin{equation} \label{transfermatrix} T_\lambda = \Tr_{\bC^2} \ca T_\lambda = A_\lambda+D_\lambda.\end{equation}%
\nc{rtca}{$T_{\lambda}$}{Transfer matrix \nomrefeqpage}%
We will see in Section \ref{YBalgebra} that $A_\lambda$ and $D_\lambda $ have the same dense domain $\ca D$ so that $T_\lambda$ is also densely defined. 
From \rfc{Tselfadjoint} we know that for $J$ bounded and $\lambda \in \bR$, $T_\lambda$ is self-adjoint. Furthermore, in \rft{commuteT} we will see that for all $\lambda, \mu \in \bC$,  $[T_\lambda,T_\mu]=0$, and in \rft{ABA} that $\Psi_{\bm \lambda} := B_{\lambda_N} \ldots B_{\lambda_1} \Vac$ is an eigenfunction of $T_\mu$, assuming that $\bm \lambda$ satisfies certain conditions. Dorlas \cite{Dorlas} has shown that the $\Psi_{\bm \lambda}$ are a complete set in $\ca H_N$.\\

The importance of the transfer matrix, and hence the monodromy matrix, to the study of the QNLS Hamiltonian $H_\gamma$ follows from the following argument (e.g. see \cite{KorepinBI}).
The Hamiltonian $H_\gamma$ is a linear combination of the coefficients obtained by asymptotically expanding $T_\mu$, i.e. by expanding $T_\mu$ in powers of $\mu$ in the limit $\mu \to \ii \infty$. 
Then expanding $[T_\lambda,T_\mu]=0$ with respect to $\mu$ implies that these expansion coefficients commute with $T_\lambda$. Hence, we obtain $[H_\gamma,T_\lambda] = 0$.
More precisely, 
\begin{align} 
\log\left( \e^{\ii \mu L/2} T_\mu \right) &\stackrel{\mu \to \ii\infty}{\sim} \frac{\ii \gamma}{\mu} H_\gamma^{[0]}+\frac{\ii \gamma}{\mu^2} \left( H_\gamma^{[1]}- \frac{\ii \gamma}{2}H_\gamma^{[0]} \right) + \nonumber \\
& \hspace{20mm} + \frac{\ii \gamma}{\mu^3} \left( H_\gamma^{[2]}-\ii \gamma H_\gamma^{[1]} - \frac{\gamma^2}{3}H_\gamma^{[0]} \right) + \mathcal O(\mu^{-4}), \label{Texpansion}
\end{align}
where $H_\gamma^{[2]} = H_\gamma$ and we have introduced the \emph{number} and \emph{momentum operators}
\[ H_\gamma^{[0]} = \int_J \dd x \Phi^*(x) \Phi(x), \quad H_\gamma^{[1]} = -\ii \int_J \dd x \Phi^*(x) \sd{x} \Phi(x). \]
These three operators are part of an infinite family of operators $H_\gamma^{[n]}$, for $n=0,1,\ldots$, which can be recovered recursively by calculating coefficients for higher powers of $\mu^{-1}$ in \rfeqn{Texpansion}. There exists a formalism \cite{Gutkin1985} for expressing all $H_\gamma^{[n]}$ in terms of the Bose fields $\Phi(x),\Phi^*(x)$.
Alternatively (again see \cite{KorepinBI}) the $H_\gamma^{[n]}$ can be rigorously defined by their restrictions to $\ca H_N$; for this purpose consider the $n$-th \emph{power sum polynomials} defined by
\begin{equation} \label{powersumpoly}
p_n(\lambda_1,\ldots,\lambda_N) = \sum_{j=1}^N \lambda_j^n.
\end{equation}%
\nc{rpl}{$p_{n}(\bm \lambda)$}{$n$-th power sum symmetric polynomial \nomrefeqpage}%
Then we have
\begin{equation} \label{higherintegralsofmotion} H_\gamma^{[n]}|_{\ca H_N} = p_n(-\ii \partial_1, \ldots, -\ii \partial_N) = \sum_{j=1}^N (-\ii \partial_j)^n, \end{equation}%
\nc{rhca}{$H_{\gamma}^{[n]}$}{Higher QNLS integrals of motion \nomrefeqpage}%
together with the \emph{higher-order derivative jump conditions at the hyperplanes}:
\[ \left( \partial_j-\partial_k \right)^{2l+1}|_{V_{j \, k}^+} -  \left( \partial_j-\partial_k \right)^{2l+1}|_{V_{j \, k}^-} = 2\gamma \left( \partial_j-\partial_k \right)^{2l}|_{V_{j \, k}}, \qquad \n{for } 1 \leq 2l+1 \leq n-1. \]
It follows from the above that the $H_\gamma^{[n]}$ all commute with $T_\lambda$ and hence with each other and with $H_\gamma$. Therefore they can be interpreted as integrals of motion.

\section{The $\ca R$-matrix and the Yang-Baxter equation} \label{secYBE}

The monodromy matrix satisfies the \emph{(quantum) Yang-Baxter equation} (QYBE), which involves another operator called the (QNLS) \emph{$\ca R$-matrix}.

\begin{defn} \label{Rmatrixdefn}
Let $\lambda \in \bC \setminus \{0\}$ and $\gamma  \in \bR$. Then
\begin{equation} \label{Rmatrix} \ca R_\lambda = 1 \otimes 1 -\frac{\ii \gamma}{\lambda} \ca P \in \End(\bC^2 \otimes \bC^2), \end{equation}%
\nc{rrc}{$\ca R_{\lambda}$}{QNLS $\ca R$-matrix \nomrefeqpage}%
where $\ca P \in \End(\bC^2 \otimes \bC^2)$ is the permutation operator: $\ca P(\bm v_1 \otimes \bm v_2)= \bm v_2 \otimes \bm v_1$ for $\bm v_1,\bm v_2 \in \bC^2$.%
\nc{rpcw}{$\ca P$}{Permutation operator of $\bC^2 \otimes \bC^2$}%
\end{defn}

\begin{rem}
This particular $\ca R$-matrix is relevant not only to the QNLS model, but also appears in other contexts such as the XXX model \cite{Bethe} and the Toda chain \cite{Toda}. These models are therefore algebraically related.
\end{rem}

Note that $\ca R_\lambda$ is invertible unless $\lambda = \pm \ii \gamma$; for these singular values of $\lambda$ we see that $\ca R_{\pm \ii \gamma} = 1 \otimes 1 \mp \ca P$ is proportional to a projection.
In general, $\ca R_\lambda \ca R_{-\lambda} = \frac{\lambda^2+\gamma^2}{\lambda^2}$. 
It can be easily checked that, for distinct nonzero $\lambda,\mu \in \bC$, $\ca R$ satisfies:
\begin{equation} \label{ybeR} \left(\ca R_{\lambda-\mu}\right)_{12} \left( \ca R_\lambda \right)_{13} \left( \ca R_\mu \right)_{23} = \left( \ca R_\mu \right)_{23} \left( \ca R_\lambda \right)_{13} \left(\ca R_{\lambda-\mu}\right)_{12} \in \End(\bC^2 \otimes \bC^2 \otimes \bC^2). \end{equation}
The subscript $_{i \, j}$ indicates which embedding $\End(\bC^2 \otimes \bC^2) \hookrightarrow \End(\bC^2 \otimes \bC^2 \otimes \bC^2)$ is used.

\begin{rem}
\rfeqn{ybeR} has many solutions, amongst which the $\ca R$-matrix defined in \rfd{Rmatrixdefn} stands out as the simplest nontrivial one. 
\end{rem}

\subsection{The Yang-Baxter equation}

We will now outline the method used in \cite{Gutkin1988} and \cite{KorepinBI} to derive the QYBE, an identity involving the $\ca R$-matrix and the monodromy matrix $\ca T$. Since $\ca T$ is defined in terms of the $\ca L$-matrix, it makes sense to establish first a more basic identity involving $\ca R$ and $\ca L$. 
For notational convenience we introduce the standard basis of $\f{sl}_2$: 
\[ \small \sigma_- = \begin{pmatrix} 0 & 0 \\ 1 & 0 \end{pmatrix}, \qquad
\sigma_+ = \begin{pmatrix} 0 & 1 \\ 0 & 0 \end{pmatrix}, \qquad \sigma_z = \begin{pmatrix} 1 & 0 \\ 0 & -1 \end{pmatrix}, \]%
\nc{gsl}{$\sigma_{\pm},\sigma_{z}$}{Basis elements of $\f{sl}_2$}%
satisfying the $\f{sl}_2$-relations $[\sigma_z,\sigma_\pm] = \pm 2 \sigma_\pm$, $[\sigma_+,\sigma_-]  = \sigma_z$. It can be checked that
\[ \ca P=\frac{1}{2}(1 \otimes 1+\sigma_z\otimes \sigma_z)+\sigma_+ \otimes \sigma_- + \sigma_- \otimes \sigma_+. \]

\begin{lem}
Let $\lambda, \mu \in \bC$ be distinct.
Then
\begin{align}
\lefteqn{ \hspace{-10mm} \ca R_{\lambda-\mu} \left( 1+ \ca L_\lambda(x) \otimes 1 + 1 \otimes \ca L_\mu(x) + \gamma \sigma_- \otimes \sigma_+ \right) =} \nonumber \\
\qquad &= \left( 1+ \ca L_\lambda(x) \otimes 1 + 1 \otimes \ca L_\mu(x) + \gamma \sigma_+ \otimes \sigma_- \right) \ca R_{\lambda-\mu} \in \End(\bC^2 \otimes \bC^2 \otimes \ca H).  \label{ybeRL}
\end{align}
\end{lem}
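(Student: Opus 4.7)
The plan is to reduce the identity to a finite-dimensional calculation in $\End(\bC^2 \otimes \bC^2)$ that avoids the quantum fields $\Phi(x),\Phi^*(x)$ entirely. Writing $M_\pm = 1 + \ca L_\lambda(x) \otimes 1 + 1 \otimes \ca L_\mu(x) + \gamma \sigma_\pm \otimes \sigma_\mp$, I substitute $\ca R_{\lambda-\mu} = 1 \otimes 1 - \frac{\ii \gamma}{\lambda-\mu} \ca P$ into the desired equation $\ca R_{\lambda-\mu} M_- = M_+ \ca R_{\lambda-\mu}$. After cancelling the common $M_\pm$ terms, this is equivalent to
\[
M_+ - M_- \;=\; \frac{\ii \gamma}{\lambda-\mu}\bigl(M_+ \ca P - \ca P M_-\bigr).
\]
The left-hand side is immediate: $M_+ - M_- = \gamma(\sigma_+ \otimes \sigma_- - \sigma_- \otimes \sigma_+)$.

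For the right-hand side I would use the fundamental relation $\ca P(A \otimes B) = (B \otimes A)\ca P$ on each summand of $\ca P M_-$. Two simplifications happen simultaneously: the identity parts give $\ca P$ on both sides and cancel; and the $\gamma$-terms also cancel entirely, since $\ca P(\sigma_- \otimes \sigma_+) = (\sigma_+ \otimes \sigma_-)\ca P$ matches the $\gamma \sigma_+ \otimes \sigma_-$ term of $M_+ \ca P$ exactly. This leaves only the $\ca L$-contributions, which collapse to
\[
\ca P M_- - M_+ \ca P \;=\; \bigl[(\ca L_\mu - \ca L_\lambda) \otimes 1 \,+\, 1 \otimes (\ca L_\lambda - \ca L_\mu)\bigr] \ca P.
\]
At this point the key structural observation appears: $\ca L_\lambda - \ca L_\mu = -\tfrac{\ii(\lambda-\mu)}{2} \sigma_z$, because the off-diagonal $\gamma \Phi^*(x), \Phi(x)$ entries of the $\ca L$-matrix do not depend on the spectral parameter and therefore drop out. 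All the quantum-field-theoretic content thus disappears and we are reduced to a purely algebraic identity in $\End(\bC^2 \otimes \bC^2)$.

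That residual identity is
\[
2\bigl(\sigma_+ \otimes \sigma_- - \sigma_- \otimes \sigma_+\bigr) \;=\; \bigl[\sigma_z \otimes 1 - 1 \otimes \sigma_z\bigr] \ca P,
\]
which I would verify using the decomposition $\ca P = \tfrac{1}{2}(1\otimes 1 + \sigma_z \otimes \sigma_z) + \sigma_+ \otimes \sigma_- + \sigma_- \otimes \sigma_+$ stated just before the lemma, together with the elementary relations $\sigma_z \sigma_\pm = \pm \sigma_\pm$ (equivalently $[\sigma_z,\sigma_\pm] = \pm 2 \sigma_\pm$, which force the $\sigma_z \otimes \sigma_z$ and $1 \otimes 1$ parts of $\ca P$ to cancel upon commuting with $\sigma_z \otimes 1 - 1 \otimes \sigma_z$).

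There is no genuine obstacle: the whole lemma is an exercise in careful bookkeeping. The one place where one must be a little careful is the sign tracking in $\ca P(A \otimes B)\ca P^{-1} = B \otimes A$ and the use of $\ca P^2 = 1$; the algebraic miracle that makes things work is simply that $\ca L_\lambda(x) - \ca L_\mu(x)$ is a scalar multiple of $\sigma_z$, which is why an identity a priori living in $\End(\bC^2 \otimes \bC^2 \otimes \ca H)$ actually holds in $\End(\bC^2 \otimes \bC^2) \otimes 1_{\ca H}$.
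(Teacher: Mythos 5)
Your proposal is correct and follows essentially the same route as the paper: both reduce the claim to the commutator identity $[\ca P, \ca L_\lambda(x)\otimes 1 + 1\otimes \ca L_\mu(x)] = \ii(\lambda-\mu)(\sigma_+\otimes\sigma_- - \sigma_-\otimes\sigma_+)$ by cancelling the identity and $\gamma\sigma_\pm\otimes\sigma_\mp$ contributions, observe that the $\Phi,\Phi^*$ entries drop out because only the diagonal ($\sigma_z$) part of $\ca L$ carries the spectral parameter, and finish with the same finite-dimensional identity (your $(\sigma_z\otimes 1 - 1\otimes\sigma_z)\ca P = 2(\sigma_+\otimes\sigma_- - \sigma_-\otimes\sigma_+)$ is exactly the paper's $[\sigma_z\otimes 1,\ca P] = 2(\sigma_+\otimes\sigma_- - \sigma_-\otimes\sigma_+)$). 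All signs and constants check out.
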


\begin{proof}
Expanding the identity by powers of $\gamma$, we see that the statement follows from
\[  \left[ \ca P, \ca L_\lambda(x) \otimes 1 + 1 \otimes \ca L_\mu(x) \right) = \ii (\lambda - \mu) \left(\sigma_+ \otimes \sigma_- - \sigma_- \otimes \sigma_+\right), \] 
which is a consequence of the elementary identities
\begin{align*} 
\ca P \left( \ca L_\lambda(x) \otimes 1 \right) - \left( 1 \otimes \ca L_\mu(x) \right) \ca P &= -\frac{\ii (\lambda-\mu)}{2} \ca P \left( \sigma_z \otimes 1 \right), \\
\ca P \left( 1 \otimes \ca L_\mu(x) \right) -  \left( \ca L_\lambda(x) \otimes 1 \right) \ca P &= \frac{\ii (\lambda-\mu)}{2} \left( \sigma_z \otimes 1 \right) \ca P,\\
\left[\left( \sigma_z \otimes 1 \right), \ca P\right] &= 2\left( \sigma_+ \otimes \sigma_- - \sigma_- \otimes \sigma_+ \right). \qedhere
\end{align*}
\end{proof}

\begin{thm} \emph{\cite{Gutkin1988, KorepinBI}} \label{ybeRTthm} 
Let $\lambda, \mu \in \bC$ be distinct.
Then $\ca T$ satisfies the \emph{quantum Yang-Baxter equation}\footnote{It is also known as the \emph{exchange relation} or simply the \emph{$\ca R \ca T \ca T$-relation}.}:
\begin{equation} \label{ybeRT} \ca R_{\lambda-\mu} \left(\ca T_\lambda \otimes 1 \right) \left( 1 \otimes \ca T_\mu\right) = \left(1 \otimes \ca T_\mu\right) \left(\ca T_\lambda \otimes 1 \right) \ca R_{\lambda-\mu} 
\in \End(\bC^2 \otimes \bC^2 \otimes \ca H). \end{equation}
\end{thm}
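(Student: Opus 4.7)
The plan is to bootstrap from the local $\ca R\ca L\ca L$-relation of the preceding lemma to the global $\ca R\ca T\ca T$-relation by discretizing $J$ and passing to a continuum limit.

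First I would partition $J$ into $n$ equal subintervals of width $\Delta x$ with nodes $x_1,\ldots,x_n$, and introduce a lattice monodromy matrix
\[ \ca T^{(n)}_\lambda = \, :\!\prod_{k=n}^{1}\bigl(1 + \ca L_\lambda(x_k)\Delta x\bigr)\!: \]
(right-to-left ordered product with normal ordering imposed). By the definition \rf{Tmatrix} of $\ca T_\lambda$ as a normal-ordered ordered exponential, $\ca T^{(n)}_\lambda \to \ca T_\lambda$ as $n \to \infty$ in the appropriate operator-valued distributional sense on $\ca D$. The strategy is to establish a single-site $\ca R\ca L\ca L$-relation at each $x_k$ and then to propagate $\ca R_{\lambda-\mu}$ site-by-site through the full lattice product.

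The key observation is that the correction terms $\gamma\sigma_-\otimes\sigma_+$ and $\gamma\sigma_+\otimes\sigma_-$ appearing inside the brackets of the preceding lemma are precisely the contact contributions generated by normal-ordering the cross term $\ca L_\lambda(x_k)\otimes\ca L_\mu(x_k)(\Delta x)^2$ arising in $\bigl((1+\ca L_\lambda(x_k)\Delta x)\otimes 1\bigr)\bigl(1\otimes(1+\ca L_\mu(x_k)\Delta x)\bigr)$: the canonical commutator $[\Phi(x_k),\Phi^*(x_k)] = \delta(0)$, regularized by $1/\Delta x$, promotes this nominally $O((\Delta x)^2)$ contribution to an $O(\Delta x)$ term proportional to $\gamma\sigma_-\otimes\sigma_+\,\Delta x$, and similarly $\gamma\sigma_+\otimes\sigma_-\,\Delta x$ for the mirror product on the right-hand side. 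Hence, up to $O((\Delta x)^2)$, the lemma encodes exactly the single-site intertwining $\ca R_{\lambda-\mu}(\ell^\lambda_k\otimes 1)(1\otimes\ell^\mu_k) = (1\otimes\ell^\mu_k)(\ell^\lambda_k\otimes 1)\ca R_{\lambda-\mu}$, where $\ell^\lambda_k := 1+\ca L_\lambda(x_k)\Delta x$.

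Since factors at distinct sites commute under normal ordering and $\ca R_{\lambda-\mu}$ acts only on the two auxiliary $\bC^2$-factors, one can push $\ca R_{\lambda-\mu}$ through the ordered lattice product $(\ca T^{(n)}_\lambda\otimes 1)(1\otimes\ca T^{(n)}_\mu)$ site by site, yielding the lattice analogue of \rf{ybeRT}; taking $n \to \infty$ then gives the theorem. The hard part will be the careful accounting of the normal-ordering contact corrections (which is essentially the content of the preceding lemma), together with controlling the accumulation of the $O((\Delta x)^2)$ per-site errors across $n$ sites so that they vanish in the continuum limit when tested against elements of $\ca D$.
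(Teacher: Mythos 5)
Your proposal follows essentially the same route as the paper's (itself only sketched) proof: both factorize the monodromy matrix over a subdivision of $J$, invoke the local $\ca R\ca L\ca L$-relation of the preceding lemma at each piece, and propagate $\ca R_{\lambda-\mu}$ through the ordered product before passing to the continuum. Your added observation that the $\gamma\sigma_\mp\otimes\sigma_\pm$ corrections arise as normal-ordering contact terms from $[\Phi(x_k),\Phi^*(x_k)]$ regularized by $1/\Delta x$ is the standard justification given in \cite{KorepinBI} and is consistent with the paper's treatment.
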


\begin{proof}
We explain the basic idea of the proof, referring to \cite{Gutkin1988, KorepinBI} for details.
The interval $J$ is split into subintervals; the monodromy matrix is redefined for each subinterval allowing the original monodromy matrix to be written as a product of monodromy matrices for subintervals. The key step is to use the ``local'' equation \rfeqn{ybeRL} and integrate it over each subinterval.
\end{proof}

We will now study the four matrix entries of $\ca T_\lambda$ in more detail; in particular, we will study their commutation relations which are encoded in the QYBE \rf{ybeRT}. 

\section{The Yang-Baxter algebra} \label{YBalgebra}

Let $\lambda \in \bC$ and $J=[-L/2,L/2]$ with $L \in \bR_{>0}$. 
The following expressions for the matrix entries of $\ca T_\lambda$, densely defined on each $\ca D_N$, are well-known \cite[Eqns.~(6.1.6)-(6.1.7)~and~Eqn.(6.1.13)]{Gutkin1988}.
\begin{align} 
\hspace{-2mm} A_\lambda &= \e^{-\ii \lambda L/2} \sum_{n = 0}^N \gamma^n \! \! \int_{J^{2n}_+} \! \! \dd^{2n}\bm y \e^{-\ii \lambda \sum_{j=1}^{2n} (-1)^j y_j} \! \! \left( \prod_{j=1 \atop j \n{ odd}}^{2n} \Phi^*(y_j) \right) \!  \!  \left( \prod_{j=1 \atop j \n{ even}}^{2n} \Phi(y_j) \right): \ca D_N \to \ca H_N \label{Aoperatorfields}, \displaybreak[2] \\%
\nc{rac}{$A_{\lambda}$}{Generator of symmetric Yang-Baxter algebra \nomrefeqpage} %
\hspace{-2mm} B_\lambda  &= \sum_{n = 0}^N \gamma^n \! \! \int_{J^{2n+1}_+} \! \! \dd^{2n+1}\bm y \e^{-\ii \lambda \sum_{j=1}^{2n+1} (-1)^j y_j} \! \! \left( \prod_{j=1 \atop j \n{ odd}}^{2n+1} \Phi^*(y_j) \right) \!  \!  \left( \prod_{j=1 \atop j \n{ even}}^{2n+1} \Phi(y_j) \right): \ca D_N \to \ca H_{N+1}, \displaybreak[2] \\%
\nc{rbc}{$B_{\lambda}$}{Generator of symmetric Yang-Baxter algebra; particle creation\\ operator \nomrefeqpage} %
\hspace{-2mm} C_\lambda &= \sum_{n = 0}^N \gamma^n \! \! \int_{J^{2n+1}_+} \! \! \dd^{2n+1}\bm y \e^{\ii \lambda \sum_{j=1}^{2n+1} (-1)^j y_j} \! \!  \left( \prod_{j=1 \atop j \n{ even}}^{2n+1} \Phi^*(y_j) \right) \! \!  \left( \prod_{j=1 \atop j \n{ odd}}^{2n+1} \Phi(y_j) \right): \ca D_{N+1} \to \ca H_N, \displaybreak[2] \\%
\nc{rcc}{$C_{\lambda}$}{Generator of symmetric Yang-Baxter algebra \nomrefeqpage} %
\hspace{-2mm} D_\lambda  &= \e^{\ii \lambda L/2} \sum_{n = 0}^N \gamma^n \! \! \int_{J^{2n}_+}\! \!  \dd^{2n}\bm y \e^{\ii\lambda \sum_{j=1}^{2n} (-1)^j y_j} \! \! \left( \prod_{j=1 \atop j \n{ even}}^{2n} \Phi^*(y_j) \right) \! \!  \left( \prod_{j=1 \atop j \n{ odd}}^{2n} \Phi(y_j) \right): \ca D_N \to \ca H_N. \label{Doperatorfields}
\end{align}%
\nc{rdc}{$D_{\lambda}$}{Generator of symmetric Yang-Baxter algebra \nomrefeqpage}%
To obtain explicit expressions for these, it is beneficial to introduce \emph{unit step functions}.
The one-dimensional unit step function $\theta: \bR \to \{0,1\}$ is defined by
\[ \theta(x) = \begin{cases} 1, & x>0, \\ 0, & x \leq 0. \end{cases} \]%
\nc{ghl}{$\theta(x), \theta(\bm x)$}{Unit step function}%
The multidimensional unit step function $\theta: \bR^N \to \{0,1\}$ is given by $\theta(\bm x) = \prod_{j=1}^N \theta(x_j)$.
Furthermore, $\theta(x_1 > \ldots > x_N)$ is shorthand for $\theta(x_1-x_2,x_2-x_3,\ldots,x_{N-1}-x_N)$.%
\nc{ghl}{$\theta(x_1 > \ldots > x_N)$}{Shorthand for $\theta(x_1-x_2,x_2-x_3,\ldots,x_{N-1}-x_N)$}%
\begin{exm}[$N=2$] \label{ABCDexamples}
The definitions in \rfeqnser{Aoperatorfields}{Doperatorfields} yield the following for the case $N=2$:
\begin{align*}
A_\lambda|_{\ca D_2} &= \e^{-\ii \lambda L/2} \left( 1 + \gamma \int_{J^2_+} \dd y_1 \dd y_2 \e^{\ii \lambda(y_1-y_2)} \Phi^*(y_1) \Phi(y_2) + \right. \\
& \hspace{20mm}\left. + \gamma^2  \int_{J^4_+} \dd y_1 \dd y_2 \dd y_3 \dd y_4 \e^{\ii \lambda(y_1-y_2+y_3-y_4)} \Phi^*(y_1) \Phi^*(y_3) \Phi(y_2)\Phi(y_4)  \right), \displaybreak[2] \\
B_\lambda|_{\ca D_2} &= \int_J \dd y \e^{\ii \lambda y} \Phi^*(y) + \gamma \int_{J^3_+} \dd y_1 \dd y_2 \dd y_3 \e^{\ii \lambda(y_1-y_2+y_3)} \Phi^*(y_1) \Phi^*(y_3) \Phi(y_2) + \\
& \hspace{20mm} + \gamma^2  \int_{J^5_+} \dd y_1 \dd y_2 \dd y_3 \dd y_4 \dd y_5 \e^{\ii \lambda(y_1-y_2+y_3-y_4+y_5)} \Phi^*(y_1) \Phi^*(y_3) \Phi^*(y_5) \Phi(y_2) \Phi(y_4) \displaybreak[2] \\
C_\lambda|_{\ca D_3} &= \int_J \dd y \e^{-\ii \lambda y} \Phi(y) + \gamma \int_{J^3_+} \dd y_1 \dd y_2 \dd y_3 \e^{\ii \lambda(-y_1+y_2-y_3)} \Phi^*(y_2) \Phi(y_1) \Phi(y_3) + \\
& \hspace{20mm} + \gamma^2  \int_{J^5_+} \dd y_1 \dd y_2 \dd y_3 \dd y_4 \dd y_5 \e^{\ii \lambda(-y_1+y_2-y_3+y_4-y_5)} \Phi^*(y_2) \Phi^*(y_4) \Phi(y_1) \Phi(y_3) \Phi(y_5) \displaybreak[2] \\
D_\lambda|_{\ca D_2} &= \e^{\ii \lambda L/2} \left( 1 + \gamma \int_{J^2_+} \dd y_1 \dd y_2 \e^{\ii \lambda(-y_1+y_2)} \Phi^*(y_2) \Phi(y_1) + \right. \\
& \hspace{20mm}\left. + \gamma^2  \int_{J^4_+} \dd y_1 \dd y_2 \dd y_3 \dd y_4 \e^{\ii \lambda(-y_1+y_2-y_3+y_4)} \Phi^*(y_2) \Phi^*(y_4) \Phi(y_1)\Phi(y_3)  \right).
\end{align*}
Focusing on the expression for $A_\lambda$, from \rfeqn{quantumfields} we have
\begin{align*} 
\left( \Phi^*(y_1) \Phi(y_2) F \right)(x_1,x_2) &= \delta(y_1 \! - \! x_1) F(x_2,y_2) + \delta(y_1 \! - \! x_2) F(x_1,y_2) \\
\left( \Phi^*(y_1) \Phi^*(y_3) \Phi(y_2)\Phi(y_4) F \right)(x_1,x_2) &= \delta(y_1 \! - \! x_1) \delta(y_3 \! - \! x_2) F(y_2,y_4)+\delta(y_1 \! - \! x_2)\delta(y_3 \! - \! x_1)F(y_2,y_4),
\end{align*}
for $F \in \ca D_2$, so that
\begin{align*}
\lefteqn{(A_\lambda F)(x_1,x_2) =} \\
&= \e^{-\ii \lambda L/2} \left( F(x_1,x_2) + \gamma \int_{J^2} \dd y_1 \dd y_2 \theta(y_1\! > \! y_2) \e^{\ii \lambda(y_1-y_2)} \delta(y_1 \! - \! x_1) F(x_2,y_2)  + \right. \\
& \hspace{20mm} + \gamma \int_{J^2} \dd y_1 \dd y_2 \theta(y_1\! > \! y_2) \e^{\ii \lambda(y_1-y_2)}\delta(y_1 \! - \! x_2) F(x_1,y_2) + \\
& \hspace{20mm}+ \gamma^2  \int_{J^4} \dd y_1 \dd y_2 \dd y_3 \dd y_4 \theta(y_1\! > \! y_2 \! > \! y_3 \! > \! y_4)  \e^{\ii \lambda(y_1-y_2+y_3-y_4)} \delta(y_1 \! - \! x_1) \delta(y_3 \! - \! x_2) F(y_2,y_4)  \\
& \hspace{20mm}\left. + \gamma^2  \int_{J^4} \dd y_1 \dd y_2 \dd y_3 \dd y_4 \theta(y_1\! > \! y_2 \! > \! y_3 \! > \! y_4)  \e^{\ii \lambda(y_1-y_2+y_3-y_4)} \delta(y_1 \! - \! x_2)\delta(y_3 \! - \! x_1)F(y_2,y_4)  \right) \displaybreak[2] \\
&= \e^{-\ii \lambda L/2} \left( F(x_1,x_2) + \gamma \int_{J}  \dd y \theta(x_1\! > \! y) \e^{\ii \lambda(x_1-y)}F(x_2,y)  + \gamma \int_{J}  \dd y \theta(x_2\! > \! y) \e^{\ii \lambda(x_2-y)}  F(x_1,y) + \right. \\
& \hspace{20mm}+ \gamma^2  \int_{J^2}  \dd y_1 \dd y_2 \theta(x_1\! > \! y_1 \! > \! x_2 \! > \! y_2)  \e^{\ii \lambda(x_1+x_2-y_1-y_2)} F(y_1,y_2)  \\
& \hspace{20mm}\left. + \gamma^2  \int_{J^2}  \dd y_1 \dd y_2 \theta(x_2\! > \! y_1 \! > \! x_1 \! > \! y_2)  \e^{\ii \lambda(x_1+x_2-y_1-y_2)} F(y_1,y_2)  \right) \displaybreak[2] \\
&= \e^{-\ii \lambda L/2} \left( F(x_1,x_2) + \gamma \int_{-L/2}^{x_1} \dd y \e^{\ii \lambda(x_1-y)}F(x_2,y)  + \gamma \int_{-L/2}^{x_2}  \dd y  \e^{\ii \lambda(x_2-y)}  F(x_1,y) + \right. \\
& \hspace{20mm}+ \gamma^2  \int_{x_2}^{x_1} \dd y_1 \int_{-L/2}^{x_2} \dd y_2  \e^{\ii \lambda(x_1+x_2-y_1-y_2)} F(y_1,y_2)  \\
& \hspace{20mm}\left. + \gamma^2  \int_{x_1}^{x_2} \dd y_1 \int_{-L/2}^{x_1} \dd y_2  \e^{\ii \lambda(x_1+x_2-y_1-y_2)} F(y_1,y_2)  \right). \displaybreak[2] 
\end{align*}
\end{exm}
We will provide explicit integral formulae such as the one for $A_\lambda$ in \rfex{ABCDexamples} shortly. 
First we highlight

\begin{prop}[Properties of $A_\lambda$, $B_\lambda $, $C_\lambda$, and $D_\lambda $] \label{ABCDproperties}
Let $\lambda \in \bC$.
\begin{enumerate}
\item \label{ABCDnumber} We have
\[ A_\lambda(\ca D_N), D_\lambda (\ca D_N) \subset \ca H_N, \qquad B_\lambda (\ca D_N) \subset \ca H_{N+1}, 
\qquad C_\lambda(\ca D_N) \subset \ca H_{N-1}, \qquad C_\lambda(\ca H_0) = 0. \]
\item \label{ABCDformaladjoint} 
$A_\lambda$ and $D_{\bar \lambda}$ are formally adjoint, as are $B_\lambda$ and $C_{\bar \lambda} $.
\item \label{ABCDvacuum} The actions of $A_\lambda$, $B_\lambda $, $C_\lambda$, $D_\lambda $ on $\Vac$ are as follows:
\[ A_\lambda\Vac = \e^{-\ii \lambda L/2} \Vac, \quad \left(B_\lambda \Vac\right)(x) = \e^{\ii \lambda x}, \quad
C_\lambda\Vac = 0, \quad D_\lambda \Vac = \e^{\ii \lambda L/2}\Vac. \]
\item \label{ABCDbounded} Crucially, the operators $A_\lambda$, $B_\lambda $, $C_\lambda$ and $D_\lambda $ are bounded on $\ca H_\n{fin}$ if $J$ is bounded.
More precisely, let $\lambda_0 \in \bR$ and $L \in \bR_{>0}$.
Then 
\begin{equation} \label{ABCDnorms1}
\begin{aligned}
\| A_\lambda|_{\ca H_N} \| & \leq \e^{(\lambda_0+|\gamma|N)L}, && \n{if } -\lambda_0 \leq\Im \lambda, \\
\| B_\lambda |_{\ca H_N} \|, \| C_\lambda|_{\ca H_N} \| &\leq N \sqrt{L} \e^{(\lambda_0/2+|\gamma|N)L}, && \n{if }
-\lambda_0 \leq \Im \lambda \leq \lambda_0, \\
\| D_\lambda |_{\ca H_N} \| & \leq \e^{(\lambda_0+|\gamma|N)L}, && \n{if }\Im \lambda \leq \lambda_0.
\end{aligned}
\end{equation}
In particular, for $\lambda \in \bR$ we have 
\begin{equation} \label{ABCDnorms2}
\| A_\lambda|_{\ca H_N} \|, \| D_\lambda |_{\ca H_N} \| \leq \e^{|\gamma|NL}, \| B_\lambda |_{\ca H_N} \|, \| C_\lambda|_{\ca H_N} \| \leq N \sqrt{L}\e^{|\gamma|NL}.
\end{equation}
\item \label{ADasBandCasAD} Furthermore, we can express $A_\lambda$ and $D_\lambda $ in terms of $B_\lambda $, and $C_\lambda$ in terms of $A_\lambda$ or $D_\lambda $, as follows:
\[ A_\lambda = \left[ \Phi(-L/2), B_\lambda  \right], \quad D_\lambda  = \left[ \Phi(L/2), B_\lambda  \right], \quad
C_\lambda = \frac{1}{\gamma} \left[  \Phi(L/2) , A_\lambda \right] = \frac{1}{\gamma} \left[ \Phi(-L/2), D_\lambda  \right].
\]
\end{enumerate}
\end{prop}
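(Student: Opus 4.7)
The plan is to read off (i) and (iii) immediately from the explicit expansions \rfeqnser{Aoperatorfields}{Doperatorfields}, obtain (ii) by termwise formal adjunction, derive (v) from the canonical commutation relations, and spend most of the effort on the analytic bounds (iv). For (i), each $n$-th summand of $A_\lambda$ and $D_\lambda$ contains $n$ creation and $n$ annihilation fields and so preserves particle number; the series for $B_\lambda$ (resp.\ $C_\lambda$) carries one more creator than annihilator (resp.\ vice versa). Since $\Phi(y)\Vac=0$ by definition, every term of $C_\lambda$ vanishes on $\Vac$, and on $\Vac$ only the $n=0$ summand survives in $A_\lambda,B_\lambda,D_\lambda$; applying the $\Phi^*$-kernel from \rfeqn{quantumfields} then yields the stated values.

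For (ii) I will take the formal adjoint termwise. Since $\Phi(y)^*=\Phi^*(y)$, and since the $\Phi^*$'s at odd indices (resp.\ $\Phi$'s at even indices) mutually commute within a normal-ordered monomial, reversing the overall operator order and relabeling $y_j \mapsto y_{2n+1-j}$, with appropriate parity bookkeeping, converts the integrand of $A_\lambda$ into that of $D_{\bar\lambda}$, and likewise $B_\lambda \leftrightarrow C_{\bar\lambda}$; the scalar prefactor $\e^{\mp \ii \lambda L/2}$ conjugates correctly.

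The main obstacle is (iv), which carries the genuine analytic content. My plan is to treat $A_\lambda|_{\ca H_N}$ term-by-term in its expansion, which is a finite sum terminating at $n=N$. After using the $\Phi^*$ and $\Phi$ formulae of \rfeqn{quantumfields}, the $n$-th term becomes an integral operator whose kernel is a sum of products of step functions and an exponential of the form $\e^{-\ii\lambda\sum\pm y_j}$; on $J^{2n}\subset[-L/2,L/2]^{2n}$ this exponential is bounded in modulus by $\e^{\lambda_0 n L}$ under the stated one-sided hypothesis on $\Im\lambda$ (the asymmetry between $A_\lambda$ and $D_\lambda$ is inherited from the opposite signs of $\lambda$ in the two series). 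A direct $\L^2$-estimate on each term (via Cauchy--Schwarz or Schur's test), combined with counting $\binom{N}{n}$ contributions from $S_N$-symmetrization and the summation $\sum_{n=0}^N \binom{N}{n}(|\gamma|L)^n \leq \e^{|\gamma|NL}$, produces the bounds \rfeqn{ABCDnorms1}. The $N\sqrt L$ factor for $B_\lambda,C_\lambda$ then comes from the single unintegrated ``free'' position variable in the extra creation (resp.\ annihilation) operator. The hard part will be bookkeeping the sign of $\Im\lambda$ consistently so that the bound is uniform in $n$; once this is done, the bounds \rfeqn{ABCDnorms2} for real $\lambda$ drop out by taking $\lambda_0 \to 0$.

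Finally, for (v) I will compute directly using $[\Phi(x),\Phi^*(y)]=\delta(x-y)$. Commuting $\Phi(\pm L/2)$ past the creation operators in the $n$-th term of $B_\lambda$ produces boundary contributions $\delta(\pm L/2 - y_j)$ for $j$ odd; since the integration is restricted to $y_1>\cdots>y_{2n+1}$ inside $J$, only the extremal index contributes---$j=1$ for $\Phi(L/2)$, $j=2n+1$ for $\Phi(-L/2)$---and integrating out that boundary variable, with the associated parity shift in the remaining indices, matches the surviving expression to the series for $D_\lambda$ and $A_\lambda$ respectively. The identities for $C_\lambda$ follow by the same argument applied to $A_\lambda$ or $D_\lambda$; the factor $\gamma^{-1}$ appears because removing one creator from the $n$-th term of $A_\lambda$ or $D_\lambda$ (weight $\gamma^n$) reproduces the $(n-1)$-st term of $C_\lambda$ (natural weight $\gamma^{n-1}$).
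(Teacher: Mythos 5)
Your proposal is correct and, for parts \ref{ABCDnumber}--\ref{ABCDvacuum} and \ref{ADasBandCasAD}, does exactly what the paper does (the paper simply declares these ``immediate from \rfeqnser{Aoperatorfields}{Doperatorfields}''; your explicit CCR computation for \ref{ADasBandCasAD}, with the observation that the ordering $y_1>\ldots>y_{2n+1}$ forces only the extremal delta $\delta(\pm L/2-y_j)$ to survive, and the $\gamma^n\to\gamma^{n-1}$ weight shift explaining the $1/\gamma$, is the right unpacking). Where you genuinely diverge is part \ref{ABCDbounded}: the paper does not estimate all four operators directly. It cites Gutkin's Props.~6.2.1--6.2.2 for the bounds on $A_\lambda$ (half-plane $\Im\lambda\geq-\lambda_0$) and $C_\lambda$ (strip $|\Im\lambda|\leq\lambda_0$) only, and then obtains the bounds for $D_\lambda$ and $B_\lambda$ for free from property \ref{ABCDformaladjoint} together with the fact that the Hilbert-space adjoint of a bounded operator has the same norm; this is also why the hypothesis for $D_\lambda$ is the reflected condition $\Im\lambda\leq\lambda_0$ --- it comes from $A_\lambda^*=D_{\bar\lambda}$ and $\Im\bar\lambda=-\Im\lambda$, a structural point your plan does not exploit. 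Your direct term-by-term route (Cauchy--Schwarz over the bounded integration intervals, the $\binom{N}{n}$ count from symmetrization, and $\sum_{n}\binom{N}{n}(|\gamma|L)^n=(1+|\gamma|L)^N\leq\e^{|\gamma|NL}$) is viable and has the merit of being self-contained, but it duplicates for $B_\lambda$ and $D_\lambda$ work that duality makes unnecessary, and it requires some care in the Schur/Cauchy--Schwarz step because the kernels contain Dirac deltas from the $\Phi^*$ factors (cf.\ \rfex{ABCDexamples}), so each term must first be rewritten as a substitution operator composed with genuine integrations before the $\sqrt{L}$-per-variable estimate applies --- this is precisely the bookkeeping Gutkin carries out and the paper chooses not to reproduce.
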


\begin{proof}
Properties \ref{ABCDnumber}-\ref{ABCDvacuum} follow immediately from \rfeqnser{Aoperatorfields}{Doperatorfields}. 
To establish property \ref{ABCDbounded}, we refer to \cite[Props. 6.2.1~and~6.2.2]{Gutkin1988} for the details of the proof of the norm bounds for $A_\lambda$ and $C_\lambda$.
Hence $A_\lambda$ (on any half-plane $\Im \lambda \geq -\lambda_0$) and $C_\lambda$ (on any ``strip'' $|\Im \lambda| \leq \lambda_0$) can be uniquely extended to bounded operators defined on each $\ca H_N$. These have bounded adjoints defined on each $\ca H_N$, of which $D_\lambda $ and $B_\lambda $ must be the restrictions to $\ca D$, because of property \ref{ABCDformaladjoint}. Since the adjoint of a bounded operator is bounded, we obtain the norm bounds for $D_\lambda $ (on the half-plane $\Im \lambda \leq \lambda_0$) and $B_\lambda $ as well. 
For $\lambda \in \bR$, the norm bounds for $A_\lambda$, $B_\lambda $, $C_\lambda$, and $D_\lambda $ follow immediately.
For property \ref{ADasBandCasAD}, we remark that since the operators $A_\lambda$, $B_\lambda $, $C_\lambda$ and $D_\lambda $ are bounded they can viewed as endomorphisms of $\ca H$. Then the desired result follows immediately from \rfeqnser{Aoperatorfields}{Doperatorfields}.
\end{proof}

In view of property \ref{ABCDnumber} we will refer to $B_\lambda $ as the \emph{QNLS (particle) creation operator}.
Combining properties \ref{ABCDformaladjoint} and \ref{ABCDbounded} we obtain that 
\begin{equation} \label{ABCDadjoint} A_\lambda^* = D_{\bar \lambda}, \qquad B_\lambda^* = C_{\bar \lambda} . \end{equation}

\rfeqn{ABCDnorms2} yields that $T_\lambda=A_\lambda+D_\lambda $ is bounded on each $\ca H_N$, and hence its unique extension to $\ca H_N$ is continuous. 
From property \ref{ABCDformaladjoint} it follows that $T_\lambda$ is formally self-adjoint; by the continuity of the $\L^2$ inner product its extension to $\ca H_N$ is also formally self-adjoint, and hence self-adjoint. We have obtained
\begin{cor} \label{Tselfadjoint}
For $J$ bounded and $\lambda \in \bR$, $T_\lambda$ is bounded and self-adjoint on $\ca H_\n{fin}$.
\end{cor}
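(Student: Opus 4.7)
The plan is to reduce the claim to facts already established in Proposition~\ref{ABCDproperties} about $A_\lambda$ and $D_\lambda$. Since $T_\lambda = A_\lambda + D_\lambda$ and both summands preserve the particle number by property~(i), it suffices to argue separately on each fixed $\ca H_N$ and then assemble.

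First I would establish boundedness. Specialising the bound in property~(iv) to real $\lambda$ (equivalently $\lambda_0 = 0$), one has $\|A_\lambda|_{\ca H_N}\|, \|D_\lambda|_{\ca H_N}\| \leq \e^{|\gamma| N L}$, so by the triangle inequality $\|T_\lambda|_{\ca H_N}\| \leq 2\e^{|\gamma| N L}$. Every element of $\ca H_\n{fin}$ sits in some finite direct sum $\bigoplus_{N=0}^M \ca H_N$, so $T_\lambda$ is well defined on $\ca H_\n{fin}$ and continuous on each graded piece; note that the bound on $\ca H_N$ grows with $N$ and so will not extend to a uniform bound on all of $\ca H$, which is why the statement is phrased at the level of $\ca H_\n{fin}$.

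Next I would deduce self-adjointness. Equation~\rf{ABCDadjoint} gives $A_\lambda^* = D_{\bar\lambda}$, which for $\lambda \in \bR$ reads $A_\lambda^* = D_\lambda$; taking adjoints once more (legal now that both operators are bounded on each $\ca H_N$) yields $D_\lambda^* = A_\lambda$. Adding, $T_\lambda^* = D_\lambda + A_\lambda = T_\lambda$. The only mildly subtle point is that the adjoint relation in Proposition~\ref{ABCDproperties}(ii) is a priori stated only as formal adjointness on the test-function subspace $\ca D$; to promote it to genuine self-adjointness on each $\ca H_N$ I would observe that the sesquilinear forms $(f,g) \mapsto \innerrnd{T_\lambda f}{g}$ and $(f,g) \mapsto \innerrnd{f}{T_\lambda g}$ are continuous on $\ca H_N \times \ca H_N$ (by the boundedness just established) and agree on $\ca D_N \times \ca D_N$, hence everywhere.

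I do not anticipate a genuine obstacle here: the substantive content—the exponential norm bounds~(iv) and the formal adjoint relations~(ii)—is already packaged in the preceding proposition, and the corollary is essentially the observation that boundedness upgrades formal symmetry on the dense subspace of test functions to honest self-adjointness on each $\ca H_N$, and thus on $\ca H_\n{fin}$.
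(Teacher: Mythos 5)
Your proof is correct and follows essentially the same route as the paper: the norm bounds of \rfp{ABCDproperties}~\ref{ABCDbounded} specialised to $\lambda\in\bR$ give boundedness of $T_\lambda$ on each $\ca H_N$, and the formal adjointness $A_\lambda^*=D_{\bar\lambda}$ combined with continuity of the inner product upgrades formal symmetry on the dense test-function subspace to genuine self-adjointness. Your explicit remark about the continuity of the sesquilinear forms just spells out what the paper compresses into ``by the continuity of the $\L^2$ inner product.''
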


\subsection{Integral formulae for the entries of the monodromy matrix}

To understand the matrix entries of $\ca T$ better, we study their action on $\ca H_N$.
Let $N$ and $n$ be nonnegative integers with $n \leq N$. Define the following sets of multi-indices
\begin{align} 
\f i^n_N &= \set{\bm i \in \{1, \ldots, N\}^n}{\n{for } l \ne m, \, i_l \ne i_m}, \label{seti}\\ %
\nc{rilwa}{$\f i^{n}_{N}$}{Set of $n$-tuples with distinct entries in $\{1,\ldots,N\}$\nomrefeqpage} %
\f I^n_N &= \set{\bm i \in \{1, \ldots, N\}^n}{i_1 < \ldots < i_n} \subset \f i^n_N, \label{setI} 
\end{align}%
\nc{ricw}{$\f I^{n}_{N}$}{Set of $n$-tuples with increasing entries in $\{1,\ldots,N\}$\nomrefeqpage}%
of $n$-tuples consisting of distinct and increasing elements from the index set $\{ 1, \ldots, N \}$, respectively. \\

\boxedenv{
\begin{notn}[Deleting and appending multiple variables] \label{notn3}
Let $X$ be a set; let $N$ and $n$ be nonnegative integers with $n \leq N$.
\begin{itemize}
\item Given $\bm x = (x_1,\ldots,x_N) \in X^N$ and $\bm i \in \f i^n_N$.
There is a unique element $\bm j = (j_1,\ldots,j_n) \in \f I^n_N \cap S_N(\bm i)$. We write 
\[ \bm x_{\hat{\bm \imath}} = \bm x_{\hat \imath_1 \ldots \hat \imath_n} = ((\ldots(\bm x_{\hat \jmath_n})\ldots)_{\hat \jmath_2})_{\hat \jmath_1}. \]
In other words, $\bm x_{\hat{\bm \imath}}$ is the element of $X^{N-n}$ obtained by deleting those $x_j$ with $j=i_m$ for some $m$.
\item Also, given $\bm x \in X^N$ and $\bm y = (y_1,\ldots,y_n) \in X^n$ for some positive integer $n$, then $(\bm x,\bm y) = (x_1,\ldots,x_N,y_1,\ldots,y_n) \in X^{N+n}$. 
\end{itemize}
\end{notn}}\\
\vspace{2mm}\\
\boxedenv{
\begin{notn}[Integral limits] \label{notn4}
Let $N$ and $n$ be nonnegative integers with $n \leq N$ and let $L \in \bR_{> 0}$. Whenever $\bm i \in \f i^n_N$ and $\bm x \in [-L/2,L/2]^N$, we will use the conventions that $x_{i_0}=L/2$ and $x_{i_{n+1}}=-L/2$; this facilitates the notation of certain integrals.
\end{notn} }\\

Let $J \subset \bR$ be a bounded or unbounded interval. For any set $X \subset J^N$, denote by $\chi_X$\nc{gxl}{$\chi_X$}{Characteristic function of the set $X$ \nomrefpage} the characteristic function of $X$.
Furthermore, we will use the same notation $\chi_X$ to denote the multiplication operator acting on $\ca F(J^N)$ or $\ca F(J^N_\n{reg})$ by sending $f$ to $\bm x \mapsto \chi_X(\bm x) f(\bm x)$ (for $\bm x \in J^N$ or $\bm x \in J^N_\n{reg}$, respectively).
To define (possibly unbounded) operators on $\ca H_N$ it is sufficient to specify the resulting function in the set $J^N_+$ and then symmetrize to $J^N_\n{reg}$ by applying $\sum_{w \in S_N} \chi_{w^{-1} J^N_+} w$; for each $\bm x$, only the term with $w \bm x \in J^N_+$ survives.
Moreover, one can extend the resulting function to the entire hypercube $J^N$ such that the extension is continuous at the hyperplanes $V_{j\,k}$ (the limits $x_k \to x_j$ in both domains $x_j \gtrless x_k$ are identical due to symmetry).

\begin{defn}[Elementary integral operators] \label{Eopsdefn}
Let $\lambda \in \bC$, $J=[-L/2,L/2]$ with $L \in \bR_{> 0}$ and $n=0,\ldots,N$. Define the \emph{elementary integral operators},
\begin{equation} \label{elemsymmintopdef}
\left.
\begin{aligned}
\hat E_{\lambda;\bm i} &= \hat E_{\lambda;i_1 \, \ldots \, i_{n+1}}:  && \ca H_N \to \ca H_{N+1}, && \n{for } \bm i \in \f I^{n+1}_{N+1}, \\
\bar E^\pm_{\lambda;\bm i} &= \bar E^\pm_{\lambda;i_1 \, \ldots \, i_n}: && \ca H_N \to \ca H_N, && \n{for } \bm i \in \f I^n_{N},   \\
\check E_{\lambda;\bm i} &= \check E_{\lambda;i_1 \, \ldots \, i_n}: && \ca H_{N+1} \to \ca H_N, && \n{for } \bm i \in \f I^n_{N}, 
\end{aligned}
\right\} \n{for }n=0,\ldots,N
\end{equation}%
\nc{rec}{$\hat E_{\lambda;\bm i}, \bar E^\pm_{\lambda;\bm i}, \check E_{\lambda;\bm i}$}{Elementary symmetric integral operators}%
by the following formulae.
For $F \in \ca D_N$, $\bm x \in J^{N+1}_+$, $\bm i \in \f I^{n+1}_{N+1}$ we have
\[ \left(\hat E_{\lambda;\bm i}F\right)(\bm x) =  \left( \prod_{m=1}^n \int_{x_{i_{m+1}}}^{x_{i_m}} \dd y_m \right) \e^{\ii \lambda \left( \sum_{m=1}^{n+1} x_{i_m} - \sum_{m=1}^n y_m \right)} F \left(\bm x_{\bm {\hat \imath}}, y_1, \ldots, y_n \right); \]
for $F \in \ca D_N$, $\bm x \in J^N_+$ and $\bm i \in \f I^n_{N}$ we have
\begin{align*}
\left(\bar E^+_{\lambda;\bm i}F\right)(\bm x) &= \e^{-\ii \lambda L/2} \left( \prod_{m=1}^n \int_{x_{i_{m+1}}}^{x_{i_m}} \dd y_m \right) \e^{\ii \lambda \sum_{m=1}^{n} \left( x_{i_m} - y_m \right)} F \left(\bm x_{\bm {\hat \imath}}, y_1, \ldots, y_n \right), \\
\left(\bar E^-_{\lambda;\bm i}F\right)(\bm x) &= \e^{\ii \lambda L/2} \left( \prod_{m=1}^n \int_{x_{i_m}}^{x_{i_{m-1}}} \dd y_m \right)  \e^{\ii \lambda \sum_{m=1}^{n} \left( x_{i_m} - y_m \right)} F \left(\bm x_{\bm {\hat \imath}}, y_1, \ldots, y_n \right);
\end{align*}
and for $F \in \ca D_{N+1}$, $\bm x \in J^N_+$ and $\bm i \in \f I^n_{N}$ we have
\[ \left(\check E_{\lambda;\bm i}F\right)(\bm x) = \left( \prod_{m=1}^n \int_{x_{i_m}}^{x_{i_{m-1}}} \dd y_m \right) \e^{\ii \lambda \left( \sum_{m=1}^n x_{i_m} - \sum_{m=1}^{n+1} y_m \right)} F \left(\bm x_{\bm {\hat \imath}}, y_1, \ldots, y_{n+1} \right). \]
\end{defn}

From \rfeqnser{Aoperatorfields}{Doperatorfields} the following expressions can be deduced (see e.g. \cite{Gutkin1988}):
\begin{align} 
A_\lambda & = \sum_{n=0}^N \gamma^n \sum_{\bm i \in \f I^n_{N}} \bar E^+_{\lambda;\bm i}: \hspace{-30mm} && \ca H_N \to \ca H_N, \label{Aintoperators} \displaybreak[2] \\
B_\lambda  & =  \frac{1}{N+1} \sum_{n=0}^N \gamma^n \sum_{\bm i\in \f I^{n+1}_{N+1}} \hat E_{\lambda;\bm i}: \hspace{-30mm} && \ca H_N \to \ca H_{N+1}, \label{Bintoperators} \displaybreak[2] \\
C_\lambda & = (N+1) \sum_{n=0}^N \gamma^n \sum_{\bm i \in \f I^n_{N}} \check E_{\lambda;\bm i}: \hspace{-30mm} && \ca H_{N+1} \to \ca H_N, \label{Cintoperators} \displaybreak[2] \\
D_\lambda  & = \sum_{n=0}^N \gamma^n \sum_{\bm i \in \f I^n_{N}} \bar E^-_{\lambda;\bm i}: \hspace{-30mm} && \ca H_N \to \ca H_N, \label{Dintoperators}
\end{align}
where $\lambda \in \bC$. 
We have changed the multiplicative factor of $B_\lambda$ from $\frac{1}{\sqrt{N+1}}$ to $\frac{1}{N+1}$, and similarly for $C_\lambda$ from $\sqrt{N+1}$ to $N+1$. This corresponds to conjugating the monodromy matrix by a matrix of the form $\omega^{\sigma_z} = \small \begin{pmatrix} \omega & 0 \\ 0 & \omega^{-1} \end{pmatrix}$. Because $\ca R_\lambda$ commutes with $\left(\omega^{\sigma_z} \otimes 1\right)\left( 1 \otimes \omega^{\sigma_z} \right)$ the YBE \rf{ybeRT} is preserved and hence this formalism describes the same physical system.

\begin{exm}[$N=2$, continuation of \rfex{ABCDexamples}] \label{ABCDexamples2}
We present the explicit formulae for the action of the matrix entries of $\ca T_\lambda$ on suitable functions $F$ restricted to the appropriate fundamental alcove.
For $F \in \ca D_2$ and $(x_1,x_2) \in J^2_+$ (i.e. $L/2>x_1>x_2>-L/2$) we have
\begin{align*}
(A_\lambda F)(x_1,x_2) &= \e^{- \! \ii \lambda L/2} \left( F(x_1,x_2) + \gamma \! \int_{-L/2}^{x_1} \dd y \e^{\ii \lambda(x_1\! - \! y)}F(x_2,y) + \right. \\
& \qquad\qquad + \gamma \! \int_{-L/2}^{x_2}  \dd y  \e^{\ii \lambda(x_2 \! -\! y)}  F(x_1,y) + \\
& \qquad\qquad + \left. \gamma^2  \int_{x_2}^{x_1} \dd y_1 \int_{-L/2}^{x_2} \dd y_2  \e^{\ii \lambda(x_1+x_2-y_1-y_2)} F(y_1,y_2)  \right), \displaybreak[2]  \\
(D_\lambda F)(x_1,x_2) &= \e^{\ii \lambda L/2} \left( F(x_1,x_2) + \gamma \! \int_{x_1}^{L/2} \dd y \e^{\ii \lambda(x_1\! - \! y)}F(x_2,y)  + \right. \\
& \qquad \qquad +\gamma \! \int_{x_2}^{L/2}  \dd y  \e^{\ii \lambda(x_2\! - \! y)}  F(x_1,y) + \\
& \qquad \qquad + \left. \gamma^2  \int_{x_1}^{L/2} \dd y_1 \int_{x_2}^{x_1} \dd y_2  \e^{\ii \lambda(x_1+x_2-y_1-y_2)} F(y_1,y_2)  \right).
\end{align*}
For $F \in \ca D_2$ and $(x_1,x_2,x_3) \in J^3_+$ we have
\begin{align*}
(B_\lambda F)(x_1,x_2,x_3) &=\frac{1}{3} \left( \e^{\ii \lambda x_1} F(x_2,x_3)+\e^{\ii \lambda x_2} F(x_1,x_3)+\e^{\ii \lambda x_3} F(x_1,x_2)  + \right.  \\
& \qquad + \gamma \int_{x_2}^{x_1} \dd y \e^{\ii \lambda(x_1+x_2-y)}F(x_3,y) 
+ \gamma \int_{x_3}^{x_1}  \dd y  \e^{\ii \lambda(x_1+x_3-y)}  F(x_2,y) +  \\
& \qquad + \gamma \int_{x_3}^{x_2}  \dd y  \e^{\ii \lambda(x_2+x_3-y)}  F(x_1,y) + \\
& \qquad + \left. \gamma^2  \int_{x_2}^{x_1} \dd y_1 \int_{x_3}^{x_2} \dd y_2  \e^{\ii \lambda(x_1+x_2+x_3-y_1-y_2)} F(y_1,y_2)  \right), 
\end{align*}
and finally for $F \in \ca D_3$ and $(x_1,x_2) \in J^2_+$ we have
\begin{align*}
(C_\lambda F)(x_1,x_2) &=3 \left( \int_{-L/2}^{L/2} \dd y \e^{-\ii \lambda y} F(x_1,x_2,y) + \right.  \\
& \qquad \qquad + \gamma \int_{x_1}^{L/2} \dd y_1 \int_{-L/2}^{x_1} \dd y_2 \e^{\ii (x_1-y_1-y_2)}F(x_2,y_1,y_2) + \\
& \qquad \qquad +  \gamma \int_{x_2}^{L/2} \dd y_1 \int_{-L/2}^{x_2} \dd y_2 \e^{\ii (x_2-y_1-y_2)}F(x_1,y_1,y_2)  + \\
& \qquad \qquad \left. + \gamma^2 \int_{x_1}^{L/2} \dd y_1 \int_{x_2}^{x_1} \dd y_2 \int_{-L/2}^{x_2} \dd y_3 \e^{\ii \lambda(x_1+x_2-y_1-y_2-y_3)} F(y_1,y_2,y_3) \right). 
\end{align*}
\end{exm}

Because of the above representation of the operators $A_\lambda$, $B_\lambda$, $C_\lambda$, $D_\lambda$ as integral operators we shall also refer to them as the \emph{QISM integral operators}.

\subsection{Commutation relations between the QISM integral operators}

Having established in \rfp{ABCDproperties} \ref{ABCDbounded} that $A_\lambda$, $B_\lambda $, $C_\lambda$, $D_\lambda $ are bounded in $\ca H_\n{fin}$, we consider the subalgebra of $\End(\ca H)$ generated by them and refer to it as the \emph{(symmetric) Yang-Baxter algebra} for the QNLS. Immediately from \rft{ybeRTthm} we have
\begin{cor} \label{Tcommrelcor}
For all choices of $j_1, k_1, j_2, k_2 \in \{ 1,2 \}$ and all $\lambda,\mu \in \bC$, $\lambda \ne \mu$ we have
\begin{equation} \label{Tcommrel}
\left[ \ca T^{j_1 \, k_1}_\lambda, \ca T^{j_2 \, k_2}_\mu \right] = -\frac{\ii \gamma}{\lambda-\mu} \left( \ca T^{j_2 \, k_1}_\lambda \ca T^{j_1 \, k_2}_\mu - \ca T^{j_2 \, k_1}_\mu \ca T^{j_1 \, k_2}_\lambda \right);
\end{equation} 
i.e. we have
\boxedalign{
[A_\lambda,A_\mu ] &= [B_\lambda ,B_\mu ] \, = \, [C_\lambda ,C_\mu ] \, = \, [D_\lambda ,D_\mu ] \, = \, 0 \nonumber \\
[A_\lambda,B_\mu ] &= - \frac{\ii \gamma}{\lambda-\mu} \left( B_\lambda A_\mu  - B_\mu A_\lambda \right), \label{ABcommrel} \\
[A_\lambda,C_\mu ] &= \frac{\ii \gamma}{\lambda-\mu} \left( C_\lambda A_\mu  - C_\mu A_\lambda \right), \nonumber \\
[B_\lambda ,A_\mu ] &= - \frac{\ii \gamma}{\lambda-\mu} \left( A_\lambda B_\mu  - A_\mu B_\lambda  \right), \label{BAcommrel} \\
[B_\lambda ,D_\mu ] &=  \frac{\ii \gamma}{\lambda-\mu} \left( D_\lambda B_\mu  - D_\mu B_\lambda \right), \label{BDcommrel} \\
[C_\lambda ,A_\mu ] &=  \frac{\ii \gamma}{\lambda-\mu} \left( A_\lambda C_\mu  - A_\mu C_\lambda \right), \label{CAcommrel} \\
[C_\lambda ,D_\mu ] &= - \frac{\ii \gamma}{\lambda-\mu} \left( D_\lambda C_\mu  - D_\mu C_\lambda \right), \nonumber \\
[D_\lambda ,B_\mu ] &= \frac{\ii \gamma}{\lambda-\mu} \left( B_\lambda D_\mu  - B_\mu D_\lambda  \right), \label{DBcommrel} \\
[D_\lambda ,C_\mu ] &= - \frac{\ii \gamma}{\lambda-\mu} \left( C_\lambda D_\mu  - C_\mu D_\lambda \right), \nonumber \\
[A_\lambda,D_\mu ] &= -\frac{\ii \gamma^2}{\lambda-\mu} \left( B_\lambda C_\mu -B_\mu C_\lambda  \right), \label{ADcommrel}  \\
[D_\lambda ,A_\mu ] &= -\frac{\ii \gamma^2}{\lambda-\mu} \left( C_\lambda B_\mu -C_\mu B_\lambda  \right), \nonumber \\
[B_\lambda ,C_\mu ] &= -\frac{\ii}{\lambda-\mu} \left( A_\lambda D_\mu -A_\mu D_\lambda \right), \label{BCcommrel}  \\
[C_\lambda ,B_\mu ] &= -\frac{\ii}{\lambda-\mu} \left( D_\lambda A_\mu -D_\mu A_\lambda\right). \label{CBcommrel} }
\end{cor}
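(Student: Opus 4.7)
The plan is to extract the commutation relations \rf{Tcommrel} as the sixteen scalar components of the Yang-Baxter equation \rf{ybeRT} in $\End(\bC^2 \otimes \bC^2 \otimes \ca H)$, indexed by the matrix-unit basis $\{E_{j_1 k_1} \otimes E_{j_2 k_2} : j_1, k_1, j_2, k_2 \in \{1,2\}\}$ of the auxiliary space $\End(\bC^2 \otimes \bC^2)$. First I would write $\ca T_\lambda = \sum_{j,k} E_{j k} \otimes \ca T^{j \, k}_\lambda$ and $\ca P = \sum_{p,q} E_{p q} \otimes E_{q p}$, and then read off the coefficient of each $E_{j_1 k_1} \otimes E_{j_2 k_2}$ on both sides of \rf{ybeRT}.

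The core computation is elementary. For the left-hand side, the identity part of $\ca R_{\lambda-\mu}$ produces $\ca T^{j_1 k_1}_\lambda \ca T^{j_2 k_2}_\mu$ at position $E_{j_1 k_1} \otimes E_{j_2 k_2}$, while the $\ca P$-part swaps the two upper indices, contributing $\ca T^{j_2 k_1}_\lambda \ca T^{j_1 k_2}_\mu$ weighted by the $\ca P$-coefficient of $\ca R_{\lambda-\mu}$. On the right-hand side, the identity part gives $\ca T^{j_2 k_2}_\mu \ca T^{j_1 k_1}_\lambda$; the $\ca P$-part, now acting on the right, swaps the two lower indices and contributes $\ca T^{j_2 k_1}_\mu \ca T^{j_1 k_2}_\lambda$, again weighted by the appropriate multiple of $\frac{\ii \gamma}{\lambda-\mu}$ coming from \rfd{Rmatrixdefn}. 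Equating coefficients and rearranging is precisely \rf{Tcommrel}.

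The individual relations \rf{ABcommrel}--\rf{CBcommrel} then follow by specialising the tuple $(j_1, k_1, j_2, k_2) \in \{1,2\}^4$ and absorbing the factor $\gamma$ from $\ca T^{1 \, 2}_\lambda = \gamma B_\lambda$, which accounts for the $\gamma^2$ appearing in \rf{ADcommrel} and its absence from \rf{BCcommrel} and \rf{CBcommrel}. The four ``isotypic'' tuples of the form $(a, b, a, b)$ collapse to self-referential identities $[\ca T^{a \, b}_\lambda, \ca T^{a \, b}_\mu] = \pm \frac{\ii \gamma}{\lambda-\mu}[\ca T^{a \, b}_\lambda, \ca T^{a \, b}_\mu]$, forcing the vanishing commutators $[A_\lambda, A_\mu] = [B_\lambda, B_\mu] = [C_\lambda, C_\mu] = [D_\lambda, D_\mu] = 0$ for generic $\lambda \ne \mu$; the $\lambda = \mu$ case then follows by analyticity of the $\ca T^{j \, k}_\lambda$ in their spectral parameter. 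The remaining twelve tuples produce the twelve off-diagonal relations directly (paired into six under $\lambda \leftrightarrow \mu$). The only real obstacle is combinatorial: tracking signs and $\gamma$-factors consistently across all sixteen cases. Boundedness of each $\ca T^{j \, k}_\lambda$ on $\ca H_\n{fin}$ (\rfp{ABCDproperties} \ref{ABCDbounded}) upgrades the manipulations to honest operator identities.
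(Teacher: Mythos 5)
Your proposal is correct and is exactly the computation the paper leaves implicit: the text derives \rf{Tcommrel} ``immediately'' from the exchange relation \rf{ybeRT} by comparing coefficients in the matrix-unit basis of $\End(\bC^2\otimes\bC^2)$, which is precisely your extraction of the sixteen components, with $\ca P$ permuting the row indices on the left and the column indices on the right. Your treatment of the diagonal tuples also matches the remark the paper makes immediately after the corollary, namely that one obtains $(\lambda-\mu\mp\ii\gamma)[\ca T^{a\,b}_\lambda,\ca T^{a\,b}_\mu]=0$ and must dispose of the exceptional values $\lambda-\mu=\pm\ii\gamma$ (the paper does this by swapping $\lambda$ and $\mu$; your appeal to analyticity in the spectral parameter achieves the same thing).
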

We will at times refer to these identities by the letters of the operators appearing in the commutator on the left-hand side of the equals sign. For example, the $AB$-relation is the relation labelled \rf{ABcommrel} and is the relation used to move an $A$-operator past a $B$-operator from left to right.
We make the following observations.
\begin{itemize}
\item The $DD$- and $CC$-relations are the (formal) adjoints of the $AA$- and $BB$-relations, respectively. There is a subtlety in the derivation of these relations from \rfc{Tcommrelcor}. For example for the $AA$-relation one obtains $(\lambda-\mu+\ii \gamma)[A_\lambda,A_\mu ]=0$, so that an additional assumption $\lambda-\mu \ne -\ii \gamma$ seems necessary. However, by swapping $\lambda$ and $\mu$ this can be overcome. 
\item The $BD$-, $DB$-, $CD$- and $DC$-relations are the adjoints of the $AC$-, $CA$-, $AB$- and $BA$-relations, respectively.
\item The $AD$-, $BC$-, $CB$- and $DA$-relations are formally self-adjoint.
\end{itemize}
It is easily checked that the $BA$-relation is equivalent to the $AB$-relation; similarly, the $BD$-relation to the $DB$-relation. More generally, we have
\begin{lem}\label{ABBAequivalence}
Let $\lambda,\mu$ be distinct complex numbers, and $X$ and $Y$ parametrized elements of an associative algebra over $\bC$ satisfying
\[ [X_\lambda,Y_\mu] = h_{\lambda-\mu} \left( X_\lambda Y_\mu - X_\mu Y_\lambda \right) \]
for some $h: \bC\setminus\{0\} \to \bC\setminus\{0\}: z \mapsto h_z$ satisfying $h_{-z}=-h_z$.
Then
\[ [Y_\lambda,X_\mu] =  h_{\lambda-\mu} \left( Y_\lambda X_\mu - Y_\mu X_\lambda \right). \]
\end{lem}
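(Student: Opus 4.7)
The plan is to exploit the oddness $h_{-z} = -h_z$ to recover the desired relation from the given one by a swap of the spectral parameters. Concretely, I would first substitute $\lambda \leftrightarrow \mu$ in the hypothesis to obtain $[X_\mu, Y_\lambda] = h_{\mu-\lambda}(X_\mu Y_\lambda - X_\lambda Y_\mu)$, and then use $h_{\mu-\lambda} = -h_{\lambda-\mu}$ to rewrite this as $[X_\mu, Y_\lambda] = h_{\lambda-\mu}(X_\lambda Y_\mu - X_\mu Y_\lambda)$. The right-hand side is exactly the right-hand side of the original hypothesis, so I may conclude
\[ [X_\lambda, Y_\mu] = [X_\mu, Y_\lambda], \]
or equivalently $[Y_\lambda, X_\mu] = [Y_\mu, X_\lambda]$ after negation.

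Next I would use this ``crossed symmetry'' to simplify the difference $Y_\lambda X_\mu - Y_\mu X_\lambda$ that appears on the right-hand side of the target relation. Writing $Y_\lambda X_\mu = X_\mu Y_\lambda - [X_\mu, Y_\lambda]$ and $Y_\mu X_\lambda = X_\lambda Y_\mu - [X_\lambda, Y_\mu]$, the two commutator terms cancel by the symmetry just established, leaving $Y_\lambda X_\mu - Y_\mu X_\lambda = X_\mu Y_\lambda - X_\lambda Y_\mu$. On the other hand, $[Y_\lambda, X_\mu] = -[X_\mu, Y_\lambda]$, which by the swapped hypothesis equals $h_{\lambda-\mu}(X_\mu Y_\lambda - X_\lambda Y_\mu)$. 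Matching the two expressions gives precisely
\[ [Y_\lambda, X_\mu] = h_{\lambda-\mu}\bigl( Y_\lambda X_\mu - Y_\mu X_\lambda \bigr), \]
as required.

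There is no substantive obstacle here: the lemma is pure bookkeeping, and the only point requiring care is the sign produced by $h_{-z} = -h_z$, which is exactly what allows the ``symmetric'' shape of the right-hand side to be preserved under the swap. No further assumption beyond $h_{\lambda-\mu}$ being defined (i.e.\ $\lambda \neq \mu$, which is part of the hypothesis) is needed, and nothing uses associativity beyond the trivial rewriting $Y X = X Y - [X,Y]$.
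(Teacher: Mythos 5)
Your proof is correct and follows essentially the same route as the paper: both first deduce $[X_\lambda,Y_\mu]=[X_\mu,Y_\lambda]$ from the oddness of $h$, then rewrite the products so that the two commutator terms cancel by that symmetry. The only difference is presentational (you isolate the identity $Y_\lambda X_\mu - Y_\mu X_\lambda = X_\mu Y_\lambda - X_\lambda Y_\mu$ as an intermediate step, whereas the paper performs the cancellation inline), so nothing further is needed.
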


\begin{proof}
Note that it follows immediately that $[X_\lambda,Y_\mu]=[X_\mu,Y_\lambda]$
We may write
\begin{align*} 
[Y_\lambda,X_\mu] &= -[X_\mu,Y_\lambda] &&= - h_{\mu-\lambda} \left( X_\mu Y_\lambda - X_\lambda Y_\mu \right) \\
&= h_{\lambda-\mu} \left( [X_\mu,Y_\lambda] + Y_\lambda X_\mu - [X_\lambda,Y_\mu] - Y_\mu X_\lambda \right) 
&&= h_{\lambda-\mu} \left(Y_\lambda X_\mu - Y_\mu X_\lambda \right). \qedhere
\end{align*}
\end{proof}

\subsection{Integrability of the QNLS model}

Referring back to the discussion at the end of Section \ref{seQFThy}, we now formulate the condition from which the integrability of the QNLS model follows.

\begin{thm}[Commuting transfer matrices] \label{commuteT}
Let $\lambda,\mu \in \bC$. Then $[T_\lambda,T_\mu]=0$.
\end{thm}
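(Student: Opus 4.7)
The plan is to derive the commutativity from the quantum Yang-Baxter equation (\ref{ybeRT}) by taking the trace over the two auxiliary copies of $\bC^2$, using the standard ``trace trick''.

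First, I would restrict attention to $\lambda, \mu \in \bC$ with $\lambda \ne \mu$ and $\lambda - \mu \ne \pm \ii \gamma$, so that by the remarks following \rfd{Rmatrixdefn} the operator $\ca R_{\lambda-\mu}$ is invertible in $\End(\bC^2 \otimes \bC^2)$. Rewriting \rft{ybeRTthm} as
\[ \left(\ca T_\lambda \otimes 1 \right) \left( 1 \otimes \ca T_\mu\right) = \ca R_{\lambda-\mu}^{-1} \left(1 \otimes \ca T_\mu\right) \left(\ca T_\lambda \otimes 1 \right) \ca R_{\lambda-\mu}, \]
I then apply $\Tr_{\bC^2 \otimes \bC^2}$ to both sides. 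On the left, since the two copies of $\bC^2$ act on disjoint tensor factors, the trace factorizes as
\[ \Tr_{\bC^2 \otimes \bC^2} \left( \ca T_\lambda \otimes 1 \right)\left(1 \otimes \ca T_\mu\right) = \left( \Tr_{\bC^2} \ca T_\lambda \right) \left( \Tr_{\bC^2} \ca T_\mu \right) = T_\lambda T_\mu. \]
On the right, the cyclicity of the trace (applied in the auxiliary space $\bC^2 \otimes \bC^2$, which is permissible because $\ca R_{\lambda-\mu}$ acts trivially on $\ca H$) eliminates the conjugating $\ca R$-matrices, leaving $T_\mu T_\lambda$.

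This yields $[T_\lambda,T_\mu] = 0$ on the dense domain for all generic pairs $(\lambda,\mu)$. For the remaining cases $\lambda = \mu$ (trivial) and $\lambda - \mu = \pm \ii \gamma$, I would extend by continuity/analyticity: the matrix entries of $\ca T_\lambda$ are, by the explicit formulae \rfeqnser{Aintoperators}{Dintoperators}, entire functions of $\lambda$ as operators $\ca H_N \to \ca H_{N \pm 1}$, so $T_\lambda$ is likewise entire, and the commutator $[T_\lambda,T_\mu]$ is an entire function of $(\lambda,\mu)$ that vanishes on a dense open set and hence identically.

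The main subtlety will be justifying the cyclicity step rigorously in the presence of possibly unbounded operators acting on $\ca H$; however, by \rfp{ABCDproperties} \ref{ABCDbounded} the entries of $\ca T_\lambda$ are bounded on $\ca H_\n{fin}$ when $J$ is bounded, so everything lives in $\End(\bC^2 \otimes \bC^2) \otimes \End(\ca H_\n{fin})$ and the finite-dimensional auxiliary trace may be taken without issue. (As an alternative sanity check one could verify the identity directly from \rf{ADcommrel} combined with the $AA$- and $DD$-vanishing in \rfc{Tcommrelcor}, noting that the two nonzero cross-terms $[A_\lambda,D_\mu]+[D_\lambda,A_\mu]$ cancel after applying the $BC$-relation \rf{BCcommrel} with $\lambda,\mu$ swapped.)
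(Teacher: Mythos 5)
Your proof is correct, but it takes a different route from the paper's. You derive commutativity globally from the exchange relation \rf{ybeRT} by inverting $\ca R_{\lambda-\mu}$ and applying the trace over both auxiliary copies of $\bC^2$, which forces you to exclude the degenerate values $\lambda-\mu=\pm\ii\gamma$ (where $\ca R_{\lambda-\mu}$ is a projection) and then recover them by analytic continuation in $(\lambda,\mu)$; your justification of both steps — cyclicity of the finite-dimensional partial trace acting trivially on $\ca H$, and entirety of the matrix entries on $\ca H_\n{fin}$ via \rfeqnser{Aintoperators}{Dintoperators} — is sound. The paper instead works entrywise: expanding $[T_\lambda,T_\mu]$ and using $[A_\lambda,A_\mu]=[D_\lambda,D_\mu]=0$ from \rfc{Tcommrelcor}, it reduces the claim to the symmetry of $[A_\lambda,D_\mu]$ under $\lambda\leftrightarrow\mu$, which is manifest from the right-hand side of \rfeqn{ADcommrel} since both the prefactor $-\ii\gamma^2/(\lambda-\mu)$ and the bracket $B_\lambda C_\mu-B_\mu C_\lambda$ change sign. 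This is essentially your parenthetical ``sanity check'' (though you need not invoke the $BC$-relation at all — the antisymmetry of \rfeqn{ADcommrel} suffices), and it buys a shorter argument that never meets the singular $\ca R$-matrix, at the cost of presupposing that the entrywise relations of \rfc{Tcommrelcor} have already been extracted from \rf{ybeRT}. Your trace argument is the more structural one and generalizes to other auxiliary spaces, but it does genuinely need the invertibility caveat and the continuation step you supplied.
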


\begin{proof}
We may assume $\lambda \ne \mu$.
In view of $[A_\lambda,A_\mu ]=[D_\lambda ,D_\mu ]=0$, it is sufficient to prove that $[A_\lambda,D_\mu ]=-[D_\lambda ,A_\mu ]=[A_\mu ,D_\lambda ]$, i.e. that $[A_\lambda,D_\mu ]$ is invariant under $\lambda \leftrightarrow \mu$. This follows immediately from \rfeqn{ADcommrel}. 
\end{proof}

Let $\mu \in \bR$. From \rfc{Tselfadjoint} and \rft{commuteT} it follows that we have a commuting family of self-adjoint operators $T_\mu$, they have common eigenfunctions, necessarily independent of the spectral parameter $\mu$. 
We will now construct these eigenfunctions.

\section{The Bethe wavefunction and the algebraic Bethe ansatz}

We start this process by using the operators $B_\lambda $ to construct from an element of $\ca H_0$ an element of $\ca H_N$, i.e. a wavefunction representing an $N$-particle state. 

\subsection{The Bethe wavefunction}

Introduce the set of \emph{complex regular vectors}
\[ \bC^N_\n{reg} = \set{\bm \lambda \in \bC^N}{\lambda_j \ne \lambda_k \n{ if } j \ne k} \]%
\nc{rccw}{$\bC^N_\n{reg}$}{Set of complex regular vectors}%
and recall the pseudo-vacuum $\Vac = 1\in \ca H_0 \cong \bC$.
\begin{defn}[Bethe wavefunction] \label{Bethewavefnintops}
Let $\bm \lambda = (\lambda_1, \ldots, \lambda_N) \in \bC^N_\n{reg}$.
We define the \emph{Bethe wavefunction} as
\begin{equation} \label{Psi}
\Psi_{\bm \lambda} = \left( \prod_{j=1}^N B_{\lambda_j} \right) \Vac.
\end{equation}
\end{defn}

\begin{rem}
Note that the Bethe wavefunction is an $N$-particle wavefunction since we have applied the QNLS particle creation operator $N$ times to the pseudovacuum $\Vac$.
By virtue of the relation $[B_\lambda ,B_\mu ]=0$, the order of the product of the $B$-operators does not matter and hence for any $w \in S_N$, $\Psi_{\bm \lambda}= \Psi_{w\bm \lambda}$, so $\Psi_{\bm \lambda}$ is a wavefunction describing a system of bosons. 
If $J$ is bounded, we will see that further conditions need to be imposed on $\bm \lambda$ in order to make $\Psi_{\bm \lambda}$ an eigenfunction of $T_\mu$.
The wavefunctions $\Psi_{\bm \lambda}$ will turn out to be nontrivial linear combinations of plane waves with wavenumbers given by permutations of $\bm \lambda$; the interaction between the particles is encoded in the combinatorial nature of the coefficients of the Bethe wavefunction.
\end{rem}

Let $(\lambda_1,\ldots,\lambda_{N+1} )\in \bC^{N+1}_\n{reg}$.
Immediately from its definition we have the following key property of the Bethe wavefunction $\Psi_{\bm \lambda}$:
\begin{equation}  \label{Bethewavefnrecursion}
\Psi_{\lambda_1,\ldots,\lambda_{N+1}} = B_{\lambda_{N+1}}\Psi_{\lambda_1,\ldots,\lambda_N}.
\end{equation}

\subsection{The Bethe ansatz equations}

Let $L \in \bR_{>0}$ and $\gamma \in \bR$.
The \emph{Bethe ansatz equations} (BAEs) for $\bm \lambda \in \bC^N$ are
\begin{equation} \label{BAE}  \e^{\ii \lambda_j L} = \prod_{k \ne j} \frac{\lambda_j-\lambda_k+\ii \gamma}{\lambda_j-\lambda_k-\ii \gamma}, \qquad \n{for } j=1,\ldots,N. \end{equation}

Let $\bm \lambda \in \bC^N$ be a solution of the BAEs \rf{BAE}.
Then it is easily checked that $\bm \lambda$ satisfies the following conditions.
\begin{enumerate}
\item For any $w \in S_N$, $ w \bm \lambda$ is a solution. 
\item For any integer $m$, $\bm \lambda+ \frac{2 \pi}{L} m \bm 1$ is also a solution, where $\bm 1 = (1,\ldots, 1) \in \bC^N$.
\item The quantity $\sum_{j=1}^N \lambda_j$ is an integer multiple of $2 \pi/L$.
\end{enumerate}

\begin{rem}
The latter statement is physically relevant as it indicates that the total momentum is quantized.
\end{rem}

\begin{lem} \emph{\cite{KorepinBI}}
Let $\gamma > 0$. 
All solutions $\bm \lambda$ of the BAEs \rf{BAE} are in $\bR^N$. 
The set of solutions is injectively parametrized by $\bZ^N$ and hence there are infinitely many solutions. 
\end{lem}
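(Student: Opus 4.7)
The plan is to first prove reality of all BAE solutions via a maximum-modulus argument on the imaginary parts, then logarithmize the BAEs to obtain a real system, and finally exhibit this real system as the critical-point equations of a strictly convex, coercive Yang--Yang functional, whose unique minimiser gives the promised bijection with $\bZ^N$.

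\emph{Reality.} For a purported solution $\bm\lambda \in \bC^N$, pick $j$ so that $\Im\lambda_j$ is maximal. The modulus of the left-hand side of the $j$-th BAE is $\e^{-L\Im\lambda_j}$, while for each $k \ne j$
\[
\left|\frac{\lambda_j-\lambda_k+\ii\gamma}{\lambda_j-\lambda_k-\ii\gamma}\right|^2 = \frac{(\mathrm{Re}(\lambda_j-\lambda_k))^2+(\Im(\lambda_j-\lambda_k)+\gamma)^2}{(\mathrm{Re}(\lambda_j-\lambda_k))^2+(\Im(\lambda_j-\lambda_k)-\gamma)^2} \geq 1,
\]
because $\gamma>0$ and $\Im(\lambda_j-\lambda_k) \geq 0$ by the choice of $j$. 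Hence $\e^{-L\Im\lambda_j} \geq 1$, forcing $\Im\lambda_j \leq 0$. Repeating the argument at an index that minimises $\Im\lambda_j$ gives $\Im\lambda_j \geq 0$, so every component is real.

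\emph{Logarithmic BAEs.} For real $x$ and $\gamma>0$ one checks directly that $\frac{x+\ii\gamma}{x-\ii\gamma} = \e^{\ii(\pi - 2\arctan(x/\gamma))}$. Taking principal logarithms of the BAEs, absorbing the $(N-1)\pi$ constant, and writing $\phi(x)=2\arctan(x/\gamma)$, the system \rf{BAE} becomes
\[
L\lambda_j + \sum_{k \ne j}\phi(\lambda_j-\lambda_k) = 2\pi I_j, \qquad j=1,\ldots,N,
\]
where $I_j \in \bZ$ for $N$ odd and $I_j \in \bZ+\tfrac{1}{2}$ for $N$ even. Either index set is in canonical bijection with $\bZ^N$, so it suffices to prove that for each $\bm I$ this system admits a unique real solution.

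\emph{Uniqueness via convexity.} Introduce the Yang--Yang functional
\[
Y_{\bm I}(\bm\lambda) = \frac{L}{2}\sum_{j=1}^N \lambda_j^2 - 2\pi \sum_{j=1}^N I_j \lambda_j + \sum_{1\leq j<k\leq N}\Phi(\lambda_j-\lambda_k), \qquad \Phi(x)=\int_0^x \phi(t)\dd t,
\]
whose gradient equations are exactly the logarithmic BAEs. A short computation gives $\nabla^2 Y_{\bm I}=L \cdot \mathrm{Id} + M$, where $M$ is the graph Laplacian of the weighted complete graph on $\{1,\ldots,N\}$ with strictly positive edge weights $\phi'(\lambda_j-\lambda_k)=2\gamma/((\lambda_j-\lambda_k)^2+\gamma^2)$; hence $M \succeq 0$ and $\nabla^2 Y_{\bm I} \succ 0$. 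Strict convexity together with the coercivity bound $|\Phi(x)|\leq \pi|x|$ (which is dominated by the quadratic term) guarantees a unique global minimiser, hence a unique critical point, for every $\bm I$. This furnishes the injection $\bZ^N \hookrightarrow \{\text{solutions}\}$, so in particular there are infinitely many solutions. The only mildly delicate step is fixing the branch of the logarithm in the passage to the real system; once done, the rest is essentially the classical Yang--Yang convexity argument.
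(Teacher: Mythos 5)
Your proof is correct and follows exactly the route the paper outlines (and defers to \cite{KorepinBI} for): a maximum/minimum estimate on the imaginary parts for reality, passage to the logarithmic form with (half-)integer quantum numbers, and strict convexity plus coercivity of the Yang--Yang action for existence and uniqueness. You have simply filled in the details that the paper's outline omits.
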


\begin{proof}[Outline of proof]
We refer to \cite{KorepinBI} for the detailed proof. 
The reality of the solutions of the BAEs can be obtained by a straightforward estimate on the imaginary parts of the $\lambda_j$, where one uses that $\gamma$ is positive.
The existence of solutions, as well as the parametrization, follows by recasting \rfeqn{BAE} in logarithmic form; this gives
\[ L \lambda_j + \sum_{k=1}^N 2 \tan^{-1}\left(\frac{\lambda_j-\lambda_k}{\gamma}\right) = 2 \pi n_j, \qquad \n{for } j=1,\ldots,N, \]
where the $n_j$ are integers if $N$ is odd and half-integers if $N$ is even.
One then makes use of the fact that these equations also form the extremum condition for the so-called \emph{Yang-Yang action}
\[ S = \frac{L}{2} \sum_{j=1}^N \lambda_j^2 - 2 \pi \sum_{j=1}^N n_j \lambda_j + \frac{1}{2} \sum_{1 \leq j \ne k \leq N} \int_0^{\lambda_j-\lambda_k} \dd \mu \tan^{-1}\left(\frac{\mu}{\gamma}\right). \qedhere\]
\end{proof}

Let $\bm \lambda \in \bC^N_\n{reg}$.
Introduce the short-hand notation
\begin{equation} \label{BAEv} \tau^\pm_\mu(\bm \lambda) =   \prod_{j=1}^N \frac{\lambda_j-\mu \mp \ii \gamma}{\lambda_j-\mu}; \end{equation}%
\nc{gtl}{$\tau_{\mu}^{\pm}(\bm \lambda)$}{Shorthand for $\prod_{j=1}^N \frac{\lambda_j-\mu \mp \ii \gamma}{\lambda_j-\mu}$ \nomrefeqpage}%
note that $\tau_\mu^\pm \in \ca C^\omega((\bC \setminus \{\mu\})^N)$ is $S_N$-invariant.
\begin{lem}\label{BAEvariantslem}
Let $L \in \bR_{>0}$, $\gamma \in \bR$ and $\bm \lambda \in \bC^N_\n{reg}$.
The BAEs \rf{BAE} is equivalent to the conditions
\begin{align} 
\e^{\ii \lambda_j L} &= \frac{\tau^+_{\lambda_j}(\bm \lambda_{\hat \jmath})}{\tau^-_{\lambda_j}(\bm \lambda_{\hat \jmath})}, && \n{for } j=1,\ldots,N, \label{BAEvariant1}  \\
\tau^+_{\lambda_j}(\bm \lambda_{\hat \jmath}) &= \e^{\ii \lambda_j L} \tau^-_{\lambda_j}(\bm \lambda_{\hat \jmath}), && \n{for }j=1,\ldots,N. \label{BAEvariant} 
\end{align}
\end{lem}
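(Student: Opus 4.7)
The plan is to verify the two equivalences by elementary algebraic manipulation; no hard analysis is involved, but I need to be careful about the domains where each expression is well-defined.

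First, I would unpack the definition of $\tau^\pm_\mu$ at $\mu = \lambda_j$ applied to the vector $\bm\lambda_{\hat\jmath}$. Directly from \rf{BAEv},
\[ \tau^+_{\lambda_j}(\bm\lambda_{\hat\jmath}) = \prod_{k \ne j} \frac{\lambda_k - \lambda_j - \ii\gamma}{\lambda_k - \lambda_j}, \qquad \tau^-_{\lambda_j}(\bm\lambda_{\hat\jmath}) = \prod_{k \ne j} \frac{\lambda_k - \lambda_j + \ii\gamma}{\lambda_k - \lambda_j}. \]
Since $\bm\lambda \in \bC^N_\n{reg}$, the factors $\lambda_k - \lambda_j$ are all nonzero, so both products are well-defined complex numbers.

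Next, for the equivalence of \rf{BAE} and \rf{BAEvariant1}, I would form the ratio and cancel the common factor $\lambda_k - \lambda_j$ in numerator and denominator:
\[ \frac{\tau^+_{\lambda_j}(\bm\lambda_{\hat\jmath})}{\tau^-_{\lambda_j}(\bm\lambda_{\hat\jmath})} = \prod_{k \ne j} \frac{\lambda_k - \lambda_j - \ii\gamma}{\lambda_k - \lambda_j + \ii\gamma} = \prod_{k \ne j} \frac{-(\lambda_j - \lambda_k + \ii\gamma)}{-(\lambda_j - \lambda_k - \ii\gamma)} = \prod_{k \ne j} \frac{\lambda_j - \lambda_k + \ii\gamma}{\lambda_j - \lambda_k - \ii\gamma}, \]
which is precisely the right-hand side of \rf{BAE}. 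Hence \rf{BAE} and \rf{BAEvariant1} are identical as equations, subject to both sides being well-defined (which they are whenever the denominators do not vanish).

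Finally, for the equivalence of \rf{BAEvariant1} and \rf{BAEvariant}, one direction is immediate: multiply both sides of \rf{BAEvariant1} by $\tau^-_{\lambda_j}(\bm\lambda_{\hat\jmath})$ to obtain \rf{BAEvariant}. Conversely, starting from \rf{BAEvariant}, I would divide through by $\tau^-_{\lambda_j}(\bm\lambda_{\hat\jmath})$; this is permissible whenever $\tau^-_{\lambda_j}(\bm\lambda_{\hat\jmath}) \ne 0$. The only subtle point to mention is the degenerate case $\tau^-_{\lambda_j}(\bm\lambda_{\hat\jmath}) = 0$, which happens precisely when some $\lambda_k = \lambda_j - \ii\gamma$; in this string-like situation \rf{BAEvariant} forces $\tau^+_{\lambda_j}(\bm\lambda_{\hat\jmath}) = 0$ as well, so consistency with \rf{BAEvariant1} is preserved in the projective sense. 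Since the statement is about $\bm\lambda \in \bC^N_\n{reg}$ rather than excluding such strings, I would briefly note that \rf{BAEvariant} is the cleaner polynomial form, and that the three versions agree wherever the ratios make sense. No real obstacle arises — the lemma is purely bookkeeping — so the main thing to get right is the sign manipulation in the ratio and the mention of the cleared-denominator form as the most robust.
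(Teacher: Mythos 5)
Your first step (the sign manipulation showing \rf{BAE} and \rf{BAEvariant1} are literally the same equation on $\bC^N_\n{reg}$) and the direction \rf{BAEvariant1} $\Rightarrow$ \rf{BAEvariant} match the paper. The gap is in the converse direction \rf{BAEvariant} $\Rightarrow$ \rf{BAEvariant1}. You correctly identify the problematic case $\tau^-_{\lambda_j}(\bm \lambda_{\hat \jmath}) = 0$, i.e.\ $\lambda_j - \lambda_k = \ii\gamma$ for some $k$, but then you dismiss it by saying that \rf{BAEvariant} forces $\tau^+_{\lambda_j}(\bm \lambda_{\hat \jmath}) = 0$ too and that ``consistency is preserved in the projective sense.'' That is not a proof of the claimed equivalence: if both sides of \rf{BAEvariant} vanish for some $j$, then \rf{BAEvariant} holds while \rf{BAEvariant1} is undefined (and \rf{BAE} fails), so the three conditions would genuinely not be equivalent. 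The lemma asserts a true equivalence, so the degenerate case must be shown to be \emph{impossible} under \rf{BAEvariant}, not merely noted.

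The paper closes exactly this hole (for $\gamma \ne 0$; the case $\gamma = 0$ is trivial) with a pigeonhole argument: suppose $\lambda_{j_1} - \lambda_{j_2} = \ii\gamma$ for some pair. Then $\tau^-_{\lambda_{j_1}}(\bm\lambda_{\widehat{\jmath_1}}) = 0$, so \rf{BAEvariant} gives $\tau^+_{\lambda_{j_1}}(\bm\lambda_{\widehat{\jmath_1}}) = 0$, which produces a new index $j_0$ with $\lambda_{j_0} - \lambda_{j_1} = \ii\gamma$. Iterating yields an infinite sequence of indices $(j_l)$ with $\lambda_{j_l} - \lambda_{j_{l+1}} = \ii\gamma$; since there are only $N$ indices, some $j_l = j_m$ with $l < m$, whence $0 = \lambda_{j_l} - \lambda_{j_m} = (m-l)\ii\gamma$, contradicting $\gamma \ne 0$. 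This chain argument is the actual content of the lemma and is what your proposal is missing; without it the equivalence of \rf{BAEvariant} with the other two conditions is not established.
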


\begin{proof}
For the case $\gamma=0$ the claims are immediate; suppose now that $\gamma \ne 0$.
The equivalence of the BAES with \rfeqn{BAEvariant1} follows immediately from the condition that $\bm \lambda \in \bC^N_\n{reg}$.
As for \rfeqn{BAEvariant}, it is certainly necessary for \rfeqn{BAEvariant1}; it is also sufficient since $\lambda_j-\lambda_k = \ii \gamma$ cannot occur for any pair $j,k$, as we prove now by supposing the opposite. 
Starting with such a pair, using \rfeqn{BAEvariant} one constructs a sequence $(j_l)_{l \geq 1}$ where for all $l$, $j_l \in \{ 1, \ldots, N \}$ and $\lambda_{j_l}-\lambda_{j_{l+1}} = \ii \gamma$. From the existence of $l<m$ with $j_l=j_m$ we obtain $0 = \lambda_{j_l}-\lambda_{j_m} = (m-l) \ii \gamma$, in contradiction with $\gamma \ne 0$.
\end{proof}

\subsection{The algebraic Bethe ansatz}

The central idea of the ABA is to impose the Bethe ansatz equations on $\bm \lambda$ to obtain an eigenfunction of the transfer matrix \cite{KorepinBI}. The following result is essential.
\begin{prop} \label{APsiDPsi}
Let $L \in \bR_{>0}$, $\gamma \in \bR$ and $(\bm \lambda,\mu) \in \bC^{N+1}_\n{reg}$.
Then
\begin{align}
A_\mu  \Psi_{\bm \lambda} &= \tau^+_\mu(\bm \lambda) \e^{-\ii \mu L/2} \Psi_{\bm \lambda} +\sum_{j=1}^N \tau^+_{\lambda_j}(\bm \lambda_{\hat \jmath}) \frac{\ii \gamma}{\lambda_j-\mu} \e^{-\ii \lambda_j L/2} \Psi_{\bm \lambda_{\hat \jmath},\mu}, \label{APsi} \\
D_\mu  \Psi_{\bm \lambda} &= \tau^-_\mu(\bm \lambda) \e^{\ii \mu L/2} \Psi_{\bm \lambda} - \sum_{j=1}^N \tau^-_{\lambda_j}(\bm \lambda_{\hat \jmath}) \frac{\ii \gamma}{\lambda_j-\mu} \e^{\ii \lambda_j L/2} \Psi_{\bm \lambda_{\hat \jmath},\mu}. \label{DPsi}
\end{align}
\end{prop}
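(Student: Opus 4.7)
The plan is to establish both \rf{APsi} and \rf{DPsi} by induction on $N$. For the base case $N=0$, the sum is empty and $\tau^\pm_\mu(\emptyset)=1$, so the statements reduce to $A_\mu\Vac = \e^{-\ii\mu L/2}\Vac$ and $D_\mu\Vac = \e^{\ii\mu L/2}\Vac$, both given by \rfp{ABCDproperties}\rf{ABCDvacuum}.

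For the inductive step of \rf{APsi}, I would write $\Psi_{\bm\lambda} = B_{\lambda_N}\Psi_{\bm\lambda_{\hat N}}$ and rearrange the $AB$-relation \rf{ABcommrel} (with $\lambda \leftrightarrow \mu$ set to $\mu, \lambda_N$) to
\[ A_\mu B_{\lambda_N} = \frac{\lambda_N-\mu-\ii\gamma}{\lambda_N-\mu} B_{\lambda_N} A_\mu + \frac{\ii\gamma}{\lambda_N-\mu} B_\mu A_{\lambda_N}. \]
Applying the inductive hypothesis separately to $A_\mu\Psi_{\bm\lambda_{\hat N}}$ and $A_{\lambda_N}\Psi_{\bm\lambda_{\hat N}}$, then using $[B_\lambda,B_\mu]=0$ to absorb the external $B_{\lambda_N}$ or $B_\mu$ (so that $B_{\lambda_N}\Psi_{\bm\lambda_{\hat N}}=\Psi_{\bm\lambda}$ and $B_{\lambda_N}\Psi_{\bm\lambda_{\hat \jmath,\hat N},\mu} = B_\mu\Psi_{\bm\lambda_{\hat \jmath,\hat N},\lambda_N} = \Psi_{\bm\lambda_{\hat \jmath},\mu}$), produces contributions to $\Psi_{\bm\lambda}$ and to each $\Psi_{\bm\lambda_{\hat \jmath},\mu}$ for $j=1,\ldots,N$.

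The coefficient of $\Psi_{\bm\lambda}$ is immediately $\tfrac{\lambda_N-\mu-\ii\gamma}{\lambda_N-\mu}\tau^+_\mu(\bm\lambda_{\hat N})\e^{-\ii\mu L/2} = \tau^+_\mu(\bm\lambda)\e^{-\ii\mu L/2}$, and the $j=N$ summand of \rf{APsi} comes directly from the $A_{\lambda_N}\Vac = \e^{-\ii\lambda_N L/2}\Vac$ contribution. The only nontrivial step is to check, for each $j<N$, that the two contributions (one from the diagonal branch scaling $A_\mu$-induced terms by $\tfrac{\lambda_N-\mu-\ii\gamma}{\lambda_N-\mu}$, one from the off-diagonal branch scaling $A_{\lambda_N}$-induced terms by $\tfrac{\ii\gamma}{\lambda_N-\mu}\cdot\tfrac{\ii\gamma}{\lambda_j-\lambda_N}$) combine via the elementary rational identity
\[ \frac{\lambda_N-\mu-\ii\gamma}{\lambda_N-\mu}\cdot\frac{\ii\gamma}{\lambda_j-\mu} + \frac{\ii\gamma}{\lambda_N-\mu}\cdot\frac{\ii\gamma}{\lambda_j-\lambda_N} = \frac{\lambda_j-\lambda_N+\ii\gamma}{\lambda_j-\lambda_N}\cdot\frac{\ii\gamma}{\lambda_j-\mu} \]
to yield exactly the extra factor needed to promote $\tau^+_{\lambda_j}(\bm\lambda_{\hat \jmath,\hat N})$ to $\tau^+_{\lambda_j}(\bm\lambda_{\hat \jmath})$. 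The regularity hypothesis $(\bm\lambda,\mu)\in\bC^{N+1}_\n{reg}$ guarantees that all denominators are nonzero.

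The proof of \rf{DPsi} is structurally identical, using instead the $DB$-relation \rf{DBcommrel}, which rearranges as $D_\mu B_{\lambda_N} = \tfrac{\lambda_N-\mu+\ii\gamma}{\lambda_N-\mu}B_{\lambda_N}D_\mu - \tfrac{\ii\gamma}{\lambda_N-\mu}B_\mu D_{\lambda_N}$; the coefficient change $\ii\gamma \mapsto -\ii\gamma$ tracks the difference between $\tau^+_\mu$ and $\tau^-_\mu$. The main obstacle is purely bookkeeping: the inductive step produces two instances of the IH giving $2N$ terms, which one must correctly repackage into the $N+1$ terms of the target formula, and this repackaging reduces, as above, to a single rational-function identity.
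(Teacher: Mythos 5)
Your proof is correct and follows essentially the same route as the paper's: induction on $N$, peeling off $B_{\lambda_N}$ via the $AB$- (resp. $DB$-) relation, applying the inductive hypothesis to both branches, and recombining the $j<N$ contributions through exactly the rational identity you state (which is the paper's identity $\frac{\lambda_N-\mu-\ii\gamma}{\lambda_N-\mu}\frac{\ii\gamma}{\lambda_j-\mu}+\frac{\ii\gamma}{\lambda_N-\mu}\frac{\ii\gamma}{\lambda_j-\lambda_N}=\frac{\lambda_N-\lambda_j-\ii\gamma}{\lambda_N-\lambda_j}\frac{\ii\gamma}{\lambda_j-\mu}$ in equivalent form). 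The only cosmetic slip is attributing the $j=N$ summand to ``$A_{\lambda_N}\Vac$'' rather than to the leading term of the inductive hypothesis applied to $A_{\lambda_N}\Psi_{\bm\lambda_{\hat N}}$, but the mechanism you describe is the right one.
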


\begin{proof}
By induction; for $N=0$ the statements are trivially true. 
We will complete the proof by showing that the statement for $A_\mu $ holds ``as is'' if the statement holds with $N \to N-1$.
Writing $\bm \lambda' = (\lambda_1,\ldots,\lambda_{N-1})$, we have that $\Psi_{\bm \lambda} = B_{\lambda_N} \Psi_{\bm \lambda'}$.
From \rfeqn{ABcommrel} we have
\[ A_\mu  \Psi_{\bm \lambda} = \frac{\lambda_N-\mu-\ii \gamma}{\lambda_N-\mu} B_{\lambda_N}A_\mu \Psi_{\bm \lambda'} + \frac{\ii \gamma}{\lambda_N-\mu} B_\mu A_{\lambda_N} \Psi_{\bm \lambda'}. \]
The induction hypothesis yields
\begin{align*}
A_\mu  \Psi_{\bm \lambda} &= \frac{\lambda_N-\mu-\ii \gamma}{\lambda_N-\mu} B_{\lambda_N} \left( \tau^+_\mu(\bm \lambda') \e^{-\ii \mu L/2} \Psi_{\bm \lambda'} + \sum_{j=1}^{N-1} \tau^+_{\lambda_j}(\bm \lambda'_{\hat \jmath}) \frac{\ii \gamma}{\lambda_j-\mu} \e^{-\ii \lambda_j L/2} \Psi_{\bm \lambda'_{\hat \jmath},\mu} \right) + \\
& \qquad + \frac{\ii \gamma}{\lambda_N-\mu} B_\mu  \left( \tau^+_{\lambda_N}(\bm \lambda') \e^{-\ii \lambda_N L/2} \Psi_{\bm \lambda'} + \sum_{j=1}^{N-1} \tau^+_{\lambda_j}(\bm \lambda_{\hat \jmath}) \frac{\ii \gamma}{\lambda_j-\lambda_N} \e^{-\ii \lambda_j L/2}\Psi_{\bm \lambda_{\hat \jmath}} \right) \displaybreak[2] \\
&= \tau^+_\mu(\bm \lambda) \e^{-\ii \mu L/2} \Psi_{\bm \lambda} + \tau^+_{\lambda_N}(\bm \lambda') \frac{\ii \gamma}{\lambda_N-\mu} \e^{-\ii \lambda_N L/2}\Psi_{\bm \lambda',\mu} + \\
& \qquad + \sum_{j=1}^{N-1} \tau^+_{\lambda_j}(\bm \lambda'_{\hat \jmath}) \left( \frac{\lambda_N-\mu-\ii \gamma}{\lambda_N-\mu} \frac{\ii \gamma}{\lambda_j-\mu} + \frac{\ii \gamma}{\lambda_N-\mu}\frac{\ii \gamma}{\lambda_j-\lambda_N} \right) \e^{-\ii \lambda_j L/2} \Psi_{\bm \lambda_{\hat \jmath},\mu}.
\end{align*}
Using that
\[ \frac{\lambda_N-\mu-\ii \gamma}{\lambda_N-\mu} \frac{\ii \gamma}{\lambda_j-\mu} + \frac{\ii \gamma}{\lambda_N-\mu}\frac{\ii \gamma}{\lambda_j-\lambda_N} = \frac{\lambda_N - \lambda_j - \ii \gamma}{\lambda_N - \lambda_j} \frac{\ii \gamma}{\lambda_j- \mu} \]
we obtain \rfeqn{APsi}.
The proof for \rfeqn{DPsi} goes analogously, using the commutation relation \rfeqn{DBcommrel} instead of \rfeqn{ABcommrel}.
\end{proof}

Now we can state and prove
\begin{thm}[Spectrum of the transfer matrix] \label{ABA}
Let $L \in \bR_{>0}$, $\gamma \in \bR$ and $(\bm \lambda,\mu) \in \bC^{N+1}_\n{reg}$.
Then $\Psi_{\bm \lambda}$ is an eigenfunction of the transfer matrix $T_\mu$ with eigenvalue 
\begin{equation} \label{BAEv2} \tau_\mu(\bm \lambda)=\e^{-\ii \mu L/2}\tau^+_\mu(\bm \lambda)+ \e^{\ii \mu L/2} \tau^-_\mu(\bm \lambda) \end{equation}%
\nc{gtl}{$\tau_{\mu}(\bm \lambda)$}{Eigenvalue of $T_\mu$ \nomrefeqpage}%
precisely if $\bm \lambda$ satisfies the BAEs \rf{BAE}.
\end{thm}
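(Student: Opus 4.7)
The plan is to apply \rfp{APsiDPsi} directly to $T_\mu \Psi_{\bm\lambda} = A_\mu\Psi_{\bm\lambda} + D_\mu\Psi_{\bm\lambda}$ and separate the ``wanted'' and ``unwanted'' contributions. The diagonal pieces from \rfeqn{APsi} and \rfeqn{DPsi} combine to give exactly
\[
\bigl(\e^{-\ii \mu L/2}\tau^+_\mu(\bm\lambda) + \e^{\ii \mu L/2}\tau^-_\mu(\bm\lambda)\bigr)\Psi_{\bm\lambda} = \tau_\mu(\bm\lambda)\Psi_{\bm\lambda},
\]
which is the desired eigenvalue times $\Psi_{\bm\lambda}$. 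The remaining ``unwanted'' terms form a single sum
\[
\sum_{j=1}^N \frac{\ii \gamma}{\lambda_j-\mu}\bigl(\e^{-\ii\lambda_j L/2}\tau^+_{\lambda_j}(\bm\lambda_{\hat\jmath}) - \e^{\ii\lambda_j L/2}\tau^-_{\lambda_j}(\bm\lambda_{\hat\jmath})\bigr)\Psi_{\bm\lambda_{\hat\jmath},\mu},
\]
and the entire question reduces to whether this sum vanishes.

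The sufficiency (``if'') direction is immediate: \rfl{BAEvariantslem}, specifically \rfeqn{BAEvariant}, says that the BAEs are equivalent to $\tau^+_{\lambda_j}(\bm\lambda_{\hat\jmath}) = \e^{\ii\lambda_j L}\tau^-_{\lambda_j}(\bm\lambda_{\hat\jmath})$ for each $j$, which after multiplying by $\e^{-\ii\lambda_j L/2}$ kills every bracketed coefficient above. Hence $T_\mu\Psi_{\bm\lambda} = \tau_\mu(\bm\lambda)\Psi_{\bm\lambda}$, as claimed. Note that the regularity assumption $(\bm\lambda,\mu)\in\bC^{N+1}_\n{reg}$ ensures all the prefactors $\frac{\ii\gamma}{\lambda_j-\mu}$ and the $\tau^\pm$ expressions are well defined.

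For the necessity (``only if'') direction, the unwanted sum must vanish as a vector in $\ca H_{N}$. The main obstacle is then to establish that the $N$ wavefunctions $\Psi_{\bm\lambda_{\hat\jmath},\mu}$ for $j=1,\ldots,N$ are linearly independent, so that each coefficient in the sum must individually be zero and hence each BAE must hold. I would argue this by examining the plane-wave expansion implicit in \rfeqn{Psi}: each $\Psi_{\bm\lambda_{\hat\jmath},\mu}$ is, on the fundamental alcove, a linear combination of exponentials $\e^{\ii\inner{w(\bm\lambda_{\hat\jmath},\mu)}{\bm x}}$ with $w \in S_N$, and since $(\bm\lambda,\mu) \in \bC^{N+1}_\n{reg}$ the unordered multisets $\{\bm\lambda_{\hat\jmath},\mu\}$ are pairwise distinct, so no common permutation-orbit of wavenumbers appears in two different $\Psi_{\bm\lambda_{\hat\jmath},\mu}$. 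Independence of the relevant plane waves then forces each bracketed coefficient to vanish, which by \rfl{BAEvariantslem} is exactly the BAE for $\lambda_j$. This linear-independence step is the only non-formal input; everything else is algebraic bookkeeping on \rfp{APsiDPsi}.
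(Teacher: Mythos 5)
Your proof is correct and follows essentially the same route as the paper: apply \rfp{APsiDPsi} to $T_\mu = A_\mu + D_\mu$, read off the eigenvalue from the diagonal terms, and reduce both directions to the vanishing of the coefficients of $\Psi_{\bm\lambda_{\hat\jmath},\mu}$ via \rfl{BAEvariantslem}. The only difference is that you explicitly justify the necessity direction by the linear independence of the $\Psi_{\bm\lambda_{\hat\jmath},\mu}$ (via the pairwise-distinct wavenumber multisets), a point the paper's proof leaves implicit.
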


\begin{proof}
\rfp{APsiDPsi} allows us to write down a formula for $(A_\mu +D_\mu )\Psi_{\bm \lambda}$. Consequently the necessary and sufficient condition for $\Psi_{\bm \lambda}$ being an eigenfunction is that for all $j=1,\ldots,N$,
$\tau^+_{\lambda_j}(\bm \lambda_{\hat \jmath}) = \tau^-_{\lambda_j}(\bm \lambda_{\hat \jmath}) \e^{\ii \lambda_j L}$.
By virtue of \rfl{BAEvariantslem}, this is equivalent to the system of BAEs \rf{BAE}.
\end{proof} 

\begin{rem} \label{Pauli}
The restriction $\bm \lambda \in \bC^N_\n{reg}$ in \rft{ABA} is important: it has been shown (\cite[Section~VII.4]{KorepinBI} and \cite{IzerginKorepin}) that a Pauli principle holds for the QNLS model (and, more generally, for any one-dimensional interacting quantum system): one can show that eigenfunctions $\Psi_{\bm \lambda}$ of $T_\mu$ where some of the $\lambda_j$ coincide do not exist.
\end{rem}

\begin{exm}[$N=2$; the algebraic Bethe ansatz]
We recall from \rfex{Bethewavefn2} that
\[ \Psi_{\lambda_1,\lambda_2}|_{\bR^2_+} = \frac{1}{2} \left[ \frac{\lambda_1-\lambda_2-\ii \gamma}{\lambda_1-\lambda_2} \e^{\ii(\lambda_1 x_1 + \lambda_2 x_2)} + \frac{\lambda_1-\lambda_2+\ii \gamma}{\lambda_1-\lambda_2} \e^{\ii(\lambda_2 x_1 + \lambda_1 x_2)} \right] \]
and the reader should check that this equals $B_{\lambda_2} B_{\lambda_1} \Psi_\emptyset$.
The statement in \rfp{APsiDPsi} for the case $N=2$ reads
\begin{align*} 
A_\mu \Psi_{\lambda_1,\lambda_2} &= \frac{\lambda_1-\mu-\ii \gamma}{\lambda_1-\mu} \frac{\lambda_2-\mu-\ii \gamma}{\lambda_2-\mu} \e^{-\ii \mu L/2} \Psi_{\lambda_1,\lambda_2} + \\
& \qquad + \frac{\lambda_1-\lambda_2+\ii \gamma}{\lambda_1-\lambda_2} \frac{\ii \gamma}{\lambda_1-\mu} \e^{-\ii \lambda_1 L/2} \Psi_{\lambda_2,\mu} +  \frac{\lambda_1-\lambda_2-\ii \gamma}{\lambda_1-\lambda_2} \frac{\ii \gamma}{\lambda_2-\mu} \e^{-\ii \lambda_2 L/2}  \Psi_{\lambda_1,\mu}  \\
D_\mu \Psi_{\lambda_1,\lambda_2} &= \frac{\lambda_1-\mu+\ii \gamma}{\lambda_1-\mu}\frac{\lambda_2-\mu+\ii \gamma}{\lambda_2-\mu} \e^{\ii \mu L/2} \Psi_{\lambda_1,\lambda_2} + \\
& \qquad - \frac{\lambda_1-\lambda_2-\ii \gamma}{\lambda_1-\lambda_2}\frac{\ii \gamma}{\lambda_1-\mu} \e^{\ii \lambda_1 L/2} \Psi_{\lambda_2,\mu} - \frac{\lambda_1-\lambda_2+\ii \gamma}{\lambda_1-\lambda_2} \frac{\ii \gamma}{\lambda_2-\mu} \e^{\ii \lambda_2 L/2} \Psi_{\lambda_1,\mu}.
\end{align*}
In order for $\Psi_{\lambda_1,\lambda_2}$ to be an eigenfunction of $T_\lambda = A_\lambda+D_\lambda$ it is clear that it is necessary and sufficient that
\begin{align*} 
\frac{\lambda_1-\lambda_2+\ii \gamma}{\lambda_1-\lambda_2} \e^{-\ii \lambda_1 L/2} &= \frac{\lambda_1-\lambda_2-\ii \gamma}{\lambda_1-\lambda_2} \e^{\ii \lambda_1 L/2} \\
\frac{\lambda_1-\lambda_2-\ii \gamma}{\lambda_1-\lambda_2} \e^{-\ii \lambda_2 L/2} &= \frac{\lambda_1-\lambda_2+\ii \gamma}{\lambda_1-\lambda_2} \e^{\ii \lambda_2 L/2},
\end{align*}
i.e. that
\[ \frac{\lambda_1-\lambda_2+\ii \gamma}{\lambda_1-\lambda_2-\ii \gamma} = \e^{\ii \lambda_1 L} = \e^{-\ii \lambda_2 L}, \]
since it cannot happen that $\lambda_1-\lambda_2=\pm \ii \gamma$ (this would lead to $2=0$).
In this case, the eigenvalue of $T_\lambda$ equals
\[ \tau_\mu(\lambda_1.\lambda_2) = \frac{\lambda_1-\mu-\ii \gamma}{\lambda_1-\mu} \frac{\lambda_2-\mu-\ii \gamma}{\lambda_2-\mu} \e^{-\ii \mu L/2}+ \frac{\lambda_1-\mu+\ii \gamma}{\lambda_1-\mu}\frac{\lambda_2-\mu+\ii \gamma}{\lambda_2-\mu} \e^{\ii \mu L/2}. \]
\end{exm}

\subsection{The transfer matrix eigenvalue}

Here we consider some properties of the function $\tau_\mu(\bm \lambda)$. 
First of all, note that if $\bm \lambda \in \bR^N$ (i.e. in particular if $\bm \lambda$ solves the BAEs \rf{BAE}) and $\mu \in \bR$, then $\e^{-\ii \mu L/2}\tau^+_\mu(\bm \lambda)$ and $\e^{\ii \mu L/2} \tau^-_\mu(\bm \lambda)$ are conjugate complex numbers, so that $\tau_\mu(\bm \lambda) \in \bR$, as expected for a self-adjoint operator $T_\mu$. 
Moreover, we can recover the eigenvalues of the QNLS integrals of motion $\sum_{j=1}^N \lambda_j^n$ by expanding $\tau_\mu(\bm \lambda)$ in powers of $\mu^{-1}$, analogously to \rfeqn{Texpansion}.
Using
\[ \frac{\lambda_j-\mu-\ii \gamma}{\lambda_j-\mu} = 1 + \frac{\ii \gamma}{\mu} \sum_{k \geq 0} \left( \frac{\lambda_j}{\mu} \right)^k, \qquad \log(1+x) = \sum_{l \geq 1} (-1)^{l-1}x^l/l \]
we obtain
\begin{align*} 
\lefteqn{\log\left(\e^{\ii \mu L/2} \tau_\mu(\bm \lambda)\right) \stackrel{\mu \to \ii\infty}{\sim} \log\left(\tau^+_\mu(\bm \lambda)\right) = \sum_{j=1}^N \log\left( 1 + \frac{\ii \gamma}{\mu} \sum_{k \geq 0} \left( \frac{\lambda_j}{\mu} \right)^k \right) = } \displaybreak[2] \\
\qquad &\stackrel{\mu \to \ii\infty}{\sim} \sum_{j=1}^N \sum_{l \geq 1} \frac{(-1)^{l-1}}{l} \left( \frac{\ii \gamma}{\mu} \right)^l \left( \sum_{k \geq 0} \left( \frac{\lambda_j}{\mu} \right)^k \right)^l \displaybreak[2]  \\
&\stackrel{\mu \to \ii\infty}{\sim}
\frac{\ii \gamma}{\mu}  p_0+\frac{\ii \gamma}{\mu^2} \left( p_1 - \frac{\ii \gamma}{2}  p_0\right) + \frac{\ii \gamma}{\mu^3} \left( p_2  - \ii \gamma p_1 -\frac{\gamma^2}{3} p_0 \right) + \ca O(\mu^{-4}),
\end{align*}
where $p_n=p_n(\bm \lambda)$.\\

Next, we study the analyticity of $\mu \mapsto \tau_\mu(\bm\lambda)$.
\begin{prop}
Let $L \in \bR_{>0}$, $\gamma \in \bR$ and $\bm \lambda \in \bC^{N}_\n{reg}$.
Assume that $\bm \lambda$ satisfies the BAEs \rf{BAE}.
Then $\mu \mapsto \tau_\mu(\bm\lambda)$ is analytic.
\end{prop}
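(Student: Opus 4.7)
The plan is to view $\mu \mapsto \tau_\mu(\bm\lambda)$ as a meromorphic function on $\bC$ and check that the BAEs force all its potential singularities to be removable. Inspecting the definition \rf{BAEv2}, the factors $\e^{\pm \ii \mu L/2}$ are entire and $\tau^\pm_\mu(\bm\lambda) = \prod_{j=1}^N \frac{\lambda_j - \mu \mp \ii\gamma}{\lambda_j - \mu}$ are rational in $\mu$ with at most simple poles, located at $\mu = \lambda_1, \ldots, \lambda_N$ (these are distinct because $\bm\lambda \in \bC^N_\n{reg}$). So it suffices to verify that each residue $\Res_{\mu=\lambda_j} \tau_\mu(\bm\lambda)$ vanishes.

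First I would compute the residues of $\tau^\pm_\mu(\bm\lambda)$ at $\mu = \lambda_j$. Isolating the $k=j$ factor and evaluating the remainder at $\mu = \lambda_j$, one obtains
\[ \Res_{\mu=\lambda_j} \tau^+_\mu(\bm\lambda) = \ii \gamma \, \tau^+_{\lambda_j}(\bm\lambda_{\hat\jmath}), \qquad \Res_{\mu=\lambda_j} \tau^-_\mu(\bm\lambda) = -\ii \gamma \, \tau^-_{\lambda_j}(\bm\lambda_{\hat\jmath}). \]
Consequently
\[ \Res_{\mu=\lambda_j} \tau_\mu(\bm\lambda) = \ii\gamma \left( \e^{-\ii\lambda_j L/2} \tau^+_{\lambda_j}(\bm\lambda_{\hat\jmath}) - \e^{\ii\lambda_j L/2} \tau^-_{\lambda_j}(\bm\lambda_{\hat\jmath}) \right). \]

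The key step is to recognize that the bracketed expression vanishes precisely by the reformulation \rf{BAEvariant} of the BAEs given in \rfl{BAEvariantslem}: multiplying $\tau^+_{\lambda_j}(\bm\lambda_{\hat\jmath}) = \e^{\ii \lambda_j L} \tau^-_{\lambda_j}(\bm\lambda_{\hat\jmath})$ by $\e^{-\ii\lambda_j L/2}$ gives exactly the required identity. Hence every residue vanishes, all singularities are removable, and $\mu \mapsto \tau_\mu(\bm\lambda)$ extends to an entire, hence analytic, function on $\bC$.

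There is no significant obstacle; the only minor care required is in confirming that the BAEs do apply in the form of \rfeqn{BAEvariant} (the equivalent form which holds without needing to divide, so that even the degenerate case $\gamma = 0$ is handled uniformly), and in noting that $\bm\lambda \in \bC^N_\n{reg}$ ensures the poles of $\tau^\pm_\mu$ are simple and at $N$ distinct points, so the residue calculation above is exhaustive.
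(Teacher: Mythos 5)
Your proposal is correct and follows essentially the same route as the paper: both compute the residues $\Res_{\mu=\lambda_j}\tau^\pm_\mu(\bm\lambda) = \pm\ii\gamma\,\tau^\pm_{\lambda_j}(\bm\lambda_{\hat\jmath})$, combine them into the residue of $\tau_\mu(\bm\lambda)$, and invoke the BAEs in the form \rf{BAEvariant} to conclude that each residue vanishes, so all singularities are removable. (Your residue formula even corrects a small typo in the paper's intermediate expression, which writes $\tau^-_{\lambda_j}(\bm\lambda_{\hat\jmath})$ where $\tau^\pm_{\lambda_j}(\bm\lambda_{\hat\jmath})$ is meant.)
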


\begin{proof}
As for the meromorphic function $\mu \mapsto \tau^\pm_\mu(\bm \lambda)$, its singularities are simple poles at $\lambda_j$.
The residue at $\lambda_j$ is given by
\[ \Res_{\mu \to \lambda_j} \tau^\pm_\mu(\bm \lambda)= \lim_{\mu \to \lambda_j} (\mu-\lambda_j) \tau^\pm_\mu(\bm \lambda) = \pm \ii \gamma \lim_{\mu \to \lambda_j} \tau^\pm_\mu(\bm \lambda_{\hat \jmath}) = \pm \ii \gamma \tau^-_{\lambda_j}(\bm \lambda_{\hat \jmath}). \]

Any singularities of the meromorphic function $\mu \mapsto \tau_\mu(\bm \lambda)$ would be simple poles at $\lambda_j$.
We have
\[ \Res_{\mu \to \lambda_j} \tau_\mu(\bm \lambda) = \lim_{\mu \to \lambda_j} (\mu-\lambda_j)\tau_\mu(\bm \lambda) = \ii \gamma \e^{-\ii \lambda_j L/2} \left(  \tau^+_{\lambda_j}(\bm \lambda_{\hat \jmath}) -\tau^-_{\lambda_j}(\bm \lambda_{\hat \jmath})\e^{\ii \lambda_j L} \right) = 0 , \]
by virtue of the BAEs \rf{BAE}.
Riemann's theorem on removable singularities implies that this function is holomorphically extendable over $\lambda_j$; since this holds for each $j$, this function can be extended to a function that is holomorphic, and hence analytic, on $\bC^N$.
\end{proof}

We will still write $\mu \mapsto \tau_\mu(\bm \lambda)$ for this extended function.
In other words, the BAEs guarantee not only that $\Psi_{\bm \lambda}$ is an eigenfunction of $T_\mu$ but that the corresponding eigenvalue also depends analytically on $\mu \in \bC$.\\

There is a useful partial fraction expansion for $\mu \mapsto \tau_\mu(\bm \lambda)$, reminiscent to equation (2.10) in \cite{Macdonald1}.
\begin{lem} \label{BAevpartialfraction1}
Let $\gamma  \in\bR$ and $(\bm \lambda,\mu) \in \bC^{N+1}_\n{reg}$.
Then
\[ \tau^\pm_\mu(\bm \lambda) = 1\mp\sum_{j=1}^N \frac{\ii \gamma}{\lambda_j-\mu} \tau^\pm_{\lambda_j}(\bm \lambda_{\hat \jmath}). \]
\end{lem}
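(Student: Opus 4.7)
The plan is to recognize $\mu \mapsto \tau^\pm_\mu(\bm \lambda)$ as a rational function in $\mu$ of a particularly simple shape and apply the standard partial fraction decomposition. Since $\bm \lambda \in \bC^N_\n{reg}$, the $\lambda_j$ are pairwise distinct, so $\tau^\pm_\mu(\bm\lambda) = \prod_{j=1}^N \frac{\lambda_j - \mu \mp \ii\gamma}{\lambda_j - \mu}$ has at worst simple poles in $\mu$, located precisely at $\mu = \lambda_j$ for $j=1,\dots,N$. Moreover each factor tends to $1$ as $|\mu| \to \infty$, so $\tau^\pm_\mu(\bm \lambda) \to 1$ at infinity. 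These two facts fix the shape
\[ \tau^\pm_\mu(\bm \lambda) = 1 + \sum_{j=1}^N \frac{r_j^\pm}{\mu - \lambda_j} \]
with constants $r_j^\pm$ to be identified.

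Next I would compute the residue at each pole. Writing out the factor that produces the pole and computing
\[ \lim_{\mu \to \lambda_j} (\mu-\lambda_j)\,\frac{\lambda_j - \mu \mp \ii\gamma}{\lambda_j - \mu} = \pm \ii\gamma, \]
all remaining factors are regular at $\mu = \lambda_j$ and their product is exactly $\prod_{k\ne j} \frac{\lambda_k - \lambda_j \mp \ii\gamma}{\lambda_k - \lambda_j} = \tau^\pm_{\lambda_j}(\bm \lambda_{\hat \jmath})$. Thus $r_j^\pm = \pm \ii\gamma\, \tau^\pm_{\lambda_j}(\bm \lambda_{\hat \jmath})$, and substituting into the partial fraction expansion above yields
\[ \tau^\pm_\mu(\bm \lambda) = 1 + \sum_{j=1}^N \frac{\pm \ii\gamma\, \tau^\pm_{\lambda_j}(\bm \lambda_{\hat \jmath})}{\mu-\lambda_j} = 1 \mp \sum_{j=1}^N \frac{\ii\gamma}{\lambda_j - \mu}\,\tau^\pm_{\lambda_j}(\bm \lambda_{\hat \jmath}), \]
which is the claimed identity.

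There is no real obstacle here; the only minor care point is to verify that the difference between the left-hand side and the right-hand side is a rational function vanishing at infinity with no poles, and hence identically zero. This is essentially the content of partial fraction expansion for rational functions with distinct simple poles, but it is worth invoking explicitly so that the argument does not rely on any unstated uniqueness. Alternatively, one can phrase the argument as: both sides agree at infinity and have the same residues at every pole, so their difference is an entire bounded rational function, hence constant, hence zero.
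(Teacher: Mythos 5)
Your proof is correct. Since the $\lambda_j$ are pairwise distinct (and $\mu\ne\lambda_j$), the function $\mu\mapsto\tau^\pm_\mu(\bm \lambda)$ is rational with at worst simple poles at the $\lambda_j$ and tends to $1$ at infinity; your residue computation $\lim_{\mu\to\lambda_j}(\mu-\lambda_j)\tau^\pm_\mu(\bm\lambda)=\pm\ii\gamma\,\tau^\pm_{\lambda_j}(\bm\lambda_{\hat\jmath})$ is right, and together these data pin down the partial fraction expansion exactly as you state. The paper reaches the same identity by first clearing denominators: for the $-$ case it forms the polynomial $p(\mu)$ equal to the difference of the two sides multiplied by $\prod_j(\lambda_j-\mu)$, notes that the coefficient of $\mu^N$ cancels so that $\deg p\le N-1$, and then verifies $p(\lambda_l)=0$ for $l=1,\ldots,N$, forcing $p\equiv 0$; the $+$ case follows by $\gamma\to-\gamma$. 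The two arguments are really the same fact in different clothing: your residue matching at $\mu=\lambda_j$ is the paper's evaluation $p(\lambda_j)=0$, and your decay at infinity is the paper's degree count. Your version is a little slicker, since the residue of a product with a single singular factor is immediate, whereas the paper must expand and cancel the products term by term; the paper's version is purely algebraic about polynomials and so avoids appealing to the uniqueness step for rational functions with prescribed simple poles --- the one point you correctly flag as needing to be invoked explicitly (a pole-free rational function vanishing at infinity is zero). Either way the hypothesis $(\bm\lambda,\mu)\in\bC^{N+1}_{\n{reg}}$ is used exactly where you use it: to guarantee the poles are simple and distinct and that $\mu$ is not one of them.
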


\begin{proof}
We only prove the case $\pm = -$; the case $\pm = +$ follows by replacing $\gamma$ by $-\gamma$.
Consider the following polynomial function in $\mu$:
\[ p(\mu) := \prod_{j=1}^N (\lambda_j-\mu+ \ii \gamma) - \left( \prod_{j=1}^N (\lambda_j-\mu) + \ii \gamma \sum_{k=1}^N \prod_{j \ne k} (\lambda_j -\lambda_k + \ii \gamma) \frac{\lambda_j-\mu}{\lambda_j-\lambda_k} \right). \]
Because the coefficient of $(\lambda_j-\mu)^N$ vanishes, $p(\mu)$ has degree at most $N-1$ in $\mu$.
On the other hand, we may evaluate, for $l=1,\ldots,N$, 
\[ p(\lambda_l) = \ii \gamma \prod_{j=1 \atop j \ne l}^N (\lambda_j-\lambda_l + \ii \gamma) - \ii \gamma \left( \sum_{k=1 \atop k \ne l}^N \prod_{j=1 \atop j \ne k}^N (\lambda_j-\lambda_k + \ii \gamma) \frac{\lambda_j-\lambda_l}{\lambda_j-\lambda_k} +  \prod_{j=1 \atop j \ne l}^N (\lambda_j-\lambda_l + \ii \gamma) \right) = 0. \]
Since $p(\mu)$ is a polynomial function of degree less than $N$ but with $N$ distinct zeros $\lambda_k$, we conclude that $p(\mu)$ is zero for all $\mu \in \bC$.
It follows that for all $\mu \in \bC \setminus \{\lambda_1,\ldots,\lambda_N\}$ 
\[ 0=p(\mu)/\prod_{j=1}^N(\lambda_j-\mu)=\tau^-_\mu(\bm \lambda) - \left( 1+ \ii \gamma \sum_{j=1}^N \frac{1}{\lambda_j-\mu} \tau^-_{\lambda_j}(\bm \lambda_{\hat \jmath}) \right). \qedhere \] 
\end{proof}

\begin{cor} \label{BAevpartialfraction}
Let $\gamma \in \bR$, $L \in \bR_{>0}$ and $(\bm \lambda,\mu) \in \bC^{N+1}_\n{reg}$.
Assume that $\bm \lambda$ satisfies the BAEs \rf{BAE}.
Then
\[ \tau_\mu(\bm \lambda) = \e^{-\ii \mu L/2}+ \e^{\ii \mu L/2}- \e^{-\ii \mu L/2}\ii \gamma \sum_{j=1}^N \tau^-_{\lambda_j}(\bm \lambda_{\hat \jmath})\frac{\e^{\ii \lambda_j L}-\e^{\ii \mu L}}{\lambda_j-\mu}. \]
\end{cor}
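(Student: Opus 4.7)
The plan is to chain together the two ingredients provided just above: the definition \rfeqn{BAEv2} of $\tau_\mu(\bm \lambda)$ as a combination of $\tau^+_\mu(\bm \lambda)$ and $\tau^-_\mu(\bm \lambda)$, and the partial fraction expansion from \rfl{BAevpartialfraction1}. Once these are applied, the BAEs in the form \rfeqn{BAEvariant} should do the rest of the work by letting us express all residue terms in terms of a single function $\tau^-_{\lambda_j}(\bm \lambda_{\hat \jmath})$.

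First, I would write
\[ \tau_\mu(\bm \lambda) = \e^{-\ii \mu L/2}\tau^+_\mu(\bm \lambda)+\e^{\ii \mu L/2}\tau^-_\mu(\bm \lambda), \]
and then substitute the two identities
\[ \tau^+_\mu(\bm \lambda) = 1-\sum_{j=1}^N\frac{\ii\gamma}{\lambda_j-\mu}\,\tau^+_{\lambda_j}(\bm \lambda_{\hat \jmath}), \qquad \tau^-_\mu(\bm \lambda) = 1+\sum_{j=1}^N\frac{\ii\gamma}{\lambda_j-\mu}\,\tau^-_{\lambda_j}(\bm \lambda_{\hat \jmath}) \]
supplied by \rfl{BAevpartialfraction1}. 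At this stage the constant terms immediately yield $\e^{-\ii\mu L/2}+\e^{\ii\mu L/2}$, while the remaining sum involves both $\tau^+_{\lambda_j}(\bm \lambda_{\hat \jmath})$ and $\tau^-_{\lambda_j}(\bm \lambda_{\hat \jmath})$.

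Next, I would invoke the BAE in the equivalent form $\tau^+_{\lambda_j}(\bm \lambda_{\hat \jmath})=\e^{\ii\lambda_jL}\,\tau^-_{\lambda_j}(\bm \lambda_{\hat \jmath})$ from \rfeqn{BAEvariant} in \rfl{BAEvariantslem} to eliminate $\tau^+_{\lambda_j}(\bm \lambda_{\hat \jmath})$ entirely. Factoring out $\e^{-\ii\mu L/2}$ and collecting the two summands for each $j$ gives the combined factor $\e^{\ii\lambda_jL}-\e^{\ii\mu L}$, reproducing the asserted expression.

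There is no real obstacle here; the whole proof is essentially an algebraic manipulation of two formulas already proved. The only mild subtlety is ensuring the BAE is applied in the right form for the regular vector $\bm \lambda\in\bC^N_\n{reg}$, which is guaranteed by \rfl{BAEvariantslem} with $\gamma\in\bR$, so that the substitution $\tau^+_{\lambda_j}(\bm \lambda_{\hat \jmath})\mapsto\e^{\ii\lambda_jL}\tau^-_{\lambda_j}(\bm \lambda_{\hat \jmath})$ is valid without restrictions beyond $(\bm \lambda,\mu)\in\bC^{N+1}_\n{reg}$.
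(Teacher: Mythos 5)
Your proposal is correct and follows essentially the same route as the paper's own proof: both substitute the partial fraction expansion of \rfl{BAevpartialfraction1} into the definition \rfeqn{BAEv2} of $\tau_\mu(\bm\lambda)$ and then use the BAEs in the form \rfeqn{BAEvariant} to replace $\tau^+_{\lambda_j}(\bm\lambda_{\hat\jmath})$ by $\e^{\ii\lambda_j L}\tau^-_{\lambda_j}(\bm\lambda_{\hat\jmath})$ before factoring out $\e^{-\ii\mu L/2}$. No gaps.
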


\begin{proof}
Using \rfl{BAevpartialfraction1} we obtain
\begin{align*} 
\tau_\mu(\bm \lambda) &= \tau^+_\mu(\bm \lambda)\e^{-\ii \mu L/2} +\tau^-_\mu(\bm \lambda) \e^{\ii \mu L/2} \\  
&= \e^{-\ii \mu L/2}+ \e^{\ii \mu L/2} - \ii \gamma \sum_{j=1}^N \frac{\tau^+_{\lambda_j}(\bm \lambda_{\hat \jmath}) \e^{-\ii \mu L/2}- \tau^-_{\lambda_j}(\bm \lambda_{\hat \jmath}) \e^{\ii \mu L/2} }{\lambda_j-\mu}. 
\end{align*}
The BAEs \rf{BAE} allow us to change the product $\tau^+_{\lambda_j}(\bm \lambda_{\hat \jmath})$ into $\tau^-_{\lambda_j}(\bm \lambda_{\hat \jmath}) \e^{\ii \lambda_j L}$, and consequently we obtain the desired expression.
\end{proof}

This form of the Bethe Ansatz eigenvalue allows for a nice application; we know that $\mu \mapsto \tau_\mu(\bm \lambda)$ is analytic at $\lambda_j$, and using \rft{BAevpartialfraction} we can calculate $\tau_{\lambda_j}(\bm \lambda)$.
We have, for $l=1,\ldots,N$, 
\begin{align*} 
\tau_{\lambda_j}(\bm \lambda) &= \lim_{\mu \to \lambda_j} \tau_\mu(\bm \lambda) 
\; = \; \e^{-\ii \lambda_j L/2} + \e^{\ii \lambda_j L/2}-\ii \gamma \e^{-\ii \lambda_j L/2} \sum_{k=1}^N \lim_{\mu \to \lambda_j} \tau^-_{\lambda_k}(\bm \lambda_{\hat k})\frac{\e^{\ii \lambda_k L}-\e^{\ii \mu L}}{\lambda_k-\mu}  \\
&= \e^{-\ii \lambda_j L/2} + \e^{\ii \lambda_j L/2}-\ii \gamma \e^{-\ii \lambda_j L/2}\sum_{k=1 \atop k \ne j}^N \tau^-_{\lambda_k}(\bm \lambda_{\hat k}) \frac{\e^{\ii \lambda_j L}-\e^{\ii \lambda_k L}}{\lambda_j-\lambda_k} +\gamma L\e^{-\ii \lambda_j L/2} \tau^-_{\lambda_j}(\bm \lambda_{\hat \jmath}), 
\end{align*}
using De l'H\^opital's rule.

\subsection{The action of $C_\mu$ on the Bethe wavefunction}

To complete this section, we compute $C_\mu  \Psi_{\bm \lambda}$. In physics one is interested in inner products of the form $\innerrnd{\Psi_{\bm \mu}}{\Psi_{\bm \lambda}}$ and for this purpose having expressions for $C_\mu \Psi_{\bm \lambda}$ as linear combinations of certain $\Psi_{\tilde{\bm \lambda}}$ is helpful \cite{Korepin}.

\begin{lem} \label{DADAPsi}
Let $\gamma \in \bR$, $L \in \bR_{>0}$ and $(\bm \lambda',\mu,\nu) \in \bC^{N+1}_\n{reg}$.
Then
\begin{align*}
\lefteqn{\left( D_{\mu}A_{\nu} - D_{\nu}A_{\mu} \right) \Psi_{\bm \lambda'} =} \displaybreak[2] \\
&=  \left( \tau^-_{\mu}(\bm \lambda')\tau^+_{\nu}(\bm \lambda')\e^{\ii (\mu-\nu) L/2} - \tau^-_{\nu}(\bm \lambda')\tau^+_{\mu}(\bm \lambda')\e^{-\ii (\mu-\nu) L/2}  \right) \Psi_{\bm \lambda'} + \\
& \quad - \sum_{j=1}^{N-1} \frac{\ii \gamma}{\lambda_j \! - \! \mu} \left( \tau^-_{\lambda_j}(\bm \lambda'_{\hat \jmath}) \tau^+_{\nu}(\bm \lambda')\e^{\ii (\lambda_j-\nu) L/2} +  \tau^-_{\nu}(\bm \lambda') \tau^+_{\lambda_j}(\bm \lambda'_{\hat \jmath}) \e^{-\ii (\lambda_j - \nu) L/2}  \right) \Psi_{\bm \lambda'_{\hat \jmath},\mu} + \\
& \quad + \sum_{j=1}^{N-1} \frac{\ii \gamma}{\lambda_j \! - \! \nu} \left( \tau^-_{\lambda_j}(\bm \lambda'_{\hat \jmath}) \tau^+_{\mu}(\bm \lambda') \e^{\ii (\lambda_j-\mu) L/2} + \tau^-_{\mu}(\bm \lambda') \tau^+_{\lambda_j}(\bm \lambda'_{\hat \jmath}) \e^{-\ii (\lambda_j-\mu) L/2} \right) \Psi_{\bm \lambda'_{\hat \jmath},\nu} + \\
& \quad - \sum_{j,k=1 \atop j < k}^{N-1} \left( \left( \frac{\lambda_j \! - \! \nu \! - \! \ii \gamma}{\lambda_j \! - \! \nu} \frac{\ii \gamma}{\lambda_j \! - \! \mu}\frac{\ii \gamma}{\lambda_k \! - \! \nu}  - \frac{\lambda_j \! - \! \mu \! - \! \ii \gamma}{\lambda_j \! - \! \mu} \frac{\ii \gamma}{\lambda_j \! - \! \nu}\frac{\ii \gamma}{\lambda_k \! - \! \mu} \right) \tau^-_{\lambda_j}(\bm \lambda'_{\hat \jmath}) \tau^+_{\lambda_k}(\bm \lambda'_{\hat k}) \e^{\ii (\lambda_j-\lambda_k) L/2} + \right. \\
& \quad \; \left. + \left( \frac{\lambda_k \! - \! \nu \! - \! \ii \gamma}{\lambda_k \! - \! \nu} \frac{\ii \gamma}{\lambda_k \! - \! \mu}\frac{\ii \gamma}{\lambda_j \! - \! \nu}  - \frac{\lambda_k \! - \! \mu \! - \! \ii \gamma}{\lambda_k \! - \! \mu} \frac{\ii \gamma}{\lambda_k \! - \! \nu}\frac{\ii \gamma}{\lambda_j \! - \! \mu} \right)  \tau^-_{\lambda_k}(\bm \lambda'_{\hat k}) \tau^+_{\lambda_j}(\bm \lambda'_{\hat \jmath})  \e^{-\ii (\lambda_j-\lambda_k) L/2} \right) \Psi_{\bm \lambda'_{\hat \jmath,\hat k},\mu,\nu}.
\end{align*}
\end{lem}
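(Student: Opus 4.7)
The plan is to compute $(D_\mu A_\nu - D_\nu A_\mu)\Psi_{\bm\lambda'}$ by direct iterated application of \rfp{APsiDPsi} and then antisymmetrization in $(\mu,\nu)$. First I would expand $A_\nu \Psi_{\bm\lambda'}$ via \rf{APsi} as a principal term proportional to $\Psi_{\bm\lambda'}$ plus $N-1$ terms proportional to $\Psi_{\bm\lambda'_{\hat\jmath},\nu}$, and then apply $D_\mu$ to each of these summands using \rf{DPsi}. The outcome is a linear combination of four wavefunction types: $\Psi_{\bm\lambda'}$, $\Psi_{\bm\lambda'_{\hat\jmath},\mu}$, $\Psi_{\bm\lambda'_{\hat\jmath},\nu}$, and $\Psi_{\bm\lambda'_{\hat\jmath,\hat k},\mu,\nu}$ (the last with $j\ne k$, which one reindexes to $j<k$ using $\Psi_{\bm\lambda'_{\hat\jmath,\hat k},\mu,\nu}=\Psi_{\bm\lambda'_{\hat\jmath,\hat k},\nu,\mu}$, a consequence of $[B_\mu,B_\nu]=0$). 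The corresponding computation for $D_\nu A_\mu\Psi_{\bm\lambda'}$ follows by swapping $\mu\leftrightarrow\nu$ throughout, and the difference is collected term by term.

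The coefficient of $\Psi_{\bm\lambda'}$ and the coefficients of $\Psi_{\bm\lambda'_{\hat\jmath,\hat k},\mu,\nu}$ read off with no cancellation: the former from the principal-principal contributions of both products, the latter from the two orderings in which a pair $\{j,k\}$ can be singled out — first by $A_\nu$, then by $D_\mu$ — minus the analogous orderings from $D_\nu A_\mu$. These steps use only the elementary factorizations of $\tau^\pm$ to split off the removed index.

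The main obstacle is the coefficient of $\Psi_{\bm\lambda'_{\hat\jmath},\mu}$ (and, by the $\mu\leftrightarrow\nu$ symmetry of the whole expression, of $\Psi_{\bm\lambda'_{\hat\jmath},\nu}$), which receives three separate contributions: (A) from the principal term of $A_\nu$ combined with the $j$-th off-principal summand of $D_\mu\Psi_{\bm\lambda'}$; (B) from the $j$-th summand of $A_\nu$ combined with the slot-replacement term (where the $\nu$-slot turns into a $\mu$-slot) in $D_\mu\Psi_{\bm\lambda'_{\hat\jmath},\nu}$; and (C), with the opposite sign, from the $j$-th summand of $A_\mu$ combined with the principal term of $D_\nu\Psi_{\bm\lambda'_{\hat\jmath},\mu}$. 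Contribution (A) already reproduces the first summand of the stated coefficient. To show that $(B)-(C)$ reproduces the second summand, I would invoke the factorizations $\tau^-_\nu(\bm\lambda')=\tau^-_\nu(\bm\lambda'_{\hat\jmath})\,\frac{\lambda_j-\nu+\ii\gamma}{\lambda_j-\nu}$ and $\tau^-_\nu(\bm\lambda'_{\hat\jmath},\mu)=\tau^-_\nu(\bm\lambda'_{\hat\jmath})\,\frac{\mu-\nu+\ii\gamma}{\mu-\nu}$ together with the partial-fraction identity
\[
\frac{\ii\gamma}{\mu-\nu}\left(\frac{\mu-\nu+\ii\gamma}{\lambda_j-\mu}-\frac{\ii\gamma}{\lambda_j-\nu}\right)=\frac{\ii\gamma\,(\lambda_j-\nu+\ii\gamma)}{(\lambda_j-\mu)(\lambda_j-\nu)},
\]
which cancels the spurious $(\mu-\nu)^{-1}$ pole (consistent with the fact that the left-hand side of the lemma is regular at $\mu=\nu$) and yields exactly the claimed $-\frac{\ii\gamma}{\lambda_j-\mu}\tau^-_\nu(\bm\lambda')\tau^+_{\lambda_j}(\bm\lambda'_{\hat\jmath})\e^{-\ii(\lambda_j-\nu)L/2}$. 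No structural input beyond \rfp{APsiDPsi} is required; the real work lies in the combinatorial bookkeeping of the many families of terms together with verification of this partial-fraction cancellation.
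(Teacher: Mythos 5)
Your proposal follows exactly the paper's route: expand $A_\nu\Psi_{\bm\lambda'}$ via \rfp{APsiDPsi}, apply $D_\mu$ to each summand by the same proposition, antisymmetrize in $(\mu,\nu)$, and collect the coefficients of $\Psi_{\bm\lambda'}$, $\Psi_{\bm\lambda'_{\hat\jmath},\mu}$, $\Psi_{\bm\lambda'_{\hat\jmath},\nu}$ and $\Psi_{\bm\lambda'_{\hat\jmath,\hat k},\mu,\nu}$. The partial-fraction identity you isolate (which indeed removes the spurious $(\mu-\nu)^{-1}$ pole, using $\tau^-_\nu(\bm\lambda'_{\hat\jmath},\mu)=\tau^-_\nu(\bm\lambda'_{\hat\jmath})\frac{\mu-\nu+\ii\gamma}{\mu-\nu}$ and $\tau^-_\nu(\bm\lambda')=\tau^-_\nu(\bm\lambda'_{\hat\jmath})\frac{\lambda_j-\nu+\ii\gamma}{\lambda_j-\nu}$) is precisely the cancellation the paper leaves implicit in ``collecting like terms,'' so the argument is correct and essentially identical to the paper's.
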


\begin{proof}
By virtue of \rfp{APsiDPsi} we have
\begin{align*}
\lefteqn{D_{\mu}A_{\nu} \Psi_{\bm \lambda'} =} \\
&=  \tau^-_{\mu}(\bm \lambda')\tau^+_{\nu}(\bm \lambda')\e^{\ii (\mu-\nu) L/2}  \Psi_{\bm \lambda'} + \\
& \quad - \sum_{j=1}^{N-1} \left( \tau^-_{\lambda_j}(\bm \lambda'_{\hat \jmath}) \tau^+_{\nu}(\bm \lambda')\frac{\ii \gamma}{\lambda_j \! - \! \mu} \e^{\ii (\lambda_j-\nu) L/2}  - \tau^-_{\nu}(\bm \lambda'_{\hat \jmath}) \tau^+_{\lambda_j}(\bm \lambda'_{\hat \jmath})\frac{\ii \gamma}{\lambda_j \! - \! \nu}\frac{\ii \gamma}{\mu \! - \! \nu} \e^{-\ii (\lambda_j-\nu) L/2}  \right) \Psi_{\bm \lambda'_{\hat \jmath},\mu} + \\
& \quad + \sum_{j=1}^{N-1} \tau^-_{\mu}(\bm \lambda'_{\hat \jmath}, \nu)\tau^+_{\lambda_j}(\bm \lambda'_{\hat \jmath}) \frac{\ii \gamma}{\lambda_j \! - \! \nu} \e^{-\ii (\lambda_j-\mu) L/2} \Psi_{\bm \lambda'_{\hat \jmath},\nu} + \\
& \quad - \sum_{j,k=1 \atop j \ne k}^{N-1} \tau^-_{\lambda_j}(\bm \lambda'_{\hat \jmath,\hat k},\nu) \tau^+_{\lambda_k}(\bm \lambda'_{\hat k}) \frac{\ii \gamma}{\lambda_j \! - \! \mu}\frac{\ii \gamma}{\lambda_k \! - \! \nu} \e^{\ii (\lambda_j-\lambda_k) L/2} \Psi_{\bm \lambda'_{\hat \jmath,\hat k},\mu,\nu}.
\end{align*}
Now subtract from this the corresponding expression for $D_{\nu}A_{\mu} \Psi_{\bm \lambda'}$. After collecting like terms, we obtain the lemma.
\end{proof}

\begin{prop} \label{CPsi}
Let $\gamma \in \bR$, $L \in \bR_{>0}$ and $(\bm \lambda,\mu) \in \bC^{N+1}_\n{reg}$.
Then $\gamma C_\mu \Psi_{\bm \lambda} =$
\begin{align*} 
&= -\sum_{j=1}^N \frac{\ii \gamma}{\lambda_j \! - \! \mu}\left(\tau^-_{\lambda_j}(\bm \lambda_{\hat \jmath})
\tau^+_\mu(\bm \lambda_{\hat \jmath}) \e^{\ii (\lambda_j-\mu) L/2}-\tau^-_\mu(\bm \lambda_{\hat \jmath})\tau^+_{\lambda_j}(\bm \lambda_{\hat \jmath})\e^{-\ii (\lambda_j-\mu) L/2}  \right) \Psi_{\bm \lambda_{\hat \jmath}} + \\
& \quad -\sum_{j,k=1 \atop j<k}^N \frac{\ii \gamma}{\lambda_j \! - \! \mu}\frac{\ii \gamma}{\lambda_k \! - \! \mu}\left( \tau^-_{\lambda_j}(\bm \lambda_{\hat \jmath})\tau^+_{\lambda_k}(\bm \lambda_{\hat \jmath,\hat k}) \e^{\ii (\lambda_j-\lambda_k) L/2} +
 \tau^-_{\lambda_k}(\bm \lambda_{\hat k}) \tau^+_{\lambda_j}(\bm \lambda_{\hat \jmath,\hat k})\e^{-\ii (\lambda_j-\lambda_k) L/2} \right) \Psi_{\bm \lambda_{\hat \jmath, \hat k},\mu}.
\end{align*}
If $\bm \lambda$ satisfies the BAEs \rf{BAE} this simplifies to
\begin{align*} 
\gamma C_\mu \Psi_{\bm \lambda} &=-\sum_{j=1}^N\frac{\ii \gamma \e^{\ii \lambda_j L/2}}{\lambda_j \! - \! \mu} \tau^-_{\lambda_j}(\bm \lambda_{\hat \jmath}) \left(\tau^+_\mu(\bm \lambda_{\hat \jmath})\e^{-\ii \mu L/2}-\tau^-_\mu(\bm \lambda_{\hat \jmath}) \e^{\ii \mu L/2}\right) \Psi_{\bm \lambda_{\hat \jmath}} +\\
& \quad - 2 \sum_{j,k=1 \atop j<k}^N \frac{\ii \gamma}{\lambda_j \! - \! \mu}\frac{\ii \gamma}{\lambda_k \! - \! \mu} \e^{\ii (\lambda_j+\lambda_k)L/2}\tau^-_{\lambda_j}(\bm \lambda_{\hat \jmath, \hat k}) \tau^-_{\lambda_k}(\bm \lambda_{\hat \jmath, \hat k})  \Psi_{\bm \lambda_{\hat \jmath, \hat k},\mu}. 
\end{align*}
\end{prop}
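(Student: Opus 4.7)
The plan is to prove this by induction on $N$, exploiting the recursive structure $\Psi_{\bm \lambda} = B_{\lambda_N} \Psi_{\bm \lambda'}$ where $\bm \lambda' = (\lambda_1,\ldots,\lambda_{N-1})$. The base case $N=0$ is just $C_\mu \Vac = 0$ from \rfp{ABCDproperties}~\ref{ABCDvacuum}, which agrees with the empty sums in the claimed formula. For the inductive step, the key tool is the $CB$-commutation relation \rf{CBcommrel}, which gives
\[ C_\mu \Psi_{\bm \lambda} = C_\mu B_{\lambda_N} \Psi_{\bm \lambda'} = B_{\lambda_N} C_\mu \Psi_{\bm \lambda'} - \frac{\ii}{\mu-\lambda_N}\left( D_\mu A_{\lambda_N} - D_{\lambda_N} A_\mu \right)\Psi_{\bm \lambda'}. \]

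The first term is handled by the induction hypothesis applied to $\Psi_{\bm \lambda'}$ (which lives in $\ca H_{N-1}$), yielding an expression involving $\Psi_{\bm \lambda'_{\hat \jmath}}$ and $\Psi_{\bm \lambda'_{\hat \jmath,\hat k},\mu}$ terms; applying $B_{\lambda_N}$ to these raises particle number by one and, using \rfeqn{Bethewavefnrecursion} and the commutativity of the $B_\lambda$'s, produces terms proportional to $\Psi_{\bm \lambda_{\hat \jmath}}$ and $\Psi_{\bm \lambda_{\hat \jmath,\hat k},\mu}$ respectively (with the original index $N$ now identified with either $j$ or $k$ being the ``extra'' index absent from $\bm \lambda'$). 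The second term is precisely the expression evaluated by \rfl{DADAPsi} with $\nu = \lambda_N$; note that the factor $\tau^+_\nu(\bm \lambda')/(\lambda_j - \nu)$ and its companions remain finite because $\bm \lambda \in \bC^N_\n{reg}$ ensures $\lambda_j \ne \lambda_N$, and the residue structure at $\lambda_N$ is cancelled by the prefactor $1/(\mu-\lambda_N)$.

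The bulk of the work is then a careful bookkeeping exercise: one collects contributions to the coefficient of each basis-type wavefunction $\Psi_{\bm \lambda_{\hat \jmath}}$ and $\Psi_{\bm \lambda_{\hat \jmath, \hat k}, \mu}$ and checks that the contributions from the two sources combine into the asserted symmetric form. Three ingredients make this tractable: (a) the symmetry $\Psi_{\bm \lambda} = \Psi_{w\bm \lambda}$ for $w \in S_N$ (so the index $N$ is on equal footing with every other index in $\{1,\ldots,N\}$ at the end), (b) partial-fraction identities of the type $\frac{1}{(\lambda_j-\mu)(\lambda_N-\mu)} = \frac{1}{\lambda_j - \lambda_N}\bigl(\frac{1}{\lambda_N - \mu} - \frac{1}{\lambda_j - \mu}\bigr)$ which reorganise the mixed rational factors coming from $\frac{1}{\mu - \lambda_N}$ times the rational factors in \rfl{DADAPsi}, and (c) the multiplicative splitting $\tau^\pm_\mu(\bm \lambda) = \frac{\lambda_N - \mu \mp \ii\gamma}{\lambda_N - \mu}\tau^\pm_\mu(\bm \lambda')$, which lets us promote the $\tau$-factors obtained from $\bm \lambda'$ to those required in the statement.

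For the final, simplified form, one substitutes the BAE in the form \rfeqn{BAEvariant}, namely $\tau^+_{\lambda_j}(\bm \lambda_{\hat \jmath}) = e^{\ii \lambda_j L} \tau^-_{\lambda_j}(\bm \lambda_{\hat \jmath})$, into the coefficients of $\Psi_{\bm \lambda_{\hat \jmath}}$ and $\Psi_{\bm \lambda_{\hat \jmath, \hat k}, \mu}$. For the single-index terms this pulls out a common factor $e^{\ii \lambda_j L/2}\tau^-_{\lambda_j}(\bm \lambda_{\hat \jmath})$, giving the stated expression. For the double-index terms one further uses $\tau^+_{\lambda_k}(\bm \lambda_{\hat \jmath, \hat k}) = e^{\ii \lambda_k L}\tau^-_{\lambda_k}(\bm \lambda_{\hat \jmath, \hat k})$ (valid since $(\bm \lambda, \mu) \in \bC^{N+1}_\n{reg}$ allows us to apply the BAE to a subvector), causing the two terms in the bracket to become equal and producing the factor of $2$. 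The main obstacle I expect is the combinatorial collection in the inductive step: ensuring that the contribution from $B_{\lambda_N}$ acting on the $\Psi_{\bm \lambda'_{\hat \jmath}}$ terms (which produces $\Psi_{\bm \lambda_{\hat \jmath}}$) combines correctly with the diagonal $\Psi_{\bm \lambda'}$ term from \rfl{DADAPsi} (which, after multiplication by $B_{\lambda_N}$, gives $\Psi_{\bm \lambda}$—but this should not appear on the right-hand side, so the cancellation of its coefficient is itself an identity that needs verifying, most likely via \rfl{BAevpartialfraction1}).
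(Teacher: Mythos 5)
Your skeleton is exactly the paper's: induction on $N$ via $\Psi_{\bm \lambda} = B_{\lambda_N}\Psi_{\bm \lambda'}$, the $CB$-relation \rf{CBcommrel}, \rfl{DADAPsi} for the commutator term, the induction hypothesis for $B_{\lambda_N}C_\mu\Psi_{\bm \lambda'}$, and a collection of coefficients of $\Psi_{\bm \lambda_{\hat \jmath}}$ and $\Psi_{\bm \lambda_{\hat \jmath,\hat k},\mu}$. However, the ``main obstacle'' you anticipate rests on a misreading of your own decomposition. In
$\gamma C_\mu \Psi_{\bm \lambda} = B_{\lambda_N}\gamma C_\mu \Psi_{\bm \lambda'}-\frac{\ii \gamma}{\lambda_N - \mu}\left(D_{\lambda_N}A_\mu -D_\mu A_{\lambda_N}\right) \Psi_{\bm \lambda'}$,
the operator $B_{\lambda_N}$ multiplies only the first term; the output of \rfl{DADAPsi} is \emph{not} acted on by $B_{\lambda_N}$. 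Consequently no $\Psi_{\bm \lambda}$ term ever arises and there is nothing to cancel --- reassuringly, since the first formula holds without imposing the BAEs, so a cancellation requiring them would be suspicious. Instead, the term of \rfl{DADAPsi} proportional to $\Psi_{\bm \lambda'}=\Psi_{\bm \lambda_{\hat N}}$ is precisely the missing $j=N$ contribution to the single sum, the terms proportional to $\Psi_{\bm \lambda'_{\hat \jmath},\mu}=\Psi_{\bm \lambda_{\hat \jmath,\hat N},\mu}$ supply the $k=N$ contributions to the double sum, and the induction hypothesis supplies the pieces with both indices at most $N-1$. That is how the paper organizes the bookkeeping.

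A second, smaller error: your route to the factor of $2$ in the simplified double sum invokes $\tau^+_{\lambda_k}(\bm \lambda_{\hat \jmath,\hat k}) = \e^{\ii \lambda_k L}\tau^-_{\lambda_k}(\bm \lambda_{\hat \jmath,\hat k})$, which is false. The BAEs constrain $\tau^\pm_{\lambda_k}(\bm \lambda_{\hat k})$, and stripping off the $j$-th factor leaves $\tau^+_{\lambda_k}(\bm \lambda_{\hat \jmath,\hat k}) = \e^{\ii \lambda_k L}\frac{\lambda_j-\lambda_k+\ii\gamma}{\lambda_j-\lambda_k-\ii\gamma}\tau^-_{\lambda_k}(\bm \lambda_{\hat \jmath,\hat k})$. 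The two terms in the bracket do not become equal; rather, after applying the genuine BAEs each becomes $\e^{\ii(\lambda_j+\lambda_k)L/2}\tau^-_{\lambda_j}(\bm \lambda_{\hat \jmath,\hat k})\tau^-_{\lambda_k}(\bm \lambda_{\hat \jmath,\hat k})$ times a rational prefactor, namely $\frac{\lambda_j-\lambda_k+\ii\gamma}{\lambda_j-\lambda_k}$ and $\frac{\lambda_k-\lambda_j+\ii\gamma}{\lambda_k-\lambda_j}$ respectively, and these two prefactors sum to $2$. With these two corrections your plan carries through as in the paper.
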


\begin{proof}
Note that the second statement follows immediately from the first statement,
which we prove by induction. Note that for $N=0$ we indeed recover $C_\mu \Vac = 0$.
For the induction step, write $\bm \lambda' = (\lambda_1,\ldots,\lambda_{N-1}) \in \bC^{N-1}_\n{reg}$.
Now by \rfeqn{CBcommrel} we obtain
\[ \gamma C_\mu \Psi_{\bm \lambda} = B_{\lambda_N}\gamma C_\mu \Psi_{\bm \lambda'}-\frac{\ii \gamma}{\lambda_N \! - \! \mu}\left(D_{\lambda_N}A_\mu -D_\mu A_{\lambda_N}\right) \Psi_{\bm \lambda'}. \]
The induction hypothesis yields
\begin{align*}
\lefteqn{B_{\lambda_N}\gamma C_\mu \Psi_{\bm \lambda'} =}\\
&= -\sum_{j=1}^{N-1} \frac{\ii \gamma}{\lambda_j \! - \! \mu}\left( \tau^-_{\lambda_j}(\bm \lambda'_{\hat \jmath})\tau^+_{\mu}(\bm \lambda'_{\hat \jmath}) \e^{\ii (\lambda_j-\mu) L/2} -\tau^-_{\mu}(\bm \lambda'_{\hat \jmath})\tau^+_{\lambda_j}(\bm \lambda'_{\hat \jmath}) \e^{-\ii (\lambda_j-\mu) L/2} \right) \Psi_{\bm \lambda_{\hat \jmath}} + \\
& \quad -\sum_{j,k=1 \atop j<k}^{N-1} \frac{\ii \gamma}{\lambda_j \! - \! \mu}\frac{\ii \gamma}{\lambda_k \! - \! \mu}\left( \tau^-_{\lambda_j}(\bm \lambda'_{\hat \jmath})\tau^+_{\lambda_k}(\bm \lambda'_{\hat \jmath, \hat k}) \e^{\ii (\lambda_j-\lambda_k) L/2} + \tau^-_{\lambda_k}(\bm \lambda'_{\hat k}) \tau^+_{\lambda_j}(\bm \lambda'_{\hat \jmath, \hat k}) \e^{-\ii (\lambda_j-\lambda_k) L/2} \right) \Psi_{\bm \lambda_{\hat \jmath, \hat k},\mu},
\end{align*}
whereas from \rfl{DADAPsi} we obtain the much-aligned expression
{\small
\begin{align*}
\lefteqn{- \frac{\ii \gamma}{\lambda_N \! - \! \mu} \left( D_{\lambda_N}A_\mu  - D_\mu A_{\lambda_N} \right) \Psi_{\bm \lambda'} =} \nonumber \displaybreak[2] \\
&= - \frac{\ii \gamma}{\lambda_N \! - \! \mu} \left( \tau^-_{\lambda_N}(\bm \lambda')\tau^+_\mu(\bm \lambda')\e^{\ii (\lambda_N-\mu) L/2} - \tau^-_\mu(\bm \lambda')\tau^+_{\lambda_N}(\bm \lambda')\e^{-\ii(\lambda_N-\mu) L/2}  \right) \Psi_{\bm \lambda'} +  \\
& \quad + \frac{\ii \gamma}{\lambda_N \! - \! \mu}\sum_{j=1}^{N-1} \frac{\ii \gamma}{\lambda_j \! - \! \lambda_N}\left( \tau^-_{\lambda_j}(\bm \lambda'_{\hat \jmath}) \tau^+_\mu(\bm \lambda') 
\e^{\ii (\lambda_j-\mu) L/2} + \tau^-_\mu(\bm \lambda') \tau^+_{\lambda_j}(\bm \lambda'_{\hat \jmath}) \e^{-\ii (\lambda_j-\mu) L/2}  \right) \Psi_{\bm \lambda_{\hat \jmath}} +  \\
& \quad - \frac{\ii \gamma}{\lambda_N \! - \! \mu} \sum_{j=1}^{N-1} \frac{\ii \gamma}{\lambda_j \! - \! \mu} \left( \tau^-_{\lambda_j}(\bm \lambda'_{\hat \jmath}) \tau^+_{\lambda_N}(\bm \lambda') \e^{\ii (\lambda_j-\lambda_N) L/2} + \tau^-_{\lambda_N}(\bm \lambda') \tau^+_{\lambda_j}(\bm \lambda'_{\hat \jmath}) \e^{-\ii (\lambda_j-\lambda_N) L/2}  \right) \Psi_{\bm \lambda'_{\hat \jmath},\mu} +  \\
& \quad + \frac{\ii \gamma}{\lambda_N \! - \! \mu}\sum_{j,k=1 \atop j < k}^N \left( \left(
\frac{\lambda_j \! - \! \mu \! - \! \ii \gamma}{\lambda_j \! - \! \mu} \frac{\ii \gamma}{\lambda_j \! - \! \lambda_N}\frac{\ii \gamma}{\lambda_k \! - \! \mu}  - \frac{\lambda_j \! - \! \lambda_N \! - \! \ii \gamma}{\lambda_j \! - \! \lambda_N} \frac{\ii \gamma}{\lambda_j \! - \! \mu}\frac{\ii \gamma}{\lambda_k \! - \! \lambda_N} \right) \tau^-_{\lambda_j}(\bm \lambda'_{\hat \jmath}) \tau^+_{\lambda_k}(\bm \lambda'_{\hat k}) \e^{\ii (\lambda_j-\lambda_k) L/2} + \right. \\
& \qquad \left. + \left( \frac{\lambda_k \! - \! \mu \! - \! \ii \gamma}{\lambda_k \! - \! \mu} \frac{\ii \gamma}{\lambda_k \! - \! \lambda_N}\frac{\ii \gamma}{\lambda_j \! - \! \mu}  - \frac{\lambda_k \! - \! \lambda_N \! - \! \ii \gamma}{\lambda_k \! - \! \lambda_N} \frac{\ii \gamma}{\lambda_k \! - \! \mu}\frac{\ii \gamma}{\lambda_j \! - \! \lambda_N} \right) \tau^-_{\lambda_k}(\bm \lambda'_{\hat k}) \tau^+_{\lambda_j}(\bm \lambda'_{\hat \jmath})  \e^{-\ii (\lambda_j-\lambda_k) L/2} \right) \Psi_{\bm \lambda_{\hat \jmath,\hat k},\mu}.
\end{align*}}
It follows that the coefficient of $\Psi_{\bm \lambda_{\hat \jmath}}$ ($1 \leq j \leq N-1$) in $\gamma  C_\mu \Psi_{\bm \lambda}$ equals
\begin{gather*}
\left( \frac{\ii \gamma}{\lambda_N \! - \! \mu} \frac{\ii \gamma}{\lambda_j \! - \! \lambda_N} \tau^-_{\lambda_j}(\bm \lambda'_{\hat \jmath}) \tau^+_\mu(\bm \lambda') -\frac{\ii \gamma}{\lambda_j \! - \! \mu} \tau^+_{\mu}(\bm \lambda'_{\hat \jmath})\tau^-_{\lambda_j}(\bm \lambda'_{\hat \jmath}) \right) \e^{\ii (\lambda_j-\mu) L/2}+ \qquad \qquad \\
\qquad \qquad \qquad + \left( \frac{\ii \gamma}{\lambda_j \! - \! \mu} \tau^-_{\mu}(\bm \lambda'_{\hat \jmath})\tau^+_{\lambda_j}(\bm \lambda'_{\hat \jmath}) + \frac{\ii \gamma}{\lambda_N \! - \! \mu} \frac{\ii \gamma}{\lambda_j \! - \! \lambda_N} \tau^-_\mu(\bm \lambda') \tau^+_{\lambda_j}(\bm \lambda'_{\hat \jmath}) \right) \e^{-\ii (\lambda_j-\mu) L/2} \\
\qquad = \frac{-\ii \gamma}{\lambda_j \! - \! \mu}\left( \tau^-_{\lambda_j}(\bm \lambda_{\hat \jmath}) \tau^+_\mu(\bm \lambda_{\hat \jmath}) \e^{\ii (\lambda_j-\mu) L/2} - \tau^-_\mu(\bm \lambda_{\hat \jmath}) \tau^+_{\lambda_j}(\bm \lambda_{\hat \jmath}) \e^{-\ii (\lambda_j-\mu) L/2} \right);
\end{gather*}
the corresponding terms together with the term proportional to $\Psi_{\bm \lambda'}$ combine to the desired sum over $j=1,\ldots,N$.
Furthermore, the coefficient of $\Psi_{\bm \lambda_{\hat \jmath, \hat k}, \mu}$ ($1 \leq j<k \leq N $) equals
\begin{align*}
\lefteqn{ \left( -\frac{\ii \gamma}{\lambda_j \! - \! \mu}\frac{\ii \gamma}{\lambda_k \! - \! \mu} + \frac{\ii \gamma}{\lambda_N \! - \! \mu}\left(
\frac{\lambda_j \! - \! \mu \! - \! \ii \gamma}{\lambda_j \! - \! \mu} \frac{\ii \gamma}{\lambda_j \! - \! \lambda_N}\frac{\ii \gamma}{\lambda_k \! - \! \mu} - \frac{\lambda_j \! - \! \lambda_N \! - \! \ii \gamma}{\lambda_j \! - \! \lambda_N} \frac{\ii \gamma}{\lambda_j \! - \! \mu}\frac{\ii \gamma}{\lambda_k \! - \! \lambda_N} \right) \right) \cdot } \\
& \hspace{60mm} \cdot \tau^-_{\lambda_j}(\bm \lambda'_{\hat \jmath}) \tau^+_{\lambda_k}(\bm \lambda'_{\hat k}) \e^{\ii (\lambda_j-\lambda_k) L/2} + \\
& + \left( -\frac{\ii \gamma}{\lambda_j \! - \! \mu}\frac{\ii \gamma}{\lambda_k \! - \! \mu} + \frac{\ii \gamma}{\lambda_N \! - \! \mu}\left( \frac{\lambda_k \! - \! \mu \! - \! \ii \gamma}{\lambda_k \! - \! \mu} \frac{\ii \gamma}{\lambda_k \! - \! \lambda_N}\frac{\ii \gamma}{\lambda_j \! - \! \mu}  - \frac{\lambda_k \! - \! \lambda_N \! - \! \ii \gamma}{\lambda_k \! - \! \lambda_N} \frac{\ii \gamma}{\lambda_k \! - \! \mu}\frac{\ii \gamma}{\lambda_j \! - \! \lambda_N} \right) \right) \cdot \\
& \hspace{60mm} \cdot \tau^-_{\lambda_k}(\bm \lambda'_{\hat k}) \tau^+_{\lambda_j}(\bm \lambda'_{\hat \jmath})  \e^{-\ii (\lambda_j-\lambda_k) L/2} \\
&= \frac{-\ii \gamma}{\lambda_j \! - \! \mu}\frac{\ii \gamma}{\lambda_k \! - \! \mu} \left( \tau^-_{\lambda_j}(\bm \lambda_{\hat \jmath}) \tau^+_{\lambda_k}(\bm \lambda_{\hat \jmath,\hat k}) \e^{\ii (\lambda_j-\lambda_k) L/2} +\tau^-_{\lambda_k}(\bm \lambda_{\hat k}) \tau^+_{\lambda_j}(\bm \lambda_{\hat \jmath,\hat k}) \e^{-\ii (\lambda_j-\lambda_k) L/2}  \right);
\end{align*}
the corresponding terms together with the term proportional to $\Psi_{\bm \lambda'_{\hat \jmath},\mu}=\Psi_{\bm \lambda_{\hat \jmath,\hat N},\mu}$ give the desired expression for the sum over $j$ and $k$.
\end{proof}

\section{The limiting case $J = \bR$} \label{infiniteJ}

Na\" ively one could expect that as $L \to \infty$ in the ABA for the bounded interval, we obtain that $\Psi_{\bm \lambda}$ describes the system of $N$ bosonic particles moving along $\bR$. However, as made clear in \cite[Section 8]{Gutkin1988}, this limit is very subtle. In particular, the QNLS creation operators $\lim_{L \to \infty} B_\lambda $ when seen as operators on $\ca H(\bR)$ have a trivial domain for $\lambda \in \bR$. Although the operator $B_\lambda $ does not explicitly depend on $L$, cf. \rfeqn{Bintoperators}, simply letting $B_\lambda $ act on an element of $\ca H_N(\bR)$ does not produce an element of $\ca H_{N+1}(\bR)$. \\

In particular, the Bethe wavefunctions $\Psi_{\bm \lambda}$ are not square-integrable. This is not surprising; the solutions for the non-interacting case, i.e. symmetrized plane waves $\ca S_N \e^{\ii \bm \lambda}$ are not square-integrable either, because the non-symmetric plane waves $\e^{\ii \bm \lambda}$ are not.
However, there is a weaker sense of completeness of the plane waves in $\f h_N = \L^2(\bR^N)$, and hence for the symmetrized plane waves in $\ca H_N(\bR)$, afforded by the Fourier transform on $\L^2(\bR^N)$, which is a unitary operator as per the Plancherel theorem (see, e.g. \cite{ReedSimon}).
For all $f \in \f h_N$
\[ f(\bm x) = \int_{\bR^N} \dd^N \bm \lambda \tilde f(\bm \lambda) \e^{\ii \inner{\bm \lambda}{\bm x}} \]
for some $\tilde f \in \f h_N$, which can be seen as a linear combination of (possibly uncountably infinitely many) plane waves $\e^{\ii \bm \lambda}$.
It is not difficult to see that an analogous result holds for all $F \in \ca H_N$:
\[ F(\bm x) = \int_{\bR^N} \dd^N \bm \lambda \tilde F(\bm \lambda) \left( \ca S_N \e^{\ii \bm \lambda}\right)(\bm x), \]
which is what is sometimes referred to in the literature as the ``completeness (or closure) of the free eigenstates in $\L^2(\bR^N)$'', despite the fact that the symmetrized plane waves are not square-integrable (\cite{Gaudin1971-1}, \cite[Section 2]{Gutkin1988}).
The result by Gaudin \cite{Gaudin1971-1,Gaudin1971-2} shows that the $\Psi_{\bm \lambda}$ play a similar role in $\ca H_N$; for all $F \in \ca H_N$ we have
\[ F(\bm x) = \int_{\bR^N} \dd^N \bm \lambda \tilde F_\gamma(\bm \lambda) \Psi_{\bm \lambda}(\bm x), \]
for some ``deformed Fourier coefficients'' $\tilde F_\gamma \in \ca H_N$.

\subsection{Action of the QISM integral operators on $\Psi_{\bm \lambda}$ defined on the line}

The operation $B_\lambda$ defined by the formulae in \rfeqn{Bintoperators} is well-defined as an operator$: \ca C(\bR^N)^{S_N} \to \ca C(\bR^{N+1})^{S_{N+1}}$ (the limits of integration are all bounded in \rfeqn{Bintoperators}).
The operators $A_\lambda,D_\lambda$ explicitly depend on $L$, cf. \rfeqn{Aintoperators} and \rfeqn{Dintoperators}.
\begin{prop} \emph{\cite[Section 8]{Gutkin1988}} \label{Linfty}
Let $L \in \bR_{>0}$, $\gamma \in \bR$, $\bm \lambda \in \bR^N_\n{reg}$ and $\mu \in \bC \setminus \{ \lambda_1,\ldots,\lambda_N\}$.
Then 
\[ B_\mu \Psi_{\bm \lambda} = \Psi_{\bm \lambda,\mu} \in \ca C(\bR^N) \]
and, for $\bm x \in \bR^N$,
\begin{align*}
\lim_{L \to \infty} \e^{\ii \mu L/2} \left(A_\mu \Psi_{\bm \lambda} \right)(\bm x) &= \tau^+_\mu(\bm \lambda) \Psi_{\bm \lambda}(\bm x), && \n{if } \Im \mu>0, \\
\lim_{L \to \infty} \e^{-\ii \mu L/2} \left(D_\mu \Psi_{\bm \lambda} \right)(\bm x) &= \tau^-_\mu(\bm \lambda) \Psi_{\bm \lambda}(\bm x), && \n{if } \Im \mu<0.
\end{align*}
Furthermore
\begin{align*}
\lim_{L \to \infty} \e^{\ii \mu L/2} \left(T_\mu \Psi_{\bm \lambda} \right)(\bm x) &= \tau^+_\mu(\bm \lambda) \Psi_{\bm \lambda}(\bm x), && \n{if } \Im \mu>0, \\
\lim_{L \to \infty} \e^{-\ii \mu L/2} \left(T_\mu \Psi_{\bm \lambda} \right)(\bm x) &= \tau^-_\mu(\bm \lambda) \Psi_{\bm \lambda}(\bm x), && \n{if } \Im \mu<0.
\end{align*}
Using these expressions and the aforementioned deformed Fourier formalism one can define $\lim_{L \to \infty} \e^{\ii \mu L/2} A_\mu$, $\lim_{L \to \infty} \e^{-\ii \mu L/2} D_\mu$ and $\lim_{L \to \infty} \e^{\pm \ii \mu L/2} T_\mu$ as operators on $\ca H_N(\bR)$ for suitable non-real values of $\mu$; this defines bounded operators on $\ca H_N(\bR)$.
\end{prop}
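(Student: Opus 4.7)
The plan is to reduce every claim to the explicit action formulas of Proposition \ref{APsiDPsi} for the bounded interval and then let $L \to \infty$, using that exponentials of the form $\e^{\ii \mu L/2}$ decay in the appropriate half-plane. The pointwise convergence on $\bR^N$ is straightforward once one notes that $\bm \lambda \in \bR^N_\n{reg}$ is real, so that any $L$-dependent factor $\e^{\ii(\mu - \lambda_j) L/2}$ or $\e^{\ii(\mu+\lambda_j) L/2}$ has modulus $\e^{-\Im \mu \cdot L/2}$.

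For the identity $B_\mu \Psi_{\bm \lambda} = \Psi_{\bm \lambda,\mu}$, I would note that the integral formula \rfeqn{Bintoperators} defines $B_\mu$ as an operator $\ca C(\bR^N)^{S_N} \to \ca C(\bR^{N+1})^{S_{N+1}}$ without reference to $L$: the integration limits in each elementary operator $\hat E_{\mu;\bm i}$ of \rfd{Eopsdefn} are all finite and depend only on the coordinates $x_{i_m}$, never on $\pm L/2$. Hence the recursion \rfeqn{Bethewavefnrecursion} holds verbatim for functions on $\bR^N$, giving the first claim. For the $A_\mu$ limit, multiply \rfeqn{APsi} by $\e^{\ii \mu L/2}$ to obtain
\[ \e^{\ii \mu L/2} A_\mu \Psi_{\bm \lambda} = \tau^+_\mu(\bm \lambda) \Psi_{\bm \lambda} + \sum_{j=1}^N \tau^+_{\lambda_j}(\bm \lambda_{\hat \jmath}) \frac{\ii \gamma}{\lambda_j - \mu} \e^{\ii (\mu - \lambda_j) L/2} \Psi_{\bm \lambda_{\hat \jmath},\mu}. \]
Evaluating at a fixed $\bm x \in \bR^N$, each $\Psi_{\bm \lambda_{\hat \jmath},\mu}(\bm x)$ is a finite sum of plane waves with denominators bounded away from zero thanks to the hypothesis $\mu \notin \{\lambda_1,\ldots,\lambda_N\}$ and $\bm \lambda \in \bR^N_\n{reg}$, so each term in the sum is $O(\e^{-\Im \mu \cdot L/2})$ and vanishes as $L \to \infty$ when $\Im \mu > 0$. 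The analogous argument applied to \rfeqn{DPsi} gives the $D_\mu$ limit when $\Im \mu < 0$.

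For the transfer matrix, $T_\mu = A_\mu + D_\mu$. When $\Im \mu > 0$, the factor in front of $D_\mu$ forces $\e^{\ii \mu L/2} D_\mu \Psi_{\bm \lambda}(\bm x)$ to involve the exponentials $\e^{\ii \mu L}$ and $\e^{\ii(\mu+\lambda_j) L/2}$, whose moduli are $\e^{-\Im \mu \cdot L}$ and $\e^{-\Im \mu \cdot L/2}$ respectively, and both tend to zero. Hence only the $A_\mu$-contribution survives in the limit, yielding the stated formula; the case $\Im \mu < 0$ is symmetric.

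The final assertion --- that these pointwise limits extend to bounded operators on $\ca H_N(\bR)$ --- is where the real work lies. I would use the Gaudin deformed Fourier decomposition $F = \int_{\bR^N} \tilde F_\gamma(\bm \lambda) \Psi_{\bm \lambda} \dd^N \bm \lambda$ and define the extended operators by multiplication of $\tilde F_\gamma$ by $\tau^\pm_\mu(\bm \lambda)$. Boundedness then follows because for non-real $\mu$ the function $\bm \lambda \mapsto \tau^\pm_\mu(\bm \lambda) = \prod_j (\lambda_j - \mu \mp \ii \gamma)/(\lambda_j - \mu)$ is in $\L^\infty(\bR^N)$: the denominators satisfy $|\lambda_j - \mu| \geq |\Im \mu| > 0$ uniformly in $\bm \lambda \in \bR^N$, and the numerator/denominator ratio tends to $1$ as $|\lambda_j| \to \infty$. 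The main obstacle is matching this abstractly defined bounded operator with the pointwise limit: one needs to interchange the deformed Fourier integral with the limit $L \to \infty$, which can be justified by dominated convergence once a uniform $L$-independent bound on $\e^{\ii \mu L/2}(A_\mu \Psi_{\bm \lambda})(\bm x)$ (or the corresponding $T_\mu$ version) is established, using the explicit expressions above and the fact that $\tau^\pm_{\lambda_j}(\bm \lambda_{\hat \jmath})$ is locally bounded in $\bm \lambda$.
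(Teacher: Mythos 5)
Your proposal is correct, but it takes a more self-contained route than the thesis does. The thesis's own proof of \rfp{Linfty} simply cites Gutkin (Thm.~8.2.3 for $B_\mu\Psi_{\bm\lambda}=\Psi_{\bm\lambda,\mu}$, Eqns.~(8.3.1)--(8.3.3) for the $A_\mu$, $D_\mu$ limits and the boundedness), and only supplies an argument for the transfer-matrix statements, namely the exponential damping of $\e^{\ii\mu L/2}D_\mu$ for $\Im\mu>0$ (and symmetrically for $\Im\mu<0$) --- which is exactly your third paragraph. What you do differently is to derive the $A_\mu$ and $D_\mu$ limits internally from \rfp{APsiDPsi}: multiplying \rfeqn{APsi} by $\e^{\ii\mu L/2}$ and using $\bm\lambda\in\bR^N_\n{reg}$, $\mu\notin\{\lambda_1,\ldots,\lambda_N\}$ to see that each correction term carries a factor of modulus $\e^{-\Im\mu\cdot L/2}$ with an $L$-independent coefficient. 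This buys a proof that stays entirely within the machinery already established in Chapter 2 (the $AB$- and $DB$-relations feeding into \rfp{APsiDPsi}), at the cost of having to note, as you do, that for fixed $\bm x$ the bounded-interval formula applies once $L$ is large enough, and that $B_\mu$ and $\Psi_{\bm\lambda}$ are $L$-independent so the recursion \rfeqn{Bethewavefnrecursion} extends to $\bR^{N+1}$. On the final boundedness claim you and the thesis are in the same position: the uniform bound $|\tau^\pm_\mu(\bm\lambda)|\leq(1+|\gamma|/|\Im\mu|)^N$ is the right estimate (the thesis states it explicitly only later, for the non-symmetric analogue \rfp{Linftypsi}), but turning it into operator boundedness on $\ca H_N(\bR)$ requires Gaudin's completeness/Plancherel-type result for the deformed Fourier transform, which the thesis also only cites; you correctly flag this as the remaining gap rather than claiming to close it.
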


\begin{proof}
The expression for $B_\mu  \Psi_{\bm \lambda}$ in the limit $L \to \infty$ is proven in \cite[Thm.~8.2.3]{Gutkin1988}. 
We refer to \cite[Eqns.~(8.3.1)~and~(8.3.2)]{Gutkin1988} for the first expressions for $\e^{\ii \mu L/2} A_\mu  \Psi_{\bm \lambda}$ and $\e^{-\ii \mu L/2} D_\mu  \Psi_{\bm \lambda}$ in the limit $L \to \infty$ and \cite[Eqn.~(8.3.3)]{Gutkin1988} for the boundedness of $A$ and $D$.
The statement about the transfer matrices follows by noting that for $\Im \mu>0$, $\e^{\ii \mu L/2}D_\mu $ is exponentially damped as $L \to \infty$, and likewise for the case $\Im \mu<0$.
\end{proof}

\section{The quantum determinant and the Yangian}

Using the commutation relations from \rfeqn{Tcommrel} we may construct an element of $\ca H$ which commutes with $A$, $B$, $C$ and $D$, i.e. it is in the centre of the Yang-Baxter algebra.

\begin{defn}
Let $\gamma \in \bR$ and $\lambda \in \bC$. Write $\lambda_\pm = \lambda \mp \ii \gamma/2$.%
\nc{gllpm}{$\lambda_{\pm}$}{Shorthand for $\lambda \mp \ii \gamma/2$} 
The \emph{quantum determinant} of the monodromy matrix of the QNLS model is the operator
\begin{equation} 
\label{qdet}
\qdet \ca T_\lambda = A_{\lambda_+} D_{\lambda_-} - \gamma B_{\lambda_+} C_{\lambda_-}. 
\end{equation}
\nc{rql}{$\qdet \ca T_\lambda$}{Quantum determinant of the monodromy matrix \nomrefeqpage}
\end{defn}
\vspace{-8mm}

Note that $\qdet \ca T_\lambda \in \End(\ca H_N)$ and it satisfies
\begin{equation} \label{qdetselfadjoint} \qdet \ca T_\lambda^* = \qdet \ca T_{\bar \lambda}.
\end{equation}

\begin{prop} \label{qdetMcomm}
$[\ca T^{j \, k}_\lambda,\qdet \ca T_\mu]=0$ for all $j,k = 1,2$ and all $\lambda,\mu \in \bC$ with $\lambda \ne \mu_\pm$. That is, $\qdet \ca T_\mu$ is in the centre of the Yang-Baxter algebra.
\end{prop}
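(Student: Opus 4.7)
The strategy is to realize $\qdet \ca T_\mu$ as a scalar operator obtained by projecting a shifted product $\ca T^{(2)}_{\mu_+}\ca T^{(1)}_{\mu_-}$ onto the one-dimensional antisymmetric subspace of an auxiliary $\bC^2 \otimes \bC^2$, and then to use the Yang-Baxter equation in a three-fold auxiliary tensor product to commute $\ca T^{(0)}_\lambda$ past this projected product. The key identity is $\mu_- - \mu_+ = \ii \gamma$, so $\ca R_{12}(\ii \gamma)$ reduces to $2P^-_{12}$, where $P^-_{12} = \tfrac12(1 - \ca P)$ is the antisymmetrizer on $\bC^2 \otimes \bC^2$ projecting onto the line spanned by $v = e_1 \otimes e_2 - e_2 \otimes e_1$. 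Reading off the coefficient of $v$ in the YBE $\ca R_{12}(\ii \gamma)\,\ca T^{(1)}_{\mu_-}\ca T^{(2)}_{\mu_+} = \ca T^{(2)}_{\mu_+}\ca T^{(1)}_{\mu_-}\,\ca R_{12}(\ii \gamma)$ yields
\[ P^-_{12}\,\ca T^{(1)}_{\mu_-}\ca T^{(2)}_{\mu_+}\,P^-_{12} = \qdet \ca T_\mu \cdot P^-_{12} = \ca T^{(2)}_{\mu_+}\ca T^{(1)}_{\mu_-}\,P^-_{12}, \]
and, as byproducts, the alternative formula $\qdet \ca T_\mu = D_{\mu_+} A_{\mu_-} - \gamma C_{\mu_+} B_{\mu_-}$ together with the ``null'' identities $B_{\mu_+} A_{\mu_-} = A_{\mu_+} B_{\mu_-}$ and $D_{\mu_+} C_{\mu_-} = C_{\mu_+} D_{\mu_-}$.

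Next, I introduce a third auxiliary copy $\bC^2_0$ carrying the spectral parameter $\lambda$ and chain the YBEs for $(\lambda, \mu_-)$ on $(0,1)$ and $(\lambda, \mu_+)$ on $(0,2)$ to obtain the three-fold identity
\[ \ca R_{02}(\lambda - \mu_+)\,\ca R_{01}(\lambda - \mu_-)\,\ca T^{(0)}_\lambda \ca T^{(1)}_{\mu_-} \ca T^{(2)}_{\mu_+} = \ca T^{(1)}_{\mu_-} \ca T^{(2)}_{\mu_+}\,\ca T^{(0)}_\lambda\,\ca R_{02}(\lambda - \mu_+)\,\ca R_{01}(\lambda - \mu_-). \]
The technical heart is the \emph{scalar reduction}
\[ P^-_{12}\,\ca R_{02}(\lambda - \mu_+)\,\ca R_{01}(\lambda - \mu_-) = c(\lambda)\,P^-_{12}, \]
where $c(\lambda)$ is an explicit nonzero rational function of $\lambda$; this follows from the braid relation \rf{ybeR} (at the specialization $\ca R_{12}(\ii \gamma) = 2P^-_{12}$, which gives $P^-_{12}\,\ca R_{02}\ca R_{01} = \ca R_{01}\ca R_{02}\,P^-_{12}$) combined with a direct computation on $e_a \otimes v$, in which the relation $(\lambda - \mu_+) = (\lambda - \mu_-) + \ii \gamma$ causes the non-scalar cross terms to cancel.

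Sandwiching the three-fold identity by $P^-_{12}$ on both sides, applying the scalar reduction on each, and using that $\ca T^{(0)}_\lambda$ commutes with $P^-_{12}$ (they act on disjoint auxiliary spaces), the common factor $c(\lambda)$ cancels, and substituting the quantum-determinant identity twice yields
\[ \ca T^{(0)}_\lambda\,\qdet \ca T_\mu\cdot P^-_{12} = \qdet \ca T_\mu\,\ca T^{(0)}_\lambda\cdot P^-_{12}. \]
Since $P^-_{12}$ has nonzero image for each $e_a \in \bC^2_0$ (e.g., $P^-_{12}(e_a \otimes e_1 \otimes e_2) = \tfrac{1}{2}\,e_a \otimes v \ne 0$), reading off the $\bC^2_0$-entries gives $[\ca T^{j\,k}_\lambda, \qdet \ca T_\mu] = 0$ for every $(j,k)$, initially on the dense set $\{c(\lambda) \ne 0\}$ and then everywhere on $\{\lambda \ne \mu_\pm\}$ by analytic continuation.

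The main technical obstacle is the scalar reduction: although neither $\ca R_{01}$ nor $\ca R_{02}$ individually preserves the antisymmetric subspace of $\bC^2_1 \otimes \bC^2_2$, their ordered product does, by a cancellation hinging on $\mu_- - \mu_+ = \ii \gamma$; the hypothesis $\lambda \ne \mu_\pm$ enters here to ensure the $\ca R$-matrices are invertible and the scalar $c(\lambda)$ is generically nonzero. A less elegant but direct alternative would be to expand $[A_\lambda, \qdet \ca T_\mu]$ and $[B_\lambda, \qdet \ca T_\mu]$ using \rfc{Tcommrelcor} and the two null identities, and then to deduce centrality of $C_\lambda$ and $D_\lambda$ with $\qdet \ca T_\mu$ by taking adjoints via \rfeqn{ABCDadjoint} and \rfeqn{qdetselfadjoint}.
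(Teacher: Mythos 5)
Your argument is correct, but it is a genuinely different route from the one in the paper. The paper's proof is a direct computation inside the Yang--Baxter algebra: it first reduces to showing $[A_\lambda,\qdet \ca T_\mu]=[B_\lambda,\qdet \ca T_\mu]=0$ by taking formal adjoints via \rfeqn{ABCDadjoint} and \rfeqn{qdetselfadjoint}, and then grinds through the exchange relations of \rfc{Tcommrelcor} term by term, deriving the null identity $B_{\mu_+}A_{\mu_-}=A_{\mu_+}B_{\mu_-}$ along the way --- i.e.\ exactly the ``less elegant but direct alternative'' you mention in your last sentence. Your primary route is the standard fusion argument: the degeneration $\ca R_{\ii\gamma}=1\otimes 1-\ca P=2P^-_{12}$ (valid here since $\mu_--\mu_+=\ii\gamma$) turns the two-fold exchange relation into $P^-_{12}\,\ca T^{(1)}_{\mu_-}\ca T^{(2)}_{\mu_+}\,P^-_{12}=\qdet\ca T_\mu\cdot P^-_{12}$, and the three-fold YBE together with your scalar reduction $P^-_{12}\,\ca R_{02}(\lambda-\mu_+)\ca R_{01}(\lambda-\mu_-)=c(\lambda)P^-_{12}$ (with $c(\lambda)=1-\tfrac{\ii\gamma}{\lambda-\mu_-}$, whose non-scalar cross terms indeed cancel precisely because $(\lambda-\mu_+)=(\lambda-\mu_-)+\ii\gamma$) then forces $\ca T^{(0)}_\lambda$ to commute with the projected product. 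What your approach buys: it explains \emph{why} the quantum determinant is central (it is the ``transfer matrix'' of the one-dimensional fused auxiliary representation), it yields the alternative orderings $D_{\mu_+}A_{\mu_-}-\gamma C_{\mu_+}B_{\mu_-}$ and the null identities for free as other matrix entries of the fused relation, and it generalizes immediately to $Y(\f{gl}_n)$. What the paper's computation buys: it stays entirely within the already-established relations of \rfc{Tcommrelcor}, requires no auxiliary-space bookkeeping, and in particular never has to check that the operator produced by the antisymmetric projection coincides with the specific normalization $A_{\mu_+}D_{\mu_-}-\gamma B_{\mu_+}C_{\mu_-}$ of \rfeqn{qdet} --- a point you gloss over slightly: depending on conventions the projection most naturally produces $A_{\mu_-}D_{\mu_+}-\gamma C_{\mu_-}B_{\mu_+}$, and identifying this with the paper's $\qdet\ca T_\mu$ costs one more application of the $AD$- and $BC$-relations at $\lambda-\mu=\mp\ii\gamma$. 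Your handling of the locus where $c(\lambda)$ vanishes or the $\ca R$-matrices degenerate, by rationality in $\lambda$ and continuation to all $\lambda\ne\mu_\pm$, is fine.
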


\begin{proof}
In light of \rfeqn{ABCDadjoint} and \rfeqn{qdetselfadjoint} it suffices to prove $[A_\lambda,\qdet \ca T_\mu]=[B_\lambda ,\qdet \ca T_\mu]=0$.
Note that 
\[ \left[A_\lambda,\qdet \ca T_\mu\right] = \left[ A_\lambda,A_{\mu_+}D_{\mu_-} \right]-\gamma \left[ A_\lambda,B_{\mu_+}C_{\mu_-}  \right]. \]
Dealing with the first commutator, we have
\[ \left[ A_\lambda,A_{\mu_+}D_{\mu_-}  \right] = A_{\mu_+} \left[ A_\lambda,D_{\mu_-} \right] \\
= -\frac{\ii \gamma^2}{\lambda \! - \! \mu_-}A_{\mu_+}\left( B_\lambda C_{\mu_-}  - B_{\mu_-}C_\lambda \right),\]
by virtue of \rfeqn{ADcommrel}.
It follows that for $\left[A_\lambda,\qdet \ca T_\mu\right]=0$ it suffices to prove
\begin{equation} \label{eqn261} \left[ A_\lambda,B_{\mu_+}C_{\mu_-}  \right] = -\frac{\ii \gamma}{\lambda \! - \! \mu_-}A_{\mu_+}\left( B_\lambda C_{\mu_-}  - B_{\mu_-}C_\lambda \right). \end{equation}
\rfeqn{CAcommrel} yields
\begin{align}
C_{\mu_-} A_\lambda 
&= \frac{\lambda - \mu_+}{\lambda - \mu_-} A_\lambda C_{\mu_-}  - \frac{\ii \gamma}{\lambda - \mu_-} A_{\mu_-}C_\lambda . \label{eqn262}
\end{align}
On the other hand, \rfeqn{BAcommrel} gives
\begin{align}
B_{\mu_+}A_\lambda 
&= \frac{\lambda - \mu_-}{\lambda - \mu_+} A_\lambda B_{\mu_+} 
+ \frac{\ii \gamma}{\lambda - \mu_+} A_{\mu_+} B_\lambda . \label{eqn263}
\end{align}
In particular it follows that
\begin{equation} B_{\mu_+}A_{\mu_-} = A_{\mu_+}B_{\mu_-}. \label{eqn264} \end{equation}
Combining \rfeqnser{eqn262}{eqn264} we find that
\begin{align*}
B_{\mu_+}C_{\mu_-} A_\lambda &= \frac{\lambda - \mu_+}{\lambda - \mu_-} B_{\mu_+}A_\lambda C_{\mu_-}  - \frac{\ii \gamma}{\lambda - \mu_-} B_{\mu_+}A_{\mu_-}C_\lambda \\
&= \frac{\lambda - \mu_+}{\lambda - \mu_-} \left( \frac{\lambda - \mu_-}{\lambda - \mu_+} A_\lambda B_{\mu_+} + \frac{\ii \gamma}{\lambda - \mu_+} A_{\mu_+} B_\lambda  \right) C_{\mu_-} - \frac{\ii \gamma}{\lambda - \mu_-} A_{\mu_+}B_{\mu_-}C_\lambda \\
&=A_\lambda B_{\mu_+}C_{\mu_-}  + \frac{\ii \gamma}{\lambda - \mu_-} A_{\mu_+} \left( B_\lambda  C_{\mu_-}  - B_{\mu_-}C_\lambda  \right),
\end{align*}
which is equivalent to \rfeqn{eqn261}. \\

To show that $B_\lambda $ and $\qdet \ca T_\mu$ commute, we note that 
\[ \left[B_\lambda ,\qdet \ca T_\mu\right] =\left[ B_\lambda , A_{\mu_+}D_{\mu_-} \right]-\gamma B_{\mu_+} \left[ B_\lambda ,C_{\mu_-}  \right]. \]
We have from \rfeqn{DBcommrel}
\begin{equation} \label{eqn265} D_{\mu_-}B_\lambda  = \frac{\lambda-\mu_+}{\lambda-\mu_-} B_\lambda D_{\mu_-} - \frac{\ii \gamma}{\lambda-\mu_-} B_{\mu_-}D_\lambda , \end{equation}
and from \rfeqn{BAcommrel}
\begin{equation} \label{eqn266} B_\lambda A_{\mu_+} 
= \frac{\lambda - \mu_+ + \ii \gamma}{\lambda - \mu_+} A_{\mu_+} B_\lambda  
- \frac{\ii \gamma}{\lambda - \mu_+} A_\lambda B_{\mu_+}. \end{equation}
From \rfeqnser{eqn265}{eqn266} we obtain
\begin{align*} \left[ B_\lambda ,A_{\mu_+}D_{\mu_-} \right] 
&= \left( \frac{\lambda - \mu_+ + \ii \gamma}{\lambda - \mu_+} A_{\mu_+} B_\lambda  
- \frac{\ii \gamma}{\lambda - \mu_+} A_\lambda B_{\mu_+}\right) D_{\mu_-} + \\
& \qquad - A_{\mu_+} \left( \frac{\lambda-\mu_+}{\lambda-\mu_-} B_\lambda D_{\mu_-} - 
\frac{\ii \gamma}{\lambda-\mu_-} B_{\mu_-}D_\lambda  \right) \\
&= -\frac{\ii \gamma}{\lambda-\mu_-}\frac{\ii \gamma}{\lambda-\mu_+}A_{\mu_+}B_\lambda D_{\mu_-} 
+ \frac{\ii \gamma}{\lambda-\mu_-} A_{\mu_+} B_{\mu_-}D_\lambda  
- \frac{\ii \gamma}{\lambda-\mu_+} A_\lambda B_{\mu_+} D_{\mu_-}. 
\end{align*}
On the other hand, we have by virtue of \rfeqn{BCcommrel} and \rfeqnser{eqn263}{eqn264},
\begin{align*} 
\gamma B_{\mu_+} \left[ B_\lambda ,C_{\mu_-}  \right] &= -\frac{\ii \gamma}{\lambda-\mu_-} B_{\mu_+} \left( A_\lambda D_{\mu_-}-A_{\mu_-}D_\lambda \right) \displaybreak[2] \\
&= -\frac{\ii \gamma}{\lambda-\mu_-} \left(  \left( \frac{\lambda - \mu_-}{\lambda - \mu_+} A_\lambda B_{\mu_+} - \frac{\ii \gamma}{\lambda - \mu_+} A_{\mu_+} B_\lambda  \right) D_{\mu_-} - A_{\mu_+} B_{\mu_-}D_\lambda \right)  \displaybreak[2] \\
&= \frac{\ii \gamma}{\lambda-\mu_-} A_{\mu_+} B_{\mu_-}D_\lambda  
 - \frac{\ii \gamma}{\lambda- \mu_+} A_\lambda B_{\mu_+} D_{\mu_-}  - \frac{\ii \gamma}{\lambda-\mu_-} \frac{\ii \gamma}{\lambda - \mu_+} A_{\mu_+} B_\lambda  D_{\mu_-},
\end{align*}
and indeed we see that $\left[B_\lambda , A_{\mu_+}D_{\mu_-} \right] = \gamma \left[ B_\lambda , B_{\mu_+}C_{\mu_-}  \right]$.
\end{proof}

\begin{rem}
This result means that $\qdet \ca T_\mu$ plays the role of the Casimir element in the Yang-Baxter algebra.
\end{rem}

\begin{cor} \label{qdetPsi}
Let $\bm \lambda \in \bC^N_\n{reg}$.
Also, let $\mu \in \bC$ such that both $\mu_\pm$ are unequal to any of the $\lambda_j$.
Then
\begin{equation} \label{qdetBethevector} \qdet \ca T_\mu \Psi_{\bm \lambda} = \e^{-\gamma L/2} \Psi_{\bm \lambda}. \end{equation}
\end{cor}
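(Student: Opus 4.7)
The plan is to reduce the action of $\qdet \ca T_\mu$ on $\Psi_{\bm \lambda}$ to its action on the pseudo-vacuum $\Vac$, and then compute the latter directly from the vacuum eigenvalues of $A$, $B$, $C$, $D$.

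First I would invoke Proposition \ref{qdetMcomm}: since $\qdet \ca T_\mu$ lies in the centre of the Yang-Baxter algebra, it commutes with each $B_{\lambda_j}$, provided the required non-coincidence conditions hold (i.e. $\mu_\pm \ne \lambda_j$ for all $j$, which is exactly the hypothesis). Using the definition $\Psi_{\bm \lambda} = \prod_{j=1}^N B_{\lambda_j} \Vac$ from \rfd{Bethewavefnintops}, we may therefore move $\qdet \ca T_\mu$ past all of the creation operators and write
\[ \qdet \ca T_\mu \Psi_{\bm \lambda} = \left( \prod_{j=1}^N B_{\lambda_j} \right) \qdet \ca T_\mu \Vac. \]

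Next I would compute $\qdet \ca T_\mu \Vac$ from the definition $\qdet \ca T_\mu = A_{\mu_+} D_{\mu_-} - \gamma B_{\mu_+} C_{\mu_-}$. Property \ref{ABCDvacuum} of \rfp{ABCDproperties} gives $C_{\mu_-}\Vac = 0$, so the second term contributes nothing. The first term gives $A_{\mu_+} D_{\mu_-} \Vac = A_{\mu_+} \e^{\ii \mu_- L/2} \Vac = \e^{\ii(\mu_- - \mu_+) L/2} \Vac$. Substituting $\mu_\pm = \mu \mp \ii \gamma/2$ yields $\mu_- - \mu_+ = \ii \gamma$, and hence $\qdet \ca T_\mu \Vac = \e^{-\gamma L/2} \Vac$.

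Combining the two observations, $\qdet \ca T_\mu \Psi_{\bm \lambda} = \e^{-\gamma L/2} \prod_j B_{\lambda_j} \Vac = \e^{-\gamma L/2} \Psi_{\bm \lambda}$, as claimed. There is no real obstacle here: all the work has already been done in establishing centrality of $\qdet \ca T_\mu$ (\rfp{qdetMcomm}); the only thing to check carefully is that the commutation of $\qdet \ca T_\mu$ with each $B_{\lambda_j}$ is legitimate, which follows because the hypothesis $\mu_\pm \notin \{\lambda_1,\ldots,\lambda_N\}$ ensures that $\lambda_j \ne \mu_\pm$ at each step, matching the restriction in \rfp{qdetMcomm}.
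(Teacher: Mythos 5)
Your argument is correct and is essentially the paper's own proof: commute $\qdet \ca T_\mu$ past the $B_{\lambda_j}$ using its centrality (\rfp{qdetMcomm}), then evaluate $\qdet \ca T_\mu \Vac = A_{\mu_+}D_{\mu_-}\Vac = \e^{-\gamma L/2}\Vac$ since $C_{\mu_-}\Vac = 0$. The only difference is that you spell out explicitly why the $B_{\mu_+}C_{\mu_-}$ term vanishes, which the paper leaves implicit.
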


\begin{proof}
Using \rfl{qdetMcomm}, we have
\[ \qdet \ca T_\mu \Psi_{\bm \lambda} = \qdet \ca T_\mu \left( \prod_{j=1}^N B_{\lambda_j} \right) \Vac =  \left( \prod_{j=1}^N B_{\lambda_j} \right) \qdet \ca T_\mu \Vac, \]
and
\[ \qdet \ca T_\mu \Vac = A_{\mu_+}D_{\mu_-} \Vac = \e^{-\ii(\mu-\ii \gamma/2)L/2} \e^{\ii(\mu+\ii \gamma/2)L/2} \Vac  = \e^{-\gamma L/2}\Vac. \qedhere\]
\end{proof}

\rfl{qdetPsi} tells us that the Bethe wavefunction $\Psi_{\bm \lambda}$ is an eigenfunction of the quantum determinant, with the corresponding eigenvalue independent of $\bm \lambda$. We note that this is without imposing the BAEs \rfeqn{BAE} on $\bm \lambda$. Since the $\Psi_{\bm \lambda}$ are a complete set in $\ca H$, it follows that $\qdet \ca T_\mu$ acts as multiplication by the constant $\e^{-\gamma L/2}$ throughout $\ca H$.\\

The quantum determinant provides a connection with the Yangian \cite{ChariPressley, Molev, Drinfeld1985, Drinfeld1986}.
The \emph{Yangian} of $\f{gl}_2$ is a Hopf algebra, more precisely a deformation of the current algebra of $\f{gl}_2$ \cite[Chapter 12]{ChariPressley}. More precisely, it is defined to be the algebra $Y(\f{gl}_2)$ generated by the elements $\ca T^{j \, k}_{(l)}$ where $j,k \in \{1,2\}$ and $l \in \bZ_{\geq 0}$ subject to
\[ \left[\ca T^{j_1 \, k_1}_{(l_1+1)},\ca T^{j_2 \, k_2}_{(l_2)}\right] - \left[\ca T^{j_1 \, k_1}_{(l_1)},\ca T^{j_2 \, k_2}_{(l_2+1)}\right] = - \left( \ca T^{j_2 \, k_1}_{(l_1)} \ca T^{j_1 \, k_2}_{(l_2)} - \ca T^{j_2 \, k_1}_{(l_2)} \ca T^{j_1 \, k_2}_{(l_1)} \right), \]
which is what one would obtain from \rfc{Tcommrelcor} by writing $\ca T^{j \, k}_\lambda = \sum_{l \geq -1}  \ca T^{j \, k}_{(l)} \lambda^{-l+1}$ and defining $\ca T^{j \, k}_{(-1)} = \delta_{j \, k}$, %
\nc{gdl}{$\delta_{x \, y}$}{Kronecker delta}%
where we have introduced the \emph{Kronecker delta}
\[ \delta: X \times X \to \{0,1\}: (x,y) \mapsto \delta_{x \, y} := \begin{cases} 1,& x=y \\ 0,& \n{otherwise}, \end{cases} \]
for any set $X$.
Owing to the exchange relation \rf{ybeRT} $Y(\f{gl}_2)$ becomes a (quasitriangular) Hopf algebra with comultiplication $\Delta$, counit $\epsilon$ and antipode $s$ given by
\[ (\Delta \otimes \n{id}) \ca T_\lambda = \left(\ca T_\lambda\right)_{1 \, 2} \left(\ca T_\lambda\right)_{1 \, 3}, \qquad 
\left( \epsilon \otimes \n{id} \right) \ca T_\lambda = 1, \qquad \left(s \otimes \n{id}\right) \ca T_\lambda = \ca T_\lambda^{-1}. \]
The quantum determinant $\qdet \ca T_\lambda$ generates the centre of $Y(\f{gl}_2)$.

\section{Alternative formulae for the QISM integral operators} \label{altformulae}

We return to the explicit integral formulae for the operators $A_\mu,B_\mu,C_\mu,D_\mu$ as presented in \rfeqnser{Aintoperators}{Dintoperators}. There is a practical disadvantage to these formulae, which we will explain now.
Functions $f \in \ca H_N$ are generally defined on the fundamental alcove $J^N_+$ (or its closure) and then extended to $J^N$ by ordering the entries of the argument $\bm x$ in decreasing order, i.e. by applying $\sum_{w \in S_N} w \chi_{J^N_+}$ (only that term with $w^{-1} \bm x \in J^N_+$ remains). 
This means that in order to calculate $(A_\lambda f)(\bm x)$, say, for each $\bm i \in \f I^n_{N}$ we need to arrange $(x_1, \ldots, \widehat{x_{i_1}}, \ldots, \widehat{x_{i_n}}, \ldots, x_N, y_1, \ldots, y_n) \in J^N$ in a decreasing order so that we can use the formula for $F(\bm x)$ for $\bm x \in J^N_+$.
Because $y_m$ runs from $x_{i_m}$ down to $x_{i_{m+1}}$, it can assume any position among the intermediate coordinates $x_{i_m-1} > x_{i_m-2} > \ldots > x_{i_{m+1}+1}$.
In other words, in general we have no control over the order of the arguments of $F$, which leads to computational issues; this we will now address.

\begin{exm} \label{examplealtform}
Consider the action of $A_\lambda$ on a function $F \in \ca H_4$.
Specifically, concentrate on the terms in the summand with $n=2$.
The set $\f I^2_4$ contains the tuples $(1,2)$, $(1,3)$, $(1,4)$, $(2,3)$, $(2,4)$, $(3,4)$.
Taking $\bm i = (3,4)$ gives
\[ \left( \bar E^+_{\lambda;3 \, 4} F \right)(x_1,x_2,x_3,x_4) = \int_{x_4}^{x_3} \dd y_1 \int_{-L/2}^{x_4} \dd y_2 \e^{\ii \lambda (x_3+x_4-y_1-y_2)}  F \left( x_1, x_2, y_1, y_2 \right). \]
Note that $x_1 > x_2 > y_1 > y_2$ so the arguments of $F$ for this term can be unconditionally rearranged to be in the alcove $J^4_+$.
However, taking $\bm i = (1,4)$ gives
\[ \left( \bar E^+_{\lambda;1 \, 4} F \right)(x_1,x_2,x_3,x_4)= \int_{x_4}^{x_1} \dd y_1 \int_{-L/2}^{x_4} \dd y_2 \e^{\ii \lambda (x_1+x_4-y_1-y_2)}  F \left( x_2, x_3, y_1, y_2 \right). \]
We do have $x_2>x_3>y_2$ but for $y_1$ we have three possibilities: $y_1 > x_2$, $x_2>y_1>x_3$, and $x_3>y_1>x_4$, so we need to do some work before we can rearrange the arguments of $F$ for this term.
The idea is simply to split up the integral over $[x_4,x_1]$ in three integrals: $\int_{x_4}^{x_1} = \int_{x_2}^{x_1}+\int_{x_3}^{x_2}+\int_{x_4}^{x_3}$.
Hence
\begin{align*} 
\left( \bar E^+_{\lambda;1 \, 4} F \right)(x_1,x_2,x_3,x_4) &= \int_{x_2}^{x_1} \dd y_1 \int_{-L/2}^{x_4} \dd y_2 \e^{\ii \lambda (x_1+x_4-y_1-y_2)} F \left( y_1,x_2, x_3, y_2 \right) + \\
& \qquad + \int_{x_3}^{x_2} \dd y_1 \int_{-L/2}^{x_4} \dd y_2 \e^{\ii \lambda (x_1+x_4-y_1-y_2)} F \left( x_2, y_1,x_3, y_2 \right)+ \\
& \qquad +\int_{x_4}^{x_3} \dd y_1 \int_{-L/2}^{x_4} \dd y_2 \e^{\ii \lambda (x_1+x_4-y_1-y_2)} F \left( x_2, x_3, y_1, y_2 \right).  
\end{align*}
Now note that in each of the three sets of arguments of $f$, we have the correct ordering.
\end{exm}

We will now formalize this in generality. 
Given nonnegative integers $n \leq N$ and $\bm i \in \f I^n_{N}$, introduce the following sets:
\begin{align*}
\f I^+_{\bm i} &:= \set{\bm j=(j_1,\ldots,j_n) \in \f I^n_{N}}{i_m \leq j_m < i_{m+1}, \n{ for } m=1,\ldots,n}, \\
\tilde{\f I}^+_{\bm i} &:= \set{\bm j=(j_1,\ldots,j_{n-1}) \in \f I^{n-1}_{N-1}}{i_m \leq j_m < i_{m+1}, \n{ for } m=1,\ldots,n-1}, \\
\tilde{\f I}^-_{\bm i} &:= \set{\bm j=(j_1,\ldots,j_{n+1}) \in \f I^{n+1}_{N+1}}{i_{m-1} < j_m \leq i_m, \n{ for }m=1,\ldots,n+1}, \\
\f I^-_{\bm i} &:= \set{\bm j=(j_1,\ldots,j_n) \in \f I^n_{N}}{i_{m-1} < j_m \leq i_m, \n{ for }m=1,\ldots,n}. 
\end{align*}
We recall that if $\bm j \in \f I^n_{N}$ we use the notation $x_{j_0}=L/2$ and $x_{j_{n+1}}=-L/2$.

\begin{lem} \label{elementaryintopsaltformulae}
Let $\lambda \in \bC$ and $n=0, \ldots, N$. 
For $F \in \ca H_N$, $\bm x \in J^{N+1}_+, \bm i \in \f I^{n+1}_{N+1}$ we have
\begin{align*}
\left( \hat E_{\lambda;\bm i} F \right)(\bm x) &= \sum_{\bm j \in \tilde{\f I}^+_{\bm i}} \left( \prod_{m=1}^n \int_{x_{j_m}+1}^{x_{j_m}} \dd y_m \right) \e^{\ii \lambda \left( \sum_{m=1}^{n+1} x_{i_m} - \sum_{m=1}^n y_m \right)}  \cdot \\
& \qquad \cdot F\left( x_1, \ldots, \left( \widehat{x_{i_m}}, \ldots, x_{j_m}, y_m, x_{j_m+1} \right)_{m=1}^n, \ldots, \widehat{x_{i_{n+1}}}, \ldots, x_N \right);
\end{align*}
for $F \in \ca H_N$, $\bm x \in J^N_+, \bm i \in \f I^n_{N}$ we have
\begin{align*}
\left( \bar E^+_{\lambda;\bm i} F \right)(\bm x) &= \sum_{\bm j \in \f I^+_{\bm i}} \left( \prod_{m=1}^n \int_{x_{j_m}+1}^{x_{j_m}} \dd y_m \right) \e^{\ii \lambda \sum_{m=1}^n (x_{i_m}-y_m)}  \cdot \\
& \qquad \cdot F\left( x_1, \ldots, \left( \widehat{x_{i_m}}, \ldots, x_{j_m}, y_m, x_{j_m+1} \right)_{m=1}^n, \ldots, x_N \right), \displaybreak[2] \\
\left( \bar E^-_{\lambda;\bm i} F \right)(\bm x) &= \sum_{\bm j \in \f I^-_{\bm i}} \left( \prod_{m=1}^n \int_{x_{j_m}}^{x_{j_m}-1} \dd y_m \right) \e^{\ii \lambda \sum_{m=1}^n (x_{i_m}-y_m)}  \cdot \\
& \qquad \cdot F\left( x_1, \ldots, \left( x_{j_m-1}, y_m, x_{j_m}, \ldots, \widehat{x_{i_m}} \right)_{m=1}^n, \ldots, x_N \right);
\end{align*}
and for $F \in \ca H_{N+1}$, $\bm x \in J^N_+, \bm i \in \f I^n_{N}$ we have
\begin{align*}
\left( \check E_{\lambda;\bm i} F \right)(\bm x) &= \sum_{\bm j \in \tilde{\f I}^-_{\bm i}} \left( \prod_{m=1}^n \int_{x_{j_m}+1}^{x_{j_m}} \dd y_m \right) \e^{\ii \lambda \left( \sum_{m=1}^n x_{i_m} - \sum_{m=1}^{n+1} y_m \right)}  \cdot \\
& \quad \cdot F\left( x_1, \ldots, \left( x_{j_m-1}, y_m, x_{j_m}, \ldots, \widehat{x_{i_m}} \right)_{m=1}^n, \ldots, x_{j_{n+1}-1}, y_{n+1}, x_{j_{n+1}}, \ldots, x_N \right).
\end{align*}
\end{lem}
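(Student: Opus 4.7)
The four identities all arise from the same observation: each defining integral is a product over $m$ of integrals $\int_{x_{i_{m+1}}}^{x_{i_m}} \dd y_m$ (or their reversed counterparts) applied to a function whose arguments are not yet arranged to lie in the fundamental alcove. Since $\bm x \in J^N_+$, the intermediate coordinates $x_j$ with $i_m < j < i_{m+1}$ partition each integration interval into sub-cells on which the position of $y_m$ in the global decreasing order is fixed. On each such cell the tuple $(\bm x_{\hat{\bm \imath}}, y_1, \ldots, y_n)$ can be sorted into decreasing order without affecting the value of $F$, since $F$ is $S_N$-invariant, and the sorted form is precisely the expression in the fundamental alcove that appears in the lemma. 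This is a direct generalization of the splitting illustrated in \rfex{examplealtform}.

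Concretely, for $\bar E^+_{\lambda;\bm i}$ one writes
\[ \int_{x_{i_{m+1}}}^{x_{i_m}} \dd y_m \;=\; \sum_{j_m=i_m}^{i_{m+1}-1} \int_{x_{j_m+1}}^{x_{j_m}} \dd y_m, \]
which is valid because $x_{i_m} > x_{i_m+1} > \ldots > x_{i_{m+1}}$. Taking the product over $m=1,\ldots,n$ and expanding, the resulting tuples $\bm j = (j_1,\ldots,j_n)$ satisfy $i_m \leq j_m < i_{m+1} \leq j_{m+1}$, so they are automatically increasing and lie in $\f I^+_{\bm i}$. On the cell indexed by $\bm j$ we have $y_m \in (x_{j_m+1}, x_{j_m})$ for each $m$, so when $x_{i_1},\ldots,x_{i_n}$ are removed from $\bm x$ and each $y_m$ is inserted between $x_{j_m}$ and $x_{j_m+1}$, the resulting tuple lies in $J^N_+$. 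Invoking $S_N$-invariance of $F$ to relabel the argument then yields the stated formula.

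The other three cases are formally identical. For $\hat E_{\lambda;\bm i}$ with $\bm i \in \f I^{n+1}_{N+1}$ the same split is applied to each of the $n$ integrals, producing index tuples in $\tilde{\f I}^+_{\bm i}$; the $n+1$ outer coordinates $x_{i_1},\ldots,x_{i_{n+1}}$ are unaffected and remain in place. For $\bar E^-_{\lambda;\bm i}$ and $\check E_{\lambda;\bm i}$ the defining intervals are $[x_{i_m}, x_{i_{m-1}}]$, so the analogous split
\[ \int_{x_{i_m}}^{x_{i_{m-1}}} \dd y_m \;=\; \sum_{j_m=i_{m-1}+1}^{i_m} \int_{x_{j_m}}^{x_{j_m-1}} \dd y_m \]
yields index sets $\f I^-_{\bm i}$ and $\tilde{\f I}^-_{\bm i}$ respectively, with $y_m$ now inserted between $x_{j_m-1}$ and $x_{j_m}$.

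The only genuine difficulty is notational bookkeeping: one must check that the compressed expression $\bigl( \widehat{x_{i_m}}, \ldots, x_{j_m}, y_m, x_{j_m+1} \bigr)_{m=1}^n$ in the lemma faithfully represents the tuple obtained by performing all deletions and insertions simultaneously. Because the condition $j_m < i_{m+1}$ forces the insertion of $y_m$ to occur strictly to the left of the deletion of $x_{i_{m+1}}$, the operations for different values of $m$ act on disjoint parts of the tuple and hence commute; the boundary cases $j_m = i_m$, in which $y_m$ simply replaces $x_{i_m}$ in its original slot, fit into the same pattern without special treatment.
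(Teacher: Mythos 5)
Your proposal is correct and follows essentially the same route as the paper's proof: split each integration interval at the intermediate coordinates $x_j$ via $\int_{x_{i_{m+1}}}^{x_{i_m}} \dd y_m = \sum_{j_m=i_m}^{i_{m+1}-1} \int_{x_{j_m+1}}^{x_{j_m}} \dd y_m$, collect the indices $j_m$ into a tuple $\bm j \in \f I^+_{\bm i}$, and observe that on each resulting cell the arguments of $F$ are already in decreasing order. Your additional remarks on why the tuples $\bm j$ are automatically increasing and why the deletions and insertions for different $m$ do not interfere are correct elaborations of bookkeeping the paper leaves implicit.
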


\begin{proof}
These formulae are obtained by splitting each integration interval into a union of intervals between adjacent $x_j$, e.g. for $\bar E^+_{\lambda;\bm i}F$
\[ \int_{x_{i_{m+1}}}^{x_{i_m}} \hspace{-2mm} \dd y_m  = \sum_{j_m=i_m}^{i_{m+1}-1} \int_{x_{j_m+1}}^{x_{j_m}} \hspace{-2mm} \dd y_m . \]
Taking all summations over $j_m$ together as a summation over $\bm j= (j_1, \ldots, j_n)$, we obtain $\bm j \in \f I^+_{\bm i}$.
Now for each $m=1,\ldots,n$ we have $x_{i_m} > \ldots >x_{j_m} > y_m > x_{j_m+1}$ so that we can write the argument of $F$ as indicated. The other formulae follow analogously.
\end{proof}

\rfl{elementaryintopsaltformulae} essentially solves the problem alluded to at the start of this section. Wishing to treat all arguments of $F$ in these formulae in the same manner, we introduce unit step functions.

\begin{exm}[Continuation of \rfex{examplealtform}]
We obtain
\begin{align*} 
\left( \bar E^+_{\lambda;1 \, 4} F \right)(x_1,x_2,x_3,x_4) &= \int_J \dd y_1 \int_J \dd y_2  \e^{\ii \lambda (x_1+x_4-y_1-y_2)} \cdot \\
& \qquad \cdot \left( F \left( y_1,x_2, x_3, y_2 \right) \theta(x_1>y_1>x_2)\theta(x_4 > y_2 > -L/2) + \right. \\
& \qquad \qquad +  F \left( x_2, y_1,x_3, y_2 \right) \theta(x_2>y_1>x_3) \theta(x_4>y_2> -L/2) + \\
& \qquad \qquad \left. + F \left( x_2, x_3, y_1, y_2 \right) \theta(x_3>y_1>x_4) \theta(x_4>y_2 > -L/2) \right). 
\end{align*}
We can simplify matters further by introducing Dirac deltas for those arguments of $F$ over which we do not integrate; writing $\bm y = (y_1,y_2,y_3,y_4)$ this gives the terms
\begin{align*} `
\left( \bar E^+_{\lambda;1 \, 4} F \right)(x_1,x_2,x_3,x_4) &= \int_{J^4_+} \dd \bm y \e^{-\ii \lambda (y_1+y_2+y_3+y_4)} F(y_1,y_2,y_3,y_4) \cdot \\
& \qquad \cdot \left( \theta(x_1>y_1>x_2) \delta(y_2-x_2) \delta(y_3-x_3) \theta(x_4>y_2>-L/2)  +   \right. \\
& \qquad \qquad +  \delta(y_1-x_2) \theta(x_2>y_2>x_3) \delta(y_3-x_3) \theta(x_4>y_4>-L/2)+ \\
& \qquad \qquad \left. + \delta(y_1-x_2) \delta(y_2-x_3) \theta(x_3>y_3>x_4) \theta(x_4>y_4>-L/2)\right). 
\end{align*}
\end{exm}

To generalize this, given $n = 0,\ldots,N$ and $\bm i \in \f I^n_{N}$, consider $\bm i^\n{c} = (i^\n{c}_1, \ldots, i^\n{c}_{N-n}) = (1,\ldots,N)_{\hat{\bm \imath}} \in \f I^{N-n}_{N}$. It is the unique element of $\f I^{N-n}_{N}$ which has no entries in common with $\bm i$.
For example, for $N=5$, $n=2$, and $\bm i = (1,4)$, we have $\bm i^\n{c} = (2,3,5)$.
For $\bm i \in \f I^n_{N}$, note that if $\bm j \in \f I^+_{\bm i}$ then $j^\n{c}_m \in \{ i^\n{c}_m-1, i^\n{c}_m\}$, and if $\bm j \in \f I^-_{\bm i}$ then $j^\n{c}_m \in \{i^\n{c}_m,i^\n{c}_m+1 \}$.
Also, for $\bm i \in \f I^{n+1}_{N+1}$, $\bm j \in \tilde{\f I}^+_{\bm i} \subset \f I^n_{N}$ so that both $i^\n{c}_m$ and $j^\n{c}_m$ are in $\f I^{N-n}_{N+1}$ and indeed $j^\n{c}_m \in  \{ i^\n{c}_m-1, i^\n{c}_m\}$. We can now state and prove
\begin{lem}
Let $\lambda \in \bC$ and $n=0, \ldots, N$. 
For $F \in \ca H_N$, $\bm x \in J^{N+1}_+, \bm i \in \f I^{n+1}_{N+1}$ we have
\begin{align*}
\left( \hat E_{\lambda;\bm i} F \right)(\bm x) &= \int_{J^N_+} \dd^N \bm y \e^{\ii \lambda \left( \sum_{k=1}^{N+1} x_k - \sum_{k=1}^N y_k \right)} F(\bm y) \cdot \\
& \qquad \cdot \sum_{\bm j \in \tilde{\f I}^+_{\bm i}}  \left( \prod_{m=1}^n \theta(x_{j_m}>y_{j_m}>x_{j_m+1}) \right) \left( \prod_{m=1}^{N-n} \delta\left( y_{j^\n{c}_m} - x_{i^\n{c}_m} \right) \right) ; 
\end{align*}
for $F \in \ca H_N$, $\bm x \in J^N_+, \bm i \in \f I^n_{N}$ we have
\begin{align*}
\left( \bar E^+_{\lambda;\bm i} F \right)(\bm x) &= \int_{J^N_+} \dd^N \bm y \, \e^{\ii \lambda \sum_{k=1}^N (x_k - y_k)} F(\bm y) \cdot  \\
& \qquad \cdot \sum_{\bm j \in \f I^+_{\bm i}} \left( \prod_{m=1}^n \theta(x_{j_m} > y_{j_m} > x_{j_m+1} ) \right) \left(\prod_{m=1}^{N-n}  \delta(y_{j^\n{c}_m} \! - \! x_{i^\n{c}_m}) \right), \displaybreak[2] \\
\left( \bar E^-_{\lambda;\bm i} F \right)(\bm x) &= \int_{J^N_+} \dd^N \bm y \, \e^{\ii \lambda \sum_{k=1}^N (x_k-y_k)} F(\bm y) \cdot \\
& \qquad \cdot \sum_{\bm j \in \f I^-_{\bm i}} \left( \prod_{m=1}^n \theta(x_{j_m-1} > y_{j_m} > x_{j_m} ) \right) \left(\prod_{m=1}^{N-n}  \delta(y_{j^\n{c}_m} \! - \! x_{i^\n{c}_m}) \right);
\end{align*}
and for $F \in \ca H_{N+1}$, $\bm x \in J^N_+, \bm i \in \f I^n_{N}$ we have
\begin{align*}
\left( \check E_{\lambda;\bm i} F \right)(\bm x) &= \int_{J^N_+}\dd^N \bm y \e^{\ii \lambda (\sum_{k=1}^N x_k - \sum_{k=1}^{N+1} y_k)} F(\bm y) \cdot \\
& \qquad \cdot \sum_{\bm j \in \tilde{\f I}^-_{\bm i}} \left( \prod_{m=1}^{n+1} \theta(x_{i_{m-1}} \! > \!  y_m \! > \!  x_{i_m}) \right) \left( \prod_{m=1}^{N-n-1} \delta(y_{n+m+1}-x_{i^\n{c}_m}) \right).\end{align*}
\end{lem}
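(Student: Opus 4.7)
The plan is to derive each of the four identities directly from the corresponding formula in \rfl{elementaryintopsaltformulae} by introducing extra dummy integration variables. Each summand in that previous lemma is already an integral over $n$ variables $y_1,\ldots,y_n$, with the remaining $N-n$ (or $N-n-1$ in the $\check E$-case) arguments of $F$ fixed at specific values $x_{i^\n{c}_m}$. The idea is simply to promote those fixed arguments to integration variables by inserting Dirac deltas $\int_J \dd y\, \delta(y-x_{i^\n{c}_m}) = 1$, and then to re-express the bounded integration ranges of the original moving variables as step functions on $J$.

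Focusing first on $\bar E^+_{\lambda;\bm i}$ for concreteness, I would proceed as follows. In the summand of \rfl{elementaryintopsaltformulae} indexed by $\bm j\in\f I^+_{\bm i}$, rename the moving variable $y_m$ as $y_{j_m}$, so that the argument of $F$ becomes a decreasing $N$-tuple whose entries at positions $j_1,\ldots,j_n$ are the $y_{j_m}$ and whose entries at positions $j^\n{c}_1,\ldots,j^\n{c}_{N-n}$ equal $x_{i^\n{c}_m}$. Since $\bm x \in J^N_+$ and the $y_{j_m}$ are sandwiched so that $x_{j_m}>y_{j_m}>x_{j_m+1}$, the full tuple lies in $J^N_+$; replacing the integration ranges by the corresponding step functions $\theta(x_{j_m}>y_{j_m}>x_{j_m+1})$ and inserting Dirac deltas $\delta(y_{j^\n{c}_m}-x_{i^\n{c}_m})$ for each remaining index then extends the integral to $J^N_+$ without altering its value. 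Finally, the exponential $\e^{\ii\lambda\sum_m(x_{i_m}-y_m)}$ is replaced in the integrand by $\e^{\ii\lambda\sum_{k=1}^N(x_k-y_k)}$, which is identical once the deltas impose $y_{j^\n{c}_m} = x_{i^\n{c}_m}$, since $\{i_m\}\cup\{i^\n{c}_m\}=\{j_m\}\cup\{j^\n{c}_m\}=\{1,\ldots,N\}$.

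The remaining three identities follow by the same procedure. The $\bar E^-$-case uses $\f I^-_{\bm i}$ instead, with integration ranges $x_{j_m}<y_{j_m}<x_{j_m-1}$ producing the step functions $\theta(x_{j_m-1}>y_{j_m}>x_{j_m})$. For $\hat E_{\lambda;\bm i}: \ca H_N\to\ca H_{N+1}$ the argument of $F$ is an $N$-tuple in $J^N_+$ while $\bm x \in J^{N+1}_+$, so the $n$ moving variables occupy positions $\bm j\in\tilde{\f I}^+_{\bm i}$ and the $N-n$ fixed values sit at the complementary positions, leading to the $\tilde{\f I}^+_{\bm i}$ formula. The $\check E_{\lambda;\bm i}: \ca H_{N+1}\to\ca H_N$ case is mildly different because the moving variables form an $(n+1)$-tuple $y_1,\ldots,y_{n+1}$ that occupies a contiguous block inside the argument of $F\in\ca H_{N+1}$; the $N-n-1$ remaining positions are then pinned by deltas $\delta(y_{n+m+1}-x_{i^\n{c}_m})$, giving the stated formula.

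I expect the only real obstacle to be bookkeeping: verifying that in each case the set of positions occupied by the moving variables together with the complementary positions pinned by deltas correctly match the ordering constraints defining $J^N_+$, so that the product of step functions and deltas really does equal the characteristic function of the original restricted integration region embedded into $J^N_+$. Once the combinatorics of $\bm i$, $\bm j$, $\bm i^\n{c}$, $\bm j^\n{c}$ is tracked carefully (using \textbf{Notation} \ref{notn3} and \ref{notn4}), each identity reduces to repeated insertions of $\int_J \dd y\,\delta(y-a)=1$.
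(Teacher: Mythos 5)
Your proposal is correct and follows essentially the same route as the paper's own proof: starting from \rfl{elementaryintopsaltformulae}, relabelling $y_m$ as $y_{j_m}$, inserting dummy integrations with Dirac deltas $\delta(y_{j^\n{c}_m}-x_{i^\n{c}_m})$ together with the factors $\e^{\ii\lambda(x_{i^\n{c}_m}-y_{j^\n{c}_m})}=1$, and converting the integration limits into step functions. The one piece you defer to ``bookkeeping'' --- that the resulting tuple $(y_1,\ldots,y_N)$ is decreasing --- is exactly what the paper verifies explicitly, via the interleaving inequalities $j^\n{c}_m \leq i^\n{c}_m \leq j^\n{c}_m+1$ applied to the two cases $j_l < j^\n{c}_m$ and $j^\n{c}_m < j_l$.
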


\begin{proof}
Again we focus on the expression for $\bar E^+_{\lambda;\bm i}F$. 
Starting from \rfl{elementaryintopsaltformulae} we relabel the integration variables $y_m$ to $y_{j_m}$; we like to think of the integrations as being over $\bR$ and hence introduce a step function $\theta(x_{j_m}>y_{j_m}>x_{j_m+1})$.
Also, we introduce integrations over variables $y_{j^\n{c}_m}$ for $m=1,\ldots,N-n$ with Dirac deltas $\delta(y_{j^\n{c}_m}-x_{i^\n{c}_m})$. We multiply the integrand by $\e^{\ii \lambda(x_{i^\n{c}_m}-y_{j^\n{c}_m})} = 1$, producing an overall factor $\e^{\ii \lambda \sum_{k=1}^N (x_k-y_k)}$. \\

We claim that the arguments of $F$ in decreasing order are now given by $y_1,\ldots,y_N$; it suffices to show that under the restrictions $x_{j_m}>y_{j_m} > x_{j_m+1}$, and $y_{j^\n{c}_m}=x_{i^\n{c}_m}$, we have $y_l > y_m$ if $l < m$, for all $l,m=1,\ldots,N$.
Taking the restrictions given by the step functions and Dirac deltas into account, it is clear that the $(y_{j_1},\ldots,y_{j_n})$ satisfy $y_{j_1} > \ldots > y_{j_n}$ and $(y_{j^\n{c}_1},\ldots,y_{j^\n{c}_{N-n}})$ satisfy $y_{j^\n{c}_1} > \ldots > y_{j^\n{c}_{N-n}}$.
Also, if $j_l < j^\n{c}_m$, then $j_l+1 \leq j^\n{c}_m$ and we have $y_{j_l} > x_{j_l+1} \geq x_{j^\n{c}_m} \geq x_{i^\n{c}_m} = y_{j^\n{c}_m}$, because $j^\n{c}_m \leq i^\n{c}_m$.
Finally, if $j^\n{c}_m < j_l$, then $j^\n{c}_m+1 \leq j_l$ and hence $y_{j^\n{c}_m}=x_{i^\n{c}_m} \geq x_{j^\n{c}_m+1} \geq x_{j_l} > y_{j_l}$, because $i^\n{c}_m \leq j^\n{c}_m+1$.
This produces the desired formula for $\bar E^+_{\lambda;\bm i}F$. The other formulae can be dealt with in the same way.
\end{proof}

By summing over $\bm i$ and including the appropriate factors, we obtain
\begin{thm} \label{ABCDaltformulae}
Let $\lambda \in \bC$.
For $F \in \ca H_N$ and $\bm x \in J^N_+$ we have
\begin{align*}
(A_\lambda F)(\bm x) &= \int_{J^N_+} \dd^N \bm y \, \e^{\ii \lambda (\sum_{k=1}^{N+1} x_k - \sum_{k=1}^N y_k)} F(\bm y) \cdot \\
& \qquad \cdot  \sum_{n=0}^N \sum_{\bm i \in \f I^n_N} \sum_{\bm j \in \f I^+_{\bm i}} \left( \prod_{m=1}^n \gamma \theta(x_{j_m} > y_{j_m} > x_{j_m+1} ) \right) \left(\prod_{m=1}^{N-n}  \delta(y_{j^\n{c}_m} \! - \! x_{i^\n{c}_m}) \right), \\
(D_\lambda F)(\bm x) &= \int_{J^N_+} \dd^N \bm y \, \e^{\ii \lambda ( \sum_{k=0}^N x_k- \sum_{k=1}^N y_k)} F(\bm y) \cdot \\
& \qquad \cdot \sum_{n=0}^N \sum_{\bm i \in \f I^n_N} \sum_{\bm j \in \f I^-_{\bm i}} \left( \prod_{m=1}^n \gamma \theta(x_{j_m-1} > y_{j_m} > x_{j_m} ) \right) \left(\prod_{m=1}^{N-n}  \delta(y_{j^\n{c}_m} \! - \! x_{i^\n{c}_m}) \right); 
\end{align*}
for $F \in \ca H_N$ and $\bm x \in J^{N+1}_+$ we have
\begin{align*}
(B_\lambda F)(\bm x) &= \frac{1}{N+1} \int_{J^N_+} \dd^N \bm y \, \e^{\ii \lambda( \sum_{k=1}^{N+1} x_k-\sum_{k=1}^N y_k)} F(\bm y) \cdot \\
& \qquad \cdot \sum_{n=0}^N \sum_{\bm i \in \f I^{n+1}_{N+1}}  \sum_{\bm j \in \tilde{\f I}^+_{\bm i}}  \left( \prod_{m=1}^n \gamma \theta(x_{j_m} > y_{j_m} > x_{j_m+1} ) \right) \left(\prod_{m=1}^{N-n}  \delta(y_{j^\n{c}_m} \! - \! x_{i^\n{c}_m}) \right); 
\end{align*}
and for $F \in \ca H_{N+1}$ and $\bm x \in J^N_+$ we have
\begin{align*}
\gamma \left(C_\lambda  F \right) (\bm x) &=  (N+1) \int_{J^{N+1}_+} \dd^N \bm y \, \e^{\ii \lambda (\sum_{k=0}^{N+1} x_k - \sum_{k=1}^{N+1} y_k)} F(\bm y) \cdot \\
& \qquad \cdot \sum_{n=0}^{N-1} \sum_{\bm i \in \f I^n_{N - 1}} \sum_{\bm j \in \tilde{\f I}^-_{\bm i}}  \left( \prod_{m=1}^{n+1} \gamma \theta(x_{j_{m-1}} \! > \!  y_{j_m} \! > \!  x_{j_m}) \right) \left( \prod_{m=1}^{N-n-1} \delta(y_{j^\n{c}_m}-x_{i^\n{c}_m}) \right).
\end{align*}
Furthermore, the domains of integration $J^N_+$, $J^{N+1}_+$ may be replaced by $\bR^N$, $\bR^{N+1}$, respectively, because of the unit step functions in the integrands.
\end{thm}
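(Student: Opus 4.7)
The proof will proceed by direct substitution: combine the sum representations of $A_\lambda, B_\lambda, C_\lambda, D_\lambda$ in terms of the elementary integral operators (given in \rfeqnser{Aintoperators}{Dintoperators}) with the alternative formulae for those elementary integral operators established in the preceding lemma. All of the combinatorial work has essentially been done already; what remains is bookkeeping.

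The plan is as follows. First, take the formula $A_\lambda = \sum_{n=0}^N \gamma^n \sum_{\bm i \in \f I^n_N} \bar E^+_{\lambda;\bm i}$ from \rfeqn{Aintoperators} and substitute the expression for $\bar E^+_{\lambda;\bm i} F$ from the preceding lemma. The factor $\gamma^n$, being independent of the summation index $\bm j \in \f I^+_{\bm i}$, can be absorbed into the inner product over $m=1,\ldots,n$ via $\gamma^n \prod_{m=1}^n \theta(\cdots) = \prod_{m=1}^n \gamma\,\theta(\cdots)$. The common exponential $\e^{\ii\lambda\sum_{k=1}^N(x_k-y_k)}$ and the factor $F(\bm y)$ can be pulled out of the triple sum, yielding the claimed integral representation for $A_\lambda F$. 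The same argument applied to \rfeqn{Dintoperators}, \rfeqn{Bintoperators} and \rfeqn{Cintoperators} produces the formulae for $D_\lambda, B_\lambda, C_\lambda$, with the prefactors $\frac{1}{N+1}$ and $N+1$ for $B_\lambda, C_\lambda$ carried through verbatim. (For $B_\lambda$ and $C_\lambda$ the conventions of \nameref{notn4} regarding $x_{i_0}=L/2$, $x_{i_{n+1}}=-L/2$ should be checked to line up with the integration bounds of the step functions on the endpoints.)

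For the last assertion, the integrand vanishes off $J^N$: each Dirac delta $\delta(y_{j^\n{c}_m}-x_{i^\n{c}_m})$ pins the corresponding $y_{j^\n{c}_m}$ to $x_{i^\n{c}_m} \in J$, and each step function $\theta(x_{j_m}>y_{j_m}>x_{j_m+1})$ forces $y_{j_m}$ into an interval contained in $J$. Moreover, as shown in the proof of the preceding lemma, these constraints force $y_1 > \cdots > y_N$, so $\bm y \in J^N_+$ wherever the integrand is nonzero. Extending the integration from $J^N_+$ to $\bR^N$ (and from $J^{N+1}_+$ to $\bR^{N+1}$ in the $C_\lambda$ case) therefore adds only a set of measure zero on which the integrand vanishes.

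There is no real obstacle here; the only delicate point is the conventional one concerning the outermost integration variables in $\hat E_{\lambda;\bm i}$ and $\check E_{\lambda;\bm i}$, where the roles of $x_{i_0}$ and $x_{i_{n+1}}$ as the boundary values $\pm L/2$ must be interpreted correctly so that the step functions near the endpoints correctly reproduce the original integration intervals $[-L/2,x_{j_n+1}]$ and $[x_{j_1},L/2]$ appearing implicitly in the lemma.
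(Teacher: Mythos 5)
Your proposal is correct and matches the paper's treatment: the theorem is obtained exactly as you describe, by summing the preceding lemma's formulae over $\bm i$ (and $n$) with the prefactors from \rfeqnser{Aintoperators}{Dintoperators}, absorbing $\gamma^n$ into the product of step functions, and noting that the step functions and Dirac deltas confine the integrand to $J^N_+$ so the domain may be enlarged to $\bR^N$. The paper offers no more detail than ``by summing over $\bm i$ and including the appropriate factors,'' so your bookkeeping account is, if anything, more explicit than the original.
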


These alternative expressions for the generators of the Yang-Baxter algebra appear to be new.

\newpage

\chapter[The degenerate affine Hecke algebra]{The degenerate affine Hecke algebra (dAHA)} \label{chdAHA}

In this chapter we will review another established method for solving the QNLS eigenvalue problem \rfeqnser{QNLS1}{QNLS2}, which involves a deformation of the group algebra of the symmetric group $S_N$, called the \emph{degenerate affine Hecke algebra}, or also the \emph{graded Hecke algebra of type A$_{N-1}$}. Its main advantage compared to the QISM is that it can be naturally generalized to different reflection groups, both finite and affine. Most of these, the classical Weyl groups, allow for meaningful interpretations in physical systems; they are the symmetry groups of one-dimensional systems of quantum particles with certain boundary conditions. The systematic study of these systems was begun by Gaudin \cite{Gaudin1971-3}, and Gutkin and Sutherland \cite{Gutkin1982,GutkinSutherland}.\\

Affine Hecke algebras were introduced and studied initially by Lusztig and Kazhdan \cite{KazhdanLusztig,Lusztig}, and Drinfel\cprime d \cite{Drinfeld1986}. 
Their relevance to the QNLS problem was highlighted by Heckman and Opdam \cite{HeckmanOpdam1996,HeckmanOpdam1997} who used an infinitesimal version of them, the \emph{graded Hecke algebra}.
A generalization of the affine Hecke algebra, the double affine Hecke algebra, has been used by Cherednik \cite{Cherednik} to prove Macdonald's constant term conjecture for Macdonald polynomials.\\

In this thesis we will restrict ourselves to the case of the Weyl group $S_N$. 
We will review some theory of the symmetric group and its group algebra; subsequently the degenerate affine Hecke algebra (dAHA)\footnote{The abbreviation DAHA is usually reserved for the aforementioned double affine Hecke algebra.} is introduced, which is a deformation of the group algebra of the symmetric group with the coupling constant $\gamma$ functioning as a deformation parameter. We will highlight three of its representations that play a role in finding the QNLS wavefunction:
\begin{itemize}
\item The regular representation in momentum space, generated by deformed transpositions $\tilde s_{j, \gamma}$ and multiplication operators $\lambda_j$, acting on analytic functions in $\bm \lambda$.
\item The integral representation in position space, generated by deformed transpositions $s_{j,\gamma}$ (the term proportional to $\gamma$ in these are integral operators) and partial differential operators $-\ii \partial_j$, acting on smooth functions on $\bR^N$. This can be viewed as the Fourier transform of the first representation.
\item The Dunkl-type representation in position space, generated by transpositions and deformed partial differential operators $-\ii \partial_{j,\gamma}$, acting on continuous functions on $\bR^N$ whose restrictions to the set of regular vectors is smooth.
\end{itemize} 

The crucial \emph{propagation operator} \cite{Gutkin1982,EmsizOS,Hikami} is constructed, which intertwines the integral and Dunkl-type representations and maps plane waves, i.e. functions of the form $\bm x \mapsto \e^{\ii \inner{\bm \lambda}{\bm x}}$ for some $\bm \lambda \in \bC^N$, to \emph{non-symmetric} functions $\psi_{\bm \lambda}$ that solve the QNLS eigenvalue problem \rfeqnser{QNLS1}{QNLS2}. For the QNLS model these non-symmetric functions were first considered by Komori and Hikami \cite{KomoriHikami}.
We will call these solutions \emph{pre-wavefunctions}, since upon symmetrizing them one recovers the QNLS wavefunction $\Psi_{\bm \lambda}$.
They will return to our attention in Chapter \ref{ch5}. 

\section{The symmetric group} \label{secsymmgroup}

In this section we review some facts involving the symmetric group. 
There is a natural way of embedding $S_N$ in $S_{N+1}$ which is useful for recursive constructions.

\begin{lem} \label{symmgrouprecursion}
We have
\[ S_{N+1} = S_N \cdot \set{s_{m \, N+1}}{m=1,\ldots,N+1} = \set{s_{m \, N+1}}{m=1,\ldots,N+1} \cdot S_N. \]
\end{lem}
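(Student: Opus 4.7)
The plan is to prove both equalities via the standard coset decomposition of $S_{N+1}$ with respect to the subgroup $S_N$ (viewed as the stabilizer of $N+1$). Since $|S_{N+1}| = (N+1)!= (N+1) \cdot |S_N|$, it suffices to exhibit, for each $\sigma \in S_{N+1}$, an explicit factorization of the required form; uniqueness then follows from counting, but we do not actually need uniqueness for the stated equality.

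For the first equality, given $\sigma \in S_{N+1}$, I would set $m := \sigma^{-1}(N+1) \in \{1,\ldots,N+1\}$ and define $w := \sigma \cdot s_{m\,N+1}$. A direct check shows
\[ w(N+1) = \sigma\bigl(s_{m\,N+1}(N+1)\bigr) = \sigma(m) = N+1, \]
so $w$ fixes $N+1$ and hence $w \in S_N$. Using $s_{m\,N+1}^2 = 1$ (which covers both the genuine transposition case and the convention $s_{N+1\,N+1}=1$ from the excerpt), we recover $\sigma = w \cdot s_{m\,N+1}$, giving $S_{N+1} \subseteq S_N \cdot \{s_{m\,N+1}\}$. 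The reverse inclusion is immediate.

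For the second equality, given $\sigma \in S_{N+1}$, I would instead set $m := \sigma(N+1)$ and $w := s_{m\,N+1} \cdot \sigma$. Then
\[ w(N+1) = s_{m\,N+1}\bigl(\sigma(N+1)\bigr) = s_{m\,N+1}(m) = N+1, \]
so $w \in S_N$ and $\sigma = s_{m\,N+1} \cdot w$.

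There is really no substantive obstacle; the only thing to be careful about is the bookkeeping around the convention $s_{N+1\,N+1}=1$ (covering the case $\sigma(N+1)=N+1$, i.e.\ $\sigma \in S_N$ itself) and the fact that the same index $m$ works throughout thanks to $s_{m\,N+1}$ being an involution. A brief remark on cardinalities at the end confirms that each decomposition is in fact unique, which—though not required for the statement—is worth noting since it is what makes this decomposition practically useful for recursive arguments later in the chapter.
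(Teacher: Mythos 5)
Your proof is correct and uses essentially the same idea as the paper: the explicit factorization $\sigma = \bigl(\sigma s_{\sigma^{-1}(N+1)\,N+1}\bigr)\cdot s_{\sigma^{-1}(N+1)\,N+1}$, which is exactly the map the paper packages as a bijection $S_{N+1}\to S_N\times\{1,\ldots,N+1\}$ with an explicit left inverse. The only cosmetic difference is that you prove the second equality by the mirror factorization, while the paper deduces it from the first by observing that conjugation by elements of $S_N$ permutes the set $\set{s_{m\,N+1}}{m=1,\ldots,N+1}$.
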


\begin{proof}
Since each $w \in S_N$ permutes the set $\set{s_{m \, N+1}}{m=1,\ldots,N+1}$, the second equality follows and it suffices to prove the first, which can be done by noting that the following mapping is a one-to-one correspondence:
\[ p: S_{N+1} \to S_N \times \{ 1, \ldots, N+1 \} : w \mapsto (ws_{w^{-1}(N+1) \, N+1} , w^{-1}(N+1)). \]
Note that for $w \in S_N$, $ws_{w^{-1}(N+1) \, N+1}(N+1) = w(w^{-1}(N+1)) = N+1$, so that $ws_{w^{-1}(N+1) \, N+1}$ can be thought of as an element of $S_N$, and indeed $p$ maps into $ S_N \times \{ 1, \ldots, N+1 \} $. Since $S_{N+1}$ and $S_N \times \{ 1, \ldots, N+1 \} $ are finite sets of the same cardinality it suffices to show $p$ is injective. This follows from the fact that $p$ has a left-inverse$: S_N \times \{ 1, \ldots, N+1 \} \to S_{N+1}$ given by
$(w' , j) \mapsto w' s_{j \, N+1}$. 
\end{proof}




\subsection{The group algebra of the symmetric group}

The group algebra of the symmetric group is the unital associative algebra of all linear combinations of elements of $S_N$,
\[ \bC S_N = \set{\sum_{w \in S_N} c_w w}{c_w \in \bC}, \]
where the multiplication in $S_N$ is extended linearly.
The \emph{symmetrizer} is given by
\begin{equation} \label{symmetrizer} \ca S^{(N)} = \frac{1}{N!} \sum_{w \in S_N} w \in \bC S_N. \end{equation}%
\nc{rscx}{$\ca S^{(N)}$}{Symmetrizer \nomrefeqpage}%
It satisfies $w \ca S^{(N)} = \ca S^{(N)} w = \ca S^{(N)}$ for all $w \in S_N$ and hence is a projection: $\left(\ca S^{(N)}\right)^2= \ca S^{(N)}$.
The symmetrizer can be constructed recursively:
\begin{lem} \label{symmetrizerrecursion}
Let $N\geq1$ be an integer.
We have
\begin{align*} 
\ca S^{(N+1)} &= \ca S^{(N)} \frac{1}{N+1} \sum_{m=1}^{N+1} s_{m \, N+1} = \ca S^{(N)} \frac{1}{N+1}\sum_{m=1}^{N+1} s_N \ldots s_m \\
&= \frac{1}{N+1}\sum_{m=1}^{N+1} s_{m \, N+1} \ca S^{(N)} = \frac{1}{N+1} \sum_{m=1}^{N+1} s_m \ldots s_N  \ca S^{(N)}. 
\end{align*}
\end{lem}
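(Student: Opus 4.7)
The plan is to exploit the decomposition \rfl{symmgrouprecursion} to rewrite the sum over $S_{N+1}$ as a sum over $S_N$ times a transversal of right (resp.\ left) cosets. Since $S_N$ is the stabilizer of $N+1$ in $S_{N+1}$, two elements $w, w' \in S_{N+1}$ lie in the same right coset $S_N w$ iff $w^{-1}(N+1) = (w')^{-1}(N+1)$, and in the same left coset $w S_N$ iff $w(N+1) = w'(N+1)$. I would first establish the first equality: by \rfl{symmgrouprecursion} every $w \in S_{N+1}$ has a unique decomposition $w = w' \cdot s_{m \, N+1}$ with $w' \in S_N$ and $m \in \{1,\ldots,N+1\}$, so
\[ \ca S^{(N+1)} = \frac{1}{(N+1)!} \sum_{w \in S_{N+1}} w = \frac{1}{N+1} \left(\frac{1}{N!} \sum_{w' \in S_N} w'\right) \sum_{m=1}^{N+1} s_{m \, N+1} = \ca S^{(N)} \frac{1}{N+1} \sum_{m=1}^{N+1} s_{m \, N+1}. \]
The third equality follows analogously from the dual decomposition $S_{N+1} = \{s_{m \, N+1}\} \cdot S_N$ (also from \rfl{symmgrouprecursion}) together with $\ca S^{(N)} w' = \ca S^{(N)}$ for $w' \in S_N$.

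For the remaining two equalities, the key observation is that $s_N s_{N-1} \cdots s_m$ and $s_{m \, N+1}$ lie in the same right coset of $S_N$, and dually that $s_m s_{m+1} \cdots s_N$ and $s_{m \, N+1}$ lie in the same left coset. I would verify this by a direct computation of the invariants: applying $s_m$ first and working outward, one finds $(s_N s_{N-1} \cdots s_m)(m) = N+1$, hence $(s_N \cdots s_m)^{-1}(N+1) = m = s_{m \, N+1}^{-1}(N+1)$, so both elements are in the same right coset; similarly $(s_m s_{m+1} \cdots s_N)(N+1) = m = s_{m \, N+1}(N+1)$ handles the left-coset case. The edge case $m = N+1$ corresponds to the empty product, which is the identity, matching the convention $s_{N+1 \, N+1} = 1$ stated earlier in the paper.

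Once the coset identifications are in hand, the second and fourth equalities are immediate: for each $m$ one has $s_N \cdots s_m = u_m \, s_{m \, N+1}$ for some $u_m \in S_N$, whence $\ca S^{(N)} s_N \cdots s_m = \ca S^{(N)} u_m s_{m \, N+1} = \ca S^{(N)} s_{m \, N+1}$, and summing over $m$ yields the second equality; the fourth is analogous.

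The only non-routine step is the transversal computation for $s_N \cdots s_m$ and $s_m \cdots s_N$, which I expect to be the main (albeit minor) obstacle; everything else is pure bookkeeping with \rfl{symmgrouprecursion} and the defining property $w \ca S^{(N)} = \ca S^{(N)} w = \ca S^{(N)}$ for $w \in S_N$.
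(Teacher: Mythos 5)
Your proposal is correct and follows essentially the same route as the paper: the first and third equalities come directly from the coset decomposition in \rfl{symmgrouprecursion}, and the second and fourth come from observing that $s_N\cdots s_m$ (resp.\ $s_m\cdots s_N$) differs from $s_{m\,N+1}$ only by an element of $S_N$ on the side that gets absorbed into $\ca S^{(N)}$. The only cosmetic difference is that the paper exhibits this explicitly via the conjugation identity $s_{m\,N+1}=s_m\cdots s_{N-1}\,s_N\,s_{N-1}\cdots s_m$, whereas you verify membership in the same coset by tracking the (pre)image of $N+1$; both are valid one-line checks.
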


\begin{proof}
The expressions $\ca S^{(N)} \frac{1}{N+1} \sum_{m=1}^{N+1} s_{m \, N+1}$ and $\frac{1}{N+1} \sum_{m=1}^{N+1} s_{m \, N+1} \ca S^{(N)}$ follow directly from \rfl{symmgrouprecursion}. 
To obtain the expression $\ca S^{(N)} \sum_{m=1}^{N+1} s_N \ldots s_m$ note that 
\[ s_{m \, N} = s_m \ldots s_{N-1} s_N s_{N-1} \ldots s_m.\] 
Since $s_m \ldots s_{N-1} \in S_{N}$, it can be absorbed into $\ca S^{(N)}$ and hence we find
\[  \ca S^{(N)} \sum_{m=1}^{N+1} s_{m \, N+1} = \ca S^{(N)} \sum_{m=1}^{N+1} s_N \ldots s_m. \]
In the same fashion we obtain 
\[ \sum_{m=1}^{N+1} s_{m \, N+1} \ca S^{(N)} = \sum_{m=1}^{N+1} s_m \ldots s_N \ca S^{(N)}. \qedhere \]
\end{proof}

The group algebra $\bC S_N$ acts on the polynomial algebra $\bC[X_1,\ldots,X_N]$ (the normal symmetric group action can be extended linearly).
The indeterminates $X_j$ themselves also act on $\bC[X_1,\ldots,X_N]$ by multiplication. 
The combined algebra, written $\f H^N$, %
\nc{rhcw}{$\f H^N$}{Combined algebra isomorphic to $\bC S_N \otimes \bC[X_1,\ldots,X_N]$ as \\ vector space}%
is isomorphic to $\bC S_N \otimes \bC[X_1,\ldots,X_N]$ as a vector space, and is generated by $s_1,\ldots,s_{N-1},X_1,\ldots,X_N$ with relations:
\begin{align*}
s_j s_{j+1} s_j &= s_{j+1} s_j s_{j+1}, && \n{for } j =1,\ldots,N-2,  \\
\left[ s_j, s_k \right] &= 0, && \n{for }j,k = 1, \ldots, N-1: |j-k|>1,  \\
s_j^2 &= 1, && \n{for }j=1,\ldots,N-1, \\
s_j X_k - X_{s_j(k)} s_j &= 0, && \n{for } j=1,\ldots,N-1,k=1,\ldots,N, \\
\left[ X_j, X_k \right] &=0, && \n{for } j,k=1,\ldots,N. 
\end{align*}

\subsection{The length function} \label{subseclength}

For $w \in S_N$, consider
\begin{equation} \label{Sigmadef} \Sigma(w) = \set{(j,k) \in \{1,\ldots,N\}}{j<k, \, w(j)>w(k)}, \end{equation}%
\nc{gsc}{$\Sigma(w)$}{Set of ordered pairs in $\{1,\ldots,N\}$ whose order is inverted by \\ $w \in S_N$ \nomrefeqpage}%
i.e. the set of ordered pairs whose order is inverted by $w$.
The \emph{length} of $w$ is defined to be 
\begin{equation} l(w) =  |\Sigma(w)|. \end{equation}%
\nc{rll}{$l(w)$}{Length of $w \in \Sigma$ \nomrefeqpage}%
We note that such a characterization fits in the context of the definition of the length of an element of a general Weyl group in terms of the (positive) root system. Hence the results in \cite[\S 2.2]{Macdonald2} can be used. Here we review some of these results applied to the case of $S_N$. \\

Evidently we have $l(w) = 0$ if and only if $w = 1$. Since $\Sigma(s_j) = \{ (j,j+1) \}$ it follows that $l(s_j)=1$ for $j=1,\ldots,N-1$. Also, $l(w^{-1}) = l(w)$ follows from the observation $\Sigma(w^{-1}) = w \set{(j,k) \in \{ 1, \ldots, N \}^2}{(k,j) \in \Sigma(w)}$.\\

For all $w_1,w_2 \in S_N$, $l(w_1w_2) \leq l(w_1)+l(w_2)$ and the following conditions are equivalent:
\begin{align}
l(w_1w_2) &= l(w_1)+l(w_2); \label{lengthcond1} \\
\Sigma(w_1 w_2) &= w_2^{-1} \Sigma(w_1) \cup \Sigma(w_2); \label{lengthcond2} \\
(j,k) \in w_2^{-1} \Sigma(w_1) &\implies j<k. \label{lengthcond3}
\end{align}

Let $j=1,\ldots,N-1$ and $w \in S_N$.
Using the equivalence of \rfeqn{lengthcond1} and \rfeqn{lengthcond3}, once with $w_1=w$, $w_2 = s_j$, and once with $w_1=ws_j$, $w_2=s_j$, we obtain
\begin{equation}
l(ws_j) = l(w)+\sgn(w(j+1)-w(j)). \label{length2}
\end{equation}

A simple induction argument may be used to obtain that for all $w \in S_N$
\[ l(w) = \min_{w = s_{i_1} \ldots s_{i_l}} l, \]
i.e. the length of a permutation $w$ is simply the minimum number of simple transpositions in a decomposition for $w$. Such a decomposition is called \emph{reduced}.\\

The set $\Sigma(w)$ will be used several times during this chapter; the following lemma is useful for inductions on the length of $w$.
\begin{lem} \label{lengthlem1}
Let $w \in S_N$ with a reduced expression $w = s_{i_1} \ldots s_{i_l}$. Then
\[ \set{w s_{j \, k} }{(j,k) \in \Sigma(w)} = \set{s_{i_1} \ldots \hat s_{i_m} \ldots s_{i_l} }{m=1,\ldots,l}, \]
where the hat placed over $s_{i_m}$ indicates that this particular transposition is removed from the product.
In particular, the length of each $w s_{j \, k}$, where $(j,k) \in \Sigma(w)$, is strictly less than the length of $w$.
\end{lem}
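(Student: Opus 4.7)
The plan is to parametrize the set $\{w s_{j\,k} : (j,k) \in \Sigma(w)\}$ explicitly by the indices $m = 1, \ldots, l$. For each $m$, define
\[ t_m := s_{i_l} s_{i_{l-1}} \cdots s_{i_{m+1}} \, s_{i_m} \, s_{i_{m+1}} \cdots s_{i_{l-1}} s_{i_l}, \]
a conjugate of the simple transposition $s_{i_m}$, so $t_m = s_{j_m \, k_m}$ for some $j_m < k_m$. A telescoping computation using $s_{i_q}^2 = 1$ yields $w t_m = s_{i_1} \cdots \hat s_{i_m} \cdots s_{i_l}$, exhibiting the right-hand side of the claim as the image of the map $m \mapsto w s_{j_m\,k_m}$.

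Next I would verify that each $(j_m, k_m)$ lies in $\Sigma(w)$. The key point is that $w s_{j_m \, k_m}$ has just been written as a product of $l-1$ simple transpositions, so $l(w s_{j_m \, k_m}) \leq l - 1 < l(w)$; and the standard criterion (which in $S_N$ can be derived by iterating \rfeqn{length2} along a reduced expression for $s_{j\,k}$) says that $l(w s_{j\,k}) < l(w)$ forces $w(j) > w(k)$, i.e.\ $(j,k) \in \Sigma(w)$.

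The crux, and main obstacle, is injectivity of $m \mapsto (j_m, k_m)$. If $t_m = t_n$ with $m < n$, then from $w t_m = w t_n$ one cancels the common suffix $s_{i_{n+1}} \cdots s_{i_l}$ and the common prefix $s_{i_1} \cdots s_{i_{m-1}}$ to obtain
\[ s_{i_{m+1}} s_{i_{m+2}} \cdots s_{i_n} = s_{i_m} s_{i_{m+1}} \cdots s_{i_{n-1}}. \]
Substituting this identity back into $w = s_{i_1} \cdots s_{i_l}$ and cancelling the resulting adjacent pair $s_{i_m} s_{i_m}$ yields an expression for $w$ of length $l - 2$, contradicting $l(w) = l$. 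Injectivity, combined with $|\Sigma(w)| = l(w) = l$, promotes the map to a bijection onto $\Sigma(w)$; the set equality in the lemma follows, and the ``in particular'' clause is then immediate, since every $w s_{j\,k}$ with $(j,k) \in \Sigma(w)$ has been realized as a product of $l-1$ simple transpositions.
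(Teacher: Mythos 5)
Your proof is correct, but it takes a genuinely different route from the paper's. The paper argues by induction on $l(w)$: it peels off the last letter of the reduced word, applies the induction hypothesis to $w'=ws_{i_{l+1}}$, and then uses the cocycle identity \rfeqn{lengthcond2} for $\Sigma$ to pass from $\Sigma(w')$ to $\Sigma(w)$. You instead give the classical direct proof of the exchange condition specialized to $S_N$: you explicitly exhibit the reflections $t_m = s_{i_l}\cdots s_{i_{m+1}} s_{i_m} s_{i_{m+1}}\cdots s_{i_l}$, verify $wt_m = s_{i_1}\cdots \hat s_{i_m}\cdots s_{i_l}$ by telescoping, place each $(j_m,k_m)$ in $\Sigma(w)$ via the standard criterion that $l(ws_{j\,k})<l(w)$ forces $w(j)>w(k)$ for $j<k$, prove injectivity of $m\mapsto t_m$ by the deletion argument (a coincidence $t_m=t_n$ would shorten the reduced word by two), and close by counting, using $|\Sigma(w)|=l$. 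The paper's induction is shorter on the page because it can lean on the already-stated equivalence of \rfeqnser{lengthcond1}{lengthcond3}, whereas your argument is essentially self-contained and has the advantage of actually constructing the bijection $m\leftrightarrow(j_m,k_m)$ rather than deriving the set equality abstractly. The one external input you rely on --- that a transposition $s_{j\,k}$ shortens $w$ only if it inverts the pair $(j,k)$ --- is standard (it is the reflection/positive-root criterion, available in the cited \cite[\S 2.2]{Macdonald2}, or provable in $S_N$ by directly counting inversions of $ws_{j\,k}$ against those of $w$); your parenthetical suggestion to obtain it by iterating \rfeqn{length2} along a reduced word for $s_{j\,k}$ is workable but takes more care than the aside suggests, so it would be cleaner to cite the criterion outright.
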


\begin{proof}
By induction on $l$. The statement for $l=0$ is vacuously true. To see that the statement for $l+1$ follows from the statement for $l$, given a reduced composition $w=s_{i_1} \ldots s_{i_{l+1}}$, write $w' = ws_{i_{l+1}} = s_{i_1}\ldots s_{i_l}$ and note that the induction hypothesis implies
\[ \set{w' s_{\bar \jmath \, \bar k}}{(j,k) \in s_{i_l}\Sigma(w')} =  \set{s_{i_1} \ldots \hat s_{i_m} \ldots s_{i_l} }{m=1,\ldots,l}, \]
where for $j=1,\ldots,N$, $\bar \jmath = s_{i_{l+1}}(j)$.
Right-multiplying by $s_{i_l}$ we obtain
\[ \set{w s_{j \, k}}{(j,k) \in s_{i_l} \Sigma(w')} = \set{s_{i_1} \ldots \hat s_{i_m} \ldots s_{i_{l+1}}}{m=1,\ldots,l}. \]
Now make use of the equivalence of \rfeqnser{lengthcond2}{lengthcond3}.
\end{proof}

\subsection{Duality of $S_N$-actions}

We will be considering functions of two $N$-tuples $\bm \lambda \in \bC^N$, $\bm x \in \bR^N$ (or subsets thereof), and we wish to study the two distinct $S_N$-actions on such functions. An example of such a function is the plane wave: $\bC^N \times \bR^N \to \bC: (\bm \lambda,\bm x)\mapsto \e^{\ii \inner{\bm \lambda}{\bm x}}$.
We will denote these two actions as follows.\\

\boxedenv{\begin{notn}[Distinction of $S_N$-actions] \label{notn5}
Let $w \in S_N$ and $f: \bC^N \times \bR^N \to \bC$. Then we may define two functions $w f, \tilde w f: \bC^N \times \bR^N \to \bC$ as follows:
\begin{align*}
(w f)(\bm \lambda,\bm x) &= f(\bm \lambda,w^{-1} \bm x) \\
(\tilde w f)(\bm \lambda,\bm x) &= f(w^{-1}  \bm \lambda,\bm x).
\end{align*}
For functions $f: \bC^N \to \bC$ whose argument is denoted $\bm \lambda$ we will sometimes denote the action of $w \in S_N$ on such a function by $\tilde w$, as well.\\

In the context of root systems, this notation is reminiscent to the duality of the action of the Weyl group on the Euclidean space, spanned by the co-roots, and the action on its dual, spanned by the roots. 
\end{notn}}

Let $\bm \lambda \in \bC^N$ and note that the plane wave $\e^{\ii \bm \lambda}$ satisfies
\[ (w \e^{\ii \bm \lambda})(\bm x) = \e^{\ii \bm \lambda}(w^{-1} \bm x) = \e^{\ii \inner{\bm \lambda}{w^{-1} \bm x}} = \e^{\ii \inner{w \bm \lambda}{\bm x} } = \e^{\ii w \bm \lambda}(\bm x) =\left({\tilde w}^{-1} \e^{\ii \bm \lambda}\right)(\bm x), \]
for all $\bm x \in \bR^N$, i.e. $w \e^{\ii \bm \lambda}= {\tilde w}^{-1} \e^{\ii \bm \lambda}$.

\section{The degenerate affine Hecke algebra} \label{secdAHA}
 
We will now introduce a deformation of the symmetric group algebra which is the central object in this chapter. It was introduced independently by Lusztig \cite{Lusztig} and Drinfel\cprime d \cite{Drinfeld1986}.

\begin{defn} \label{dAHAdefn}
Let $\gamma \in \bR$.
The \emph{degenerate affine Hecke algebra} (dAHA), denoted $\f H^{N}_\gamma$, %
\nc{rhcw}{$\f H^N_\gamma$}{Degenerate affine Hecke algebra}%
is the algebra with generators $s_1,\ldots,s_{N-1},X_1,\ldots,X_N$ and relations
\begin{align}
s_j s_{j+1} s_j &= s_{j+1} s_j s_{j+1}, && \n{for } j =1,\ldots,N-2, \label{dAHArel1} \tag{dAHA 1}\\
\left[ s_j, s_k \right] &= 0, && \n{for } j,k = 1, \ldots, N-1: |j-k|>1, \label{dAHArel2} \tag{dAHA 2} \\
s_j^2 &= 1, && \n{for } j=1,\ldots,N-1,  \label{dAHArel3} \tag{dAHA 3} \\
s_j X_k -X_{s_j(k)} s_j & = -\ii \gamma \left( \delta_{j \, k} - \delta_{j+1 \, k} \right), && \n{for } j=1,\ldots,N-1,k=1,\ldots,N, \label{dAHArel4} \tag{dAHA 4} \\
\left[ X_j, X_k \right] & =0, && \n{for } j,k=1,\ldots,N. \label{dAHArel5} \tag{dAHA 5}
\end{align}
\end{defn}

$\f H^{N}_\gamma$ can be viewed as a deformation of $\f H^{N}= \f H^{N}_0$, controlled by $\gamma$. We will identify $\bC[\bm X]$ and $\bC S_N$ as subalgebras of $\f H^{N}_\gamma$. 
There are some well-known properties of $\f H^{N}_\gamma$ \cite{Opdam, Lusztig, Cherednik2} that can be directly obtained from \rfeqnser{dAHArel1}{dAHArel5}.
For all $j=1,\ldots,N$, $w \in S_N$ we have
\begin{equation} \label{dAHAeqn1} w X_j = X_{w(j)} w- \ii \gamma w \left( \sum_{k: (j,k) \in \Sigma(w)} s_{j \, k} - \sum_{k: (k,j) \in \Sigma(w)} s_{j \, k}  \right) \in \f H^{N}_\gamma. \end{equation}
This can be proven by induction on $l(w)$.
Also, for all $j=1,\ldots,N-1$ and $p \in \bC[\bm X]$, 
\begin{equation} \label{dAHAeqn2} s_j p(X_1,\ldots,X_N) - p(X_{s_j(1)},\ldots,X_{s_j(N)})s_j = -\ii \gamma (\Delta_j p)(X_1,\ldots,X_N) \in \f H^{N}_\gamma,\end{equation}
where we have introduced the \emph{divided difference operator}\footnote{Alternatively, it is known as the \emph{Bern\v{s}te\v{\i}n-Gel\cprime\hspace{-2mm} fand-Gel\cprime\hspace{-2mm} fand operator} or \emph{Lusztig-Demazure operator}.} $\Delta_j \in \End(\bC[\bm X])$
\begin{equation} \label{divdiffop} (\Delta_j p)(X_1,\ldots,X_N) = \frac{p(X_1,\ldots,X_N)-p(X_{s_j(1)},\ldots,X_{s_j(N)})}{X_j-X_{j+1}}. \end{equation}%
\nc{gdc}{$\Delta_j$}{Divided difference operator acting on $\bC[\bm X]$ \nomrefeqpage}%
Since the polynomial $p(X_1,\ldots,X_N)-p(X_{s_j(1)},\ldots,X_{s_j(N)})$ is alternating in $X_j,X_{j+1}$ it is divisible by $X_j-X_{j+1}$, and hence indeed $\Delta_j p \in \bC[\bm X]$. \rfeqn{dAHAeqn2} can be demonstrated by induction on the degree of $p$. 
Finally, combining \rfeqnser{dAHAeqn1}{dAHAeqn2} we obtain that the centre of $\f H^{N}_\gamma$ is the subalgebra of symmetric polynomials:
\begin{equation} \label{dAHAcentre}
Z(\f H^{N}_\gamma) = \bC[\bm X]^{S_N}.
\end{equation}

\section{The regular representation in momentum space}

This representation is also known as the Bern\v ste\v \i n-Gel\cprime\hspace{-2mm} fand-Gel\cprime\hspace{-2mm} fand representation, as well as the Demazure representation \cite{BernsteinGG, Demazure}.
For this representation we will consider the vector space of polynomial functions $\ca P(\bC^N) \cong \bC[\lambda_1,\ldots,\lambda_N]$. In particular, we will identify a copy of $\f H^{N}_\gamma$ as a subalgebra of $\End(\ca P(\bC^N))$. \\

The divided difference operator $\Delta_j$ introduced in \rfeqn{dAHAeqn2} can be ``dualized'' as follows. 
\begin{defn} \emph{\cite{Emsiz,Gutkin1987}} \label{divdiffoperator}
Let $1 \leq j \ne k \leq N$ and $\gamma \in \bR$.
Then $\tilde \Delta_{j \, k} = \frac{1-\tilde s_{j \, k}}{\lambda_j-\lambda_k} \in \End(\ca P(\bC^N))$ is defined by
\begin{equation} \label{divdiffop2}
\left( \tilde \Delta_{j \, k} p \right)(\lambda_1,\ldots,\lambda_N) = \frac{p(\lambda_1,\ldots,\lambda_N)-p(\lambda_{s_{j \, k} 1},\ldots,\lambda_{s_{j\, k} N})}{\lambda_j-\lambda_k}, 
\end{equation}%
\nc{gdc}{$\tilde \Delta_j, \tilde \Delta_{j \, k}$}{Divided difference operator appearing in regular representation of \\ dAHA \nomrefeqpage}%
where $p \in \ca P(\bC^N)$ and $\bm \lambda \in \bC^N$.
Note that $p-\tilde s_{j \, k} p$ is a polynomial alternating in $\lambda_j,\lambda_k$, so that it is divisible by $\lambda_j-\lambda_k$; in other words $\tilde \Delta_{j \, k}$ is a bona fide operators on $\End(\ca P(\bC^N))$.
For $j=1,\ldots,N-1$ we write $\tilde \Delta_j = \tilde \Delta_{j \, j+1}$ and introduce the deformed simple transposition
\begin{equation} \label{deformedtransposition} \tilde s_{j,\gamma} = \tilde s_j - \ii \gamma \tilde \Delta_j \in \End(\ca P(\bC^N)). \end{equation}%
\nc{rslz}{$\tilde s_{j,\gamma}$}{Deformed simple transposition appearing in regular representation of dAHA in momentum space  \nomrefeqpage}%
\vspace{-7mm}
\end{defn}

\begin{lem} \label{divdiffprops}
We list some useful properties of the divided difference operators that follow immediately from the definition. Let $1 \leq j \ne k \leq N$. 
\begin{enumerate}
\item \label{divdiffequivariance} For  $1 \leq l \ne m \leq N$, we have $\tilde s_{j \, k} \tilde \Delta_{l \, m} = \tilde \Delta_{s_{j \, k}(l) \, s_{j \, k}(m)} \tilde s_{j \, k}$.
\item \label{divdiffnilpotency} $\tilde s_{j \, k} \tilde \Delta_{j \, k} =  \tilde \Delta_{k \, j} \tilde s_{j \, k} = -\tilde \Delta_{j \, k} \tilde s_{j \, k} = \tilde \Delta_{j \, k}$ and hence $\tilde \Delta_{j \, k}^2 = 0$. 
\item \label{divdiffformula} For $a,b \in \bZ_{\geq 0}$ we have:
\[ \tilde \Delta_{j \, k} \lambda_j^a \lambda_k^b = \begin{cases} -\displaystyle \sum_{a \leq A, B \leq b-1 \atop A+B=a+b-1} \lambda_j^A \lambda_k^B, & \n{if } a \leq b \\  \displaystyle \sum_{b \leq A,B \leq a-1 \atop A+B=a+b-1} \lambda_j^A \lambda_k^B, & \n{if } a \geq b. \end{cases} \]
\end{enumerate}
\end{lem}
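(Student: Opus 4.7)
The plan is to verify each of the three properties directly from the definition $\tilde\Delta_{j\,k} = (1-\tilde s_{j\,k})/(\lambda_j-\lambda_k)$, treating $\tilde s_{j\,k}$ as the permutation action on $\ca P(\bC^N)$ and $\lambda_j,\lambda_k$ as multiplication operators. The key observation throughout is that for any $p \in \ca P(\bC^N)$, the polynomial $p - \tilde s_{j\,k}p$ is alternating in $\lambda_j$ and $\lambda_k$, so $\tilde\Delta_{j\,k}p$ is symmetric in $\lambda_j,\lambda_k$. This one fact does most of the work for (ii).

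For (i), the plan is to use the ``conjugation'' identities $\tilde s_{j\,k} \tilde s_{l\,m} = \tilde s_{s_{j\,k}(l)\,s_{j\,k}(m)} \tilde s_{j\,k}$ in $\bC S_N$ together with the relation $\tilde s_{j\,k}(\lambda_l-\lambda_m)^{-1} = (\lambda_{s_{j\,k}(l)}-\lambda_{s_{j\,k}(m)})^{-1}\tilde s_{j\,k}$ on rational functions. Composing these on $(1-\tilde s_{l\,m})/(\lambda_l-\lambda_m)$ immediately rearranges into $\tilde\Delta_{s_{j\,k}(l)\,s_{j\,k}(m)}\tilde s_{j\,k}$. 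For (ii), since $\tilde\Delta_{j\,k}p$ is $\tilde s_{j\,k}$-invariant one gets $\tilde s_{j\,k}\tilde\Delta_{j\,k} = \tilde\Delta_{j\,k}$ for free; the identity $\tilde\Delta_{k\,j} = -\tilde\Delta_{j\,k}$ follows from swapping $j$ and $k$ in the defining formula (the $-$ coming from the denominator), and $\tilde\Delta_{j\,k}\tilde s_{j\,k} = -\tilde\Delta_{j\,k}$ comes from applying $(1-\tilde s_{j\,k})$ to $\tilde s_{j\,k}p$. Finally $\tilde\Delta_{j\,k}^2 = 0$ is a corollary of $\tilde s_{j\,k}\tilde\Delta_{j\,k} = \tilde\Delta_{j\,k}$.

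For (iii), I would factor $\lambda_j^a \lambda_k^b - \lambda_j^b \lambda_k^a = \lambda_j^{\min(a,b)}\lambda_k^{\min(a,b)}(\lambda_k^{|b-a|}-\lambda_j^{|b-a|})$ (up to a sign depending on whether $a\le b$ or $a \ge b$) and then apply the geometric series identity
\[ \frac{\lambda_j^n - \lambda_k^n}{\lambda_j - \lambda_k} = \sum_{m=0}^{n-1} \lambda_j^m \lambda_k^{n-1-m}. \]
Reindexing $A = \min(a,b)+m$, $B = \max(a,b)-1-m$ yields the stated ranges $A,B \in [\min(a,b),\max(a,b)-1]$ with $A+B=a+b-1$, and the sign agrees with the case split in the claim.

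Honestly, I do not expect any of this to pose a genuine obstacle: each part reduces to a one-line computation, and the only minor subtlety is bookkeeping in (i) to make sure that the conjugation of a multiplication operator by $\tilde s_{j\,k}$ produces the correct permuted denominator $\lambda_{s_{j\,k}(l)} - \lambda_{s_{j\,k}(m)}$, and in (iii) that the index ranges are stated symmetrically between the two cases.
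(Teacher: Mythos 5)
Your proposal is correct and follows essentially the same route as the paper, which simply notes that (i) follows from $\tilde\Delta_{l\,m}$ being a combination of $1$ and $\tilde s_{l\,m}$ with coefficients depending only on $\lambda_l,\lambda_m$, that (ii) follows from (i), and that (iii) is a direct calculation. Your version just fills in the same elementary computations (conjugation of the denominator, symmetry of $\tilde\Delta_{j\,k}p$ in $\lambda_j,\lambda_k$, and the geometric-series factorization) in more detail, and all the steps check out.
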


\begin{proof}
Property \ref{divdiffequivariance} follows from $\tilde \Delta_{l \, m}$ being a linear combination of 1 and $\tilde s_{l \, m}$ whose coefficients only depend on $\lambda_l$ and $\lambda_m$. Next, property \ref{divdiffnilpotency} is an immediate consequence of property \ref{divdiffequivariance}, and property \ref{divdiffformula} is a straightforward calculation.
\end{proof}

A representation of an associative algebra can be defined by fixing the images of its generators. Because the underlying vector space structure can be preserved by linearly extending these assignments one only needs to check that the relations used in the definition of the associative algebra are also preserved.

\begin{prop} \emph{\cite{Emsiz,Gutkin1987}} \label{dAHAregrep}
The following assignments define a representation $\rho^\n{reg}_\gamma$ 
of $\f H^{N}_\gamma$ on $\ca P(\bC^N)$ $\cong$ $\bC[\lambda_1,\ldots,\lambda_N]$.
\begin{equation} \label{dAHAregrepeqns}
\rho^\n{reg}_\gamma( s_j ):=\tilde s_{j,\gamma}, \qquad \rho^\n{reg}_\gamma(X_j) := \lambda_j. \end{equation}%
\nc{grl}{$\rho^\n{reg}_\gamma$}{Regular representation in momentum space of dAHA \nomrefeqpage}%
\end{prop}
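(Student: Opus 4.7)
To establish \rfp{dAHAregrep} I would verify that the assignments $s_j \mapsto \tilde s_{j,\gamma}$, $X_j \mapsto \lambda_j$ respect each of the five defining relations (dAHArel1)--(dAHArel5); since $\ca P(\bC^N)$ is free as a module over the subalgebra generated by the $\tilde s_{j,\gamma}$ and the $\lambda_j$, this is all that is required to extend $\rho^\n{reg}_\gamma$ linearly and multiplicatively to a representation. The relation (dAHArel5) is immediate because coordinate multiplications commute, and (dAHArel2) is obtained by separation of variables: for $|j-k|>1$, each of $\tilde s_j$ and $\tilde \Delta_j$ depends only on $\lambda_j,\lambda_{j+1}$, and symmetrically for $\tilde s_k, \tilde \Delta_k$, so all four operators pairwise commute and therefore so do $\tilde s_{j,\gamma}$ and $\tilde s_{k,\gamma}$.

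For (dAHArel3) I would expand
\[
\tilde s_{j,\gamma}^{\,2} = \tilde s_j^{\,2} - \ii\gamma (\tilde s_j \tilde\Delta_j + \tilde\Delta_j \tilde s_j) - \gamma^2 \tilde\Delta_j^{\,2},
\]
and invoke \rfl{divdiffprops}\ref{divdiffnilpotency}, which gives $\tilde s_j\tilde\Delta_j = \tilde\Delta_j$, $\tilde\Delta_j \tilde s_j = -\tilde\Delta_j$ and $\tilde\Delta_j^{\,2}=0$; the cross term and the quadratic term vanish and only $\tilde s_j^{\,2}=1$ survives. For (dAHArel4) I would split into cases: for $k \notin \{j,j+1\}$ the operator $\lambda_k$ commutes with both $\tilde s_j$ and $\tilde\Delta_j$ and $s_j(k)=k$, so the left hand side is zero. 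For $k\in\{j,j+1\}$ I would exploit $\tilde s_j \lambda_k = \lambda_{s_j(k)} \tilde s_j$ to cancel the $\tilde s_j$-part and reduce the claim to the computation
\[
(\tilde\Delta_j \lambda_k - \lambda_{s_j(k)}\tilde\Delta_j) p = \frac{\lambda_k - \lambda_{s_j(k)}}{\lambda_j - \lambda_{j+1}} \, p = (\delta_{j\,k}-\delta_{j+1\,k})\, p,
\]
yielding precisely the right hand side of (dAHArel4).

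The main obstacle, and the step worth dwelling on, is the braid relation (dAHArel1). My plan is to first observe that $\tilde s_{j,\gamma}$ and $\tilde s_{j+1,\gamma}$ together only involve the three variables $\lambda_j,\lambda_{j+1},\lambda_{j+2}$ (all other coordinates are passed through unchanged), so it suffices to check the identity on the subalgebra $\bC[\lambda_j,\lambda_{j+1},\lambda_{j+2}]$, reducing effectively to the $N=3$ case. I would then expand both triple products in powers of $\gamma$: the $\gamma^0$ piece is the usual braid relation $\tilde s_j\tilde s_{j+1}\tilde s_j = \tilde s_{j+1}\tilde s_j\tilde s_{j+1}$, and the $\gamma^3$ piece reduces to $\tilde\Delta_j\tilde\Delta_{j+1}\tilde\Delta_j = \tilde\Delta_{j+1}\tilde\Delta_j\tilde\Delta_{j+1}$. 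The intermediate $\gamma^1$ and $\gamma^2$ contributions are each a sum of three terms on either side; using \rfl{divdiffprops}\ref{divdiffequivariance} to move the $\tilde s$'s past the $\tilde \Delta$'s (turning simple transpositions into cyclic relabellings of the $\tilde\Delta$-indices) and \ref{divdiffnilpotency} to simplify adjacent $\tilde s_j\tilde\Delta_j$ factors, both sides can be brought into a common canonical form. The routine but unavoidable part of the proof is verifying these intermediate-order identities; this is cleanly done either symbolically, or concretely by evaluating on the monomial basis $\lambda_j^a\lambda_{j+1}^b\lambda_{j+2}^c$ using the explicit summation formula in \rfl{divdiffprops}\ref{divdiffformula}.
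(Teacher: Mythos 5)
Your proposal is correct and follows essentially the same route as the paper's own proof: check the five defining relations on the generators, with (dAHArel5) trivial, (dAHArel2) by locality of the operators in disjoint pairs of variables, (dAHArel3) by the nilpotency identities $\tilde s_j\tilde\Delta_j=\tilde\Delta_j=-\tilde\Delta_j\tilde s_j$ and $\tilde\Delta_j^2=0$, (dAHArel4) by the same divided-difference computation, and the braid relation by expanding both triple products in powers of $\gamma$ and cancelling order by order. The only difference is organisational: the paper isolates the order-$\gamma$, $\gamma^2$ and $\gamma^3$ cancellations as standalone identities proved in its appendix, whereas you defer them to a symbolic or monomial-basis verification, which is an equally valid way to discharge the same obligations.
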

\vspace{-10mm}
\begin{proof}
We will repeatedly refer to Appendix \ref{dAHAregrepprops}.
We only need to show that with definitions \rfeqn{dAHAregrepeqns}, the axioms \rfeqnser{dAHArel1}{dAHArel5} hold, with $(s_j, X_k) \to (\tilde s_{j,\gamma}, \lambda_k)$.
\begin{description}
\item[\rf{dAHArel1}] This involves the most work; we have 
\begin{align*}
\lefteqn{\tilde s_{j, \gamma} \tilde s_{j\!+\!1, \gamma} \tilde s_{j, \gamma} - \tilde s_{j\!+\!1, \gamma} \tilde s_{j, \gamma} \tilde s_{j\!+\!1, \gamma} =} \displaybreak[2] \\
&= \tilde s_j  \tilde s_{j\!+\!1}  \tilde s_j - \tilde s_{j\!+\!1} \tilde s_j  \tilde s_{j\!+\!1}  + \\
& \qquad - \ii \gamma \left( \tilde s_j  \tilde s_{j\!+\!1} \tilde \Delta_j +  \tilde s_j \tilde \Delta_{j\!+\!1}  \tilde s_j + \tilde \Delta_j  \tilde s_{j\!+\!1}  \tilde s_j - \tilde s_{j\!+\!1}  \tilde s_j \tilde \Delta_{j\!+\!1} -  \tilde s_{j\!+\!1} \tilde \Delta_j  \tilde s_{j\!+\!1} - \tilde \Delta_{j\!+\!1}  \tilde s_j  \tilde s_{j\!+\!1} \right) +\\
& \qquad -\gamma^2 \left(  \tilde s_j \tilde \Delta_{j\!+\!1} \tilde \Delta_j + \tilde \Delta_j  \tilde s_{j\!+\!1} \tilde \Delta_j + \tilde \Delta_j \tilde \Delta_{j\!+\!1}  \tilde s_j - \tilde s_{j\!+\!1} \tilde \Delta_j \tilde \Delta_{j\!+\!1} + \tilde \Delta_{j\!+\!1} \tilde s_j \tilde \Delta_{j\!+\!1} + \tilde \Delta_{j\!+\!1} \tilde \Delta_j  \tilde s_{j\!+\!1} \right) +\\
& \qquad + \ii \gamma^3 \left( \tilde \Delta_j \tilde \Delta_{j\!+\!1} \tilde \Delta_j - \tilde \Delta_{j\!+\!1} \tilde \Delta_j \tilde \Delta_{j\!+\!1} \right) \; = \; 0.
\end{align*}
This follows from applying, for the term proportional to $\gamma$, \rfeqn{regrep6} with $(j,k,l) \to (j,j+1,j+2)$ and \rfeqn{regrep6a} twice, once with $(j,k,l) \to (j,j+1,j+2)$ and once with $(j,k,l) \to (j+2,j+1,j)$; for the term proportional to $\gamma^2$, \rfeqn{regrep9} (again, once with $(j,k,l) \to (j,j+1,j+2)$ and once with $(j,k,l) \to (j+2,j+1,j)$); for the term proportional to $\gamma^3$, \rfeqn{regrep10} with $(j,k,l) \to (j,j+1,j+2)$.
\item[\rf{dAHArel2}] We write, for $|j-k|>1$,
\[ \left[\tilde s_{j,\gamma},\tilde s_{k,\gamma}\right]=  [ \tilde s_j, \tilde s_k] - \ii \gamma \left( [ \tilde s_j, \tilde \Delta_k ] - [ \tilde s_k,\tilde \Delta_j]\right) -\gamma^2 [\tilde \Delta_j,\tilde \Delta_k]. \]
These commutators vanish because of \rfl{regrep5lem}.
\item[\rf{dAHArel3}] We simply have, by virtue of \rfl{divdiffprops}, Property \ref{divdiffnilpotency},
\[ \tilde s_{j,\gamma}^2 = ( \tilde s_j - \ii \gamma \tilde \Delta_j)( \tilde s_j - \ii \gamma \tilde \Delta_j) = \tilde s_j^2 - \ii \gamma ( \tilde s_j \tilde \Delta_j + \tilde \Delta_j \tilde s_j) -\gamma^2 \tilde \Delta_j^2 = 1. \]
\item[\rf{dAHArel4}] This follows from \rfl{regrep4lem} with $k=j+1$:
\begin{align*} 
\tilde s_{j,\gamma} \lambda_k - \lambda_{s_j(k)} \tilde s_{j,\gamma} 
&=(\tilde s_j - \ii \gamma \tilde \Delta_j) \lambda_k - \lambda_{s_j(k)} (\tilde s_j - \ii \gamma \tilde \Delta_j)\\ &= - \ii \gamma \left( \tilde \Delta_j \lambda_k - \lambda_{s_j(k)} \tilde \Delta_j \right) \; = \; - \ii \gamma \left( \delta_{j \, k} - \delta_{j + 1 \, k} \right); 
\end{align*}
\item[\rf{dAHArel5}] This is trivial. \qedhere
\end{description}
\end{proof}

Because the polynomial functions form a dense subspace of the analytic functions, we immediately get a representation of $\f H^{N}_\gamma$ on this larger vector space.

\begin{cor} \label{dAHAregrepanalytic}
The assignments given by \rfeqn{dAHAregrepeqns} define a representation of $\f H^{N}_\gamma$ on $\ca C^\omega(\bC^N)$.
\end{cor}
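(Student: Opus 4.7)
The plan is to extend the representation by verifying three things: (a) the operators $\tilde s_j$, multiplication by $\lambda_j$, and $\tilde \Delta_j$ all preserve $\ca C^\omega(\bC^N)$; (b) with these extended actions, the operators $\tilde s_{j,\gamma} = \tilde s_j - \ii \gamma \tilde \Delta_j$ and $\lambda_j$ are well-defined endomorphisms of $\ca C^\omega(\bC^N)$; (c) the five defining relations \rf{dAHArel1}--\rf{dAHArel5} still hold as operator identities.

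Points (a) for $\tilde s_j$ and for multiplication by $\lambda_j$ are immediate: real-analyticity is preserved by permutation of arguments and by multiplication by a polynomial. The only nontrivial point is closure under $\tilde \Delta_j$. Given $f \in \ca C^\omega(\bC^N)$, the function $g := f - \tilde s_j f$ is real-analytic and vanishes identically on the hyperplane $\lambda_j = \lambda_{j+1}$, and we want $g/(\lambda_j - \lambda_{j+1}) \in \ca C^\omega(\bC^N)$. The cleanest way is to change coordinates: freeze all $\lambda_k$ with $k \ne j,j+1$ and set $s = (\lambda_j + \lambda_{j+1})/2$, $t = (\lambda_j - \lambda_{j+1})/2$. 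Then $(\tilde \Delta_j f)(\bm \lambda)$ becomes $(H(s,t) - H(s,-t))/(2t)$ for a real-analytic $H$, and since $H(s,t) - H(s,-t)$ is an odd function of $t$ with convergent Taylor series, division by $t$ yields another real-analytic function. As $(s,t) \mapsto (\lambda_j, \lambda_{j+1})$ is a real-analytic diffeomorphism, $\tilde \Delta_j f$ is real-analytic jointly in $\bm \lambda$.

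For (c), the key observation is that the proof of \rfp{dAHAregrep} nowhere used that the functions were polynomials: every step reduced the relations \rf{dAHArel1}--\rf{dAHArel5} to algebraic identities among $\tilde s_j$, $\lambda_k$, $\tilde \Delta_j$ (namely the formulae collected in \rfl{divdiffprops} and Appendix \ref{dAHAregrepprops}). Each of these underlying identities is proved pointwise, by evaluating both sides on an argument $\bm \lambda \in \bC^N$ and comparing, and thus holds verbatim once we know that both sides are well-defined operators on the ambient space. Having established (a) and (b), the same chain of equalities therefore yields \rf{dAHArel1}--\rf{dAHArel5} on $\ca C^\omega(\bC^N)$.

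The only real obstacle is the well-definedness of $\tilde \Delta_j$; once that is settled the corollary is essentially a formal consequence of \rfp{dAHAregrep}. As an alternative, one can also invoke that polynomials are dense in $\ca C^\omega(\bC^N)$ in the topology of locally uniform convergence of all derivatives and that the operators $\tilde s_{j,\gamma}$, $\lambda_j$ are continuous in this topology, so the operator identities extend from $\ca P(\bC^N)$ by continuity; but the direct pointwise argument sketched above is shorter and avoids any topological setup.
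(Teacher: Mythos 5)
Your proof is correct, but it takes a genuinely different route from the paper. The paper disposes of \rfc{dAHAregrepanalytic} in a single sentence preceding the statement: polynomials are dense in the analytic functions, so the representation ``immediately'' extends from $\ca P(\bC^N)$ to $\ca C^\omega(\bC^N)$. That is exactly the alternative you mention and set aside at the end, and your reservation about it is fair --- as stated it silently requires a topology in which $\ca P(\bC^N)$ is dense and in which $\tilde s_{j,\gamma}$ and multiplication by $\lambda_j$ are continuous, none of which the paper supplies. Your main route instead verifies the extension directly: the only nontrivial point is that $\tilde \Delta_j$ preserves $\ca C^\omega(\bC^N)$, which you settle by the change of variables $s=(\lambda_j+\lambda_{j+1})/2$, $t=(\lambda_j-\lambda_{j+1})/2$ and the observation that an odd convergent power series in $t$ is divisible by $t$; after that, the relations \rf{dAHArel1}--\rf{dAHArel5} follow because every identity invoked in the proof of \rfp{dAHAregrep} (and in Appendix \ref{dAHAregrepprops}) is an operator identity built from $\tilde s_{j\,k}$, $\lambda_l$ and $\tilde \Delta_{j\,k}$ whose verification never uses polynomiality. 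This buys you a self-contained argument with no topological input, at the cost of a little more work on the divisibility step. One caveat applies equally to your argument and to the paper's: the divisibility of $f-\tilde s_j f$ by the \emph{complex} linear form $\lambda_j-\lambda_{j+1}$ requires reading $\ca C^\omega(\bC^N)$ as functions holomorphic in $\bm\lambda$ (which is what all the functions actually used --- $\e^{\ii\bm\lambda}$, $G_\gamma$, $\tau^\pm_\mu$ --- are); for a function that is merely real-analytic on $\bC^N\cong\bR^{2N}$, such as $f(\bm\lambda)=\mathrm{Re}\,\lambda_j$, the quotient $\mathrm{Re}(\lambda_j-\lambda_{j+1})/(\lambda_j-\lambda_{j+1})$ is not even continuous on the diagonal, so the class must be the holomorphic one for the statement to hold literally.
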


For any $w \in S_N$ with decomposition $w=s_{i_1} \ldots s_{i_l}$ for some $i_1,\ldots,i_l \in \{1,\ldots,N-1\}$, we will write 
\begin{equation}
\tilde w_\gamma = \tilde s_{i_1,\gamma} \ldots \tilde s_{i_l,\gamma};
\end{equation}%
\nc{rwlz}{$\tilde w_\gamma$}{Deformed permutation acting in momentum space \nomrefeqpage}%
because of \rfp{dAHAregrep} this does not depend on the choice of the decomposition and hence is a well-defined map: $S_N \to \ca C^\omega(\bC^N)$. 
We extend this notation linearly to any element $t$ of the group algebra $\bC S_N$, writing $\tilde t_\gamma$ for $\rho^\n{reg}_\gamma(t)$.
In particular, we may consider 
\begin{equation}
\tilde{\ca S}^{(N)}_\gamma = \rho^\n{reg}_\gamma(\ca S^{(N)}) = \frac{1}{N!} \sum_{w \in S_N} \tilde w_\gamma.
\end{equation}%
\nc{rscx}{$\tilde{\ca S}^{(N)}_\gamma$}{Image of symmetrizer under regular representation of dAHA in momentum space \nomrefeqpage}%
If $w=s_{j \, k}$, then for $\tilde w_\gamma=(\tilde s_{j\,k})_\gamma$ we may also write $\tilde s_{j\,k,\gamma}$.\\

\boxedenv{
\begin{notn}[Comparing operators in function spaces with different particle numbers] \label{notn6}
When it is important to highlight in which $\ca F(J^N)$ we consider the action of an operator, we will indicate this by adding a superscript $(N)$ to the operator in question, such as $s_1^{(2)}$ for the transposition acting on $\ca F(J^2)$ by $(s_1^{(2)} f)(x_1,x_2)=f(x_2,x_1)$ and $s_1^{(3)}$ for the transposition acting on $\ca F(J^3)$ by $(s_1^{(3)} f)(x_1,x_2,x_3)=f(x_2,x_1,x_3)$.
\end{notn} 
}

We now introduce $G_\gamma = G_\gamma^{(N)} \in \ca C^\omega(\bC^N_\n{reg})$ by
\begin{equation} G_\gamma(\bm \lambda) = \prod_{j,k=1 \atop j < k}^N \frac{\lambda_j-\lambda_k-\ii \gamma}{\lambda_j-\lambda_k}. \end{equation}%
\nc{rgc}{$G_\gamma(\bm \lambda)$}{Coefficient in Bethe wavefunction \nomrefeqpage}%
Recalling the notation $\tau^\pm_\mu(\bm \lambda)$ from \rfeqn{BAEv},
we have the obvious property that 
\begin{equation} \label{GLambda} 
G_\gamma^{(N+1)}(\bm \lambda,\mu) = \tau^+_\mu(\bm \lambda) G_\gamma^{(N)}(\bm \lambda), \qquad
G_\gamma^{(N+1)}(\mu,\bm \lambda) = \tau^-_\mu(\bm \lambda) G_\gamma^{(N)}(\bm \lambda), 
\end{equation}
for $(\bm \lambda,\mu) \in \bC^{N+1}_\n{reg}$.
Denote by the same symbol $G_\gamma(\bm \lambda)=G^{(N)}_\gamma(\bm \lambda)$ the corresponding multiplication operator in $\End(\ca C^\omega(\bC^N_\n{reg}))$. Then we have
\begin{prop} \label{regrepsymmetrizers}
The following identity holds in $\End(\ca C^\omega(\bC^N))$:
\begin{equation} \label{regrepsymmetrizerseqn} \tilde{\ca S}^{(N)}_\gamma  =  \tilde{\ca S}^{(N)} G_\gamma^{(N)}(\bm \lambda). \end{equation}
\end{prop}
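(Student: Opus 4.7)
The plan is to prove the identity by induction on $N$. The base case $N=1$ is immediate, since $G^{(1)}_\gamma(\bm\lambda)=1$ (empty product) and both $\tilde{\ca S}^{(1)}_\gamma$ and $\tilde{\ca S}^{(1)}$ equal the identity operator on $\ca C^\omega(\bC)$.

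For the inductive step, I would apply $\rho^\n{reg}_\gamma$ to the recursion $\ca S^{(N+1)}=\tfrac{1}{N+1}\sum_{m=1}^{N+1}s_{m\,N+1}\ca S^{(N)}$ of \rfl{symmetrizerrecursion}. Writing out $s_{m\,N+1}=s_m s_{m+1}\ldots s_{N-1}s_N s_{N-1}\ldots s_{m+1}s_m$ and using the absorption identities $\tilde s_{j,\gamma}\tilde{\ca S}^{(N)}_\gamma=\tilde{\ca S}^{(N)}_\gamma$ for $j=1,\ldots,N-1$ (the images under $\rho^\n{reg}_\gamma$ of $s_j\ca S^{(N)}=\ca S^{(N)}$ in $\bC S_N\subset \f H^{N+1}_\gamma$) to collapse the rightmost factors, I obtain
\[
\tilde{\ca S}^{(N+1)}_\gamma=\frac{1}{N+1}\sum_{m=1}^{N+1}\tilde s_{m,\gamma}\tilde s_{m+1,\gamma}\ldots\tilde s_{N,\gamma}\,\tilde{\ca S}^{(N)}_\gamma.
\]
For the right-hand side of the proposition, the same recursion at $\gamma=0$, combined with the factorisation $G^{(N+1)}_\gamma(\bm\lambda,\lambda_{N+1})=G^{(N)}_\gamma(\bm\lambda)\,\tau^+_{\lambda_{N+1}}(\bm\lambda)$ from \rfeqn{GLambda}, the $S_N$-symmetry of $\tau^+_{\lambda_{N+1}}(\bm\lambda)$ (so that it commutes through $\tilde{\ca S}^{(N)}$), and the inductive hypothesis yields
\[
\tilde{\ca S}^{(N+1)}G^{(N+1)}_\gamma=\frac{1}{N+1}\sum_{m=1}^{N+1}\tilde s_m\tilde s_{m+1}\ldots\tilde s_N\,\tau^+_{\lambda_{N+1}}(\bm\lambda)\,\tilde{\ca S}^{(N)}_\gamma.
\]

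Thus the induction step reduces to the operator identity
\[
\sum_{m=1}^{N+1}\tilde s_{m,\gamma}\tilde s_{m+1,\gamma}\ldots\tilde s_{N,\gamma}\,\tilde{\ca S}^{(N)}_\gamma=\sum_{m=1}^{N+1}\tilde s_m\tilde s_{m+1}\ldots\tilde s_N\,\tau^+_{\lambda_{N+1}}(\bm\lambda)\,\tilde{\ca S}^{(N)}_\gamma.
\]
I would prove this by expanding each $\tilde s_{j,\gamma}=\frac{\lambda_j-\lambda_{j+1}+\ii\gamma}{\lambda_j-\lambda_{j+1}}\tilde s_j-\frac{\ii\gamma}{\lambda_j-\lambda_{j+1}}$, rewriting the left-hand products as sums of (rational coefficient)$\times$(product of $\tilde s_j$'s), and matching with the right-hand side term by term after grouping by the underlying permutation and applying partial-fraction identities in the spirit of \rfl{BAevpartialfraction1}. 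Since $\tilde{\ca S}^{(N)}_\gamma$ appears as the rightmost factor on both sides, it is equivalent to verify the identity on arbitrary $S_N$-symmetric functions, which eliminates the deformed symmetrizer from the check. A useful sanity test: applied to the constant function $1$ (fixed by every $\tilde s_{j,\gamma}$), the left-hand side gives $N+1$ while the right-hand side gives $\sum_{m=1}^{N+1}\tau^+_{\lambda_m}(\bm\lambda_{\hat m})$; the equality of these two quantities is the $\mu^{-1}$-coefficient in the large-$\mu$ expansion of $\tau^+_\mu$ on $N+1$ variables and is itself a consequence of \rfl{BAevpartialfraction1}.

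The main obstacle is the combinatorial bookkeeping in this reduced identity: each of the $N+1$ terms on the left expands into up to $2^{N-m+1}$ summands with rational coefficients singular along the diagonals $\lambda_j=\lambda_k$, and the cancellations that consolidate these contributions into the single factor $\tau^+_{\lambda_{N+1}}(\bm\lambda)=\prod_{j=1}^N\frac{\lambda_j-\lambda_{N+1}-\ii\gamma}{\lambda_j-\lambda_{N+1}}$ on the right reflect a telescoping/Baxterization phenomenon whose careful verification is the technical heart of the induction step.
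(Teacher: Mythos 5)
Your reduction is exactly the paper's: the same induction on $N$, the same use of \rfl{symmetrizerrecursion} together with the absorption $s_{m\,N+1}\ca S^{(N)}=s_m\cdots s_N\ca S^{(N)}$ on the deformed side, and the same combination of \rfeqn{GLambda}, the $S_N$-invariance of $\tau^+_{\lambda_{N+1}}(\bm\lambda)$ and the inductive hypothesis on the undeformed side. The identity you reduce to is precisely the paper's \rfl{regreplem2} (your version carries a trailing $\tilde{\ca S}^{(N)}_\gamma$ rather than $\tilde{\ca S}^{(N)}$, but the two are interchangeable given the inductive hypothesis, since $\tilde{\ca S}^{(N)}_\gamma=\tilde{\ca S}^{(N)}G^{(N)}_\gamma(\bm\lambda)$). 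The one place where you diverge --- and where your argument is not yet complete --- is the proof of that identity: you propose to expand each $\tilde s_{j,\gamma}$ into its two terms and consolidate the resulting exponentially many rational summands by partial fractions, and you correctly flag this bookkeeping as the technical heart without carrying it out. The paper avoids the full expansion by proving \rfl{regreplem2} with a second induction on $N$ that peels off a single factor $\tilde s_{N,\gamma}$ at a time; the engine of that induction is \rfl{regreplem1}, namely $\sum_{m=1}^N\tilde s_{m,\gamma}\cdots\tilde s_{N-1,\gamma}\,\frac{\ii\gamma}{\lambda_N-\mu}\,\tilde{\ca S}^{(N)}=\left(1-\tau^+_\mu(\bm\lambda)\right)\tilde{\ca S}^{(N)}$, which is exactly the operator form of the partial-fraction identity \rfl{BAevpartialfraction1} that you cite, and whose evaluation on the constant function reproduces your sanity check $\sum_m\tau^+_{\lambda_m}(\bm\lambda_{\hat m})=N+1$. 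Replacing your term-by-term expansion with that two-lemma secondary induction closes the gap and turns your outline into the paper's proof.
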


\begin{proof}
By induction on $N$; the case $N=1$ is trivial. 
To complete the proof, for $\bm \lambda \in \bC^N$ and $\bm \lambda' = (\lambda_1,\ldots,\lambda_{N-1}) \in \bC^{N-1}$, and note that 
\begin{align*} 
\tilde{\ca S}^{(N)} G^{(N)}_\gamma(\bm \lambda) &= \frac{1}{N} \sum_{m=1}^{N} \tilde s_m \ldots \tilde s_{N-1} \tilde{\ca S}^{(N-1)} \tau^+_{\lambda_N}(\bm \lambda') G^{(N-1)}_\gamma(\bm \lambda') \displaybreak[2] \\
&= \frac{1}{N} \sum_{m=1}^{N} \tilde s_m \ldots \tilde s_{N-1} \tau^+_{\lambda_N}(\bm \lambda') \tilde{\ca S}^{(N-1)} G^{(N-1)}_\gamma(\bm \lambda') 
\end{align*}
where we have used \rfl{symmetrizerrecursion} and \rfeqn{GLambda}, as well as the fact that $\tau^+_{\mu}( \bm \lambda')$ is symmetric in the $\lambda_j$. 
Now by virtue of \rfl{regreplem2}, we obtain that
\[ \tilde{\ca S}^{(N)} G^{(N)}_\gamma(\bm \lambda) = \frac{1}{N} \sum_{m=1}^{N} \tilde s_{m,\gamma} \ldots \tilde s_{N,\gamma} \tilde{\ca S}^{(N-1)} G^{(N-1)}_\gamma(\bm \lambda'), \]
and now the induction hypothesis, together with \rfl{symmetrizerrecursion} once more, yields
\[ \tilde{\ca S}^{(N)} G^{(N)}_\gamma(\bm \lambda) \! = \! \frac{1}{N}  \sum_{m=1}^{N} \tilde s_{m,\gamma} \ldots \tilde s_{N-1,\gamma} \tilde{\ca S}^{(N-1)}_\gamma  = \tilde{\ca S}^{(N)}_\gamma. \! \qedhere \]
\end{proof}

\section{The integral representation}

We now turn to a second representation of the dAHA, introduced as a tool to study the QNLS problem in \cite{Gutkin1982,GutkinSutherland}.
Consider the space $\ca C(\bR^N)$ of continuous complex-valued functions on $\bR^N$.
For $1 \leq j \ne k \leq N$, we introduce the integral operator $I_{j \, k} \in \End\left(\ca C(\bR^N)\right)$ defined by
\begin{equation} (I_{j \, k} f)(\bm x) = \int_0^{x_j-x_k} \dd y f \! \left( \bm x - y (\bm e_j-\bm e_k) \right) \end{equation}%
\nc{rica}{$I_j, I_{j \, k}$}{Integral operator appearing in integral \\ representation of dAHA \nomrefeqpage}%
for $f\in \ca C(\bR^N)$ and $\bm x \in \bR^N$.
Note that, for $f \in \ca C(\bR^N)$ and $\bm x \in V_{j \, k} = \set{\bm x \in \bR^N}{x_j=x_k}$, we have $(I_{j \, k}f)(\bm x) = 0$. 
Also, $I_{j \, k}$ restricts to an operator on $\ca C^\infty(\bR^N)$ and indeed to $\ca C^\omega(\bR^N)$.
For $j=1,\ldots,N-1$, we write $I_j = I_{j \, j+1}$ and introduce 
\begin{equation} s_{j,\gamma} =  s_j  + \gamma I_j \in \End\left(\ca C^\infty(\bR^N)\right), \end{equation}%
\nc{rslw}{$s_{j,\gamma}$}{Deformed simple transposition appearing in integral representation of \\ dAHA \nomrefeqpage}%
We remark that also $s_{j,\gamma}$ restricts to an operator on $\ca C^\omega(\bR^N)$.

\begin{prop} \emph{\cite{HeckmanOpdam1997}} \label{dAHAintrep}
The following assignments define a representation $\rho^\n{int}_\gamma$ of $\f H^{N}_\gamma$ on $\ca C^\infty(\bR^N)$:
\begin{equation} \rho^\n{int}_\gamma( s_j) := s_{j,\gamma}, \qquad \rho^\n{int}_\gamma(X_j) := -\ii \partial_j. \end{equation}%
\nc{grl}{$\rho^\n{int}_\gamma$}{Integral representation of dAHA \nomrefeqpage}%
In other words, $\ca C^\infty(\bR^N)$ is an $\f H^N_\gamma$-module; furthermore, $\ca C^\omega(\bR^N)$ is a submodule.
\end{prop}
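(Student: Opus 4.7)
The plan is to verify each of the five defining relations \rf{dAHArel1}--\rf{dAHArel5} of $\f H^N_\gamma$ directly in $\End(\ca C^\infty(\bR^N))$, with $(s_j, X_k) \mapsto (s_{j,\gamma}, -\ii \partial_k)$. The two trivial ones dispose of themselves: \rf{dAHArel5} holds since partial derivatives commute, and \rf{dAHArel2} holds because for $|j-k|>1$ the operators $s_{j,\gamma}$ and $s_{k,\gamma}$ act on disjoint pairs of coordinates, so $s_j,s_k,I_j,I_k$ pairwise commute.

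For \rf{dAHArel4}, the key ingredient is the commutation of $I_j$ with $\partial_k$. Applying Leibniz's rule to $(I_j f)(\bm x) = \int_0^{x_j-x_{j+1}} f(\bm x - y(\bm e_j - \bm e_{j+1}))\dd y$ I obtain boundary contributions at $y = x_j - x_{j+1}$ which give exactly $s_j f$:
\[ \partial_j I_j = s_j + I_j \partial_j, \qquad \partial_{j+1} I_j = -s_j + I_j \partial_{j+1}, \qquad [\partial_k,I_j]=0 \ \n{ for } k \notin \{j,j+1\}, \]
or equivalently $I_j(\partial_j - \partial_{j+1}) = 1 - s_j$. Combined with $s_j \partial_k = \partial_{s_j(k)} s_j$, a short computation yields $s_{j,\gamma}(-\ii \partial_k) - (-\ii \partial_{s_j(k)}) s_{j,\gamma} = -\ii \gamma(\delta_{j\,k}-\delta_{j+1\,k})$. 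For \rf{dAHArel3}, the substitution $y \mapsto (x_j - x_{j+1}) - y$ in $I_j$ produces $s_j I_j + I_j s_j = 0$, and an application of Fubini to $I_j^2$, followed by the change of variables $t = y+z$, shows that the contributions from the subregions $y \lessgtr (x_j-x_{j+1})/2$ cancel pairwise, giving $I_j^2 = 0$. Expanding $s_{j,\gamma}^2 = s_j^2 + \gamma(s_j I_j + I_j s_j)+\gamma^2 I_j^2 = 1$ then closes the case.

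The real obstacle is the braid relation \rf{dAHArel1}. I would expand
\[ s_{j,\gamma}\, s_{j+1,\gamma}\, s_{j,\gamma} - s_{j+1,\gamma}\, s_{j,\gamma}\, s_{j+1,\gamma} \]
and group by powers of $\gamma$. The $\gamma^0$ part vanishes by the ordinary braid relation for $s_j,s_{j+1}$, and the $\gamma^1,\gamma^2,\gamma^3$ parts must be shown to cancel using, in addition to the braid relation, the mixed identities between $s_j,s_{j+1},I_j,I_{j+1}$ (for instance, $s_j I_{j+1} s_j = s_{j+1} I_j s_{j+1}$ and similar cross-relations arising from the fact that $s_j s_{j+1}(\bm e_j - \bm e_{j+1}) = \bm e_{j+1}-\bm e_{j+2}$, which expresses $I_j$ and $I_{j+1}$ as $W$-conjugates of one another). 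As a fallback, a more conceptual route is to note that direct calculation gives $I_j \e^{\ii \bm \lambda} = -\ii \tilde \Delta_j \e^{\ii \bm \lambda}$, so that the action of $s_{j,\gamma}$ on the plane wave $\e^{\ii \bm \lambda}$ coincides with that of $\tilde s_{j,\gamma}$ (regarded in the $\bm \lambda$ variable); the braid relation then follows on the family of plane waves from \rfp{dAHAregrep}, and extends to all of $\ca C^\infty(\bR^N)$ by analyticity of both sides in $\bm \lambda$ and the fact that plane waves form a separating family for the operators in question. Either way, once the relations are verified on $\ca C^\infty(\bR^N)$ the assertion that $\ca C^\omega(\bR^N)$ is a submodule is immediate, since each of $s_j$, $I_j$, and $\partial_j$ preserves real-analyticity.
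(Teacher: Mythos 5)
Your proposal is correct, but it is a hybrid of a direct verification and the paper's own argument, so it is worth comparing the two. The paper proves all five axioms uniformly by a single device: expand $f \in \ca C^\infty(\bR^N)$ in plane waves, use \rfl{planewaveregintrep} (namely $s_{j,\gamma}\e^{\ii \bm \lambda} = \tilde s_{j,\gamma}\e^{\ii \bm \lambda}$, coming from $I_{j\,k}\e^{\ii \bm \lambda} = -\ii \tilde \Delta_{j\,k}\e^{\ii \bm \lambda}$) to trade each position-space relation for the corresponding momentum-space relation, and then invoke \rfp{dAHAregrep}. You instead verify \rf{dAHArel2}--\rf{dAHArel5} by hand; your identities $\partial_j I_j = s_j + I_j\partial_j$, $I_j(\partial_j-\partial_{j+1}) = 1 - s_j$, $s_jI_j + I_js_j = 0$ and $I_j^2=0$ are all correct (the last two follow most cleanly from $s_jI_j = -I_j$ and $I_js_j = I_j$), and this buys a self-contained, Fourier-free treatment of the easy relations. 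For the braid relation \rf{dAHArel1}, however, your primary plan --- grouping by powers of $\gamma$ and cancelling via cross-relations such as $s_jI_{j+1}s_j = s_{j+1}I_js_{j+1}$ --- is only sketched, not executed, and this is precisely the relation the paper singles out as the hard one (noting that Heckman--Opdam defer to Gutkin for it). Your fallback is exactly the paper's method, so the proof does close, but the extension from plane waves to all of $\ca C^\infty(\bR^N)$ should be justified via the Fourier expansion $f = \int_{\bR^N}\dd \bm \lambda\, \tilde f(\bm \lambda)\e^{\ii \bm \lambda}$, i.e.\ completeness of the plane waves in the domain of the operators, rather than their being a ``separating family'': one needs the plane waves to generate the space on which the operator identity is asserted, not merely to distinguish operators.
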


The proof given by Heckman and Opdam in \cite{HeckmanOpdam1997} refers to \cite{Gutkin1982} for the Coxeter relation $s_{j,\gamma}s_{j+1,\gamma}s_{j,\gamma}$ = $s_{j+1,\gamma}s_{j,\gamma}s_{j+1,\gamma}$, which is the trickiest relation to prove. 
Here, we will present a different proof, relying on the representation $\rho^\n{reg}_\gamma$.

\begin{proof}[Proof of \rfp{dAHAintrep}]
Consider the Fourier expansion of an arbitrary $f \in \ca C^\infty(\bR^N)$,
\[ f = \int_{\bR^N} \dd \bm \lambda \tilde f(\bm \lambda) \e^{\ii \bm \lambda},  \]
for certain Fourier coefficients $\tilde f(\bm \lambda) \in \bC$.
We can use this and \rfl{planewaveregintrep} to turn the axioms (\ref{dAHArel1}-\ref{dAHArel5}) that we need to prove into the axioms of the regular representation, which we already know to hold by virtue of \rfp{dAHAregrep}.\\

For example, to prove that $[s_{j,\gamma} ,s_{k,\gamma}]= 0$ for $|j-k|>1$ we can write
\[ [s_{j,\gamma} ,s_{k,\gamma}]f  = \int_{\bR^N} \dd \bm \lambda \tilde f(\bm \lambda) [s_{j,\gamma} ,s_{k,\gamma}] \e^{\ii \bm \lambda} . \]
\rfl{planewaveregintrep} yields $(s_{j,\gamma} s_{k,\gamma} - s_{k,\gamma} s_{j,\gamma}) \e^{\ii \bm \lambda}  =  (s_{j,\gamma} \tilde s_{k,\gamma} - s_{k,\gamma} \tilde s_{j,\gamma}) \e^{\ii \bm \lambda}$;
because $s_{j,\gamma}$ and $\tilde s_{k,\gamma}$ act on different spaces they commute so that 
$(s_{j,\gamma} s_{k,\gamma} - s_{k,\gamma} s_{j,\gamma}) \e^{\ii \bm \lambda} = ( \tilde s_{k,\gamma} s_{j,\gamma} - \tilde s_{j,\gamma} s_{k,\gamma} ) \e^{\ii \bm \lambda}$
and we can use \rfl{planewaveregintrep} once more.
This gives
\[ [s_{j,\gamma} ,s_{k,\gamma}]f  = - \int_{\bR^N} \dd \bm \lambda \tilde f(\bm \lambda)  \left[ \tilde s_{j,\gamma}, \tilde s_{k,\gamma} \right] \e^{\ii \bm \lambda}, \]
which is zero by virtue of \rfp{dAHAregrep}, so that indeed $[s_{j,\gamma},s_{k,\gamma}]=0$ for $|j-k|>1$.
All axioms can be dealt with in this way.
\end{proof}

Similar to the notation for the regular representation $\rho^\n{reg}_\gamma$, for any $w \in S_N$ with $w=s_{i_1} \ldots s_{i_l}$ for some $i_1,\ldots,i_l \in \{1,\ldots,N-1\}$, we will write 
\begin{equation} w_\gamma = s_{i_1,\gamma} \ldots s_{i_l,\gamma}; \end{equation}%
\nc{rwl}{$w_\gamma$}{Deformed permutation acting in position space \nomrefeqpage}%
because of \rfp{dAHAintrep} this does not depend on the choice of the $s_{i_m}$.
Similarly to the notation for the regular representation, we extend this linearly to the whole group algebra $\bC S_N$; in particular we have
\begin{equation}
\ca S^{(N)}_\gamma = \rho^\n{int}_\gamma(\ca S^{(N)}) = \frac{1}{N!} \sum_{w \in S_N} w_\gamma 
\in \End(\ca C^\infty(\bR^N)) .
\end{equation}%
\nc{rscx}{$\ca S^{(N)}_\gamma$}{Image of symmetrizer under integral representation of \\ dAHA \nomrefeqpage}%
Again, if $w=s_{j \, k}$, then for $ w_\gamma=(s_{j\,k})_\gamma$ we may also write $s_{j\,k,\gamma}$.

\section{The Dunkl-type representation}

Recall the notation $\Sigma(w)$ introduced in \rfeqn{Sigmadef}.
For $j=1,\ldots,N$ we introduce $\Lambda_j\in \End(\ca C^\infty(\bR^N_\n{reg}))$ defined by specifying its action on each alcove:
\begin{equation} \Lambda_j|_{w^{-1} \bR^N_+} = \sum_{k: (k,j) \in \Sigma(w)} s_{j \, k} - \sum_{k: (j,k) \in \Sigma(w)} s_{j \, k}, \qquad \n{for } w \in S_N. 
\end{equation}%
\nc{glc}{$\Lambda_j= \Lambda^{(N)}_j$}{Auxiliary function used in definition of Dunkl-type operator \\ $\partial_{j,\gamma}$ \nomrefeqpage}%
\vspace{-5mm}

\begin{defn}\label{Dunkldefn} \emph{\cite{KomoriHikami,Opdam,MurakamiWadati}}
Let $j=1,\ldots,N$ and $\gamma \in \bR$.
The \emph{Dunkl-type operator} is given by
\begin{equation} \label{Dunklreprestr} 
\partial_{j,\gamma}|_{w^{-1}\bR^N_+} = \partial_j - \gamma \Lambda_j \in \End(\ca C^\infty(\bR^N_\n{reg})),%
\nc{rdlvb}{$\partial_{j,\gamma}$}{Dunkl-type operator \nomrefeqpage}%
\end{equation}
i.e. for $f \in \ca C^\infty(\bR^N_\n{reg})$, $\bm x \in w^{-1} \bR^N_\n{reg}$ and $w \in S_N$ we have
\[ (\partial_{j,\gamma} f)(\bm x) = (\partial_j f)(\bm x) - \gamma 
\left( \sum_{k: (k,j) \in \Sigma(w)} f(s_{j \, k} \bm x)- \sum_{k: (j,k) \in \Sigma(w)} f(s_{j \, k} \bm x) \right). \]
In particular, we have $\partial_{j,\gamma}|_{\bR^N_+} = \partial_j$.
\end{defn}

Alternatively, we may provide a single formula for $\Lambda_j$, and hence $\partial_{j, \gamma}$, on the entire $\bR^N_\n{reg}$ as follows.
Given $\bm i = (i_1, \ldots i_n) \in \{1,\ldots,N\}^n$, introduce the notation 
\begin{equation} \theta_{\bm i} = \theta_{i_1 \, \ldots \, i_n} \end{equation}%
\nc{ghl}{$\theta_{\bm i} = \theta_{i_1 \, \ldots \, i_n}$}{Multiplication operator corresponding to step \\ function \nomrefeqpage}%
for the multiplication operator on $\ca F(\bR^N_\n{reg})$ determined by
\[ (\theta_{\bm i} f)(\bm x) = \theta(x_{i_1}>x_{i_2}> \ldots >x_{i_n}) f(\bm x), \]
for $f \in \ca F(\bR^N_\n{reg})$ and $\bm x \in \bR^N_\n{reg}$.
We remark that $\theta_{\bm i}$ restricts to an endomorphism of $\ca C^\infty(\bR^N_\n{reg})$.
Also note that if $\bm i \not \in \f i^n_N$ (i.e. if some of the $i_m$ are the same) then $\theta_{\bm i}=0$.
It then follows that 
\begin{equation} \label{Lambdaalt} \Lambda_j = \sum_{k<j} \theta_{j \, k}  s_{j \, k}-\sum_{k>j}  \theta_{k \, j} s_{j \, k} \in \End(\ca C^\infty(\bR^N_\n{reg})). \end{equation} 
Hence, for $f \in \ca C^\infty(\bR^N_\n{reg})$ and $\bm x \in \bR^N_\n{reg}$ we have
\[ \partial_{j,\gamma}f(\bm x) = \partial_j f(\bm x) - \gamma \sum_{k<j} \theta(x_j-x_k) f(s_{j \, k}\bm x) +\gamma \sum_{k>j} \theta(x_k-x_j) f(s_{j \, k}\bm x). \]

\begin{prop} \emph{\cite{Opdam,MurakamiWadati}} \label{Dunklrep}
The following assignments define a representation $\rho^\n{Dunkl}_\gamma$ of $\f H^{N}_\gamma$ on $\ca C^\infty(\bR^N_\n{reg})$.
\begin{equation} \rho^\n{Dunkl}_\gamma( s_j) := s_j, \quad \rho^\n{Dunkl}_\gamma(X_j) := -\ii \partial_{j,\gamma}. \label{Dunklrepeqns} \end{equation}%
\nc{grl}{$\rho^\n{Dunkl}_\gamma$}{Dunkl-type representation \nomrefeqpage}%
\end{prop}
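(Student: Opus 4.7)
The five defining relations \ref{dAHArel1}--\ref{dAHArel5} must be verified on $\ca C^\infty(\bR^N_\n{reg})$. Relations \ref{dAHArel1}, \ref{dAHArel2} and \ref{dAHArel3} are immediate, since $\rho^\n{Dunkl}_\gamma(s_j) = s_j$ is the ordinary transposition of variables. All of the real work is therefore concentrated in \ref{dAHArel4} and \ref{dAHArel5}, and in both cases the strategy is to abandon the alcove-wise definition \rfeqn{Dunklreprestr} in favour of the global formula \rf{Lambdaalt} for $\Lambda_j$. The necessary toolkit consists of the elementary identities
\begin{align*}
s_j \theta_{p\,q} &= \theta_{s_j(p)\, s_j(q)} s_j,\qquad s_j s_{p\,q} s_j = s_{s_j(p)\, s_j(q)}, \\
{[\partial_l, s_{p\,q}]} &= (\partial_l - \partial_{s_{p\,q}(l)}) s_{p\,q},\qquad [\partial_l, \theta_{p\,q}] = 0,
\end{align*}
together with $\theta_{p\,q} + \theta_{q\,p} = 1$ for $p \neq q$. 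Crucially, $[\partial_l, \theta_{p\,q}] = 0$ holds because on $\bR^N_\n{reg}$ one is away from the hyperplanes $V_{p\,q}$, so the step functions are locally constant and no distributional delta contribution arises.

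Verification of \ref{dAHArel4} proceeds by case analysis on whether $k \in \{j, j+1\}$ or $k \notin \{j, j+1\}$. Since $s_j \partial_k = \partial_{s_j(k)} s_j$ the $\partial$-pieces cancel, reducing the identity to $s_j \Lambda_k - \Lambda_{s_j(k)} s_j = \delta_{j\,k} - \delta_{j+1\,k}$. The identities above allow one to rewrite $s_j \Lambda_k s_j$ as a reindexed version of $\Lambda_{s_j(k)}$: for $k \notin \{j,j+1\}$ the two expressions coincide; for $k = j$ (and symmetrically $k = j+1$), all terms match except those involving the transposition $s_{j\,j+1}$, and the residual $-(\theta_{j\,j+1} + \theta_{j+1\,j}) s_{j\,j+1} = -s_j$ yields the desired $\pm\gamma$ on multiplication by $-\gamma$.

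Verification of \ref{dAHArel5} is the main obstacle. Expanding
\[ [\partial_{j,\gamma}, \partial_{k,\gamma}] = -\gamma\bigl([\partial_j, \Lambda_k] - [\partial_k, \Lambda_j]\bigr) + \gamma^2 [\Lambda_j, \Lambda_k], \]
the order-$\gamma$ commutators $[\partial_l, \Lambda_m]$ are evaluated directly using $[\partial_l, s_{p\,q}] = (\partial_l - \partial_{s_{p\,q}(l)}) s_{p\,q}$. The double commutator $[\Lambda_j, \Lambda_k]$ decomposes into sums over pairs $s_{j\,p} s_{k\,q}$, organised according to whether the supports $\{j,p\}$ and $\{k,q\}$ are disjoint, share one common index (producing three-cycles), or coincide (yielding $s_{j\,k}$ or the identity). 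Disjoint contributions cancel pairwise by the $j \leftrightarrow k$ antisymmetry, while the remaining contributions, together with the order-$\gamma$ terms, collapse via the three-index identity
\[ \theta_{a\,b}\theta_{b\,c} + \theta_{b\,a}\theta_{a\,c} = \theta_{a\,c} \qquad \n{on } \bR^N_\n{reg},\]
which simply encodes transitivity of the total order on $\{x_a, x_b, x_c\}$, together with its cyclic rearrangements. The principal difficulty is the combinatorial bookkeeping in this last step: distinguishing the cases of overlapping supports, correctly reindexing each three-index contribution, and matching it to the appropriate instance of the transitivity identity. Once the pairing of terms is laid out with care, the cancellations are mechanical, completing the proof.
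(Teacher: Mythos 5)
Your handling of \rf{dAHArel1}--\rf{dAHArel3} and of \rf{dAHArel4} is essentially the route the paper takes: \rfl{Dunklrep1} carries out exactly the case analysis you describe, working with the global formula \rf{Lambdaalt} for $\Lambda_j$. (One small sign slip there: the reduced target is $s_j\Lambda_k-\Lambda_{s_j(k)}s_j=-(\delta_{j\,k}-\delta_{j+1\,k})$, not $+(\delta_{j\,k}-\delta_{j+1\,k})$; the residual $-s_j$ you compute for $k=j$ is consistent with the former, not with the target you state.)

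The verification of \rf{dAHArel5} is where the proposal has a genuine gap. The identity you invoke to collapse the three-cycle terms, $\theta_{a\,b}\theta_{b\,c}+\theta_{b\,a}\theta_{a\,c}=\theta_{a\,c}$, is false: for $x_a>x_c>x_b$ the left-hand side is $0$ while the right-hand side is $1$. The correct statement is $\theta_{a\,b}\theta_{b\,c}+\theta_{b\,a}\theta_{a\,c}=\theta_{a\,c}\theta_{b\,c}$ (both sides are the indicator of $x_c$ being the minimum of the three), and it is identities of this product form that actually kill the coefficient of each $3$-cycle in $[\Lambda_j,\Lambda_k]$. Since the entire order-$\gamma^2$ cancellation is asserted to follow from the misstated identity, and the bookkeeping itself --- which is where all the content of this verification lives --- is not carried out, the argument as written does not establish $[\partial_{j,\gamma},\partial_{k,\gamma}]=0$. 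A further small point: the order-$\gamma$ terms $[\partial_j,\Lambda_k]-[\partial_k,\Lambda_j]$ cancel among themselves, so $[\Lambda_j,\Lambda_k]$ must vanish on its own; there is no cross-cancellation with the order-$\gamma$ contributions as your phrasing suggests. Note that the paper avoids this computation entirely: from \rf{dAHArel4} (\rfc{Dunklrep2}) it deduces purely algebraically that $w[\partial_{k,\gamma},\partial_{l,\gamma}]=[\partial_{w(k),\gamma},\partial_{w(l),\gamma}]w$ for all $w\in S_N$ (\rfl{Dunklrep4}), and then observes that on the fundamental alcove $\partial_{j,\gamma}=\partial_j$ by \rfeqn{Dunklreprestr}, so the commutator vanishes there and equivariance transports this to every alcove (\rfl{Dunklrep5}). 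If you wish to keep the direct approach you must correct the step-function identity and actually perform the case analysis; otherwise the equivariance argument is both shorter and safer.
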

\vspace{-12mm}

\begin{proof}
This follows immediately from \rfc{Dunklrep2} and \rfl{Dunklrep5}.
\end{proof}

\begin{lem} \label{Dunklreppoly}
Let $F \in \bC[\lambda_1,\ldots,\lambda_N]^{S_N}$.
Then 
\begin{align}
[w,F(\partial_{1,\gamma},\ldots,\partial_{N,\gamma})] &=0 && \in \End(\ca C^\infty(\bR^N_\n{reg})), \qquad w \in S_N, \label{eqn350}\\
F(\partial_{1,\gamma},\ldots,\partial_{N,\gamma}) &= F(\partial_{1},\ldots,\partial_{N}) && \in \End(\ca C^\infty(\bR^N_\n{reg})). \label{eqn351}
\end{align}
\end{lem}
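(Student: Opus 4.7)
The first statement follows almost immediately from the centrality result \eqref{dAHAcentre}: for $F \in \bC[\lambda_1,\ldots,\lambda_N]^{S_N}$, the element $F(X_1,\ldots,X_N)$ lies in the centre $Z(\f H^N_\gamma)$, so in particular it commutes with every $s_j \in \f H^N_\gamma$. Applying $\rho^\n{Dunkl}_\gamma$ to this commutation and observing that $F(-\ii X_1, \ldots, -\ii X_N)$ is again a symmetric polynomial in the $X_j$ (just with rescaled coefficients), we obtain that $F(\partial_{1,\gamma},\ldots,\partial_{N,\gamma})$ commutes with $s_j$ for all $j$, and hence with every $w \in S_N$.

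For the second statement, the starting point is that $\partial_{j,\gamma}|_{\bR^N_+} = \partial_j$, since $\Sigma(1) = \emptyset$ forces $\Lambda_j|_{\bR^N_+} = 0$. Because $\bR^N_+$ is open, for any $f \in \ca C^\infty(\bR^N_\n{reg})$ and any $x_0 \in \bR^N_+$, a neighbourhood of $x_0$ lies entirely within $\bR^N_+$; I would then argue by induction on the degree of $F$ that $F(\partial_{1,\gamma},\ldots,\partial_{N,\gamma})f$ and $F(\partial_1,\ldots,\partial_N)f$ agree throughout $\bR^N_+$, for \emph{any} polynomial $F$ (not yet using symmetry). The inductive step uses that applying one more $\partial_{k,\gamma}$ at a point of $\bR^N_+$ reduces to applying $\partial_k$ in a neighbourhood, since the intermediate function is already known to coincide with its ordinary-derivative counterpart on all of $\bR^N_+$.

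To propagate the equality from $\bR^N_+$ to an arbitrary alcove $w^{-1}\bR^N_+$, I would invoke $S_N$-equivariance of both operators. Part \eqref{eqn350} gives it for the left-hand side; for the right-hand side it is standard, since conjugating $\partial_j$ by $w \in S_N$ permutes the partials and $F$ is symmetric. Given $x_0 \in w^{-1} \bR^N_+$, so that $w x_0 \in \bR^N_+$, the identity $F(\partial_{1,\gamma},\ldots,\partial_{N,\gamma}) f = w^{-1}\bigl(F(\partial_{1,\gamma},\ldots,\partial_{N,\gamma})(w f)\bigr)$ evaluated at $x_0$ reduces the computation to evaluating at $w x_0 \in \bR^N_+$, where the preceding step applies; the same manipulation works for $F(\partial_1,\ldots,\partial_N)$, and comparing yields equality at $x_0$.

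The only real obstacle is being honest about the inductive step in the middle paragraph: one must check that the function $\partial_{k,\gamma} g$ is smooth in a neighbourhood of each $x_0 \in \bR^N_+$ and coincides there with $\partial_k g$, so that a second Dunkl-type operator applied at $x_0$ sees only the ``trivial'' restriction. This is straightforward once one observes that $\Lambda_k$ is locally constant on each alcove, but deserves explicit mention. Everything else reduces to bookkeeping with the centrality of symmetric polynomials and the $S_N$-equivariance it produces via $\rho^\n{Dunkl}_\gamma$.
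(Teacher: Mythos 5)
Your proposal is correct and follows essentially the same route as the paper: \eqref{eqn350} from the centrality statement \eqref{dAHAcentre} applied through $\rho^\n{Dunkl}_\gamma$, and \eqref{eqn351} by using that equivariance to transport the computation from an arbitrary alcove $w^{-1}\bR^N_+$ to the fundamental alcove, where $\partial_{j,\gamma}=\partial_j$. Your explicit remark that iterated applications of $\partial_{k,\gamma}$ at points of the open set $\bR^N_+$ only see the germ there (so the reduction to ordinary partials survives composition) is a point the paper's proof passes over silently, but it is the same argument.
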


\begin{proof}
\rfeqn{eqn350} follows immediately from \rfeqn{dAHAcentre} applied to the image of $\f H^{N}_\gamma$ under $\rho^\n{Dunkl}_\gamma$. To obtain \rfeqn{eqn351},
let $f \in \ca C^\infty(\bR^N_\n{reg})$, $\bm x \in \bR^N_+$ and $w \in S_N$.
It suffices to prove that $\left( F(\partial_{1,\gamma},\ldots,\partial_{N,\gamma})f \right)(w^{-1} \bm x) = \left( F(\partial_{1},\ldots,\partial_{N})f \right) (w^{-1} \bm x)$. 
From $\partial_{j,\gamma}|_{\bR^N_+} = \partial_j|_{\bR^N_+}$ it follows that
\begin{align*}
\left( F(\partial_{1,\gamma},\ldots,\partial_{N,\gamma})f \right)(w^{-1} \bm x) &= \left( w F(\partial_{1,\gamma},\ldots,\partial_{N,\gamma}) f \right)(\bm x) &&=  \left( F(\partial_{1,\gamma},\ldots,\partial_{N,\gamma}) w f\right) (\bm x) \\
&= \left( F(\partial_{1},\ldots,\partial_{N}) w f\right) (\bm x)&&= \left( w F(\partial_{1},\ldots,\partial_{N}) f \right)(\bm x) \\
&= \left( F(\partial_{1},\ldots,\partial_{N})f\right)(w^{-1} \bm x). && \hspace{56mm} \qedhere
\end{align*}
\end{proof}
\begin{exm}[Dunkl-type operators for $N=2$]
From \rfd{Dunkldefn} it follows that
\[ \partial_{1,\gamma} = \partial_1 + \gamma \theta_{2 \, 1} s_{1 \, 2}, \qquad 
\partial_{2,\gamma} = \partial_2 - \gamma \theta_{2 \, 1} s_{1 \, 2}, \]
i.e.
\begin{align*} 
(\partial_{1,\gamma}f)(x_1,x_2) &= (\partial_1 f)(x_1,x_2) + \gamma \theta(x_2-x_1) f(x_2,x_1), \\
(\partial_{2,\gamma}f)(x_1,x_2) &= (\partial_2 f)(x_1,x_2) - \gamma  \theta(x_2-x_1) f(x_2,x_1) 
\end{align*}
for $f \in \ca C^\infty(\bR^2_\n{reg})$ and $(x_1,x_2) \in \bR^2_\n{reg}$.
The reader should check that these satisfy the dAHA axioms, viz.
\[ s_1 \partial_{1, \gamma}-\partial_{2, \gamma} s_1 = \gamma, \qquad [ \partial_{1, \gamma}, \partial_{2, \gamma}]=0. \]
\end{exm}

\subsection{Common eigenfunctions of the Dunkl-type operators}

In order to connect the Dunkl-type operators $\partial_{j,\gamma}$ to the study of the QNLS eigenvalue problem, it is important to allow the study of \rfeqnser{QNLS1}{QNLS2} for non-symmetric functions. Consider the following subspaces of $\ca C(\bR^N)$, which were introduced in \cite{Gutkin1982,EmsizOS}.
\begin{align} 
\ca{CB}^1(\bR^N) &= \left\{ f \in \ca C(\bR^N) \, : \, \forall w \, f|_{w \bR^N_+} \n{ has a } \ca C^1 \n{ extension to some open } \right. \nonumber \\
& \hspace{70mm} \left. \n{neighbourhood of }\overline{w \bR^N_+} \right\}; \\%
\nc{rccyb}{$\ca{CB}^1(\bR^N)$}{Set of continuous functions with $\ca C^1$ restriction to regular \\ vectors \nomrefeqpage}%
\ca{CB}^\infty(\bR^N) &= \set{f \in \ca C(\bR^N)}{f|_{\bR^N_\n{reg}} \in \ca C^\infty(\bR^N_\n{reg})}; \\%
\nc{rccyc}{$\ca{CB}^\infty(\bR^N)$}{Set of continuous functions with smooth restriction to regular \\ vectors \nomrefeqpage}%
\ca C^1_\gamma(\bR^N) &= \set{f \in \ca{CB}^1(\bR^N)}{\left(\partial_j-\partial_k\right)f|_{V_{j \, k}^+}-\left(\partial_j-\partial_k\right)f|_{V_{j \, k}^-} =  2 \gamma f|_{V_{j \, k}} \n{ for } 1 \leq j<k \leq N} .%
\end{align}%
\nc{rcczb}{$\ca C^1_\gamma(\bR^N)$}{Set of functions in $\ca{CB}^1(\bR^N)$ satisfying the derivative jump \\
conditions \nomrefeqpage}%
\vspace{-7mm}

Note that $\ca{CB}^\infty(\bR^N) \subset \ca{CB}^1(\bR^N)$.
Furthermore, it has been observed \cite[Prop. 2.2]{EmsizOS} that, due to the hypoellipticity of the Laplacian, 
\[ f \in \ca C^1_\gamma(\bR^N) \n{ and } \Delta f|_{\bR^N_\n{reg}} = -E f|_{\bR^N_\n{reg}} \n{ as distributions } \implies f \in \ca{CB}^\infty(\bR^N). \]

\begin{lem} \label{context}
Let $j=1,\ldots,N$ and $f \in \ca{CB}^\infty(\bR^N)$; suppose that $\partial_{j,\gamma} (f|_{\bR^N_\n{reg}}) \in \ca{C}^\infty(\bR^N_\n{reg})$ is a constant multiple of $f|_{\bR^N_\n{reg}}$.
Then $\partial_{j,\gamma} (f|_{\bR^N_\n{reg}})$ can be continuously extended to $\bR^N$. Hence, $\partial_{j,\gamma} f$ may be viewed as an element of $\ca{CB}^\infty(\bR^N)$. 
\end{lem}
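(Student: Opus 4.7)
The plan is to observe that the lemma is essentially tautological once the hypothesis is unpacked, and to argue that the constant-eigenvalue assumption is precisely what forces the Dunkl derivative to extend continuously across the hyperplane arrangement. A priori, the formula \rf{Dunklreprestr} defining $\partial_{j,\gamma}$ is piecewise, depending on which alcove $w^{-1} \bR^N_+$ we restrict to, and the auxiliary operator $\Lambda_j$ (see \rf{Lambdaalt}) suffers jump discontinuities across each hyperplane $V_{j\,k}$. So for a generic $g \in \ca C^\infty(\bR^N_\n{reg})$ there is no reason to expect $\partial_{j,\gamma} g$ to extend continuously to $\bR^N$; neither $\partial_j g$ nor $\gamma \Lambda_j g$ is individually continuous across the hyperplanes in general.

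The hypothesis bypasses this issue at one stroke. First I would introduce the constant $c \in \bC$ such that $\partial_{j,\gamma}(f|_{\bR^N_\n{reg}}) = c \cdot f|_{\bR^N_\n{reg}}$ as elements of $\ca C^\infty(\bR^N_\n{reg})$. Since $f \in \ca{CB}^\infty(\bR^N) \subset \ca C(\bR^N)$, the function $c \cdot f$ is continuous on all of $\bR^N$, and by construction its restriction to $\bR^N_\n{reg}$ coincides with $\partial_{j,\gamma}(f|_{\bR^N_\n{reg}})$. This furnishes the required continuous extension of $\partial_{j,\gamma}(f|_{\bR^N_\n{reg}})$ from $\bR^N_\n{reg}$ to $\bR^N$. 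Moreover, the extended function $c \cdot f$ has smooth restriction to $\bR^N_\n{reg}$ (equal, by hypothesis, to the smooth function $\partial_{j,\gamma}(f|_{\bR^N_\n{reg}})$), and hence lies in $\ca{CB}^\infty(\bR^N)$. Thus $\partial_{j,\gamma} f$ is unambiguously defined as an element of $\ca{CB}^\infty(\bR^N)$.

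The only point that deserves a brief remark is the uniqueness of this continuous extension, so that the definition is canonical: this is immediate since $\bR^N_\n{reg}$ is dense in $\bR^N$ and two continuous functions on $\bR^N$ that agree on a dense subset must coincide. Hence there is no substantive obstacle to this proof; the entire content of the lemma is that the constant-eigenvalue assumption for $\partial_{j,\gamma}$ forces the Dunkl derivative, which is manifestly discontinuous on generic smooth data, to be continuous across the hyperplane arrangement for the particular function $f$ at hand.
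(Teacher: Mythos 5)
Your proof is correct and coincides with the paper's own argument: both simply note that if $\partial_{j,\gamma}(f|_{\bR^N_\n{reg}}) = c\, f|_{\bR^N_\n{reg}}$ then the continuous function $c\,f$ on $\bR^N$ is the desired extension, since $f$ itself is continuous. Your added remark on uniqueness of the extension by density of $\bR^N_\n{reg}$ is a harmless refinement that the paper leaves implicit.
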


\begin{proof}
There exists $m \in \bC$ such that $\partial_{j,\gamma} (f|_{\bR^N_\n{reg}})=mf|_{\bR^N_\n{reg}}$.
For any $1 \leq k < l \leq N$ we simply define $\partial_{j,\gamma} f|_{V_{k \, l}}$ to be $mf|_{V_{k \, l}}$; since $f$ is continuous this ensures that this extension is continuous.
\end{proof}

For $\bm \lambda = (\lambda_1,\ldots,\lambda_N) \in \bC^N$, we consider the following eigenvalue problem for $f \in \ca{CB}^\infty(\bR^N)$:
\begin{equation} \label{Dunklrepsystem} 
\partial_{j,\gamma}f = \ii \lambda_j f, \qquad \n{for } j=1,\ldots,N.
\end{equation}

\begin{lem}[Uniqueness of solutions] \label{Dunklrepsystemunique}
Let $\bm \lambda \in \bC^N$ and $\gamma \in \bR$. Suppose that $f \in \ca{CB}^\infty(\bR^N)$ satisfies the system \rf{Dunklrepsystem}. 
Then $f$ is uniquely defined up to an overall scalar factor, i.e. the subspace of $\ca{CB}^\infty(\bR^N)$ consisting of solutions of \rf{Dunklrepsystem} is 1-dimensional.
\end{lem}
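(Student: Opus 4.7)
The plan is to reduce uniqueness to showing that a solution $f \in \ca{CB}^\infty(\bR^N)$ of \rf{Dunklrepsystem} that vanishes on the fundamental alcove must vanish on all of $\bR^N$. This reduction is immediate because $\partial_{j,\gamma}|_{\bR^N_+} = \partial_j$ (cf.\ \rfeqn{Dunklreprestr}), so that on $\bR^N_+$ the system becomes $\partial_j f = \ii \lambda_j f$, whose $\ca C^\infty$-solutions form the one-dimensional space $\bC \cdot \e^{\ii \bm \lambda}|_{\bR^N_+}$; any two solutions of \rf{Dunklrepsystem} whose restrictions to $\bR^N_+$ agree then differ by a solution vanishing on $\bR^N_+$.

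Assume $f$ vanishes on $\bR^N_+$. I would prove $f|_{w^{-1}\bR^N_+}=0$ by induction on $l(w)$, the base case being the hypothesis. For the inductive step, fix $w$ with $l(w)\geq 1$ and assume $f|_{u^{-1}\bR^N_+}=0$ whenever $l(u)<l(w)$. Using \rfeqn{Dunklreprestr}, the system \rf{Dunklrepsystem} reads on $w^{-1}\bR^N_+$ as
\[ \partial_j f(\bm x) - \gamma \sum_{k:(k,j) \in \Sigma(w)} f(s_{j \, k}\bm x) + \gamma \sum_{k:(j,k) \in \Sigma(w)} f(s_{j \, k}\bm x) = \ii \lambda_j f(\bm x), \qquad j=1,\ldots,N. \]
For every index $k$ appearing in either sum, the unordered pair $\{j,k\}$ (ordered with the smaller entry first) lies in $\Sigma(w)$, so by \rfl{lengthlem1} one has $l(ws_{j\,k})<l(w)$; combining with $s_{j\, k} \cdot w^{-1}\bR^N_+ = (ws_{j\, k})^{-1}\bR^N_+$ and the induction hypothesis, each term $f(s_{j\, k}\bm x)$ in the sums vanishes. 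Hence $f|_{w^{-1}\bR^N_+}$ solves the homogeneous system $\partial_j f = \ii \lambda_j f$, and since $w^{-1}\bR^N_+$ is connected this forces $f|_{w^{-1}\bR^N_+} = \kappa\, \e^{\ii \bm \lambda}|_{w^{-1}\bR^N_+}$ for some constant $\kappa \in \bC$.

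To finish, I would fix a reduced expression $w = s_{i_1}\cdots s_{i_l}$ and set $u = w s_{i_l}$, so that $l(u)=l(w)-1$ and the alcoves $w^{-1}\bR^N_+$ and $u^{-1}\bR^N_+$ share a codimension-one wall contained in the hyperplane $V_{w(i_l) \, w(i_l+1)}$ (the defining orderings for $w^{-1}\bR^N_+$ and $u^{-1}\bR^N_+$ coincide except for the swap at positions $i_l, i_l+1$). Continuity of $f$ across this wall, together with $f|_{u^{-1}\bR^N_+}=0$ and the fact that $\e^{\ii \bm \lambda}$ is nowhere vanishing, forces $\kappa = 0$, completing the induction. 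The step I expect to require the most care is the geometric bookkeeping that identifies the shared wall of $w^{-1}\bR^N_+$ and $u^{-1}\bR^N_+$ and verifies that it has full codimension one; everything else is a combination of standard linear ODE/PDE theory for the first-order system $\partial_j f = \ii \lambda_j f$ with the inversion combinatorics of $S_N$ encoded in \rfl{lengthlem1}.
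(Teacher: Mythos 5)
Your proof is correct and follows essentially the same strategy as the paper: induction on $l(w)$, using \rfl{lengthlem1} to show the deformation terms in $\partial_{j,\gamma}$ only involve values of $f$ on alcoves of strictly smaller length, so that $f$ solves the free system $\partial_j f = \ii\lambda_j f$ on $w^{-1}\bR^N_+$ and is a multiple of $\e^{\ii\bm\lambda}$ there. The only (harmless) difference is in how the constant is killed: the paper normalizes two solutions to agree at $\bm 0$ and uses continuity at the origin, which lies in the closure of every alcove, whereas you use continuity across the codimension-one wall shared with the adjacent alcove $(ws_{i_l})^{-1}\bR^N_+$ of length $l(w)-1$; both are valid.
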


\begin{proof}
Suppose that $f, g \in \ca{CB}^\infty(\bR^N)$ both satisfy \rfeqn{Dunklrepsystem}. 
We may assume that both $f$ and $g$ are nonzero, and after multiplying one of them by a nonzero complex number, that $f(\bm 0)=g(\bm 0)$. 
Note that $h=f-g\in \ca{CB}^\infty(\bR^N)$ satisfies the same system \rfeqn{Dunklrepsystem}, and $h(\bm 0)=0$. It is sufficient to prove that $h = 0$.\\

\emph{Claim:} Given $w \in S_N$, if $h|_{(w')^{-1} \bR^N_+}=0$ for all $w' \in S_N$ with $l(w')<l(w)$, then $h|_{w^{-1}\bR^N_+}=0$.\\

It is clear that from the claim the lemma follows; in particular it follows that $h|_{\bR^N_+}=0$ and by induction on $l(w)$ we obtain $h|_{\bR^N_\n{reg}}=0$; finally by continuity we have $h=0$.
To prove the claim, by virtue of \rfeqn{Dunklreprestr} we have for $\bm x \in w^{-1} \bR^N_+$,
\begin{equation} \label{Dunklrepeqn5} 
\partial_j h(\bm x) = \ii \lambda_j h(\bm x) + \gamma \left( \sum_{k:(k,j) \in \Sigma(w)} h(s_{j \, k}\bm x) - \sum_{k:(j,k) \in \Sigma(w)} h(s_{j \, k}\bm x) \right),
\quad \n{for } j=1,\ldots,N, \end{equation}
where $s_{j \, k} \bm x \in (w s_{j \, k})^{-1} \bR^N_+$ with $l(w s_{j \, k})<l(w)$ as follows from \rfl{lengthlem1}.
Hence \rfeqn{Dunklrepeqn5} reduces to $\partial_j h|_{w^{-1} \bR^N} =  \ii \lambda_j h$, $ j=1,\ldots,N$, i.e. $h|_{w^{-1} \bR^N} = c_w \e^{\ii \bm \lambda}$ for some $c_w \in \bC$. 
Continuity at $\bm x = \bm 0$ yields that $c_w = 0$, i.e. $h|_{w^{-1} \bR^N}=0$.
\end{proof}

The relevance of system \rf{Dunklrepsystem} to the QNLS Hamiltonian is expressed in
\begin{prop}[The Dunkl-type operators and the QNLS eigenvalue problem] \label{DunklrepQNLS}
Suppose that $f \in \ca{CB}^\infty(\bR^N)$ satisfies the system \rf{Dunklrepsystem} for some $\bm \lambda \in \bC^N$.
Then $f \in \ca C^1_\gamma(\bR^N)$ and $-\Delta f|_{\bR^N_\n{reg}} =  \sum_{j=1}^N \lambda_j^2 f|_{\bR^N_\n{reg}}$,
i.e. $f$ solves the QNLS eigenvalue problem \rfeqnser{QNLS1}{QNLS2} with $E = \sum_{j=1}^N \lambda_j^2$ except for $S_N$-invariance.
\end{prop}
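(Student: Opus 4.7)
The plan is to split the argument into (i) the Laplacian eigenvalue equation on $\bR^N_\n{reg}$, and (ii) the derivative jump conditions defining $\ca C^1_\gamma(\bR^N)$.

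For (i), I would apply \rfl{Dunklreppoly} to the symmetric polynomial $F(\bm X) = \sum_{j=1}^N X_j^2$, which yields the operator identity $\sum_{j=1}^N \partial_{j,\gamma}^2 = \sum_{j=1}^N \partial_j^2 = \Delta$ on $\ca C^\infty(\bR^N_\n{reg})$. Since $\partial_{j,\gamma}$ is an endomorphism of $\ca C^\infty(\bR^N_\n{reg})$ and $\partial_{j,\gamma} f = \ii \lambda_j f$ there, one iterates to obtain $\partial_{j,\gamma}^2 f|_{\bR^N_\n{reg}} = -\lambda_j^2 f|_{\bR^N_\n{reg}}$; summing over $j$ gives $-\Delta f|_{\bR^N_\n{reg}} = \sum_{j=1}^N \lambda_j^2 f|_{\bR^N_\n{reg}}$.

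For (ii), I would first substitute \rfd{Dunkldefn} into the eigenvalue equation on an alcove $w^{-1}\bR^N_+$, obtaining
\[ \partial_j f(\bm x) = \ii \lambda_j f(\bm x) + \gamma \sum_{l:\,(l,j)\in\Sigma(w)} f(s_{j \, l}\bm x) - \gamma \sum_{l:\,(j,l)\in\Sigma(w)} f(s_{j \, l}\bm x), \qquad \bm x \in w^{-1}\bR^N_+. \]
Because $f$ is continuous on $\bR^N$, the right-hand side extends continuously to $\overline{w^{-1}\bR^N_+}$; hence $\partial_j f|_{w^{-1}\bR^N_+}$ does too, and a standard extension (using convexity of the closed alcove) upgrades this to a $\ca C^1$-extension to a neighbourhood, establishing $f \in \ca{CB}^1(\bR^N)$. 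To derive the jump condition across $V_{j \, k}$ (with $j<k$), I would fix a generic point $\bm x_0 \in V_{j \, k}$ whose remaining coordinates are mutually distinct and different from $x_{0,j} = x_{0,k}$; this point lies in the closure of exactly two alcoves $w^{-1}\bR^N_+$ and $(w')^{-1}\bR^N_+$, corresponding to $V_{j \, k}^+$ and $V_{j \, k}^-$ respectively, with $w' = ws_{j \, k}$. A short verification shows that $w(j), w(k)$ are consecutive integers, $(j,k) \notin \Sigma(w)$, and $\Sigma(w') = \Sigma(w) \cup \{(j,k)\}$. All spectator terms $f \circ s_{j \, l}$ with $l \ne k$ coincide in the two formulas; using that $s_{j \, k}$ fixes $V_{j \, k}$, the sole surviving contribution yields $\partial_j f|_{V_{j \, k}^+} - \partial_j f|_{V_{j \, k}^-} = \gamma f|_{V_{j \, k}}$ at $\bm x_0$, and the analogous computation for $\partial_k$ produces $-\gamma f|_{V_{j \, k}}$. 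Subtracting gives the derivative jump condition on the open dense part of $V_{j \, k}$, and continuity of $f$ extends it to all of $V_{j \, k}$.

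The main obstacle is the combinatorial bookkeeping around inversion sets: verifying that crossing $V_{j \, k}$ at a generic codimension-one face corresponds exactly to $w \leftrightarrow ws_{j \, k}$ with $w(j), w(k)$ consecutive, and that $\Sigma(w')$ differs from $\Sigma(w)$ by exactly the single pair $(j,k)$, so that all spectator reflections $s_{j \, l}$ with $l \ne k$ cancel between the two sides of the hyperplane.
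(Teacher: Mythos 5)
Your proof is correct and follows essentially the same route as the paper: part (i) is exactly the paper's argument (Lemma \ref{Dunklreppoly} applied to $F=p_2$), and part (ii) is the same identification of the single discontinuous term across $V_{j\,k}$, merely phrased alcove-by-alcove via the inversion sets $\Sigma(w)$ and $\Sigma(ws_{j\,k})$ rather than via the equivalent global step-function formula \rf{Lambdaalt} for $\partial_j-\partial_k$ that the paper uses. The resulting jumps $\pm\gamma f|_{V_{j\,k}}$ in $\partial_j f$ and $\partial_k f$ agree with the paper's computation, so no gap.
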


\begin{proof}
That $f$ is an eigenfunction of $-\Delta$ on the regular vectors with eigenvalue $p_2(\bm \lambda)$ follows from $\Delta = p_2(\partial_{1,\gamma},\ldots,\partial_{N,\gamma})$, which in itself is a consequence of \rfl{Dunklreppoly}, \rfeqn{eqn351} applied to $F = p_2$. 
As for the claim that $f$ satisfies the derivative jump conditions, note that from \rfeqn{Lambdaalt} it follows that
\[  \partial_j = \partial_{j,\gamma} +\gamma \sum_{l < j} \theta_{j \, l} s_{j \, l} -\gamma \sum_{l > j} \theta_{l \, j} s_{j \, l} \]
and therefore
\begin{align*} 
\partial_j - \partial_k  &= \partial_{j,\gamma} -  \partial_{k,\gamma} + \gamma \sum_{l<j} \theta_{j \, l} s_{j \, l}
- \gamma \sum_{j<l<k} \theta_{l \, j} s_{j \, l} -\gamma  \theta_{k \, j}  s_{j \, k} 
- \gamma \sum_{l>k} \theta_{j \, l} s_{j \, l} + \\
& \qquad - \gamma \sum_{l < j}  \theta_{k \, l} s_{k \, l} -\gamma \theta_{k \, j} s_{j \, k} - \gamma \sum_{j<l<k} \theta_{k \, l} s_{k \, l} + \gamma \sum_{l > k} \theta_{l \, k} s_{k \, l}  \displaybreak[2] \\
&= \partial_{j,\gamma} -  \partial_{k,\gamma} -2\gamma \theta_{k \, j} s_{j \, k}  + \gamma \sum_{l < j} \left( \theta_{j \, l}s_{j \, l}  - \theta_{k \, l} s_{k \, l} \right) \\
& \qquad + \gamma \sum_{j < l < k} \left( - \theta_{l \, j} s_{j \, l}  - \theta_{k \, l} s_{k \, l} \right) 
+ \gamma \sum_{k < l} \left(  - \theta_{l \, j} s_{j \, l} + \theta_{l \, k} s_{k \, l} \right).
\end{align*}
Applying this to $f \in \ca{CB}^\infty(\bR^N)$ satisfying the system \rf{Dunklrepsystem} we have
\begin{align*} 
(\partial_j - \partial_k) f|_{V_{j \,k}^+} &= \ii (\lambda_j-\lambda_k) f|_{V_{j \, k}^+} + \gamma \sum_{l < j}  \left( \theta_{j \, l} s_{j \, l}  - \theta_{k \, l} s_{k \, l} \right) f|_{V_{j \, k}^+} +\\
& \qquad + \gamma \sum_{j < l < k} \left( - \theta_{l \, j} s_{j \, l}  - \theta_{k \, l} s_{k \, l} \right) f|_{V_{j \, k}^+} + \gamma \sum_{k < l} \left(  - \theta_{l \, j} s_{j \, l} + \theta_{l \, k} s_{k \, l} \right) f|_{V_{j \, k}^+}\\
&= \ii (\lambda_j-\lambda_k) f|_{V_{j \, k}} - \gamma \sum_{j < l < k} s_{j \, l} f|_{V_{j \, k}},
\end{align*}
and
\begin{align*} 
(\partial_j - \partial_k) f|_{V_{j \, k}^-} &= \ii (\lambda_j-\lambda_k) f|_{V_{j \, k}^-} -2\gamma s_{j \, k} f|_{V_{j \, k}^-} + \gamma \sum_{l < j} \left( \theta_{j \, l} s_{j \,l}  - \theta_{k \, l} s_{k \, l} \right) f|_{V_{j \, k}^-} +\\
& \qquad + \gamma \sum_{j < l < k} \left( - \theta_{l \, j} s_{j \, l}  - \theta_{k \, l} s_{k \, l} \right) f|_{V_{j \, k}^-} + \gamma \sum_{k < l} \left(  - \theta_{l \, j} s_{j \, l} + \theta_{l \, k} s_{k \, l} \right) f|_{V_{j \, k}^-}\\
&= \ii (\lambda_j-\lambda_k) f|_{V_{j \, k}} - 2 \gamma f|_{V_{j \, k}} - \gamma \sum_{j < l < k} s_{j \, l} f|_{V_{j \, k}}.
\end{align*}
We conclude that $\left( \partial_j - \partial_k \right) f|_{V_{j \, k}^+} - \left( \partial_j - \partial_k \right) f|_{V_{j \, k}^-} = 2\gamma f|_{V_{j \, k}}$.
\end{proof}

\section{The propagation operator and the pre-wavefunction}

There exists a special element of $\Hom(\ca C^\infty(\bR^N),\ca {CB}^\infty(\bR^N))$ that allows us to construct solutions of the system \rf{Dunklrepsystem}.
  
\begin{defn} \emph{\cite{Hikami}}
Let $\gamma \in \bR$.
The \emph{propagation operator} or \emph{intertwiner} is the following element of $\End(\ca C(\bR^N_\n{reg}))$:
\begin{equation} P_\gamma = \sum_{w \in S_N} w^{-1} \chi_{\bR^N_+} w_\gamma = \sum_{w \in S_N} \chi_{w^{-1} \bR^N_+} w^{-1}  w_\gamma, \end{equation}%
\nc{rpca}{$P_\gamma$}{Propagation operator \nomrefeqpage}%
where $\chi_{w^{-1} \bR^N_+}$ is the multiplication operator associated to the characteristic function of the set $w^{-1} \bR^N_+$.
In other words, $P_\gamma$ is the element of $\End(\ca C(\bR^N_\n{reg}))$ determined by
\begin{equation} \label{propoprestr}
P_\gamma|_{w^{-1} \bR^N_+} = w^{-1} w_\gamma, \qquad \n{for }  w \in S_N.
\end{equation}
\end{defn}
Note that $P_0$ is the identity operator on $\ca C(\bR^N_\n{reg})$.
The propagation operator was introduced by Gutkin \cite{Gutkin1982}. 
Some of its properties in the case of the Weyl group $S_N$ were elucidated by Hikami \cite{Hikami}. 
A vector-valued analogue was considered by Emsiz \cite{Emsiz}. 

\begin{exm}[The propagation operator for $N=2$] \label{propopexample}
For $N=2$ we have
\[ P_\gamma = \chi_{\bR^2_+} + \chi_{s_1 \bR^2_+} s_1  s_{1,\gamma} = 1 + \chi_{s_1 \bR^2_+} (s_1  s_{1,\gamma} -1) = 1 - \gamma \chi_{s_1 \bR^2_+} I_1,\]
i.e.
\[ (P_\gamma f)(x_1,x_2) = f(x_1,x_2) + \gamma \theta(x_2-x_1) \int_0^{x_2-x_1} \dd y f(x_1+y,x_2-y). \]
The reader is invited to check that the following identities hold formally, and consider the proper domain for each identity (i.e. on which function space it acts):
\[ s_1 P_\gamma = P_\gamma s_{1,\gamma}, \qquad \partial_{1,\gamma} P_\gamma = P_\gamma \partial_1, \qquad  \partial_{2,\gamma} P_\gamma = P_\gamma \partial_2. \]
\end{exm}

The statements of the next two lemmas are (at least implicitly) already present in \cite{Gutkin1982,Emsiz,Hikami}.

\begin{lem} \label{propopcont}
Let $\gamma \in \bR$. For $f \in \ca C(\bR^N)$, $P_\gamma f|_{\bR^N_\n{reg}}$ can be continuously extended to $\bR^N$.
As a consequence, $P_\gamma$ restricts to an element of $\End(\ca C(\bR^N))$.
\end{lem}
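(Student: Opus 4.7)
The plan is to piece the individual alcove expressions $w^{-1} w_\gamma f$ together into a continuous function on $\bR^N$ by verifying compatibility on every common wall. First, I would observe that each generator $s_{j,\gamma} = s_j + \gamma I_j$ preserves $\ca C(\bR^N)$, since the permutation action $s_j$ is evidently continuous and $I_j f$ is continuous for continuous $f$ (being the integral of a continuous function over a smoothly parametrised interval). Consequently $w_\gamma \in \End(\ca C(\bR^N))$ for every $w \in S_N$, so $h_w := w^{-1} w_\gamma f$ is continuous on all of $\bR^N$. Since $P_\gamma f$ agrees with $h_w$ on the open alcove $w^{-1}\bR^N_+$, its restriction to each alcove already extends continuously to the closed alcove; as $\bR^N$ is the finite union of the closed alcoves, it will suffice to show that the $h_w$ agree on alcove boundaries.

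The key step will be compatibility across codimension-one walls. The alcoves $w^{-1}\bR^N_+$ and $(s_j w)^{-1}\bR^N_+$ meet along $\{\bm x : x_{w(j)}=x_{w(j+1)}\}$. For $\bm p$ on this wall I would set $\bm y = w\bm p$ and $g = w_\gamma f$; then $y_j=y_{j+1}$ forces $s_j \bm y = \bm y$ and
\[
(I_j g)(\bm y) = \int_0^{y_j - y_{j+1}} g\!\left(\bm y - t(\bm e_j-\bm e_{j+1})\right) dt = 0.
\]
Using $(s_j w)_\gamma = s_{j,\gamma} w_\gamma$, a short computation gives
$h_{s_j w}(\bm p) = (s_{j,\gamma} g)(s_j \bm y) = (s_{j,\gamma} g)(\bm y) = (s_j g)(\bm y) + \gamma (I_j g)(\bm y) = g(\bm y) = h_w(\bm p)$,
which is the desired wall compatibility.

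Finally, I would define $\tilde g(\bm p) := h_w(\bm p)$ for any $w$ with $\bm p \in \overline{w^{-1}\bR^N_+}$. The set of such $w$ is an orbit of the stabilizer of $\bm p$ in $S_N$, which is generated by the simple reflections fixing $\bm p$; any two such alcoves are therefore joined by a chain of adjacent alcoves whose shared walls all pass through $\bm p$, so iterating the wall-compatibility yields $h_w(\bm p) = h_{w'}(\bm p)$, showing $\tilde g$ is well defined. Continuity of $\tilde g$ is then automatic from a pigeonhole argument: for any sequence $\bm x_n \to \bm p$, finiteness of $S_N$ forces some subsequence into a single closed alcove containing $\bm p$, on which $\tilde g$ coincides with the continuous function $h_w$, so every subsequential limit equals $\tilde g(\bm p)$. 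The main obstacle is this gluing at codimension $\geq 2$ strata, which is resolved by the chain-of-alcoves argument; the algebraic heart of the proof is the one-line vanishing of $I_j g$ on the diagonal $y_j = y_{j+1}$.
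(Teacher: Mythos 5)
Your proof is correct and follows essentially the same route as the paper: both reduce the claim to agreement across the walls between adjacent alcoves $w^{-1}\bR^N_+$ and $(s_jw)^{-1}\bR^N_+$, using the factorization $(s_jw)_\gamma = s_{j,\gamma}w_\gamma$ and the vanishing of $I_{j\,k}$ on the hyperplane $x_j=x_k$. The only difference is that you additionally spell out the gluing at codimension $\geq 2$ strata via a chain of adjacent alcoves, a point the paper's proof leaves implicit.
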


\begin{proof}
Let $w \in S_N$.
The neighbouring alcoves of $w^{-1} \bR^N_+$ are $(s_j w)^{-1} \bR^N_+$, where $j=1,\ldots,N-1$, and the shared boundary of $w^{-1} \bR^N_+$ and $(s_j w)^{-1} \bR^N_+$ is a subset of the hyperplane $V_{w^{-1}(j) \, w^{-1}(j+1)}$. It is sufficient to prove that, for $j=1,\ldots,N-1$, we have
\[ \lim_{x_{w^{-1}(j+1)} \to x_{w^{-1}(j)} \atop \bm x \in w^{-1} \bR^N_+} (P_\gamma f)(\bm x) = 
\lim_{x_{w^{-1}(j+1)} \to x_{w^{-1}(j)} \atop \bm x \in (s_j w)^{-1} \bR^N_+} (P_\gamma f)(\bm x). \]
Using \rfeqn{propoprestr} this is equivalent to
\begin{equation} \label{eqn311} \lim_{x_{w^{-1}(j+1)} \to x_{w^{-1}(j)}} \left(w^{-1} w_\gamma f \right)(\bm x) = 
\lim_{x_{w^{-1}(j+1)} \to x_{w^{-1}(j)}} \left(w^{-1} s_j s_{j,\gamma} w_\gamma f \right)(\bm x). \end{equation}
For the right-hand side of \rfeqn{eqn311} we note that
\[ w^{-1} s_j s_{j,\gamma} w_\gamma = w^{-1} (1 - \gamma I_{j \, j+1} ) w_\gamma = (1-\gamma I_{w^{-1}(j) \, w^{-1}(j+1)}) w^{-1} w_\gamma. \]
Now using that $\lim_{x_j \to x_k} I_{j \, k} = 0$ establishes \rfeqn{eqn311}.
\end{proof}

\begin{lem} \label{propopcont2}
Let $\gamma \in \bR$. For $f \in \ca C^\infty(\bR^N)$, $P_\gamma f \in \ca C^\infty(\bR^N_\n{reg})$. 
As a consequence, $P_\gamma$ restricts to an element of $\Hom(\ca C^\infty(\bR^N),\ca C^\infty(\bR^N_\n{reg}))$.
\end{lem}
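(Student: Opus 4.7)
The plan is to exploit the piecewise definition of $P_\gamma$ in \rfeqn{propoprestr} together with the fact that the open alcoves are disjoint. Concretely, the regular locus decomposes as the disjoint union
\[
\bR^N_\n{reg} = \bigsqcup_{w \in S_N} w^{-1}\bR^N_+,
\]
of the connected components (the open alcoves), so to show $P_\gamma f$ is smooth on $\bR^N_\n{reg}$ it is enough to show that it is smooth on each open alcove separately. This is a much weaker statement than the analogous claim over $\bR^N$ (which would require matching smooth jets across the hyperplanes $V_{j\,k}$), and is the reason smoothness at the interior of each alcove comes essentially for free.

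I would first recall from the definitions of $I_j$ and $s_{j,\gamma}$ (stated just before \rfp{dAHAintrep}) that each $s_{j,\gamma} = s_j + \gamma I_j$ restricts to an operator on $\ca C^\infty(\bR^N)$; the permutation $s_j$ preserves smoothness trivially, while the integral $(I_j f)(\bm x) = \int_0^{x_j - x_{j+1}} f(\bm x - y(\bm e_j - \bm e_{j+1}))\dd y$ depends smoothly on $\bm x$ by differentiation under the integral sign and the fundamental theorem of calculus. Consequently, for any fixed $w \in S_N$ with reduced expression $w = s_{i_1}\cdots s_{i_l}$, the composition
\[
w_\gamma = s_{i_1,\gamma}\cdots s_{i_l,\gamma}
\]
also restricts to an endomorphism of $\ca C^\infty(\bR^N)$. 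Hence $w^{-1} w_\gamma f \in \ca C^\infty(\bR^N)$ as well.

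Now fix $f \in \ca C^\infty(\bR^N)$ and $w \in S_N$. By \rfeqn{propoprestr}, on the open alcove $w^{-1}\bR^N_+$ the function $P_\gamma f$ agrees with $w^{-1} w_\gamma f$, which is the restriction of a smooth function on $\bR^N$ to an open subset and is therefore smooth there. Since each $w^{-1}\bR^N_+$ is open in $\bR^N_\n{reg}$ and smoothness is a local property, it follows that $P_\gamma f$ is smooth on $\bR^N_\n{reg}$. Combining this with \rfl{propopcont}, which guarantees $P_\gamma f \in \ca C(\bR^N)$, yields the desired conclusion that $P_\gamma$ restricts to an element of $\Hom(\ca C^\infty(\bR^N),\ca C^\infty(\bR^N_\n{reg}))$.

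I expect no real obstacle in this proof: the only subtlety is verifying that $I_j$ preserves $\ca C^\infty(\bR^N)$, but this is standard and was already asserted in passing after the definition. The genuine difficulty regarding regularity of $P_\gamma f$ across the reflection hyperplanes is precisely what the preceding lemma \rfl{propopcont} handles (continuity only); the present statement deliberately restricts attention to $\bR^N_\n{reg}$, where the disjoint-union structure makes smoothness automatic.
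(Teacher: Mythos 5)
Your proof is correct and follows essentially the same route as the paper: restrict to each open alcove $w^{-1}\bR^N_+$, where $P_\gamma f$ agrees with $w^{-1}w_\gamma f$, and observe that the reflection and integral operators composing $w_\gamma$ preserve $\ca C^\infty(\bR^N)$. (The appeal to \rfl{propopcont} at the end is superfluous for the stated conclusion, since membership in $\Hom(\ca C^\infty(\bR^N),\ca C^\infty(\bR^N_\n{reg}))$ already follows from smoothness on the alcoves, but this is harmless.)
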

\begin{proof}
Let $w \in S_N$.
Then $P_\gamma f|_{w^{-1} \bR^N_+} = w^{-1} w_\gamma f|_{w^{-1} \bR^N_+}$, which is a linear combination of products of reflection operators $s_{j \, k}$ and integral operators $I_{j \, k}$, both of which send smooth functions to smooth functions.
\end{proof}

By combining \rfls{propopcont}{propopcont2} we obtain
\begin{cor}
Let $\gamma \in \bR$. $P_\gamma$ restricts to an element of $\Hom(\ca C^\infty(\bR^N),\ca{CB}^\infty(\bR^N))$.
\end{cor}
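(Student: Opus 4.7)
The statement is an immediate amalgamation of the two preceding lemmas, together with the definition of $\ca{CB}^\infty(\bR^N)$, so the proof plan is essentially bookkeeping rather than new work. Let me lay it out.

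The plan is to fix an arbitrary $f \in \ca C^\infty(\bR^N)$ and verify the two defining properties of membership in $\ca{CB}^\infty(\bR^N)$, namely (a) continuity of $P_\gamma f$ on all of $\bR^N$ and (b) smoothness of its restriction to $\bR^N_\n{reg}$. For (a), I would invoke \rfl{propopcont}: since $\ca C^\infty(\bR^N) \subset \ca C(\bR^N)$, that lemma already guarantees that $P_\gamma f|_{\bR^N_\n{reg}}$ admits a (unique) continuous extension to $\bR^N$, and it is this extension that we take as the definition of $P_\gamma f$ globally. For (b), \rfl{propopcont2} says directly that $P_\gamma f \in \ca C^\infty(\bR^N_\n{reg})$.

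Combining (a) and (b) with the definition $\ca{CB}^\infty(\bR^N) = \set{f \in \ca C(\bR^N)}{f|_{\bR^N_\n{reg}} \in \ca C^\infty(\bR^N_\n{reg})}$ yields $P_\gamma f \in \ca{CB}^\infty(\bR^N)$. Since $f$ was arbitrary and $P_\gamma$ is already linear on $\ca C(\bR^N_\n{reg})$ (as a sum of compositions of linear multiplication, permutation and integral operators), the restriction is a well-defined element of $\Hom(\ca C^\infty(\bR^N), \ca{CB}^\infty(\bR^N))$.

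There is no real obstacle here: all the substantive content, in particular the matching of one-sided limits across each shared hyperplane $V_{w^{-1}(j)\, w^{-1}(j+1)}$ via the identity $w^{-1} s_j s_{j,\gamma} w_\gamma = (1 - \gamma I_{w^{-1}(j)\, w^{-1}(j+1)}) w^{-1} w_\gamma$ and the vanishing $\lim_{x_j \to x_k} I_{j\, k} = 0$, has already been carried out in the proof of \rfl{propopcont}; and the smoothness on each alcove follows from $P_\gamma|_{w^{-1} \bR^N_+} = w^{-1} w_\gamma$ being a composition of operators that preserve $\ca C^\infty$, as used in \rfl{propopcont2}. The only thing worth flagging is that the two lemmas use compatible notions of $P_\gamma f$: the continuous extension supplied by the first lemma agrees on $\bR^N_\n{reg}$ with the smooth function produced by the second, so there is no ambiguity in speaking of $P_\gamma f$ as a single element of $\ca{CB}^\infty(\bR^N)$.
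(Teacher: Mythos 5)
Your proof is correct and follows exactly the paper's route: the corollary is obtained by combining \rfl{propopcont} (continuous extension of $P_\gamma f$ to $\bR^N$) with \rfl{propopcont2} (smoothness of $P_\gamma f$ on $\bR^N_\n{reg}$), which together give membership in $\ca{CB}^\infty(\bR^N)$. Your extra remark about the compatibility of the two descriptions of $P_\gamma f$ is a reasonable piece of bookkeeping that the paper leaves implicit.
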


The crucial property of $P_\gamma$ is that it \emph{intertwines} the integral and Dunkl-type representations of the dAHA. 
\begin{thm}[Intertwining property, \cite{Hikami}] \label{intertwine}
Let $\gamma \in \bR$. We have
\begin{align} 
w P_\gamma &= P_\gamma w_\gamma && \in \Hom(\ca C^\infty(\bR^N),\ca{CB}^\infty(\bR^N)), && w \in S_N, \label{intertwine1}\\
\partial_{j,\gamma} (P_\gamma|_{\bR^N_\n{reg}}) &= (P_\gamma \partial_j)|_{\bR^N_\n{reg}} \hspace{-12mm} && \in \Hom(\ca C^\infty(\bR^N),\ca C^\infty(\bR^N_\n{reg})), && j=1,\ldots,N. \label{intertwine2}
\end{align}
\end{thm}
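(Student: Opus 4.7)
The plan is to verify both identities alcove by alcove on $\bR^N_\n{reg}$, where $P_\gamma$ has the simple form $P_\gamma|_{v^{-1}\bR^N_+} = v^{-1} v_\gamma$, and then, for \rfeqn{intertwine1}, to extend by continuity to all of $\bR^N$ via \rfl{propopcont}. The key structural input is that the operators $s_{j,\gamma}$ satisfy all the Coxeter relations (an immediate consequence of \rfp{dAHAintrep}), so $w \mapsto w_\gamma$ defines a genuine group representation of $S_N$ on $\ca C^\infty(\bR^N)$; in particular $(vw)_\gamma = v_\gamma w_\gamma$ for all $v, w \in S_N$, regardless of whether the concatenated decomposition is reduced.

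For \rfeqn{intertwine1}, fix $w \in S_N$, $v \in S_N$, and $\bm x \in v^{-1}\bR^N_+$. Since $vw(w^{-1}\bm x) = v\bm x \in \bR^N_+$, the point $w^{-1}\bm x$ lies in $(vw)^{-1}\bR^N_+$, so
\[ (w P_\gamma f)(\bm x) = (P_\gamma f)(w^{-1}\bm x) = ((vw)^{-1}(vw)_\gamma f)(w^{-1}\bm x) = ((vw)_\gamma f)(v\bm x), \]
while $(P_\gamma w_\gamma f)(\bm x) = (v^{-1} v_\gamma w_\gamma f)(\bm x) = (v_\gamma w_\gamma f)(v\bm x)$. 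These agree by $(vw)_\gamma = v_\gamma w_\gamma$. Both $w P_\gamma f$ and $P_\gamma w_\gamma f$ are continuous on $\bR^N$ (apply \rfl{propopcont} to $f$ and to $w_\gamma f \in \ca C^\infty(\bR^N)$, and note that $w$ preserves continuity), so agreement on the dense subset $\bR^N_\n{reg}$ extends to agreement on $\bR^N$.

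For \rfeqn{intertwine2}, take $\bm x \in w^{-1}\bR^N_+$. Using $P_\gamma|_{w^{-1}\bR^N_+} = w^{-1} w_\gamma$ together with the chain-rule identity $\partial_j w^{-1} = w^{-1} \partial_{w(j)}$, the derivative contribution to $(\partial_{j,\gamma} P_\gamma f)(\bm x)$ evaluates to $(\partial_{w(j)} w_\gamma f)(w\bm x)$. For the $\Lambda_j$-corrections, the same alcove-tracking as above shows $s_{jk} \bm x \in (w s_{jk})^{-1}\bR^N_+$, so $(P_\gamma f)(s_{jk}\bm x) = (w_\gamma s_{jk,\gamma} f)(w\bm x)$. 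Since $(P_\gamma \partial_j f)(\bm x) = (w_\gamma \partial_j f)(w\bm x)$, the identity on this alcove reduces to the operator relation
\[ \partial_{w(j)} w_\gamma - \gamma w_\gamma \Bigl( \sum_{k:(k,j)\in \Sigma(w)} s_{jk,\gamma} - \sum_{k:(j,k)\in \Sigma(w)} s_{jk,\gamma} \Bigr) = w_\gamma \partial_j, \]
which is precisely \rfeqn{dAHAeqn1} transported through $\rho^\n{int}_\gamma$ (with $X_j \mapsto -\ii \partial_j$ and $s_{jk} \mapsto s_{jk,\gamma}$).

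The principal obstacle is combinatorial bookkeeping: correctly identifying the alcove containing each swapped point $s_{jk} \bm x$, and matching the sign conventions in the definition of $\Lambda_j|_{w^{-1}\bR^N_+}$ with those appearing in \rfeqn{dAHAeqn1}. Once this matching is in place, both intertwining identities reduce to transporting known dAHA relations through $\rho^\n{int}_\gamma$ and exploiting the alcove-wise formula for $P_\gamma$; no new algebraic content is required.
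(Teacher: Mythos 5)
Your proposal is correct and follows essentially the same route as the paper: \rfeqn{intertwine1} is the alcove-wise (pointwise) version of the paper's one-line reindexing $\sum_v w v^{-1}\chi_{\bR^N_+} v_\gamma = \sum_v v^{-1}\chi_{\bR^N_+}(vw)_\gamma$, and for \rfeqn{intertwine2} both arguments reduce, on each alcove $w^{-1}\bR^N_+$, to \rfeqn{dAHAeqn1} transported through $\rho^\n{int}_\gamma$ (the paper packages your alcove-tracking of $s_{j\,k}\bm x$ as an application of \rfeqn{intertwine1} to convert $s_{j\,k}P_\gamma$ into $P_\gamma s_{j\,k,\gamma}$, which is the same computation). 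The sign conventions you flag do match, so no gap remains.
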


\begin{proof}
For \rfeqn{intertwine1} we simply have
\[ w P_\gamma = \sum_{v \in S_N} w v^{-1} \chi_{\bR^N_+} v_\gamma = \sum_{v \in S_N} v^{-1} \chi_{\bR^N_+} (vw)_\gamma = P_\gamma w_\gamma.\]
To prove \rfeqn{intertwine2} it is sufficient to show that $\partial_{j,\gamma} P_\gamma = P_\gamma \partial_j$ on each alcove $w^{-1} \bR^N_+$ ($w \in S_N$).
Indeed, on $w^{-1} \bR^N_+$ we have
\begin{align*} \partial_{j,\gamma} P_\gamma - P_\gamma \partial_j 
&= \left[ \partial_j, P_\gamma \right] +\gamma \left( \sum_{k: (j,k) \in \Sigma(w)} s_{j \, k} - \sum_{k: (k,j) \in \Sigma(w)} s_{j \, k} \right) P_\gamma \\
&= \left[ \partial_j, P_\gamma \right] +\gamma P_\gamma \left( \sum_{k: (j,k) \in \Sigma(w)} s_{j \, k, \gamma} - \sum_{k: (k,j) \in \Sigma(w)} s_{j \, k, \gamma} \right).
\end{align*}
by virtue of \rfeqn{Dunklreprestr} and \rfeqn{intertwine1}.
Next, \rfeqn{propoprestr} yields
\begin{equation} \label{eqn364}   \partial_{j,\gamma} P_\gamma - P_\gamma \partial_j = w^{-1} \left( \partial_{w(j)} w_\gamma - w_\gamma \partial_j + \gamma w_\gamma \left( \sum_{k: (j,k) \in \Sigma(w)} s_{j \, k,\gamma} - \sum_{k: (k,j) \in \Sigma(w)} s_{j \, k,\gamma} \right) \right). \hspace{-1mm} \end{equation}
Since $(w,X_j) \mapsto (w_\gamma,-\ii \partial_j)$ defines a representation of the dAHA, we can use \rfeqn{dAHAeqn1}, which for this representation reads
\[ w_\gamma \partial_j = \partial_{w(j)} w_\gamma + \gamma w_\gamma \left( \sum_{k: (j,k) \in \Sigma(w)} s_{j \, k,\gamma} - \sum_{k: (k,j) \in \Sigma(w)} s_{j \, k,\gamma}  \right), \]
so that the right-hand side in \rfeqn{eqn364} vanishes.
\end{proof}
We note that the above proof is different from the one given in \cite{Hikami}, although the key ingredient \rfeqn{dAHAeqn1} is the same. 

\subsection{The action of the propagation operator on analytic functions}

We will review further established properties of the propagation operator which are relevant to the study of the QNLS model involving the following subspaces of $\ca C(\bR^N)$:
\begin{align} 
\ca{CB}^\omega(\bR^N) &= \set{f \in \ca C(\bR^N)}{\forall w \, f|_{w \bR^N_+} \n{ has a real-analytic extension to } \bR^N}, %
\nc{rccyd}{$\ca{CB}^\omega(\bR^N)$}{Set of continuous functions with real-analytic restriction to regular \\ vectors \nomrefeqpage} \\%
\ca C^\omega_\gamma(\bR^N) &= \left\{ f \in \ca{CB}^\omega(\bR^N) : \left(\partial_j-\partial_k\right)^r f|_{V_{j \, k}^+}-\left(\partial_j-\partial_k\right)^r f|_{V_{j \, k}^-} = \right. \nonumber \\
& \qquad \left.  (1-(-1)^r) \gamma \left(\partial_j-\partial_k\right)^{r-1} f|_{V_{j \, k}}  \n{ for } 1 \leq j<k \leq N \n{ and } r \in \bZ_{>0} \right\}.
\end{align}%
\nc{rcczd}{$\ca C^\omega_\gamma(\bR^N)$}{Set of functions in $\ca{CB}^\omega(\bR^N)$ satisfying the higher-order boundary jump \\ conditions \nomrefeqpage}%

\vspace{-7mm}

Note that $\ca C^\omega_\gamma(\bR^N) \subset \ca C^1_\gamma(\bR^N)$. Firstly, by considering power series expansions, one obtains
\begin{lem}[Action of $P_\gamma$ on analytic functions] \label{propopanalytic}
Let $\gamma \in \bR$.
$P_\gamma$ restricts to an injective element of $\Hom(\ca C^\omega(\bR^N),\ca{CB}^\omega(\bR^N))$.
\end{lem}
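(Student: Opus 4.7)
The plan is to verify two things: first, that for $f \in \ca C^\omega(\bR^N)$ the image $P_\gamma f$ lies in $\ca{CB}^\omega(\bR^N)$; second, that $f \mapsto P_\gamma f$ is injective on $\ca C^\omega(\bR^N)$.

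For the first part, continuity of $P_\gamma f$ on $\bR^N$ is already provided by \rfl{propopcont} applied to the (in particular continuous) function $f$. Given any $w \in S_N$, the defining formula \rf{propoprestr} yields $P_\gamma f|_{w^{-1} \bR^N_+} = (w^{-1} w_\gamma f)|_{w^{-1} \bR^N_+}$. I would then argue that $w^{-1} w_\gamma f$ is itself an element of $\ca C^\omega(\bR^N)$, which automatically provides the required real-analytic extension of $P_\gamma f|_{w^{-1} \bR^N_+}$ to all of $\bR^N$. Writing $w_\gamma$ as a product of factors $s_{j,\gamma} = s_j + \gamma I_j$, it suffices to check that each of $s_j$ and $I_j$ preserves $\ca C^\omega(\bR^N)$; the former is obvious as it merely permutes arguments, and the composition with $w^{-1}$ is handled identically.

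The real work is therefore in showing $I_j: \ca C^\omega(\bR^N) \to \ca C^\omega(\bR^N)$. For this I would rewrite
\[ (I_j f)(\bm x) = (x_j - x_{j+1}) \int_0^1 f\!\left(\bm x - t(x_j - x_{j+1})(\bm e_j - \bm e_{j+1})\right) \dd t, \]
observe that the integrand is jointly real-analytic in $(\bm x, t)$ on $\bR^N \times [0,1]$, and appeal to the standard fact that integration of a jointly real-analytic function over a compact interval in one variable yields a real-analytic function in the remaining parameters (proved by interchanging the power series expansion with the integral on a suitable polydisk in the complexification). This step is the only non-trivial obstacle; the remaining composition structure is purely algebraic.

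For injectivity, I would use that the fundamental alcove corresponds to $w = 1$, in which case $w_\gamma = 1$ and hence $P_\gamma|_{\bR^N_+} = \mathrm{id}$. If $P_\gamma f = 0$, then in particular $f|_{\bR^N_+} = 0$. Since $\bR^N_+$ is a nonempty open subset of the connected real-analytic manifold $\bR^N$, the identity principle for real-analytic functions forces $f \equiv 0$, establishing injectivity and completing the proof.
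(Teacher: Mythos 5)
Your proof is correct and is essentially the argument the paper intends: the paper offers only the one-line remark ``by considering power series expansions, one obtains'' this lemma, and your reparametrization of $I_j$ over $[0,1]$ followed by the standard analyticity-of-parameter-integrals fact is precisely the missing content, with the alcove-wise identity $P_\gamma f|_{w^{-1}\bR^N_+} = (w^{-1}w_\gamma f)|_{w^{-1}\bR^N_+}$ supplying the required real-analytic extensions and \rfl{propopcont} supplying global continuity. The injectivity step via $P_\gamma|_{\bR^N_+} = \mathrm{id}$ and the identity principle for real-analytic functions on the connected set $\bR^N$ is likewise the natural and correct route.
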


The following further statements are due to \cite{EmsizOS} to which we refer for the detailed proofs.

\begin{prop}[Invertibility of the propagation operator {\cite[Thm. 5.3(ii)]{EmsizOS}}] \label{propopinvtbl}
Let $\gamma \in \bR$. 
Then $P_\gamma$ defines a bijection between $\ca C^\omega(\bR^N)$ and $\ca C^\omega_\gamma(\bR^N)$.
\end{prop}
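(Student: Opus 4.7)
The plan is to establish two directions: (a) $P_\gamma(\ca C^\omega(\bR^N)) \subseteq \ca C^\omega_\gamma(\bR^N)$, and (b) surjectivity onto $\ca C^\omega_\gamma(\bR^N)$. Injectivity on $\ca C^\omega(\bR^N)$ is already contained in \rfl{propopanalytic}.

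For direction (a), I would leverage the intertwining property \rf{intertwine2}. Given $f \in \ca C^\omega(\bR^N)$, iterated application of $P_\gamma \partial_j = \partial_{j,\gamma} P_\gamma$ (on $\bR^N_\n{reg}$) gives $P_\gamma p(\partial_1,\ldots,\partial_N) = p(\partial_{1,\gamma},\ldots,\partial_{N,\gamma}) P_\gamma$ for any polynomial $p$. For fixed $j<k$ and $r \in \bZ_{>0}$, expanding $(\partial_j - \partial_k)^r$ in terms of the Dunkl-type derivatives and the sign-dependent correction terms $\theta_{\ast \, \ast} s_{\ast \, \ast}$ (as done explicitly for $r=1$ in the proof of \rfp{DunklrepQNLS}) and then taking limits $x_k \to x_j$ from the two sides of $V_{j\,k}$ isolates precisely the jump. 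On the regular side, $\partial_{j,\gamma}$ acts as $\partial_j$ on $\bR^N_+$, so after taking the limit the $s_{j\,k}$ terms contribute with a definite sign depending on which alcove one approaches from; the sum of these signed contributions yields the asserted factor $(1-(-1)^r)\gamma$ times $(\partial_j-\partial_k)^{r-1} g|_{V_{j\,k}}$. The combinatorics is identical for all $r$ by induction on $r$, since only the adjacent transposition $s_{j\,k}$ changes sign across $V_{j\,k}$ while all other $\theta$ factors are continuous there.

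For direction (b), I would proceed constructively. Given $g \in \ca C^\omega_\gamma(\bR^N)$, its restriction $g|_{\overline{\bR^N_+}}$ is real-analytic by definition of $\ca{CB}^\omega$, so extends to a unique $f \in \ca C^\omega(U)$ on some open neighborhood $U \supset \overline{\bR^N_+}$. The higher-order jump conditions at each wall $V_{j\,k}$ between $\bR^N_+$ and an adjacent alcove translate (via the same expansion used in (a)) into the statement that the Taylor series of the real-analytic extension of $g|_{\bR^N_+}$ agrees at every boundary point with the Taylor series obtained by applying $(w_\gamma)^{-1} w$ to $g|_{w^{-1}\bR^N_+}$ across the wall. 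Stepping alcove by alcove using \rfl{lengthlem1} to order the walls by length, this produces a globally defined real-analytic $f \in \ca C^\omega(\bR^N)$. One then verifies $P_\gamma f = g$ alcove by alcove: on $w^{-1}\bR^N_+$ we have $P_\gamma f|_{w^{-1}\bR^N_+} = w^{-1} w_\gamma f$, and by the construction of $f$ from the jump data this equals $g|_{w^{-1}\bR^N_+}$.

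The main obstacle is the construction in (b): ensuring that the local analytic extensions of $g$ obtained by pulling back through each $w^{-1} w_\gamma$ are mutually compatible on overlaps, so that a \emph{single} analytic $f$ on all of $\bR^N$ results. Equivalently, one must check that the higher-order jump conditions are exactly the integrability conditions for this cocycle. Concretely, an analytic continuation across two walls meeting in a codimension-two face must give the same germ regardless of the order of crossings; this reduces to a braid-type identity in the integral representation of $\f H^N_\gamma$, which in turn is automatic because $(s_{j,\gamma})_{j}$ satisfies the Coxeter relations by \rfp{dAHAintrep}. Once the algebraic consistency is in place, the analytic continuation exists, and the remainder of the surjectivity argument is a routine identification.
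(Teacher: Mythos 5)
Your overall strategy --- injectivity from \rfl{propopanalytic}, the inclusion $P_\gamma(\ca C^\omega(\bR^N)) \subseteq \ca C^\omega_\gamma(\bR^N)$ via the intertwining property, and surjectivity by an alcove-by-alcove induction driven by the jump conditions --- is the same one the paper (following \cite{EmsizOS}) sketches, and it is sound. Where you genuinely diverge is the construction of the preimage in (b), and there you make the argument harder than it needs to be. Since $g \in \ca C^\omega_\gamma(\bR^N) \subset \ca{CB}^\omega(\bR^N)$, the restriction $g|_{\bR^N_+}$ already has, by the very definition of $\ca{CB}^\omega(\bR^N)$, a real-analytic extension to \emph{all} of $\bR^N$, not merely to a neighbourhood of $\overline{\bR^N_+}$. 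Taking this global $f$ as the candidate preimage removes the need to glue local continuations wall by wall, and with it the entire codimension-two consistency discussion and the appeal to the Coxeter relations. The induction then only has to verify $g|_{w^{-1}\bR^N_+} = (w^{-1}w_\gamma f)|_{w^{-1}\bR^N_+}$, ordering the alcoves by $l(w)$: both $g$ and $P_\gamma f$ lie in $\ca C^\omega_\gamma(\bR^N)$ (the latter by your part (a)), are piecewise analytic, and agree on $w^{-1}\bR^N_+$; the jump conditions for \emph{all} $r$ then pin down every normal derivative of the analytic germ on the far side of the shared wall, so the two germs on $(s_j w)^{-1}\bR^N_+$ coincide. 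Note that it is precisely here that one needs the conditions for all $r \geq 1$ rather than just $r=1$, and also that the formula $f = (w_\gamma)^{-1} w\, g$ is not a well-defined recipe on a single alcove, since $(w_\gamma)^{-1}$ contains the non-local integral operators $I_{j\,k}$ --- another reason to avoid the gluing construction.

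The thinnest point of your write-up is the induction on $r$ in part (a): the correction terms $\theta_{\ast\,\ast}s_{\ast\,\ast}$ do not commute with $\partial_j-\partial_k$, so the expansion of $(\partial_j-\partial_k)^r$ for $r>1$ is genuinely more involved than the $r=1$ computation in the proof of \rfp{DunklrepQNLS} and does not follow by ``identical combinatorics''. This step is not a gap in principle, but it is real work; it is carried out in \cite{EmsizOS}, and the paper's own proof simply defers to that reference.
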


\begin{proof}
Given $f \in \ca C^\omega_\gamma(\bR^N)$ one considers the unique analytic function $g$ that coincides with $f$ on the fundamental alcove. Then it can be shown by continuity, the derivative jump conditions, and an induction argument that $f=P_\gamma g$ everywhere. 
\end{proof}

Since $\ca C^\omega(\bR^N)$ is a $\f H_\gamma$-submodule of $\ca C^\infty(\bR^N)$ (in terms of the $\rho_\n{int}$-action) from \rfp{propopinvtbl} we obtain
\begin{cor} \label{intertwineanalytic}
Let $\gamma \in \bR$. The operators $\partial_{j,\gamma}-\partial_{k,\gamma}$ ($1 \leq j < k \leq N$) preserve the space $\ca C^\omega_\gamma(\bR^N)$.
Hence, $\ca C^\omega_\gamma(\bR^N)$ is a $\f H^N_\gamma$-module.
Furthermore, in $\Hom(\ca C^\omega(\bR^N),\ca C^\omega_\gamma(\bR^N))$ we have
\begin{align} 
w P_\gamma &= P_\gamma w_\gamma, && w \in S_N, \label{intertwineanalytic1}\\
\hspace{20mm} \left( \partial_{j,\gamma}-\partial_{k,\gamma} \right) P_\gamma&=P_\gamma \left( \partial_j-\partial_k \right), && 1 \leq j < k \leq N, \hspace{20mm} \label{intertwineanalytic2}
\end{align}
\end{cor}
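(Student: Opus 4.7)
The plan is to derive the corollary as a direct packaging of \rft{intertwine} with \rfp{propopinvtbl}, using the easily-checked fact that $\ca C^\omega(\bR^N)$ is stable under the integral representation $\rho^\n{int}_\gamma$: since $s_j$, $\partial_j$ and $I_j$ all preserve real-analyticity, so do $s_{j,\gamma}$ and $-\ii\partial_j$, making $\ca C^\omega(\bR^N)$ an $\f H^N_\gamma$-submodule of $\ca C^\infty(\bR^N)$ as already indicated in the text preceding the corollary.

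First one verifies the two intertwining identities in the smaller $\Hom$-space. For $g\in\ca C^\omega(\bR^N)$ both $w_\gamma g$ and $(\partial_j-\partial_k)g$ again lie in $\ca C^\omega(\bR^N)$, so by \rfp{propopinvtbl} their $P_\gamma$-images lie in $\ca C^\omega_\gamma(\bR^N)$. Identity \rf{intertwineanalytic1} is then inherited from \rft{intertwine}, which already yields $wP_\gamma g = P_\gamma w_\gamma g$ in $\ca{CB}^\infty(\bR^N)$, both sides now known to land in $\ca C^\omega_\gamma(\bR^N)$. For \rf{intertwineanalytic2}, \rft{intertwine} gives the equality on $\bR^N_\n{reg}$; since the right-hand side $P_\gamma(\partial_j-\partial_k)g$ lies in $\ca C^\omega_\gamma(\bR^N)\subset\ca C(\bR^N)$, the left-hand side extends continuously from $\bR^N_\n{reg}$ to all of $\bR^N$ and the equality persists.

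Next comes the preservation statement. Given $f\in\ca C^\omega_\gamma(\bR^N)$, the bijectivity in \rfp{propopinvtbl} yields a unique $g\in\ca C^\omega(\bR^N)$ with $f=P_\gamma g$, so
\[ (\partial_{j,\gamma}-\partial_{k,\gamma}) f = (\partial_{j,\gamma}-\partial_{k,\gamma}) P_\gamma g = P_\gamma (\partial_j-\partial_k) g \in \ca C^\omega_\gamma(\bR^N) \]
by the intertwining just established and \rfp{propopinvtbl}. The same argument with $\partial_j-\partial_k$ replaced by $\partial_j$ shows that each $\partial_{j,\gamma}$ individually maps $\ca C^\omega_\gamma(\bR^N)$ into itself; combined with the obvious stability of $\ca C^\omega_\gamma(\bR^N)$ under the transpositions $s_l$ (the higher-order jump conditions at $V_{j\,k}$ for $f$ become those at $V_{s_l(j)\,s_l(k)}$ for $s_l f$), this furnishes well-defined $\f H^N_\gamma$-module operations on $\ca C^\omega_\gamma(\bR^N)$. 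The dAHA relations \rf{dAHArel1}--\rf{dAHArel5} are inherited from the Dunkl-type representation of \rfp{Dunklrep} on $\ca C^\infty(\bR^N_\n{reg})$, since any $f\in\ca C^\omega_\gamma(\bR^N)$ is determined by its restriction to $\bR^N_\n{reg}$.

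The only substantive point is the continuous-extension step in proving \rf{intertwineanalytic2}: the operators $\partial_{j,\gamma}-\partial_{k,\gamma}$ live a priori only on $\ca C^\infty(\bR^N_\n{reg})$, and there is no immediate reason their values on $P_\gamma g|_{\bR^N_\n{reg}}$ should admit a continuous extension to $\bR^N$, let alone one satisfying the higher-order jump conditions. The entire force of \rfp{propopinvtbl} is precisely that they do, being equal on $\bR^N_\n{reg}$ to $P_\gamma(\partial_j-\partial_k)g$, which lies in $\ca C^\omega_\gamma(\bR^N)$ by construction.
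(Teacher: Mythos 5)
Your proposal is correct and follows essentially the route the paper intends: the paper derives this corollary without a written proof, simply from \rfp{propopinvtbl} together with the fact that $\ca C^\omega(\bR^N)$ is an $\f H^N_\gamma$-submodule under $\rho^\n{int}_\gamma$, and your argument (transfer \rft{intertwine} to the analytic setting, note both sides land in $\ca C^\omega_\gamma(\bR^N)$ by the bijectivity of $P_\gamma$, then conjugate $\partial_j-\partial_k$ by $P_\gamma$ to get preservation) is exactly that derivation spelled out. Your explicit treatment of the continuous-extension step for \rf{intertwineanalytic2}, and the observation that the individual $\partial_{j,\gamma}$ (not just the differences) preserve $\ca C^\omega_\gamma(\bR^N)$ — which is what the module structure actually requires — are both correct and welcome additions of detail.
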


\subsection{The pre-wavefunction}

Since for all $\bm \lambda \in \bC^N$, $\e^{\ii \bm \lambda} \in \ca C^\omega(\bR^N)$, from \rfp{propopinvtbl} we infer that $P_\gamma \e^{\ii \bm \lambda} \in \ca C^\omega_\gamma(\bR^N)$ which shows the relevance of this function to the QNLS model.
However, we will be able to arrive at this statement in a different way. 
For now, we will merely use the analyticity of $\e^{\ii \bm \lambda}$ and \rfl{propopanalytic} to conclude that $P_\gamma \e^{\ii \bm \lambda} \in \ca{CB}^\omega(\bR^N)$. 
First, since this function will be the central object of study for the rest of this thesis, we have
\begin{defn} \label{prewavefndefn}
Let $\gamma \in \bR$ and $\bm \lambda \in \bC^N$.
The \emph{pre-wavefunction} is the function
\begin{equation} \psi_{\bm \lambda} := P_\gamma \e^{\ii \bm \lambda} \in \ca {CB}^\omega(\bR^N). \end{equation}%
\nc{gyl}{$\psi_{\bm \lambda}$}{Pre-wavefunction \nomrefeqpage}%
\end{defn}
\vspace{-15mm}

\begin{rem}
The pre-wavefunction will turn out to provide an intermediate step in the construction of the symmetric wavefunction, but in Chapter \ref{ch5} it will play a more central role.
The reader should think of $\psi_{\bm \lambda}$ as a deformation of the plane wave $\e^{\ii \bm \lambda}$; indeed, for $\gamma=0$ the propagation operator is the identity operator.
The next lemma highlights this further.
\end{rem}

The action of the regular representation in momentum space and the Dunkl-type representation on the pre-wavefunction are intimately related.

\begin{lem} \label{Dunklrepprewavefn} 
Let $\gamma \in \bR$ and $\bm \lambda \in \bC^N$.
Then 
\begin{align} 
\partial_{j,\gamma} \psi_{\bm \lambda} &= \ii \lambda_j \psi_{\bm \lambda}, && \n{for } j=1,\ldots,N, \label{prewavefn1} \\
w \psi_{\bm \lambda} &= \tilde w_\gamma^{-1} \psi_{\bm \lambda}, && \n{for } w \in S_N. \label{prewavefn2}
\end{align}
\end{lem}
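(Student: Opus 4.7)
The plan is to derive both identities as immediate consequences of the intertwining property of $P_\gamma$ (\rft{intertwine}), combined with a transfer of the regular/integral actions onto the plane wave.

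First I would prove \rf{prewavefn1}. By \rft{intertwine}, $\partial_{j,\gamma}(P_\gamma f)|_{\bR^N_\n{reg}} = (P_\gamma \partial_j f)|_{\bR^N_\n{reg}}$ for any $f \in \ca C^\infty(\bR^N)$. Taking $f = \e^{\ii \bm \lambda}$ and using $\partial_j \e^{\ii \bm \lambda} = \ii \lambda_j \e^{\ii \bm \lambda}$, one obtains
\[ \partial_{j,\gamma} \psi_{\bm \lambda}|_{\bR^N_\n{reg}} = P_\gamma (\ii \lambda_j \e^{\ii \bm \lambda})|_{\bR^N_\n{reg}} = \ii \lambda_j \psi_{\bm \lambda}|_{\bR^N_\n{reg}}. \]
Since $\psi_{\bm \lambda} \in \ca{CB}^\omega(\bR^N) \subset \ca{CB}^\infty(\bR^N)$ is continuous on all of $\bR^N$ and the right-hand side extends continuously, \rfl{context} allows us to view $\partial_{j,\gamma} \psi_{\bm \lambda}$ as an element of $\ca{CB}^\infty(\bR^N)$ and the equality holds there.

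Next I would prove \rf{prewavefn2}. By \rft{intertwine}, $w P_\gamma = P_\gamma w_\gamma$, so
\[ w \psi_{\bm \lambda} = w P_\gamma \e^{\ii \bm \lambda} = P_\gamma \bigl( w_\gamma \e^{\ii \bm \lambda} \bigr). \]
The crucial step is then the identity
\begin{equation} \label{wgammaplanewave} w_\gamma \e^{\ii \bm \lambda} = \tilde w_\gamma^{-1} \e^{\ii \bm \lambda}, \end{equation}
where the left-hand side is the action of the integral representation (on $\bm x$) and the right-hand side is the action of the regular representation (on $\bm \lambda$). For a simple transposition $s_j$ this is \rfl{planewaveregintrep} (used in the proof of \rfp{dAHAintrep}), which asserts $s_{j,\gamma} \e^{\ii \bm \lambda} = \tilde s_{j,\gamma} \e^{\ii \bm \lambda}$; since $\tilde s_{j,\gamma}^2 = 1$, this equals $\tilde s_{j,\gamma}^{-1} \e^{\ii \bm \lambda}$. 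For a general $w \in S_N$ with reduced decomposition $w = s_{i_1} \cdots s_{i_l}$, I would iterate the simple case: since the operators $s_{i_k,\gamma}$ act on $\bm x$ and the operators $\tilde s_{i_k,\gamma}$ act on $\bm \lambda$, they commute, and applying the simple case repeatedly from right to left yields
\[ w_\gamma \e^{\ii \bm \lambda} = s_{i_1,\gamma} \cdots s_{i_l,\gamma} \e^{\ii \bm \lambda} = \tilde s_{i_l,\gamma} \cdots \tilde s_{i_1,\gamma} \e^{\ii \bm \lambda} = \tilde w_\gamma^{-1} \e^{\ii \bm \lambda}, \]
confirming \rf{wgammaplanewave}. (The reversal of order is the reason $\tilde w_\gamma^{-1}$, rather than $\tilde w_\gamma$, appears.) Finally, again because $\tilde w_\gamma^{-1}$ acts on $\bm \lambda$ and $P_\gamma$ acts on $\bm x$, they commute, giving
\[ w \psi_{\bm \lambda} = P_\gamma \tilde w_\gamma^{-1} \e^{\ii \bm \lambda} = \tilde w_\gamma^{-1} P_\gamma \e^{\ii \bm \lambda} = \tilde w_\gamma^{-1} \psi_{\bm \lambda}. \]

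The main obstacle is the bookkeeping in \rf{wgammaplanewave}: one must be careful that intermediate functions (such as $s_{i_l,\gamma} \cdots s_{i_k,\gamma} \e^{\ii \bm \lambda}$) remain of the form to which \rfl{planewaveregintrep} may be reapplied, which is ensured precisely by the commutation of $\bm x$- and $\bm \lambda$-operators. Once this is sorted, both identities reduce to a direct application of \rft{intertwine}.
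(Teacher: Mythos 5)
Your proof is correct and follows essentially the same route as the paper: \rf{prewavefn1} via the intertwining identity $\partial_{j,\gamma} P_\gamma = P_\gamma \partial_j$ on $\bR^N_\n{reg}$ together with \rfl{context}, and \rf{prewavefn2} via $w P_\gamma = P_\gamma w_\gamma$ combined with $w_\gamma \e^{\ii \bm \lambda} = \tilde w_\gamma^{-1} \e^{\ii \bm \lambda}$ (your iteration of the simple-transposition case is precisely the paper's proof of \rfc{planewaveregintrepcor}, which the paper simply cites). The only point the paper adds that you omit is the remark that for $\bm \lambda \in \bC^N \setminus \bC^N_\n{reg}$ the right-hand side of \rf{prewavefn2} must be interpreted as a limit in momentum space, since $\tilde w_\gamma^{-1}$ involves divided differences with denominators $\lambda_j - \lambda_k$.
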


\begin{proof}
This follows rather straightforwardly by virtue of the intertwining property of the propagation operator.
More precisely, from \rfeqn{intertwine2} we have
\[  \partial_{j,\gamma} \left( \psi_{\bm \lambda}|_{\bR^N_\n{reg}} \right) 
=  \partial_{j,\gamma} \left( P_\gamma \e^{\ii \bm \lambda}|_{\bR^N_\n{reg}} \right) 
= \left( P_\gamma  \partial_j \e^{\ii \bm \lambda} \right) |_{\bR^N_\n{reg}} = \ii \lambda_j  \left( P_\gamma\e^{\ii \bm \lambda} \right)|_{\bR^N_\n{reg}} = \ii \lambda_j \psi_{\bm \lambda}|_{\bR^N_\n{reg}}, \]
and following \rfl{context} we obtain \rf{prewavefn1}.
Also, from  \rfeqn{intertwine1} we have
\[ w \psi_{\bm \lambda} = w P_\gamma \e^{\ii \bm \lambda} = P_\gamma w_\gamma \e^{\ii \bm \lambda} = P_\gamma \tilde w_\gamma^{-1} \e^{\ii \bm \lambda} = \tilde w_\gamma^{-1} P_\gamma \e^{\ii \bm \lambda} = \tilde w_\gamma^{-1} \psi_{\bm \lambda}, \]
where we have used \rfeqn{planewaveregintrep5}.
We note that if $\bm \lambda \in \bC^N \setminus \bC^N_\n{reg}$, \rfeqn{prewavefn2} may have to be interpreted in accordance with the proof of \rfl{planewaveregintrep}, \rfeqn{planewaveregintrep1} for the irregular case ($x_j = x_k$), i.e. as the result of an appropriate limit in momentum space.
\end{proof}

Let $\bm \lambda \in \bC^N$.
We recall that the plane wave $\e^{\ii \bm \lambda} \in \ca C^\infty(\bR^N)$ is the unique solution (up to a constant factor) of the system of partial differential equatons $\partial_j f = \ii \lambda_j f$ for $j=1,\ldots,N$.
From \rfeqn{prewavefn1} it follows that $\psi_{\bm \lambda}$ for $\bm \lambda \in \bC^N$ solves the ``deformed'' system \rf{Dunklrepsystem} in $\ca{CB}^\infty(\bR^N)$.
Hence, from \rfl{Dunklrepsystemunique} we obtain the following result which will be important in Chapter \ref{ch5}:
\begin{cor} \label{prewavefnDunkl}
Let $\gamma \in \bR$ and $\bm \lambda \in \bC^N$.
Then the solution set of the system \rf{Dunklrepsystem} in $\ca{CB}^\infty(\bR^N)$ is one-dimensional, and it is spanned by $\psi_{\bm \lambda}$.
\end{cor}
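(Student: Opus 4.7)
The proof is essentially a two-line assembly of results that have already been established, so the plan is to state clearly which pieces go where.

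First, I would verify that $\psi_{\bm \lambda}$ is a nonzero element of $\ca{CB}^\infty(\bR^N)$ satisfying the system \rf{Dunklrepsystem}. Membership in $\ca{CB}^\infty(\bR^N)$ follows because $\psi_{\bm \lambda} = P_\gamma \e^{\ii \bm \lambda} \in \ca{CB}^\omega(\bR^N) \subset \ca{CB}^\infty(\bR^N)$ by \rfl{propopanalytic} (applied to $\e^{\ii \bm \lambda} \in \ca C^\omega(\bR^N)$). The eigenvalue equations $\partial_{j,\gamma}\psi_{\bm \lambda} = \ii \lambda_j \psi_{\bm \lambda}$ for $j=1,\ldots,N$ are exactly \rfeqn{prewavefn1} of \rfl{Dunklrepprewavefn}. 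Finally, $\psi_{\bm \lambda}$ is nonzero since $\psi_{\bm \lambda}|_{\bR^N_+} = \e^{\ii \bm \lambda}|_{\bR^N_+}$, because $P_\gamma|_{\bR^N_+} = 1$ (taking $w=1$ in \rfeqn{propoprestr}).

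Second, I would invoke \rfl{Dunklrepsystemunique}, which states precisely that any $f \in \ca{CB}^\infty(\bR^N)$ satisfying \rf{Dunklrepsystem} is determined up to a scalar multiple. Combined with the first step, this yields that the solution space is exactly the one-dimensional subspace $\bC \cdot \psi_{\bm \lambda}$.

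There is no serious obstacle here: the hard work has been done in proving the intertwining relation \rft{intertwine} (which powers \rfl{Dunklrepprewavefn}) and in the alcove-by-alcove induction for \rfl{Dunklrepsystemunique}. The only care required is to confirm that $\partial_{j,\gamma}\psi_{\bm \lambda}$, a priori only defined on $\bR^N_\n{reg}$, extends continuously across the hyperplanes $V_{j \, k}$ so that the eigenvalue equation makes sense in $\ca{CB}^\infty(\bR^N)$; but this is exactly the content of \rfl{context}, applied with $m = \ii \lambda_j$, and \rfl{Dunklrepprewavefn} has already incorporated this observation. Thus the corollary follows directly by assembling \rfl{propopanalytic}, \rfl{Dunklrepprewavefn}, and \rfl{Dunklrepsystemunique}.
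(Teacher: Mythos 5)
Your proposal is correct and follows exactly the paper's route: the paper obtains this corollary by combining \rfeqn{prewavefn1} of \rfl{Dunklrepprewavefn} (which shows $\psi_{\bm \lambda}$ solves the system \rf{Dunklrepsystem}, with the continuity across the hyperplanes handled via \rfl{context}) with the uniqueness statement \rfl{Dunklrepsystemunique}. Your additional remarks -- that $\psi_{\bm \lambda} \in \ca{CB}^\infty(\bR^N)$ and that it is nonzero because $\psi_{\bm \lambda}|_{\bR^N_+} = \e^{\ii \bm \lambda}|_{\bR^N_+}$ -- are correct and make the assembly slightly more explicit than the paper's, but do not change the argument.
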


Moreover, by virtue of \rfp{DunklrepQNLS} we obtain that $\psi_{\bm \lambda}$ satisfies the QNLS eigenvalue problem:
\begin{cor} \label{prewavefnQNLS}
Let $\gamma \in \bR$ and $\bm \lambda \in \bC^N$.
Then $\psi_{\bm \lambda}$ satisfies the derivative jump conditions and $-\Delta \psi_{\bm \lambda}|_{\bR^N_\n{reg}} =  \sum_{j=1}^N \lambda_j^2 \psi_{\bm \lambda}|_{\bR^N_\n{reg}}$.
\end{cor}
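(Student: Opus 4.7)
The plan is to observe that this corollary is an immediate consequence of results already established in the chapter, requiring only that we compose two statements appropriately.

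First, I would invoke \rfl{Dunklrepprewavefn}, which tells us that $\psi_{\bm\lambda} \in \ca{CB}^\omega(\bR^N) \subset \ca{CB}^\infty(\bR^N)$ satisfies the system \rf{Dunklrepsystem}, i.e. $\partial_{j,\gamma} \psi_{\bm\lambda} = \ii \lambda_j \psi_{\bm\lambda}$ for $j=1,\ldots,N$. This is the genuine content of \rft{intertwine} applied to the eigenfunction $\e^{\ii \bm\lambda}$ of $\partial_j$, so nothing further needs to be said at this step. (Alternatively one could cite \rfc{prewavefnDunkl} directly, but the underlying input is \rfl{Dunklrepprewavefn}.)

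Second, I would apply \rfp{DunklrepQNLS} to $f = \psi_{\bm\lambda}$ with this choice of $\bm\lambda$. The proposition precisely asserts that any $f \in \ca{CB}^\infty(\bR^N)$ satisfying $\partial_{j,\gamma} f = \ii \lambda_j f$ for all $j$ lies in $\ca C^1_\gamma(\bR^N)$ and obeys $-\Delta f|_{\bR^N_\n{reg}} = \sum_{j=1}^N \lambda_j^2 \, f|_{\bR^N_\n{reg}}$. Membership in $\ca C^1_\gamma(\bR^N)$ is exactly the assertion that the derivative jump conditions \rf{QNLS2} hold, and the eigenvalue equation is the second claim of the corollary with $E = p_2(\bm\lambda)$.

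Since both ingredients have already been proven — the intertwining property of $P_\gamma$ from \rft{intertwine}, and the passage from the Dunkl-type eigenvalue system to the QNLS derivative jump conditions via the commutator expansion of $\partial_j - \partial_k$ in the proof of \rfp{DunklrepQNLS} — no new estimate or computation is required here. There is no real obstacle; the corollary is a clean packaging statement asserting that the pre-wavefunction manifestly solves the QNLS eigenvalue problem on $\bR^N_\n{reg}$, dropping only the $S_N$-invariance that one would recover by symmetrization (to be treated in the subsequent chapter).
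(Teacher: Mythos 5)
Your proposal matches the paper's own argument exactly: the corollary is stated there as an immediate consequence of \rfp{DunklrepQNLS} applied to $\psi_{\bm\lambda}$, which satisfies the eigenvalue system \rf{Dunklrepsystem} by \rfl{Dunklrepprewavefn} (equivalently \rfc{prewavefnDunkl}). Both ingredients and their composition are identified correctly, so there is nothing to add.
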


\begin{rem}
We have remarked on the fact that $\psi_{\bm \lambda}$ satisfies the derivative jump conditions simply by virtue of being the image of a real-analytic function under $P_\gamma$, cf. \rfp{propopinvtbl}. However, we wish to highlight the point that the pre-wavefunction satisfies these conditions by virtue of being a common eigenfunction of the Dunkl-type operators.
\end{rem}

\begin{lem} \label{prewavefnregrep}
Let $\gamma \in \bR$ and $\bm \lambda \in \bC^N$. We have 
\[ \psi_{\bm \lambda} = \sum_{w \in S_N} \chi_{w^{-1} \bR^N_+} \tilde w_\gamma^{-1} \tilde w  \e^{\ii \bm \lambda}. \] 
\end{lem}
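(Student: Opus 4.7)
The plan is to unpack the definition of $P_\gamma$ and then translate the action of each $w_\gamma$ on the plane wave from the integral representation (acting in position space) to the regular representation (acting in momentum space). By \rfd{prewavefndefn} and the alcove-wise description \rfeqn{propoprestr} of the propagation operator,
\[ \psi_{\bm \lambda} = P_\gamma \e^{\ii \bm \lambda} = \sum_{w \in S_N} \chi_{w^{-1} \bR^N_+} w^{-1} w_\gamma \e^{\ii \bm \lambda}, \]
so the task reduces to rewriting $w^{-1} w_\gamma \e^{\ii \bm \lambda}$ as $\tilde w_\gamma^{-1} \tilde w \e^{\ii \bm \lambda}$.

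First I would establish the key identity $w_\gamma \e^{\ii \bm \lambda} = \tilde w_\gamma^{-1} \e^{\ii \bm \lambda}$. The base case is $s_{j,\gamma}\e^{\ii \bm \lambda} = \tilde s_{j,\gamma} \e^{\ii \bm \lambda}$: the $s_j$-part is handled by the relation $s_j \e^{\ii \bm \lambda} = \tilde s_j \e^{\ii \bm \lambda}$ (using $\tilde s_j^2=1$), already observed in the paragraph following \textbf{Notation} \ref{notn5}; for the $\gamma I_j$-part one computes directly
\[ (I_j \e^{\ii \bm \lambda})(\bm x) = \e^{\ii \inner{\bm \lambda}{\bm x}} \int_0^{x_j-x_{j+1}} \! \e^{-\ii y(\lambda_j-\lambda_{j+1})} \dd y = \frac{\e^{\ii \bm \lambda} - \tilde s_j \e^{\ii \bm \lambda}}{\ii(\lambda_j-\lambda_{j+1})}(\bm x) = -\ii \tilde\Delta_j \e^{\ii \bm \lambda}(\bm x), \]
so combining gives $s_{j,\gamma}\e^{\ii \bm \lambda} = (\tilde s_j - \ii \gamma \tilde \Delta_j)\e^{\ii \bm \lambda} = \tilde s_{j,\gamma} \e^{\ii \bm \lambda}$ (this is essentially \rfl{planewaveregintrep} used in the proof of \rfp{dAHAintrep}). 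For a reduced expression $w = s_{i_1}\ldots s_{i_l}$, I would then argue by induction on $l$: since the operators $s_{i_k,\gamma}$ (acting on $\bm x$) commute with any $\tilde s_{i_m,\gamma}$ (acting on $\bm \lambda$), one can slide the deformed transpositions from the integral side to the regular side one at a time, ending with $w_\gamma \e^{\ii \bm \lambda} = \tilde s_{i_l,\gamma}\ldots \tilde s_{i_1,\gamma} \e^{\ii \bm \lambda} = \tilde w_\gamma^{-1} \e^{\ii \bm \lambda}$, the last equality using $\tilde s_{j,\gamma}^2 = 1$ from \rf{dAHArel3}.

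With this identity in hand, the remaining manipulation is purely formal: $w^{-1}$ acts on the $\bm x$-variable while $\tilde w_\gamma^{-1}$ acts on $\bm \lambda$, so they commute, and the elementary relation $w^{-1} \e^{\ii \bm \lambda} = \tilde w \e^{\ii \bm \lambda}$ gives
\[ w^{-1} w_\gamma \e^{\ii \bm \lambda} = w^{-1} \tilde w_\gamma^{-1} \e^{\ii \bm \lambda} = \tilde w_\gamma^{-1} w^{-1} \e^{\ii \bm \lambda} = \tilde w_\gamma^{-1} \tilde w \e^{\ii \bm \lambda}. \]
Substituting into the sum over $w \in S_N$ yields the claim.

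There is no serious obstacle here; the result is essentially a bookkeeping exercise that exploits the duality of \textbf{Notation} \ref{notn5}. The one point requiring mild care is being consistent about inverses in the transfer between representations (the appearance of $\tilde w_\gamma^{-1}$ rather than $\tilde w_\gamma$ comes from the order-reversal when moving the $s_{i_k,\gamma}$ past each other via commutation with $\tilde s_{i_m,\gamma}$), which is why the base-case computation $I_j \e^{\ii \bm \lambda} = -\ii \tilde \Delta_j \e^{\ii \bm \lambda}$ needs to be carried out explicitly before invoking the induction.
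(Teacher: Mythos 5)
Your proposal is correct and follows essentially the same route as the paper: the paper's proof is exactly the chain $w^{-1} w_\gamma \e^{\ii \bm \lambda} = w^{-1}\tilde w_\gamma^{-1}\e^{\ii \bm \lambda} = \tilde w_\gamma^{-1}\tilde w\e^{\ii \bm \lambda}$ followed by multiplying by $\chi_{w^{-1}\bR^N_+}$ and summing, citing \rfc{planewaveregintrepcor} for the key identity. The only difference is that you re-derive that identity inline (the base case $I_{j}\e^{\ii\bm\lambda}=-\ii\tilde\Delta_j\e^{\ii\bm\lambda}$ and the sliding induction), which the paper delegates to \rfl{planewaveregintrep} and \rfc{planewaveregintrepcor} in the appendix, where the degenerate case $\lambda_j=\lambda_{j+1}$ is also treated by a limit.
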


\begin{proof}
We have, for each $w \in S_N$,
\[ w^{-1} w_\gamma \e^{\ii \bm \lambda} = w^{-1} \tilde w_\gamma^{-1} \e^{\ii \bm \lambda} = \tilde w_\gamma^{-1} w^{-1} \e^{\ii \bm \lambda} = \tilde w_\gamma^{-1} \tilde w \e^{\ii \bm \lambda},\]
by virtue of \rfc{planewaveregintrepcor}. 
Now multiply by $\chi_{w^{-1} \bR^N_+}$ and sum over all $w \in S_N$. 
\end{proof}
 
\begin{exm}[The pre-wavefunction for $N=2$] \label{prewavefnexample}
From \rfl{prewavefnregrep} we obtain that
\begin{align*} 
\psi_{\lambda_1,\lambda_2} &= \left( \chi_{ \bR^2_+} + \chi_{s_1 \bR^2_+} \tilde s_{1,\gamma} \tilde s_1 \right) \e^{\ii (\lambda_1,\lambda_2)} 
\; = \; \left(1 + \ii \gamma \chi_{s_1 \bR^2_+} \tilde \Delta_1\right) \e^{\ii (\lambda_1,\lambda_2)}\\
&= \e^{\ii(\lambda_1,\lambda_2)}+\frac{\ii \gamma}{\lambda_1-\lambda_2} \theta_{2\,1} \left(  \e^{\ii(\lambda_1,\lambda_2)} -  \e^{\ii(\lambda_2,\lambda_1)}\right), 
\end{align*}
i.e.
\[ \psi_{\lambda_1,\lambda_2}(x_1,x_2) = \e^{\ii(\lambda_1 x_1 + \lambda_2 x_2)}+\frac{\ii \gamma}{\lambda_1-\lambda_2} \theta(x_2-x_1) \left(  \e^{\ii(\lambda_1 x_1 + \lambda_2 x_2)} -  \e^{\ii(\lambda_2 x_1 + \lambda_1 x_2)}\right) . \]
It follows from \rfc{intertwine} that we have
\[ \partial_{1,\gamma} \psi_{\lambda_1,\lambda_2} = \ii \lambda_1 \psi_{\lambda_1,\lambda_2}, \qquad
\partial_{2,\gamma} \psi_{\lambda_1,\lambda_2} = \ii \lambda_2 \psi_{\lambda_1,\lambda_2}, \]
and hence
\begin{gather*}
-(\partial_1^2+\partial_2^2)\psi_{\lambda_1,\lambda_2}(x_1,x_2) = (\lambda_1^2+\lambda_2^2) \psi_{\lambda_1,\lambda_2}(x_1,x_2), \qquad \n{if } x_1 \ne x_2 \\
\lim_{x_2 \to x_1 \atop x_1 > x_2} (\partial_1-\partial_2) \psi_{\lambda_1,\lambda_2}(x_1,x_2)-
\lim_{x_2 \to x_1 \atop x_1 < x_2} (\partial_1-\partial_2) \psi_{\lambda_1,\lambda_2}(x_1,x_2) =
2 \gamma \psi_{\lambda_1,\lambda_2}(x_1,x_1).
\end{gather*}
\end{exm}

\section{The Bethe wavefunction}

Note that for any $\bm \lambda \in \bC^N_\n{reg}$ the pre-wavefunction $\psi_{\bm \lambda} \in \ca{CB}^\infty(\bR^N)$ solves the QNLS eigenvalue problem - except that it is not $S_N$-invariant (either in momentum or position space).
This can be rectified by symmetrizing $\psi_{\bm \lambda}$ in position space.

\begin{defn} \label{wavefndefn}
Let $\gamma \in \bR$ and $\bm \lambda \in \bC^N$.
The \emph{Bethe wavefunction} is given by
\[ \Psi_{\bm \lambda} := \ca S^{(N)} \psi_{\bm \lambda} = \frac{1}{N!} \sum_{w \in S_N} w \psi_{\bm \lambda} \in \ca{CB}^\infty(\bR^N)^{S_N}. \]
\end{defn}
%

\begin{rem}
This $\Psi_{\bm \lambda}$ is the same function as $\Psi_{\bm \lambda}$ defined in Chapter \ref{chQISM} (\rfd{Bethewavefnintops}) using the quantum inverse scattering method. 
We will discuss this equality further in Chapter \ref{chInterplay} and we will give a (new) proof of it in Chapter \ref{ch5} (\rfc{Brecursion}). 
\end{rem}

\begin{thm} \label{wavefnsymmpolys} \cite{Hikami,EmsizOS,Gutkin1987}
Let $\gamma \in \bR$ and $\bm \lambda \in \bC^N$.
$\Psi_{\bm \lambda}$ is an eigenfunction of $F( \partial_{1,\gamma},\ldots, \partial_{N,\gamma})$ with eigenvalue $F(\ii \bm \lambda)$, for any $F \in \bC[\bm \lambda]^{S_N}$.
Furthermore, $\Psi_{\bm \lambda}$ solves the QNLS eigenvalue problem: it satisfies \rfeqn{QNLS1} with $E=\| \bm \lambda \|^2$ and the derivative jump conditions \rfeqn{QNLS2}.
\end{thm}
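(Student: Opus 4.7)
The plan is to derive both assertions from the corresponding statements for the pre-wavefunction, exploiting the identity $\Psi_{\bm \lambda} = \ca S^{(N)} \psi_{\bm \lambda}$ and the commutativity results established for the Dunkl-type representation. First I would deal with the eigenfunction claim. By \rfl{Dunklrepprewavefn} we have $\partial_{j,\gamma} \psi_{\bm \lambda} = \ii \lambda_j \psi_{\bm \lambda}$ for each $j$, and since the Dunkl-type operators pairwise commute (this is axiom \rf{dAHArel5} transported through $\rho^\n{Dunkl}_\gamma$), iterated application yields $F(\partial_{1,\gamma},\ldots,\partial_{N,\gamma})\psi_{\bm \lambda} = F(\ii \bm \lambda) \psi_{\bm \lambda}$ for any polynomial $F$, in particular for symmetric $F$.

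Next, by \rfl{Dunklreppoly}, \rfeqn{eqn350}, $F(\partial_{1,\gamma},\ldots,\partial_{N,\gamma})$ commutes with every $w \in S_N$ whenever $F$ is symmetric, hence with the symmetrizer $\ca S^{(N)}$. Therefore
\[
F(\partial_{1,\gamma},\ldots,\partial_{N,\gamma}) \Psi_{\bm \lambda} = \ca S^{(N)} F(\partial_{1,\gamma},\ldots,\partial_{N,\gamma}) \psi_{\bm \lambda} = F(\ii \bm \lambda) \, \ca S^{(N)} \psi_{\bm \lambda} = F(\ii \bm \lambda) \Psi_{\bm \lambda},
\]
which is the first assertion. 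Specializing to the symmetric polynomial $F = p_2$ and invoking \rfl{Dunklreppoly}, \rfeqn{eqn351}, which identifies $p_2(\partial_{1,\gamma},\ldots,\partial_{N,\gamma})$ with $\Delta$ on $\ca C^\infty(\bR^N_\n{reg})$, gives $-\Delta \Psi_{\bm \lambda}|_{\bR^N_\n{reg}} = \|\bm \lambda\|^2 \Psi_{\bm \lambda}|_{\bR^N_\n{reg}}$, which is \rfeqn{QNLS1} with $E = \|\bm \lambda\|^2$.

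For the derivative jump conditions \rfeqn{QNLS2}, the input is \rfc{prewavefnQNLS}, which states that $\psi_{\bm \lambda}$ lies in $\ca C^1_\gamma(\bR^N)$. The task is then to verify that the symmetrized function $\Psi_{\bm \lambda} = \frac{1}{N!}\sum_{w \in S_N} w\psi_{\bm \lambda}$ inherits membership of $\ca C^1_\gamma(\bR^N)$. The key observation is that the defining family of conditions is $S_N$-equivariant: under the left action of $u \in S_N$, the hyperplane $V_{j\, k}$ is mapped to $V_{u(j)\, u(k)}$, the half-spaces $V_{j\, k}^{\pm}$ are exchanged or preserved according to whether $u$ reverses the relative order of the $j$-th and $k$-th coordinates, and the partial derivatives transform via $\partial_j u = u\partial_{u^{-1}(j)}$. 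A direct substitution therefore shows that the jump relation at $V_{j \, k}$ for $u\psi_{\bm \lambda}$ is equivalent to the jump relation at $V_{u(j)\, u(k)}$ for $\psi_{\bm \lambda}$, possibly with sides $\pm$ interchanged; since both sets of relations hold, each summand $w\psi_{\bm \lambda}$ belongs to $\ca C^1_\gamma(\bR^N)$, and by linearity so does $\Psi_{\bm \lambda}$.

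The main obstacle is the bookkeeping for the last paragraph: tracking how the action of $w \in S_N$ permutes the collection $\{V_{j\, k}^{\pm}\}$ and interacts with the one-sided limits in the definition of $\ca C^1_\gamma(\bR^N)$. Once this is done cleanly — most conveniently by observing that $\ca C^1_\gamma(\bR^N)$ is itself an $S_N$-invariant subspace of $\ca{CB}^1(\bR^N)$, which one can deduce from the fact that it coincides with the image of $\ca C^\omega(\bR^N) \cap \ca C^\infty(\bR^N)$ under the propagation operator composed with appropriate intertwining relations, or alternatively by a direct hyperplane-by-hyperplane computation — the symmetrization step is immediate and the theorem follows.
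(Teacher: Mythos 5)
Your proposal is correct and follows essentially the same route as the paper: use the eigenvalue equations $\partial_{j,\gamma}\psi_{\bm\lambda}=\ii\lambda_j\psi_{\bm\lambda}$, commute the symmetric polynomial in the Dunkl-type operators past the symmetrizer via \rfl{Dunklreppoly}, and observe that the derivative jump conditions are preserved under the $S_N$-action before symmetrizing. The only caveat is your parenthetical alternative identifying $\ca C^1_\gamma(\bR^N)$ with the image of the propagation operator --- \rfp{propopinvtbl} gives this only for $\ca C^\omega_\gamma(\bR^N)$, a proper subspace --- but since you also offer the direct hyperplane-by-hyperplane check, which is what the paper does, this does not affect the argument.
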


\begin{proof}
Using \rfl{Dunklreppoly} we have
\begin{align*} 
F( \partial_{1,\gamma},\ldots, \partial_{N,\gamma}) \Psi_{\bm \lambda} 
&= F( \partial_{1,\gamma},\ldots, \partial_{N,\gamma}) \ca S^{(N)} \psi_{\bm \lambda} &&= \ca S^{(N)} F( \partial_{1,\gamma},\ldots, \partial_{N,\gamma}) \psi_{\bm \lambda} \\
&=  \ca S^{(N)} F(\ii \bm \lambda) \psi_{\bm \lambda} &&= F(\ii \bm \lambda) \Psi_{\bm \lambda}, 
\end{align*}
proving the first statement. 
Hence, by virtue of \rfc{prewavefnQNLS}, $\Psi_{\bm \lambda}$ is an eigenfunction of $-\Delta$.
As for the derivative jump conditions \rfeqn{QNLS2}, it can be easily checked that if $f \in \ca{CB}^\infty(\bR^N)$ satisfies it, so does $wf$, for any $w \in S_N$. Taking $f = \psi_{\bm \lambda}$, using \rfp{DunklrepQNLS}, summing over all $w$ and dividing out a factor $N!$ we obtain that $\Psi_{\bm \lambda}$ also satisfies \rfeqn{QNLS2}.
\end{proof}

\begin{rem}
By virtue of \rft{wavefnsymmpolys}, we see that the QNLS integrals of motion arise as symmetric expressions $F(\partial_{1,\gamma},\ldots,\partial_{N,\gamma})$ in the Dunkl-type operators $\partial_{j,\gamma}$, which mutually commute and act on Bethe wavefunctions as multiplication by $F(\ii \bm \lambda)$. 
Recall the power sum polynomials $p_n$ defined by $p_n(\bm \lambda) = \sum_{j=1}^N \lambda_j^n$.
It is well-known \cite{Macdonald1} that symmetric polynomials in $\bm \lambda$ are themselves polynomial expressions in the $p_n(\bm \lambda)$, where $n=1,\ldots,N$. There are other sets that generate $\bC[\bm \lambda]^{S_N}$ in this way, but the $p_n$ allow for a useful physical interpretation.
More precisely, the $p_n(-\ii \partial_{1,\gamma},\ldots,-\ii \partial_{N,\gamma})$ reproduce the integrals of motion discussed in Subsect. \ref{ssemonodromymatrix}. 
In particular, 
\begin{align*}
p_0(-\ii \partial_{1,\gamma},\ldots,-\ii \partial_{N,\gamma}) &= N, \\
p_1(-\ii \partial_{1,\gamma},\ldots,-\ii \partial_{N,\gamma}) &=  -\ii (\partial_{1}+\ldots+\partial_{N}), \\
p_2(-\ii \partial_{1,\gamma},\ldots,-\ii \partial_{N,\gamma}) &= -\Delta. 
\end{align*}
\end{rem}

The Bethe wavefunction $\Psi_{\bm \lambda}$ can also be obtained from $\psi_{\bm \lambda}$ through a symmetrization in momentum space; in particular this demonstrates that $\Psi_{\bm \lambda}$ is $S_N$-invariant not only in the particle coordinates, but also in the particle momenta.

\begin{prop} \label{wavefnregrep}
Let $\gamma \in \bR$ and $\bm \lambda \in \bC^N$.
Then 
\begin{equation} \label{wavefnexpr1} \Psi_{\bm \lambda} = \tilde{\ca S}^{(N)} G^{(N)}_\gamma(\bm \lambda) \psi_{\bm \lambda}.
\end{equation}
\end{prop}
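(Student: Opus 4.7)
The plan is to unfold the symmetrizer $\ca S^{(N)}$ in the definition of $\Psi_{\bm \lambda}$ and exchange the position-space $S_N$-action for the momentum-space action via \rfl{Dunklrepprewavefn}, and then recognize the resulting sum as $\tilde{\ca S}^{(N)}_\gamma$, for which \rfp{regrepsymmetrizers} provides the desired factorization.

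First I would expand
\[ \Psi_{\bm \lambda} = \ca S^{(N)} \psi_{\bm \lambda} = \frac{1}{N!} \sum_{w \in S_N} w \psi_{\bm \lambda}, \]
and then apply \rfeqn{prewavefn2} to each term to rewrite $w \psi_{\bm \lambda} = \tilde w_\gamma^{-1} \psi_{\bm \lambda}$. This replaces the $S_N$-action on the $\bm x$-variable by a parameter-dependent operator acting on the $\bm \lambda$-variable. Hence
\[ \Psi_{\bm \lambda} = \frac{1}{N!} \sum_{w \in S_N} \tilde w_\gamma^{-1} \psi_{\bm \lambda}. \]

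Next I would observe that the inversion map $w \mapsto w^{-1}$ is a bijection on $S_N$, and that $\tilde w_\gamma^{-1} = \widetilde{w^{-1}}_\gamma$. The latter identity is a direct consequence of \rf{dAHArel3}: if $w = s_{i_1} \ldots s_{i_l}$ is a reduced expression, then $w^{-1} = s_{i_l} \ldots s_{i_1}$, while $\tilde w_\gamma^{-1} = \tilde s_{i_l,\gamma} \ldots \tilde s_{i_1,\gamma} = \widetilde{w^{-1}}_\gamma$ since each $\tilde s_{j,\gamma}$ is an involution (\rfp{dAHAregrep}). Reindexing the sum with $v = w^{-1}$ then gives
\[ \frac{1}{N!} \sum_{w \in S_N} \tilde w_\gamma^{-1} = \frac{1}{N!} \sum_{v \in S_N} \tilde v_\gamma = \tilde{\ca S}^{(N)}_\gamma. \]

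The final step is to invoke \rfp{regrepsymmetrizers}, which states $\tilde{\ca S}^{(N)}_\gamma = \tilde{\ca S}^{(N)} G^{(N)}_\gamma(\bm \lambda)$, giving the desired expression $\Psi_{\bm \lambda} = \tilde{\ca S}^{(N)} G^{(N)}_\gamma(\bm \lambda) \psi_{\bm \lambda}$. There is no substantial obstacle; the only delicate point is bookkeeping about which variable (momentum $\bm \lambda$ or position $\bm x$) each $S_N$-action is applied to, which is handled by \rfeqn{prewavefn2}. One should also note that the identity holds a priori for $\bm \lambda \in \bC^N_\n{reg}$ where $G^{(N)}_\gamma(\bm \lambda)$ is well-defined; the equality extends to all $\bm \lambda \in \bC^N$ by continuity (the apparent poles in $G^{(N)}_\gamma(\bm \lambda)$ cancel after symmetrization).
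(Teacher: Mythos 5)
Your proof is correct and takes essentially the same route as the paper's: the paper writes $\tilde{\ca S}^{(N)} G^{(N)}_\gamma(\bm \lambda)\psi_{\bm \lambda} = \tilde{\ca S}^{(N)}_\gamma P_\gamma \e^{\ii \bm \lambda}$ and commutes the deformed symmetrizer through $P_\gamma$ using \rfeqn{intertwine1} and \rfeqn{planewaveregintrep5}, which is exactly the content of the identity $w\psi_{\bm \lambda}=\tilde w_\gamma^{-1}\psi_{\bm \lambda}$ from \rfl{Dunklrepprewavefn} that you invoke instead. Both arguments then rest on \rfp{regrepsymmetrizers}, so the only difference is the order in which the same three ingredients are assembled.
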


\begin{proof}
We have
\begin{align*} 
\tilde{\ca S}^{(N)} G^{(N)}_\gamma(\bm \lambda) \psi_{\bm \lambda} 
& = \tilde{\ca S}^{(N)}_\gamma P^{(N)}_\gamma \e^{\ii \bm \lambda} 
&& = P^{(N)}_\gamma \tilde{\ca S}^{(N)}_\gamma \e^{\ii \bm \lambda} 
&&= P^{(N)}_\gamma \ca S^{(N)}_\gamma \e^{\ii \bm \lambda} \\
&= \ca S^{(N)} P^{(N)}_\gamma \e^{\ii \bm \lambda} 
&& = \ca S^{(N)} \psi_{\bm \lambda} 
&& = \Psi_{\bm \lambda}, 
\end{align*}
where we have used \rfp{regrepsymmetrizers}, \rfeqn{planewaveregintrep5} and \rfeqn{intertwine1}.
\end{proof}

This approach leads us to the following well-known statement expressing the Bethe wavefunction in terms of plane waves. This particular proof of it does not appear to be in the literature.

\begin{prop}\cite{Hikami,EmsizOS,Gutkin1987} \label{wavefnexprs}
Let $\gamma \in \bR$ and $\bm \lambda \in \bC^N$. Then 
\begin{equation} \label{eqn370} \Psi_{\bm \lambda} = \sum_{w \in S_N} \chi_{w^{-1} \bR^N_+} \tilde{\ca S}^{(N)} G^{(N)}_\gamma(w \bm \lambda) \e^{\ii \bm \lambda}. \end{equation}
In particular,
\begin{equation} \label{eqn371} \Psi_{\bm \lambda}|_{\bR^N_+} = \tilde{\ca S}^{(N)} G^{(N)}_\gamma(\bm \lambda) \e^{\ii \bm \lambda}. \end{equation}
\end{prop}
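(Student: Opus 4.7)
The plan is to obtain the formula by combining Prop.~\ref{wavefnregrep} (which expresses $\Psi_{\bm\lambda}$ via the momentum-space symmetrizer $\tilde{\ca S}^{(N)} G_\gamma^{(N)}(\bm\lambda)$) with the alcove-wise formula for $\psi_{\bm\lambda}$ given in Lemma~\ref{prewavefnregrep}, and then to shuffle operators around using the fact that $\tilde{\ca S}_\gamma^{(N)}$ is a projection for the regular representation. Since the identity involves no analysis (everything is either the definition of the symmetrizers, the known form of $\psi_{\bm\lambda}$, or algebraic identities in $\f H^N_\gamma$), this is really a bookkeeping proof.

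First I would write
\[
\Psi_{\bm\lambda} \;=\; \tilde{\ca S}^{(N)} G^{(N)}_\gamma(\bm\lambda)\,\psi_{\bm\lambda}
\;=\; \tilde{\ca S}^{(N)}_\gamma\,\psi_{\bm\lambda}
\]
using \rfp{wavefnregrep} and \rfp{regrepsymmetrizers}. Now substitute the expression from \rfl{prewavefnregrep},
\[
\psi_{\bm\lambda} \;=\; \sum_{w \in S_N} \chi_{w^{-1} \bR^N_+}\,\tilde w_\gamma^{-1} \tilde w\, \e^{\ii \bm\lambda}.
\]
Because each $\chi_{w^{-1} \bR^N_+}$ is a multiplication operator acting in the $\bm x$-variable while $\tilde{\ca S}^{(N)}_\gamma$ and $\tilde w_\gamma^{-1}$ act in the $\bm\lambda$-variable, the characteristic function commutes through, giving
\[
\Psi_{\bm\lambda} \;=\; \sum_{w\in S_N}\chi_{w^{-1}\bR^N_+}\,\tilde{\ca S}^{(N)}_\gamma\, \tilde w_\gamma^{-1}\tilde w\,\e^{\ii\bm\lambda}.
\]

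Next I would use the fact that $\ca S^{(N)} w = \ca S^{(N)}$ in $\bC S_N$ for every $w \in S_N$; applying the regular representation $\rho^{\n{reg}}_\gamma$ (and the fact that $\tilde w_\gamma^{-1} = \rho^{\n{reg}}_\gamma(w^{-1})$) this becomes $\tilde{\ca S}^{(N)}_\gamma\,\tilde w_\gamma^{-1} = \tilde{\ca S}^{(N)}_\gamma$, so the $\tilde w_\gamma^{-1}$ drops out:
\[
\Psi_{\bm\lambda} \;=\; \sum_{w\in S_N}\chi_{w^{-1}\bR^N_+}\,\tilde{\ca S}^{(N)}_\gamma\,\tilde w\,\e^{\ii\bm\lambda}.
\]
Reinserting $\tilde{\ca S}^{(N)}_\gamma = \tilde{\ca S}^{(N)} G^{(N)}_\gamma(\bm\lambda)$ and observing that for any function $g$ of $\bm\lambda$ one has the commutation $g(\bm\lambda)\,\tilde w = \tilde w\, g(w\bm\lambda)$ (direct check of the $\bm\lambda$-space action), we get
\[
\tilde{\ca S}^{(N)}_\gamma\,\tilde w\,\e^{\ii\bm\lambda}
\;=\; \tilde{\ca S}^{(N)}\,\tilde w\, G^{(N)}_\gamma(w\bm\lambda)\,\e^{\ii\bm\lambda}
\;=\; \tilde{\ca S}^{(N)} G^{(N)}_\gamma(w\bm\lambda)\,\e^{\ii\bm\lambda},
\]
the last equality using $\tilde{\ca S}^{(N)}\tilde w = \tilde{\ca S}^{(N)}$. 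Summing over $w$ gives \rfeqn{eqn370}.

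The second statement \rfeqn{eqn371} is then immediate: the alcoves $\{w^{-1}\bR^N_+\}_{w\in S_N}$ are pairwise disjoint, so for $\bm x \in \bR^N_+$ only the $w=1$ term in the sum survives and $G^{(N)}_\gamma(1\cdot\bm\lambda) = G^{(N)}_\gamma(\bm\lambda)$. The only even mildly subtle step is the commutation $\tilde{\ca S}^{(N)}_\gamma \tilde w_\gamma^{-1} = \tilde{\ca S}^{(N)}_\gamma$ together with the $g(\bm\lambda)\tilde w = \tilde w g(w\bm\lambda)$ identity, and these are entirely formal once one keeps the position-space action $w$ and the momentum-space action $\tilde w$ notationally distinct (cf.~the \textbf{Notation} box on \rfp{notn5}). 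I would expect no real obstacle beyond this careful bookkeeping.
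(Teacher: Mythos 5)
Your proof is correct, but it reverses the logical order of the paper's argument and uses a different key ingredient to get there. The paper first establishes \rfeqn{eqn371} by restricting \rfeqn{wavefnexpr1} to the fundamental alcove (using only $\psi_{\bm\lambda}|_{\bR^N_+}=\e^{\ii\bm\lambda}$), and then deduces \rfeqn{eqn370} from the position-space $S_N$-invariance of $\Psi_{\bm\lambda}$, writing $\Psi_{\bm\lambda}(\bm x)=\sum_{w}\chi_{\bR^N_+}(w\bm x)\Psi_{\bm\lambda}(w\bm x)$ and transporting the permutation from $\bm x$ to $\bm\lambda$. You instead prove \rfeqn{eqn370} directly by feeding the alcove decomposition of the pre-wavefunction (\rfl{prewavefnregrep}) into $\Psi_{\bm\lambda}=\tilde{\ca S}^{(N)}_\gamma\psi_{\bm\lambda}$ and invoking the projection property $\tilde{\ca S}^{(N)}_\gamma\tilde w_\gamma^{-1}=\tilde{\ca S}^{(N)}_\gamma$ of the \emph{deformed} symmetrizer, after which the endgame ($g(\bm\lambda)\tilde w=\tilde w\,g(w\bm\lambda)$ followed by $\tilde{\ca S}^{(N)}\tilde w=\tilde{\ca S}^{(N)}$) coincides with the paper's. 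Your route makes the role of the regular representation more explicit and never appeals to the symmetry of $\Psi_{\bm\lambda}$ in position space; the paper's route is slightly leaner in that it needs only the value of $\psi_{\bm\lambda}$ on the fundamental alcove rather than the full alcove-wise formula of \rfl{prewavefnregrep}. Both arguments share the same caveat for irregular $\bm\lambda$ (the expression $\tilde{\ca S}^{(N)}G^{(N)}_\gamma$ must be read as the regular operator $\tilde{\ca S}^{(N)}_\gamma$), so no gap arises there.
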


\begin{proof}
First, \rfeqn{eqn371} is established by restricting \rfeqn{wavefnexpr1} to the fundamental alcove and using that $\psi_{\bm \lambda}|_{\bR^N_+} = \e^{\ii \bm \lambda}$.
\rfeqn{eqn370} follows from the principle that a symmetric continuous function is completely determined by its values in the fundamental alcove. More precisely,
\begin{align*} 
\Psi_{\bm \lambda}(\bm x) &= \sum_{w \in S_N} \chi_{\bR^N_+}(w \bm x)\Psi_{\bm \lambda}(w \bm x) 
&&= \sum_{w \in S_N} \chi_{\bR^N_+}(w \bm x) \tilde{\ca S}^{(N)} G^{(N)}_\gamma(\bm \lambda) \e^{\ii \inner{\bm \lambda}{w \bm x}} \\
&= \sum_{w \in S_N} \chi_{w^{-1}\bR^N_+}(\bm x) \tilde{\ca S}^{(N)} G^{(N)}_\gamma(\bm \lambda) \e^{\ii \inner{w^{-1} \bm \lambda}{\bm x}} 
&& = \sum_{w \in S_N} \chi_{w^{-1}\bR^N_+}(\bm x) \tilde{\ca S}^{(N)} G^{(N)}_\gamma(w \bm \lambda) \e^{\ii \bm \lambda}(\bm x),
\end{align*}
since $\tilde{\ca S}^{(N)} \tilde w = \tilde{\ca S}^{(N)}$ for all $w \in \ca S^{(N)}$.
\end{proof}

\subsection{Periodicity} \label{subsectperiodicity}

In order to solve the QNLS problem on a interval of length $L \in \bR_{>0}$, we need to impose $L$-periodicity on $\Psi_{\bm \lambda}$ and $\partial_j \Psi_{\bm \lambda}$ in the $j$-th argument for $j=1,\ldots,N$. 
We will recover the Bethe ansatz equations as conditions on the $\bm \lambda$ that ensure periodicity. Denote $t_- = s_{N-1} \ldots s_2 s_1\in S_N$, so that $t_-(j) = j-1$ (mod $N$).

\begin{lem} \label{Gprop}
Let $\gamma \in \bR$, $L \in \bR_{>0}$ and $\bm \lambda \in \bC^N$.
Assume $\bm \lambda$ satisfies the BAEs \rf{BAE}.
Then
\[ \e^{\ii \lambda_N L}  G^{(N)}_\gamma(\tilde t_- \bm \lambda) = G^{(N)}_\gamma(\bm \lambda). \]
\end{lem}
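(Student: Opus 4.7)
The plan is to rewrite both sides explicitly as products and reduce the identity to the $N$-th Bethe ansatz equation, which is essentially one line of book-keeping once the coefficient $G^{(N)}_\gamma(\tilde t_-\bm\lambda)$ is re-indexed.

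First I would unpack the permutation. Since $t_- = s_{N-1}\cdots s_1$ sends $1\mapsto N$ and $j\mapsto j-1$ for $j\geq 2$, the tuple $\tilde t_-\bm\lambda$ is $(\lambda_N,\lambda_1,\ldots,\lambda_{N-1})$. Splitting the product defining $G^{(N)}_\gamma$ into the block with new first index $j=1$ and the block $j\geq 2$, I expect the relabelling $l=k-1$, $a=j-1$, $b=k-1$ to yield
\[
G^{(N)}_\gamma(\tilde t_-\bm\lambda)=\prod_{l=1}^{N-1}\frac{\lambda_N-\lambda_l-\ii\gamma}{\lambda_N-\lambda_l}\cdot G^{(N-1)}_\gamma(\lambda_1,\ldots,\lambda_{N-1}).
\]
On the other hand, peeling off the factors in $G^{(N)}_\gamma(\bm\lambda)$ involving index $N$ gives
\[
G^{(N)}_\gamma(\bm\lambda)=\prod_{j=1}^{N-1}\frac{\lambda_j-\lambda_N-\ii\gamma}{\lambda_j-\lambda_N}\cdot G^{(N-1)}_\gamma(\lambda_1,\ldots,\lambda_{N-1}).
\]

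Second, I would cancel the common $G^{(N-1)}_\gamma$-factor (assuming for now that $\bm\lambda\in\bC^N_\n{reg}$, so no denominator vanishes; the general case then follows by analytic continuation, or directly since both sides are rational in $\bm\lambda$) and simplify each ratio term-by-term. Multiplying numerator and denominator by $-1$ shows
\[
\frac{\lambda_j-\lambda_N-\ii\gamma}{\lambda_j-\lambda_N}\cdot\frac{\lambda_N-\lambda_j}{\lambda_N-\lambda_j-\ii\gamma}=\frac{\lambda_N-\lambda_j+\ii\gamma}{\lambda_N-\lambda_j-\ii\gamma},
\]
so the claimed identity is equivalent to
\[
\e^{\ii\lambda_N L}=\prod_{j=1}^{N-1}\frac{\lambda_N-\lambda_j+\ii\gamma}{\lambda_N-\lambda_j-\ii\gamma},
\]
which is precisely the $j=N$ instance of the BAEs \rf{BAE}.

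There is no real obstacle here; the only thing to watch is the sign flip coming from swapping numerator and denominator when comparing $\lambda_j-\lambda_N$ with $\lambda_N-\lambda_j$, since the two $\ii\gamma$ terms do not both flip sign. This is why, somewhat pleasingly, only the single BAE for index $N$ is needed rather than the full system.
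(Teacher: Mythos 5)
Your proof is correct and is essentially the paper's own argument: the paper also writes $\tilde t_-\bm\lambda=(\lambda_N,\bm\lambda')$, factors $G^{(N)}_\gamma(\lambda_N,\bm\lambda')=\tau^-_{\lambda_N}(\bm\lambda')G^{(N-1)}_\gamma(\bm\lambda')$ and $G^{(N)}_\gamma(\bm\lambda)=\tau^+_{\lambda_N}(\bm\lambda')G^{(N-1)}_\gamma(\bm\lambda')$ via \rfeqn{GLambda}, and invokes the single relation $\e^{\ii\lambda_N L}\tau^-_{\lambda_N}(\bm\lambda')=\tau^+_{\lambda_N}(\bm\lambda')$ (\rfeqn{BAEvariant} with $j=N$), which is exactly the $j=N$ Bethe ansatz equation you isolate. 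The only cosmetic difference is that you carry out the product manipulations explicitly instead of citing the pre-packaged identities, and your ``cancel $G^{(N-1)}_\gamma$'' step can be avoided entirely by just verifying the equality of the prefactors and multiplying back.
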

\begin{proof}
This follows from \rfeqn{GLambda} and \rfeqn{BAEvariant} with $j=N$, writing $\bm \lambda' = (\lambda_1,\ldots,\lambda_{N-1})$:
\begin{align*}
\e^{\ii \lambda_N L}  G^{(N)}_\gamma(\tilde t_- \bm \lambda) &=  \e^{\ii \lambda_N L}  G^{(N)}_\gamma(\lambda_N,\bm \lambda') && = \e^{\ii \lambda_N L} G^{(N-1)}_\gamma(\bm \lambda') \tau^-_{\lambda_N}(\bm \lambda') \\
&=G^{(N-1)}_\gamma(\bm \lambda') \tau^+_{\lambda_N}(\bm \lambda') && = G^{(N)}_\gamma(\bm \lambda). \qedhere \end{align*}
\end{proof}

\begin{prop} \label{Psiperiodic} \cite{EmsizOS,Gutkin1987}
Let $\gamma \in \bR$, $L \in \bR_{>0}$ and $\bm \lambda \in \bC^N$.
Assume $\bm \lambda$ satisfies the BAEs \rf{BAE}.
Then $\Psi_{\bm \lambda}$ and $\partial_j \Psi_{\bm \lambda}$, for $j=1,\ldots,N$, are $L$-periodic in each argument.
\end{prop}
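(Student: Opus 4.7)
The plan is to reduce everything to a single ``fiducial'' variable $x_N$ via the $S_N$-symmetry of $\Psi_{\bm\lambda}$ in position space, and then to match boundary values by comparing the closed-form expansion of $\Psi_{\bm\lambda}$ on the closure of the fundamental alcove with its image under the cyclic shift $t_-=s_{N-1}\cdots s_1$, using \rfl{Gprop}.

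First I would observe that because $\Psi_{\bm\lambda} \in \ca{CB}^\infty(\bR^N)^{S_N}$ (\rfd{wavefndefn}), the identity $\Psi_{\bm\lambda}(\bm x)=\Psi_{\bm\lambda}(w\bm x)$ and the chain rule give $(\partial_j\Psi_{\bm\lambda})(\bm x)=(\partial_{w^{-1}(j)}\Psi_{\bm\lambda})(w\bm x)$. Hence it suffices to prove periodicity of $\Psi_{\bm\lambda}$ and of $\partial_N\Psi_{\bm\lambda}$ in the single variable $x_N$; all other $L$-periodicities follow by selecting appropriate permutations (and, for $k\neq j$, by differentiating the periodicity of $\Psi_{\bm\lambda}$ in $x_k$).

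Next I would fix $\bm x'=(x_1,\ldots,x_{N-1})$ with $L/2>x_1>\cdots>x_{N-1}>-L/2$. The point $(\bm x',-L/2)$ lies in $\overline{\bR^N_+}$, so \rfp{wavefnexprs}, specialised through the substitution $w\mapsto w^{-1}$ in the symmetrizer sum, gives
\[
\Psi_{\bm\lambda}(\bm x',-L/2)=\tfrac{1}{N!}\sum_{w\in S_N} G^{(N)}_\gamma(w\bm\lambda)\,\e^{\ii\sum_{j=1}^{N-1}(w\bm\lambda)_j x_j}\,\e^{-\ii(w\bm\lambda)_N L/2}.
\]
Meanwhile $(\bm x',L/2)$ is \emph{not} in $\overline{\bR^N_+}$, but $t_-(\bm x',L/2)=(L/2,x_1,\ldots,x_{N-1})$ is; applying $S_N$-invariance $\Psi_{\bm\lambda}(\bm x',L/2)=\Psi_{\bm\lambda}(L/2,x_1,\ldots,x_{N-1})$ and the same formula yields, after the substitution $w\mapsto v t_-$ and using $(v t_-\bm\lambda)_1=(v\bm\lambda)_N$, $(v t_-\bm\lambda)_j=(v\bm\lambda)_{j-1}$ for $j\geq 2$,
\[
\Psi_{\bm\lambda}(\bm x',L/2)=\tfrac{1}{N!}\sum_{v\in S_N} G^{(N)}_\gamma(v t_-\bm\lambda)\,\e^{\ii\sum_{j=1}^{N-1}(v\bm\lambda)_j x_j}\,\e^{\ii(v\bm\lambda)_N L/2}.
\]
Term-by-term equality of the two sums reduces to the identity
\[
\e^{-\ii(v\bm\lambda)_N L}\,G^{(N)}_\gamma(v\bm\lambda)=G^{(N)}_\gamma(v t_-\bm\lambda),\qquad v\in S_N.
\]
Because the BAEs \rf{BAE} are manifestly $S_N$-invariant in $\bm\lambda$, the permuted vector $v\bm\lambda$ also satisfies them, and this is precisely \rfl{Gprop} applied to $v\bm\lambda$.

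The derivative statement is handled in exactly the same manner. Using $\partial_N\Psi_{\bm\lambda}(\bm x',L/2)=(\partial_1\Psi_{\bm\lambda})(L/2,x_1,\ldots,x_{N-1})$, differentiating the alcove formula and carrying out the same reindexing $w=v t_-$, I obtain two analogous sums whose term-by-term equality reduces to the identical condition above (the extra factor of $\ii(v\bm\lambda)_N$ appears on both sides and cancels). The main obstacle is purely bookkeeping: one must track carefully how the cyclic substitution $w\mapsto vt_-$ interacts with the indices of the coefficients $G^{(N)}_\gamma(w\bm\lambda)$ and the exponentials, and check that the reduction to \rfl{Gprop} is uniform in $v$ — which works precisely because the BAEs are symmetric functions of $\bm\lambda$.
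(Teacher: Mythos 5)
Your proposal is correct and follows essentially the same route as the paper's proof: reduce to periodicity in $x_N$ via $S_N$-invariance in position space, evaluate $\Psi_{\bm\lambda}$ at the two boundary points using the closed-alcove formula from \rfp{wavefnexprs}, reindex by the cyclic shift $t_-$, and close with \rfl{Gprop}. The only (cosmetic) difference is that you expand the symmetrizer as an explicit sum over $S_N$ and apply \rfl{Gprop} termwise to each $v\bm\lambda$ (invoking the $S_N$-invariance of the BAEs), whereas the paper keeps $\tilde{\ca S}^{(N)}$ as an operator, absorbs $\tilde t_-$ via $\tilde{\ca S}^{(N)}\tilde t_- = \tilde{\ca S}^{(N)}$, and applies \rfl{Gprop} once.
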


\begin{proof}
Because of the $S_N$-invariance of $\Psi_{\bm \lambda}$ it suffices to prove that
\begin{align*} 
\Psi_{\bm \lambda}(x_1,\ldots,x_{N-1},-L/2) &= \Psi_{\bm \lambda}(L/2,x_1,\ldots,x_{N-1}), \\
\partial_N \Psi_{\bm \lambda}(x_1,\ldots,x_N)|_{x_N = -L/2} &= \partial_N \Psi_{\bm \lambda}(x_N,x_1,\ldots,x_{N-1})|_{x_N = L/2}, 
\end{align*}
where $(x_1,\ldots,x_{N-1}) \in \overline{J^N_+}$, with $J=[-L/2,L/2]$.
Because of continuity of $\Psi_{\bm \lambda}$ we may relax this to $(x_1,\ldots,x_{N-1}) \in J^N_+$, i.e. $L/2 > x_1 > \ldots > x_{N-1} > -L/2$.
Note that $\bm \lambda \in \bR^N$ because of the BAEs.
We have
\begin{align*}
\Psi_{\bm \lambda}(L/2,x_1,\ldots,x_{N-1})  &= \tilde{\ca S}^{(N)} G^{(N)}_\gamma(\bm \lambda) \e^{\ii \lambda_1 L/2} \e^{\ii (\lambda_2 x_1 + \ldots + \lambda_N x_{N-1})} \displaybreak[2] \\
&= \tilde{\ca S}^{(N)} \tilde t_- G^{(N)}_\gamma(\bm \lambda) \e^{\ii \lambda_1 L/2} \e^{\ii (\lambda_2 x_1 + \ldots + \lambda_N x_{N-1})} \displaybreak[2] \\
&= \tilde{\ca S}^{(N)} G^{(N)}_\gamma(\tilde t_- \bm \lambda) \e^{\ii \lambda_{t_- (1)} L/2} \e^{\ii (\lambda_{t_- (2)} x_1 + \ldots + \lambda_{t_- (N)} x_{N-1})} \displaybreak[2] \\
&= \tilde{\ca S}^{(N)} G^{(N)}_\gamma(\tilde t_- \bm \lambda) \e^{\ii \lambda_N L/2} \e^{\ii (\lambda_1 x_1 + \ldots + \lambda_{N-1} x_{N-1})}.
\end{align*}
On the other hand
\[ \Psi_{\bm \lambda}(x_1,\ldots,x_{N-1},-L/2) = \tilde{\ca S}^{(N)} G^{(N)}_\gamma(\bm \lambda) \e^{-\ii \lambda_N L/2} \e^{\ii (\lambda_1 x_1 + \ldots + \lambda_{N-1} x_{N-1})}, \]
so that it is sufficient to prove
\[ \tilde{\ca S}^{(N)} G^{(N)}_\gamma(\tilde t_- \bm \lambda) \e^{\ii \lambda_N L/2} \e^{\ii (\lambda_1 x_1 + \ldots + \lambda_{N-1} x_{N-1})} = \tilde{\ca S}^{(N)} G^{(N)}_\gamma(\bm \lambda) \e^{-\ii \lambda_N L/2} \e^{\ii (\lambda_1 x_1 + \ldots + \lambda_{N-1} x_{N-1})}. \]
Similarly, for the condition on the derivative, we obtain that it is sufficient to prove
\begin{gather*} 
\tilde{\ca S}^{(N)} G^{(N)}_\gamma(\tilde t_- \bm \lambda) \lambda_N \e^{\ii \lambda_N L/2}  \e^{\ii (\lambda_1 x_1 + \ldots + \lambda_{N-1} x_{N-1})} = \qquad \\
\qquad \tilde{\ca S}^{(N)} G^{(N)}_\gamma(\bm \lambda) \lambda_N \e^{-\ii \lambda_N L/2} \e^{\ii (\lambda_1 x_1 + \ldots + \lambda_{N-1} x_{N-1})}. \end{gather*}
Applying \rfl{Gprop} completes the proof for both $\Psi_{\bm \lambda}$ and its derivative.
\end{proof}

We draw the reader's attention to the fact that if $\gamma \ne 0$ the pre-wavefunction $\psi_{\bm \lambda}$ ($\bm \lambda \in \bC^N$) cannot be made periodic by imposing a condition on $\bm \lambda$. We will illustrate this for the case $N=2$.

\begin{exm}[$N=2$]\label{prewavefnperiodicity}
From \rfex{prewavefnexample} we will show that imposing $L$-periodicity on $\psi_{\lambda_1,\lambda_2}$ in either argument leads to a contradiction. We have
\[ \psi_{\lambda_1,\lambda_2}(x_1,x_2) = \e^{\ii(\lambda_1 x_1 + \lambda_2 x_2)}+\frac{\ii \gamma}{\lambda_1-\lambda_2} \theta(x_2-x_1) \left(  \e^{\ii(\lambda_1 x_1 + \lambda_2 x_2)} -  \e^{\ii(\lambda_2 x_1 + \lambda_1 x_2)}\right) . \]
$L$-periodicity in the first argument, viz. $\psi_{\lambda_1,\lambda_2}(-L/2,x) = \psi_{\lambda_1,\lambda_2}(L/2,x)$ for $-L/2 < x< L/2$, translates as
\[ \frac{\lambda_1-\lambda_2+\ii \gamma}{\lambda_1-\lambda_2} \e^{-\ii \lambda_1 L/2} \e^{\ii \lambda_2 x} - \frac{\ii \gamma}{\lambda_1-\lambda_2} \e^{-\ii \lambda_2 L/2} \e^{\ii \lambda_1 x} = \e^{\ii \lambda_1 L/2} \e^{\ii \lambda_2 x} , \]
i.e. for all $x \in (-L/2,L/2)$
\[  \e^{\ii (\lambda_1-\lambda_2) x} = \frac{\lambda_1-\lambda_2+\ii \gamma}{\ii \gamma} \e^{-\ii (\lambda_1-\lambda_2) L/2} - \frac{\lambda_1-\lambda_2}{\ii \gamma} \e^{\ii (\lambda_1+\lambda_2) L/2}. \]
It follows that $\lambda_1=\lambda_2$, which leads to a contradiction as follows. 
By De l'H\^opital's rule we have 
\[ \psi_{\lambda,\lambda}(x_1,x_2) := \lim_{\lambda_2, \lambda_1 \to \lambda} \psi_{\lambda_1,\lambda_2}(x_1,x_2) = \e^{\ii \lambda(x_1+x_2)}\left(1+\gamma \theta(x_2-x_1) (x_2-x_1)\right). \]
Hence $\psi_{\lambda,\lambda}(-L/2,x) = \psi_{\lambda,\lambda}(L/2,x)$ for all $x \in (-L/2,L/2)$ implies that for all such $x$, $1+\gamma(x+\frac{L}{2})=\e^{\ii \lambda L}$. Substituting $x = \pm L/2$ leads to $\gamma L =0$, contradictory to assumptions. $L$-periodicity in the second argument can be ruled out by applying \rfl{Dunklrepprewavefn}. 
\end{exm}
\newpage

\chapter[Interplay between the QISM and the dAHA]{Interplay between the quantum inverse scattering method and the degenerate affine Hecke algebra} \label{chInterplay}

The purpose of this chapter is to highlight connections between the two discussed methods for solving the QNLS eigenvalue problem, some of which may be known to experts in the field, but which are not discussed in the literature. This interplay can be seen as something reminiscent of Schur-Weyl duality; the Yangian of $\f{gl}_2$, the algebraic object underlying the QYBE, is a deformation of the current algebra of $\f{gl}_2$ and its representation theory should be related to that of the degenerate affine Hecke algebra, which is a deformation of the group algebra of $S_N$.

\section{Equality of the wavefunctions and dimension of the solution spaces}

In chapters 2 and 3 we have reviewed the construction of the Bethe wavefunctions $\Psi_{\bm \lambda}$ using two different methods. To distinguish them, for now we refer to them as $\Psi^\n{QISM}_{\bm \lambda} = B_{\lambda_N} \ldots B_{\lambda_1} \Vac$ and $\Psi^\n{dAHA}_{\bm \lambda} = \ca S^{(N)} \psi_{\bm \lambda}$. 
In order to prove that $\Psi^\n{QISM}_{\bm \lambda}=\Psi^\n{dAHA}_{\bm \lambda}$, it would be helpful if the solution space of the set of equations they solve were one-dimensional.
Unfortunately, the solution space of the system \rfeqnser{QNLS1}{QNLS2} is not one-dimensional;
We illustrate this by the case $N=1$. Let $\lambda \in \bC$. 
The most general eigenfunction of $-\partial_x^2$ with eigenvalue $\lambda^2$ is given by 
\[ \Psi(x) = c_0 \left( \cos(\lambda x)+\frac{\ii \mu}{\lambda} \sin(\lambda x) \right)= \frac{c_0}{2} \left( \frac{\lambda+\mu}{\lambda} \e^{\ii \lambda x} + \frac{\lambda-\mu}{\lambda} \e^{-\ii \lambda x} \right), \]
with $c_0,\mu \in \bC$.
For $L$-periodicity we need the Bethe ansatz condition $\e^{\ii \lambda L}=1$, but no additional condition on $\mu$. Thus we have a 2-dimensional space parametrized by the constants $c_0$ and $\mu$.
If we choose $\mu = \pm \lambda$, $\Psi$ is also an eigenfunction of $-\ii \partial_x$ with eigenvalue $\mu$ and we recover $\Psi_\mu \propto \e^{\ii \mu}$.\\

However in \cite{EmsizOS} a useful result is obtained for the generalization of the eigenvalue problem to affine Weyl groups. In the case of the symmetric group, and using the present notation, we have
\begin{thm}\cite[Thm.~2.6]{EmsizOS} \label{QNLS1D}
Write $J=[-L/2,L/2]$ for some $L \in \bR_{>0}$.
Let $\gamma \in \bR$ and $\bm \lambda \in \bC^N$.
Then the vector space of functions $\Psi \in \ca{CB}^\infty(\bR^N)^{S_N}$ satisfying the derivative jump conditions \rfeqn{QNLS2} and 
\begin{equation}
\label{QNLS3}
P(-\ii \partial_{1,\gamma},\ldots,-\ii \partial_{N,\gamma})\Psi|_{J^N_\n{reg}} = P(\bm \lambda)\Psi|_{J^N_\n{reg}}, \qquad \n{for } P \in \bC[\bm X]^{S_N} 
\end{equation}
has dimension at most 1; the dimension equals 1 if and only if $\bm \lambda$ is a solution of the Bethe ansatz equations \rfeqn{BAE} and $\bm \lambda \in \bR^N_\n{reg}$.
\end{thm}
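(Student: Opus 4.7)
The strategy is to reduce the symmetric Dunkl-type eigenvalue problem to an ordinary constant-coefficient PDE system on the fundamental alcove, extract the resulting plane-wave decomposition, and then use $S_N$-invariance plus the jump conditions \rfeqn{QNLS2} to pin down the coefficients. The periodicity encoded by the box $J$ then forces the BAEs.

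First, by \rfl{Dunklreppoly} (\rfeqn{eqn351}), on each alcove $P(-\ii \partial_{1,\gamma},\ldots,-\ii \partial_{N,\gamma})$ coincides with the ordinary operator $P(-\ii \partial_1,\ldots,-\ii \partial_N)$ for any $P \in \bC[\bm X]^{S_N}$. Since the power sums $p_1,\ldots,p_N$ generate $\bC[\bm X]^{S_N}$ as an algebra, the condition \rfeqn{QNLS3} is equivalent to
\[ p_n(-\ii \partial_1,\ldots,-\ii \partial_N) \Psi|_{J^N_+} = p_n(\bm \lambda) \Psi|_{J^N_+}, \qquad n=1,\ldots,N. \]
A standard harmonic-analysis argument (or Newton's identities combined with Fourier analysis) then forces $\Psi|_{J^N_+}$ to be a finite linear combination of plane waves whose wavenumbers lie in the zero set of the ideal $\langle p_n(\bm \mu)-p_n(\bm \lambda) : n=1,\ldots,N \rangle$, which for $\bm \lambda \in \bC^N_\n{reg}$ is precisely the orbit $S_N\bm\lambda$. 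Hence $\Psi|_{J^N_+}=\sum_{w \in S_N} c_w \e^{\ii w\bm \lambda}|_{J^N_+}$ for some constants $c_w \in \bC$.

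Second, I would extend $\Psi$ from $J^N_+$ to $J^N_\n{reg}$ using $S_N$-invariance, and then impose the derivative jump condition \rfeqn{QNLS2} at each wall $V_{j\,j+1}$ of the alcove. A short calculation exactly analogous to the standard Bethe-ansatz derivation yields relations of the form
\[ c_{ws_j} \;=\; \frac{\lambda_{w(j)} - \lambda_{w(j+1)} - \ii \gamma}{\lambda_{w(j)} - \lambda_{w(j+1)}}\, c_w, \qquad w \in S_N, \; j=1,\ldots,N-1, \]
where the regularity of $\bm \lambda$ guarantees that no denominator vanishes. Iterating through reduced expressions (using \rfl{lengthlem1} to avoid Yang--Baxter consistency issues) determines every $c_w$ from $c_1$, so the solution space has dimension at most $1$, and the unique candidate is a scalar multiple of the function whose restriction to $J^N_+$ is given by \rfp{wavefnexprs} \rfeqn{eqn371}, namely $\Psi_{\bm \lambda}$.

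Third, for dimension exactly $1$ I must check that $\Psi_{\bm \lambda}$ actually lies in $\ca{CB}^\infty(\bR^N)^{S_N}$ and globally solves the stated system. Continuity on $\bR^N$ and smoothness on $\bR^N_\n{reg}$ follow from \rfls{propopcont}{propopcont2}; the eigenvalue equations on $J^N_\n{reg}$ follow from \rft{wavefnsymmpolys}. The extra compatibility between $J^N_+$ and its $L$-translates (implicit in the fact that $\Psi$ is a well-defined $S_N$-invariant function on $\bR^N$ whose eigenvalue equations in adjacent boxes $J^N, J^N+L\bm e_j$ paste together) is precisely the content of \rfp{Psiperiodic}, which yields the BAEs \rfeqn{BAE}. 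Finally, regularity $\bm \lambda \in \bC^N_\n{reg}$ is needed both for the plane-wave decomposition to be non-degenerate in step one and for step two's denominators; reality $\bm \lambda \in \bR^N$ is needed to prevent the plane-wave components from blowing up, which would contradict the required continuity/boundedness on all of $\bR^N$ encoded in membership of $\ca{CB}^\infty(\bR^N)^{S_N}$.

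\textbf{Main obstacle.} The principal technical hurdle is the plane-wave reduction in paragraph one: proving rigorously that every smooth solution on the open alcove of the polynomial symmetric-PDE system lies in the finite-dimensional span of $\{\e^{\ii w \bm\lambda}\}_{w \in S_N}$, and handling potential degeneracies when $\bm\lambda$ leaves $\bC^N_\n{reg}$ (where one would expect extra polynomial-in-$\bm x$ prefactors, changing the dimension count). Everything else is a careful but essentially combinatorial bookkeeping exercise.
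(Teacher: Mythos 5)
First, a point of comparison: the thesis does not prove \rft{QNLS1D} at all --- it is quoted from \cite[Thm.~2.6]{EmsizOS} and used as a black box (e.g.\ to deduce $\Psi^\n{QISM}_{\bm\lambda}=\Psi^\n{dAHA}_{\bm\lambda}$). So there is no internal proof to measure yours against; your sketch is the classical coordinate-Bethe-ansatz uniqueness argument, which is a defensible route in outline but leaves genuine gaps relative to the full statement.

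Concretely: (i) your step-two relation $c_{ws_j}=\frac{\lambda_{w(j)}-\lambda_{w(j+1)}-\ii\gamma}{\lambda_{w(j)}-\lambda_{w(j+1)}}c_w$ cannot be correct: applying it twice with the same $s_j$ multiplies $c_w$ by $\frac{(\lambda_{w(j)}-\lambda_{w(j+1)})^2+\gamma^2}{(\lambda_{w(j)}-\lambda_{w(j+1)})^2}\neq 1$, contradicting $s_j^2=1$, and it disagrees with \rfex{Bethewavefn2}, where $c_{s_1}/c_1=\frac{\lambda_1-\lambda_2+\ii\gamma}{\lambda_1-\lambda_2-\ii\gamma}$. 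The correct ratio has the form $\frac{\mu-\ii\gamma}{\mu+\ii\gamma}$, whose denominator vanishes when $\lambda_j-\lambda_k=\pm\ii\gamma$ --- a degeneration \emph{not} excluded by $\bm\lambda\in\bC^N_\n{reg}$, and one the thesis itself has to rule out separately elsewhere (\rfl{BAEvariantslem}); so ``regularity guarantees no denominator vanishes'' is false for the corrected recursion. (ii) Your justification of $\bm\lambda\in\bR^N$ fails: membership in $\ca{CB}^\infty(\bR^N)^{S_N}$ carries no boundedness or decay, and $\e^{\ii\inner{\bm\lambda}{\cdot}}$ is continuous on $\bR^N$ and smooth for every complex $\bm\lambda$, so nothing ``blows up'' in a way that violates the stated hypotheses; reality has to be extracted from the periodicity/self-adjointness structure, which is where the cited proof does real work. (iii) The case $\bm\lambda\notin\bC^N_\n{reg}$, which you defer as an ``obstacle'', is not optional: the theorem asserts dimension at most $1$ for \emph{all} $\bm\lambda\in\bC^N$ and dimension $0$ off $\bR^N_\n{reg}$, and in the degenerate case the solution space of the constant-coefficient system on the alcove contains exponential-polynomial solutions, so both your plane-wave reduction and the coefficient recursion break down; excluding these is precisely the Pauli-principle phenomenon of \rfr{Pauli} and is arguably the hardest part of the statement, not bookkeeping. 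Finally, as transcribed the hypotheses constrain $\Psi$ only on $J^N$, so the ``pasting between adjacent boxes'' you invoke to produce the BAEs is not literally among the listed conditions and must be made honest (in \cite{EmsizOS} it is built into the affine Weyl group and the affine Dunkl operators).
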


It is important to impose \rfeqn{QNLS3}, which is a stronger condition than \rfeqn{QNLS1}, 
and can be seen as defining a particular self-adjoint extension of $-\Delta$.
Fortunately, both the QISM and the dAHA approach take this into account; both yield that $\Psi_{\bm \lambda}$ satisfies \rf{QNLS3}, so we obtain from \rft{QNLS1D} that 
$\Psi^\n{QISM}_{\bm \lambda}$ and $\Psi^\n{dAHA}_{\bm \lambda}$ are proportional. \\

Having discussed the dimensionality of the solution space of the QNLS eigenvalue problem, we can use this to demonstrate that $\Psi^\n{QISM}_{\bm \lambda} = \Psi^\n{dAHA}_{\bm \lambda}$. 

\begin{thm} 
Let $\bm \lambda \in \bC^N_\n{reg}$.
Then $\Psi^\n{QISM}_{\bm \lambda} =\Psi^\n{dAHA}_{\bm \lambda}$.
\end{thm}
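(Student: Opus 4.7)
The plan is to verify the identity by comparing both sides on the fundamental alcove $\bR^N_+$ and then invoking $S_N$-invariance to extend the equality to all of $\bR^N$.

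For the dAHA side, \rfp{wavefnexprs} already furnishes the explicit formula
\[
\Psi^\n{dAHA}_{\bm\lambda}|_{\bR^N_+} = \tilde{\ca S}^{(N)} G^{(N)}_\gamma(\bm\lambda)\, \e^{\ii \bm\lambda},
\]
which, after relabelling the summation index $w \leftrightarrow w^{-1}$, coincides with the classical Bethe wavefunction formula \rfeqn{Bethewavefn} of the Introduction.

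For the QISM side, I would establish the same formula on $\bR^N_+$ by induction on $N$. The base case is trivial since $\Psi^\n{QISM}_\emptyset = \Vac = 1$. For the inductive step, the plan is to use the recursion $\Psi^\n{QISM}_{\bm\lambda,\mu} = B_\mu \Psi^\n{QISM}_{\bm\lambda}$ together with the explicit integral expression for $B_\mu$ from \rfeqn{Bintoperators} (or its alcove-sorted reformulation in \rft{ABCDaltformulae}): substitute the inductive expression for $\Psi^\n{QISM}_{\bm\lambda}|_{\bR^N_+}$, rearrange each integration variable into its proper position in the alcove, and recognise the resulting combinatorial sum as $\tilde{\ca S}^{(N+1)} G^{(N+1)}_\gamma(\bm\lambda,\mu)\, \e^{\ii(\bm\lambda,\mu)}$ restricted to $\bR^{N+1}_+$. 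The key combinatorial identity underpinning the reduction is $G^{(N+1)}_\gamma(\bm\lambda,\mu) = \tau^+_\mu(\bm\lambda) G^{(N)}_\gamma(\bm\lambda)$, cf. \rfeqn{GLambda}.

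Both $\Psi^\n{QISM}_{\bm\lambda}$ and $\Psi^\n{dAHA}_{\bm\lambda}$ are $S_N$-invariant continuous functions on $\bR^N$: the first because $[B_\lambda,B_\mu]=0$ and by construction in $\ca H_N$, the second because of the symmetrizer prefactor $\ca S^{(N)}$ in its definition. Since they coincide on $\bR^N_+$ and $\bR^N_\n{reg} = \bigcup_{w \in S_N} w\bR^N_+$, invariance forces them to coincide on $\bR^N_\n{reg}$, and continuity then extends the equality to all of $\bR^N$.

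The chief obstacle is the inductive calculation on the QISM side: tracking the combinatorial coefficients produced by iterated application of $B_\mu$ is classical (going back to Lieb--Liniger and systematised by the Faddeev school) but notationally heavy, and assembling the cross-terms into the $\tilde{\ca S}^{(N+1)}$-symmetrization requires care. A more transparent derivation bypassing this bookkeeping entirely will emerge in Chapter \ref{ch5} via an intertwining relation of the form $B_\lambda\, \ca S^{(N)} = \ca S^{(N+1)}\, b^\pm_\lambda$, which is one of the central original contributions of the thesis.
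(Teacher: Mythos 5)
Your route is genuinely different from the one the thesis takes, and the difference matters. The thesis does \emph{not} compute either wavefunction explicitly: it invokes the uniqueness result \rft{QNLS1D} (the space of $S_N$-invariant solutions of the derivative jump conditions together with the full system \rfeqn{QNLS3} is at most one-dimensional), so the two functions are automatically proportional, and it then only has to compare them at the single point $\bm x = \bm 0$, where both evaluate to $1$ --- for $\Psi^\n{QISM}_{\bm\lambda}$ by a one-line induction (all $\hat E_{\lambda;\bm i}$ with $n>0$ vanish as $\bm x \to \bm 0$), and for $\Psi^\n{dAHA}_{\bm\lambda}$ by letting \rfeqn{regrepsymmetrizerseqn} act on the constant function $1$. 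That choice of proof is made precisely to sidestep the computation you propose to carry out.

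This brings me to the gap. Your dAHA half is fine: \rfp{wavefnexprs} hands you $\Psi^\n{dAHA}_{\bm\lambda}|_{\bR^N_+} = \tilde{\ca S}^{(N)} G^{(N)}_\gamma(\bm\lambda)\e^{\ii\bm\lambda}$, and the symmetry-plus-continuity extension from $\bR^N_+$ to $\bR^N$ is sound. But the QISM half is asserted rather than proved, and it is the only nontrivial content of the theorem. The inductive step requires you to apply the integral operator $B_\mu$ of \rfeqn{Bintoperators} to $\tilde{\ca S}^{(N)}G^{(N)}_\gamma(\bm\lambda)\e^{\ii\bm\lambda}$, actually perform the $y$-integrations of the exponentials $\e^{\ii(\lambda_{w j}-\mu)y}$ over the subintervals of the alcove, and verify that the boundary contributions from adjacent sectors cancel or recombine so that exactly the coefficients $\prod_{j<k}\frac{\lambda_{wj}-\lambda_{wk}-\ii\gamma}{\lambda_{wj}-\lambda_{wk}}$ of $\tilde{\ca S}^{(N+1)}G^{(N+1)}_\gamma(\bm\lambda,\mu)$ survive. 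The identity $G^{(N+1)}_\gamma(\bm\lambda,\mu)=\tau^+_\mu(\bm\lambda)G^{(N)}_\gamma(\bm\lambda)$ that you cite is an identity of rational prefactors only; it does not by itself produce the cancellations coming from the integration, so naming it as ``the key combinatorial identity underpinning the reduction'' understates what must be checked. The computation is classical and does close up (it is essentially Gutkin's derivation of the equivalence of the algebraic and coordinate Bethe ans\"atze), but as written your proposal defers exactly the step where the theorem lives; either supply that induction in full, or do what the thesis does and trade it for the uniqueness theorem plus a single point evaluation.
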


\begin{proof}
In view of \rft{QNLS1D} it is sufficient to establish that $\Psi^\n{QISM}_{\bm \lambda}(\bm 0) =\Psi^\n{dAHA}_{\bm \lambda}(\bm 0)$; we will in fact show that evaluating both expressions for the Bethe wavefunction at $\bm x = \bm 0$ yields 1.
For the QISM wavefunction, we use induction. The case $N=0$ is obvious.
To go from $N$ to $N+1$, let $\bm i \in \f I^{n+1}_{N+1}$ and note that $\lim_{\bm x \to \bm 0} \hat E_{\bm i} = 0$ unless $n=0$. Hence for $\bm \lambda \in \bC^{N+1}_\n{reg}$ write $\bm \lambda' = (\lambda_1,\ldots,\lambda_N)$ and note that
\begin{align*} 
\Psi^\n{QISM}_{\bm \lambda}(\underbrace{0,\ldots,0}_{N+1}) &= \lim_{x_1,\ldots,x_{N+1} \to 0} \left(B_{\lambda_{N+1}}\Psi^\n{QISM}_{\bm \lambda'}\right)(\bm x) \displaybreak[2] \\
&= \lim_{x_1,\ldots,x_{N+1} \to 0}\frac{1}{N+1} \sum_{j=1}^N \e^{\ii \lambda_{N+1} x_j}\Psi^\n{QISM}_{\bm \lambda'}(\bm x_{\hat \jmath}) \; = \; 1,
\end{align*}
by virtue of the induction hypothesis, and the fact that $B_\lambda F$ is continuous at the hyperplanes $V_{j\, k}$ for all $F \in \ca H$.
To prove the statement for $\Psi^\n{dAHA}_{\bm \lambda}$, simply let \rfeqn{regrepsymmetrizerseqn} act on $1 \in \ca C^\omega(\bC^N_{\n reg})$ and use that $w_\gamma(1)=1$ for all $w \in S_N$.
\end{proof}
We will give a new proof in Chapter \ref{ch5} that $\Psi^\n{QISM}_{\bm \lambda} = \Psi^\n{dAHA}_{\bm \lambda}$.

\begin{rem}
Having fixed $\Psi_{\bm \lambda}(\bm 0) = 1$, we have chosen a normalization of the wavefunction. This does not make $\Psi_{\bm \lambda}$ a probability amplitude, i.e. $\| \Psi_{\bm \lambda} \| \ne 1$. By obtaining formulae for the norm $\| \Psi_{\bm \lambda} \|$ one finds the corresponding probability amplitude $\frac{1}{\| \Psi_{\bm \lambda} \|}\Psi_{\bm \lambda}$.
\end{rem}

\section{Recursive constructions using the dAHA}\label{dAHArecursions}

As mentioned earlier, an advantage of the QISM is the appearance of a recursive formula \rfeqn{Psi} immediately from the definition of $\Psi_{\bm \lambda}$. In the dAHA method these recursive relations can be obtained from the embedding of $S_N$ into $S_{N+1}$ in \rfl{symmgrouprecursion}. We will do this for the pre-wavefunction $\psi_{\bm \lambda}$ as well as the Bethe wavefunction $\Psi_{\bm \lambda}$. The regular representation of the dAHA in momentum space plays a key role here. \\

For $\lambda \in \bC$, denote by $\hat e^-_\lambda$ the element of $\End(\ca F(\bR^N),\ca F(\bR^{N+1}))$ determined by
\[ \left( \hat e^-_\lambda f \right)(x_1,\ldots,x_{N+1}) = \e^{\ii \lambda x_{N+1}} f(x_1,\ldots,x_N), \]
for $f \in \ca F(\bR^N)$ and $(x_1,\ldots,x_{N+1}) \in \bR^{N+1}$.

\begin{prop}[Recursion for the pre-wavefunction] \label{prewavfnrecursionregrep}
Let $\bm \lambda$ = $(\lambda_1,\ldots,\lambda_{N+1})$ $\in$ $\bC^{N+1}$ and $\bm x= (x_1,\ldots,x_{N+1}) \in \bR^{N+1}_\n{reg}$. 
Writing $\bm \lambda' = (\lambda_1,\ldots,\lambda_N) \in \bC^N$, we have
\[ \psi_{\bm \lambda} = \sum_{m=1}^{N+1} \tilde s_{m \, N+1, \gamma} s_{m \, N+1} \left( \prod_{k=1}^N \theta_{k \, N+1} \right) \hat e^-(\lambda_{N+1})  \psi_{\bm \lambda'}. \]
\end{prop}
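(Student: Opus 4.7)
The plan is to start from the closed form
\[ \psi_{\bm\lambda} \;=\; \sum_{w \in S_{N+1}} \chi_{w^{-1}\bR^{N+1}_+}\,\tilde w_\gamma^{-1}\tilde w\,\e^{\ii\bm\lambda} \]
supplied by \rfl{prewavefnregrep}, and to reorganize the sum using the bijection $w = \sigma\,s_{m\, N+1}$ with $\sigma \in S_N$ and $m \in \{1,\ldots,N+1\}$ furnished by \rfl{symmgrouprecursion}. For such a decomposition one has $\tilde w = \tilde\sigma\,\tilde s_{m\, N+1}$, and since $(\tilde s_{m\, N+1,\gamma})^2=1$ the deformed inverse factors as $\tilde w_\gamma^{-1} = \tilde s_{m\, N+1,\gamma}\,\tilde\sigma_\gamma^{-1}$.

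The first main step is to apply $\tilde w_\gamma^{-1}\tilde w$ to the plane wave $\e^{\ii\bm\lambda}$. Since any $\bm\lambda$-side operator commutes with any $\bm x$-side operator, and since the plane-wave identities $\tilde u\,\e^{\ii\bm\lambda} = u^{-1}\,\e^{\ii\bm\lambda}$ and $\tilde u_\gamma^{-1}\,\e^{\ii\bm\lambda} = u_\gamma\,\e^{\ii\bm\lambda}$ hold (the latter being the identity already invoked in the proof of \rfl{prewavefnregrep}), one finds
\[ \tilde w_\gamma^{-1}\tilde w\,\e^{\ii\bm\lambda} \;=\; \tilde s_{m\, N+1,\gamma}\,s_{m\, N+1}\,\sigma^{-1}\sigma_\gamma\,\e^{\ii\bm\lambda}. \]
Because $\sigma \in S_N$ acts trivially on $x_{N+1}$, one separates $\sigma^{-1}\sigma_\gamma\,\e^{\ii\bm\lambda}(\bm x) = \e^{\ii\lambda_{N+1}x_{N+1}}\,(\sigma^{-1}\sigma_\gamma\,\e^{\ii\bm\lambda'})(x_1,\ldots,x_N)$. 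Setting $\bm y := s_{m\, N+1}\bm x$, the characteristic function decouples in parallel: the condition $\bm x \in (\sigma s_{m\, N+1})^{-1}\bR^{N+1}_+$ amounts to $x_m$ being the minimum of $\bm x$ together with $(y_1,\ldots,y_N) \in \sigma^{-1}\bR^N_+$, whence
\[ \chi_{(\sigma s_{m\, N+1})^{-1}\bR^{N+1}_+}(\bm x) \;=\; \Bigl(\prod_{k=1}^N\theta_{k\, N+1}\Bigr)(\bm y)\cdot\chi_{\sigma^{-1}\bR^N_+}(y_1,\ldots,y_N). \]

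For fixed $m$, summing over $\sigma \in S_N$ the inner sum collapses, by the very definition of $P_\gamma^{(N)}$, to $\psi_{\bm\lambda'}(y_1,\ldots,y_N)$:
\[ \sum_{\sigma \in S_N}\chi_{\sigma^{-1}\bR^N_+}(y_1,\ldots,y_N)\,(\sigma^{-1}\sigma_\gamma\,\e^{\ii\bm\lambda'})(y_1,\ldots,y_N) \;=\; (P_\gamma^{(N)}\,\e^{\ii\bm\lambda'})(y_1,\ldots,y_N) \;=\; \psi_{\bm\lambda'}(y_1,\ldots,y_N). \]
Recombining via $\e^{\ii\lambda_{N+1}x_m}\psi_{\bm\lambda'}(y_1,\ldots,y_N) = \hat e^-(\lambda_{N+1})\psi_{\bm\lambda'}(\bm y)$, and using that the indicator of the event that $x_m$ is the minimum of $\bm x$ coincides with $(\prod_{k=1}^N\theta_{k\, N+1})(\bm y)$, the $m$-th summand equals $\tilde s_{m\, N+1,\gamma}\,s_{m\, N+1}\,(\prod_{k=1}^N \theta_{k\, N+1})\,\hat e^-(\lambda_{N+1})\,\psi_{\bm\lambda'}$; summing over $m$ yields the claim.

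The main point requiring care is the plane-wave identity $\tilde u_\gamma^{-1}\,\e^{\ii\bm\lambda} = u_\gamma\,\e^{\ii\bm\lambda}$ for a general $u \in S_N$ rather than only for simple transpositions. Since both $\rho^\n{reg}_\gamma$ and $\rho^\n{int}_\gamma$ are algebra homomorphisms, this reduces immediately to the simple-transposition case by expanding $u$ as a product of simple transpositions and iterating; the simple case itself is the elementary direct computation underlying the proof of \rfl{prewavefnregrep}.
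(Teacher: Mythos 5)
Your proof is correct and follows essentially the same route as the paper's: both start from the closed form in \rfl{prewavefnregrep}, decompose $S_{N+1}$ via \rfl{symmgrouprecursion} as $w=\sigma s_{m\,N+1}$, factor the alcove characteristic function into $\prod_{k=1}^N\theta_{k\,N+1}$ times the $S_N$-alcove indicator, peel off the factor $\e^{\ii\lambda_{N+1}x_{N+1}}$, and recognise the remaining sum over $S_N$ as $\psi_{\bm\lambda'}$. The only cosmetic difference is that you convert the momentum-space factor $(\tilde\sigma_\gamma)^{-1}\tilde\sigma$ into the position-space form $\sigma^{-1}\sigma_\gamma$ and identify the inner sum directly with $P^{(N)}_\gamma\e^{\ii\bm\lambda'}$, whereas the paper keeps the momentum-space form and invokes \rfl{prewavefnregrep} once more at the end --- these are equivalent by that very lemma.
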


\begin{proof}
Applying \rfl{symmgrouprecursion} and \rfl{prewavefnregrep} we obtain
\begin{align*}
\psi_{\bm \lambda} &=\sum_{w \in S_{N+1}}  \chi_{w^{-1} \bR^{N+1}_+} \tilde w_\gamma^{-1} \tilde w \e^{\ii \bm \lambda} 
\; = \; \sum_{m=1}^{N+1} \sum_{w' \in S_N} \! \chi_{s_{m \, N+1} (w')^{-1} \bR^{N+1}_+} \tilde s_{m \, N+1, \gamma} (\tilde w'_\gamma)^{-1} \tilde w' \tilde s_{m \, N+1} \e^{\ii \bm \lambda} \displaybreak[2] \\
&= \sum_{m=1}^{N+1} \tilde s_{m \, N+1, \gamma} \sum_{w' \in S_N} \! \chi_{s_{m \, N+1} (w')^{-1} \bR^{N+1}_+}  (\tilde w'_\gamma)^{-1} \tilde w' s_{m \, N+1} \e^{\ii \bm \lambda} \displaybreak[2] \\
&= \sum_{m=1}^{N+1} \tilde s_{m \, N+1, \gamma} s_{m \, N+1}  \sum_{w' \in S_N} \! \chi_{(w')^{-1} \bR^{N+1}_+}  (\tilde w'_\gamma)^{-1} \tilde w' \e^{\ii \bm \lambda}.
\end{align*}
Write $\bm x' = (x_1,\ldots,x_N) \in \bR^N_\n{reg}$.
Now, $\chi_{(w')^{-1} \bR_+^{N+1}}(\bm x)=1$ precisely if $x_{w'(1)}>\ldots>x_{w'(N+1)}$, but $w'(N+1)=N+1$, so the condition $\chi_{(w')^{-1} \bR_+^{N+1}}(\bm x)=1$ is equivalent to $\chi_{(w')^{-1} \bR_+^N}(\bm x')=1$ and $x_{w'(k)}>x_{N+1}$ for all $k=1,\ldots,N$, i.e. $x_k>x_{N+1}$ for all $k=1,\ldots,N$.
This yields
\[ \chi_{(w')^{-1} \bR_+^{N+1}}(\bm x) = \left( \prod_{k=1}^N \theta(x_k-x_{N+1}) \right) \chi_{(w')^{-1} \bR_+^N}(\bm x') \]
and hence
\[ \psi_{\bm \lambda}(\bm x) = \sum_{m=1}^{N+1} \tilde s_{m \, N+1, \gamma} s_{m \, N+1} \left( \prod_{k=1}^N \theta(x_k-x_{N+1}) \right) \sum_{{w'} \in S_N} \chi_{{w'} \bR^N_+}(\bm x') (\tilde w'_\gamma)^{-1} \tilde w' \e^{\ii \inner{\bm \lambda}{\bm x}}. \]
Finally, using $\e^{\ii \inner{\bm \lambda}{\bm x}} = \e^{\ii \lambda_{N+1} x_{N+1}} \e^{\ii \inner{\bm \lambda'}{\bm x'}}$, we have
\begin{align*}
\psi_{\bm \lambda}(\bm x) &= \sum_{m=1}^{N+1} \tilde s_{m \, N+1, \gamma} s_{m \, N+1} \left( \prod_{k=1}^N \theta(x_k-x_{N+1}) \right) \e^{\ii  \lambda_{N+1} x_{N+1}} \cdot \\
& \hspace{50mm} \sum_{{w'} \in S_N}  \chi_{(w')^{-1} \bR^N_+}(\bm x') (\tilde w'_\gamma)^{-1} \tilde w' \e^{\ii \inner{\bm \lambda'}{\bm x}},
\end{align*}
where we recognize the summation over $w'$ as $\psi_{\bm \lambda'}(\bm x')$, as per \rfl{prewavefnregrep}.
\end{proof}

\begin{prop}[Recursion for the Bethe wavefunction] \label{Psiinduction}
Let $\bm \lambda$ = $(\lambda_1,\ldots,\lambda_{N+1})$ $\in$ $\bC^{N+1}$ and $\bm x = (x_1,\ldots,x_{N+1}) \in \bR^{N+1}_+$. Write $\bm \lambda'=(\lambda_1,\ldots,\lambda_N)$.
We have
\[ \Psi_{\bm \lambda} = \frac{1}{N+1} \sum_{m=1}^{N+1} \tilde s_{m \, N+1} \tau^+_{\lambda_N}(\bm \lambda') \hat e^-( \lambda_{N+1}) \Psi_{\bm \lambda'} = \frac{1}{N+1} \sum_{m=1}^{N+1} \tilde s_{m \, N+1, \gamma} \hat e^-( \lambda_{N+1}) \Psi_{\bm \lambda'}.\]
\end{prop}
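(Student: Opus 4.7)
The plan is to work on the open fundamental alcove $\bR^{N+1}_+$, where \rfp{wavefnexprs} (cf.\ \rfeqn{eqn371}) gives the compact momentum-space formula
\[ \Psi_{\bm \lambda}|_{\bR^{N+1}_+} = \tilde{\ca S}^{(N+1)} G^{(N+1)}_\gamma(\bm \lambda) \e^{\ii \bm \lambda}. \]
Both asserted identities will then follow by inserting two complementary recursive splittings of the product $\tilde{\ca S}^{(N+1)} G^{(N+1)}_\gamma(\bm \lambda)$ and in each case recognizing the inner expression as $\Psi_{\bm \lambda'}|_{\bR^N_+}$ via \rfeqn{eqn371} applied at level $N$.

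For the first identity I would decompose $\tilde{\ca S}^{(N+1)}$ using the undeformed recursion $\tilde{\ca S}^{(N+1)} = \tfrac{1}{N+1}\sum_{m=1}^{N+1} \tilde s_{m \, N+1} \tilde{\ca S}^{(N)}$ from \rfl{symmetrizerrecursion}, and $G^{(N+1)}_\gamma(\bm \lambda)$ using the factorization $G^{(N+1)}_\gamma(\bm \lambda) = \tau^+_{\lambda_{N+1}}(\bm \lambda') G^{(N)}_\gamma(\bm \lambda')$ from \rfeqn{GLambda}. Since $\tau^+_{\lambda_{N+1}}(\bm \lambda')$ is symmetric in $\bm \lambda'$, it commutes with $\tilde{\ca S}^{(N)}$. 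Moreover $\tilde{\ca S}^{(N)}$ and $G^{(N)}_\gamma(\bm \lambda')$ act only on $\bm \lambda'$, so they commute with the multiplication operator $\e^{\ii \lambda_{N+1} x_{N+1}}$; splitting $\e^{\ii \bm \lambda} = \e^{\ii \lambda_{N+1} x_{N+1}} \e^{\ii \bm \lambda'}$ then lets me factor out $\hat e^-(\lambda_{N+1})$ and invoke \rfeqn{eqn371} at level $N$ to replace $\tilde{\ca S}^{(N)} G^{(N)}_\gamma(\bm \lambda') \e^{\ii \bm \lambda'}$ by $\Psi_{\bm \lambda'}|_{\bR^N_+}$.

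For the second identity I would instead use \rfp{regrepsymmetrizers} to consolidate $\tilde{\ca S}^{(N+1)} G^{(N+1)}_\gamma(\bm \lambda) = \tilde{\ca S}^{(N+1)}_\gamma$, and then apply the representation $\rho^\n{reg}_\gamma$ to the identity of \rfl{symmetrizerrecursion} to obtain the deformed symmetrizer recursion
\[ \tilde{\ca S}^{(N+1)}_\gamma = \tfrac{1}{N+1}\sum_{m=1}^{N+1} \tilde s_{m \, N+1, \gamma} \, \tilde{\ca S}^{(N)}_\gamma. \]
Re-expanding $\tilde{\ca S}^{(N)}_\gamma = \tilde{\ca S}^{(N)} G^{(N)}_\gamma(\bm \lambda')$ via \rfp{regrepsymmetrizers}, the same peel-off argument as above produces the second equality. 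The main obstacle throughout is careful bookkeeping of which variables each operator touches: the transpositions $\tilde s_{m \, N+1}$ and $\tilde s_{m \, N+1, \gamma}$ both act on $\lambda_{N+1}$, so they must remain outside the plane-wave factor that is being pulled through, and only the operators depending solely on $\bm \lambda'$ may be commuted past $\e^{\ii \lambda_{N+1} x_{N+1}}$. Once this bookkeeping is organised correctly, both identities reduce to direct substitution into \rfeqn{eqn371}.
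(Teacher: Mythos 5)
Your proof is correct, and for the second equality it is essentially the paper's own argument: both write $\Psi_{\bm \lambda}|_{\bR^{N+1}_+} = \tilde{\ca S}^{(N+1)}_\gamma \e^{\ii \bm \lambda}$, push the recursion of \rfl{symmetrizerrecursion} through $\rho^\n{reg}_\gamma$ (legitimate by \rfp{dAHAregrep}) to get $\tilde{\ca S}^{(N+1)}_\gamma = \frac{1}{N+1}\sum_{m} \tilde s_{m \, N+1,\gamma}\, \tilde{\ca S}^{(N)}_\gamma$, and then peel off $\hat e^-(\lambda_{N+1})$ using that $\tilde{\ca S}^{(N)}_\gamma$ touches only $\bm \lambda'$. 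Where you genuinely differ is the first equality: the paper deduces it \emph{from} the second via \rfl{regreplem2} together with $s_m \ldots s_N \ca S^{(N)} = s_{m \, N+1}\ca S^{(N)}$, whereas you derive it directly by combining the undeformed symmetrizer recursion with the factorization $G^{(N+1)}_\gamma(\bm \lambda) = \tau^+_{\lambda_{N+1}}(\bm \lambda')\, G^{(N)}_\gamma(\bm \lambda')$ from \rfeqn{GLambda} and the $S_N$-symmetry of $\tau^+_{\lambda_{N+1}}(\bm \lambda')$. Your route is more economical for that equality, since it avoids invoking the inductive \rfl{regreplem2} from the appendix; the paper's route has the compensating merit of exhibiting the two expressions as instances of a single operator identity rather than as two formulas that separately evaluate to $\Psi_{\bm \lambda}$. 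One small point in your favour: your computation produces $\tau^+_{\lambda_{N+1}}(\bm \lambda')$, which is the correct factor --- the $\tau^+_{\lambda_N}(\bm \lambda')$ appearing in the statement is a typo, as comparison with \rfeqn{GLambda} and \rfl{regreplem2} confirms. Your bookkeeping caveat (only operators depending solely on $\bm \lambda'$ may be commuted past $\e^{\ii \lambda_{N+1}x_{N+1}}$, while $\tilde s_{m\, N+1}$ and $\tilde s_{m\, N+1,\gamma}$ must stay outside) is exactly the point that needs care, and you have handled it correctly.
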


\begin{proof}
Using \rfp{wavefnregrep}, one of the identities from \rfl{symmetrizerrecursion} and \rfp{dAHAregrep}, we obtain 
\begin{align*} \Psi_{\bm \lambda} &= \tilde{\ca S}^{(N+1)}_\gamma \e^{\ii \bm \lambda} \; = \; \frac{1}{N+1} \sum_{m=1}^{N+1} \left( \tilde s_{m \, N+1} \right)_\gamma \tilde{\ca S}^{(N)}_\gamma \hat e^-( \lambda_{N+1}) \e^{\ii \bm \lambda'} \displaybreak[2] \\
&= \frac{1}{N+1} \sum_{m=1}^{N+1} \tilde s_{m \, N+1, \gamma} \hat e^-( \lambda_{N+1}) \tilde{\ca S}^{(N)}_\gamma \e^{\ii \bm \lambda'} 
\; = \; \frac{1}{N+1} \sum_{m=1}^{N+1} \tilde s_{m \, N+1, \gamma} \hat e^-( \lambda_{N+1}) \Psi_{\bm \lambda'}. \end{align*}
establishing the first expression for $\Psi_{\bm \lambda}$. The second expression is equivalent to the first by virtue of \rfl{regreplem2} and the identity $s_m \ldots s_N \ca S^{(N)} = s_{m \, N+1} \ca S^{(N)}$ in $S_{N+1}$.
\end{proof}

\section{The Yang-Baxter algebra and the regular representation of the dAHA} 
\sectionmark{The Yang-Baxter algebra and the regular repr. of the dAHA}
\label{YBalgebradAHAsec}

In this section we note that the regular representation of the dAHA occurs in a natural way in the Yang-Baxter commutation relations as presented in \rfc{Tcommrelcor}.
This does not appear to have been documented in the literature on the subject. However, it is another indicator of the close relationship between the dAHA and the QISM.

\begin{lem} \label{YBalgebradAHA}
Let $\lambda_j, \lambda_{j+1} \in \bC$.
Then
\[ [\ca T^{k_1 \, l_1}_{\lambda_j},\ca T^{k_2 \, l_2}_{\lambda_{j+1}}] = \frac{-\ii \gamma}{\lambda_j-\lambda_{j+1}} \left( \ca T^{k_2 \, l_1}_{\lambda_j}\ca T^{k_1 \, l_2}_{\lambda_{j+1}} - \ca T^{k_2 \, l_1}_{\lambda_{j+1}} \ca T^{k_1 \, l_2}_{\lambda_j}\right) = -\ii \gamma \left( \tilde \Delta_j \ca T^{k_2 \, l_1}_{\lambda_j} \ca T^{k_1 \, l_2}_{\lambda_{j+1}} \right). \]
In particular, \rfeqn{ABcommrel}, \rfeqn{DBcommrel} and \rfeqn{CBcommrel} can be written as
\begin{align} 
A_{\lambda_{j+1}} B_{\lambda_j} &= \left( \tilde s_{j,-\gamma} B_{\lambda_{j+1}} A_{\lambda_j} \right)
\label{ABcommrel2}, \\
D_{\lambda_{j+1}} B_{\lambda_j} &= \left( \tilde s_{j,\gamma} B_{\lambda_{j+1}} D_{\lambda_j} \right) \label{DBcommrel2}, \\
C_{\lambda_{j+1}}  B_{\lambda_j} &= \left( \ii \tilde \Delta_j D_{\lambda_{j+1}}A_{\lambda_j} \right) \label{CBcommrel2}.
\end{align}
\end{lem}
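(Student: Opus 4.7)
The overall plan is to observe that the first displayed equality is simply \rfc{Tcommrelcor} rewritten with the specialisation $(\lambda,\mu) \to (\lambda_j,\lambda_{j+1})$; the second equality is then a purely algebraic identification; and the three displayed relations \rf{ABcommrel2}, \rf{DBcommrel2}, \rf{CBcommrel2} are obtained by rearranging the corresponding entries of \rfc{Tcommrelcor} into the form $\tilde s_{j,\pm\gamma}$ or $\ii \tilde \Delta_j$ acting on a product of monodromy-matrix entries.

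First I would note that because the operators $\ca T^{k \, l}_{\lambda_j}$ and $\ca T^{k' \, l'}_{\lambda_{j+1}}$ depend on $\lambda_j$ and $\lambda_{j+1}$ as separate parameters, the divided difference operator $\tilde \Delta_j$ from \rfd{divdiffoperator} applies to any product of such operators in the obvious way, yielding
\[ \tilde \Delta_j \bigl( \ca T^{k_2 \, l_1}_{\lambda_j} \ca T^{k_1 \, l_2}_{\lambda_{j+1}} \bigr) = \frac{\ca T^{k_2 \, l_1}_{\lambda_j} \ca T^{k_1 \, l_2}_{\lambda_{j+1}} - \ca T^{k_2 \, l_1}_{\lambda_{j+1}} \ca T^{k_1 \, l_2}_{\lambda_j}}{\lambda_j - \lambda_{j+1}}. \]
Multiplying by $-\ii\gamma$ gives the equivalence of the two right-hand sides in the main displayed equation, so the main equation is just \rfeqn{Tcommrel} dressed up in divided-difference notation.

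Next I would derive \rf{ABcommrel2} by taking \rfeqn{ABcommrel} with $(\lambda,\mu) = (\lambda_{j+1},\lambda_j)$, solving for $A_{\lambda_{j+1}} B_{\lambda_j}$, and collecting coefficients:
\[ A_{\lambda_{j+1}} B_{\lambda_j} = \frac{\lambda_j - \lambda_{j+1} - \ii \gamma}{\lambda_j - \lambda_{j+1}} B_{\lambda_j} A_{\lambda_{j+1}} + \frac{\ii \gamma}{\lambda_j - \lambda_{j+1}} B_{\lambda_{j+1}} A_{\lambda_j}. \]
Recognising the right-hand side as $\bigl( \tilde s_j + \ii \gamma \tilde \Delta_j \bigr)\bigl(B_{\lambda_{j+1}} A_{\lambda_j}\bigr) = \tilde s_{j,-\gamma}\bigl(B_{\lambda_{j+1}} A_{\lambda_j}\bigr)$ via \rfeqn{deformedtransposition} (with $\gamma \to -\gamma$) then yields \rf{ABcommrel2}. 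An analogous manipulation starting from \rfeqn{BDcommrel} (equivalently \rfeqn{DBcommrel}) produces the identity $D_{\lambda_{j+1}} B_{\lambda_j} = \bigl( \tilde s_j - \ii \gamma \tilde \Delta_j \bigr)\bigl(B_{\lambda_{j+1}} D_{\lambda_j}\bigr) = \tilde s_{j,\gamma}\bigl(B_{\lambda_{j+1}} D_{\lambda_j}\bigr)$, giving \rf{DBcommrel2}. Finally, for \rf{CBcommrel2}, I would apply \rfeqn{CBcommrel} with $(\lambda,\mu) = (\lambda_{j+1},\lambda_j)$ and solve for $C_{\lambda_{j+1}} B_{\lambda_j}$; since the right-hand side of \rfeqn{CBcommrel} is antisymmetric in $\lambda \leftrightarrow \mu$, the commutator itself is, up to a factor of $\ii$, the divided difference of $D_{\lambda_{j+1}} A_{\lambda_j}$, yielding \rf{CBcommrel2}.

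The hard part is not any deep step but rather sign-bookkeeping: one must carefully track which of $\lambda_j,\lambda_{j+1}$ plays the role of $\lambda$ and which of $\mu$ in each of \rfeqnser{ABcommrel}{CBcommrel}, and in particular keep track of whether one recovers $\tilde s_{j,\gamma}$ or $\tilde s_{j,-\gamma}$. Once the normalisation is fixed, the manipulations are completely mechanical and the three displayed identities drop out as direct rewritings of the corresponding entries of \rfeqn{Tcommrel}.
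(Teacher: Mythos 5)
Your proof is correct and follows essentially the same route as the paper, which likewise obtains all of these identities directly from \rfeqn{Tcommrel} by specialising $(\lambda,\mu)\to(\lambda_j,\lambda_{j+1})$ and recognising the resulting coefficients as $\tilde s_{j,\pm\gamma} = \tilde s_j \mp \ii\gamma\tilde\Delta_j$ acting in momentum space. The paper's own proof is a one-line remark to this effect (citing the identities $\tilde s_{j,\gamma} = \tilde s_j - \ii\gamma\tilde\Delta_j$ and $\tilde s_j\tilde s_{j,\gamma} = \tilde s_{j,-\gamma}\tilde s_j$); your version simply spells out the mechanical sign-bookkeeping explicitly.
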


\begin{proof}
This follows immediately from \rfeqn{Tcommrel}. We have used $\tilde s_{j,\gamma} = \tilde s_j - \ii \gamma \tilde \Delta_j$ and $\tilde s_j \tilde s_{j,\gamma} = \tilde s_{j,-\gamma} \tilde s_j$ in the derivation of \rfeqnser{ABcommrel2}{DBcommrel2}.
\end{proof}

\boxedenv{
\begin{notn}[Delimiting the action of divided difference operators and deformed permutations] \label{notn7}
The parentheses in $\left( \tilde \Delta_j \ca T^{k_2 \, l_1}_{\lambda_j} \ca T^{k_1 \, l_2}_{\lambda_{j+1}} \right)$ delimit the action of the divided difference operator $\tilde \Delta_j$: it only acts on the $\lambda_j$ and $\lambda_{j+1}$ appearing inside the parentheses. The same holds for expressions such as $\left( \tilde s_{j,\gamma} B_{\lambda_{j+1}} D_{\lambda_j} \right)$ involving the deformed permutations.
We emphasize that typically the integral operators $\ca T^{k \, l}_\mu$ act on functions that do not depend on $\mu$, in which case the parentheses are unnecessary.
\end{notn}}

From \rfeqnser{ABcommrel2}{DBcommrel2} we can derive various expressions for the action of $A_\mu $ and $D_\mu $ on the wavefunction $\Psi_{\bm \lambda}$:

\begin{prop}
Let $\gamma \in \bR$, $L \in \bR_{>0}$ and $(\bm \lambda,\lambda_{N+1}) \in \bC^{N+1}$.
Denote by $w_0$ the longest element in $S_N$ (sending $j$ to $N+1-j$, for $j=1,\ldots,N$). %
\nc{rwl}{$w_0$}{Longest element in $S_N$}%
Then
\begin{align*} 
A_{\lambda_{N+1}} \Psi_{\bm \lambda} &= \tilde s_{N,-\gamma} \ldots \tilde s_{1,-\gamma} \tilde s_1 \ldots \tilde s_N \e^{-\ii \lambda_{N+1} L/2} \Psi_{\bm \lambda} \\
&= \left(1-\ii \gamma \tilde \Delta_{N \, N+1} \right) \ldots \left(1-\ii \gamma \tilde \Delta_{1 \, N+1} \right) \e^{-\ii \lambda_{N+1} L/2} \Psi_{\bm \lambda} \\
&= \tilde w_0 \left( 1-\ii \gamma \tilde \Delta_{1 \, N+1} \right) \ldots \left( 1-\ii \gamma \tilde \Delta_{N \, N+1} \right) \e^{-\ii \lambda_{N+1} L/2} \Psi_{\bm \lambda}, \displaybreak[2] \\
D_{\lambda_{N+1}} \Psi_{\bm \lambda} &= \tilde s_{N,\gamma} \ldots \tilde s_{1,\gamma} \tilde s_1 \ldots \tilde s_N  \e^{\ii \lambda_{N+1} L/2} \Psi_{\bm \lambda} \\
&= \left(1+ \ii \gamma \tilde \Delta_{N \, N+1} \right) \ldots \left(1+ \ii \gamma \tilde \Delta_{1 \, N+1} \right) \e^{\ii \lambda_{N+1} L/2} \Psi_{\bm \lambda} \\
&= \tilde w_0 \left( 1+\ii \gamma \tilde \Delta_{1 \, N+1} \right) \ldots \left( 1+\ii \gamma \tilde \Delta_{N \, N+1} \right) \e^{\ii \lambda_{N+1} L/2} \Psi_{\bm \lambda}.
\end{align*}
\end{prop}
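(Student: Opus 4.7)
The plan is to derive an expression for $A_{\lambda_{N+1}}\Psi_{\bm\lambda}$ via iterated use of the commutation relation \rfeqn{ABcommrel2}, and then to recognise the result in each of the three forms given, with the equivalences between them following from standard identities in the regular representation $\rho^\n{reg}_{-\gamma}$ of the dAHA. The analogous statement for $D_{\lambda_{N+1}}\Psi_{\bm\lambda}$ proceeds identically, substituting \rfeqn{DBcommrel2} and $D_{\lambda_1}\Vac = \e^{\ii\lambda_1 L/2}\Vac$ (cf.\ \rfp{ABCDproperties}(iii)) for the corresponding $A$-version at every step, so I focus on $A_{\lambda_{N+1}}\Psi_{\bm\lambda}$.

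First, I would write $\Psi_{\bm\lambda} = B_{\lambda_N}B_{\lambda_{N-1}}\cdots B_{\lambda_1}\Vac$ (legitimate since $[B_{\lambda},B_{\mu}]=0$), apply \rfeqn{ABcommrel2} with $j=N$ to rewrite $A_{\lambda_{N+1}}B_{\lambda_N}$ as $(\tilde s_{N,-\gamma}B_{\lambda_{N+1}}A_{\lambda_N})$, and observe that the remaining factor $B_{\lambda_{N-1}}\cdots B_{\lambda_1}\Vac$ carries no $(\lambda_N,\lambda_{N+1})$-dependence, so (per Notation~\ref{notn7}) the delimiters of $\tilde s_{N,-\gamma}$ may be expanded to encompass the entire right-hand side. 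Iterating with $j=N-1,N-2,\ldots,1$, each pass produces another $\tilde s_{j,-\gamma}$ on the outside---which commutes past all $B$-operators with index $>j$, none of which depend on $\lambda_j$---and lowers the $A$-subscript by one. After $N$ steps one obtains
\[
A_{\lambda_{N+1}}\Psi_{\bm\lambda} = \tilde s_{N,-\gamma}\tilde s_{N-1,-\gamma}\cdots\tilde s_{1,-\gamma}\bigl(B_{\lambda_{N+1}}B_{\lambda_N}\cdots B_{\lambda_2}\,A_{\lambda_1}\Vac\bigr),
\]
and then $A_{\lambda_1}\Vac = \e^{-\ii\lambda_1 L/2}\Vac$ together with $B_{\lambda_{N+1}}\cdots B_{\lambda_2}\Vac = \Psi_{\lambda_2,\ldots,\lambda_{N+1}}$ collapses the parenthesised factor to $\e^{-\ii\lambda_1 L/2}\Psi_{\lambda_2,\ldots,\lambda_{N+1}}$. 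The first form in the proposition then follows from the cyclic-shift identity
\[
\tilde s_1\tilde s_2\cdots\tilde s_N\bigl(\e^{-\ii\lambda_{N+1}L/2}\Psi_{\lambda_1,\ldots,\lambda_N}\bigr) = \e^{-\ii\lambda_1 L/2}\Psi_{\lambda_2,\ldots,\lambda_{N+1}},
\]
which is verified by applying each $\tilde s_k$ in turn (rightmost first) and invoking the $S_N$-invariance of $\Psi$ in its arguments.

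To pass from the first form to the second, I plan to show inductively that $\tilde s_{N,-\gamma}\cdots\tilde s_{1,-\gamma}\tilde s_1\cdots\tilde s_N = (1-\ii\gamma\tilde\Delta_{N,N+1})\cdots(1-\ii\gamma\tilde\Delta_{1,N+1})$. The innermost adjacent pair collapses via \rfl{divdiffprops}(ii) to $\tilde s_{1,-\gamma}\tilde s_1 = 1-\ii\gamma\tilde\Delta_{1,2}$; property (i) of the same lemma yields the intertwining $\tilde\Delta_{j,k}\tilde s_k = \tilde s_k\tilde\Delta_{j,k+1}$ (for $j<k$), so the factor $1-\ii\gamma\tilde\Delta_{1,k}$ can be pushed rightward through the next $\tilde s_k$, at which point the newly-adjacent pair $\tilde s_{k,-\gamma}\tilde s_k$ collapses in the same fashion and the process repeats; after $N$ iterations the second index of every $\tilde\Delta$ has been promoted to $N+1$, giving the desired product. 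For the third form I exploit the $S_N$-invariance of $\e^{-\ii\lambda_{N+1}L/2}\Psi_{\bm\lambda}$ in $(\lambda_1,\ldots,\lambda_N)$: since $w_0\in S_N$ fixes $N+1$, property (i) of \rfl{divdiffprops} gives $\tilde w_0(1-\ii\gamma\tilde\Delta_{j,N+1})\tilde w_0 = 1-\ii\gamma\tilde\Delta_{N+1-j,N+1}$, so telescoping $\tilde w_0$ through the product $\prod_{j=1}^{N}(1-\ii\gamma\tilde\Delta_{j,N+1})$ reverses the order of the factors, and the trailing $\tilde w_0$ acts trivially on the $S_N$-invariant argument and may be discarded.

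The main obstacle I anticipate is the bookkeeping required to shuttle the regular-representation operators $\tilde s_{j,\pm\gamma}$ and $\tilde\Delta_{j,k}$, which act on the momentum variables $\lambda_1,\ldots,\lambda_{N+1}$, past the Fock-space operators $A_{\lambda}$ and $B_{\lambda}$, which act on Fock space but carry $\lambda$ as a parameter; Notation~\ref{notn7} is designed precisely to make such manipulations unambiguous, but at each step one must carefully verify that the operators being commuted past have no dependence on the specific $\lambda$'s on which the regular-representation operator acts, lest the extraction through the delimiters be invalid.
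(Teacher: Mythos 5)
Your proposal is correct and follows essentially the same route as the paper's own proof: iterate the commutation relations \rf{ABcommrel2}/\rf{DBcommrel2} to push $A$ (resp.\ $D$) through the string of $B$-operators, evaluate on $\Vac$, relabel via the cyclic shift $\tilde s_1\cdots\tilde s_N$, and then obtain the second and third forms from the product identity $\tilde s_{N,\gamma}\cdots\tilde s_{1,\gamma}\tilde s_1\cdots\tilde s_N=(1+\ii\gamma\tilde\Delta_{N\,N+1})\cdots(1+\ii\gamma\tilde\Delta_{1\,N+1})$ (which is \rfl{regrep20}, which you re-derive rather than cite) and from conjugating each $\tilde\Delta_{j\,N+1}$ by $\tilde w_0$. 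The only cosmetic difference is that you work out the $A$-case while the paper works out the $D$-case, and you evaluate $A_{\lambda_1}\Vac$ before applying the cyclic shift rather than after.
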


\begin{proof}
We will derive the expressions for $D_{\lambda_{N+1}} \Psi_{\bm \lambda}$ from \rfeqn{DBcommrel2}.
We have
\begin{align*} 
D_{\lambda_{N+1}} \Psi_{\bm \lambda} &= D_{\lambda_{N+1}}B_{\lambda_N} \ldots B_{\lambda_1} \Vac \\
&= \tilde s_{N,\gamma} B_{\lambda_{N+1}} D_{\lambda_N} B_{\lambda_{N-1}} \ldots B_{\lambda_1} \Vac \; = \; \ldots \; = \tilde s_{N,\gamma} \ldots \tilde s_{1,\gamma} B_{\lambda_{N+1}} \ldots B_{\lambda_2} D_{\lambda_1} \Vac \\
&= \tilde s_{N,\gamma} \ldots \tilde s_{1,\gamma} \tilde s_1 \ldots \tilde s_N B_{\lambda_N} \ldots B_{\lambda_1} D_{\lambda_{N+1}} \Vac.
\end{align*}
We have moved the $D$ operator to the right as far as possible; now we use $D_\lambda \Vac = \e^{\ii \lambda L/2}\Vac$ and we obtain
\[ D_{\lambda_{N+1}} \Psi_{\bm \lambda} 
= \tilde s_{N,\gamma} \ldots \tilde s_{1,\gamma} \tilde s_1 \ldots \tilde s_N \e^{\ii \lambda_{N+1} L/2} B_{\lambda_N}  \ldots B_{\lambda_1}  \Vac , \]
leading to the first expression for $D_{\lambda_{N+1}} \Psi_{\bm \lambda}$.
\rfl{regrep20} provides us with the second expression.
From $\tilde \Delta_{j \, N+1} = \tilde w_0 \tilde \Delta_{N+1-j \, N+1} \tilde w_0$ and $w_0 \in S_N$ we obtain the third expression.
The expressions for $A_{\lambda_{N+1}} \Psi_{\bm \lambda}$ are found in the same way, starting from \rfeqn{ABcommrel2}.
\end{proof}

We can now easily recover more properties of $A_{\lambda_{N+1}}\Psi_{\bm \lambda}$ and $D_{\lambda_{N+1}} \Psi_{\bm \lambda}$.
\begin{itemize}
\item Since $\Psi_{\bm \lambda}$ is $S_N$-invariant in position space, so are $A_{\lambda_{N+1}}\Psi_{\bm \lambda}$ and $D_{\lambda_{N+1}} \Psi_{\bm \lambda}$ (any permutation $w$ commutes with the operators $\tilde s_{j,\gamma}$ acting in momentum space and with the factor $\e^{\ii \lambda_{N+1}L}$). Because $\set{\Psi_{\bm \lambda}}{\bm \lambda \n{ satisfies the BAEs \rf{BAE}}}$ form a complete set in $\ca H_N$, we obtain that $A_\mu $ and $D_\mu $ are operators on $\ca H_N$.
\item $A_{\lambda_{N+1}} \Psi_{\bm \lambda}$ and $D_{\lambda_{N+1}} \Psi_{\bm \lambda}$ can be obtained from each other by $\gamma \to -\gamma, L \to -L$.
\end{itemize}

In fact, using this formalism we can provide an alternative proof for \rfp{APsiDPsi}.
\begin{prop}
Let $\gamma \ne 0$, $L \in \bR_{>0}$ and $(\bm \lambda,\mu) \in \bC^{N+1}$.
Then
\begin{align*}
A_\mu  \Psi_{\bm \lambda} &= \tau^+_\mu(\bm \lambda) \e^{-\ii \mu L/2} \Psi_{\bm \lambda} +\sum_{j=1}^N \tau^+_{\lambda_j}(\bm \lambda_{\hat \jmath}) \frac{\ii \gamma}{\lambda_j-\mu} \e^{-\ii \lambda_j L/2} \Psi_{\bm \lambda_{\hat \jmath}, \mu}, \\
D_\mu  \Psi_{\bm \lambda} &= \tau^-_\mu(\bm \lambda) \e^{\ii \mu L/2} \Psi_{\bm \lambda} - \sum_{j=1}^N \tau^-_{\lambda_j}(\bm \lambda_{\hat \jmath}) \frac{\ii \gamma}{\lambda_j-\mu} \e^{\ii \lambda_j L/2} \Psi_{\bm \lambda_{\hat \jmath},\mu}. 
\end{align*}
\end{prop}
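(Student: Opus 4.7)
The plan is to proceed by induction on $N$ but, instead of using the two-term form of the commutation relations \rfeqn{ABcommrel}, \rfeqn{DBcommrel} as in the original proof of \rfp{APsiDPsi}, to exploit their compact one-term forms \rfeqnser{ABcommrel2}{DBcommrel2} supplied by the regular representation of the dAHA. Since the $A$-statement and the $D$-statement differ only by the substitution $\gamma \to -\gamma$, it suffices to treat the $D$-case. The base case $N = 0$ reduces to $D_\mu \Vac = \e^{\ii \mu L/2} \Vac$.

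For the inductive step I would set $\bm \lambda' = \bm \lambda_{\hat N}$, use $\Psi_{\bm \lambda} = B_{\lambda_N} \Psi_{\bm \lambda'}$ and apply \rfeqn{DBcommrel2} to obtain
\[
D_\mu \Psi_{\bm \lambda} \; = \; \tilde s_{N, \gamma}\bigl( B_\mu D_{\lambda_N} \Psi_{\bm \lambda'} \bigr),
\]
where $\tilde s_{N, \gamma} = \tilde s_N - \ii \gamma \tilde \Delta_N$ acts on the pair of spectral variables $(\lambda_N, \mu)$ (viewing $\mu$ as $\lambda_{N+1}$). Substituting the induction hypothesis for $D_{\lambda_N} \Psi_{\bm \lambda'}$ and letting $B_\mu$ append $\mu$ to each Bethe index set produces one ``leading'' contribution
\[
T_0 := \tau^-_{\lambda_N}(\bm \lambda') \,\e^{\ii \lambda_N L/2}\, \Psi_{\bm \lambda', \mu}
\]
together with ``off-diagonal'' contributions $T_j := \tau^-_{\lambda_j}(\bm \lambda'_{\hat \jmath}) \tfrac{\ii \gamma}{\lambda_j - \lambda_N} \,\e^{\ii \lambda_j L/2}\, \Psi_{\bm \lambda'_{\hat \jmath}, \lambda_N, \mu}$ for $j = 1, \ldots, N-1$. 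The key observation is that in each such term the Bethe wavefunction is symmetric in the pair $(\lambda_N, \mu)$, so $\tilde s_N$ acts only on the $\tau^-$-prefactor and the exponential factor. A short direct calculation then gives
\[
\tilde s_{N, \gamma} T_0 \; = \; \tau^-_\mu(\bm \lambda)\, \e^{\ii \mu L/2}\, \Psi_{\bm \lambda} \; - \; \tfrac{\ii \gamma}{\lambda_N - \mu}\, \tau^-_{\lambda_N}(\bm \lambda_{\hat N})\, \e^{\ii \lambda_N L/2}\, \Psi_{\bm \lambda_{\hat N}, \mu},
\]
which accounts for the leading term and for the $j = N$ summand of the claim, while each $\tilde s_{N, \gamma} T_j$ reproduces exactly the corresponding $j$-th summand ($1 \le j \le N-1$).

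The main obstacle is the combinatorial matching of the coefficients. The extra scalar factor picked up by $\tilde s_{N, \gamma}$ acting on $T_j$ is $\bigl(1 - \tfrac{\ii \gamma}{\lambda_j - \lambda_N}\bigr) = \tfrac{\lambda_N - \lambda_j + \ii \gamma}{\lambda_N - \lambda_j}$, which is precisely what is needed to upgrade $\tau^-_{\lambda_j}(\bm \lambda'_{\hat \jmath})$ to $\tau^-_{\lambda_j}(\bm \lambda_{\hat \jmath})$; an analogous recursion (using $\tfrac{\lambda_N - \mu + \ii \gamma}{\lambda_N - \mu}$) converts $\tau^-_\mu(\bm \lambda')$ into $\tau^-_\mu(\bm \lambda)$ in the analysis of $T_0$. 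Both identities are immediate from the definition \rfeqn{BAEv}; once they are invoked, the remaining verification is routine bookkeeping. The formula for $A_\mu \Psi_{\bm \lambda}$ follows verbatim by replacing $\gamma$ with $-\gamma$ and using \rfeqn{ABcommrel2} in place of \rfeqn{DBcommrel2}.
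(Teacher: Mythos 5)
Your argument is correct and is essentially the route the paper takes: both proofs are inductions driven by the one-term dAHA form of the $DB$-relation, and your coefficient computation $\bigl(1-\tfrac{\ii\gamma}{\lambda_j-\lambda_N}\bigr)\tfrac{\ii\gamma}{\lambda_j-\mu}$ is exactly the scalar identity at the heart of the paper's induction step for \rfeqn{eqn456}; the only organizational difference is that the paper runs the induction as an operator identity against the symmetrizer $\tilde{\ca S}^{(N)}$ in momentum space, while you run it directly on the Bethe vectors and let the $S_N$-invariance of $\Psi$ in the momenta do the same work. Two small imprecisions to fix in the write-up. First, your ``key observation'' fails for $T_0$: $\Psi_{\bm\lambda',\mu}$ is not symmetric under $\lambda_N\leftrightarrow\mu$ (it simply does not contain $\lambda_N$), and the swap sends it to $\Psi_{\bm\lambda',\lambda_N}=\Psi_{\bm\lambda}$ --- which is in fact how your (correct) leading term $\tau^-_\mu(\bm\lambda)\e^{\ii\mu L/2}\Psi_{\bm\lambda}$ arises; the observation as stated applies only to $T_j$ with $j\ge 1$. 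Second, the $A$-case is not obtained from the $D$-case by $\gamma\to-\gamma$ alone (the exponentials $\e^{\mp\ii\mu L/2}$ and the vacuum eigenvalues also flip, so one needs $L\to-L$ as well, as the paper notes), and since $\Psi_{\bm\lambda}$ itself depends on $\gamma$ the cleanest statement is simply that the induction for $A$ runs verbatim with \rfeqn{ABcommrel2}, i.e.\ with $\tilde s_{j,-\gamma}$, and $A_\mu\Vac=\e^{-\ii\mu L/2}\Vac$.
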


\begin{proof}
Write $\lambda_{N+1} = \mu$. It is sufficient to prove 
\begin{align}
\lefteqn{\tilde s_{N,\gamma} \ldots \tilde s_{1,\gamma} \tilde s_1 \ldots \tilde s_N \e^{\ii \lambda_{N+1} L/2} \tilde{\ca S}^{(N)} =} \nonumber \displaybreak[2] \\
&= \left( \e^{\ii \lambda_{N+1} L/2} \tau^-_{\lambda_{N+1}}(\bm \lambda)
- \sum_{j=1}^N \e^{\ii \lambda_j L/2} \tau^-_{\lambda_j}(\bm \lambda_{\hat \jmath}) \frac{\ii \gamma}{\lambda_j-\lambda_{N+1}} \tilde s_j \ldots \tilde s_N  \right) \tilde{\ca S}^{(N)}; \label{eqn456}
\end{align}
this would concern the formula for $D_\mu \Psi_{\bm \lambda}$; to obtain the equivalent statement for $A_\mu  \Psi_{\bm \lambda}$ apply $\gamma \to -\gamma$, $L \to -L$.
\rfeqn{eqn456} can be proven by induction; the case $N=0$ is trivial. To establish the induction step, we first observe that 
\begin{align*} 
\lefteqn{\tilde s_{N,\gamma} \frac{\ii \gamma}{\lambda_j-\lambda_N} \tilde s_j \ldots \tilde s_N \tilde{\ca S}^{(N)} =}\\
&= \left( \frac{\lambda_N-\lambda_{N+1}+\ii \gamma}{\lambda_N-\lambda_{N+1}} \tilde s_N - \frac{\ii \gamma}{\lambda_N-\lambda_{N+1}}  \right)\frac{\ii \gamma}{\lambda_j-\lambda_N}  
\tilde s_j \ldots \tilde s_N \tilde{\ca S}^{(N)} \\
&=  \frac{\lambda_N-\lambda_{N+1}+\ii \gamma}{\lambda_N-\lambda_{N+1}}  \frac{\ii \gamma}{\lambda_j-\lambda_{N+1}} \tilde s_j \ldots \tilde s_{N-2} \tilde s_N \tilde s_{N-1} \tilde s_N \tilde{\ca S}^{(N)} - \frac{\ii \gamma}{\lambda_N-\lambda_{N+1}} \frac{\ii \gamma}{\lambda_j-\lambda_N} \tilde s_j \ldots \tilde s_N \tilde{\ca S}^{(N)} \\
&=   \frac{\lambda_N-\lambda_{N+1}+\ii \gamma}{\lambda_N-\lambda_{N+1}}  \frac{\ii \gamma}{\lambda_j-\lambda_{N+1}} \tilde s_j \ldots \tilde s_N \tilde s_{N-1} \tilde{\ca S}^{(N)} - \frac{\ii \gamma}{\lambda_N-\lambda_{N+1}} \frac{\ii \gamma}{\lambda_j-\lambda_N}  \tilde s_j \ldots \tilde s_N \tilde{\ca S}^{(N)}\\
&= \left(  \frac{\lambda_N-\lambda_{N+1}+\ii \gamma}{\lambda_N-\lambda_{N+1}}  \frac{\ii \gamma}{\lambda_j-\lambda_{N+1}} - \frac{\ii \gamma}{\lambda_N-\lambda_{N+1}} \frac{\ii \gamma}{\lambda_j-\lambda_N} \right) \tilde s_j \ldots \tilde s_N \tilde{\ca S}^{(N)}\\
&= \frac{\lambda_j-\lambda_N-\ii \gamma}{\lambda_j-\lambda_N} \frac{\ii \gamma}{\lambda_j-\lambda_{N+1}} \tilde s_j \ldots \tilde s_N \tilde{\ca S}^{(N)}.
\end{align*}
Now write $\bm \lambda' = (\lambda_1,\ldots,\lambda_{N-1}) \in \bC^{N-1}$ and note that $\tilde{\ca S}^{(N-1)} \tilde{\ca S}^{(N)} = \tilde{\ca S}^{(N)}$ so that, by virtue of the induction hypothesis, we have 
\begin{align*}
\lefteqn{\tilde s_{N,\gamma} \ldots \tilde s_{1,\gamma}\tilde s_1 \ldots \tilde s_N \e^{\ii \lambda_{N+1} L} \tilde{\ca S}^{(N)} \; = \; \tilde s_{N,\gamma} \ldots \tilde s_{1,\gamma}\tilde s_1 \ldots \tilde s_{N-1} \e^{\ii \lambda_N L}\tilde s_N  \tilde{\ca S}^{(N)} \; =}\\
&= \tilde s_{N,\gamma}  \left( \e^{\ii \lambda_{N} L/2} \tau^-_{\lambda_N}(\bm \lambda')
- \sum_{j=1}^{N-1} \e^{\ii \lambda_j L/2} \tau^-_{\lambda_j}(\bm \lambda'_{\hat \jmath}) \frac{\ii \gamma}{\lambda_j-\lambda_{N}} \tilde s_j \ldots \tilde s_{N-1} \right) \tilde{\ca S}^{(N-1)} \tilde s_N \tilde{\ca S}^{(N)} \\
&= \left( \tilde s_{N,\gamma}  \e^{\ii \lambda_{N} L/2} \tau^-_{\lambda_N}(\bm \lambda')\tilde s_N
-  \sum_{j=1}^{N-1} \e^{\ii \lambda_j L/2} \tau^-_{\lambda_j}(\bm \lambda'_{\hat \jmath})\tilde s_{N,\gamma} \frac{\ii \gamma}{\lambda_j-\lambda_{N}} \tilde s_j \ldots \tilde s_N \right) \tilde{\ca S}^{(N)} \displaybreak[2] \\
&= \left(  \e^{\ii \lambda_{N+1} L/2} \tau^-_{\lambda_{N+1}}(\bm \lambda) - \e^{\ii \lambda_{N} L/2} \tau^-_{\lambda_N}(\bm \lambda')\frac{\ii \gamma}{\lambda_N-\lambda_{N+1}} \tilde s_N + \right. \\
& \left. \qquad -  \sum_{j=1}^{N-1} \e^{\ii \lambda_j L/2} \tau^-_{\lambda_j}(\bm \lambda'_{\hat \jmath})\frac{\lambda_j-\lambda_N-\ii \gamma}{\lambda_j-\lambda_N} \frac{\ii \gamma}{\lambda_j-\lambda_{N+1}} \tilde s_j \ldots \tilde s_N \right) \tilde{\ca S}^{(N)} \displaybreak[2] \\
&= \left( \e^{\ii \lambda_{N+1} L/2} \tau^-_{\lambda_{N+1}}(\bm \lambda) -  \sum_{j=1}^N \e^{\ii \lambda_j L/2} \tau^-_{\lambda_j}(\bm \lambda_{\hat \jmath}) \frac{\ii \gamma}{\lambda_j-\lambda_{N+1}}  \tilde s_j \ldots \tilde s_N \right) \tilde{\ca S}^{(N)},
\end{align*}
which establishes the induction step. 
\end{proof}

\begin{rem}
As the statements made in this section are a direct consequence of the QYBE \rf{ybeRT}, it follows that the dAHA plays a role, through its regular representation, in the theory of any physical system with an $\ca R$-matrix given by \rfd{Rmatrixdefn}.
\end{rem}

\section{The $Q$-operator}

For the QNLS model we can follow a method from statistical mechanics, initially formulated by Baxter for the eight-vertex model \cite{Baxter1972,Baxter1989}, which has been deployed for other solvable lattice models and spin chains (see, e.g. \cite{KricheverLWZ, BazhanovLZ,KuznetsovS1998,Sklyanin2000,PasquierGaudin,Korff2006}). In this method one constructs an operator $Q_\lambda$, known as ``Baxter's $Q$-operator'', which is in involution with the transfer matrix $T_\mu$ and typically used to find the spectrum of $T_\mu$ as a substitute for the Bethe ansatz. 
Its application to the QNLS model is particularly interesting since it affords a connection to the Dunkl-type operators from Chapter \ref{chdAHA}.

\begin{thm}
Let $\gamma \in \bR$ and $L \in \bR_{>0}$.
There exists a family of operators $\set{Q_\mu}{\mu \in \bC}$, densely defined on $\ca H_N = \ca H_N([-L/2,L/2])$, satisfying the following conditions.
\begin{enumerate}
\item For all $\lambda,\mu \in \bC$, 
\begin{equation} [T_\lambda,Q_\mu]=[Q_\lambda,Q_\mu]=0. \label{TQcomm} \end{equation}
\item The \emph{$TQ$-equation} holds for all $\mu \in \bC$:
\begin{equation} \label{TQeqn}
T_\mu Q_\mu  =  \e^{-\ii \mu L/2}Q_{\mu+\ii \gamma} + \e^{\ii \mu L/2} Q_{\mu-\ii \gamma}.
\end{equation}
\item \label{QonPsi} If $\bm \lambda \in \bC^N_\n{reg}$ satisfies the BAEs \rf{BAE} then $Q_{\lambda_j}\Psi_{\bm \lambda} = 0$ for all $j=1,\ldots,N$.
\end{enumerate}
\end{thm}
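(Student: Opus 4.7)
I would construct $Q_\mu$ by specifying its spectrum in the complete basis of Bethe wavefunctions: set
\[ Q_\mu \Psi_{\bm \lambda} := q_\mu(\bm \lambda) \Psi_{\bm \lambda}, \qquad q_\mu(\bm \lambda) := \prod_{j=1}^N (\lambda_j - \mu), \]
for every $\bm \lambda \in \bC^N_\n{reg}$ satisfying the BAEs \rf{BAE}, and extend by linearity to their linear span, which is dense in $\ca H_N$ by Dorlas's completeness theorem. Property \rfeqn{TQcomm} is then immediate, since $T_\lambda$, $Q_\mu$ and $Q_\lambda$ are all simultaneously diagonalised in this basis.

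For the $TQ$-equation \rfeqn{TQeqn}, I would verify the identity on each eigenvector $\Psi_{\bm \lambda}$ separately. Using the definition \rfeqn{BAEv} of $\tau^\pm_\mu$, the polynomial identities
\[ \tau^+_\mu(\bm \lambda) q_\mu(\bm \lambda) = q_{\mu+\ii \gamma}(\bm \lambda), \qquad \tau^-_\mu(\bm \lambda) q_\mu(\bm \lambda) = q_{\mu-\ii \gamma}(\bm \lambda), \]
are immediate (the $q_\mu$ factor just cancels the denominator of $\tau^\pm_\mu$); combining these with $\tau_\mu(\bm \lambda) = \e^{-\ii \mu L/2} \tau^+_\mu(\bm \lambda) + \e^{\ii \mu L/2} \tau^-_\mu(\bm \lambda)$ from \rfeqn{BAEv2} shows that both sides of \rfeqn{TQeqn} act on $\Psi_{\bm \lambda}$ as multiplication by $\tau_\mu(\bm \lambda) q_\mu(\bm \lambda)$. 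Property \ref{QonPsi} is then trivial, since $q_{\lambda_j}(\bm \lambda)$ contains the factor $\lambda_j - \lambda_j = 0$.

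To make $Q_\mu$ more tangible I would also exhibit it as a concrete differential expression via the dAHA picture: put
\[ Q_\mu^\n{diff} := \prod_{j=1}^N (-\ii \partial_{j,\gamma} - \mu), \]
which is a symmetric polynomial in the Dunkl-type operators. By \rfl{Dunklreppoly} this commutes with the $S_N$-action and, on functions whose restriction to $\bR^N_\n{reg}$ is smooth, coincides with the ordinary differential operator $\prod_j(-\ii \partial_j - \mu)$. Applying \rft{wavefnsymmpolys} to $F(\bm X) = \prod_j (X_j - \mu)$ gives $Q_\mu^\n{diff} \Psi_{\bm \lambda} = q_\mu(\bm \lambda) \Psi_{\bm \lambda}$, so this operator agrees with the spectrally defined $Q_\mu$ on the dense span of Bethe wavefunctions. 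Commutativity with $T_\lambda$ then also follows algebraically, because each integral of motion $H_\gamma^{[n]}$ from \rfeqn{higherintegralsofmotion} is itself a symmetric polynomial in the Dunkl-type operators, and these mutually commute.

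The main obstacle I anticipate is functional-analytic rather than algebraic: the spectral construction only directly produces $Q_\mu$ on finite linear combinations of Bethe eigenstates, while $Q_\mu^\n{diff}$ is an unbounded differential operator of order $N$ whose natural domain must respect the derivative jump conditions \rfeqn{QNLS2} at the hyperplanes $V_{j \, k}$, where $\Psi_{\bm \lambda}$ fails to be smooth. The delicate step is to identify a dense subspace of $\ca H_N$ on which $Q_\mu^\n{diff}$ is well-defined and closable, and on which the unique self-adjoint extension of $-\Delta$ singled out by \rfeqn{QNLS3} is preserved; the natural candidate is the span of the Bethe wavefunctions themselves, on which one invokes \rft{QNLS1D} to see that the differential definition extends there unambiguously.
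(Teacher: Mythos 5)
Your proposal is correct and follows essentially the same route as the paper: define $Q_\mu$ spectrally on the Bethe basis with eigenvalue $\prod_{j=1}^N(\lambda_j-\mu)$, extend by Dorlas completeness, and read off \rfeqn{TQcomm}, \rfeqn{TQeqn} and property (iii) from the polynomial identities $\tau^\pm_\mu(\bm\lambda)\prod_j(\lambda_j-\mu)=\prod_j(\lambda_j-\mu\mp\ii\gamma)$. Your additional identification of $Q_\mu$ with $\prod_j(-\ii\partial_{j,\gamma}-\mu)$ is not needed for this theorem but correctly anticipates the paper's subsequent result (\rft{QopDunklops}).
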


\begin{proof}
First we assume that $(\bm \lambda,\mu) \in \bC^{N+1}_\n{reg}$ and $\bm \lambda$ satisfies the BAEs \rf{BAE}.
Then from \rft{ABA} we know that the Bethe wavefunction $\Psi_{\bm \lambda}$ is an eigenfunction of the transfer matrix $T_\mu$:
\[ T_\mu \Psi_{\bm \lambda} = \left( \e^{-\ii \mu L/2} \tau^+_\mu(\bm \lambda) + \e^{\ii \mu L/2} \tau^-_\mu(\bm \lambda) \right) \Psi_{\bm \lambda}. \]
Multiplying by $ \prod_{j=1}^N (\lambda_j-\mu) $ we arrive at
\begin{equation} \label{TQ1} 
T_\mu \left( \prod_{j=1}^N (\lambda_j-\mu) \right) \Psi_{\bm \lambda} = \left( \e^{-\ii \mu L/2} \prod_{j=1}^N (\lambda_j-\mu-\ii \gamma) + \e^{\ii \mu L/2} \prod_{j=1}^N (\lambda_j-\mu+\ii \gamma) \right) \Psi_{\bm \lambda}. 
\end{equation}
Using the completeness of $\set{\Psi_{\bm \lambda}}{\bm \lambda \n{ satisfies the BAEs}}$ in $\ca H_N$, we may define the operator $Q_\mu$ on a dense subset of $\ca H_N$ by specifying its action on Bethe wavefunctions as follows,
\begin{equation} \label{Qdef} Q_\mu \Psi_{\bm \lambda} := \left( \prod_{j=1}^N (\lambda_j-\mu) \right) \Psi_{\bm \lambda}, \end{equation}%
\nc{rqc}{$Q_\lambda$}{QNLS $Q$-operator \nomrefeqpage}%
and extending this linearly. Property \ref{QonPsi} follows immediately.
\rfeqn{TQ1} now reads
\[ T_\mu Q_\mu \Psi_{\bm \lambda} = \left( \e^{-\ii \mu L/2} Q_{\mu+\ii \gamma} + \e^{\ii \mu L/2} Q_{\mu-\ii \gamma} \right) \Psi_{\bm \lambda}. \]
\rfeqnser{TQcomm}{TQeqn} follow by applying them on $\Psi_{\bm \lambda}$, where $\bm \lambda$ satisfies the BAEs \rf{BAE}, again by using completeness.
\end{proof}

We recall the Dunkl-type operators $\partial_{j,\gamma}$, $j=1,\ldots,N$, which act as $\partial_j$ in the fundamental alcove ${\bR}^N_+$.
With this in mind, we now present the main result of this section.
\begin{thm} \label{QopDunklops}
Let $\bm \lambda \in \bC^N_\n{reg}$ be a solution of the BAEs \rf{BAE}. Also, let $\mu \in \bC$.
Then
\[ Q_\mu \Psi_{\bm \lambda} = \left( \prod_{j=1}^N (-\ii \partial_{j,\gamma}-\mu) \right) \Psi_{\bm \lambda} \]
and we have the identity
\[ Q_\mu = \prod_{j=1}^N (-\ii \partial_{j,\gamma}-\mu) = (-1)^N \sum_{n=0}^N \ii^n e_n(-\ii \partial_{1,\gamma},\ldots,-\ii \partial_{N,\gamma}) \mu^{N-n} \]
on a dense subspace of $\ca H_N$, where the \emph{elementary symmetric polynomial} $e_n$ is given by
\begin{equation} e_n(\bm \lambda) = \sum_{\bm i \in \f I^N_n} \prod_{m=1}^n \lambda_{i_m} = \sum_{1 \leq i_1 < \ldots < i_n \leq N} \lambda_{i_1} \ldots \lambda_{i_n} \end{equation}%
\nc{rel}{$e_n(\bm \lambda)$}{$n$-th elementary symmetric polynomial \nomrefeqpage}%
\end{thm}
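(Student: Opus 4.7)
The plan is to exploit the key fact, recorded in \rft{wavefnsymmpolys}, that for any symmetric polynomial $F \in \bC[\bm X]^{S_N}$ the Bethe wavefunction $\Psi_{\bm \lambda}$ is a common eigenfunction of $F(\partial_{1,\gamma},\ldots,\partial_{N,\gamma})$ with eigenvalue $F(\ii \bm \lambda)$. The natural candidate for $F$ here is the polynomial
\[ F_\mu(\bm X) := \prod_{j=1}^N (-\ii X_j - \mu), \]
which is manifestly $S_N$-invariant and therefore sits in the centre of $\f H^N_\gamma$ under the Dunkl-type representation (cf.\ \rfl{Dunklreppoly}).

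Applying \rft{wavefnsymmpolys} to $F_\mu$ yields
\[ F_\mu(\partial_{1,\gamma},\ldots,\partial_{N,\gamma}) \Psi_{\bm \lambda} = F_\mu(\ii \bm \lambda) \Psi_{\bm \lambda} = \prod_{j=1}^N (\lambda_j - \mu)\, \Psi_{\bm \lambda}. \]
But the right-hand side is precisely $Q_\mu \Psi_{\bm \lambda}$ by the defining formula \rfeqn{Qdef}. This gives the first claimed identity $Q_\mu \Psi_{\bm \lambda} = \prod_{j=1}^N(-\ii \partial_{j,\gamma}-\mu)\Psi_{\bm \lambda}$ for every $\bm \lambda$ satisfying the BAEs.

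To upgrade this pointwise equality (on each Bethe wavefunction) to an operator identity on a dense subspace of $\ca H_N$, I would invoke the completeness of the set $\set{\Psi_{\bm \lambda}}{\bm \lambda \n{ solves the BAEs \rf{BAE}}}$ in $\ca H_N$ (Dorlas \cite{Dorlas}, as already used to \emph{define} $Q_\mu$ via \rfeqn{Qdef}). Since $Q_\mu$ and $\prod_{j=1}^N(-\ii \partial_{j,\gamma}-\mu)$ agree on this complete set, and both are well-defined linear operators on the dense subspace spanned by Bethe wavefunctions, they coincide there. The final expansion
\[ \prod_{j=1}^N (-\ii \partial_{j,\gamma}-\mu) = (-1)^N \sum_{n=0}^N \ii^n e_n(-\ii \partial_{1,\gamma},\ldots,-\ii \partial_{N,\gamma}) \mu^{N-n} \]
is then simply the standard generating-function identity $\prod_j(X_j - \mu) = \sum_n (-\mu)^{N-n} e_n(\bm X)$ applied to the pairwise-commuting operators $-\ii \partial_{j,\gamma}$ (their commutativity is guaranteed by axiom \rf{dAHArel5} under $\rho^\n{Dunkl}_\gamma$).

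The main obstacle I anticipate is purely technical rather than conceptual: one needs to be careful that the product $\prod_j(-\ii\partial_{j,\gamma}-\mu)$, a priori an endomorphism of $\ca{CB}^\infty(\bR^N)$ (or rather of $\ca C^\infty(\bR^N_\n{reg})$), descends to a sensible operator on a dense subspace of $\ca H_N$. This is precisely why the statement is qualified ``on a dense subspace''; the subspace in question is the span of the Bethe wavefunctions, on which both sides are defined and, by the computation above, equal. No further analytic work beyond this is needed.
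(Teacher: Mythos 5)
Your proof is correct and follows essentially the same route as the paper: apply the first part of \rft{wavefnsymmpolys} to the symmetric polynomial $\prod_{j=1}^N(-\ii X_j-\mu)$, match the resulting eigenvalue $\prod_{j=1}^N(\lambda_j-\mu)$ against the defining relation \rfeqn{Qdef} of $Q_\mu$, and then use completeness of the Bethe wavefunctions to promote the equality to an operator identity on a dense subspace. The paper's proof is just a terser version of exactly this argument.
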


\begin{proof}
Since $\prod_{j=1}^N (-\ii \partial_{j,\gamma}-\mu)$ is a symmetric polynomial in $\partial_{1,\gamma},\ldots,\partial_{N,\gamma}$, we may use the first part of \rft{wavefnsymmpolys}. 
Together with \rfeqn{Qdef} this implies that
\[ \left( \prod_{j=1}^N (-\ii \partial_{j,\gamma}-\mu) \right) \Psi_{\bm \lambda} = Q_\mu \Psi_{\bm \lambda}. \]
As $\set{\Psi_{\bm \lambda}}{\bm \lambda \n{ satisfies the BAEs \rf{BAE}}}$ is complete in $\ca H_N$ and both $Q_\mu$ and  symmetric expressions in the $\partial_{j,\gamma}$ are defined on a dense subspace, the second result follows.
\end{proof}

\begin{rem}
Referring back to \rfeqn{higherintegralsofmotion} and \rfeqn{eqn351}, we have seen the power sum symmetric polynomials in the Dunkl-type operators appearing as expansion coefficients with respect to $\mu$ in the transfer matrix $T_\mu$. The expansion of Baxter's $Q$-operator yields the elementary symmetric polynomials in the Dunkl-type operators. We remark that both sets of polynomials generate the algebra of symmetric polynomials in $N$ indeterminates. From a physical perspective, $T$ and $Q$ are generating functions for alternative sets of integrals of motion. $T$ is the natural object in the QISM and $Q$ is the natural object in the dAHA method. The $TQ$-equation \rf{TQeqn} connects these two approaches.
\end{rem}

\clearpage

\chapter{The non-symmetric Yang-Baxter algebra} \label{ch5}

In this section we set out the main body of original work in this thesis. 
We will describe operators $a_\mu$, $b^\pm_\mu$, $c^\pm_\mu $, $d_{\mu}$ that are extensions of the operators $A_\mu $, $B_\mu $, $C_\mu $, $D_\mu $ $\in \End(\ca H([-L/2,L/2]))$ from Chapter \ref{chQISM} to a dense subspace of $\f h = \f h([-L/2,L/2])$. These operators generate a subalgebra of $\End(\f h)$ which we will call the \emph{non-symmetric Yang-Baxter algebra}. 
For example, we will discuss operators $b^\pm_{\mu} \in \Hom(\f h_N,\f h_{N+1})$ that generate the pre-wavefunctions recursively, i.e.
\begin{equation} \label{prewavefnrecursion} 
\psi_{\lambda_1,\ldots,\lambda_{N+1}} = b^-_{\lambda_{N+1}} \psi_{\lambda_1,\ldots,\lambda_N} = b^+_{\lambda_1} \psi_{\lambda_2,\ldots,\lambda_{N+1}}, 
\end{equation}
in correspondence with \rfeqn{Bethewavefnrecursion}:
\[ \Psi_{\lambda_1,\ldots,\lambda_{N+1}} = B_{\lambda_{N+1}} \Psi_{\lambda_1,\ldots,\lambda_N}; \]
note that since $\psi_{\lambda_1,\ldots,\lambda_{N+1}}$ is not $S_{N+1}$-invariant (in position or momentum space), it is natural that there should be two ways of recursively expressing it as in \rfeqn{prewavefnrecursion}.
We will prove this recursive property based on \rfc{prewavefnDunkl} and identities $\partial_{j,\gamma} b^-_{\lambda_{N+1}} \ldots b^-_{\lambda_1}$ $=$ $\lambda_j b^-_{\lambda_{N+1}} \ldots b^-_{\lambda_1}$ and $\partial_{j,\gamma} b^+_{\lambda_1} \ldots b^+_{\lambda_{N+1}}$ $=$ $\lambda_j b^+_{\lambda_1} \ldots b^+_{\lambda_{N+1}}$ for all $j=1,\ldots,N+1$.
Using the definition of $\Psi_{\bm \lambda}$ in terms of the $\psi_{\bm \lambda}$, which are in turn defined in terms of representations of the dAHA, and the above recursion \rfeqn{prewavefnrecursion}, we are able to give a new proof (\rfc{Brecursion}) for the QISM recursion $\Psi_{\bm \lambda} = \prod_{j=1}^N B_{\lambda_j} \Vac$.\\

Also, in Section \ref{abcdcommrelssec} we will obtain commutation relations for the non-symmetric Yang-Baxter algebra on a subspace of $\f h$ by expressing the operators $a_\mu$, $d_{\mu}$, and $c^\pm_\mu $ in terms of the $b^\pm_\mu$ and using \rfeqn{prewavefnrecursion}.

\section{The non-symmetric integral operators}

\subsection{Definitions and basic properties}

For $N \in \bZ_{\geq 0}$, $n = 0, \ldots, N$ we recall
\[ \f i^n_N = \set{(i_1,\ldots,i_n) \in \{1,\ldots,N\}^n}{i_l \ne i_m \n{ for } l \ne m}. \]
We also recall that given $\bm i \in \f i^n_{N}$ and $\bm x \in J^N$, unless otherwise specified, $x_{i_0}$ denotes $L/2$, and $x_{i_{n+1}}$ denotes $-L/2$.\\

Let $N$ be a nonnegative integer and $f \in \ca F(J^N)$.
We can act on such functions by ``deletion'', ``insertion'', and ``replacement'' of variables, as follows: 
\begin{description}
\item[Deletion.] Let $\bm x \in J^{N+1}$ and $j=1,\ldots,N+1$. Then 
\begin{equation} (\hat \phi_j f)(\bm x) = f(\bm x_{\hat \jmath}) = f(x_1,\ldots,\hat x_j,\ldots,x_{N+1}). \end{equation}%
\nc{gvl}{$\hat \phi_j$}{Deletion operator \nomrefeqpage}%
Note that $\hat \phi_j \in \Hom(\ca F(J^N),\ca F(J^{N+1}))$.
\item[Insertion.] Let $y \in J$. We consider $\check \phi^\pm(y) \in \Hom(\ca F(J^{N+1}),\ca F(J^N))$ given by
\begin{equation} (\check \phi^+(y) f)(\bm x) = f(y,x_1,\ldots,x_N), \qquad (\check \phi^-(y) f)(\bm x) = f(x_1,\ldots,x_N,y). \end{equation}%
\nc{gvl}{$\check \phi^\pm(y)$}{Insertion operator \nomrefeqpage}%
Also note that 
\begin{equation} \label{eqn42} [\check \phi^+(y),\check \phi^-(z)]=0. \end{equation}
The operators $\check \phi^\pm(y)$ are non-symmetric versions of the quantum field $\Phi(y)$ (up to a scalar factor).
\item[Replacement.] For $j=1,\ldots,N$ and $y \in J$, we have
\[ (\phi_{j \to y}f)(\bm x) = f(x_1, \ldots, x_{j-1}, y, x_{j+1},\ldots, x_N), \]
and, for $\bm i \in \f i^n_{N}$, $\bm y \in J^n$,
\begin{equation} \phi_{\bm i \to \bm y}= \prod_{m=1}^n \phi_{i_m \to y_m}. \end{equation}%
\nc{gvl}{$\phi_{j \to y}, \phi_{\bm i \to \bm y}$}{Replacement operator \nomrefeqpage}%
We note that $\phi_{\bm i \to \bm y} \in \End(\ca F(J^N))$; 
furthermore, for all $w \in S_N$ we have
\begin{equation} \label{eqn43} w \phi_{\bm i \to \bm y} = \phi_{w \bm i \to \bm y} w. \end{equation}
\end{description}

Given $\bm i = (i_1,\ldots,i_n) \in \f i^n_N$, write $\bm i_+ = (i_1+1,\ldots,i_n+1) \in \f i^n_{N+1}$.
\nc{rilb}{$\bm i_+$}{Shorthand for $(i_1+1,\ldots,i_n+1)$}
Recall the notation $\theta_{\bm i}$ for the multiplication operator in $\End(\ca F(\bR^N_\n{reg}))$ defined by $(\theta_{\bm i} f)(\bm x) = \theta(x_{i_1}>x_{i_2}>\ldots>x_{i_n})f(\bm x)$, for $f \in \ca F(\bR^N_\n{reg})$, $\bm x \in \bR^N_\n{reg}$. 

\begin{defn} \label{eopsdefn}
Let $L \in \bR_{>0}$, $\lambda \in \bC$ and $n = 0,\ldots, N$.
Given $\bm i \in \f i^n_N$, the associated \emph{elementary (non-symmetric) integral operators},
\begin{align*}
\hat e^\pm_{\lambda;\bm i} &= \hat e^\pm_{\lambda;i_1 \, \ldots \, i_n}: \f h_N \to \f h_{N+1} \displaybreak[2] \\
\bar e^\pm_{\lambda;\bm i} &= \bar e^\pm_{\lambda;i_1 \, \ldots \, i_n}: \f h_N \to \f h_N, \displaybreak[2] \\
\check e^\pm_{\lambda;\bm i} &= \check e^\pm_{\lambda;i_1 \, \ldots \, i_n}: \f h_{N+1} \to \f h_N, 
\end{align*}%
\nc{rel}{$\hat e^\pm_{\lambda;\bm i},\bar e^\pm_{\lambda;\bm i},\check e^\pm_{\lambda;\bm i}$}{Elementary non-symmetric integral operators}%
are densely defined on $\f d([-L/2,L/2])$ by means of:
\begin{align*}
\hat e^+_{\lambda;\bm i}|_{\f d_N} &=  \e^{\ii \lambda x_1} \theta_{\bm i_+ \, 1} \left( \prod_{m=1}^n  \int_{x_{i_{m+1}+1}}^{x_{i_m+1}} \dd y_m \e^{\ii \lambda (x_{i_m+1}-y_m)} \right) \phi_{\bm i^+ \to \bm y} \hat \phi_1, && \n{where } x_{i_{n+1}+1} = x_1, \\
\hat e^-_{\lambda; \bm i}|_{\f d_N} &= \e^{\ii \lambda x_{N+1}} \theta_{N+1 \, \bm i} \left( \prod_{m=1}^n \int_{x_{i_m}}^{x_{i_{m-1}}} \dd y_m \e^{\ii \lambda (x_{i_m} -y_m)} \right) \phi_{\bm i \to \bm y} \hat \phi_{N+1}, && \n{where } x_{i_0} = x_{N+1}, \\
\bar e^+_{\lambda;\bm i}|_{\f d_N} &=\e^{-\ii \lambda L/2} \theta_{\bm i} \left( \prod_{m=1}^n \int_{x_{i_{m+1}}}^{x_{i_m}} \dd y_m \e^{\ii \lambda (x_{i_m}-y_m)} \right) \phi_{\bm i \to \bm y}, \\
\bar e^-_{\lambda; \bm i}|_{\f d_N} &= \e^{\ii \lambda L/2} \theta_{\bm i} \left( \prod_{m=1}^n \int_{x_{i_m}}^{x_{i_{m-1}}} \dd y_m \e^{\ii \lambda (x_{i_m}-y_m)} \right) \phi_{\bm i \to \bm y}, \\
\check e^+_{\lambda;\bm i}|_{\f d_{N+1}} &= \theta_{\bm i} \left( \prod_{m=0}^n \int_{x_{i_{m+1}}}^{x_{i_m}} \dd y_m \e^{\ii \lambda(x_{i_m}-y_m)} \right) \phi_{\bm i \to \bm y} \check \phi^-(y_0) \\
\check e^-_{\lambda; \bm i}|_{\f d_{N+1}} &= \theta_{\bm i} \left( \prod_{m=1}^{n+1} \int_{x_{i_m}}^{x_{i_{m-1}}} \dd y_m \e^{\ii \lambda(x_{i_m}-y_m)} \right) \phi_{\bm i \to \bm y} \check \phi^+(y_{n+1}).
\end{align*}
For the latter four operators, the standard conventions $x_{i_{n+1}} = -L/2$, $x_{i_0} = L/2$ apply.
\end{defn}
Note that for $n=0$ the above definitions $\hat e^\pm_\lambda:=\hat e^\pm_{\lambda; \emptyset}$\nc{rel}{$\hat e^\pm_\lambda$}{Shorthand for $\hat e^\pm_{\lambda;\emptyset}$} simplify to
\begin{align*}
(\hat e^+_\lambda f)(x_1,\ldots,x_{N+1}) &= \e^{\ii \lambda x_1} f(x_2,\ldots,x_{N+1})\\
(\hat e^-_\lambda f)(x_1,\ldots,x_{N+1}) &= \e^{\ii \lambda x_{N+1}} f(x_1,\ldots,x_N).
\end{align*} 
We have already encountered the operator $\hat e^-_\lambda$ in Section \ref{dAHArecursions}.
Note that 
\[ \hat e^+_\mu \e^{\ii \bm \lambda} = \e^{\ii (\mu,\bm \lambda)}, \qquad \hat e^-_\mu \e^{\ii \bm \lambda} = \e^{\ii (\bm \lambda, \mu)}, \]
which can be seen as the $\gamma = 0$ case for the intended recursions \rfeqn{prewavefnrecursion}.

\begin{exm}
Let $\lambda \in \bC$. The statement
\[ \hat e^+_{\lambda;2}|_{\f d_2} = \e^{\ii \lambda x_1} \theta_{3 \, 1} \int_{x_1}^{x_3} \dd y \e^{\ii \lambda (x_3-y)} \phi_{1 \to y} \hat \phi_3 \]
means that for all $f \in \f d_2$, $(x_1,x_2,x_3) \in J^3$ the function $\hat e^+_{\lambda;2}f$ is defined by
\[ \left(\hat e^+_{\lambda;2}f\right)(x_1,x_2,x_3) = \e^{\ii \lambda x_1} \theta(x_3-x_1) \int_{x_1}^{x_3} \dd y \e^{\ii \lambda (x_3-y)} f(y,x_2). \]
\end{exm}

Useful properties of these maps are listed and proven in Appendix \ref{nonsymmintopscalcs}.

\begin{defn}[Non-symmetric integral operators] 
Let $L \in \bR_{>0}$, $\gamma \in \bR$ and $\lambda \in \bC$. We define the operators $a_\lambda ,b^\pm_\lambda ,c^\pm_\lambda ,d_{\lambda}$ on $\f h$ by specifying their action on $\f d$:
\begin{align}
a_\lambda  &=  \sum_{n \geq 0} \gamma^n a_{\lambda;n}, && \n{where} \hspace{-20mm} & 
a_{\lambda;n}|_{\f d_N}  &= \sum_{\bm i \in \f i^n_{N}} \bar e_{\lambda;\bm i}^+, \displaybreak[2] \\%
\nc{ral}{$a_\lambda,a_{\lambda;n}$}{Generator of non-symmetric Yang-Baxter algebra \nomrefeqpage}%
b^\pm_\lambda  &=  \sum_{n \geq 0} \gamma^n b^\pm_{\lambda;n}, && \n{where} \hspace{-20mm} & 
b^\pm_{\lambda;n}|_{\f d_N} &= \sum_{\bm i \in \f i^{n}_{N}} \hat e^\pm_{\lambda;\bm i}, \displaybreak[2] \\%
\nc{rbl}{$b^\pm_\lambda,b^\pm_{\lambda;n}$}{Generator of non-symmetric Yang-Baxter algebra; \\ non-symmetric particle creation operators \nomrefeqpage}%
c^\pm_\lambda  &=\sum_{n \geq 0} \gamma^n c^\pm_{\lambda;n}, && \n{where} \hspace{-20mm} & 
c^\pm_{\lambda;n}|_{\f d_{N+1}} &= \sum_{\bm i \in \f i^n_N } \check e^\pm_{\lambda;\bm i} \\%
\nc{rcl}{$c^\pm_\lambda,c^\pm_{\lambda;n}$}{Generator of non-symmetric Yang-Baxter algebra \nomrefeqpage}%
d_{\lambda} &= \sum_{n \geq 0} \gamma^n d_{\lambda;n}, && \n{where} \hspace{-20mm} & d_{\lambda;n}|_{\f d_N} &= \sum_{\bm i \in \f i^n_{N}} \bar e^-_{\lambda;\bm i}.
\end{align}%
\nc{rdla}{$d_\lambda,d_{\lambda;n}$}{Generator of non-symmetric Yang-Baxter algebra \nomrefeqpage}%
\end{defn}
\vspace{-7mm}

\begin{prop}[Properties of the non-symmetric integral operators] \label{abcdproperties}
Let $L \in \bR_{>0}$, $\gamma \in \bR$ and $\lambda \in \bC$. 
The operators $a_\lambda ,b^\pm_\lambda,c^\pm_\lambda,d_{\lambda}$ have the following properties.
\begin{enumerate}
\item \label{abcdnumber} We have
\[ a_\lambda (\f d_N), d_{\lambda}(\f d_N) \subset \f h_N, \qquad b^\pm_\lambda (\f d_N) \subset \f h_{N+1},
\qquad c^\pm_\lambda (\f d_N) \subset \f h_{N-1}, \qquad c^\pm_\lambda (\f h_0) = 0. \]
\item \label{abcdformaladjoint} $b^\pm_\lambda $ is the formal adjoint of $c^\mp_{\bar \lambda}$, and $a_\lambda $ is the formal adjoint of $d_{\bar \lambda}$.
\item \label{abcdvacuum} The actions of $a_\lambda $, $b^\pm_\lambda $, $c^\pm_\lambda $, $d_{\lambda}$ on $\Vac$ are as follows:
\[ a_\lambda \Vac = \e^{-\ii \lambda L/2} \Vac, \qquad \left(b^\pm_\lambda \Vac\right)(x) = \e^{\ii \lambda x}, \qquad
c^\pm_\lambda \Vac = 0, \qquad d_{\lambda}\Vac = \e^{\ii \lambda L/2}\Vac. \]
\item \label{abcdbounded} The operators $a_\lambda $, $b^\pm_\lambda $, $c^\pm_\lambda $, $d_{\lambda}$ are bounded on the dense subset $\f h_\n{fin}$. Hence, they can be seen as endomorphisms of $\f h$ and can be composed with other elements of $\End(\f h)$.
\item \label{abcdpermutation}
For $w \in S_N$, we have 
\[ [w, a_\lambda ]|_{\f h_N} =  [w, b^-_{\lambda}]|_{\f h_N} =  [w, c^+_{\lambda}]|_{\f h_{N+1}} = [w, d_{\lambda}]|_{\f h_N} = 0. \]
Furthermore, 
\[ [\ca S^{(N)}, a_\lambda ]|_{\f h_N} =  [\ca S^{(N)}, b^-_{\lambda}]|_{\f h_N} =  [\ca S^{(N)}, c^+_{\lambda}]|_{\f h_{N+1}} = [\ca S^{(N)}, d_{\lambda}]|_{\f h_N} = 0. \]
\end{enumerate}
\end{prop}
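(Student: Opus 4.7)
The plan is to address the five properties in the order that minimizes overlap, using properties (i) and (iii) as warm-up computations, then attacking the permutation equivariance (v), the formal adjoint relations (ii), and finally the boundedness (iv), which will be the main technical step.

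First I would dispatch property (i) by direct inspection of Definition \ref{eopsdefn}: each elementary operator $\hat e^\pm_{\lambda;\bm i}$ maps $\f d_N\to\f h_{N+1}$, each $\bar e^\pm_{\lambda;\bm i}$ maps $\f d_N\to\f h_N$, and each $\check e^\pm_{\lambda;\bm i}$ maps $\f d_{N+1}\to\f h_N$; summing over $\bm i\in\f i^n_N$ and $n\ge 0$ preserves these mapping properties. Property (iii) then follows by noting that for $N=0$ the set $\f i^n_0$ is empty for $n\ge 1$, so only the $n=0$ terms survive; a one-line calculation of $\hat e^\pm_\lambda \Vac$, $\bar e^\pm_\lambda\Vac$ and $\check e^\pm_\lambda\Vac$ completes the verification.

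For property (v), the key observation is that for any $w\in S_N$ one has the equivariance relations
\[ w\,\theta_{\bm i}=\theta_{w(\bm i)}\,w,\qquad w\,\phi_{\bm i\to\bm y}=\phi_{w(\bm i)\to\bm y}\,w, \]
the first being a direct computation from the definition of $\theta_{\bm i}$ and the second being \eqref{eqn43}. For $a_\lambda$ and $d_\lambda$, the exponential prefactors $\e^{\mp \ii\lambda L/2}$ are scalars and the integration limits $x_{i_m}$, together with the $n+1$-th boundary convention $x_{i_{n+1}}=\pm L/2$ from Notation \ref{notn4}, transform covariantly under $w$, yielding $w\,\bar e^\pm_{\lambda;\bm i}=\bar e^\pm_{\lambda;w(\bm i)}\,w$. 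For $b^-_\lambda$ (resp.\ $c^+_\lambda$), one interprets $w\in S_N$ as an element of $S_{N+1}$ fixing the index $N+1$, so that $w$ commutes past $\hat\phi_{N+1}$ (resp.\ $\check\phi^-(y_0)$) and one similarly obtains $w\,\hat e^-_{\lambda;\bm i}=\hat e^-_{\lambda;w(\bm i)}\,w$ and $w\,\check e^+_{\lambda;\bm i}=\check e^+_{\lambda;w(\bm i)}\,w$. Summing over $\bm i\in\f i^n_N$ and using that this set is $S_N$-stable (since $w$ permutes it bijectively) yields the desired commutations, and averaging then gives the statement for $\ca S^{(N)}$.

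For property (ii), I would compute the pairings $\innerrnd{b^\pm_\lambda f}{g}$ and $\innerrnd{f}{c^\mp_{\bar\lambda}g}$ term by term in the expansions in $\gamma$. For the pair $(\hat e^-_{\lambda;\bm i},\check e^+_{\lambda;\bm i})$, for instance, one applies Fubini on $J^{N+1}$ to exchange the integrations over $\bm y$ coming from $\phi_{\bm i\to\bm y}$ and the $\L^2$ integration, and relabels the integration variables so that the step function $\theta_{N+1\,\bm i}$ from $\hat e^-$ becomes $\theta_{\bm i}$ of $\check e^+$ under the swap $x_{N+1}\leftrightarrow y_0$. Complex conjugation of $\e^{\ii\lambda x_{N+1}}$ produces $\e^{\ii\bar\lambda x_{N+1}}$, accounting for the $\bar\lambda$ in $c^\mp_{\bar\lambda}$. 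The analogous bookkeeping for $(a_\lambda,d_{\bar\lambda})$ is essentially identical but easier since no Hilbert spaces of different particle numbers are involved.

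The main obstacle, and the most work, will be property (iv). The strategy is to adapt the proof of Proposition \ref{ABCDproperties}(iv), given in \cite[Props.~6.2.1, 6.2.2]{Gutkin1988}, to the non-symmetric setting. For each $N$-particle sector one bounds $\|a_{\lambda;n}|_{\f h_N}\|$ etc.\ uniformly in $\bm i\in\f i^n_N$ using elementary $\L^2$ estimates on the iterated integrals (the $n$ integrations are each of length at most $L$ and contribute a factor $L^{n/2}$ in operator norm via Cauchy-Schwarz on the sorted variables), the factor $|\f i^n_N|=N!/(N-n)!\le N^n$ from the multi-index sum, and the factorial-cancelling phase estimates on the exponentials that handle the $\Im\lambda$ dependence. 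The resulting geometric series in $\gamma$ converges to a bound of the form $\e^{(\lambda_0+|\gamma|N)L}$ for $a_\lambda$ and $d_\lambda$, and $N\sqrt L\,\e^{(\lambda_0/2+|\gamma|N)L}$ for $b^\pm_\lambda$ and $c^\pm_\lambda$ on any strip $|\Im\lambda|\le\lambda_0$, matching \eqref{ABCDnorms1}. Once $a_\lambda$ (resp.\ $c^\pm_\lambda$) is bounded on each $\f h_N$, the adjointness of (ii) transfers boundedness to $d_{\bar\lambda}$ (resp.\ $b^\mp_{\bar\lambda}$), after which one extends by continuity to $\f h_\n{fin}$ and views each generator as an element of $\End(\f h)$.
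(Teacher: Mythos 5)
Your proposal is correct and follows essentially the same route as the paper: properties (i) and (iii) by inspection of the definitions, (ii) by Fubini and relabelling at the level of the elementary operators (\rfl{eopsadjointness}), (v) by the equivariance relations $w\,\theta_{\bm i}=\theta_{w\bm i}\,w$ and $w\,\phi_{\bm i\to\bm y}=\phi_{w\bm i\to\bm y}\,w$ summed over the $S_N$-stable index sets (\rfl{eopspermutation}), and (iv) by $\L^2$ estimates on the elementary integral operators (\rfl{eopsbounded}). The only place you do more than is necessary is (iv): since $\f i^n_N=\emptyset$ for $n>N$, each generator restricted to $\f d_N$ is a \emph{finite} linear combination of elementary operators, so boundedness on $\f h_\n{fin}$ follows from a bound on each elementary operator plus the triangle inequality --- there is no geometric series in $\gamma$ to sum, and quantitative bounds of the form \rfeqn{ABCDnorms1} are not required (nor claimed) in the non-symmetric setting.
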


\begin{proof}
We prove each property separately.
\begin{enumerate}
\item This follows from the definition of the elementary integral operators.
\item Using \rfl{eopsadjointness} we see that
\[ \innerrnd{b^+_\lambda f}{g}_{N} 
= \sum_{n=0}^N \gamma^n \sum_{\bm i \in \f i^n_{N}} \innerrnd{\hat e^+_{\lambda;\bm i} f}{g}_{N} 
= \sum_{n=0}^N \gamma^n \sum_{\bm i \in \f i^n_{N}} \innerrnd{f}{\check e^-_{\bar \lambda;\bm i} g}_{N-1} \hspace{-4mm} 
=  \innerrnd{f}{c^-_{\bar \lambda} g}_{N-1},\]
where $f \in \f h_{N-1}$ and $g \in \f h_N$. The other adjointness relations are demonstrated in the same fashion.
\item This follows immediately by using the definition of these operators in terms of the elementary integral operators.
\item This follows from the triangle inequality for the operator norm and the fact that the $a_\lambda $, $b^\pm_\lambda $, $c^\pm_\lambda $, $d_{\lambda}$ are all finite linear combinations of the elementary integral operators.
\item This is obtained by summing the statements in \rfl{eopspermutation} over the appropriate multi-indices $\bm i$. \qedhere
\end{enumerate}
\end{proof}

In light of \rfp{abcdproperties}, Property \ref{abcdnumber}, the operators $b^\pm_\lambda $ will be referred to as the \emph{non-symmetric creation operators}.

\begin{exm}[$N=2$]
We will present here expressions for the action of the non-symmetric integral operators $a_\lambda,b^\pm_\lambda,c^\pm_\lambda,d_\lambda$ on suitable functions $f$.
The reader should compare this with \rfex{ABCDexamples2}.
For $f \in \f d_2$ and $(x_1,x_2) \in J^2$ we have
\begin{align*}
(a_\lambda f)(x_1,x_2) &= \e^{- \! \ii \lambda L/2} \left( f(x_1,x_2) + \gamma  \int_{-L/2}^{x_1} \dd y \e^{\ii \lambda(x_1\! - \! y)}f(y,x_2)  +\gamma  \int_{-L/2}^{x_2}  \dd y  \e^{\ii \lambda(x_2 \! -\! y)}  f(x_1,y) +\right. \\
& \qquad + \gamma^2  \theta(x_1>x_2)  \int_{x_2}^{x_1} \dd y_1 \int_{-L/2}^{x_2} \dd y_2  \e^{\ii \lambda(x_1+x_2-y_1-y_2)} f(y_1,y_2)  +\\
& \qquad + \left. \gamma^2 \theta(x_2>x_1) \int_{x_1}^{x_2} \dd y_1 \int_{-L/2}^{x_1} \dd y_2  \e^{\ii \lambda(x_1+x_2-y_1-y_2)} f(y_2,y_1)  \right), \displaybreak[2]  \\
(d_\lambda f)(x_1,x_2) &= \e^{\ii \lambda L/2} \left( f(x_1,x_2) + \gamma  \int_{x_1}^{L/2} \dd y \e^{\ii \lambda(x_1\! - \! y)}f(y,x_2)  +\gamma  \int_{x_2}^{L/2}  \dd y  \e^{\ii \lambda(x_2\! - \! y)}  f(x_1,y) + \right. \\
& \qquad + \gamma^2 \theta(x_1>x_2) \int_{x_1}^{L/2} \dd y_1 \int_{x_2}^{x_1} \dd y_2  \e^{\ii \lambda(x_1+x_2-y_1-y_2)} f(y_1,y_2) + \\
& \qquad + \left. \gamma^2 \theta(x_2>x_1) \int_{x_2}^{L/2} \dd y_1 \int_{x_1}^{x_2} \dd y_2  \e^{\ii \lambda(x_1+x_2-y_1-y_2)} f(y_2,y_1)  \right).
\end{align*}
For $f \in \f d_2$ and $(x_1,x_2,x_3) \in J^3$ we have
\begin{align*}
(b^-_\lambda f)(x_1,x_2,x_3) &= \e^{\ii \lambda x_3} \left( f(x_1,x_2)+  \gamma \theta(x_3 > x_1) \int_{x_1}^{x_3} \dd y \e^{\ii \lambda(x_1-y)} f(y,x_2) + \right. \\
& \qquad + \gamma \theta(x_3 > x_2) \int_{x_2}^{x_3} \dd y \e^{\ii \lambda(x_2-y)} f(x_1,y) + \\
& \qquad + \gamma^2 \theta(x_3 > x_1 > x_2) \int_{x_1}^{x_3} \dd y_1 \int_{x_2}^{x_1} \dd y_2 \e^{\ii \lambda (x_1+x_2-y_1-y_2)} f(y_1,y_2) + \\
& \qquad  \left. + \gamma^2 \theta(x_3 > x_2 > x_1) \int_{x_2}^{x_3} \dd y_1 \int_{x_1}^{x_2} \dd y_2 \e^{\ii \lambda (x_1+x_2-y_1-y_2)} f(y_2,y_1) \right), \displaybreak[2]\\ 
(b^+_\lambda f)(x_1,x_2,x_3)&= \e^{\ii \lambda x_1} \left( f(x_2,x_3)+ \gamma \theta(x_2 > x_1) \int_{x_1}^{x_2} \dd y \e^{\ii \lambda (x_2-y)} f(y,x_3) + \right. \\
& \qquad + \gamma \theta(x_3 > x_1) \int_{x_1}^{x_3} \dd y \e^{\ii \lambda (x_3-y)} f(x_2,y) + \\
& \qquad + \gamma^2 \theta(x_2> x_3 > x_1) \int_{x_3}^{x_2} \dd y_1 \int_{x_1}^{x_3} \dd y_2 \e^{\ii \lambda (x_2+x_3-y_1-y_2)} f(y_1,y_2) + \\
& \qquad \left. + \gamma^2  \theta(x_3> x_2>x_1) \int_{x_2}^{x_3} \dd y_1 \int_{x_1}^{x_2} \dd y_2 \e^{\ii \lambda (x_2+x_3-y_1-y_2)} f(y_2,y_1) \right).
\end{align*}
Finally, for $f \in \f d_3$ and $(x_1,x_2) \in J^2$ we have
\begin{align*}
(c^-_\lambda f)(x_1,x_2) &=  \int_{-L/2}^{L/2} \dd y \e^{-\ii \lambda y} f(y,x_1,x_2)+  \gamma  \int_{x_1}^{L/2} \dd y_1 \int_{-L/2}^{x_1} \dd y_2 \e^{\ii \lambda(x_1-y_1-y_2)} f(y_2,y_1,x_2) + \\
& \qquad +  \gamma \int_{x_2}^{L/2} \dd y_1 \int_{-L/2}^{x_2} \dd y_2  \e^{\ii \lambda(x_2-y_1-y_2)} f(y_2,x_1,y_1) + \\
& \qquad + \gamma^2 \theta(x_1 > x_2)\int_{x_1}^{L/2} \dd y_1 \int_{x_2}^{x_1} \dd y_2  \int_{-L/2}^{x_2} \dd y_3  \e^{\ii \lambda (x_1+x_2-y_1-y_2-y_3)} f(y_3,y_1,y_2) + \\
& \qquad + \gamma^2 \theta(x_2 > x_1)\int_{x_2}^{L/2} \dd y_1 \int_{x_1}^{x_2} \dd y_2  \int_{-L/2}^{x_1} \dd y_3  \e^{\ii \lambda (x_1+x_2-y_1-y_2-y_3)} f(y_3,y_2,y_1), \displaybreak[2]\\ 
(c^+_\lambda f)(x_1,x_2)&=  \int_{-L/2}^{L/2} \dd y \e^{-\ii \lambda y} f(x_1,x_2,y)+  \gamma  \int_{x_1}^{L/2} \dd y_0 \int_{-L/2}^{x_1} \dd y_1 \e^{\ii \lambda(x_1-y_0-y_1)} f(y_1,x_2,y_0) + \\
& \qquad +  \gamma \int_{x_2}^{L/2} \dd y_0 \int_{-L/2}^{x_2} \dd y_1 \e^{\ii \lambda(x_2-y_0-y_1)} f(x_1,y_1,y_0) + \\
& \qquad + \gamma^2 \theta(x_1 > x_2)\int_{x_1}^{L/2} \dd y_0 \int_{x_2}^{x_1} \dd y_1  \int_{-L/2}^{x_2} \dd y_2 \e^{\ii \lambda (x_1+x_2-y_0-y_1-y_2)} f(y_1,y_2,y_0) + \\
& \qquad + \gamma^2 \theta(x_2 > x_1)\int_{x_2}^{L/2} \dd y_0 \int_{x_1}^{x_2} \dd y_1  \int_{-L/2}^{x_1} \dd y_2  \e^{\ii \lambda (x_1+x_2-y_0-y_1-y_2)} f(y_2,y_1,y_0).
\end{align*}
\end{exm}

\subsection{Connections with the QNLS symmetric integral operators}

We now highlight why the operators $a_\lambda ,b^\pm_\lambda,c^\pm_\lambda,d_{\lambda}$ are relevant to the study of the QNLS model. First of all, we have
\begin{lem} \label{abcdrestr}
Let $J = [-L/2,L/2]$ with $L \in \bR_{>0}$, $\gamma \in \bR$, $\lambda \in \bC$ and $F \in \ca H_N$. Then
\begin{align*} 
a_\lambda F|_{J^N_+} &= A_\lambda F|_{J^N_+} & 
\ca S^{(N+1)}b^\pm_\lambda  F|_{J^{N+1}_+} &= B_\lambda F|_{J^{N+1}_+}\\
c^\pm_\lambda  F|_{J^{N-1}_+} &= C_\lambda F|_{J^{N-1}_+} & 
d_{\lambda}F|_{J^N_+} &= D_\lambda F|_{J^N_+}. 
\end{align*}
\end{lem}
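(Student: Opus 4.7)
The strategy is to establish each of the four identities by comparing, term by term, the elementary non-symmetric integral operators with their symmetric counterparts from Definition \ref{Eopsdefn}, exploiting (i) the restriction of $\bm x$ to the fundamental alcove, which collapses the step-function factors, and (ii) the symmetry of $F$, which makes the replacement operators $\phi_{\bm i \to \bm y}$ agree with the "delete-and-append" prescription used in the symmetric formulae.

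First I would dispatch the $a_\lambda$ and $d_\lambda$ identities. On $J^N_+$ the vector $\bm x$ has strictly decreasing entries, so the multiplication operator $\theta_{\bm i}$ appearing in $\bar e^\pm_{\lambda;\bm i}$ vanishes unless $i_1<\ldots<i_n$, i.e.\ unless $\bm i \in \f I^n_N \subset \f i^n_N$. For such $\bm i$ and for symmetric $F$, one has $(\phi_{\bm i \to \bm y}F)(\bm x) = F(\bm x_{\hat{\bm \imath}}, y_1, \ldots, y_n)$ by rearrangement of arguments, and the integration bounds and exponential factors match those of $\bar E^\pm_{\lambda;\bm i}$ verbatim; summing over $n$ and over $\bm i \in \f I^n_N$ gives $A_\lambda F|_{J^N_+}$ and $D_\lambda F|_{J^N_+}$ as required. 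The same mechanism takes care of the $c^\pm_\lambda$ case: on $J^{N-1}_+$, $\theta_{\bm i}$ reduces the sum from $\f i^n_{N-1}$ to $\f I^n_{N-1}$, and for $\bm i$ in the latter set the composition $\phi_{\bm i \to \bm y} \check \phi^\pm(\cdot)F$ on symmetric $F$ reproduces $F(\bm x_{\hat{\bm \imath}}, y_1, \ldots, y_{n+1})$, matching $\check E_{\lambda;\bm i}F$ with the appropriate combinatorial normalization.

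The main obstacle is the $b^\pm_\lambda$ identity, which requires the external symmetrizer. Since both $\ca S^{(N+1)} b^\pm_\lambda F$ and $B_\lambda F$ lie in $\ca H_{N+1}$, it suffices to prove equality on $J^{N+1}_+$. Expanding $\ca S^{(N+1)}$, one obtains a double sum over $w \in S_{N+1}$ and $\bm i \in \f i^n_N$. Within a summand the theta factor $\theta_{N+1\,\bm i}(w^{-1}\bm x)$ (for $b^-_\lambda$; the case of $b^+_\lambda$ is analogous with $1$ in place of $N+1$) forces $w(N+1) < w(i_1) < \ldots < w(i_n)$. Reparametrising by $k := w(N+1)$ and $\bm j := (w(i_1),\ldots,w(i_n))$, this selection condition identifies $(k,\bm j)$ with an element of $\f I^{n+1}_{N+1}$. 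I would then observe that for each fixed $(k,\bm j)$ the summand depends only on $(k,\bm j)$ and the integration variables (by symmetry of $F$), while the number of contributing pairs $(w,\bm i)$ is exactly $N!$: the map $w$ is prescribed on the $(n{+}1)$-element subset $\{N{+}1, i_1,\ldots,i_n\}$, and there are $(N-n)!$ freedoms on the complement, together with $N!/(N-n)!$ choices of $\bm i \in \f i^n_N$ mapping to $\bm j$. Combined with the $((N+1)!)^{-1}$ from the symmetrizer, this yields the factor $1/(N+1)$ appearing in $B_\lambda$, after which matching the integration bounds and the exponentials term by term with $\hat E_{\lambda;(k,\bm j)}F$ completes the proof.

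Throughout, the routine part is the bookkeeping in Definition \ref{eopsdefn} (using $x_{i_0}=L/2$, $x_{i_{n+1}}=-L/2$), and the non-routine step is the combinatorial reindexing in the $b^\pm$ case.
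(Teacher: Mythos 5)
Your proposal is correct and follows essentially the same route as the paper: the paper's proof reduces the claim to \rfl{eopsrestr} (the termwise identification, on the fundamental alcove, of the elementary operators $\bar e^\pm_{\lambda;\bm i}$, $\check e^\pm_{\lambda;\bm i}$, $\hat e^\pm_{\lambda;\bm i}$ with their symmetric counterparts, the step functions selecting exactly the increasing multi-indices $\bm i \in \f I^n_N$ and the symmetry of $F$ reconciling the replacement operators with the delete-and-append prescription), followed by the remark that the extra constraint involving the label $1$ resp.\ $N+1$ in $\hat e^\pm_{\lambda;\bm i}$ forces one to symmetrize in order to recover the full sum over $\f I^{n+1}_{N+1}$. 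Your explicit $N!/(N+1)!$ orbit count in the $b^\pm_\lambda$ case is precisely the combinatorial bookkeeping that the paper leaves implicit.
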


\begin{proof}
By summing over all suitable $\bm i$ in \rfl{eopsrestr}; we remark that in the definition of $b^\pm$ there is an extra condition on $\bm i$ (involving the labels 1 and $N+1$, respectively), so that we need to symmetrize to get a sum of all $\bm i \in \f I^{n+1}_{N+1}$.
\end{proof}

\begin{prop}[Symmetric integral operators as restrictions of non-symmetric integral operators] \label{abcdrestrABCD}
Let $L \in \bR_{>0}$, $\gamma \in \bR$ and $\lambda \in \bC$. 
Then
\[ a_\lambda |_{\ca H_N}= A_\lambda, \quad \ca S^{(N+1)} b^\pm_\lambda |_{\ca H_N} = B_\lambda , \quad
c^\pm_\lambda |_{\ca H_N}= C_\lambda , \quad d_{\lambda}|_{\ca H_N} = D_\lambda . \]
\end{prop}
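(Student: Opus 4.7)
The plan is to leverage \rfl{abcdrestr}, which already establishes that each non-symmetric integral operator coincides with its symmetric QISM counterpart when both sides are restricted to the fundamental alcove $J^N_+$, and then promote these alcove-wise identities to identities in the ambient $\L^2$-spaces by exploiting $S_N$-invariance.

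First I would handle $a_\lambda$ and $d_\lambda$. Given $F \in \ca H_N$, property \ref{abcdpermutation} of \rfp{abcdproperties} gives $[w, a_\lambda]|_{\f h_N} = [w, d_\lambda]|_{\f h_N} = 0$ for every $w \in S_N$, so $a_\lambda F$ and $d_\lambda F$ are themselves $S_N$-invariant and hence belong to $\ca H_N$. Since $A_\lambda F$ and $D_\lambda F$ also lie in $\ca H_N$, and \rfl{abcdrestr} gives coincidence on $J^N_+$, the common symmetry forces equality on the dense subset $J^N_\n{reg} = \bigcup_{w \in S_N} w J^N_+$, and therefore as elements of $\f h_N = \L^2(J^N)$.

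For $b^\pm_\lambda$, symmetrization does the work directly: $\ca S^{(N+1)} b^\pm_\lambda F$ is $S_{N+1}$-invariant by construction and hence lies in $\ca H_{N+1}$, as does $B_\lambda F$; since \rfl{abcdrestr} already states $\ca S^{(N+1)} b^\pm_\lambda F|_{J^{N+1}_+} = B_\lambda F|_{J^{N+1}_+}$, the same symmetry argument as above extends this identity to all of $\f h_{N+1}$.

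For $c^\pm_\lambda$, rather than verifying symmetry preservation separately (property \ref{abcdpermutation} lists this only for $c^+$), I would pass to formal adjoints, which is legitimate because property \ref{abcdbounded} extends all of these operators to genuine bounded endomorphisms of the Fock space. Using the relations $(b^\pm_\lambda)^* = c^\mp_{\bar \lambda}$ from property \ref{abcdformaladjoint} and $B_\lambda^* = C_{\bar \lambda}$ from \rfeqn{ABCDadjoint}, together with the self-adjointness of the projection $\ca S^{(N+1)}$, taking the adjoint of the already-established identity $\ca S^{(N+1)} b^\pm_\lambda|_{\ca H_N} = B_\lambda$ yields $c^\mp_{\bar \lambda} \ca S^{(N+1)}|_{\ca H_{N+1}} = C_{\bar \lambda}$; since $\ca S^{(N+1)}$ acts as the identity on $\ca H_{N+1}$, and relabeling $\bar \lambda \mapsto \lambda$ covers both signs, one concludes $c^\mp_\lambda|_{\ca H_{N+1}} = C_\lambda$. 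The only real subtlety is keeping track of symmetry across the various domains and codomains; the actual integral manipulations are fully absorbed into \rfl{abcdrestr}.
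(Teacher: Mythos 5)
Your treatment of $a_\lambda$, $d_\lambda$ and $\ca S^{(N+1)}b^\pm_\lambda$ is correct and is essentially the paper's own argument: invariance of the output (from \rfp{abcdproperties} \ref{abcdpermutation}, resp.\ from the symmetrizer) plus coincidence on the fundamental alcove (\rfl{abcdrestr}) forces equality on all of $J^N_\n{reg}$ and hence in $\L^2$. The problem is your handling of $c^\pm_\lambda$. The adjoint argument does not deliver what you claim. The identity $\ca S^{(N+1)}b^\pm_\lambda|_{\ca H_N}=B_\lambda$ is an identity of operators whose \emph{domain} is $\ca H_N$; dualizing it, you get, for $F\in\ca D_N$ and $G\in\ca D_{N+1}\cap\ca H_{N+1}$,
\[ \innerrnd{\ca S^{(N+1)}b^\pm_\lambda F}{G} = \innerrnd{F}{c^\mp_{\bar\lambda}G} = \innerrnd{F}{\ca S^{(N)}c^\mp_{\bar\lambda}G}, \]
where the last step uses that $F$ is symmetric and $\ca S^{(N)}$ is the orthogonal projection onto $\ca H_N$. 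Since $F$ ranges only over \emph{symmetric} test functions, the pairing only sees the symmetric component of $c^\mp_{\bar\lambda}G$, and the conclusion is $\ca S^{(N)}c^\mp_{\bar\lambda}|_{\ca H_{N+1}}=C_{\bar\lambda}$, not $c^\mp_{\bar\lambda}|_{\ca H_{N+1}}=C_{\bar\lambda}$. The missing ingredient is precisely the fact that $c^\pm_\lambda$ maps symmetric functions to symmetric functions, and the adjoint computation is structurally incapable of detecting a non-symmetric component in $c^\mp_\lambda G$. Nor is the gap closed by combining with \rfl{abcdrestr}: for $N\geq 3$ a function can vanish on the fundamental alcove and have vanishing symmetrization without being zero (take it supported on two non-fundamental alcoves with opposite values), so ``correct on the alcove'' plus ``correct symmetrization'' does not imply equality.

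The paper closes exactly this hole by two separate facts you cannot avoid: \rfp{abcdproperties} \ref{abcdpermutation} gives $[w,c^+_\lambda]|_{\f h_{N+1}}=0$, whence $c^+_\lambda$ preserves symmetry, and \rfl{eopssymmetric} gives $c^+_\lambda|_{\ca H_N}=c^-_\lambda|_{\ca H_N}$, which transfers the property to $c^-_\lambda$ (whose equivariance is genuinely different, cf.\ \rfeqn{echeckmin}, which involves the shifted permutation $w_+$ rather than $w$). Once symmetry of the output is known, the alcove identity from \rfl{abcdrestr} finishes the proof exactly as in your $a$, $d$, $b$ cases, and no adjoint is needed at all.
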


\begin{proof}
We make the following observations.
\begin{itemize}
\item From \rfp{abcdproperties} \ref{abcdpermutation} we may conclude that $a_\lambda $ and $d_{\lambda}$ map $S_N$-invariant functions to $S_N$-invariant functions. 
\item The two operators $\ca S^{(N+1)} b^\pm_\lambda $ evidently map to $S_{N+1}$-invariant functions and by virtue of \rfl{abcdrestr} their actions on an element of $\ca H_N$ coincide on the alcove and hence everywhere. 
\item $c^+_{\lambda}$ maps $S_{N+1}$-invariant functions to $S_N$-invariant functions due to \rfp{abcdproperties} \ref{abcdpermutation}; more precisely,
\[ c^+_{\lambda} \ca S^{(N+1)} f = c^+_{\lambda} \ca S^{(N)} \ca S^{(N+1)} f = \ca S^{(N)} c^+_{\lambda} \ca S^{(N+1)} f, \]
for an arbitrary $f \in \f h_{N+1}$.
Also, by summing over all suitable $\bm i$ in \rfl{eopssymmetric} we obtain that $c^+_{\lambda}|_{\ca H_N} = c^-_{\lambda}|_{\ca H_N}$, so that $c^-_{\lambda}$ has the same property. 
\end{itemize}
Since $S_N$-invariant functions are determined by their behaviour on the fundamental alcove $J^N_+$, application of \rfl{abcdrestr} completes the proof.
\end{proof}

An important consequence of \rfp{abcdrestrABCD} is the following
\begin{cor} \label{bsymmetrizer}
Let $L \in \bR_{>0}$, $\gamma \in \bR$ and $\lambda \in \bC$. 
\[ B_\lambda  \ca S^{(N)} = \ca S^{(N+1)} b^-_{\lambda} \in \Hom(\f h_N,\ca H_{N+1}). \]
\end{cor}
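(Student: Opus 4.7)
The plan is to chain together the two facts we already have at our disposal: Proposition \ref{abcdrestrABCD}, which tells us that on the symmetric subspace $\ca H_N \subset \f h_N$ the operator $\ca S^{(N+1)} b^-_\lambda$ agrees with $B_\lambda$, and Proposition \ref{abcdproperties}\ref{abcdpermutation}, which says that $\ca S^{(N)}$ (acting on $\f h_{N+1}$ through the inclusion $S_N \hookrightarrow S_{N+1}$ that fixes the last coordinate) commutes with $b^-_\lambda: \f h_N \to \f h_{N+1}$. The first fact handles the ``symmetric'' part of the identity, and the second fact is what allows us to upgrade it from $\ca H_N$ to the whole of $\f h_N$.

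Concretely, I would take an arbitrary $f \in \f h_N$ and compute as follows. Using the commutation statement $[\ca S^{(N)}, b^-_\lambda]|_{\f h_N}=0$ from Proposition \ref{abcdproperties}\ref{abcdpermutation} (in which $\ca S^{(N)}$ on the left-hand side of the commutator is viewed as an element of $\bC S_{N+1}$) together with the projector identity $\ca S^{(N+1)} \ca S^{(N)} = \ca S^{(N+1)}$ inside $\bC S_{N+1}$, write
\[ \ca S^{(N+1)} b^-_\lambda f = \ca S^{(N+1)} \ca S^{(N)} b^-_\lambda f = \ca S^{(N+1)} b^-_\lambda \ca S^{(N)} f. \]
Since $\ca S^{(N)} f \in \ca H_N$, Proposition \ref{abcdrestrABCD} now applies and gives $\ca S^{(N+1)} b^-_\lambda \ca S^{(N)} f = B_\lambda \ca S^{(N)} f$. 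Chaining the equalities yields $\ca S^{(N+1)} b^-_\lambda f = B_\lambda \ca S^{(N)} f$, which is the claim.

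There is no genuine obstacle here — the whole content is that the two inputs fit together cleanly — but the one point that deserves a careful sentence in the write-up is the identification $\ca S^{(N+1)} \ca S^{(N)} = \ca S^{(N+1)}$ in $\bC S_{N+1}$, which holds because $w\,\ca S^{(N+1)} = \ca S^{(N+1)}$ for every $w \in S_{N+1}$ and in particular for every $w$ in the copy of $S_N$ fixing the last letter. It is also worth explicitly noting that $\ca S^{(N+1)} b^-_\lambda$ lands in $\ca H_{N+1}$, so that the claimed equality is in $\Hom(\f h_N, \ca H_{N+1})$ as stated.
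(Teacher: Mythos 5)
Your argument is correct and is essentially the paper's own proof run in the opposite direction: the paper starts from $B_\lambda \ca S^{(N)} = \ca S^{(N+1)} b^-_\lambda \ca S^{(N)}$ via \rfp{abcdrestrABCD} and then removes the inner $\ca S^{(N)}$ using \rfp{abcdproperties} \ref{abcdpermutation}, whereas you insert it; the ingredients (those two propositions plus the absorption identity $\ca S^{(N+1)}\ca S^{(N)}=\ca S^{(N+1)}$) are identical. No gaps.
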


\begin{proof}
From \rfp{abcdrestrABCD} we obtain that $B_\lambda  \ca S^{(N)} = \ca S^{(N+1)} b^-_{\lambda} \ca S^{(N)}$. Applying \rfp{abcdproperties} \ref{abcdpermutation} gives the desired result.
\end{proof}

\rfc{bsymmetrizer} can be neatly conveyed as a commuting diagram (\rff{commutingdiagram}).
\begin{figure}[h] 
\[ \xymatrix@R=1.5cm@C=2cm{\f h_N\ar[d]^{\ca S^{(N)}} \ar[r]^{b^-_{\lambda}} & \f h_{N+1} \ar[d]^{\ca S^{(N+1)}} \\ \ca H_N \ar[r]^{B_\lambda } & \ca H_{N+1} } \]
\caption{Diagrammatic presentation of \rfc{bsymmetrizer}} \label{commutingdiagram} 
\end{figure}

\subsection{Relations among the non-symmetric integral operators}

The following proposition will be used to reduce statements involving the operators $a_\lambda $, $c^\pm_\lambda $ and $d_{\lambda}$ to statements about $b^\pm_\lambda $.
\begin{prop} \label{abcdproducts}
Let $L \in \bR_{>0}$ and $\gamma \in \bR$. 
\begin{enumerate}
\item \label{brel1}
For $\lambda \in \bC$, in $\End(\f h_N)$ we have 
\[ b^+_\lambda \check \phi^+(-L/2) = \check \phi^+(-L/2) s_1 b^+_\lambda, \qquad
b^-_{\lambda} \check \phi^-(L/2) = \check \phi^-(L/2) s_N b^-_{\lambda}, \]
i.e. for $f \in \f d_N$ and $(x_1,\ldots,x_N) \in J^N$ we have
\begin{align*}
\left( b^+_\lambda \check \phi^+(-L/2) f \right)(x_1,\ldots,x_N) &= \left( b^+_\lambda f \right)(x_1,-L/2,x_2,\ldots,x_N), \\
\left( b^-_\lambda \check \phi^-(L/2) f \right)(x_1,\ldots,x_N) &= \left( b^-_\lambda f \right)(x_1,\ldots,x_{N-1},L/2,x_N).
\end{align*}
\item \label{adintermsofb}
For $\lambda \in \bC$, in $\End(\f h_N)$ we have 
\[ a_\lambda  = \check \phi^+(-L/2) b^+_\lambda , \qquad d_{\lambda} = \check \phi^-(L/2) b^-_{\lambda}, \]
i.e. for $f \in \f d_N$ and $(x_1,\ldots,x_N) \in J^N$ we have
\begin{align*}
\left(a_\lambda f\right)(x_1,\ldots,x_N) &= \left(b^+_\lambda f\right)(-L/2,x_1,\ldots,x_N), \\
\left(d_{\lambda}f\right)(x_1,\ldots,x_N) &= \left(b^-_{\lambda}f\right)(x_1,\ldots,x_N,L/2).
\end{align*}
\item \label{adintermsofb2}
Let $\lambda, \mu \in \bC$. Then in $\Hom(\f h_N,\f h_{N+1})$
\[ b^+_\lambda a_\mu = \check \phi^+(-L/2) s_1 b^+_\lambda  b^+_\mu , 
\qquad b^-_{\lambda}d_{\mu} = \check \phi^-(L/2) s_{N+1} b^-_{\lambda} b^-_{\mu}, \]
i.e. for $f \in \f d_N$ and $(x_1,\ldots,x_{N+1}) \in J^{N+1}$ we have
\begin{align*}
\left(b^+_\lambda a_\mu f\right)(x_1,\ldots,x_{N+1}) &= \left(b^+_\lambda b^+_\mu f\right)(x_1,-L/2,x_2,\ldots,x_{N+1}), \\
\left(b^-_{\lambda}d_{\mu}f\right)(x_1,\ldots,x_{N+1}) &= \left(b^-_{\lambda}b^-_{\mu}f\right)(x_1,\ldots,x_N,L/2,x_{N+1}).
\end{align*}
\item \label{cintermsofb} 
Let $\lambda \in \bC$. In $\Hom(\f h_{N+1},\f h_N)$ we have
\begin{align*} 
\gamma c^+_{\lambda} &= [\check \phi^-(L/2), a_\lambda ] = \check \phi^+(-L/2) [\check \phi^-(L/2), b^+_\lambda ], \\
\gamma c^-_{\lambda} &= [\check \phi^+(-L/2), d_{\lambda}] = \check \phi^-(L/2) [\check \phi^+(-L/2),  b^-_{\lambda}]. 
\end{align*}
\end{enumerate}
\end{prop}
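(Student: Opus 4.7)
The plan is to prove the four parts in the order (ii), (i), (iii), (iv), each reducing to the previous ones or to direct expansion via \rfd{eopsdefn}. For (ii), I would expand both sides in terms of the elementary integral operators. For each $\bm j \in \f i^n_{N-1}$, substituting $x_1 = -L/2$ into the explicit formula for $\hat e^+_{\lambda;\bm j}: \f h_{N-1} \to \f h_N$ converts the prefactor $\e^{\ii \lambda x_1}$ into $\e^{-\ii \lambda L/2}$, turns the ``boundary'' convention $x_{i_{n+1}+1} = x_1$ into $-L/2$, and collapses the step function $\theta_{\bm j_+\,1}$ on the $(N+1)$-tuple to $\theta_{\bm j}$ on the remaining $N$-tuple (the residual condition $x_{j_n}>-L/2$ being vacuous on $J$). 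A direct reindexing identifies the result with $\bar e^+_{\lambda;\bm j}$ acting on $\f h_{N-1}$, and summing over $n$ and $\bm j$ produces $\check\phi^+(-L/2)\, b^+_\lambda = a_\lambda$. The dual identity $d_\lambda = \check\phi^-(L/2)\, b^-_\lambda$ is proved by the parallel substitution $x_{N+1}=L/2$ into $\hat e^-$.

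For (i), the same bookkeeping applies. Expanding both sides term by term, the contributions of $\hat e^+_{\lambda;\bm j'}$ to the RHS indexed by $\bm j' \in \f i^n_N$ with $1 \in \bm j'$ vanish once $x_2 = -L/2$ is imposed (the $\theta_{\bm j'_+\,1}$ factor forces a contradiction, analogous to the theta argument in (iv) below), while those with $1 \notin \bm j'$ are in bijection with $\f i^n_{N-1}$ via $(j'_m)_m \leftrightarrow (j'_m-1)_m$ and reproduce the LHS; the role of $s_1$ is precisely to swap the freshly inserted new integration variable with the variable that is being frozen at $-L/2$. Part (iii) then follows in two lines: rewrite $a_\mu = \check\phi^+(-L/2)\, b^+_\mu$ via (ii), and apply (i) to move $b^+_\lambda$ past $\check\phi^+(-L/2)$; the $b^-,d$ version is parallel.

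For (iv), the second equality $\gamma c^+_\lambda = \check\phi^+(-L/2)\,[\check\phi^-(L/2), b^+_\lambda]$ follows quickly from the first by substituting $a_\lambda = \check\phi^+(-L/2)\, b^+_\lambda$ from (ii) into the commutator and invoking $[\check\phi^+(-L/2),\check\phi^-(L/2)]=0$ from \rfeqn{eqn42}. For the first equality, I would compute $\check\phi^-(L/2)\, a_\lambda f - a_\lambda\, \check\phi^-(L/2) f$ for $f \in \f d_{N+1}$ by expanding $a_\lambda = \sum_{n,\bm j} \gamma^n \bar e^+_{\lambda;\bm j}$: the contributions indexed by $\bm j \in \f i^n_{N+1}$ with $N+1 \notin \bm j$ cancel exactly against the corresponding $\bm j \in \f i^n_N$ terms of $a_\lambda\,\check\phi^-(L/2)f$, while for $\bm j$ with $N+1 \in \bm j$ the step function $\theta_{\bm j}$ after $x_{N+1}=L/2$ forces $j_1 = N+1$ (any other position would require some $x_{j_{p-1}}>L/2$); writing $\bm j = (N+1,\bm i)$ with $\bm i \in \f i^{n-1}_N$ and relabelling the integration variables, the surviving contributions assemble into $\gamma c^+_\lambda$ after summation over $n$ and $\bm i$. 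The $\gamma c^-_\lambda$ identity is obtained in parallel with $\pm L/2$ and $b^+, a, c^+$ versus $b^-, d, c^-$ interchanged. The principal obstacle throughout is this combinatorial bookkeeping in (iv)—namely, verifying that the step-function vanishing genuinely forces $j_1 = N+1$ and that the reindexed integration ranges and exponential prefactors agree with those of $\check e^\pm$ under the boundary conventions $x_{i_0}=L/2$ and $x_{i_{n+1}}=-L/2$.
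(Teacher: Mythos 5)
Your proposal is correct and follows essentially the same route as the paper's proof: parts (i) and (ii) by direct expansion into the elementary operators with the same observation that terms whose multi-index touches the frozen coordinate are killed by the step functions (or by integrals with coinciding limits), part (iii) by composing (i) with (ii), and part (iv) by splitting the multi-index sum into the terms hitting the boundary coordinate (which assemble into $\gamma c^\pm_\lambda$, with the step function forcing that index into the leading position) and the remaining terms (which cancel in the commutator). The only cosmetic difference is in (iv), where you expand $a_\lambda$ via the $\bar e^+_{\lambda;\bm j}$ and deduce the $b^+_\lambda$ form afterwards, while the paper splits the expansion of $b^+_\lambda$ directly and then invokes (ii); these are equivalent.
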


\begin{proof}
The properties are proven separately.
\begin{enumerate}
\item This follows from the definitions of $\hat e^\pm_\lambda$. For the product $\check \phi^+(-L/2) s_1 b^+_\lambda $ note that those terms in the underlying summation over $\bm i \in \f i^n_{N}$ with $i_1=1$ vanish, because of $x_1 \in J$: either $\theta(\ldots>-L/2>x_1)=0$ will occur or there will be an integral with upper and lower limit  equal to $x_1=-L/2$.
A similar argument can be made for the product $\check \phi^-(L/2) s_N b^-_{\lambda}$.
\item This follows straightforwardly from the definitions of $\bar e^\pm$.
\item This is established by combining Properties \ref{brel1}-\ref{adintermsofb}.
\item In $(b^+_\lambda f)(L/2,\bm x,-L/2)$ ($\bm x \in \bR^N$) split the summation over $\bm i \in \f i^n_{N}$ into those terms with $i_1=N$, corresponding to the terms appearing in $(c^+_{\lambda}f)(\bm x)$, and those with $i_1 \ne N$, corresponding to the terms appearing in $(a_\lambda f)(L/2,\bm x)$. Then use $\check \phi^+(-L/2) \check \phi^-(L/2) b^+_\lambda $ = $\check \phi^-(L/2) a_\lambda $ by virtue of Property \ref{adintermsofb}.
A similar argument for $b^-_{\lambda}$ is used. \qedhere
\end{enumerate}
\end{proof}

\begin{rem}
Properties \ref{adintermsofb}-\ref{cintermsofb} in \rfp{abcdproducts} may in fact be used to define $a_\lambda, d_\lambda, c^\pm_\lambda$. 
These identities are generalizations of \rfp{ABCDproperties} \ref{ADasBandCasAD}, the corresponding identities for the symmetric operators $A_\lambda,B_\lambda ,C_\lambda ,D_\lambda  \in \End(\ca H)$.

\end{rem}

\section{Recursive construction of the pre-wavefunction}

In order to prove the important recursive property \rfeqn{prewavefnrecursion}, we will present commutation relations between $b^\pm_{\lambda}$ and $\partial_{j,\gamma}$ which are backed up by lemmas from Appendix \ref{nonsymmintopscommrels}. Crucially, this will also rely on \rfc{prewavefnDunkl}.\\

First we need to address the difference in domains between the Dunkl-type operators (acting on smooth functions on the regular vectors $\bR^N_\n{reg}$) and the creation operators $b^\pm_\lambda$ (acting on square-integrable functions on the bounded interval $[-L/2,L/2]$).
Note that the formulae defining the $b^\pm_\lambda$ can be seen as defining linear maps, denoted by the same symbols, from $\ca C(\bR^N)$ to $\ca C(\bR^{N+1})$, for any nonnegative integer $N$; the reader should compare this to the interpretation of $B_\lambda$ as an element of $\Hom(\ca C(\bR^N)^{S_N},\ca C(\bR^{N+1})^{S_{N+1}})$ in Section \ref{infiniteJ}.
Furthermore, both $b^\pm_\lambda$ restrict to elements of $\Hom(\ca{CB}^\infty(\bR^N),\ca{CB}^\infty(\bR^{N+1}))$, as they clearly preserve smoothness on the (open) alcoves of $\bR^N$.
By restricting the functions upon which the $b^\pm_\lambda$ act to $\bR^N_\n{reg}$, finally we may view $b^\pm_\lambda$ as elements of $\Hom(\ca C^\infty(\bR^N_\n{reg}),\ca C^\infty(\bR^{N+1}_\n{reg}))$.
Therefore, compositions of the form $\partial_{j,\gamma}^{(N+1)} b^\pm_{\lambda}$ for $j=1,\ldots,N+1$ and $ b^\pm_{\lambda} \partial_{j,\gamma}^{(N)}$ for $j=1,\ldots,N$ make sense as linear operators: $\ca C^\infty(\bR^N_\n{reg}) \to \ca C^\infty(\bR^{N+1}_\n{reg})$. 

\begin{prop} \label{bDunkl}
Let $\gamma \in \bR$ and $\lambda \in \bC$. 
Then in $\Hom(\ca C^\infty(\bR^N_\n{reg}),\ca C^\infty(\bR^{N+1}_\n{reg}))$ we have
\begin{align} 
\partial^{(N+1)}_{N+1,\gamma} b^-_{\lambda} &= \ii \lambda b^-_{\lambda}, & 
\partial^{(N+1)}_{1,\gamma} b^+_\lambda &= \ii \lambda b^+_\lambda, \label{eqn500} \\
\partial_{j,\gamma}^{(N+1)} b^-_{\lambda} &= b^-_{\lambda} \partial_{j,\gamma}^{(N)}, & 
\partial_{j+1,\gamma}^{(N+1)} b^+_\lambda &= b^+_\lambda \partial_{j,\gamma}^{(N)}, \qquad j=1,\ldots,N. \label{eqn501}
\end{align}
\end{prop}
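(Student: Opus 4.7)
I focus on the two identities for $b^-_{\lambda}$; the identities for $b^+_{\lambda}$ follow by the evident symmetry (reversal of arguments, equivalently conjugation by the longest element $w_0 \in S_{N+1}$, which exchanges the roles of the ``first'' and ``last'' coordinate). The proof is local on alcoves of $\bR^{N+1}_\n{reg}$ and will assemble from the term-by-term commutation identities of Appendix \ref{nonsymmintopscommrels}.

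For the eigenvalue identity $\partial_{N+1,\gamma}^{(N+1)}\, b^-_{\lambda} = \ii\lambda\, b^-_{\lambda}$: since $N+1$ is maximal, \rfeqn{Lambdaalt} gives
\[ \partial_{N+1,\gamma}^{(N+1)} = \partial_{N+1} - \gamma \sum_{k=1}^N \theta_{N+1\,k}\, s_{N+1\,k}. \]
Applied to $\hat e^-_{\lambda;\bm i}$, the operator $\partial_{N+1}$ produces (i) the main eigenvalue contribution $\ii\lambda$ from $\e^{\ii\lambda x_{N+1}}$, (ii) no contribution from the prefactor $\theta_{N+1\,\bm i}$ (locally constant on alcoves), and (iii) a boundary term from $x_{N+1}=x_{i_0}$ appearing as the outermost integration limit. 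The boundary term collapses the integral over $y_1$ at $y_1=x_{N+1}$; summed over $n$ and $\bm i\in\f i^n_N$, these boundary contributions pair bijectively, with a shift of $n$ by one, against the terms $\gamma\,\theta_{N+1\,k}\, s_{N+1\,k}\, \hat e^-_{\lambda;\bm i'}$ coming from the reflection piece of $\partial_{N+1,\gamma}^{(N+1)}$. This cancellation is the key point and is exactly the content of one of the appendix lemmas.

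For the intertwining identity $\partial_{j,\gamma}^{(N+1)}\, b^-_{\lambda} = b^-_{\lambda}\, \partial_{j,\gamma}^{(N)}$ with $1 \leq j \leq N$: I expand both Dunkl operators via \rfeqn{Lambdaalt} and compare three pieces. First, $[\partial_j, b^-_{\lambda}]=0$, verified by direct differentiation of $\hat e^-_{\lambda;\bm i}$ and observing that the boundary contributions from limits $x_{i_m}=x_j$ or $x_{i_{m-1}}=x_j$ telescope across neighbouring multi-indices. Second, the reflection terms $\pm\gamma\,\theta_{\bullet}\, s_{j\,k}$ with $k \leq N$ intertwine with $b^-_{\lambda}$ by permuting the arguments of $f$ and relabelling $\bm i \mapsto s_{j\,k}\bm i$, producing the right-hand side $b^-_{\lambda}\partial_{j,\gamma}^{(N)}$. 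Third, the ``extra'' term on the left, $\gamma\,\theta_{N+1\,j}\, s_{j\,N+1}\, b^-_{\lambda}$, has no counterpart on the right and must be shown to vanish term-by-term; this is because on any alcove with $\theta_{N+1\,j}=1$ we have $x_{N+1}>x_j$, which forces $j$ to appear as some $i_m$ in the index set of each surviving $\hat e^-_{\lambda;\bm i}$, and then $s_{j\,N+1}$ interacts compatibly with the nested integrations, cancelling against the telescoped residues from the first piece.

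The main obstacle is the combinatorial bookkeeping: each $\hat e^-_{\lambda;\bm i}$ carries a nested product of step functions and chained integrals whose structure depends delicately on $\bm i$, and every partial derivative or reflection reorganizes these nests in an $\bm i$-dependent way. The bulk of the work is therefore packaged into the term-by-term identities of Appendix \ref{nonsymmintopscommrels}; once these are granted, summation over $n$ and $\bm i \in \f i^n_N$, collected by powers of $\gamma$, assembles directly into \rfeqn{eqn500} and \rfeqn{eqn501}.
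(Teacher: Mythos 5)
Your treatment of the eigenvalue identities \rfeqn{eqn500} follows essentially the paper's route: expand in powers of $\gamma$, extract the factor $\ii\lambda$ from the exponential prefactor, and match the boundary terms produced by differentiating the outermost integration limit $y_1=x_{N+1}$ against $\gamma\Lambda^{(N+1)}_{N+1}b^-_{\lambda;n}$ (this is \rfl{bnlem1}); the only imprecision is that the top-degree terms $\Lambda^{(N+1)}_{N+1}b^-_{\lambda;N}$ have no partner under your ``bijective pairing'' and must be shown to vanish separately, which is \rfl{bnlem2}.

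The argument for the intertwining identities \rfeqn{eqn501} has a genuine gap: two of your three claims are false, and they do not fail in a way that cancels. Already for $N=1$, $j=1$ one computes $[\partial_1,b^-_\lambda]=\gamma[\partial_1,\hat e^-_{\lambda;1}]=-\gamma\,\theta_{2\,1}s_{1\,2}\hat e^-_\lambda\ne 0$, so the boundary contributions do \emph{not} telescope to zero; and $\theta_{2\,1}s_{1\,2}b^-_\lambda$ contains the nonzero term $\theta_{2\,1}s_{1\,2}\hat e^-_\lambda$ at order $\gamma^0$, so the ``extra'' reflection term does not vanish term-by-term either. Your justification of that vanishing --- that $\theta_{N+1\,j}=1$ forces $j$ to occur in $\bm i$ --- is incorrect: a term $\hat e^-_{\lambda;\bm i}$ with $j\notin\bm i$ imposes no constraint relating $x_j$ to $x_{N+1}$ and survives; only at top degree $n=N$ is every $j$ forced into $\bm i$, which is the content of \rfl{bnlem5}. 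Likewise the reflections $s_{j\,k}$ with $k\le N$ do not simply commute past $b^-_\lambda$ after relabelling: by \rfl{bnlem4} the difference $(\Lambda^-_j)^{(N+1)}b^-_{\lambda;n}-b^-_{\lambda;n}\Lambda^{(N)}_j$ is a nonzero sum of terms of the form $\theta_{j\,i_l}\hat e^-_{\bm i}\theta_{i_l\,j}s_{i_l\,j}-s_{i_l\,j}\theta_{j\,i_l}\hat e^-_{\bm i}\theta_{i_l\,j}$. The actual mechanism (\rfls{bnlem3}{bnlem4}) is that the nonzero boundary terms of $[\partial_j,b^-_{\lambda;n+1}]$ split into a piece equal to $-\theta_{N+1\,j}s_{j\,N+1}b^-_{\lambda;n}$, cancelling the extra reflection present in $\Lambda^{(N+1)}_j$ but absent from $\Lambda^{(N)}_j$, and a piece exactly equal to the nonzero discrepancy $(\Lambda^-_j)^{(N+1)}b^-_{\lambda;n}-b^-_{\lambda;n}\Lambda^{(N)}_j$; none of your three pieces vanishes on its own, and asserting that they do leaves the order-by-order identity unproved. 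A secondary issue: reducing the $b^+$ case to the $b^-$ case by conjugation with $w_0$ does not go through as stated, because the Dunkl-type operators do not conjugate equivariantly under $w_0$ (the operators $\Lambda_j$ pick up extra terms); the paper instead runs the analogous argument for $b^+$ directly.
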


\begin{proof}
We write $\partial^{(N)}_j$ for the partial derivative $\partial_j$ acting on functions defined on (an open set in) $\bR^N$.
For \rfeqn{eqn500}, by collecting powers of $\gamma$ we see that it suffices to prove
\begin{gather}
(\partial_{N+1}^{(N\! +\! 1)}-\ii \lambda)b^-_{\lambda;0} = 0, \qquad 
(\partial_{1}^{(N\! +\! 1)}-\ii \lambda)b^+_{\lambda;0} = 0, \label{eqn301} \\
\begin{aligned}
(\partial_{N+1}^{(N\! +\! 1)}-\ii \lambda)b^-_{\lambda;n+1} &= \Lambda^{(N\! +\! 1)}_{N+1} b^-_{\lambda;n}, \\
(\partial_{1}^{(N\! +\! 1)}-\ii \lambda)b^+_{\lambda;n+1} &= \Lambda^{(N\! +\! 1)}_{1} b^+_{\lambda;n}, \end{aligned} \qquad \n{for } n=0,\ldots,N-1,   \label{eqn302}  \\
\Lambda^{(N\! +\! 1)}_{N+1} b^-_{\lambda;N} = 0, \qquad 
\Lambda^{(N\! +\! 1)}_{1} b^+_{\lambda;N} = 0.  \label{eqn303} 
\end{gather}
\rfeqn{eqn301} follows from $(b^-_{\lambda;0}f)(x_1,\ldots,x_{N+1}) = \e^{\ii \lambda x_{N+1}} \, f(x_1,\ldots,x_N)$ and $(b^+_{\lambda;0}f)(x_1,\ldots,x_{N+1}) $ $= \e^{\ii \lambda x_1} f(x_2,\ldots,x_{N+1})$.
\rfeqn{eqn302} is established in \rfl{bnlem1}, and \rfeqn{eqn303} in \rfl{bnlem2}.\\

Again by collecting powers of $\gamma$, for \rfeqn{eqn501} we see that it is sufficient to prove
\begin{gather}
\partial_j^{(N\! +\! 1)} b^-_{\lambda;0} = b^-_{\lambda;0}  \partial_j^{(N)}, \qquad
\partial_{j+1}^{(N\! +\! 1)} b^+_{\lambda;0} = b^+_{\lambda;0} \partial_j^{(N)}, \label{eqn440} \\
\begin{aligned}
\partial_j^{(N\! +\! 1)} b^-_{\lambda;n+1} \! - \! \Lambda^{(N\! +\! 1)}_{j} b^-_{\lambda;n} &= b^-_{\lambda;n+1} \partial_j^{(N)} \! - \!  b^-_{\lambda;n} \Lambda^{(N)}_{j},\\
\partial_{j+1}^{(N\! +\! 1)} b^+_{\lambda;n+1} \! - \! \Lambda^{(N\! +\! 1)}_{j+1} b^+_{\lambda;n} &= b^+_{\lambda;n+1} \partial_j^{(N)} \! - \! b^+_{\lambda;n} \Lambda^{(N)}_{j}, \end{aligned} \quad \n{for } n=0,\ldots,N-1,   \label{eqn441}  \\
\Lambda^{(N\! +\! 1)}_{j} b^-_{\lambda;N} = b^-_{\lambda;N} \Lambda^{(N)}_{j}, \qquad 
\Lambda^{(N\! +\! 1)}_{j+1} b^+_{\lambda;N} = b^+_{\lambda;N} \Lambda^{(N)}_{j}.   \label{eqn442}
\end{gather}
We note that \rfeqn{eqn440} is trivial since $(b^-_{\lambda;0} f)(x_1,\ldots,x_{N+1}) = \e^{\ii \lambda x_{N+1}} f(x_1,\ldots,x_N)$ and $(b^+_{\lambda;0} f)(x_1,\ldots,x_{N+1}) = \e^{\ii \lambda x_1} f(x_2,\ldots,x_{N+1})$.
\rfeqn{eqn441} follows from \rfls{bnlem3}{bnlem4} and \rfeqn{eqn442} is established in \rfl{bnlem5}.
\end{proof}

We can now prove one of the main results of this thesis.
\begin{thm}[Recursive construction for the pre-wavefunction] \label{brecursion}
Let $\gamma \in \bR$ and $(\lambda_1,\ldots,\lambda_N) \in \bC^N$.
We have 
\begin{equation} \label{psirecursion1} \psi_{\lambda_1,\ldots,\lambda_N} = b^-_{\lambda_N}  \ldots b^-_{\lambda_1} \Vac 
= b^+_{\lambda_1}  \ldots b^+_{\lambda_N} \Vac \in \ca{CB}^\infty(\bR^N).\end{equation}
Hence, we have the recursion
\begin{equation} \label{psirecursion2} 
\psi_{\lambda_1,\ldots,\lambda_N} = b^-_{\lambda_N}  \psi_{\lambda_1,\ldots,\lambda_{N-1}} 
= b^+_{\lambda_1}  \psi_{\lambda_2,\ldots,\lambda_N} \in \ca{CB}^\infty(\bR^N). 
\end{equation}
Since the functions $\psi_{\bm \lambda}$ and the operators $b^\pm_\mu$ do not depend on $L$, we may interpret \rfeqnser{psirecursion1}{psirecursion2} as identities in $\f h_N([-L/2,L/2])$, where we have denoted the restriction of $\psi_{\bm \lambda}$ to the bounded interval $[-L/2,L/2]$ by the same symbol. 
\end{thm}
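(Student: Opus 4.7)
The plan is to prove the identity $\psi_{\lambda_1,\ldots,\lambda_N} = b^-_{\lambda_N} \cdots b^-_{\lambda_1} \Vac$ by induction on $N$, then deduce the identity for $b^+$ by the same argument with roles interchanged, and observe that \rfeqn{psirecursion2} is an immediate consequence of \rfeqn{psirecursion1}. The key idea is to recognise the one-dimensional uniqueness statement provided by \rfc{prewavefnDunkl}: it suffices to show that $f_N := b^-_{\lambda_N}\cdots b^-_{\lambda_1} \Vac$ lies in $\ca{CB}^\infty(\bR^N)$, solves the Dunkl-type system \rfeqn{Dunklrepsystem}, and agrees with $\psi_{\bm\lambda}$ at a single point so as to fix the overall scalar.

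I would set up the induction with base case $N=1$, where $f_1(x) = \e^{\ii \lambda_1 x} = \psi_{\lambda_1}(x)$ follows directly from \rfp{abcdproperties}\ref{abcdvacuum}. For the inductive step, writing $\bm \lambda' = (\lambda_1,\ldots,\lambda_{N-1})$ and assuming $\psi_{\bm\lambda'} = f_{N-1}$, we have $f_N = b^-_{\lambda_N}\psi_{\bm\lambda'}$. The Dunkl eigenvalue equations for $f_N$ then follow directly from \rfp{bDunkl}: for $j=N$ the first identity in \rfeqn{eqn500} gives $\partial_{N,\gamma}^{(N)} f_N = \ii \lambda_N f_N$, while for $j<N$ the first identity in \rfeqn{eqn501} combined with the induction hypothesis and \rfl{Dunklrepprewavefn} yields
\[
\partial_{j,\gamma}^{(N)} f_N = b^-_{\lambda_N} \partial_{j,\gamma}^{(N-1)} \psi_{\bm\lambda'} = \ii \lambda_j \, b^-_{\lambda_N} \psi_{\bm\lambda'} = \ii \lambda_j f_N.
\]
The main technical obstacle is the verification that $f_N$ genuinely lies in $\ca{CB}^\infty(\bR^N)$ (so that \rfc{prewavefnDunkl} may be invoked): smoothness on $\bR^N_\n{reg}$ is built into the structure of $b^-_{\lambda_N}$ as a finite sum of integrations and replacements of variables, but continuity across the hyperplanes $V_{j\,k}$ requires an argument in the spirit of \rfl{propopcont}, exploiting the fact that each step function $\theta_{N\,\bm i}$ appearing in $\hat e^-_{\lambda;\bm i}$ is paired with an integration whose region degenerates to zero length precisely on the hyperplane where the step function is discontinuous, so the one-sided limits match.

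Once $f_N \in \ca{CB}^\infty(\bR^N)$ satisfies \rfeqn{Dunklrepsystem}, \rfc{prewavefnDunkl} forces $f_N = c\,\psi_{\bm\lambda}$ for some $c \in \bC$. To pin down $c=1$, I would restrict to the fundamental alcove $\bR^N_+$: since $x_N$ is the smallest coordinate there, every step function $\theta_{N\,\bm i}$ with $n \geq 1$ vanishes on $\bR^N_+$, so only the $n=0$ contribution $\hat e^-_{\lambda_N}$ survives and one obtains $f_N|_{\bR^N_+}(\bm x) = \e^{\ii \lambda_N x_N}\,\psi_{\bm\lambda'}|_{\bR^{N-1}_+}(x_1,\ldots,x_{N-1})$. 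The induction hypothesis together with $\psi_{\bm\lambda'}|_{\bR^{N-1}_+} = \e^{\ii \bm\lambda'}$ (which holds because the propagation operator acts as the identity on the fundamental alcove) then gives $f_N|_{\bR^N_+} = \e^{\ii \bm\lambda}= \psi_{\bm\lambda}|_{\bR^N_+}$, so $c=1$. The identity involving $b^+$ is handled by an entirely parallel argument, now using the second identities in \rfeqn{eqn500} and \rfeqn{eqn501} and restricting to $\bR^N_+$ via the fact that $x_1$ is the \emph{largest} coordinate there, so that the step functions $\theta_{\bm i_+ \, 1}$ attached to $\hat e^+_{\lambda; \bm i}$ kill all $n\geq 1$ terms. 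The final remark about restricting to $[-L/2,L/2]$ is immediate because neither $\psi_{\bm\lambda}$ nor the operators $b^\pm_\mu$ refer to $L$.
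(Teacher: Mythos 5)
Your proposal is correct and follows essentially the same route as the paper's own proof: invoke the uniqueness statement of \rfc{prewavefnDunkl}, verify the Dunkl eigenvalue equations via \rfp{bDunkl} (your induction on $N$ is just the paper's repeated application of \rfeqn{eqn501} followed by \rfeqn{eqn500} phrased recursively), and fix the scalar by restricting to the fundamental alcove where only the $n=0$ term of $b^\pm_\lambda$ survives. The membership $f_N \in \ca{CB}^\infty(\bR^N)$ that you flag as the main technical obstacle is handled in the paper by the discussion immediately preceding \rfp{bDunkl} (where $b^\pm_\lambda$ is shown to map $\ca{CB}^\infty(\bR^N)$ into $\ca{CB}^\infty(\bR^{N+1})$), and your sketch of why continuity survives across the hyperplanes is the right reason.
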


\begin{proof}
First of all, we note that $b^-_{\lambda_N}  \ldots b^-_{\lambda_1} \Vac$ and $b^+_{\lambda_1}  \ldots b^+_{\lambda_N}  \Vac$ are both elements of $\ca{CB}^\infty(\bR^N)$ since $\Vac \in \ca{CB}^\infty(\bR^0) \cong \bC$.
In light of \rfc{prewavefnDunkl}, showing that $b^-_{\lambda_N}  \ldots b^-_{\lambda_1} \Vac, b^+_{\lambda_1}  \ldots b^+_{\lambda_N}  \Vac \in \ca{CB}^\infty(\bR^N)$ are solutions of the system \rf{Dunklrepsystem} would imply that they are at least proportional to $\psi_{\lambda_1,\ldots,\lambda_N}$.
Let $j=1,\ldots,N$.
Then 
\[ \partial^{(N)}_{j,\gamma} b^-_{\lambda_N}  \ldots b^-_{\lambda_1} = b^-_{\lambda_N}  \ldots b^-_{\lambda_{j+1}} \partial^{(j)}_{j,\gamma} b^-_{\lambda_j} \ldots b^-_{\lambda_1} \]
by repeated application of \rfeqn{eqn501} in \rfp{bDunkl}.
We invoke \rfeqn{eqn500} and obtain 
\[ \partial^{(N)}_{j,\gamma} b^-_{\lambda_N}  \ldots b^-_{\lambda_1} = 
b^-_{\lambda_N}  \ldots b^-_{\lambda_{j+1}} \ii \lambda_j b^-_{\lambda_j} \ldots b^-_{\lambda_1} = 
\ii \lambda_j b^-_{\lambda_N}  \ldots b^-_{\lambda_{j+1}} b^-_{\lambda_j} \ldots b^-_{\lambda_1}, \]
an identity in $\Hom(\ca C^\infty(\bR^0_\n{reg}),\ca C^\infty(\bR^N_\n{reg}))$, with $\ca C^\infty(\bR^0_\n{reg}) \cong \bC$.
We obtain that $b^-_{\lambda_N}  \ldots b^-_{\lambda_1} \Vac \in \ca{CB}^\infty(\bR^N)$ is an eigenfunction of $\partial_{j,\gamma}^{(N)}$ with eigenvalue $\ii \lambda_j$, for all $j=1,\ldots,N$, as required. Similarly, we see that $b^+_{\lambda_1}  \ldots b^+_{\lambda_N}  \Vac$ is an eigenfunction of $\partial_{j,\gamma}$ with eigenvalue $\lambda_j$:
\begin{align*}
\partial_{j,\gamma}^{(N)} b^+_{\lambda_1}  \ldots b^+_{\lambda_N}  &= b^+_{\lambda_1}  \partial^{(N-1)}_{j-1,\gamma}  b^+_{\lambda_2}  \ldots b^+_{\lambda_N} \; = \; \ldots \; = \; b^+_{\lambda_1}  \ldots b^+_{\lambda_{j-1}}  \partial^{(j)}_{1,\gamma} b^+_{\lambda_j}  \ldots b^+_{\lambda_N}  \\
&= b^+_{\lambda_1}  \ldots b^+_{\lambda_{j-1}}  \ii \lambda_j b^+_{\lambda_j}  \ldots b^+_{\lambda_N}  \; = \; \ii \lambda_j b^+_{\lambda_1}  \ldots b^+_{\lambda_{j-1}}  b^+_{\lambda_j}  \ldots b^+_{\lambda_N} .
\end{align*}

It follows that $b^-_{\lambda_N}  \ldots b^-_{\lambda_1} \Vac$ and $b^+_{\lambda_1}  \ldots b^+_{\lambda_N}  \Vac$ are multiples of $\psi_{\lambda_1,\ldots,\lambda_N}$.
To see that they are in fact equal, it suffices to show that the functions coincide on the fundamental alcove ${\bR}^N_+ = \set{ \bm x \in \bR^N }{x_1 > \ldots > x_N}$.
\rfeqn{propoprestr} yields $\psi_{\lambda_1,\ldots,\lambda_N}|_{{\bR}^N_+} = \e^{\ii (\lambda_1,\ldots,\lambda_N)}$.
To see that $b^-_{\lambda_N}  \ldots b^-_{\lambda_1} \Vac|_{{\bR}^N_+} = \e^{\ii (\lambda_1,\ldots,\lambda_N)}$ we need a simple inductive argument; in the fundamental alcove for all $j=1,\ldots,N-1$, we have $x_N<x_j$, so that for all $f_{N-1} \in \ca{CB}^\infty(\bR^{N-1})$ and $(x_1,\ldots,x_N) \in {\bR}^N_+$ we have 
\[(b^-_{\lambda_N}  f_{N-1})(x_1,\ldots,x_N) = \e^{\ii \lambda_N x_N} f_{N-1}(x_1,\ldots,x_{N-1}). \]
Now $(x_1,\ldots,x_{N-1}) \in {\bR}^{N-1}_+$ and if $f_{N-1} = b^-_{\lambda_{N-2}} f_{N-2}$ for some $f_{N-2} \in \ca{CB}^\infty(\bR^{N-2})$ then again we see that $f_{N-1}(x_1,\ldots,x_{N-1}) = \e^{\ii \lambda_{N-1} x_{N-1}} f_{N-2}(x_1,\ldots,x_{N-2})$ so that 
\[ (b^-_{\lambda_N}  b^-_{\lambda_{N-1}} f_{N-2})(x_1,\ldots,x_N) = \e^{\ii (\lambda_{N-1}x_{N-1}+\lambda_N x_N)} f_{N-2}(x_1,\ldots,x_{N-3}).\] 
Continuing thus, we obtain that $(b^-_{\lambda_N}  \ldots b^-_{\lambda_1} \Vac)(x_1,\ldots,x_N)$ = 
$\e^{\ii \sum_j \lambda_j x_j} \Vac$ for \\
$(x_1,\ldots,x_N) \in {\bR}^N_+$; hence $b^-_{\lambda_N}  \ldots b^-_{\lambda_1} \Vac|_{{\bR}^N_+} = \e^{\ii (\lambda_1,\ldots,\lambda_N)}$ as required. A similar argument shows that $b^+_{\lambda_1}  \ldots b^+_{\lambda_N}  \Vac|_{{\bR}^N_+} = \e^{\ii (\lambda_1,\ldots,\lambda_N)}$.
\end{proof}

\begin{rem} 
\rfeqn{psirecursion2} can be seen as a Fourier transform of the statement in \rfp{prewavfnrecursionregrep}, and as a generalization of \rfeqn{Bethewavefnrecursion} to non-symmetric wavefunctions.
\end{rem}

We have demonstrated how the pre-wavefunction can be generated by the successive application of operators $b^-_{\lambda}$.
We are now able to give an alternative proof that the $\Psi_{\bm \lambda}$ defined using the QISM (i.e. by repeated application of the operators $B_{\lambda_j}$) is in fact equal to the $\Psi_{\bm \lambda}$ defined using the dAHA (i.e. by symmetrizing the pre-wavefunction):
\begin{cor} \label{Brecursion}
Let $L \in \bR_{>0}$, $\gamma \in \bR$ and $\bm \lambda = (\lambda_1,\ldots,\lambda_N) \in \bC^N$.
Then 
\[ \Psi_{\bm \lambda} =  B_{\lambda_N} \Psi_{\lambda_1,\ldots,\lambda_{N-1}}. \]
Hence, $\Psi_{\bm \lambda} = B_{\lambda_N} \ldots B_{\lambda_1} \Vac$. 
These may be viewed as identities in both $\ca{CB}^\infty(\bR^N)^{S_N}$ and $\ca H_N([-L/2,L/2])$.
\end{cor}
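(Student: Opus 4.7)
The proof is essentially a one-line deduction by combining the two key results just established, followed by a straightforward induction. The plan is to start from the definition $\Psi_{\bm \lambda} = \ca S^{(N)} \psi_{\bm \lambda}$ (\rfd{wavefndefn}), apply the recursion \rfeqn{psirecursion2} from \rft{brecursion} to peel off the last wavenumber, and then slide the symmetrizer past $b^-_{\lambda_N}$ using \rfc{bsymmetrizer} to convert $\ca S^{(N)} b^-_{\lambda_N}$ into $B_{\lambda_N} \ca S^{(N-1)}$.

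More explicitly, writing $\bm \lambda' = (\lambda_1,\ldots,\lambda_{N-1})$, I would argue
\[ \Psi_{\bm \lambda} \; = \; \ca S^{(N)} \psi_{\bm \lambda} \; = \; \ca S^{(N)} b^-_{\lambda_N} \psi_{\bm \lambda'} \; = \; B_{\lambda_N} \ca S^{(N-1)} \psi_{\bm \lambda'} \; = \; B_{\lambda_N} \Psi_{\bm \lambda'}, \]
where the second equality is \rft{brecursion}, the third is \rfc{bsymmetrizer} (applied to $\psi_{\bm \lambda'} \in \f h_{N-1}([-L/2,L/2])$ after restricting from $\bR^{N-1}$), and the last is again \rfd{wavefndefn}. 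The second claim, $\Psi_{\bm \lambda} = B_{\lambda_N} \ldots B_{\lambda_1} \Vac$, then follows by an obvious induction on $N$, noting that the base case $N=0$ reduces to $\Psi_\emptyset = \Vac$.

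Since both $\psi_{\bm \lambda}$ and the operators $b^\pm_\mu$ are independent of $L$, the identity may be read either in $\ca{CB}^\infty(\bR^N)^{S_N}$ (where the dAHA definition of $\Psi_{\bm \lambda}$ lives) or, after restriction to $[-L/2,L/2]^N$, in $\ca H_N([-L/2,L/2])$ (where the QISM operators $B_{\lambda_j}$ act). The one point deserving care is that $\psi_{\bm \lambda'}$ is \emph{not} $S_{N-1}$-invariant, so one cannot naively factor $\ca S^{(N-1)}$ through $b^-_{\lambda_N}$; the subtlety is precisely handled by \rfc{bsymmetrizer}, which is formulated on the full space $\f h_{N-1}$ rather than on $\ca H_{N-1}$. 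There is no real obstacle once \rft{brecursion} and \rfc{bsymmetrizer} are in hand — indeed, this corollary is the pay-off which justifies introducing the non-symmetric operators $b^\pm_\mu$ in the first place, furnishing a new derivation of the QISM recursion \rfeqn{PsiQISMintro} via the degenerate affine Hecke algebra.
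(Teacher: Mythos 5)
Your proof is correct and is essentially identical to the paper's: the same four-step chain $\Psi_{\bm\lambda} = \ca S^{(N)}\psi_{\bm\lambda} = \ca S^{(N)} b^-_{\lambda_N}\psi_{\bm\lambda'} = B_{\lambda_N}\ca S^{(N-1)}\psi_{\bm\lambda'} = B_{\lambda_N}\Psi_{\bm\lambda'}$, invoking \rfeqn{psirecursion2} and \rfc{bsymmetrizer} in the same places. Your added remarks on the induction and on why \rfc{bsymmetrizer} (rather than naive commutation of the symmetrizer) is needed are accurate but do not change the argument.
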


\begin{proof}
Straightforwardly we have, by virtue of \rfeqn{psirecursion2} and \rfc{bsymmetrizer}, 
\[ \Psi_{\lambda_1,\ldots,\lambda_N} = \ca S^{(N)} \psi_{\lambda_1,\ldots,\lambda_N} = 
\ca S^{(N)} b^-_{\lambda_N}  \psi_{\lambda_1,\ldots,\lambda_{N-1}} = 
B_{\lambda_N} \ca S^{(N-1)} \psi_{\lambda_1,\ldots,\lambda_{N-1}} = 
B_{\lambda_N} \Psi_{\lambda_1,\ldots,\lambda_{N-1}}. \qedhere \]
\end{proof}

\begin{prop}[Relation with the non-symmetric propagation operator] \label{bintertwiner}
Let $L \in \bR_{>0}$, $\gamma \in \bR$ and $\mu \in \bC$. We have
\[ P^{(N+1)}_\gamma \hat e^-_\mu = b^-_{\mu} P^{(N)}_\gamma \in \Hom(\ca{CB}^\infty(\bR^N),\ca{CB}^\infty(\bR^{N+1})). \]
\end{prop}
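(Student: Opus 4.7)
The plan is to verify the identity first on plane waves, which is essentially a repackaging of \rft{brecursion}, and then bootstrap to all of $\ca{CB}^\infty(\bR^N)$ by a density argument modelled on the proof of \rfp{dAHAintrep}.

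For the plane wave step, fix $\bm\lambda \in \bC^N$. Directly from the definition of $\hat e^-_\mu$, one has $\hat e^-_\mu \e^{\ii \bm\lambda}(x_1,\ldots,x_{N+1}) = \e^{\ii \mu x_{N+1}} \e^{\ii \inner{\bm\lambda}{(x_1,\ldots,x_N)}} = \e^{\ii(\bm\lambda,\mu)}(x_1,\ldots,x_{N+1})$, so by \rfd{prewavefndefn},
\[
P^{(N+1)}_\gamma \hat e^-_\mu \e^{\ii \bm\lambda} \;=\; P^{(N+1)}_\gamma \e^{\ii(\bm\lambda,\mu)} \;=\; \psi_{\bm\lambda,\mu}.
\]
On the other hand $P^{(N)}_\gamma \e^{\ii \bm\lambda} = \psi_{\bm\lambda}$ and \rft{brecursion} gives $b^-_\mu \psi_{\bm\lambda} = \psi_{\bm\lambda,\mu}$. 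Hence both sides coincide on every plane wave.

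To extend to arbitrary $f \in \ca{CB}^\infty(\bR^N)$, I would first treat the case $f \in \ca C^\infty(\bR^N)$ by writing $f = \int_{\bR^N} \dd\bm\lambda\, \tilde f(\bm\lambda) \e^{\ii \bm\lambda}$ and exchanging the two operators with the Fourier integral, exactly as in the proof of \rfp{dAHAintrep}; since $\hat e^-_\mu$ is a bounded multiplication-and-shift operator, while $P^{(N)}_\gamma$, $P^{(N+1)}_\gamma$ and $b^-_\mu$ are defined alcove-by-alcove as finite sums of reflections and integrals with bounded limits (cf. \rfl{propopcont2} and \rfd{eopsdefn}), the interchange of operators and Fourier integral is routine, and both sides produce $\int_{\bR^N} \dd\bm\lambda\, \tilde f(\bm\lambda) \psi_{\bm\lambda,\mu}$. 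For general $f \in \ca{CB}^\infty(\bR^N)$ I would argue alcove-by-alcove: on each $v^{-1} \bR^N_+$ with $v \in S_N$ the restriction $f|_{v^{-1} \bR^N_+}$ extends smoothly to an open neighbourhood of the closed alcove, the $\ca C^\infty$ case applies to this extension, and continuity across the hyperplanes (built into the codomain $\ca{CB}^\infty(\bR^{N+1})$ via \rfl{propopcont}) forces the resulting alcove-wise equalities to assemble into a single identity on $\bR^{N+1}$.

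The main obstacle will be the extension step: verifying that $\hat e^-_\mu$, $b^-_\mu$, and $P^{(\bullet)}_\gamma$ all commute with the Fourier integral in a sufficiently uniform way, and then checking that the alcove-by-alcove extensions glue consistently at the hyperplane boundaries. A possible alternative, avoiding Fourier analysis entirely, is to compute both sides directly on each alcove $(w' s_{m \, N+1})^{-1} \bR^{N+1}_+$ of $\bR^{N+1}$ using the coset decomposition of \rfl{symmgrouprecursion}: the left-hand side localizes to $s_{m \, N+1}^{-1}(w')^{-1} w'_\gamma (s_{m \, N+1})_\gamma \hat e^-_\mu f$, while on the right-hand side the step functions $\theta_{N+1 \, \bm i}$ in \rfd{eopsdefn} select precisely those $\bm i$ lying in $\{w(m+1),\ldots,w(N+1)\}$, and the resulting integrals match term-by-term with the action of the $s_{j,\gamma}$ factors of $(s_{m\,N+1})_\gamma$ applied to $\hat e^-_\mu f$.
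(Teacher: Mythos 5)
Your proposal is correct and follows essentially the same route as the paper: the paper's proof verifies the identity on plane waves via $P^{(N+1)}_\gamma \hat e^-_\mu \e^{\ii \bm \lambda} = \psi_{\bm\lambda,\mu} = b^-_\mu P^{(N)}_\gamma \e^{\ii\bm\lambda}$ (using \rfd{prewavefndefn} and \rft{brecursion}) and then simply invokes completeness of the plane waves in the sense of Section \ref{infiniteJ}. Your additional care about interchanging the operators with the Fourier integral and gluing alcove-by-alcove is a more explicit version of the density step the paper leaves implicit, not a different argument.
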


\begin{proof}
Note that the $\e^{\ii \bm \lambda}$ ($\bm \lambda \in \bC^N$) form a complete set in $\f h_N(\bR^N)$ (in the sense discussed in Section \ref{infiniteJ}).
Hence the proposition follows from the observations
\[ P^{(N+1)}_\gamma \hat e^-_\mu \e^{\ii \bm \lambda} = P^{(N+1)}_\gamma \e^{\ii(\bm \lambda,\mu)} = \psi_{\bm \lambda,\mu} \]
and
\[ b^-_{\mu} P^{(N)}_\gamma  \e^{\ii \bm \lambda} = b^-_{\mu} \psi_{\bm \lambda} = \psi_{\bm \lambda,\mu}. \qedhere \]
\end{proof}

The commuting diagram in \rff{commutingdiagram2} conveys \rfp{bintertwiner} succinctly.
Furthermore, we can summarize the contents of \rft{brecursion} and \rfp{bintertwiner} in \rff{recconstr}.

\begin{figure}[h] 
\[ \xymatrix@R=1.5cm@C=2cm{\f h_N\ar[d]^{P^{(N)}_\gamma} \ar[r]^{\hat e^-_\lambda} & \f h_{N+1} \ar[d]^{P^{(N+1)}_\gamma} \\ \f h_N \ar[r]^{b^-_{\lambda}} & \f h_{N+1} } \]
\caption{Diagrammatic presentation of \rfp{bintertwiner} which conveys the notion that first applying the propagation operator $P^{(N)}_\gamma$ and then the $\gamma$-dependent particle creation operator $b^-_\lambda$ gives the same result as first applying the free particle ($\gamma=0$) creation operator $e^-_\lambda$ and then $P^{(N+1)}_\gamma$.}
\label{commutingdiagram2} 
\end{figure}

\begin{figure}[h] 
\[ \xymatrix@R=1.2cm@C=1.5cm{
1 \ar[d]^{P^{(0)}_\gamma = 1} \ar[r]^{\hat e^-_{\lambda_1}} & \e^{\ii \lambda_1} \ar[d]^{P^{(1)}_\gamma = 1} \ar[r]^{\hat e^-_{\lambda_2}} & \e^{\ii (\lambda_1,\lambda_2)} \ar[d]^{P^{(2)}_\gamma} \ar[r]^{\hat e^-_{\lambda_3}}& \ldots \ar[r]^{\hspace{-4mm} \hat e^-_{\lambda_{N-1}}} & \e^{\ii (\lambda_1,\ldots,\lambda_{N-1})} \ar[d]^{P^{(N-1)}_\gamma} \ar[r]^{\hat e^-_{\lambda_{N}}} & \e^{\ii (\lambda_1,\ldots,\lambda_{N})} \ar[d]^{P^{(N)}_\gamma}\\ 
1 \ar[d]^{\ca S^{(0)} = 1} \ar[r]^{b^-_{\lambda_1}} & \psi_{\lambda_1} \ar[d]^{\ca S^{(1)} = 1} \ar[r]^{b^-_{\lambda_2}} & \psi_{\lambda_1,\lambda_2} \ar[d]^{\ca S^{(2)}} \ar[r]^{b^-_{\lambda_3}}& \ldots \ar[r]^{\hspace{-4mm} b^-_{\lambda_{N-1}}} & \psi_{\lambda_1,\ldots,\lambda_{N-1}} \ar[d]^{\ca S^{(N-1)}} \ar[r]^{b^-_{\lambda_N}} & \psi_{\lambda_1,\ldots,\lambda_{N}}  \ar[d]^{\ca S^{(N)}} \\  
1 \ar[r]^{B_{\lambda_1}} & \Psi_{\lambda_1} \ar[r]^{B_{\lambda_2}} & \Psi_{\lambda_1,\lambda_2} \ar[r]^{B_{\lambda_3}}& \ldots \ar[r]^{\hspace{-4mm} B_{\lambda_{N-1}}} & \Psi_{\lambda_1,\ldots,\lambda_{N-1}} \ar[r]^{B_{\lambda_N}} & \Psi_{\lambda_1,\ldots,\lambda_{N}} } \]
\caption{Scheme for recursive constructions of the wavefunction $\Psi_{\bm \lambda}$. In the three rows we have the non-symmetric plane waves, the non-symmetric pre-wavefunctions (highlighting the new formulae from \rft{brecursion}), and the symmetric Bethe wavefunctions. 
Note that the three operators $\hat e^-_\lambda$, $b^-_{\lambda}$, $B_\lambda $ coincide when acting on $\f h_0$ or $\f h_1$. 
The plane waves can be obtained from the pre-wavefunctions by setting $\gamma =0$. This also works on the level of the creation operators; when setting $\gamma =0$ in $b^-_{\lambda}$ one obtains $\hat e^-_\lambda$.
} \label{recconstr} 
\end{figure}

\section{Commutation relations} \label{abcdcommrelssec}

Let $J=[-L/2,L/2]$ be bounded.
We will now prove commutation relations of the non-symmetric integral operators $a_\lambda,b^\pm_\lambda,c^\pm_\lambda,d_\lambda$, which we will refer to as the \emph{non-symmetric Yang-Baxter relations}. 
These operators are defined on a dense subset of $\f h_N=\f h_N(J)$; however, we will only prove these relations here on the subset
\begin{equation} \f z_N = \overline{\left\langle \psi_{\bm \lambda}: \bm \lambda \in \bC^N \right\rangle}, \end{equation}%
\nc{rzlw}{$\f z_N$}{Completion of span of all $\psi_{\bm \lambda}$ in $\f h_N$. \nomrefeqpage}%
i.e. the completion of the subspace of $\f h_N$ spanned by the pre-wavefunctions $\psi_{\bm \lambda}$.

\begin{prop} \label{bbcommrel1}
Let $L \in \bR_{>0}$, $\gamma \in \bR$ and $\mu,\nu \in \bC$.
Then 
\[ [b^-_{\mu},b^+_{\nu} ] = 0 \in \Hom(\f z_N,\f z_{N+2}). \]
\end{prop}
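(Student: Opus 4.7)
The plan is to verify the commutator vanishes on the dense spanning set $\{\psi_{\bm\lambda} : \bm\lambda \in \bC^N\}$ and then extend by continuity, since by \rfp{abcdproperties} \ref{abcdbounded} both $b^-_\mu$ and $b^+_\nu$ are bounded on $\f h_\n{fin}$.

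The crucial input is the ``two-sided'' recursion from \rft{brecursion}, which gives
\[ \psi_{\lambda_1,\ldots,\lambda_N} = b^-_{\lambda_N}\cdots b^-_{\lambda_1}\Vac = b^+_{\lambda_1}\cdots b^+_{\lambda_N}\Vac. \]
Applied to a single creation step, this says $b^-_\mu \psi_{\bm\lambda} = \psi_{\bm\lambda,\mu}$ and $b^+_\nu \psi_{\bm\lambda} = \psi_{\nu,\bm\lambda}$ for any $\bm\lambda \in \bC^N$. Iterating once more, I obtain
\[ b^-_\mu b^+_\nu \psi_{\bm\lambda} = b^-_\mu \psi_{\nu,\bm\lambda} = \psi_{\nu,\bm\lambda,\mu}, \qquad b^+_\nu b^-_\mu \psi_{\bm\lambda} = b^+_\nu \psi_{\bm\lambda,\mu} = \psi_{\nu,\bm\lambda,\mu}. \]
Hence $[b^-_\mu,b^+_\nu]\psi_{\bm\lambda}=0$ for every $\bm\lambda \in \bC^N$. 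By linearity this extends to the span of the pre-wavefunctions, and by boundedness of $b^\pm_\lambda$ on finite vectors together with the definition $\f z_N = \overline{\langle\psi_{\bm\lambda}\rangle}$, it extends to all of $\f z_N$.

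There is essentially no obstacle: the whole content of the proposition has been absorbed into \rft{brecursion}, whose content is precisely that the two families $\{b^-_\lambda\}$ and $\{b^+_\lambda\}$ produce the same pre-wavefunction no matter from which side one builds it up. Had \rft{brecursion} not been available, one would have had to prove commutativity by a direct calculation with the elementary integral operators $\hat e^\pm_{\lambda;\bm i}$, comparing the two orderings of integrations and step functions order by order in $\gamma$; this would be the genuinely laborious route that \rft{brecursion} allows us to bypass.
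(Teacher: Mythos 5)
Your proof is correct and is essentially the paper's own argument: both reduce the commutator to the observation that $b^-_{\mu}b^+_{\nu}\psi_{\bm\lambda}$ and $b^+_{\nu}b^-_{\mu}\psi_{\bm\lambda}$ each equal $\psi_{\nu,\bm\lambda,\mu}$ by the two-sided recursion of \rft{brecursion}. Your added remarks on extending by linearity and boundedness to the closure $\f z_N$ merely make explicit what the paper leaves implicit.
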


\begin{proof}
This follows from the fact that $b^-_{\mu}b^+_{\nu}  \psi_{\bm \lambda}$ and $b^+_{\nu}  b^-_{\mu}\psi_{\bm \lambda}$ both equal $\psi_{\nu,\bm \lambda,\mu}$, for $\bm \lambda \in \bC^N$, as per \rft{brecursion}.
\end{proof}

The next theorem generalizes the connection of the Yang-Baxter algebra with the regular representation of the dAHA in momentum space studied in Section \ref{YBalgebradAHAsec} to the non-symmetric case.
\begin{thm} \label{bbcommrel2}
Let $L \in \bR_{>0}$, $\gamma \in \bR$ and $\lambda_1,\lambda_2\lambda_{N+1}, \lambda_{N+2} \in \bC$. Then we have
\begin{align*} 
s_{N+1} b^-_{\lambda_{N+2}} b^-_{\lambda_{N+1}} &= \left( \tilde s_{N+1,\gamma} b^-_{\lambda_{N+2}} b^-_{\lambda_{N+1}} \right) && \in \Hom(\f z_N,\f z_{N+2}) \\
s_1 b^+_{\lambda_1}  b^+_{\lambda_2}  &= \left( \tilde s_{1,\gamma} b^+_{\lambda_1}  b^+_{\lambda_2}  \right) && \in \Hom(\f z_N,\f z_{N+2}).
\end{align*}
Moreover, we have
\begin{align*} 
b^-_{\lambda_{N+2}}d_{\lambda_{N+1}} &= \left( \tilde s_{N+1,\gamma} d_{\lambda_{N+2}}b^-_{\lambda_{N+1}}\right) &&\in \Hom(\f z_N,\f z_{N+1}), \\
b^+_{\lambda_1}  a_{\lambda_2} &= \left( \tilde s_{1,\gamma} a_{\lambda_1}b^+_{\lambda_2}  \right) && \in \Hom(\f z_N,\f z_{N+1}),
\end{align*}
or, alternatively,
\begin{align}  \label{dbcommrel} d_{\lambda_{N+2}}b^-_{\lambda_{N+1}} &= \left( \tilde s_{N+1,\gamma} b^-_{\lambda_{N+2}}d_{\lambda_{N+1}}\right) && \in \Hom(\f z_N,\f z_{N+1})\\  
\label{abcommrel} a_{\lambda_1}b^+_{\lambda_2}  &= \left( \tilde s_{1,\gamma} b^+_{\lambda_1} a_{\lambda_2}\right) && \in \Hom(\f z_N,\f z_{N+1}).\end{align}
\end{thm}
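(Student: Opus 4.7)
The plan is to verify each identity when applied to a pre-wavefunction $\psi_{\bm \lambda}$ with $\bm \lambda \in \bC^N$, and then extend to all of $\f z_N$ by continuity, using that pre-wavefunctions span a dense subset of $\f z_N$ and that the operators involved are bounded (\rfp{abcdproperties} \ref{abcdbounded}). The approach rests on three ingredients: (i) \rft{brecursion}, which gives $b^-_{\lambda_{N+2}} b^-_{\lambda_{N+1}} \psi_{\bm \lambda} = \psi_{\bm \lambda, \lambda_{N+1}, \lambda_{N+2}}$ and the analogous formula for $b^+$; (ii) \rfl{Dunklrepprewavefn} combined with the dAHA involution $\tilde s_{j,\gamma}^2 = 1$, yielding the key relation $s_j \psi_{\bm \mu} = \tilde s_{j,\gamma}^{-1} \psi_{\bm \mu} = \tilde s_{j,\gamma} \psi_{\bm \mu}$; and (iii) \rfp{abcdproducts}, which provides the factorizations $d_\mu = \check \phi^-(L/2) b^-_\mu$, $a_\mu = \check \phi^+(-L/2) b^+_\mu$ together with the compatibilities $b^-_\lambda \check \phi^-(L/2) = \check \phi^-(L/2) s_N b^-_\lambda$ and $b^+_\lambda \check \phi^+(-L/2) = \check \phi^+(-L/2) s_1 b^+_\lambda$.

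For the first pair of identities, the computation is immediate: evaluating the left-hand side on $\psi_{\bm \lambda}$ gives $s_{N+1} \psi_{\bm \lambda, \lambda_{N+1}, \lambda_{N+2}}$ via (i), which equals $\tilde s_{N+1, \gamma} \psi_{\bm \lambda, \lambda_{N+1}, \lambda_{N+2}}$ via (ii); by the delimiting-parenthesis convention of Notation 7 this is precisely $(\tilde s_{N+1,\gamma} b^-_{\lambda_{N+2}} b^-_{\lambda_{N+1}} \psi_{\bm \lambda})$. The $b^+$-identity is handled identically, using index $1$ in place of $N+1$.

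For the mixed identities I substitute the factorization $d_{\lambda_{N+1}} = \check\phi^-(L/2) b^-_{\lambda_{N+1}}$ and then move $\check\phi^-(L/2)$ past the remaining creation operator using the compatibility from \rfp{abcdproducts}, giving $b^-_{\lambda_{N+2}} d_{\lambda_{N+1}} \psi_{\bm \lambda} = \check\phi^-(L/2) s_{N+1} \psi_{\bm \lambda, \lambda_{N+1}, \lambda_{N+2}} = \check\phi^-(L/2) \tilde s_{N+1,\gamma} \psi_{\bm \lambda, \lambda_{N+1}, \lambda_{N+2}}$. On the other hand, $(\tilde s_{N+1,\gamma} d_{\lambda_{N+2}} b^-_{\lambda_{N+1}} \psi_{\bm \lambda}) = \tilde s_{N+1,\gamma} \check\phi^-(L/2) \psi_{\bm \lambda, \lambda_{N+1}, \lambda_{N+2}}$, and the two expressions coincide because $\tilde s_{N+1,\gamma}$ acts only in momentum space whereas $\check\phi^-(L/2)$ acts only in position space, so they commute. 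The $ab$-identity is analogous with $b^+, \check\phi^+(-L/2), s_1, \tilde s_{1,\gamma}$. The alternative forms \rfeqnser{dbcommrel}{abcommrel} then follow by left-multiplying the respective identities by $\tilde s_{N+1,\gamma}$ and $\tilde s_{1,\gamma}$ and invoking $\tilde s_{j,\gamma}^2 = 1$.

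The main obstacle will be bookkeeping: tracking which simple transposition index arises after the index-shift in \rfp{abcdproducts}, and checking that the image of a pre-wavefunction under $\tilde s_{j,\gamma}$ stays in $\f z$ so that the identities live in the stated hom-spaces. The latter follows because $\tilde s_{j,\gamma} = \tilde s_j - \ii\gamma \tilde \Delta_j$ sends $\psi_{\bm \mu}$ to a rational-coefficient combination of the two pre-wavefunctions $\psi_{\bm \mu}$ and $\psi_{s_j \bm \mu}$, with the apparent pole at $\mu_j = \mu_{j+1}$ removable because the numerator vanishes there; continuity in $\bm \mu$ then ensures the statement holds on the full closed span $\f z_N$.
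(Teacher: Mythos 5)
Your proposal is correct and follows essentially the same route as the paper: the $b^\pm b^\pm$ identities come from evaluating on pre-wavefunctions via \rft{brecursion} and the relation $s_j\psi_{\bm\mu}=\tilde s_{j,\gamma}\psi_{\bm\mu}$ from \rfl{Dunklrepprewavefn}, and the mixed identities come from the factorizations in \rfp{abcdproducts} (you use parts \ref{brel1}--\ref{adintermsofb} directly where the paper invokes part \ref{adintermsofb2}, which is derived from those same parts, so this is only a difference in packaging). Your derivation of the alternative forms by applying the involution $\tilde s_{j,\gamma}^2=1$ is exactly the content of the paper's appeal to \rfl{ABBAequivalence}.
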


\begin{proof}
From \rfl{Dunklrepprewavefn}, \rfeqn{prewavefn2} we have, for arbitrary $(\lambda_1,\ldots,\lambda_{N+2}) \in \bC^{N+2}$
\[ s_{N+1} \psi_{\lambda_1,\ldots,\lambda_{N+2}} = \tilde s_{N+1,\gamma} \psi_{\lambda_1,\ldots,\lambda_{N+2}}. \]
Using \rfeqn{psirecursion2} this is equivalent to
\[ s_{N+1} b^-_{\lambda_{N+2}} b^-_{\lambda_{N+1}} \psi_{\lambda_1,\ldots,\lambda_N} = \tilde s_{N+1,\gamma} b^-_{\lambda_{N+2}} b^-_{\lambda_{N+1}} \psi_{\lambda_1,\ldots,\lambda_N}, \]
which yields the desired identity for compositions of $b^-$. 
A similar argument applies for compositions of $b^+$, using  
$s_1 \psi_{\lambda_1,\ldots,\lambda_{N+2}} = \tilde s_{1,\gamma} \psi_{\lambda_1,\ldots,\lambda_{N+2}}$,
and $\psi_{\lambda_1,\ldots,\lambda_{N+2}} = b^+_{\lambda_1} b^+_{\lambda_2} \psi_{\lambda_3,\ldots,\lambda_{N+2}}$.\\

The statements connecting the operators $b^-$ and $d$, and $b^+$ and $a$, respectively, are obtained by sending the appropriate $x_j$ to its minimum or maximum value (i.e., $\pm L/2$) in \rft{bbcommrel2}, relabelling the remaining $x_j$ and using \rfl{abcdproducts} \ref{adintermsofb}-\ref{adintermsofb2}. The alternative formulations \rfeqnser{dbcommrel}{abcommrel} follow from \rfl{ABBAequivalence}.
\end{proof}

``Unpacking'' the notation involving the operators $\tilde s_{j,\gamma}$ in \rft{bbcommrel2} we obtain
\begin{align*}
s_{N+1} b^-_{\lambda} b^-_{\mu} - b^-_{\mu} b^-_{\lambda} &= \frac{\ii \gamma}{\lambda-\mu} \left[ b^-_{\lambda},b^-_{\mu} \right] && \in \Hom(\f z_N,\f z_{N+2}), \\
s_1 b^+_\lambda  b^+_\mu  - b^+_\mu  b^+_\lambda  &= \frac{-\ii \gamma}{\lambda-\mu} \left[ b^+_\lambda ,b^+_\mu  \right] && \in \Hom(\f z_N,\f z_{N+2}), \displaybreak[2] \\
\left[ b^-_{\lambda}, d_{\mu} \right] &= \frac{\ii \gamma}{\lambda-\mu} \left( d_{\lambda}b^-_{\mu}-d_{\mu}b^-_{\lambda} \right) && \in \Hom(\f z_N,\f z_{N+1}), \displaybreak[2] \\
\left[ b^+_\lambda , a_\mu  \right] &= \frac{-\ii \gamma}{\lambda-\mu} \left( a_\lambda b^+_\mu -a_\mu b^+_\lambda  \right) && \in \Hom(\f z_N,\f z_{N+1}).
\end{align*}

\begin{lem} \label{apsidpsilem}
Let $L \in \bR_{>0}$, $\gamma \in \bR$ and $(\bm \lambda,\lambda_{N+1}) \in \bC^{N+1}$. Then
\begin{align}
a_{\lambda_{N+1}}\psi_{\bm \lambda} &= \tilde s_{N,-\gamma} \ldots \tilde s_{1,-\gamma} \tilde s_1 \ldots \tilde s_N \e^{-\ii \lambda_{N+1}L/2} \psi_{\bm \lambda} \nonumber \\
&= \left(1-\ii \gamma \tilde \Delta_{N \, N+1} \right) \ldots \left(1-\ii \gamma \tilde \Delta_{1 \, N+1} \right) \e^{-\ii \lambda_{N+1}L/2} \psi_{\bm \lambda} \label{apsi}\\
d_{\lambda_{N+1}}\psi_{\bm \lambda} &= \tilde s_{N,\gamma} \ldots \tilde s_{1,\gamma} \tilde s_1 \ldots \tilde s_N \e^{\ii \lambda_{N+1}L/2} \psi_{\bm \lambda} \nonumber \\
&= \left(1+\ii \gamma \tilde \Delta_{N \, N+1} \right) \ldots \left(1+\ii \gamma \tilde \Delta_{1 \, N+1} \right) \e^{\ii \lambda_{N+1}L/2} \psi_{\bm \lambda} \label{dpsi}.
\end{align}
\end{lem}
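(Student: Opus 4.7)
The plan is to prove both equalities by induction on $N$, paralleling the proof of the corresponding symmetric statement for $D_\mu\Psi_{\bm\lambda}$ given in Section~\ref{YBalgebradAHAsec}. The essential ingredients will be the two recursions in \rf{psirecursion2} from \rft{brecursion} and the non-symmetric Yang--Baxter commutation relations \rf{dbcommrel} and \rf{abcommrel} from \rft{bbcommrel2}.

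For $N=0$ the identities follow immediately from \rfp{abcdproperties}\ref{abcdvacuum}, both sides reducing to $\e^{\mp\ii\lambda_1 L/2}\Vac$. For the inductive step in the $d$ case I would write $\psi_{\bm\lambda}=b^-_{\lambda_N}\psi_{\lambda_1,\ldots,\lambda_{N-1}}$ and apply \rf{dbcommrel} to obtain
\[ d_{\lambda_{N+1}}\psi_{\bm\lambda}=\tilde s_{N,\gamma}\bigl(b^-_{\lambda_{N+1}}\,d_{\lambda_N}\,\psi_{\lambda_1,\ldots,\lambda_{N-1}}\bigr). \]
The induction hypothesis rewrites $d_{\lambda_N}\psi_{\lambda_1,\ldots,\lambda_{N-1}}$ as a product of deformed transpositions acting on $\e^{\ii\lambda_N L/2}\psi_{\lambda_1,\ldots,\lambda_{N-1}}$ whose only spectral parameters are $\lambda_1,\ldots,\lambda_N$; since $b^-_{\lambda_{N+1}}$ depends only on $\lambda_{N+1}$, it commutes past all of them. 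Combined with $b^-_{\lambda_{N+1}}\psi_{\lambda_1,\ldots,\lambda_{N-1}}=\psi_{\lambda_1,\ldots,\lambda_{N-1},\lambda_{N+1}}$ from \rft{brecursion} and the identity $\e^{\ii\lambda_N L/2}\psi_{\lambda_1,\ldots,\lambda_{N-1},\lambda_{N+1}}=\tilde s_N\bigl(\e^{\ii\lambda_{N+1}L/2}\psi_{\bm\lambda}\bigr)$ (obtained by swapping $\lambda_N\leftrightarrow\lambda_{N+1}$), this completes the induction for the first equality of \rf{dpsi}.

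The $a$ case will be handled in parallel via the dual recursion $\psi_{\bm\lambda}=b^+_{\lambda_1}\psi_{\lambda_2,\ldots,\lambda_N}$ and the commutation relation \rf{abcommrel}. Each application of \rf{abcommrel} produces a deformed transposition of the form $\tilde s_{\lambda_{N+1},\lambda_j;\gamma}$ with natural order $(\lambda_{N+1},\lambda_j)$; recasting this in the canonical ordering of $(\lambda_1,\ldots,\lambda_{N+1})$ uses $\tilde\Delta_{\lambda_k,\lambda_j}=-\tilde\Delta_{\lambda_j,\lambda_k}$ and hence flips the sign of $\gamma$, giving rise to the $\tilde s_{j,-\gamma}$ factors visible in \rf{apsi}. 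The second equality in both \rf{apsi} and \rf{dpsi} will then follow from a purely algebraic identity in the regular representation of the dAHA, analogous to \rfl{regrep20} used in the symmetric proof in Section~\ref{YBalgebradAHAsec}, which rewrites $\tilde s_{N,\pm\gamma}\ldots\tilde s_{1,\pm\gamma}\tilde s_1\ldots\tilde s_N$ as $(1\pm\ii\gamma\tilde\Delta_{N,N+1})\ldots(1\pm\ii\gamma\tilde\Delta_{1,N+1})$.

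The main obstacle will lie in the $a$ case: iterating \rf{abcommrel} generates long-range deformed transpositions that must be rearranged, via the braid relations \rf{dAHArel1}--\rf{dAHArel3} in the regular representation, into the product of simple adjacent transpositions $\tilde s_{N,-\gamma}\ldots\tilde s_{1,-\gamma}\tilde s_1\ldots\tilde s_N$ appearing in the target. The sign-tracking and reordering through this rearrangement will be the most delicate step of the proof.
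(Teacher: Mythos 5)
Your proposal is correct and takes essentially the same route as the paper: the paper's proof is just your induction unrolled into a single telescoping computation, pushing $d_{\lambda_{N+1}}$ through the full product $b^-_{\lambda_N}\cdots b^-_{\lambda_1}\Vac$ via repeated use of \rf{dbcommrel}, evaluating $d_{\lambda_1}\Vac = \e^{\ii\lambda_1 L/2}\Vac$, reassembling the pre-wavefunction with \rft{brecursion} and a cyclic momentum permutation, and invoking \rfl{regrep20} for the divided-difference form, with the $a$-case handled ``analogously'' via $b^+$ and \rf{abcommrel}. Your instinct that the $a$-case sign flip and reordering is the delicate point is sound (the paper glosses over it), but no genuinely different idea is involved.
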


\begin{proof}
Using \rfeqn{dbcommrel} we can move the operator $d$ to the right, as follows:
\begin{align*}
d_{\lambda_{N+1}}\psi_{\bm \lambda} &= d_{\lambda_{N+1}} b^-_{\lambda_N}  \ldots b^-_{\lambda_1} \Vac \; = \; \tilde s_{N+1,\gamma} b^-_{\lambda_{N+1}} d_{\lambda_N} b^-_{\lambda_{N-1}} \ldots b^-_{\lambda_1} \Vac \\
&= \tilde s_{N+1,\gamma} b^-_{\lambda_{N+1}} \tilde s_{N,\gamma} b^-_{\lambda_N}  d_{\lambda_{N-1}} b^-_{\lambda_{N-2}}\ldots b^-_{\lambda_1} \Vac \; = \; \ldots \; = \;  \tilde s_{N,\gamma} \ldots \tilde s_{1,\gamma} b^-_{\lambda_{N+1}} \ldots b^-_{\lambda_2} d_{\lambda_1} \Vac.
\end{align*}
From $d_{\mu} \Vac = \e^{\ii \mu L/2}\Vac$ we obtain
\[ d_{\lambda_{N+1}}\psi_{\bm \lambda} =  \tilde s_{N,\gamma} \ldots \tilde s_{1,\gamma}  \e^{\ii \lambda_1 L/2}\psi_{\lambda_2, \ldots, \lambda_{N+1}} =  \tilde s_{N,\gamma} \ldots \tilde s_{1,\gamma} \tilde s_{1,\gamma} \ldots \tilde s_{N,\gamma} \e^{\ii \lambda_{N+1} L/2}\psi_{\bm \lambda},\]
as required. Using \rfl{regrep20} we also find the other expression for $d_{\lambda_{N+1}}\psi_{\bm \lambda}$. The expressions for $a_{\lambda_{N+1}} \psi_{\bm \lambda}$ are obtained analogously.
\end{proof}

\begin{cor} \label{ddcommrel}
Let $L \in \bR_{>0}$, $\gamma \in \bR$ and $\lambda, \mu \in \bC$. Then $[d_{\lambda},d_{\mu}] = 0 \in \End(\f z_N)$.
\end{cor}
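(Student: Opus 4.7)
The plan is to reduce the identity to a verification on pre-wavefunctions and then apply \rfl{apsidpsilem} twice, exploiting the resulting explicit formula. Since $\f z_N$ is by definition the closed span of $\{\psi_{\bm \nu}: \bm \nu \in \bC^N\}$ and both $d_\lambda$ and $d_\mu$ are bounded on each $\f h_N$ by \rfp{abcdproperties}\ref{abcdbounded}, it suffices to establish $d_\lambda d_\mu \psi_{\bm \nu} = d_\mu d_\lambda \psi_{\bm \nu}$ for an arbitrary $\bm \nu \in \bC^N$.

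Using the second form of \rfl{apsidpsilem}, I would write
\[ d_\mu \psi_{\bm \nu} = \e^{\ii \mu L/2}\prod_{j=N}^{1}(1 + \ii\gamma\tilde\Delta_{j, N+1})\psi_{\bm \nu}, \]
where $\mu$ is regarded as the $(N+1)$-st momentum label, and each $\tilde\Delta_{j, N+1}$ is a divided-difference operator acting on functions of $(\nu_1, \ldots, \nu_N, \mu)$. Since $d_\lambda$ acts on the spatial Hilbert space $\f h_N$ and is independent of all of these momentum labels, it commutes with each of the $\tilde\Delta$'s and with the scalar factor $\e^{\ii\mu L/2}$. Applying \rfl{apsidpsilem} once more to $d_\lambda\psi_{\bm \nu}$, now with $\lambda$ introduced as the $(N+2)$-nd label, I obtain
\[ d_\lambda d_\mu\psi_{\bm \nu} = \e^{\ii(\lambda+\mu)L/2}\prod_{j=N}^{1}(1+\ii\gamma\tilde\Delta_{j, N+1})\prod_{k=N}^{1}(1+\ii\gamma\tilde\Delta_{k, N+2})\psi_{\bm \nu}, \]
with $\psi_{\bm \nu}$ extended trivially to a function of $(\nu_1, \ldots, \nu_N, \mu, \lambda)$. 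Interchanging $\lambda$ and $\mu$ (equivalently, the roles of the $(N+1)$-st and $(N+2)$-nd slots) gives the analogous expression for $d_\mu d_\lambda\psi_{\bm \nu}$.

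The required equality then reduces to the commutativity of the two displayed products of divided-difference operators when applied to $\psi_{\bm \nu}$. For $j \neq k$ the factors $(1 + \ii\gamma\tilde\Delta_{j, N+1})$ and $(1 + \ii\gamma\tilde\Delta_{k, N+2})$ touch disjoint pairs of momentum labels and commute trivially. For $j = k$, the corresponding factors share the label $\nu_j$, and here the reordering can be effected using the identity $(1 + \ii\gamma\tilde\Delta_{j, l}) = \tilde s_{j, l, \gamma}\tilde s_{j, l}$ together with the Coxeter relation \rf{dAHArel1} of \rfp{dAHAregrep} for the three deformed transpositions $\tilde s_{j, N+1, \gamma}$, $\tilde s_{j, N+2, \gamma}$ and $\tilde s_{N+1, N+2, \gamma}$; the residual terms that would obstruct the swap involve $\tilde\Delta_{N+1, N+2}\psi_{\bm \nu}$, which vanishes since $\psi_{\bm \nu}$ depends on neither $\mu$ nor $\lambda$.

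The main obstacle will be the careful bookkeeping of this reordering for general $N$. An essentially equivalent alternative is an induction on $N$: the base case $N = 0$ reduces to $\e^{\ii(\lambda+\mu)L/2}\Vac$ on both sides, and the inductive step, using \rft{brecursion} to write $\psi_{\bm \nu} = b^-_{\nu_N}\psi_{\bm \nu'}$ and iterating \rfeqn{dbcommrel} from \rft{bbcommrel2} to push $d_\lambda d_\mu$ past $b^-_{\nu_N}$, reduces the commutator to the same three-variable braid identity among $\tilde s^\gamma_{\nu_N, \mu}$, $\tilde s^\gamma_{\mu, \lambda}$ and $\tilde s^\gamma_{\nu_N, \lambda}$ in the regular representation of the dAHA.
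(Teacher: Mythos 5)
Your proposal is correct and follows essentially the same route as the paper's proof: reduce by density to the pre-wavefunctions, apply \rfl{apsidpsilem} twice using that $d_\lambda$ commutes with the momentum-space operators, and then reorder the divided-difference factors, with the disjoint-label factors commuting outright and the shared-label obstruction annihilating the operand because it is independent of (hence symmetric in) $\mu$ and $\lambda$. The only cosmetic difference is that the paper packages the shared-label reordering as the identity $\left[\tilde\Delta_{j\,N+1},\tilde\Delta_{j\,N+2}\right](1+\tilde s_{N+1}) = 0$ (\rfl{regrep22}) applied to the blocks $y_{N+1,N+2}(j)$, whereas you derive the same fact from the deformed Coxeter relations.
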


\begin{proof}
It suffices to show that $d_{\lambda_{N+2}}d_{\lambda_{N+1}}\psi_{\bm \lambda}$ is invariant under $\lambda_{N+1} \leftrightarrow \lambda_{N+2}$, for $(\bm \lambda, \lambda_{N+1},\lambda_{N+2}) $ $\in \bC^{N+2}$.
Applying \rfeqn{dpsi} we note that 
\begin{align*} 
d_{\lambda_{N+2}}d_{\lambda_{N+1}}\psi_{\bm \lambda} 
&= \left(1+\ii \gamma \tilde \Delta_{N \, N+1} \right) \ldots \left(1+\ii \gamma \tilde \Delta_{1 \, N+1} \right) \e^{\ii \lambda_{N+1}L/2} d_{\lambda_{N+2}} \psi_{\bm \lambda}\\
&= \left(1+\ii \gamma \tilde \Delta_{N \, N+1} \right) \ldots \left(1+\ii \gamma \tilde \Delta_{1 \, N+1} \right)  \cdot \\
& \qquad \cdot \left(1+\ii \gamma \tilde \Delta_{N \, N+2} \right) \ldots \left(1+\ii \gamma \tilde \Delta_{1 \, N+2} \right) \e^{\ii (\lambda_{N+1}+\lambda_{N+2})L/2}  \psi_{\bm \lambda} \\
&= \left(1+\ii \gamma \tilde \Delta_{N \, N+1} \right)\left(1+\ii \gamma \tilde \Delta_{N \, N+2} \right)  \ldots \cdot \\
& \qquad \cdot \ldots \left(1+\ii \gamma \tilde \Delta_{1 \, N+1} \right)\left(1+\ii \gamma \tilde \Delta_{1 \, N+2} \right) \e^{\ii (\lambda_{N+1}+\lambda_{N+2})L/2}  \psi_{\bm \lambda}.
\end{align*}
Writing 
\[ y_{N+1,N+2}(j):=\left(1+\ii \gamma \tilde \Delta_{j \, N+1} \right)\left(1+\ii \gamma \tilde \Delta_{j \, N+2} \right) =1+\ii \gamma(\tilde \Delta_{j \, N+1}+\tilde \Delta_{j \, N+2})-\gamma^2 \tilde \Delta_{j \, N+1}\tilde \Delta_{j \, N+2} \] 
for $j=1,\ldots,N$, it follows that 
\[ d_{\lambda_{N+2}}d_{\lambda_{N+1}}\psi_{\bm \lambda}  = y_{N+1,N+2}(N) \ldots y_{N+1,N+2}(1) \e^{\ii (\lambda_{N+1}+\lambda_{N+2})L/2}  \psi_{\bm \lambda}. \]
Note that the expression $\e^{\ii (\lambda_{N+1}+\lambda_{N+2})L/2}  \psi_{\bm \lambda}$ is invariant under $\lambda_{N+1} \leftrightarrow \lambda_{N+2}$, i.e. under the action of the symmetrizer $\frac{1}{2}(1+\tilde s_{N+1})$, so that it is sufficient to show that
\[ y_{N+1,N+2}(N) \ldots y_{N+1,N+2}(1) (1+\tilde s_{N+1}) = y_{N+2,N+1}(N) \ldots y_{N+2,N+1}(1)  (1+\tilde s_{N+1}). \]
This in turn follows from repeatedly applying
\[ y_{N+1,N+2}(j) (1+\tilde s_{N+1}) = y_{N+2,N+1}(j) (1+\tilde s_{N+1}), \]
which is a consequence of
\[ \tilde \Delta_{j \, N+1}\tilde \Delta_{j \, N+2}  (1+\tilde s_{N+1}) = \tilde \Delta_{j \, N+2}\tilde \Delta_{j \, N+1}  (1+\tilde s_{N+1}) \]
a restatement of \rfl{regrep22}.
\end{proof}

Write $\f z_\n{fin} = \f h_\n{fin} \cap \bigcup_{N \geq 0} \f z_N$.
Using \rfp{abcdproperties} \ref{abcdformaladjoint} we may take adjoints of the statements in \rfp{bbcommrel1}, \rft{bbcommrel2} and \rfc{ddcommrel} and obtain
\begin{thm}[Non-symmetric Yang-Baxter algebra] \label{abcdcommrels}
Let $L \in \bR_{>0}$, $\gamma \in \bR$ and $\lambda, \mu \in \bC$. Then we have the following relations in $\End(\f z_\n{fin})$:
\boxedgather{
[a_\lambda ,a_\mu ] = [b^-_{\lambda},b^+_\mu ] \, = \, [c^-_{\lambda},c^+_{\mu}] \, = \, [d_{\lambda},d_{\mu}] \, = \, 0 \nonumber \\
\begin{aligned}
{[a_\lambda ,b^+_\mu ] } &= - \frac{\ii \gamma}{\lambda-\mu} \left( b^+_\lambda a_\mu  - b^+_\mu a_\lambda  \right), \qquad &
[b^+_\lambda ,a_\mu ] &= - \frac{\ii \gamma}{\lambda-\mu} \left( a_\lambda b^+_\mu  - a_\mu b^+_\lambda  \right), \\
[d_{\lambda},b^-_{\mu}] &= \frac{\ii \gamma}{\lambda-\mu} \left( b^-_{\lambda}d_{\mu} - b^-_{\mu}d_{\lambda} \right), &
[b^-_{\lambda},d_{\mu}] &=  \frac{\ii \gamma}{\lambda-\mu} \left(d_{\lambda}b^-_{\mu} - d_{\mu}b^-_{\lambda}\right), \\
[a_\lambda ,c^+_{\mu}] &= \frac{\ii \gamma}{\lambda-\mu} \left( c^+_{\lambda}a_\mu  - c^+_{\mu}a_\lambda  \right), &
[c^+_{\lambda},a_\mu ] &=  \frac{\ii \gamma}{\lambda-\mu} \left( a_\lambda c^+_{\mu} - a_\mu c^+_{\lambda}\right), \\
[d_{\lambda},c^-_{\mu}] &= - \frac{\ii \gamma}{\lambda-\mu} \left( c^-_{\lambda}d_{\mu} - c^-_{\mu}d_{\lambda}\right), &
[c^-_{\lambda},d_{\mu}] &= - \frac{\ii \gamma}{\lambda-\mu} \left( d_{\lambda}c^-_{\mu} - d_{\mu}c^-_{\lambda}\right).
\end{aligned} \nonumber}
\end{thm}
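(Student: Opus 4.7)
The plan is to derive all twelve relations from three sources already in hand: \rfp{bbcommrel1} (which supplies $[b^-_\lambda,b^+_\mu]=0$), \rfc{ddcommrel} (which supplies $[d_\lambda,d_\mu]=0$), and the four mixed identities in \rft{bbcommrel2} (giving $[a_\lambda,b^+_\mu]$, $[b^+_\lambda,a_\mu]$, $[d_\lambda,b^-_\mu]$, $[b^-_\lambda,d_\mu]$ after unpacking the notation $(\tilde s_{j,\gamma}\cdots)$ as was done immediately after \rft{bbcommrel2}). This already produces six of the twelve relations verbatim. The remaining six will be obtained by applying formal adjoints, using \rfp{abcdproperties} \ref{abcdformaladjoint}, i.e.\ $a_\lambda^*=d_{\bar\lambda}$ and $(b^\pm_\lambda)^*=c^\mp_{\bar\lambda}$, hence $(c^\pm_\lambda)^*=b^\mp_{\bar\lambda}$.

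Concretely, I would proceed as follows. For the pure relations, the identity $[a_\lambda,a_\mu]=0$ is the adjoint of $[d_{\bar\mu},d_{\bar\lambda}]=0$ from \rfc{ddcommrel}, and $[c^-_\lambda,c^+_\mu]=0$ is the adjoint of $[b^+_{\bar\mu},b^-_{\bar\lambda}]=0$ from \rfp{bbcommrel1}; after using the reality of the relabeled $\bar\lambda,\bar\mu$ as formal parameters in $\bC$ this gives the first line of the boxed display. For the mixed relations involving $c^+$ and $c^-$, I would take the adjoint of the four identities stated after \rft{bbcommrel2}: e.g.\ starting from
\[
[d_\lambda,b^-_\mu]=\frac{\ii\gamma}{\lambda-\mu}\bigl(d_\lambda b^-_\mu-d_\mu b^-_\lambda\bigr)
\]
and applying $*$ on both sides, one computes
\[
[c^+_{\bar\mu},a_{\bar\lambda}]=\frac{-\ii\gamma}{\bar\lambda-\bar\mu}\bigl(a_{\bar\mu}c^+_{\bar\lambda}-a_{\bar\lambda}c^+_{\bar\mu}\bigr),
\]
which, upon relabelling $\bar\lambda\mapsto\mu$, $\bar\mu\mapsto\lambda$, becomes precisely the $[c^+_\lambda,a_\mu]$ relation of the theorem; similarly the $[a_\lambda,c^+_\mu]$, $[d_\lambda,c^-_\mu]$ and $[c^-_\lambda,d_\mu]$ relations arise as adjoints of the $[b^+_\lambda,a_\mu]$, $[a_\lambda,b^+_\mu]$ and $[b^-_\lambda,d_\mu]$ relations.

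The main obstacle — really the only step that is not pure bookkeeping — is to justify that these adjoint identities, originally living in various $\Hom(\f z_N,\f z_M)$ spaces, can be assembled into a single statement in $\End(\f z_{\n{fin}})$. For this I would observe that by \rfp{abcdproperties} \ref{abcdbounded} each of $a_\lambda,b^\pm_\lambda,c^\pm_\lambda,d_\lambda$ restricts to a bounded operator on $\f h_{\n{fin}}$, hence the formal adjoints used above are genuine adjoints on the dense subspace $\f z_{\n{fin}}=\f h_{\n{fin}}\cap\bigcup_{N\ge 0}\f z_N$; since the particle-number shifts of the two sides of each commutator match (e.g.\ $[c^+_\lambda,a_\mu]$ and its right-hand side both map $\f z_{N+1}$ to $\f z_N$), the identities are consistent on $\f z_{\n{fin}}$. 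The remaining check, that the complex-conjugation $\gamma\mapsto\gamma$ (since $\gamma\in\bR$) and the factor $(\lambda-\mu)^{-1}\mapsto-(\bar\lambda-\bar\mu)^{-1}$ combine with the relabeling to reproduce exactly the signs in the boxed display, is a short case-by-case calculation that I would tabulate at the end.
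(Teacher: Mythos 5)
Your proposal is correct and follows essentially the same route as the paper, which obtains the theorem in one line by taking formal adjoints (via \rfp{abcdproperties} \ref{abcdformaladjoint}) of the statements in \rfp{bbcommrel1}, \rft{bbcommrel2} and \rfc{ddcommrel}. If anything you are more explicit than the paper about the conjugation/relabelling bookkeeping and the matching of particle-number shifts on $\f z_\n{fin}$, which the paper leaves entirely implicit.
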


It can also be checked that $[b^\pm_\lambda ,b^\pm_\mu] \ne 0$ (as expected) by applying both $b^\pm_\lambda b^\pm_\mu$ and $b^\pm_\mu b^\pm_\lambda $ to $\Vac$. This is equivalent to saying that $\psi_{\lambda,\mu} \ne \psi_{\mu,\lambda}$, as can be immediately checked from the expression for $\psi_{\lambda_1,\lambda_2}$ in \rfex{prewavefnexample}. 
By taking adjoints it follows that $[c^\pm_\lambda ,c^\pm_\mu ] \ne 0$.\\

We actually claim that a stronger result holds:
\begin{conj} \label{abcdcommrelsconj}
The relations listed in \rft{abcdcommrels} hold in $\End(\f h_\n{fin})$, and hence in $\End(\f h)$.
\end{conj}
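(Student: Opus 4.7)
The plan is to reduce \rfcn{abcdcommrelsconj} to a single density statement and then outline how that density should be extracted from the propagation operator. Concretely, every relation in \rft{abcdcommrels} asserts the equality of two finite compositions of the operators $a_\mu,b^\pm_\mu,c^\pm_\mu,d_\mu$, each of which is bounded on $\f h_\n{fin}$ by \rfp{abcdproperties}\ref{abcdbounded}; both sides therefore define bounded maps between individual $N$-particle sectors $\f h_N\to\f h_M$. Since $\f z_N$ is a closed subspace of $\f h_N$ preserved (up to a shift in $N$) by these operators and the relations already hold on $\f z_\n{fin}$, a standard approximation argument ($x_n\to x$ with $Ax_n=Bx_n\Rightarrow Ax=Bx$) will extend each relation from $\f z_\n{fin}$ to $\f h_\n{fin}$ once I prove that $\f z_N=\f h_N$ for every $N\geq 0$.

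The first step toward the density claim is to identify $\f z_N$ with the closure of the range of the propagation operator $P_\gamma$ viewed as an operator on $\f h_N([-L/2,L/2])$. Each deformed transposition $s_{j,\gamma}=s_j+\gamma I_j$ is bounded on $\f h_N$: the transposition $s_j$ is unitary, and $I_j$ is a Volterra-type integral operator along the line $y\mapsto \bm x-y(\bm e_j-\bm e_{j+1})$ for $y\in[0,x_j-x_{j+1}]$ whose $\f h_N$-operator norm is at most $L$. Hence both $w_\gamma$ and $P_\gamma=\sum_{w\in S_N}w^{-1}\chi_{\bR^N_+}w_\gamma$ extend to bounded operators on $\f h_N$, and since the plane waves $\{\e^{\ii\bm\mu}:\bm\mu\in\bR^N\}$ span a dense subspace of $\f h_N$ (Fourier completeness), continuity of $P_\gamma$ yields
\[ \f z_N=\overline{\n{span}\{P_\gamma\e^{\ii\bm\mu}:\bm\mu\in\bR^N\}}=\overline{P_\gamma(\f h_N)}. \]
The density claim $\f z_N=\f h_N$ is therefore equivalent to $P_\gamma$ having dense range on $\f h_N$, or equivalently to the formal adjoint $P_\gamma^*$ being injective.

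The main obstacle is proving this injectivity, and I plan to attack it by lifting the alcove-by-alcove induction underpinning \rfp{propopinvtbl} into the $\L^2$ setting. Suppose $\phi\in\f h_N$ satisfies $P_\gamma^*\phi=0$; then $\inner{\phi}{\psi_{\bm\mu}}=0$ for every $\bm\mu\in\bR^N$, and by \rfl{prewavefnregrep} this expands as the Fourier integral identity
\[ \sum_{w\in S_N}\int_{w^{-1}J^N_+}\overline{\phi(\bm x)}\,\bigl(\tilde w_\gamma^{-1}\tilde w\,\e^{\ii\bm\mu}\bigr)(\bm x)\,\dd\bm x=0,\qquad \bm\mu\in\bR^N. \]
Inducting on the length $l(w)$ and exploiting the rational dependence of the dAHA coefficients $\tilde w_\gamma^{-1}\tilde w$ on $\bm\mu$ should force $\phi|_{w^{-1}J^N_+}$ to vanish alcove by alcove: the leading $w=1$ term is the ordinary Fourier transform of $\phi|_{J^N_+}$, and the remaining contributions are controlled by the induction hypothesis. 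A parallel, perhaps cleaner, strategy would dualise \rfc{bsymmetrizer} and invoke Dorlas's $\L^2$-completeness of the Bethe wavefunctions in $\ca H_N$ \cite{Dorlas}, lifting completeness from the symmetric sector through the $S_N$-isotypic decomposition of $\f h_N$.

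Should this analytic route prove recalcitrant at the boundary hyperplanes $V_{j\,k}$ where the Volterra kernels concentrate, a purely algebraic fallback is to verify each commutation relation of \rft{abcdcommrels} directly on the dense subspace $\f d_N$ of compactly supported smooth test functions, by inserting the explicit integral formulae from \rfd{eopsdefn}, interchanging orders of integration, and collapsing the resulting multi-index sums with the identities in \rfp{abcdproducts}. This bypasses the density question entirely but is combinatorially heavy; the bookkeeping over $\bm i\in\f i^n_N$ and the matching of step-function regions on the two sides of each relation is where I would expect the real work to lie.
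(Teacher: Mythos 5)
This statement is a \emph{conjecture} in the thesis: the author deliberately leaves it open and, in the discussion immediately following it, sketches precisely the two routes you describe --- (i) prove completeness of the pre-wavefunctions $\psi_{\bm\lambda}$ in $\f h_N$ via invertibility/density of the propagation operator $P^{(N)}_\gamma$, and then extend the relations from $\f z_\n{fin}$ to $\f h_\n{fin}$ by boundedness, or (ii) verify the relations directly on the coefficients $b^\pm_{\lambda;n}$ using the explicit integral formulae. Your reduction of the conjecture to the single density statement $\f z_N=\f h_N$ is correct and is exactly the paper's first suggestion: the operators are bounded on $\f h_\n{fin}$ by \rfp{abcdproperties} \ref{abcdbounded}, so once $\f z_\n{fin}=\f h_\n{fin}$ the identities of \rft{abcdcommrels} extend by continuity. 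Likewise your identification $\f z_N=\overline{P_\gamma(\f h_N)}$ (granting boundedness of $P_\gamma$ on $\f h_N$ and density of the span of plane waves) is a sound reformulation.

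The genuine gap is that the one step that actually constitutes the mathematical content --- injectivity of $P_\gamma^*$ on $\f h_N$, equivalently density of $\n{span}\{\psi_{\bm\mu}\}$ in $\f h_N$ --- is exactly the open problem, and your proposal does not close it. The alcove-by-alcove induction you sketch does not separate cleanly: for each fixed $\bm\mu$ the orthogonality relation $\inner{\phi}{\psi_{\bm\mu}}=0$ mixes the contributions of \emph{all} alcoves simultaneously, with coefficients $\tilde w_\gamma^{-1}\tilde w$ that are rational in $\bm\mu$ with poles on the diagonals, so isolating the $w=1$ term requires a genuine analytic argument (asymptotics in $\bm\mu$, or a Paley--Wiener-type statement) that is not supplied. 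The analogous argument in \rfp{propopinvtbl} works in the real-analytic category, where continuation across the walls $V_{j\,k}$ is available; it does not transfer to $\L^2$ without new input. Your alternative via Dorlas's completeness also stalls: Dorlas's result lives in the symmetric sector $\ca H_N$ and concerns $\bm\lambda$ satisfying the BAEs, while the $\psi_{\bm\lambda}$ transform under $S_N$ in position space by the deformed momentum-space action $w\psi_{\bm\lambda}=\tilde w_\gamma^{-1}\psi_{\bm\lambda}$ (\rfl{Dunklrepprewavefn}), so their span is not an $S_N$-isotypic piece of $\f h_N$ in any obvious sense. The combinatorial fallback is likewise only named, not executed. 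In short, your proposal accurately reproduces the paper's own roadmap for a future proof, but it does not prove the conjecture.
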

This could be proven by showing that the set of all $\psi_{\bm \lambda}$, where $\bm \lambda$ runs through $\bC^N$, is complete in $\f h_N$ e.g. by establishing invertibility of the propagation operator $P^{(N)}_\gamma$ on a dense subset of  $\f h_N$ containing the pre-wavefunctions and using the completeness of the plane waves $\e^{\ii \bm \lambda}$ in $\f h_N$.
Alternatively, it may be possible to establish commutation relations among the $b^\pm_{\lambda;n} \in \End(\f h_N)$ and use the expansions $b^\pm_\lambda = \sum_n \gamma^n b^\pm_{\lambda;n}$ to establish \rfp{bbcommrel1} and the first two statements in \rft{bbcommrel2} as identities in $\Hom(\f h_N,\f h_{N+2})$.

\begin{rem}
Note that all presented relations in \rft{abcdcommrels} can be ``symmetrized'' to established relations for the symmetric integral operators as presented in \rfc{Tcommrelcor}.
For example, from 
\[ [a_\lambda ,b^+_\mu ] = \frac{-\ii \gamma}{\lambda-\mu} \left( b^+_\lambda a_\mu  - b^+_\mu  a_\lambda \right)  \]
one obtains
\[ \ca S^{(N+1)} [a_\lambda ,b^+_\mu ]|_{\ca H_N} = \frac{-\ii \gamma}{\lambda-\mu} \ca S^{(N+1)} \left( b^+_\lambda a_\mu |_{\ca H_N} - b^+_\mu  a_\lambda |_{\ca H_N} \right), \]
which yields
\[ [a_\lambda ,\ca S^{(N+1)} b^+_\mu ]|_{\ca H_N} = \frac{-\ii \gamma}{\lambda-\mu} \left( \ca S^{(N+1)} b^+_\lambda a_\mu |_{\ca H_N} - \ca S^{(N+1)} b^+_\mu  a_\lambda |_{\ca H_N} \right) \]
by virtue of \rfp{abcdproperties} \ref{abcdpermutation}. Then applying \rfp{abcdrestrABCD} gives \rfeqn{ABcommrel}.
This way all relations in \rfc{Tcommrelcor} are obtained, except the $AD$-, $BC$-, $CB$- and $DA$-relations.
\end{rem}

Another set of commutation relations can be obtained from \rfl{abcdproducts} \ref{cintermsofb}.
These cannot be symmetrized to relations in \rfc{Tcommrelcor}.
\begin{lem}
Let $L \in \bR_{>0}$, $\gamma \in \bR$ and $\lambda,\mu \in \bC$ such that $\lambda \ne \mu$. Then
\begin{align} 
[a_\lambda ,d_\mu] &= \gamma \left( c^-_{\mu}b^+_\lambda  - c^+_{\lambda}b^-_{\mu} \right) \in \End(\f z_N); \label{adcommrel} \\
[d_{\lambda},a_\mu ] &= \gamma \left( c^+_{\mu}b^-_{\lambda} - c^-_{\lambda}b^+_\mu  \right) \in \End(\f z_N). \label{dacommrel}
\end{align}
In particular, $[a_\lambda ,d_{\mu}]$ is not invariant under $\lambda \leftrightarrow \mu$, in contrast to $[A_\lambda,D_\mu ]$ cf. \rfeqn{ADcommrel}.
\end{lem}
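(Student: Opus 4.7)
The plan is to reduce both commutators to the elementary facts already established: the expressions $a_\lambda=\check\phi^+(-L/2)b^+_\lambda$ and $d_\lambda=\check\phi^-(L/2)b^-_\lambda$ from Proposition~\ref{abcdproducts}\ref{adintermsofb}, the commutativity $[\check\phi^+(-L/2),\check\phi^-(L/2)]=0$ from \rfeqn{eqn42}, the formulas $\gamma c^+_\lambda=\check\phi^+(-L/2)[\check\phi^-(L/2),b^+_\lambda]$ and $\gamma c^-_\mu=\check\phi^-(L/2)[\check\phi^+(-L/2),b^-_\mu]$ from Proposition~\ref{abcdproducts}\ref{cintermsofb}, and crucially the commutativity $[b^-_\mu,b^+_\lambda]=0$ on $\f z_N$ from Proposition~\ref{bbcommrel1}.

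First I would compute $a_\lambda d_\mu$: writing out the factorizations gives
\[
a_\lambda d_\mu = \check\phi^+(-L/2)\,b^+_\lambda\,\check\phi^-(L/2)\,b^-_\mu,
\]
and pushing $\check\phi^-(L/2)$ past $b^+_\lambda$ via $b^+_\lambda\check\phi^-(L/2)=\check\phi^-(L/2)b^+_\lambda-[\check\phi^-(L/2),b^+_\lambda]$ yields, after applying $\check\phi^+(-L/2)$ on the left and recognizing $\gamma c^+_\lambda=\check\phi^+(-L/2)[\check\phi^-(L/2),b^+_\lambda]$,
\[
a_\lambda d_\mu = \check\phi^+(-L/2)\check\phi^-(L/2)\,b^+_\lambda b^-_\mu - \gamma c^+_\lambda b^-_\mu.
\]
By the same procedure on $d_\mu a_\lambda=\check\phi^-(L/2)b^-_\mu\check\phi^+(-L/2)b^+_\lambda$, pushing $\check\phi^+(-L/2)$ past $b^-_\mu$ and using Proposition~\ref{abcdproducts}\ref{cintermsofb} in the form $\gamma c^-_\mu=\check\phi^-(L/2)[\check\phi^+(-L/2),b^-_\mu]$, one obtains
\[
d_\mu a_\lambda = \check\phi^-(L/2)\check\phi^+(-L/2)\,b^-_\mu b^+_\lambda - \gamma c^-_\mu b^+_\lambda.
\]

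Next I would combine these two expressions. Since $[\check\phi^+(-L/2),\check\phi^-(L/2)]=0$ and, on $\f z_N$, $[b^+_\lambda,b^-_\mu]=0$, the first terms on the right-hand sides coincide, so subtraction yields
\[
[a_\lambda,d_\mu] = \gamma\bigl(c^-_\mu b^+_\lambda - c^+_\lambda b^-_\mu\bigr),
\]
which is \rfeqn{adcommrel}. For \rfeqn{dacommrel}, the identity $[d_\lambda,a_\mu]=-[a_\mu,d_\lambda]$ together with the result just obtained (applied with $\lambda\leftrightarrow\mu$) gives immediately $[d_\lambda,a_\mu]=\gamma(c^+_\mu b^-_\lambda-c^-_\lambda b^+_\mu)$. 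The final assertion, that $[a_\lambda,d_\mu]$ is not symmetric under $\lambda\leftrightarrow\mu$, follows by inspection: swapping $\lambda$ and $\mu$ in \rfeqn{adcommrel} produces $\gamma(c^-_\lambda b^+_\mu-c^+_\mu b^-_\lambda)$, which is $-[d_\lambda,a_\mu]$ rather than $[a_\lambda,d_\mu]$; these agree only if the right-hand side of \rfeqn{adcommrel} vanishes.

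The only step that might look delicate is the use of $[b^+_\lambda,b^-_\mu]=0$, which is only proved on $\f z_N$ rather than all of $\f h_N$; this is why the stated domain of the identity is $\End(\f z_N)$. Apart from this, the computation is purely formal manipulation of the insertion operators $\check\phi^\pm$ with the creation operators $b^\pm_\lambda$, and nothing beyond the two ingredients recorded in Proposition~\ref{abcdproducts} is needed.
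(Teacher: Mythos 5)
Your argument is correct and is essentially the paper's own proof run in the opposite direction: the paper starts from $\gamma(c^-_\mu b^+_\lambda - c^+_\lambda b^-_\mu)$, expands the $c^\pm$ via Proposition~\ref{abcdproducts}~\ref{cintermsofb} and collapses to $[a_\lambda,d_\mu]$ using the same three ingredients you cite (the factorizations $a_\lambda=\check\phi^+(-L/2)b^+_\lambda$, $d_\mu=\check\phi^-(L/2)b^-_\mu$, the relation $[\check\phi^+(-L/2),\check\phi^-(L/2)]=0$, and $[b^-_\mu,b^+_\lambda]=0$ on $\f z_N$), so the two computations are the same identity read in reverse. The only quibble is your justification of the closing remark: equality of $[a_\lambda,d_\mu]$ with $[a_\mu,d_\lambda]$ would require $c^-_\mu b^+_\lambda - c^+_\lambda b^-_\mu = c^-_\lambda b^+_\mu - c^+_\mu b^-_\lambda$, not that the right-hand side of \rfeqn{adcommrel} vanish, though this aside does not affect the two displayed identities and the paper itself offers no argument for it.
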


\begin{proof}
Focusing on the right-hand side, we have
\[ \gamma \left( c^-_{\mu}b^+_\lambda  - c^+_{\lambda}b^-_{\mu} \right) = \check \phi^+(-L/2) d_{\mu} b^+_\lambda  - d_{\mu} \check \phi^+(-L/2) b^+_\lambda  - \check \phi^-(L/2) a_\lambda  b^-_{\mu} + a_\lambda  \check \phi^-(L/2) b^-_{\mu}, \]
by virtue of \rfl{abcdproducts} \ref{cintermsofb}. Now applying \rfl{abcdproducts} \ref{adintermsofb} and subsequently \rfp{bbcommrel1} and \rfeqn{eqn42} we obtain that
\[ \gamma \left( c^-_{\mu}b^+_\lambda  - c^+_{\lambda}b^-_{\mu} \right) = 
\check \phi^+(-L/2) \check \phi^-(L/2) b^-_{\mu} b^+_\lambda  - \check \phi^-(L/2) \check \phi^+(-L/2) b^+_\lambda  b^-_{\mu} + a_\lambda  d_{\mu} - d_{\mu} a_\lambda = [a_\lambda ,d_{\mu}]. \]
This establishes \rfeqn{adcommrel}; \rfeqn{dacommrel} follows by applying $\lambda \leftrightarrow \mu$.
\end{proof}

\begin{rem}[Integrability in the non-symmetric case] \label{psitransfermatrix}
We claim that the only eigenfunctions of $t_\mu:=a_\mu +d_{\mu}$\nc{rtl}{$t_\lambda$}{Shorthand for $a_\lambda+d_\lambda$} are the $\Psi_{\bm \lambda}$ where $\bm \lambda$ satisfies the BAEs \rf{BAE}.
In connection with this, we recall that $\psi_{\bm \lambda}$ cannot be made $L$-periodic by imposing conditions on $\bm \lambda$ as has been demonstrated for $N=2$ in Subsection \ref{subsectperiodicity}.
The claim that $\Psi_{\bm \lambda}$ is indeed an eigenfunction of $t_\mu$ follows immediately from the remark that $a_\mu $ and $d_{\mu}$ restrict to $A_\mu $ and $D_\mu $ on the domain of symmetric functions. Hence $t_\mu$ restricts to $T_\mu$ on that domain, as well, and we have $[t_{\mu},t_{\nu}]|_{\ca H_N}=0$.
However, in general, $[t_{\mu},t_{\nu}] \ne 0$, since $[a_{\mu},d_{\nu}] \ne [a_{\nu},d_{\mu}] $.
\end{rem}

\section{The limiting case $J=\bR$}

This section generalizes Section \ref{infiniteJ} to the non-symmetric case.
Analogously to the behaviour of $A_\mu $ and $D_\mu $ for $L \to \infty$ as discussed in \rfl{Linfty}, we present
\begin{prop}\label{Linftypsi}
Let $\gamma \in \bR$, $\bm \lambda \in \bR^N_\n{reg}$ and $\mu \in \bC \setminus \{ \lambda_1,\ldots,\lambda_N\}$.
Then 
\[ b^-_{\mu} \psi_{\bm \lambda} = \psi_{\bm \lambda,\mu} \in \ca C(\bR^N), 
\qquad b^+_\mu \psi_{\bm \lambda} = \psi_{\mu,\bm \lambda} \in \ca C(\bR^N) \]
and, for $\bm x \in \bR^N$,
\begin{align*}
\lim_{L \to \infty} \e^{\ii \mu L/2} \left(a_\mu \psi_{\bm \lambda} \right)(\bm x) &= \tau^+_\mu(\bm \lambda) \psi_{\bm \lambda}(\bm x), && \n{if } \Im \mu>0, \\
\lim_{L \to \infty} \e^{-\ii \mu L/2} \left(d_\mu \psi_{\bm \lambda} \right)(\bm x) &= \tau^-_\mu(\bm \lambda) \psi_{\bm \lambda}(\bm x), && \n{if } \Im \mu<0.
\end{align*}
\end{prop}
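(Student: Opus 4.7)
The first two identities are direct consequences of \rft{brecursion}, which supplies $\psi_{\bm\lambda,\mu} = b^-_\mu b^-_{\lambda_N}\cdots b^-_{\lambda_1}\Vac = b^-_\mu \psi_{\bm\lambda}$ and $\psi_{\mu,\bm\lambda} = b^+_\mu \psi_{\bm\lambda}$; continuity on $\bR^{N+1}$ (the ``$\bR^N$'' in the statement is a misprint) follows from $\psi_{\bm\lambda,\mu},\psi_{\mu,\bm\lambda}\in\ca{CB}^\omega(\bR^{N+1})\subset\ca C(\bR^{N+1})$ per \rfd{prewavefndefn} and \rfl{propopanalytic}.

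For the limit of $a_\mu\psi_{\bm\lambda}$, I would begin with \rfp{abcdproducts} \ref{adintermsofb}, which reads $a_\mu=\check\phi^+(-L/2)b^+_\mu$, so combined with the first part this yields $a_\mu\psi_{\bm\lambda}(\bm x) = \psi_{\mu,\bm\lambda}(-L/2,\bm x)$. I would then invoke the exact formula of \rfl{apsidpsilem} (with $\lambda_{N+1}:=\mu$):
\[
a_\mu \psi_{\bm\lambda} \,=\, \bigl(1 - \ii\gamma\tilde\Delta_{N\,N+1}\bigr)\cdots\bigl(1 - \ii\gamma\tilde\Delta_{1\,N+1}\bigr)\, \e^{-\ii\mu L/2}\,\psi_{\bm\lambda},
\]
where the divided-difference operators act in momentum space on $(\lambda_1,\ldots,\lambda_N,\mu)$. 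Multiplying both sides by $\e^{\ii\mu L/2}$ and expanding the product, each factor, when applied to a summand with $L$-dependence of the form $\e^{-\ii\alpha L/2}$, splits into an ``identity part'' of coefficient $1-\ii\gamma/(\lambda_j-\mu)$ (preserving both the exponent $\alpha$ and the function body) and a ``swap part'' of coefficient $\ii\gamma/(\lambda_j-\mu)$ which exchanges $\lambda_j\leftrightarrow\mu$ throughout the body (and, when $\alpha=\mu$, replaces $\alpha$ by $\lambda_j$). A key observation is the \emph{lock-out property}: when the $j$-th factor is applied, the current exponent is either $\mu$ or $\lambda_{j'}$ for some $j'<j$, so the swap at step $j$ can only introduce $\lambda_j$ into the exponent; subsequent steps $k>j$ exchange $\lambda_k\leftrightarrow\mu$ and leave any such $\lambda_{j'}$ with $j'<k$ invariant. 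Hence once the exponent has been moved off $\mu$, it stays off.

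Consequently, for $\Im\mu>0$ and $\bm\lambda\in\bR^N_{\n{reg}}$, every term that has undergone at least one swap carries a factor $\e^{\ii(\mu-\lambda_k)L/2}$ which decays as $L\to\infty$ (since $\Im(\mu-\lambda_k)=\Im\mu>0$), while the unique all-identity route across the $N$ factors contributes $\prod_{j=1}^N\bigl(1-\tfrac{\ii\gamma}{\lambda_j-\mu}\bigr)\psi_{\bm\lambda} = \tau^+_\mu(\bm\lambda)\psi_{\bm\lambda}$, as required. The $d_\mu$ statement for $\Im\mu<0$ follows verbatim from the companion formula in \rfl{apsidpsilem} with $(1+\ii\gamma\tilde\Delta_{j\,N+1})$-factors and an $\e^{\ii\mu L/2}$ prefactor; here the decaying factor is $\e^{\ii(\lambda_k-\mu)L/2}$. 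The main obstacle is articulating the lock-out property cleanly so that all swap-containing terms are identified as vanishing in the limit; once that is done, the remaining computation is the telescoping product $\prod_j\tfrac{\lambda_j-\mu\mp\ii\gamma}{\lambda_j-\mu}=\tau^\pm_\mu(\bm\lambda)$.
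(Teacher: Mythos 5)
Your proof is correct and follows essentially the same route as the paper's: the first two identities come from \rft{brecursion} exactly as in the text, and the limits are obtained by expanding the products of $1\mp\ii\gamma\tilde\Delta_{j\,N+1}$ from \rfl{apsidpsilem} into identity and swap parts, isolating the unique all-identity term $\tau^\pm_\mu(\bm\lambda)\psi_{\bm\lambda}$ and observing that every swapped term carries an exponentially decaying factor. Your ``lock-out property'' is just a more explicit articulation of the step the paper states tersely (that all non-identity terms are proportional to $\e^{\pm\ii(\lambda_j-\mu)L/2}\psi_{\ldots}$ with $L$-independent coefficients), and you are right that $\ca C(\bR^N)$ in the statement should read $\ca C(\bR^{N+1})$.
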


\begin{proof}
Write $\lambda_{N+1} = \mu$.
The expressions for $\left( \lim_{L \to \infty} b^\pm_\mu \right) \psi_{\bm \lambda}$ are obtained by induction, noting that $\psi_{\bm \lambda}$ does not depend on $L$ and hence 
\[ \lim_{L \to \infty} b^-_{\lambda_{N+1}} \ldots b^-_{\lambda_1} \Vac = b^-_{\lambda_{N+1}} \ldots b^-_{\lambda_1} \Vac  = \psi_{\lambda_1,\ldots,\lambda_{N+1}},\] 
and similarly for $b^+_\mu $.\\

With respect to $\left( \lim_{L \to \infty} \e^{-\ii \mu L/2} d_{\mu} \right) \psi_{\bm \lambda}$, note that $1+\ii \gamma \tilde \Delta_{j \, N+1}  = \frac{\lambda_j-\lambda_{N+1}+\ii \gamma}{\lambda_j - \lambda_{N+1}} - \frac{\ii \gamma}{\lambda_j - \lambda_{N+1}} \tilde s_{j \, N+1}$.
Expanding the product $\left(1+\ii \gamma \tilde \Delta_{N \, N+1} \right) \ldots \left(1+\ii \gamma \tilde \Delta_{1 \, N+1} \right)$ in \rfeqn{dpsi} yields linear combinations of products of $\tilde s_{j \, N+1}$. 
The only term proportional to $1 \in S_N$ is obtained by choosing the term with 1 in each factor $1+\ii \gamma \tilde \Delta_{j \, N+1} $. 
This produces a term $ \frac{\lambda_N-\mu+\ii \gamma}{\lambda_N - \mu} \ldots  \frac{\lambda_1-\mu+\ii \gamma}{\lambda_1 - \mu} = \tau^-_\mu(\bm \lambda)$. 
We see that $\e^{-\ii \mu L/2} d_{\mu} \psi_{\bm \lambda}$ will be a linear combination of $\tau^-_\mu(\bm \lambda) \psi_{\lambda_1,\ldots,\lambda_N}$ and terms proportional to $\e^{\ii (\lambda_j-\mu) L/2} \psi_{\lambda_1,\ldots,\lambda_{j-1},\mu,\lambda_{j+1},\ldots,\lambda_N}$ (with coefficients independent of $L$). Note that the exponent in $\e^{\ii (\lambda_j-\mu)L/2}$ has negative real part, provided $\Im \mu<0$, causing all terms but $\tau^-_\mu(\bm \lambda) \psi_{\lambda_1,\ldots,\lambda_N}$ to vanish in the limit $L \to 0$. The expression $ \e^{\ii \mu L/2} a_\mu  \psi_{\bm \lambda}$ is analysed in a similar manner.
\end{proof}

If we can prove a completeness theorem of the $\psi_{\bm \lambda}$ in $\f h_N$ (in the sense alluded to in Section \ref{infiniteJ}), we would obtain from \rfp{Linftypsi} and the estimate $|\tau^\pm_\mu(\bm \lambda)| \leq \left( 1+ \frac{|\gamma|}{|\Im \mu|} \right)^N$ the following statement.
\begin{conj}
We can extend $\lim_{L \to \infty} \e^{\ii \mu L/2} a_\mu $ and $\lim_{L \to \infty} \e^{-\ii \mu L/2} d_{\mu}$ to operators on $\f h_N$. Furthermore, they are bounded on $\f h_N$ provided that $\Im \mu \ne 0$.
\end{conj}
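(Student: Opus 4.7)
The plan is to proceed in three stages: first establish a non-symmetric Plancherel-type statement for the pre-wavefunctions, then identify the two candidate limiting operators with multiplication operators in the corresponding ``deformed Fourier'' picture, and finally invoke the stated bound on $\tau^\pm_\mu$ to deduce boundedness. Throughout, $J=\bR$, and $\psi_{\bm\lambda}$ is the pre-wavefunction of \rfd{prewavefndefn} attached to $\bm\lambda\in\bR^N$.

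First I would prove that for every $f \in \f h_N(\bR)$ there exist ``deformed Fourier coefficients'' $\tilde f_\gamma \in \f h_N(\bR)$ such that
\[ f(\bm x) = \int_{\bR^N}\dd^N\bm\lambda\; \tilde f_\gamma(\bm\lambda)\, \psi_{\bm\lambda}(\bm x), \qquad \|\tilde f_\gamma\|_{\f h_N} \asymp \|f\|_{\f h_N}, \]
in the same weak sense as in Section \ref{infiniteJ}. The natural route is via the propagation operator, since $\psi_{\bm\lambda} = P^{(N)}_\gamma \e^{\ii \bm\lambda}$: it would suffice to show that $P^{(N)}_\gamma$ extends to a bounded operator on $\f h_N(\bR)$ with bounded inverse on the image, so that the standard Plancherel decomposition of $f$ into plane waves transports to a decomposition into pre-wavefunctions. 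An alternative route is to use the alcove decomposition of \rfl{prewavefnregrep}, which expresses $\psi_{\bm\lambda}$ piecewise on each $w^{-1}\bR^N_+$ as $\tilde w_\gamma^{-1}\tilde w\,\e^{\ii\bm\lambda}$, and apply Gaudin's $\Psi$-completeness \cite{Gaudin1971-1,Gaudin1971-2} alcove-by-alcove, gluing via the intertwining relation \rfeqn{intertwine1}.

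Granted this decomposition, the second step is to pass to the limit in \rfp{Linftypsi}. For $\Im\mu>0$, define tentatively
\[ A^\infty_\mu f := \int_{\bR^N}\dd^N\bm\lambda\; \tilde f_\gamma(\bm\lambda)\, \tau^+_\mu(\bm\lambda)\,\psi_{\bm\lambda}, \]
and analogously $D^\infty_\mu$ for $\Im\mu<0$ with $\tau^-_\mu$ in place of $\tau^+_\mu$. On the dense subspace spanned by finite linear combinations of pre-wavefunctions, \rfp{Linftypsi} together with the dominated convergence theorem (using the uniform estimate below as majorant) identifies $A^\infty_\mu$ with $\lim_{L\to\infty}\e^{\ii\mu L/2}a_\mu$ pointwise on $\bR^N$, and similarly for $D^\infty_\mu$.

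Third, boundedness follows from the estimate
\[ |\tau^\pm_\mu(\bm\lambda)| \;\leq\; \left(1 + \frac{|\gamma|}{|\Im\mu|}\right)^N, \qquad \bm\lambda\in\bR^N,\; \Im\mu\ne 0, \]
which is immediate from the product formula \rfeqn{BAEv}. In the deformed Fourier picture, $A^\infty_\mu$ and $D^\infty_\mu$ are multiplication by the bounded functions $\tau^\pm_\mu$, so
\[ \|A^\infty_\mu\|,\;\|D^\infty_\mu\|\;\leq\; C_\gamma\left(1+\tfrac{|\gamma|}{|\Im\mu|}\right)^N, \]
where $C_\gamma$ is the norm of the deformed Fourier isomorphism. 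A standard density argument then extends both operators to all of $\f h_N(\bR)$ while preserving this bound.

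The main obstacle will be the first step: a Plancherel-type statement for the non-symmetric $\psi_{\bm\lambda}$. Gaudin's proof in the symmetric case relies heavily on contour shifts in the Bethe integrand and explicit residue identification, and it is not obvious that those techniques survive the alcove-wise decomposition forced by $P_\gamma$, because $\psi_{\bm\lambda}$ lacks the symmetry that makes Gaudin's contour manipulations work. The most promising workaround is to reduce to the symmetric case via the identity $\Psi_{\bm\lambda}=\ca S^{(N)}\psi_{\bm\lambda}$ together with \rfp{wavefnregrep} (reading off $\psi_{\bm\lambda}$ from $\Psi_{\bm\lambda}$ using that $\tilde{\ca S}^{(N)}_\gamma = \tilde{\ca S}^{(N)}G^{(N)}_\gamma(\bm\lambda)$ becomes invertible generically in $\bm\lambda$), but controlling the resulting norms — in particular the behaviour of $G^{(N)}_\gamma(\bm\lambda)^{-1}$ near the diagonals $\lambda_j=\lambda_k$ where the singularities pile up — is where the hard analytic work lies, and is the reason the statement is left as a conjecture rather than a theorem.
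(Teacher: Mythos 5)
The statement you are proving is stated in the paper as a \emph{conjecture}, and the paper offers no proof: the sentence immediately preceding it says only that the result ``would'' follow from (i) a completeness theorem for the pre-wavefunctions $\psi_{\bm\lambda}$ in $\f h_N$ in the weak, deformed-Fourier sense of Section \ref{infiniteJ}, (ii) \rfp{Linftypsi}, and (iii) the estimate $|\tau^\pm_\mu(\bm\lambda)| \leq (1+|\gamma|/|\Im\mu|)^N$. Your three-stage plan is exactly this conditional argument, fleshed out, and you correctly identify the first stage -- the non-symmetric Plancherel statement -- as the genuine obstacle. That obstacle is precisely why the author leaves the statement as a conjecture (compare also the discussion after \rfcn{abcdcommrelsconj} and the ``Completeness, orthogonality and norm formulae'' item in Chapter \ref{chSummary}). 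So your proposal reconstructs the intended route but does not constitute a proof, because the key hypothesis remains unestablished; there is no argument in the paper for you to be measured against beyond this one-sentence sketch.

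Two further points on the parts you treat as routine. First, your proposed workaround of reducing completeness of the $\psi_{\bm\lambda}$ to Gaudin's symmetric result via $\tilde{\ca S}^{(N)}_\gamma = \tilde{\ca S}^{(N)} G^{(N)}_\gamma(\bm\lambda)$ cannot work as written: $\tilde{\ca S}^{(N)}$ is a projection, not an invertible operator, and indeed the $N!$ pre-wavefunctions $\psi_{w\bm\lambda}$ ($w \in S_N$) all symmetrize to the same $\Psi_{\bm\lambda}$, so the symmetric data cannot determine the non-symmetric function. (Also, $G^{(N)}_\gamma(\bm\lambda)^{-1} = \prod_{j<k}(\lambda_j-\lambda_k)/(\lambda_j-\lambda_k-\ii\gamma)$ is bounded near the diagonals for $\gamma \ne 0$; it is $G_\gamma$ itself that has poles there, so the singularity you worry about sits on the wrong factor.) Second, the interchange of $\lim_{L\to\infty}$ with the $\bm\lambda$-integral needs more than the $\tau^\pm$ bound as majorant: \rfp{Linftypsi} gives pointwise convergence in $\bm x$ for each fixed $\bm\lambda$, and the discarded terms are of the form $\e^{\ii(\lambda_j-\mu)L/2}$ times coefficients with poles at $\lambda_j = \mu$; one must control these uniformly in $\bm\lambda$ over the integration region before dominated convergence applies. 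Neither issue is fatal to the programme, but both belong to the ``hard analytic work'' you defer, alongside the completeness statement itself.
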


\newpage

\chapter{Conclusions and some open problems} \label{chSummary}

We have highlighted the important theoretical role the pre-wavefunction plays for the QNLS model, in particular for the connection between the dAHA approach and the QISM. 
The key points of this thesis are best summarized by comparing some important properties of the pre-wavefunction $\psi_{\bm \lambda}$ and the Bethe wavefunction $\Psi_{\bm \lambda}$. This will also allow us to pinpoint some possible future avenues of research.

\begin{description}
\item[Relation to the symmetric group\hspace{15cm}] Although the pre-wavefunction is not $S_N$-invariant, its definition by means of the non-symmetric propagation operator involves the symmetric group in an essential way: $\psi_{\bm \lambda} = \sum_{w \in S_N}  \chi_{w^{-1}\bR^N_+} w^{-1} w_\gamma \e^{\ii \bm \lambda}$. 
For instance, antisymmetrizing $\psi_{\bm \lambda}$ will not result in a fermionic wavefunction, i.e. one that transforms as $\Psi_{w \bm \lambda} = \sgn(w) \Psi_{\bm \lambda}$ for $w \in S_N$. 
This leaves us with a question whether a similar propagation operator formalism can be set up for a one-dimensional fermionic system with pairwise contact interaction.

\item[Periodicity\hspace{15cm}]  We have seen in Subsection \ref{subsectperiodicity} that the Bethe wavefunctions $\Psi_{\bm \lambda}$ can be made periodic by imposing the Bethe ansatz equations \rf{BAE} on the $\bm \lambda$ but that this is not possible for the pre-wavefunctions $\psi_{\bm \lambda}$. Equivalently, on a bounded interval, the $\Psi_{\bm \lambda}$ become eigenfunctions of the transfer matrix (\rft{ABA}), but not the $\psi_{\bm \lambda}$ (\rfr{psitransfermatrix}). When taking the limit $L \to \infty$ however, both $\Psi_{\bm \lambda}$ and $\psi_{\bm \lambda}$ are eigenfunctions of $\e^{\pm \ii \mu L/2}T_\mu$ and $\e^{\pm \ii \mu L/2}t_\mu$, respectively, if $\Im \mu \gtrless 0$, as per \rfp{Linfty} and \rfp{Linftypsi}. A natural problem to consider would be the variant where the periodicity condition is replaced with an open or reflecting boundary condition and to investigate in how far the pre-wavefunction formalism carries through in those cases.

\item[Physical interpretation\hspace{15cm}] Both $\psi_{\bm \lambda}$ (by virtue of \rfc{prewavefnQNLS}) and $\Psi_{\bm \lambda}$ solve the QNLS eigenvalue problem \rfeqnser{QNLS1}{QNLS2}, although, of course, $\psi_{\bm \lambda}$ is not a physically acceptable solution, as it is not $S_N$-invariant and hence does not represent a bosonic state. 
A priori it is conceivable that the pre-wavefunction represents a state of $N$ non-identical particles of the same mass and with the same pairwise contact interaction. 
However, as discussed at the end of Subsection \ref{periodicity}, physically we are only interested in the subspace of $\L^2$ functions which vanish at the boundary of their domain. 
This leads to periodicity conditions $\psi_{\bm \lambda}|_{x_j=-L/2}=\psi_{\bm \lambda}|_{x_j=L/2} \, (=0)$, which we know cannot be satisfied already in the case $N=2$, cf. \rfex{prewavefnperiodicity}.

\item[Completeness, orthogonality and norm formulae\hspace{15cm}] As alluded to in the Introduction, the set $\set{\Psi_{\bm \lambda}}{\bm \lambda \in \bR^N_+ \n{ satisfies the BAEs \rf{BAE}}}$ is complete in $\ca H_N([-L/2,L/2])$, as follows from the work of Dorlas \cite[Thm.~3.1]{Dorlas}; in fact, it is an orthogonal basis for a dense subspace. 
As highlighted in the discussion following \rfcn{abcdcommrelsconj}, it would be helpful if we could establish a similar property for the set $\set{\psi_{\bm \lambda}}{\bm \lambda \in \bR^N_+ \n{ satisfies the BAEs \rf{BAE}}}$ or even the larger set $\set{\psi_{\bm \lambda}}{\bm \lambda \in \bC^N}$ in $\f h_N([-L/2,L/2])$. 
It would also be worthwhile to obtain formulae for the $\L^2$-norms $\| \psi_{\bm \lambda} \|$ and $\L^2$-inner products $\innerrnd{\psi_{\bm \lambda}}{\psi_{\bm \mu}}$. 
Initial calculations for the case $N=2$ would seem to suggest that if both $\bm \lambda$ and $\bm \mu$ satisfy the BAEs \rf{BAE} then $\innerrnd{\psi_{\bm \lambda}}{\psi_{\bm \mu}} = 0$ only if $\| \bm \lambda\| = \| \bm \mu \|$.

\item[Recursive construction\hspace{15cm}] Both the Bethe wavefunction (\rfd{Psi}) and the pre-wave-function (\rft{brecursion}) can be generated by a product of operators $B_\mu $ and $b^\pm_\mu$ acting on the reference state $\Vac$, respectively, although for the pre-wavefunction care must be taken with the ordering of the $b^\pm_\mu$. 
An interesting question is whether there are such recursions in terms of (explicit) integral formulae for Bethe wavefunctions and pre-wavefunctions corresponding to other boundary conditions, which ties in with Sklyanin's work \cite{Sklyanin1988} on the \emph{boundary Yang-Baxter equation}.

\item[Yang-Baxter algebra\hspace{15cm}] There is the notion of a Yang-Baxter algebra in both the symmetric and the non-symmetric context, and we have similar expressions for $a_\mu \psi_{\bm \lambda}$ and $A_\mu \Psi_{\bm \lambda}$, and $d_{\mu}\psi_{\bm \lambda}$ and $D_\mu \Psi_{\bm \lambda}$, using the regular representation of the dAHA. However, the $AD$-, $BC$-, $CB$- and $DA$-relations from \rfc{Tcommrelcor} cannot be directly generalized to relations for the non-symmetric integral operators $a,b^\pm,c^\pm,d$. 
Instead, we have the relations \rfeqnser{adcommrel}{dacommrel}. 
An open problem is the precise relation of the operators $a,b^\pm,c^\pm,d$ to the Yangian.
\end{description}

\clearpage

\appendix
\noappendicestocpagenum
\addappheadtotoc

\chapter[The dAHA: calculations]{The degenerate affine Hecke algebra: calculations}

\section{The regular representation} \label{dAHAregrepprops}

\subsection{Properties of the divided difference operators $\tilde \Delta_{j \, k}$} \label{divdiffopsprops}

Here we present several technical lemmas involving the operators $\tilde \Delta_{j \, k}$ which are used to give a proof of \rfp{dAHAregrep}, i.e. to demonstrate that the regular representation of the dAHA in momentum space is indeed a representation.
We believe it is important that this thesis is self-contained and therefore that these technical steps are included.

\begin{lem} \label{regrep4lem}
Let $1 \leq j \ne k \leq N$ and let $l=1,\ldots,N$.
Writing $\bar l = s_{j \, k} l$, we have 
\[ \tilde \Delta_{j \, k} \lambda_l - \lambda_{\bar l} \tilde \Delta_{j \, k} = \delta_{j \, l} - \delta_{k \, l} \in \End(\ca C^\omega(\bC^N)).\]
\end{lem}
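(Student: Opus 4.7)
The proof should proceed by direct computation from the definition $\tilde\Delta_{j\,k} = (1 - \tilde s_{j\,k})/(\lambda_j - \lambda_k)$ in \rfd{divdiffoperator}. The plan is to apply both sides of the claimed identity to an arbitrary $p \in \ca P(\bC^N)$ (and then extend to $\ca C^\omega(\bC^N)$ by density, or simply invoke that $\tilde\Delta_{j\,k}$ is defined on the larger space by the same formula).

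First I would note that the $S_N$-action on $\ca C^\omega(\bC^N)$ from Notation \ref{notn5} gives $(\tilde s_{j\,k}\, p)(\bm\lambda) = p(s_{j\,k}\bm\lambda)$ and that $(s_{j\,k}\bm\lambda)_l = \lambda_{s_{j\,k}(l)} = \lambda_{\bar l}$. Hence for any $p$ and any $\bm\lambda \in \bC^N_\n{reg}$,
\begin{align*}
\bigl((\tilde\Delta_{j\,k}\lambda_l - \lambda_{\bar l}\tilde\Delta_{j\,k})p\bigr)(\bm\lambda) &= \frac{\lambda_l p(\bm\lambda) - \lambda_{\bar l} p(s_{j\,k}\bm\lambda)}{\lambda_j - \lambda_k} - \lambda_{\bar l}\cdot \frac{p(\bm\lambda) - p(s_{j\,k}\bm\lambda)}{\lambda_j - \lambda_k} \\
&= \frac{(\lambda_l - \lambda_{\bar l})\,p(\bm\lambda)}{\lambda_j - \lambda_k},
\end{align*}
where the terms involving $p(s_{j\,k}\bm\lambda)$ cancel. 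So the whole content of the lemma is the elementary identity $(\lambda_l - \lambda_{\bar l})/(\lambda_j - \lambda_k) = \delta_{j\,l} - \delta_{k\,l}$.

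Next I would verify this by splitting into three cases according to whether $l \in \{j,k\}$ or not. If $l = j$, then $\bar l = k$, so the quotient equals $1 = \delta_{j\,l} - \delta_{k\,l}$; if $l = k$, then $\bar l = j$, so the quotient equals $-1 = \delta_{j\,l} - \delta_{k\,l}$; and if $l \notin \{j,k\}$, then $\bar l = l$, so the numerator vanishes and matches $\delta_{j\,l} - \delta_{k\,l} = 0$. Since these are scalar quantities independent of $p$ and $\bm\lambda$, the operator identity follows.

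There is no real obstacle: the only mild subtlety is keeping track of the convention $(w\bm\lambda)_l = \lambda_{w(l)}$ so that the position $l$ in $\bm\lambda$ is moved to position $\bar l = s_{j\,k}(l)$, which is exactly what produces the $\lambda_{\bar l}$ on the right. The identity is a divided-difference analogue of the standard dAHA relation \rf{dAHArel4}, and, together with \rfl{divdiffprops}, will be the building block for verifying the Coxeter and commutation relations in the proof of \rfp{dAHAregrep}.
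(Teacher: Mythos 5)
Your computation is correct and is essentially the paper's proof: the paper performs the same cancellation at the operator level, writing $\tilde \Delta_{j \, k} \lambda_l - \lambda_{\bar l} \tilde \Delta_{j \, k} = \frac{\lambda_l - \lambda_{\bar l}\tilde s_{j\,k} - \lambda_{\bar l} + \lambda_{\bar l}\tilde s_{j\,k}}{\lambda_j-\lambda_k} = \frac{\lambda_l-\lambda_{\bar l}}{\lambda_j-\lambda_k} = \delta_{j\,l}-\delta_{k\,l}$, which is exactly your test-function calculation with the $\tilde s_{j\,k}$-terms cancelling. Your explicit three-case verification of the final scalar identity is merely spelled out where the paper leaves it implicit; no density argument is needed since the defining formula for $\tilde\Delta_{j\,k}$ is pointwise on all of $\ca C^\omega(\bC^N)$.
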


\begin{proof} 
This follows from
\[ \tilde \Delta_{j \, k} \lambda_l - \lambda_{\bar l} \tilde \Delta_{j \, k} 
= \frac{\lambda_l - \lambda_{\bar l}  \tilde s_{j \, k} - \lambda_{\bar l} + \lambda_{\bar l}  \tilde s_{j \, k}}{\lambda_j - \lambda_k} 
= \frac{\lambda_l - \lambda_{\bar l}}{\lambda_j-\lambda_k} 
= \delta_{j \, l} - \delta_{k \, l}. \qedhere
\]
\end{proof}

\begin{lem} \label{regrep5lem}
If $j,k,l,m \in \{1,\ldots,N\}$ are distinct, then $\left[  \tilde s_{j \, k}, \tilde \Delta_{l \, m} \right] =  \left[ \tilde \Delta_{j \, k}, \tilde \Delta_{l \, m} \right] = 0$.
\end{lem}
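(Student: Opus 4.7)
The plan is to reduce both identities to the fact that disjoint transpositions in $S_N$ commute with each other and that the multiplication operator by a polynomial in $\lambda_l,\lambda_m$ commutes with $\tilde s_{j\,k}$ whenever $\{j,k\}\cap\{l,m\}=\emptyset$ (and vice versa).

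First I would deal with $[\tilde s_{j\,k},\tilde\Delta_{l\,m}]$. This is essentially immediate from property \ref{divdiffequivariance} of Lemma \ref{divdiffprops}: applying it gives $\tilde s_{j\,k}\tilde\Delta_{l\,m}=\tilde\Delta_{s_{j\,k}(l)\,s_{j\,k}(m)}\tilde s_{j\,k}$, and since $j,k,l,m$ are distinct the transposition $s_{j\,k}$ fixes both $l$ and $m$, so the right-hand side equals $\tilde\Delta_{l\,m}\tilde s_{j\,k}$. Alternatively, one can verify this directly on $p\in\ca P(\bC^N)$ by writing $\tilde\Delta_{l\,m}p = (p-\tilde s_{l\,m}p)/(\lambda_l-\lambda_m)$ and noting that $\tilde s_{j\,k}$ preserves $\lambda_l-\lambda_m$ and commutes with $\tilde s_{l\,m}$ in $S_N$.

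For $[\tilde\Delta_{j\,k},\tilde\Delta_{l\,m}]$ I would expand both operators using the definition $\tilde\Delta_{j\,k}=(\lambda_j-\lambda_k)^{-1}(1-\tilde s_{j\,k})$, viewed as an operator on $\ca P(\bC^N)$ (or $\ca C^\omega(\bC^N)$), and check termwise. The four terms in $\tilde\Delta_{j\,k}\tilde\Delta_{l\,m}-\tilde\Delta_{l\,m}\tilde\Delta_{j\,k}$ involve products and commutators between $\tilde s_{j\,k}$, $\tilde s_{l\,m}$ and the multiplication operators by $(\lambda_j-\lambda_k)^{\pm 1}$ and $(\lambda_l-\lambda_m)^{\pm 1}$. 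Since the four indices $j,k,l,m$ are distinct:
\begin{itemize}
\item $\tilde s_{j\,k}$ and $\tilde s_{l\,m}$ commute in $S_N$ (disjoint-support transpositions);
\item $\tilde s_{j\,k}$ fixes $\lambda_l$ and $\lambda_m$, hence commutes with the multiplication operator by $\lambda_l-\lambda_m$ (and with its inverse on the appropriate domain), and symmetrically for $\tilde s_{l\,m}$ and $\lambda_j-\lambda_k$;
\item multiplication operators by rational functions of the $\lambda_i$ commute amongst themselves.
\end{itemize}
Using these three facts, each of the four terms can be moved past one another and the two orderings match, so the commutator vanishes.

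The only point that requires a little care is that $\tilde\Delta_{j\,k}$, as written, involves a formal division by $\lambda_j-\lambda_k$, so one should either verify the identity on monomials (where Lemma \ref{divdiffprops}\ref{divdiffformula} produces a polynomial) and then extend by linearity and continuity to $\ca C^\omega(\bC^N)$ via Corollary \ref{dAHAregrepanalytic}, or work from the outset with the polynomial expression $(\lambda_j-\lambda_k)p-(\lambda_j-\lambda_k)\tilde s_{j\,k}p$ in the numerator before dividing. I do not expect any serious obstacle here — this is really a bookkeeping exercise — and the whole proof should fit in just a few lines of symbol manipulation.
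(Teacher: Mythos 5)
Your proof is correct and takes essentially the same (direct) route as the paper, which simply states ``This follows directly'': the first commutator is the equivariance property \ref{divdiffequivariance} of \rfl{divdiffprops} specialized to disjoint index pairs, and the second follows by expanding each $\tilde \Delta$ as a combination of $1$ and a transposition with coefficients depending only on its own pair of variables, all of which commute when the four indices are distinct. No gaps.
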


\begin{proof}
This follows directly.
\end{proof}

\begin{lem} \label{regrep7lem}
Let $j,k,l \in \{1,\ldots,N\}$ be distinct.
Then $\left[\tilde \Delta_{j \, k},\tilde \Delta_{k \, l} \right] = \tilde s_{k \, l} \tilde \Delta_{j \, k} \tilde \Delta_{k \, l}  \tilde s_{j \, k}$.
\end{lem}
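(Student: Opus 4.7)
Both sides live in (the completion of) the span of the six permutations on $\{j,k,l\}$ acting on $\ca C^\omega(\bC^N)$, with coefficients that are rational functions in $\lambda_j,\lambda_k,\lambda_l$. The natural approach is to reduce both sides to such an expansion and compare coefficient by coefficient.

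First I would simplify the right-hand side using the equivariance property (\rfl{divdiffprops}\,\ref{divdiffequivariance}) together with the identity $\tilde s_{a\,b}\tilde\Delta_{a\,b}=\tilde\Delta_{a\,b}$ from \rfl{divdiffprops}\,\ref{divdiffnilpotency}. Since $s_{k\,l}$ fixes $j$ and swaps $k,l$, one has $\tilde s_{k\,l}\tilde\Delta_{j\,k}=\tilde\Delta_{j\,l}\tilde s_{k\,l}$; combined with $\tilde s_{k\,l}\tilde\Delta_{k\,l}=\tilde\Delta_{k\,l}$ this collapses $\tilde s_{k\,l}\tilde\Delta_{j\,k}\tilde\Delta_{k\,l}$ to $\tilde\Delta_{j\,l}\tilde\Delta_{k\,l}$. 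A second application, $\tilde\Delta_{k\,l}\tilde s_{j\,k}=\tilde s_{j\,k}\tilde\Delta_{j\,l}$, then gives the compact form
\[
\tilde s_{k\,l}\tilde\Delta_{j\,k}\tilde\Delta_{k\,l}\tilde s_{j\,k}=\tilde\Delta_{j\,l}\,\tilde s_{j\,k}\,\tilde\Delta_{j\,l}.
\]

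Next I would expand both sides directly by means of $\tilde\Delta_{a\,b}=(\lambda_a-\lambda_b)^{-1}(1-\tilde s_{a\,b})$, using the only commutation rule needed, namely $\tilde s_{a\,b}(\lambda_c-\lambda_d)^{-1}=(\lambda_{s_{a\,b}(c)}-\lambda_{s_{a\,b}(d)})^{-1}\tilde s_{a\,b}$. Writing $\lambda_{a\,b}=\lambda_a-\lambda_b$ and collecting terms, the LHS becomes
\[
\tilde\Delta_{j\,k}\tilde\Delta_{k\,l}-\tilde\Delta_{k\,l}\tilde\Delta_{j\,k}
=-\tfrac{1}{\lambda_{j\,k}\lambda_{j\,l}}\tilde s_{k\,l}+\tfrac{1}{\lambda_{j\,l}\lambda_{k\,l}}\tilde s_{j\,k}+\tfrac{1}{\lambda_{j\,k}\lambda_{j\,l}}\tilde s_{j\,k}\tilde s_{k\,l}-\tfrac{1}{\lambda_{j\,l}\lambda_{k\,l}}\tilde s_{k\,l}\tilde s_{j\,k},
\]
after the $1$-coefficient cancels and the other coefficients are simplified using the telescoping identity $\lambda_{j\,l}-\lambda_{j\,k}=\lambda_{k\,l}$. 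A parallel expansion of $\tilde\Delta_{j\,l}\tilde s_{j\,k}\tilde\Delta_{j\,l}$, together with the $S_3$-identities $\tilde s_{j\,l}\tilde s_{j\,k}\tilde s_{j\,l}=\tilde s_{k\,l}$, $\tilde s_{j\,l}\tilde s_{j\,k}=\tilde s_{j\,k}\tilde s_{k\,l}$, and $\tilde s_{j\,k}\tilde s_{j\,l}=\tilde s_{k\,l}\tilde s_{j\,k}$, yields the same linear combination, completing the proof.

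The main obstacle is purely bookkeeping: making sure that the rational-function coefficients attached to each of the six elements of $S_{\{j,k,l\}}$ survive the pushing-past of denominators in the correct way. Since all other indices $m\notin\{j,k,l\}$ play no role (they commute with every operator in sight by \rfl{regrep5lem}), the computation reduces to the three-variable case and only the six basis elements $1,\tilde s_{j\,k},\tilde s_{k\,l},\tilde s_{j\,l},\tilde s_{j\,k}\tilde s_{k\,l},\tilde s_{k\,l}\tilde s_{j\,k}$ appear; the verification is then a finite check.
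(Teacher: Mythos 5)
Your proof is correct and follows essentially the same route as the paper's: both expand the commutator $\left[\tilde \Delta_{j\,k},\tilde \Delta_{k\,l}\right]$ in the group algebra of $S_{\{j,k,l\}}$ with rational-function coefficients (using $\tilde \Delta_{a\,b}=(\lambda_a-\lambda_b)^{-1}(1-\tilde s_{a\,b})$ and pushing transpositions past denominators) and then verify the identity by a finite coefficient check. Your preliminary reduction of the right-hand side to $\tilde \Delta_{j\,l}\,\tilde s_{j\,k}\,\tilde \Delta_{j\,l}$ via the equivariance property is a tidy organizational variant — the paper instead refactors the expanded left-hand side back into the form $\tilde s_{k\,l}\tilde \Delta_{j\,k}\tilde \Delta_{k\,l}\tilde s_{j\,k}$ — but the substance of the computation is identical.
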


\begin{proof}
First note that
\begin{equation} 
\tilde \Delta_{j \, k} \tilde \Delta_{k \, l} = \frac{1 -  \tilde s_{j \, k}}{\lambda_j-\lambda_k} \frac{1 -  \tilde s_{k \, l}}{\lambda_k-\lambda_l)} = \frac{1}{\lambda_j-\lambda_k} \left( \frac{1 -  \tilde s_{k \, l}}{\lambda_k-\lambda_l} + \frac{ \tilde s_{j \, k}  \tilde s_{k \, l} -  \tilde s_{j \, k}}{\lambda_j-\lambda_l} \right) \label{regrepprod}.
\end{equation}
By swapping $j$ and $l$ and using $\tilde \Delta_{k \, j} = -\tilde \Delta_{j \, k}$, we have
\begin{equation} 
\tilde \Delta_{k \, l} \tilde \Delta_{j \, k} = \frac{1}{\lambda_k-\lambda_l} \left( \frac{1 -  \tilde s_{j \, k}}{\lambda_j-\lambda_k} + \frac{\tilde s_{k \, l} \tilde s_{j \, k} -  \tilde s_{k \, l}}{\lambda_j-\lambda_l} \right) \label{regrepprodrev}.
\end{equation}

When subtracting \rfeqn{regrepprodrev} from \rfeqn{regrepprod} we notice that the terms proportional to $1 \in \bC S_N$ cancel.
In particular, we find the following expression for the commutator:
\begin{align*}
[\tilde \Delta_{j \, k},\tilde \Delta_{k \, l}]  &= \frac{(\lambda_k-\lambda_l)^{-1} - (\lambda_j-\lambda_l)^{-1}}{\lambda_j-\lambda_k } \tilde s_{j \, k} - \frac{\tilde s_{k \, l}  \tilde s_{j \, k}}{(\lambda_k-\lambda_l)(\lambda_j-\lambda_l)} + \\
& \qquad + \frac{\tilde s_{j \, k}  \tilde s_{k \, l}}{(\lambda_j-\lambda_k) (\lambda_j-\lambda_l)}  + \frac{ (\lambda_j-\lambda_l)^{-1} - (\lambda_j-\lambda_k)^{-1}  }{\lambda_k-\lambda_l}\tilde s_{k \, l}.
\end{align*}
Making use of the fact that $a^{-1}-b^{-1} = (b-a)a^{-1}b^{-1}$ we see that
\begin{align*}
[\tilde \Delta_{j \, k},\tilde \Delta_{k \, l}] &= \frac{(\lambda_k-\lambda_l)^{-1} ( \tilde s_{j \, k} -  \tilde s_{k \, l}  \tilde s_{j \, k})+(\lambda_j-\lambda_k)^{-1} ( \tilde s_{j \, k}  \tilde s_{k \, l} -  \tilde s_{k \, l} )}{\lambda_j-\lambda_l} \\
&= \frac{ (\lambda_k-\lambda_l)^{-1}  \tilde s_{k \, l} ( \tilde s_{k \, l}  \tilde s_{j \, k} -  \tilde s_{j \, k}) + (\lambda_j-\lambda_k)^{-1}  \tilde s_{k \, l} ( \tilde s_{j \, k}  \tilde s_{k \, l}  \tilde s_{j \, k} - 1)}{\lambda_j-\lambda_l} \\
&= \tilde s_{k \, l} \frac{(\lambda_k-\lambda_l)^{-1}  (1 -  \tilde s_{k \, l})  + (\lambda_j-\lambda_l)^{-1}  ( \tilde s_{j \, k}  \tilde s_{k \, l} -  \tilde s_{j \, k} )}{\lambda_j-\lambda_k}  \tilde s_{j \, k}. 
\end{align*}
By comparing with \rfeqn{regrepprod}, we see that this equals $ \tilde s_{k \, l} \tilde \Delta_{j \, k} \tilde \Delta_{k \, l}  \tilde s_{j \, k}$.
\end{proof}
For the following lemma we no longer need to write out $\tilde \Delta_{j \, k}$ as a divided difference. 
We can simply make use of the previous lemmas and properties of the $\tilde s_{j \, k}$.
\begin{lem}
Let $j,k,l \in \{1,\ldots,N\}$ be distinct.
Then
\begin{align} 
 \tilde s_{j \, k} \tilde \Delta_{k \, l} \tilde s_{j \, k} &= \tilde s_{k \, l} \tilde \Delta_{j \, k} \tilde s_{k \, l},  \label{regrep6} \\
 \tilde s_{j \, k}  \tilde s_{k \, l} \tilde \Delta_{j \, k} &= \tilde \Delta_{k \, l}  \tilde s_{j \, k}  \tilde s_{k \, l}, \label{regrep6a} \\
\tilde \Delta_{k \, l}  \tilde s_{j \, k} \tilde \Delta_{k \, l} &= \tilde \Delta_{j \, k} \tilde \Delta_{k \, l}  \tilde s_{j \, k} +  \tilde s_{j \, k} \tilde \Delta_{k \, l} \tilde \Delta_{j \, k},  \label{regrep9}\\
\tilde \Delta_{j\, k} \tilde \Delta_{k \, l} \tilde \Delta_{j \, k} &= \tilde \Delta_{k \, l} \tilde \Delta_{j \, k} \tilde \Delta_{k \, l}. \label{regrep10}
\end{align}
\end{lem}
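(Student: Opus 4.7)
The plan is to prove the four identities in order, leveraging progressively the equivariance property (Lemma~\ref{divdiffprops}(i)) in the form $\tilde s_{j\,k}\tilde\Delta_{l\,m} = \tilde\Delta_{s_{j\,k}(l)\,s_{j\,k}(m)}\tilde s_{j\,k}$, the idempotent/sign identity $\tilde s_{j\,k}\tilde\Delta_{j\,k} = -\tilde\Delta_{j\,k}\tilde s_{j\,k} = \tilde\Delta_{j\,k}$ from Lemma~\ref{divdiffprops}(ii), and finally the commutator formula of Lemma~\ref{regrep7lem}.

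For \rf{regrep6}, note that $s_{j\,k}$ sends the pair $(k,l)$ to $(j,l)$, so equivariance gives $\tilde s_{j\,k}\tilde\Delta_{k\,l}\tilde s_{j\,k}=\tilde\Delta_{j\,l}$; by symmetry the right-hand side also equals $\tilde\Delta_{j\,l}$. For \rf{regrep6a}, a double application of equivariance does it: first move $\tilde\Delta_{j\,k}$ past $\tilde s_{k\,l}$ to get $\tilde\Delta_{j\,l}\tilde s_{k\,l}$, then move it past $\tilde s_{j\,k}$ from the other side to get $\tilde\Delta_{k\,l}\tilde s_{j\,k}\tilde s_{k\,l}$.

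For \rf{regrep9}, I would first use equivariance on the left-hand side ($\tilde s_{j\,k}\tilde\Delta_{k\,l}=\tilde\Delta_{j\,l}\tilde s_{j\,k}$) to rewrite it as $\tilde\Delta_{k\,l}\tilde\Delta_{j\,l}\tilde s_{j\,k}$. On the right-hand side, apply the same equivariance and then Lemma~\ref{divdiffprops}(ii) to rewrite $\tilde s_{j\,k}\tilde\Delta_{k\,l}\tilde\Delta_{j\,k}$ as $-\tilde\Delta_{j\,l}\tilde\Delta_{j\,k}$, and simultaneously rewrite $\tilde\Delta_{j\,k}\tilde\Delta_{k\,l}\tilde s_{j\,k}$ by pulling $\tilde s_{j\,k}$ all the way to the right through further use of Lemma~\ref{divdiffprops}(ii). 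After cancelling the common factor of $\tilde s_{j\,k}$ on the right, \rf{regrep9} reduces to the cyclic identity
\[
\tilde\Delta_{j\,k}\tilde\Delta_{k\,l}+\tilde\Delta_{k\,l}\tilde\Delta_{l\,j}+\tilde\Delta_{l\,j}\tilde\Delta_{j\,k}=0,
\]
which is a direct calculation starting from the explicit formula \rf{regrepprod} in the proof of Lemma~\ref{regrep7lem}: each summand contributes a single-transposition piece and a length-two piece, and the partial-fractions identity $\frac{1}{(\lambda_a-\lambda_b)(\lambda_b-\lambda_c)} + \text{cyclic} = 0$ kills both contributions.

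For the braid relation \rf{regrep10}, my plan is to reduce it to Lemma~\ref{regrep7lem}. Using the commutator formula $[\tilde\Delta_{j\,k},\tilde\Delta_{k\,l}]=\tilde s_{k\,l}\tilde\Delta_{j\,k}\tilde\Delta_{k\,l}\tilde s_{j\,k}$, I rewrite $\tilde\Delta_{j\,k}\tilde\Delta_{k\,l}\tilde\Delta_{j\,k}$ by commuting the outer pair past the inner factor; symmetrically I rewrite $\tilde\Delta_{k\,l}\tilde\Delta_{j\,k}\tilde\Delta_{k\,l}$ by applying the swapped version of the same commutator identity (available by interchanging the roles of $j$ and $l$). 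The difference between the two sides becomes a linear combination of terms of the form $\tilde s_{?}\tilde\Delta_{?\,?}\tilde\Delta_{?\,?}\tilde s_{?}$, and using \rf{regrep6}–\rf{regrep9} together with Lemma~\ref{divdiffprops}(ii) all these terms pair up and cancel. The main obstacle will be managing the bookkeeping in this last step: a direct expansion using \rfeqn{regrepprod} applied to each of $\tilde\Delta_{j\,k}\tilde\Delta_{k\,l}$ and $\tilde\Delta_{k\,l}\tilde\Delta_{j\,k}$, combined with the partial-fractions identity used for the cyclic relation above, provides a cleaner alternative and is the fallback I would use if the commutator manipulation gets unwieldy.
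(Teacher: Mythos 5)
Your proposal is correct, and for \rf{regrep6} and \rf{regrep6a} it coincides with (or is a trivial variant of) the paper's argument. The interesting divergence is at \rf{regrep9}: the paper obtains it by first combining \rf{regrep6} with \rfl{regrep7lem} to get $\tilde s_{j\,k}\tilde\Delta_{k\,l}\tilde s_{j\,k}\tilde\Delta_{k\,l}\tilde s_{j\,k}=[\tilde\Delta_{j\,k},\tilde\Delta_{k\,l}]$ and then conjugating by $\tilde s_{j\,k}$, whereas you strip off the common right factor of $\tilde s_{j\,k}$ and reduce everything to the three-term cyclic identity $\tilde\Delta_{j\,k}\tilde\Delta_{k\,l}+\tilde\Delta_{k\,l}\tilde\Delta_{l\,j}+\tilde\Delta_{l\,j}\tilde\Delta_{j\,k}=0$, verified by partial fractions from \rfeqn{regrepprod}. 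I checked that this reduction is exactly right (note $\tilde\Delta_{j\,l}=-\tilde\Delta_{l\,j}$ converts the three terms you obtain into the cyclic form), and the identity itself holds: the coefficients of $1$, of each single transposition, and of the common $3$-cycle all vanish by the identity $\sum_{\mathrm{cyc}}\bigl((\lambda_a-\lambda_b)(\lambda_b-\lambda_c)\bigr)^{-1}=0$. Your route is arguably more illuminating, since the cyclic relation is the structural fact underlying \rf{regrep9}; the paper's conjugation trick is shorter but more opaque. One small wobble: $\tilde s_{j\,k}\tilde\Delta_{k\,l}\tilde\Delta_{j\,k}=\tilde\Delta_{j\,l}\tilde s_{j\,k}\tilde\Delta_{j\,k}=+\tilde\Delta_{j\,l}\tilde\Delta_{j\,k}$, not $-\tilde\Delta_{j\,l}\tilde\Delta_{j\,k}$; the minus sign only appears when you subsequently rewrite this as $-\tilde\Delta_{j\,l}\tilde\Delta_{j\,k}\tilde s_{j\,k}$ via \rfl{divdiffprops}~\ref{divdiffnilpotency} in order to cancel the right factor $\tilde s_{j\,k}$ — your final cyclic identity is the correct one, so this is a phrasing issue rather than an error. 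For the braid relation \rf{regrep10} your primary plan (commuting via \rfl{regrep7lem} and using $\tilde\Delta_{j\,k}^2=0$) is essentially what the paper does — it shows each side is its own conjugate by the appropriate transposition up to sign and then matches the conjugates using \rf{regrep6} — and your fallback of brute-force expansion in $\bC S_3$ with rational coefficients is guaranteed to close any bookkeeping gap, so nothing essential is missing.
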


\begin{proof}
\rfeqn{regrep6} follows immediately by moving the leftmost transposition through the divided difference operator (both the left- and right-hand side equal $\tilde \Delta_{j \, l}$).
\rfeqn{regrep6a} is obtained from \rfeqn{regrep6} by left-multiplying by $\tilde s_{j \, k}$ and right-multiplying by $\tilde s_{k \, l}$. \\

Note that from \rfeqn{regrep6} and \rfl{regrep7lem} we obtain
\[ \tilde s_{j \, k} \tilde \Delta_{k \, l} \tilde s_{j \, k} \tilde \Delta_{k \, l} \tilde s_{j \, k} = \tilde s_{k \, l} \tilde \Delta_{j \, k} \tilde s_{k \, l} \tilde \Delta_{k \, l} \tilde s_{j \, k} = [\tilde \Delta_{j \, k},\tilde \Delta_{k \, l}]. \]
Now conjugate this by $\tilde s_{j\, k}$ to establish \rfeqn{regrep9}.\\

Finally, we use \rfl{regrep7lem} to obtain that the left-hand side of \rfeqn{regrep10} produces polynomials invariant in $k \leftrightarrow l$ and maps polynomials invariant in $k \leftrightarrow l$ to 0:
\begin{align*}
\tilde \Delta_{j \, k} \tilde \Delta_{k \, l} \tilde \Delta_{j \, k} &= [\tilde \Delta_{j \, k},\tilde \Delta_{k \, l}]\tilde \Delta_{j \, k} + \tilde \Delta_{k \, l} \tilde \Delta_{j \, k}^2 &&=  \tilde s_{k \, l} \tilde \Delta_{j \, k} \tilde \Delta_{k \, l}  \tilde s_{j \, k} \tilde \Delta_{j \, k} && =  \tilde s_{k \, l} \tilde \Delta_{j \, k} \tilde \Delta_{k \, l} \tilde \Delta_{j \, k}; \\
\tilde \Delta_{j \, k} \tilde \Delta_{k \, l} \tilde \Delta_{j \, k} &= \tilde \Delta_{j \, k} [\tilde \Delta_{k \, l},\tilde \Delta_{j \, k}] + \tilde \Delta_{j \, k}^2 \tilde \Delta_{k \, l} && =  \tilde \Delta_{j \, k}  \tilde s_{j \, k} \tilde \Delta_{k \, l} \tilde \Delta_{j \, k}  \tilde s_{k \, l} && = - \tilde \Delta_{j \, k} \tilde \Delta_{k \, l} \tilde \Delta_{j \, k}  \tilde s_{k \, l}.
\end{align*}
We obtain $\tilde \Delta_{j \, k} \tilde \Delta_{k \, l} \tilde \Delta_{j \, k} =$ $-\tilde s_{k \, l} \tilde \Delta_{j \, k} \tilde \Delta_{k \, l} \tilde \Delta_{j \, k}  \tilde s_{k \, l}$ and $\tilde \Delta_{k \, l} \tilde \Delta_{j \, k} \tilde \Delta_{k \, l} = $ $-  \tilde s_{j \, k}  \tilde \Delta_{k \, l} \tilde \Delta_{j \, k} \tilde \Delta_{k \, l}  \tilde s_{j \, k}$, by swapping $j$ and $l$. 
Hence in order to prove \rfeqn{regrep10} it suffices to show $ \tilde s_{k \, l} \tilde \Delta_{j \, k} \tilde \Delta_{k \, l} \tilde \Delta_{j \, k}  \tilde s_{k \, l}$ $=$ $ \tilde s_{j \, k}  \tilde \Delta_{k \, l} \tilde \Delta_{j \, k} \tilde \Delta_{k \, l}  \tilde s_{j \, k}$.
This is done by repeatedly applying \rfeqn{regrep6}:
\begin{align*}
\tilde s_{k \, l} \tilde \Delta_{j \, k} \tilde \Delta_{k \, l} \tilde \Delta_{j \, k}  \tilde s_{k \, l} &=  
-\tilde s_{k \, l} \tilde \Delta_{j \, k} \tilde s_{k \, l} \tilde \Delta_{k \, l} \tilde s_{k \, l} \tilde \Delta_{j \, k}  \tilde s_{k \, l} &&=  
-\tilde s_{j \, k} \tilde \Delta_{k \, l}  \tilde s_{j \, k} \tilde \Delta_{k \, l} \tilde s_{j \, k} \tilde \Delta_{k \, l}  \tilde s_{j \, k} \\
& =  -\tilde s_{j \, k} \tilde \Delta_{k \, l}  \tilde s_{k \, l} \tilde \Delta_{j \, k} \tilde s_{k \, l} \tilde \Delta_{k \, l}  \tilde s_{j \, k} 
&&= \tilde s_{j \, k} \tilde \Delta_{k \, l}  \tilde \Delta_{j \, k}  \tilde \Delta_{k \, l}  \tilde s_{j \, k}.  \qedhere
\end{align*}
\end{proof}

\subsection{Recursive relations involving the deformed permutations}

We present two lemmas that establish recursive relations involving the regular representation used in \rfp{regrepsymmetrizers} and \rfp{Psiinduction}.

\begin{lem} \label{regreplem1}
Let $\gamma \in \bR$, $\bm \lambda \in \bC^N$ and $\mu \in \bC \setminus \{ \lambda_1,\ldots,\lambda_N \}$.
We have
\[ \sum_{m=1}^N \tilde s_{m,\gamma} \ldots \tilde s_{N-1,\gamma} \frac{\ii \gamma}{\lambda_N-\mu} \tilde{\ca S}^{(N)} = \left( 1 - \tau^+_\mu(\bm \lambda) \right) \tilde{\ca S}^{(N)} \quad \in \End(\ca C^\omega(\bC^N_\n{reg})). \]
\end{lem}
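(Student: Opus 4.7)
\emph{Proof proposal.} My plan is to reduce the claim to a pointwise identity among rational functions, which will then yield to a short telescoping calculation.

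First I would observe that both sides end in the symmetrizer $\tilde{\ca S}^{(N)}$, so it suffices to verify the identity on the image of $\tilde{\ca S}^{(N)}$, i.e.\ when both operators act on a symmetric function $g \in \ca C^\omega(\bC^N)^{S_N}$. For any such $g$, $\tilde s_j g = g$ and $\tilde \Delta_j g = 0$, so each $\tilde s_{j,\gamma} = \tilde s_j - \ii \gamma \tilde \Delta_j$ commutes with multiplication by $g$; iterating this, one obtains
\[
\tilde s_{m,\gamma}\cdots \tilde s_{N-1,\gamma}(u_N g) \;=\; g\cdot \tilde s_{m,\gamma}\cdots \tilde s_{N-1,\gamma}(u_N),
\]
where $u_k := \ii \gamma/(\lambda_k-\mu)$. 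The claim therefore reduces to the function identity
\[
\sum_{m=1}^N \tilde s_{m,\gamma}\tilde s_{m+1,\gamma}\cdots \tilde s_{N-1,\gamma}(u_N) \;=\; 1 - \tau^+_\mu(\bm \lambda).
\]

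Next I would establish the closed-form expression
\[
\tilde s_{m,\gamma}\tilde s_{m+1,\gamma}\cdots \tilde s_{N-1,\gamma}(u_N) \;=\; u_m\, v_{m+1} v_{m+2} \cdots v_N,
\]
with $v_k := 1 - u_k = (\lambda_k-\mu-\ii \gamma)/(\lambda_k-\mu)$, so that $\prod_k v_k = \tau^+_\mu(\bm \lambda)$. This would be proved by descending induction on $m$. The base case $m = N-1$ is a short direct calculation: using $\tilde \Delta_{N-1}(u_N) = \ii \gamma/[(\lambda_{N-1}-\mu)(\lambda_N-\mu)]$, one finds $\tilde s_{N-1,\gamma}(u_N) = u_{N-1}(1 - u_N) = u_{N-1} v_N$. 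For the inductive step $m+1 \to m$, the factors $v_{m+2},\ldots,v_N$ depend only on $\lambda_k$ with $k>m+1$, so they are fixed by $\tilde s_m$ and annihilated by $\tilde \Delta_m$; hence $\tilde s_{m,\gamma}$ commutes with multiplication by each of them, and the step collapses to the shifted base case $\tilde s_{m,\gamma}(u_{m+1}) = u_m v_{m+1}$.

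Finally, since $u_m = 1 - v_m$, the desired sum telescopes:
\[
\sum_{m=1}^N u_m \prod_{j>m} v_j \;=\; \sum_{m=1}^N\!\left(\prod_{j>m} v_j - \prod_{j\ge m} v_j\right) \;=\; 1 - \prod_{j=1}^N v_j \;=\; 1 - \tau^+_\mu(\bm \lambda).
\]
The only subtle point will be the opening reduction, which relies on the fact that both $\tilde s_j$ and $\tilde \Delta_j$ act diagonally on products whose second factor is $S_N$-invariant; once this is set up, the remainder of the argument is purely computational and I do not expect a real obstacle.
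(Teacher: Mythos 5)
Your argument is correct. The opening reduction is sound: the image of $\tilde{\ca S}^{(N)}$ is exactly the set of $S_N$-invariant functions of $\bm\lambda$, and for such $g$ one has $\tilde s_{j,\gamma}(fg) = g\,\tilde s_{j,\gamma}(f)$ (since $\tilde s_j(fg) = (\tilde s_j f)g$ and $\tilde \Delta_j(fg) = g\,\tilde \Delta_j f$ whenever $\tilde s_j g = g$), so the operator identity does collapse to the scalar identity $\sum_{m=1}^N \tilde s_{m,\gamma}\cdots\tilde s_{N-1,\gamma}(u_N) = 1-\tau^+_\mu(\bm\lambda)$; the closed form $u_m v_{m+1}\cdots v_N$ and the telescoping are both correct. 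This is, however, organized quite differently from the paper's proof, which inducts on $N$: there the argument carries $\tilde{\ca S}^{(N+1)}$ through every line, uses the symmetrizer recursion of \rfl{symmetrizerrecursion}, and its key computation $\tilde s_{N,\gamma}\,\tfrac{\ii\gamma}{\lambda_{N+1}-\mu}\,\tilde{\ca S}^{(N+1)} = \tfrac{\ii\gamma}{\lambda_N-\mu}\,\tfrac{\lambda_{N+1}-\mu-\ii\gamma}{\lambda_{N+1}-\mu}\,\tilde{\ca S}^{(N+1)}$ is precisely your base case $\tilde s_{N-1,\gamma}(u_N)=u_{N-1}v_N$ in operator-level disguise, while the final recombination $(1-\tau^+_\mu(\bm\lambda))v_{N+1}+u_{N+1}=1-\tau^+_\mu(\bm\lambda,\lambda_{N+1})$ is one step of your telescoping sum. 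What your version buys is transparency and a slightly stronger intermediate statement — an explicit closed form for each individual summand $\tilde s_{m,\gamma}\cdots\tilde s_{N-1,\gamma}\,\tfrac{\ii\gamma}{\lambda_N-\mu}\,\tilde{\ca S}^{(N)} = u_m v_{m+1}\cdots v_N\,\tilde{\ca S}^{(N)}$ — whereas the paper's operator-level induction matches the style of the neighbouring results (notably \rfl{regreplem2}), which are proved by the same $N\to N+1$ scheme and reuse the same intermediate operator identities.
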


\begin{proof}
By induction; the $N=1$ case follows from $\frac{\ii \gamma}{\lambda_1-\mu} =1- \frac{\lambda_1-\mu-\ii \gamma}{\lambda_1-\mu}$.
To establish the induction step we first remark that
\begin{align*} 
\tilde s_{N,\gamma} \frac{\ii \gamma}{\lambda_{N+1}-\mu} \tilde{\ca S}^{(N+1)} &= \left( \frac{-\ii \gamma}{\lambda_N-\lambda_{N+1}}  \frac{\ii \gamma}{\lambda_{N+1}-\mu}+\frac{\lambda_N-\lambda_{N+1}+\ii \gamma}{\lambda_N-\lambda_{N+1}} \frac{\ii \gamma}{\lambda_N-\mu} \tilde s_N \right) \tilde{\ca S}^{(N+1)} \\
&= 
\frac{\ii \gamma}{\lambda_N-\mu} \frac{\lambda_{N+1}-\mu-\ii \gamma}{\lambda_{N+1}-\mu} \tilde{\ca S}^{(N+1)}.
\end{align*}
Using this we see that
\begin{align*}
\lefteqn{\sum_{m=1}^{N+1} \tilde s_{m,\gamma} \ldots \tilde s_{N,\gamma} \frac{\ii \gamma}{\lambda_{N+1}-\mu} \tilde{\ca S}^{(N+1)} =}\\
&= \left( \sum_{m=1}^N \tilde s_{m,\gamma} \ldots \tilde s_{N,\gamma}  \frac{\ii \gamma}{\lambda_{N+1}-\mu} + \frac{\ii \gamma}{\lambda_{N+1}-\mu} \right) \tilde{\ca S}^{(N+1)} \\
&= \left( \sum_{m=1}^N \tilde s_{m,\gamma} \ldots \tilde s_{N-1,\gamma} \frac{\ii \gamma}{\lambda_N-\mu} \frac{\lambda_{N+1}-\mu-\ii \gamma}{\lambda_{N+1}-\mu} + \frac{\ii \gamma}{\lambda_{N+1}-\mu}  \right) \tilde{\ca S}^{(N+1)} \\
&= \left( \sum_{m=1}^N \tilde s_{m,\gamma} \ldots \tilde s_{N-1,\gamma}  \frac{\ii \gamma}{\lambda_N-\mu} \tilde{\ca S}^{(N)} \frac{\lambda_{N+1}-\mu-\ii \gamma}{\lambda_{N+1}-\mu} + \frac{\ii \gamma}{\lambda_{N+1}-\mu} \right) \tilde{\ca S}^{(N+1)} . 
\end{align*}
By virtue of the induction hypothesis this yields
\begin{align*} \sum_{m=1}^{N+1} \tilde s_{m,\gamma} \ldots \tilde s_{N,\gamma}  \frac{\ii \gamma}{\lambda_{N+1}-\mu} \tilde{\ca S}^{(N+1)} 
&= \left(  \left( 1-\tau^+_\mu(\bm \lambda) \right) \tilde{\ca S}^{(N)} \frac{\lambda_{N+1}-\mu-\ii \gamma}{\lambda_{N+1}-\mu} + \frac{\ii \gamma}{\lambda_{N+1}-\mu} \right) \tilde{\ca S}^{(N+1)} \\
&= \left(  \left( 1-\tau^+_\mu(\bm \lambda)  \right)  \frac{\lambda_{N+1}-\mu-\ii \gamma}{\lambda_{N+1}-\mu} + \frac{\ii \gamma}{\lambda_{N+1}-\mu}  \right) \tilde{\ca S}^{(N+1)} \\
&= \left( 1-\tau^+_{\mu}(\bm \lambda,\lambda_{N+1}) \right) \tilde{\ca S}^{(N+1)}. \qedhere
\end{align*}
\end{proof}

\begin{lem} \label{regreplem2}
Let $\gamma \in \bR$, $(\bm \lambda,\lambda_{N+1}) \in \bC^{N+1}$.
In $\End(\ca C^\omega(\bC^{N+1}))$ we have
\[ \sum_{m=1}^{N+1} \tilde s_m \ldots \tilde s_N  \tau^+_{\lambda_{N+1}}(\bm \lambda) \tilde{\ca S}^{(N)} = \sum_{m=1}^{N+1}  \tilde s_{m,\gamma} \ldots \tilde s_{N,\gamma}  \tilde{\ca S}^{(N)}  . \]
\end{lem}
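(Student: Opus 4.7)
The plan is to prove the identity by induction on $N$, with \rfl{regreplem1} as the main input. The base case $N=0$ is trivial: both sides reduce to $1$ (the products of transpositions are empty, $\tau^+_{\lambda_1}(\emptyset) = 1$, and $\tilde{\ca S}^{(0)} = 1$).

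For the inductive step I would exhibit both sides as the common expression $(N+1)\tilde{\ca S}^{(N+1)}\tau^+_{\lambda_{N+1}}(\bm \lambda)$. The LHS identification is nearly automatic: since $\tau^+_{\lambda_{N+1}}(\bm \lambda)$ is $S_N$-symmetric in $\bm \lambda$ it commutes with $\tilde{\ca S}^{(N)}$, and invoking the tilde version of \rfl{symmetrizerrecursion}, namely $\sum_{m=1}^{N+1}\tilde s_m \cdots \tilde s_N \,\tilde{\ca S}^{(N)} = (N+1)\tilde{\ca S}^{(N+1)}$, one obtains $\mathrm{LHS} = (N+1)\tilde{\ca S}^{(N+1)}\tau^+_{\lambda_{N+1}}(\bm \lambda)$ directly.

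For the RHS I would expand each factor $\tilde s_{j,\gamma} = \tilde s_j - \ii\gamma\,\tilde\Delta_j$ inside $\tilde T_{m,\gamma} := \tilde s_{m,\gamma}\cdots\tilde s_{N,\gamma}$, separating a ``classical'' piece (all $\tilde s_j$) from ``quantum corrections'' (those involving at least one $\tilde\Delta_j$). The classical piece is $\tilde T_m = \tilde s_m\cdots\tilde s_N$, whose sum over $m$ reproduces $(N+1)\tilde{\ca S}^{(N+1)}$ exactly as for the LHS. The corrections must then aggregate to $(N+1)\tilde{\ca S}^{(N+1)}\bigl(\tau^+_{\lambda_{N+1}}(\bm \lambda) - 1\bigr)$; this is precisely where \rfl{regreplem1} comes in, applied in the $N$-variable setting with $\mu = \lambda_{N+1}$ to rewrite $(1 - \tau^+_{\lambda_{N+1}}(\bm \lambda))\tilde{\ca S}^{(N)}$ as the explicit sum $\sum_{k=1}^N \tilde s_{k,\gamma}\cdots\tilde s_{N-1,\gamma}\tfrac{\ii\gamma}{\lambda_N - \lambda_{N+1}}\tilde{\ca S}^{(N)}$, giving a concrete target to match.

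The main obstacle will be organising the quantum corrections: an inserted $\tilde\Delta_j$ does not commute with the neighbouring $\tilde s_k$ factors nor with the coefficient $\tfrac{1}{\lambda_N-\lambda_{N+1}}$, and bringing the terms into the form produced by \rfl{regreplem1} requires systematic use of the conjugation identities collected in Appendix~\ref{divdiffopsprops}, in particular $\tilde s_{j\,k}\tilde\Delta_{l\,m} = \tilde\Delta_{s_{j\,k}(l)\,s_{j\,k}(m)}\tilde s_{j\,k}$ and the nilpotency $\tilde\Delta_j^2 = 0$ (\rfl{divdiffprops}), together with the commutation of distant simple factors from \rfl{regrep5lem}. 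After this bookkeeping, the ``outer'' segments $\tilde s_{m,\gamma}\cdots\tilde s_{N-1,\gamma}\tilde{\ca S}^{(N-1)}$ that emerge naturally from peeling off $\tilde s_{N,\gamma}$ can be collapsed using the inductive hypothesis, closing the argument.
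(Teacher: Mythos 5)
Your overall skeleton --- induction on $N$, \rfl{regreplem1} with $\mu=\lambda_{N+1}$ as the key input, and the (tilded) symmetrizer recursion of \rfl{symmetrizerrecursion} --- is the same as the paper's, and your identification of the left-hand side with $(N+1)\tilde{\ca S}^{(N+1)}\tau^+_{\lambda_{N+1}}(\bm\lambda)$ is correct (though it buys little, since moving $\tau^+$ back through $\tilde{\ca S}^{(N)}$ just returns the original expression). The gap is in your treatment of the right-hand side. Expanding every factor $\tilde s_{j,\gamma}=\tilde s_j-\ii\gamma\tilde\Delta_j$ and splitting off a ``classical'' piece leaves, for each $m$, of order $2^{N+1-m}$ cross terms carrying divided differences in arbitrary positions, and the assertion that these ``must aggregate to $(N+1)\tilde{\ca S}^{(N+1)}(\tau^+_{\lambda_{N+1}}(\bm\lambda)-1)$'' is a restatement of the lemma, not a step towards it. \rfl{regreplem1} does give a concrete form for $(1-\tau^+_{\lambda_{N+1}}(\bm\lambda))\tilde{\ca S}^{(N)}$, but that form is a sum of products of \emph{deformed} transpositions against the single coefficient $\ii\gamma/(\lambda_N-\lambda_{N+1})$, and there is no visible route from your fully expanded correction terms to it using only the conjugation and nilpotency identities of \rfl{divdiffprops}. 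Your closing sentence --- peel off only $\tilde s_{N,\gamma}$ and apply the induction hypothesis to the outer segments --- is in tension with the global expansion and is in fact the right move, but it cannot close without two ingredients absent from your plan.

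Those ingredients are: (i) the single-factor identity \rfeqn{eqn35}, $\tilde s_N\,\tfrac{\lambda_N-\lambda_{N+1}-\ii\gamma}{\lambda_N-\lambda_{N+1}}=\tilde s_{N,\gamma}+\tfrac{\ii\gamma}{\lambda_N-\lambda_{N+1}}$, and (ii) the conjugation $\tilde s_N\,\tau^+_{\lambda_{N+1}}(\bm\lambda)=\tau^+_{\lambda_N}(\bm\lambda')\,\tilde s_N\,\tfrac{\lambda_N-\lambda_{N+1}-\ii\gamma}{\lambda_N-\lambda_{N+1}}$ with $\bm\lambda'=(\lambda_1,\ldots,\lambda_{N-1})$, which is what manufactures the factor $\tau^+_{\lambda_N}(\bm\lambda')$ that the induction hypothesis needs. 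With (i) and (ii), the terms $m=1,\ldots,N$ of the left-hand side turn into $\sum_{m}\tilde s_{m,\gamma}\cdots\tilde s_{N-1,\gamma}\bigl(\tilde s_{N,\gamma}+\tfrac{\ii\gamma}{\lambda_N-\lambda_{N+1}}\bigr)\tilde{\ca S}^{(N)}$, and the residual sum is exactly what \rfl{regreplem1} resums into $(1-\tau^+_{\lambda_{N+1}}(\bm\lambda))\tilde{\ca S}^{(N)}$, cancelling against the $m=N+1$ term; no global classical/quantum splitting is ever needed. Finally, note that \rfl{regreplem1} requires $\lambda_{N+1}\notin\{\lambda_1,\ldots,\lambda_N\}$ and the coefficient $\ii\gamma/(\lambda_N-\lambda_{N+1})$ is singular at coincidence, whereas the lemma is asserted on all of $\bC^{N+1}$; the coincident case has to be recovered by a limiting argument using analyticity (De l'H\^opital), which your plan should at least mention.
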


\begin{proof}
We proceed by induction; note that in $\End(\ca C^\omega(\bC^N))$ we have
\begin{equation} \label{eqn35} \tilde s_{j,\gamma} = \tilde s_j - \frac{\ii \gamma }{\lambda_j-\lambda_{j+1}} (1-\tilde s_j) = -\frac{\ii \gamma}{\lambda_j-\lambda_{j+1}} + \tilde s_j \frac{\lambda_j-\lambda_{j+1}-\ii \gamma}{\lambda_j-\lambda_{j+1}}, \end{equation}
so that
\[ (1+\tilde s_1) \frac{\lambda_1-\lambda_2-\ii \gamma}{\lambda_1-\lambda_2} =  1+\tilde s_{1,\gamma} , \]
which proves the case $N=1$.
For the induction step write $\bm \lambda' = (\lambda_1,\ldots,\lambda_{N-1})$.
We have
\begin{align*}
\sum_{m=1}^{N+1} \tilde s_m \ldots \tilde s_N \tau^+_{\lambda_{N+1}}(\bm \lambda) \tilde{\ca S}^{(N)} 
&= \left( \left(\sum_{m=1}^{N} \tilde s_m \ldots \tilde s_{N-1} \right) \tilde s_N+1 \right) \tau^+_{\lambda_{N+1}}(\bm \lambda) \tilde{\ca S}^{(N)}  \\
&= \left( \sum_{m=1}^{N} \tilde s_m \ldots \tilde s_{N-1} \tau^+_{\lambda_N}(\bm \lambda') \tilde s_N \frac{\lambda_N-\lambda_{N+1}-\ii \gamma}{\lambda_N-\lambda_{N+1}} + \tau^+_{\lambda_{N+1}}(\bm \lambda)  \right) \tilde{\ca S}^{(N)}.
\end{align*}
Focusing on the first term we have $\sum_{m=1}^{N} \tilde s_m \ldots \tilde s_{N-1} \tau^+_{\lambda_N}(\bm \lambda') \tilde s_N \frac{\lambda_N-\lambda_{N+1}-\ii \gamma}{\lambda_N-\lambda_{N+1}} \tilde{\ca S}^{(N)} =$
\begin{align*}
&= \sum_{m=1}^{N} \tilde s_m \ldots \tilde s_{N-1} \tau^+_{\lambda_N}(\bm \lambda') \tilde{\ca S}^{(N-1)} \left( \tilde s_{N,\gamma} + \frac{\ii \gamma}{\lambda_N-\lambda_{N+1}} \right) \tilde{\ca S}^{(N)} \\
&= \sum_{m=1}^{N} \tilde s_{m,\gamma} \ldots \tilde s_{N-1,\gamma}  \tilde{\ca S}^{(N-1)} \left( \tilde s_{N,\gamma} + \frac{\ii \gamma}{\lambda_N-\lambda_{N+1}} \right) \tilde{\ca S}^{(N)} \\
&= \sum_{m=1}^{N}  \tilde s_{m,\gamma} \ldots \tilde s_{N-1,\gamma} \left( \tilde s_{N,\gamma} + \frac{\ii \gamma}{\lambda_N-\lambda_{N+1}} \right) \tilde{\ca S}^{(N)},
\end{align*}
where we have used \rfeqn{eqn35} and \rfl{symmetrizerrecursion}, as well as the induction hypothesis.
Assume for now that $\lambda_{N+1} \ne \lambda_j$ for $j=1,\ldots,N$
It follows that 
\begin{align*}
\sum_{m=1}^{N+1} \tilde s_m \ldots \tilde s_N \tau^+_{\lambda_{N\!+\!1}}(\bm \lambda) \tilde{\ca S}^{(N)} &= \left( \sum_{m=1}^{N} \tilde s_{m,\gamma} \ldots \tilde s_{N,\gamma}  + \tau^+_{\lambda_{N\!+\!1}}(\bm \lambda)  +\sum_{m=1}^{N} \tilde s_{m,\gamma} \ldots \tilde s_{N\!-\!1,\gamma}  \frac{\ii \gamma}{\lambda_N \! - \! \lambda_{N\!+\!1}} \right) \tilde{\ca S}^{(N)} \\
&= \left( \sum_{m=1}^{N}  \tilde s_{m,\gamma} \ldots \tilde s_{N,\gamma}  + 1 \right) \tilde{\ca S}^{(N)} \, = \, \sum_{m=1}^{N+1} \tilde s_{m,\gamma} \ldots \tilde s_{N,\gamma}  \tilde{\ca S}^{(N)},
\end{align*}
by virtue of \rfl{regreplem1} with $\mu = \lambda_{N+1}$, as required. 
To obtain this statement for $\lambda_{N+1} = \lambda_j$, take the limit $\lambda_{N+1} \to \lambda_j$, apply the De l'H\^opital's rule and use that differentiation preserves the space $\ca C^\omega(\bC^N)$.
\end{proof}

\subsection{The regular representation and the Yang-Baxter algebra}

The following two lemmas are used in Sections \ref{YBalgebradAHAsec} and \ref{abcdcommrelssec}.
\begin{lem} \label{regrep20}
We have 
\[ \left( 1+ \ii \gamma \tilde \Delta_{N \, N+1}\right) \ldots \left( 1+ \ii \gamma \tilde \Delta_{1 \, N+1}\right) = \tilde s_{N,\gamma} \ldots \tilde s_{1,\gamma} \tilde s_1 \ldots \tilde s_N. \]
\end{lem}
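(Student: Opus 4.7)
The plan is to proceed by induction on $N$. For the base case $N=1$, I would compute directly: the left-hand side is $1+\ii\gamma\tilde\Delta_{1,2}=1+\ii\gamma\tilde\Delta_1$, while the right-hand side is $\tilde s_{1,\gamma}\tilde s_1 = (\tilde s_1-\ii\gamma\tilde\Delta_1)\tilde s_1 = 1-\ii\gamma\tilde\Delta_1\tilde s_1 = 1+\ii\gamma\tilde\Delta_1$, where the last step uses $\tilde\Delta_1\tilde s_1 = -\tilde\Delta_1$ from \rfl{divdiffprops} \ref{divdiffnilpotency}.

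For the inductive step, I would observe that the right-hand side factorizes as
\[ \tilde s_{N,\gamma}\bigl[\tilde s_{N-1,\gamma}\ldots\tilde s_{1,\gamma}\tilde s_1\ldots\tilde s_{N-1}\bigr]\tilde s_N. \]
Since the bracketed expression only involves the indices $1,\ldots,N$, it is unchanged whether we regard it as an operator on $\ca C^\omega(\bC^N)$ or $\ca C^\omega(\bC^{N+1})$, so the induction hypothesis rewrites it as $(1+\ii\gamma\tilde\Delta_{N-1,N})\ldots(1+\ii\gamma\tilde\Delta_{1,N})$. Thus the right-hand side becomes
\[ \tilde s_{N,\gamma}\bigl[(1+\ii\gamma\tilde\Delta_{N-1,N})\ldots(1+\ii\gamma\tilde\Delta_{1,N})\bigr]\tilde s_N. \]

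Next, I would push $\tilde s_N$ leftwards through each factor using the equivariance property $\tilde s_N\tilde\Delta_{j,N+1} = \tilde\Delta_{j,N}\tilde s_N$ for $j<N$ from \rfl{divdiffprops} \ref{divdiffequivariance}, which gives $(1+\ii\gamma\tilde\Delta_{j,N})\tilde s_N = \tilde s_N(1+\ii\gamma\tilde\Delta_{j,N+1})$. Applying this $N-1$ times yields
\[ \tilde s_{N,\gamma}\tilde s_N\bigl[(1+\ii\gamma\tilde\Delta_{N-1,N+1})\ldots(1+\ii\gamma\tilde\Delta_{1,N+1})\bigr]. \]
Finally, $\tilde s_{N,\gamma}\tilde s_N = (\tilde s_N-\ii\gamma\tilde\Delta_N)\tilde s_N = 1-\ii\gamma\tilde\Delta_N\tilde s_N = 1+\ii\gamma\tilde\Delta_{N,N+1}$ by the same nilpotency identity as in the base case. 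Prepending this factor produces exactly the left-hand side, completing the induction.

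The proof involves no substantive obstacle beyond careful bookkeeping; the only subtle point is recognizing that the induction hypothesis can be applied verbatim to the "inner" operator despite the ambient space having grown from $\bC^N$ to $\bC^{N+1}$, which is legitimate precisely because none of the operators in the bracketed product touches the variable $\lambda_{N+1}$.
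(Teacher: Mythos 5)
Your proof is correct and follows essentially the same route as the paper's: induction on $N$, using the equivariance $\tilde s_N\tilde\Delta_{j\,N+1}=\tilde\Delta_{j\,N}\tilde s_N$ together with $\tilde\Delta_N\tilde s_N=-\tilde\Delta_N$ to absorb the conjugation by $\tilde s_N$ and invoke the induction hypothesis on the inner product of factors. The only cosmetic difference is that you manipulate the right-hand side into the left rather than the reverse, and you spell out the $N=1$ base case which the paper dismisses as trivial.
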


\begin{proof}
By induction on $N$; the case $N=1$ is trivially true. 
For the induction step we find
\begin{align*} 
\left( 1+ \ii \gamma \tilde \Delta_{N \, N \! + \! 1}\right) \ldots \left( 1+ \ii \gamma \tilde \Delta_{1 \, N+1}\right) 
&= \left(1+\ii \gamma \tilde \Delta_N \right) \tilde s_N \tilde s_N \left( 1+ \ii \gamma \tilde \Delta_{N-1 \, N+1}\right) \ldots \left( 1+ \ii \gamma \tilde \Delta_{1 \, N+1}\right) , \\
& =\left( \tilde s_N - \ii \gamma \Delta_N \right) \left( 1+ \ii \gamma \tilde \Delta_{N-1 \, N}\right) \ldots \left( 1+ \ii \gamma \tilde \Delta_{1 \, N}\right) \tilde s_N,  \\
&= \tilde s_{N,\gamma}  \tilde s_{N-1,\gamma} \ldots \tilde s_{1,\gamma} \tilde s_1 \ldots \tilde s_{N-1} \tilde s_N,
\end{align*}
where we have used \rfl{divdiffprops} \ref{divdiffequivariance}-\ref{divdiffnilpotency}, the definition of $\tilde s_{j,\gamma}$, and the induction hypothesis.
%
\end{proof}

\begin{lem} \label{regrep22}
Let $j,k,l \in \{1,\ldots,N\}$ be distinct. Then $\left[ \tilde \Delta_{j \,k}, \tilde \Delta_{j \, l} \right] (1+\tilde s_{k \, l}) =0$. 
\end{lem}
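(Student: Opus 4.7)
The strategy is to reduce the identity to the vanishing $\tilde \Delta_{j\, l}(1+\tilde s_{j\, l}) = 0$, which is an immediate consequence of property (ii) of \rfl{divdiffprops} (specifically, $\tilde \Delta_{j\, l}\tilde s_{j\, l} = -\tilde \Delta_{j\, l}$). So the task is to get a factor of $(1+\tilde s_{j\, l})$ to appear to the right of a $\tilde \Delta_{j\, l}$.

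To do this, I would first rewrite the commutator on the left using \rfl{regrep7lem}. Applying that lemma with the relabeling of distinct indices $(j,k,l) \mapsto (k,j,l)$, and then using the antisymmetry $\tilde \Delta_{k\, j} = -\tilde \Delta_{j\, k}$ on both sides, yields the identity
\[ [\tilde \Delta_{j\, k},\tilde \Delta_{j\, l}] \; = \; \tilde s_{j\, l}\,\tilde \Delta_{j\, k}\,\tilde \Delta_{j\, l}\,\tilde s_{j\, k}. \]
Multiplying on the right by $1+\tilde s_{k\, l}$, I would then push the permutation $\tilde s_{j\, k}$ through the symmetrizer using the group-theoretic braid identity
\[ \tilde s_{j\, k}(1+\tilde s_{k\, l}) \; = \; (1+\tilde s_{j\, l})\,\tilde s_{j\, k}, \]
which amounts to verifying $\tilde s_{j\, k}\tilde s_{k\, l} = \tilde s_{j\, l}\tilde s_{j\, k}$; both sides act on $\{j,k,l\}$ as the $3$-cycle $j\mapsto k\mapsto l\mapsto j$ (and both fix all other indices).

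Combining the two displayed identities, one obtains
\[ [\tilde \Delta_{j\, k},\tilde \Delta_{j\, l}](1+\tilde s_{k\, l}) \; = \; \tilde s_{j\, l}\,\tilde \Delta_{j\, k}\,\tilde \Delta_{j\, l}(1+\tilde s_{j\, l})\,\tilde s_{j\, k}, \]
in which the inner factor $\tilde \Delta_{j\, l}(1+\tilde s_{j\, l})$ vanishes by \rfl{divdiffprops}(ii), giving the result. There is no substantive obstacle; the only care needed is the bookkeeping of signs and subscripts when invoking \rfl{regrep7lem} with the relabeled indices, and verifying the $3$-cycle identity used to commute $\tilde s_{j\, k}$ past $1+\tilde s_{k\, l}$.
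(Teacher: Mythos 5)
Your proof is correct, but it takes a genuinely different route from the paper. The paper proves the lemma by brute force: it expands $\tilde \Delta_{j\,k}\tilde \Delta_{j\,l}(1+\tilde s_{k\,l})$ into a combination of $1$, $\tilde s_{j\,k}$ and $\tilde s_{j\,l}$ with explicit rational coefficients in the $\lambda$'s, and observes that the resulting expression is manifestly invariant under the swap $k \leftrightarrow l$, so the commutator kills $(1+\tilde s_{k\,l})$. You instead recycle \rfl{regrep7lem} with the indices relabelled $(j,k,l)\mapsto(k,j,l)$ to get the closed form $[\tilde\Delta_{j\,k},\tilde\Delta_{j\,l}]=\tilde s_{j\,l}\,\tilde\Delta_{j\,k}\,\tilde\Delta_{j\,l}\,\tilde s_{j\,k}$ (the sign bookkeeping via $\tilde\Delta_{k\,j}=-\tilde\Delta_{j\,k}$ checks out), commute $\tilde s_{j\,k}$ past $1+\tilde s_{k\,l}$ via the correct three-cycle identity $s_{j\,k}s_{k\,l}=s_{j\,l}s_{j\,k}$, and then kill the inner factor with $\tilde\Delta_{j\,l}(1+\tilde s_{j\,l})=0$ from \rfl{divdiffprops}~\ref{divdiffnilpotency}. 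Both arguments are short; yours is more structural, deriving the lemma as a formal consequence of the already-established commutator formula plus symmetric-group combinatorics, and it avoids repeating any partial-fraction manipulation. The trade-off is that it leans on \rfl{regrep7lem}, which was itself obtained by exactly the kind of explicit expansion the paper's proof of the present lemma repeats, so the paper's version is more self-contained at this point while yours better exposes the algebraic mechanism.
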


\begin{proof}
Note that
\begin{align*}
\tilde \Delta_{j \,k}\tilde \Delta_{j \, l} (1+\tilde s_{k \, l}) &= \frac{1}{\lambda_j-\lambda_k} \left( 1-\tilde s_{j \, k}\right)\frac{1}{\lambda_j-\lambda_l} \left( 1-\tilde s_{j \, l}\right)(1+\tilde s_{k \, l}) \\
&= \frac{1}{\lambda_j-\lambda_k} \left( \frac{1}{\lambda_j-\lambda_l} \left( 1-\tilde s_{j \, l}\right) -\frac{1}{\lambda_k-\lambda_l} \left( \tilde s_{j \, k}-\tilde s_{j \, l}\right) \right) (1+\tilde s_{k \, l}) \\ 
&= \left( \frac{1}{\lambda_j-\lambda_k} \frac{1}{\lambda_j-\lambda_l} -\frac{1}{\lambda_k-\lambda_l} \left( \frac{1}{\lambda_j-\lambda_k} \tilde s_{j \, k} - \frac{1}{\lambda_j-\lambda_l}\tilde s_{j \, l} \right) \right) (1+\tilde s_{k \, l}),
\end{align*}
which is clearly invariant under the swap $k \leftrightarrow l$.
\end{proof}

\section{Plane waves and the degenerate affine Hecke algebra}

Here we will show that the regular and integral representations are intimately related.

\begin{lem} \label{planewaveregintrep}
Let $\bm \lambda = (\lambda_1,\ldots,\lambda_N) \in \bC^N$. Then we have, for $1 \leq j \ne k \leq N$,
\begin{align}
s_{j \, k} \e^{\ii \bm \lambda} &= \tilde s_{j \, k} \e^{\ii \bm \lambda}, \label{planewaveregintrep0} \\
I_{j \, k} \e^{\ii \bm \lambda} &= - \ii \tilde \Delta_{j \, k} \e^{\ii \bm \lambda}. \label{planewaveregintrep1}
\end{align}
Furthermore, for $j=1,\ldots,N-1$ we have
\begin{equation} s_{j,\gamma} \e^{\ii \bm \lambda} = \tilde s_{j,\gamma} \e^{\ii \bm \lambda}. \label{planewaveregintrep2} \end{equation}
\end{lem}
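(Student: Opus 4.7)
The proof is a direct computation, splitting into the three claimed identities and then combining them. I would organise it as follows.

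First I would address \rfeqn{planewaveregintrep0}. Using the definition of the position-space action, for $\bm x \in \bR^N$,
\[
(s_{j \, k} \e^{\ii \bm \lambda})(\bm x) = \e^{\ii \bm \lambda}(s_{j \, k}^{-1} \bm x) = \e^{\ii \inner{\bm \lambda}{s_{j \, k}\bm x}} = \e^{\ii \inner{s_{j \, k}\bm \lambda}{\bm x}} = (\tilde s_{j \, k} \e^{\ii \bm \lambda})(\bm x),
\]
where I use that transpositions are involutions and self-adjoint with respect to $\langle \cdot , \cdot \rangle$, and the fact (already recorded after the notation box in Subsection on duality of $S_N$-actions) that $w\e^{\ii \bm \lambda} = \tilde w^{-1}\e^{\ii \bm \lambda}$.

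Next I would tackle \rfeqn{planewaveregintrep1}, which is the only step involving real computation. By definition of $I_{j \, k}$, and since $\inner{\bm \lambda}{\bm x - y(\bm e_j - \bm e_k)} = \inner{\bm \lambda}{\bm x} - y(\lambda_j - \lambda_k)$, one gets
\[
(I_{j \, k} \e^{\ii \bm \lambda})(\bm x) = \e^{\ii \inner{\bm \lambda}{\bm x}} \int_0^{x_j - x_k} \dd y \, \e^{-\ii y (\lambda_j-\lambda_k)}.
\]
Assuming first $\lambda_j \ne \lambda_k$, evaluate the elementary integral to obtain
\[
(I_{j \, k} \e^{\ii \bm \lambda})(\bm x) = \frac{\e^{\ii \inner{\bm \lambda}{\bm x}} - \e^{\ii \inner{\bm \lambda}{\bm x}}\e^{-\ii (x_j-x_k)(\lambda_j-\lambda_k)}}{\ii(\lambda_j-\lambda_k)}.
\]
A short manipulation of the exponent in the second term shows that it equals $\inner{s_{j \, k}\bm \lambda}{\bm x}$, so that the numerator is $(1 - \tilde s_{j \, k})\e^{\ii \bm \lambda}$ evaluated at $\bm x$, giving
\[
(I_{j \, k}\e^{\ii \bm \lambda})(\bm x) = \frac{1}{\ii}(\tilde \Delta_{j \, k} \e^{\ii \bm \lambda})(\bm x) = -\ii (\tilde \Delta_{j \, k} \e^{\ii \bm \lambda})(\bm x),
\]
by \rfd{divdiffoperator}. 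The case $\lambda_j = \lambda_k$ then follows by continuity (or by directly noting both sides equal $(x_j-x_k)\e^{\ii \inner{\bm \lambda}{\bm x}}$, using De l'H\^opital in the definition of $\tilde \Delta_{j \, k}$), so no separate treatment is strictly necessary.

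Finally, \rfeqn{planewaveregintrep2} follows by linearity: specialising \rfeqnser{planewaveregintrep0}{planewaveregintrep1} to $k=j+1$ and combining with the definitions $s_{j,\gamma} = s_j + \gamma I_j$ and $\tilde s_{j,\gamma} = \tilde s_j - \ii \gamma \tilde \Delta_j$ (\rfeqn{deformedtransposition}) gives
\[
s_{j,\gamma}\e^{\ii \bm \lambda} = \tilde s_j \e^{\ii \bm \lambda} - \ii \gamma \tilde \Delta_j \e^{\ii \bm \lambda} = \tilde s_{j,\gamma}\e^{\ii \bm \lambda}.
\]
There is no real obstacle here; the only mildly delicate point is the $\lambda_j = \lambda_k$ case in \rfeqn{planewaveregintrep1}, which is handled by continuity as above.
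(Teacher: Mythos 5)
Your proposal is correct and follows essentially the same route as the paper's own proof: the same direct evaluation of the elementary integral for $\lambda_j \ne \lambda_k$, identification of the exponent with $\inner{s_{j\,k}\bm\lambda}{\bm x}$, the degenerate case $\lambda_j = \lambda_k$ handled via De l'H\^opital (both sides equalling $(x_j-x_k)\e^{\ii\inner{\bm\lambda}{\bm x}}$), and \rfeqn{planewaveregintrep2} obtained by combining the first two identities with $k=j+1$. No gaps.
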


\begin{proof}
\rfeqn{planewaveregintrep0} follows immediately.
Let $\bm x \in \bR^N$ be arbitrary.
For \rfeqn{planewaveregintrep1}, note that for $\bm \lambda \in \bC^N$ such that $\lambda_j \ne \lambda_k$,
\begin{align*}
\left(I_{j \, k} \e^{\ii \bm \lambda}\right)(\bm x) &= \e^{\ii \inner{\bm \lambda}{\bm x}} \int_0^{x_j-x_k} \dd y \e^{-\ii (\lambda_j-\lambda_k) y} &&= \frac{-\ii}{\lambda_j-\lambda_k} \left( \e^{\ii \inner{\bm \lambda}{\bm x}}-  \e^{\ii \inner{\bm \lambda}{\bm x-(x_j-x_k)(\bm e_j-\bm e_k)}}\right) \\
&= \frac{-\ii}{\lambda_j-\lambda_k} \left( \e^{\ii \inner{\bm \lambda}{\bm x}}-  \e^{\ii \inner{\bm \lambda}{s_{j \, k} \bm x}}\right)
&&= -\ii \frac{\e^{\ii \bm \lambda}(\bm x)-  \e^{\ii \tilde s_{j \, k} \bm \lambda}(\bm x)}{\lambda_j-\lambda_k} \; =  \; -\ii \tilde \Delta_{j \, k} \e^{\ii \bm \lambda}(\bm x), \end{align*}
as required.
In the case that $\lambda_j = \lambda_k$, we have
\[ \left( I_{j\, k} \e^{\ii \bm \lambda} \right) (\bm x) = \e^{\ii \inner{\bm \lambda}{\bm x}} \int_0^{x_j-x_k} \dd y = (x_j-x_k) \e^{\ii \inner{\bm \lambda}{\bm x}}; \]
on the other hand, $(-\ii \tilde \Delta_{j \, k} \e^{\ii \bm \lambda})(\bm x)$ is to be interpreted as
\begin{align*} 
\lim_{\mu \to 0} -\ii \tilde \Delta_{j \, k} \e^{\ii \inner{\bm \lambda + \mu(\bm e_j-\bm e_k)}{\bm x}} 
&= -\ii \lim_{\mu \to 0} \frac{\e^{\ii \inner{\bm \lambda+\mu(\bm e_j-\bm e_k)}{\bm x}} - \e^{\ii \inner{\bm \lambda-\mu(\bm e_j-\bm e_k)}{\bm x}} }{2 \mu} \\
&= -\ii \lim_{\mu \to 0} \frac{\e^{\ii \mu(x_j-x_k)} - \e^{-\ii \mu(x_j-x_k)} }{2 \mu} \e^{\ii \inner{\bm \lambda}{\bm x}} \\
&= \lim_{\mu \to 0} (x_j-x_k) \e^{\ii \mu(x_j-x_k)} \e^{\ii \inner{\bm \lambda}{\bm x}} = (x_j-x_k) \e^{\ii \inner{\bm \lambda}{\bm x}}, \end{align*}
by virtue of De l'H\^opital's rule, whence we obtain \rfeqn{planewaveregintrep1} for all $\bm \lambda \in \bC^N$. 
Finally, \rfeqn{planewaveregintrep2} follows from combining \rfeqn{planewaveregintrep0} and \rfeqn{planewaveregintrep1} with $k=j+1$.
\end{proof}

As a result, we have
\begin{cor} \label{planewaveregintrepcor}
Let $\bm \lambda = (\lambda_1,\ldots,\lambda_N) \in \bC^N$ and $w \in S_N$. Then 
\begin{align}
w \e^{\ii \bm \lambda} &= \tilde w^{-1} \e^{\ii \bm \lambda}, \label{planewaveregintrep4} \\
w_\gamma \e^{\ii \bm \lambda} &= \tilde w_\gamma^{-1} \e^{\ii \bm \lambda}. \label{planewaveregintrep5}
\end{align}
\end{cor}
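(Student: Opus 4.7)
The plan is to reduce both identities to the single-transposition statements already established in Lemma \ref{planewaveregintrep}, with the first equation essentially a restatement of the computation displayed just before Notation \ref{notn5} (a direct substitution in the exponent).

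For \rfeqn{planewaveregintrep4}, I would simply compute directly: for $\bm x \in \bR^N$,
\[ (w\e^{\ii \bm \lambda})(\bm x) = \e^{\ii \inner{\bm \lambda}{w^{-1}\bm x}} = \e^{\ii \inner{w\bm \lambda}{\bm x}} = \e^{\ii w\bm \lambda}(\bm x) = (\tilde w^{-1}\e^{\ii \bm \lambda})(\bm x), \]
where the second equality uses the $S_N$-invariance of $\inner{\cdot}{\cdot}$ noted in the preliminary section, and the last uses the definition of $\tilde w^{-1}$ from Notation \ref{notn5}. So this equation requires no induction at all.

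For \rfeqn{planewaveregintrep5} I would fix a decomposition $w = s_{i_1}\ldots s_{i_l}$, so that by \rfp{dAHAintrep} we have $w_\gamma = s_{i_1,\gamma}\ldots s_{i_l,\gamma}$ and by \rfp{dAHAregrep} we have $\tilde w_\gamma = \tilde s_{i_1,\gamma}\ldots \tilde s_{i_l,\gamma}$. The proof is then an iterated application of \rfeqn{planewaveregintrep2}, peeling off one factor at a time from the right:
\[ s_{i_l,\gamma} \e^{\ii \bm \lambda} = \tilde s_{i_l,\gamma} \e^{\ii \bm \lambda} = \tilde s_{i_l,\gamma}^{-1} \e^{\ii \bm \lambda}, \]
where the last equality uses that $\tilde s_{j,\gamma}^2 = 1$ (axiom \rf{dAHArel3} applied to $\rho^\n{reg}_\gamma$). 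Then $s_{i_{l-1},\gamma}$ is applied to both sides; crucially, $s_{i_{l-1},\gamma}$ is built from operators acting only on the $\bm x$-variable whereas $\tilde s_{i_l,\gamma}^{-1}$ is built from operators acting only on the $\bm \lambda$-variable, so they commute. Hence
\[ s_{i_{l-1},\gamma} s_{i_l,\gamma} \e^{\ii \bm \lambda} = \tilde s_{i_l,\gamma}^{-1}\, s_{i_{l-1},\gamma} \e^{\ii \bm \lambda} = \tilde s_{i_l,\gamma}^{-1} \tilde s_{i_{l-1},\gamma}^{-1} \e^{\ii \bm \lambda}, \]
and iterating this argument yields $w_\gamma \e^{\ii \bm \lambda} = \tilde s_{i_l,\gamma}^{-1} \ldots \tilde s_{i_1,\gamma}^{-1} \e^{\ii \bm \lambda} = \tilde w_\gamma^{-1} \e^{\ii \bm \lambda}$.

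There is no substantial obstacle here; the only point requiring care is the bookkeeping of inverses, for which one invokes the involutivity $\tilde s_{j,\gamma}^2 = 1$ (equivalently $s_{j,\gamma}^2 = 1$) at each step, together with the fact that the definition of the two representations on their respective variables makes position-space operators commute with momentum-space operators on the joint function space containing $\e^{\ii \bm \lambda}$.
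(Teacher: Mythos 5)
Your proof is correct and, for the key identity \rf{planewaveregintrep5}, follows essentially the same route as the paper: decompose $w$ into simple reflections, peel off one factor at a time via \rfeqn{planewaveregintrep2}, and use that position-space operators commute with momentum-space operators (your explicit tracking of inverses via $\tilde s_{j,\gamma}^2=1$ is just a cleaner bookkeeping of what the paper writes as $\tilde s_{i_l}\ldots\tilde s_{i_1}=\tilde w^{-1}$). The only difference is that you obtain \rf{planewaveregintrep4} by the direct substitution $\inner{\bm\lambda}{w^{-1}\bm x}=\inner{w\bm\lambda}{\bm x}$ --- a computation the paper itself records in the subsection on duality of $S_N$-actions --- whereas the paper's proof runs the same peeling induction for that case too; this is a harmless simplification.
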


\begin{proof}
We decompose $w$ as a product of $l$, say, simple reflections: $w = s_{i_1} \ldots s_{i_l}$.
Using \rfeqn{planewaveregintrep0} we find that
\[ w \e^{\ii \bm \lambda} = s_{i_1} \ldots s_{i_l} \e^{\ii \bm \lambda} = s_{i_1} \ldots s_{i_{l-1}} \tilde s_{i_l} \e^{\ii \bm \lambda} = \tilde s_{i_l} s_{i_1} \ldots s_{i_{l-1}} \e^{\ii \bm \lambda}, \]
since $[\tilde s_j,s_k]=0$ for all $i,j$.
Continuing this way, we obtain \rfeqn{planewaveregintrep4}:
\[ w \e^{\ii \bm \lambda} =\tilde s_{i_l} \ldots \tilde s_{i_1} \e^{\ii \bm \lambda} = \tilde w^{-1} \e^{\ii \bm \lambda}. \]
The proof of \rfeqn{planewaveregintrep5} is entirely analogous, relying on \rfeqn{planewaveregintrep2}.
\end{proof}

\section{The Dunkl-type representation} \label{Dunklrepprops}

Similar to Subsection \ref{divdiffopsprops}, we provide technical results needed to settle \rfp{Dunklrep}, i.e. to show that the Dunkl-type representation of the dAHA is in fact a representation.

\begin{lem} \label{Dunklrep1} 
Let $j =1, \ldots, N-1$ and $k = 1, \ldots, N$.
Write $\bar k = s_j k$.
Then 
\[ s_j \Lambda_k - \Lambda_{\bar k} s_j = - \left( \delta_{j \, k} - \delta_{j+1 \, k} \right).\]
\end{lem}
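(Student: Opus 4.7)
The plan is to prove the identity by a direct computation using the explicit formula
\[ \Lambda_k = \sum_{l<k} \theta_{k\, l} s_{k\, l} - \sum_{l>k} \theta_{l\, k} s_{k\, l} \]
from \rfeqn{Lambdaalt}, together with the equivariance rules
\[ s_j \theta_{a\, b} = \theta_{\bar a\, \bar b} s_j, \qquad s_j s_{a\, b} = s_{\bar a\, \bar b} s_j, \]
where I write $\bar a = s_j(a)$ throughout. (These follow respectively from the definition of $\theta_{a\, b}$ as multiplication by $\theta(x_a > x_b)$ together with $(s_j f)(\bm x) = f(s_j \bm x)$, and from the elementary permutation identity in $S_N$.)

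Applying these rules to $s_j \Lambda_k$ and then re-indexing the sum by the involution $l \mapsto \bar l$, I obtain
\[ s_j \Lambda_k = \sum_{l \ne k}\bigl( \mathbbm{1}_{l < k}\, \theta_{\bar k\, \bar l}\, s_{\bar k\, \bar l} - \mathbbm{1}_{l > k}\, \theta_{\bar l\, \bar k}\, s_{\bar k\, \bar l} \bigr) s_j. \]
On the other hand, substituting $l' = \bar l$ in
\[ \Lambda_{\bar k} s_j = \sum_{l' < \bar k} \theta_{\bar k\, l'}\, s_{\bar k\, l'} s_j - \sum_{l' > \bar k} \theta_{l'\, \bar k}\, s_{\bar k\, l'} s_j, \]
and using that $l' \ne \bar k$ is equivalent to $l \ne k$, yields the same family of operators $\theta_{\bar k\, \bar l}\, s_{\bar k\, \bar l} s_j$ and $\theta_{\bar l\, \bar k}\, s_{\bar k\, \bar l} s_j$ but with indicators $\mathbbm{1}_{\bar l < \bar k}$ and $\mathbbm{1}_{\bar l > \bar k}$, respectively. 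Subtracting therefore gives
\[ s_j \Lambda_k - \Lambda_{\bar k} s_j = \sum_{l \ne k}\bigl(\mathbbm{1}_{l<k} - \mathbbm{1}_{\bar l < \bar k}\bigr)\, \theta_{\bar k\, \bar l}\, s_{\bar k\, \bar l} s_j - \sum_{l \ne k}\bigl(\mathbbm{1}_{l>k} - \mathbbm{1}_{\bar l > \bar k}\bigr)\, \theta_{\bar l\, \bar k}\, s_{\bar k\, \bar l} s_j. \]

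The key observation is now that a pair of distinct indices has its order reversed by $s_j$ if and only if the pair equals $\{j, j+1\}$ (because no index strictly lies between $j$ and $j+1$). Hence every summand vanishes unless $\{k, l\} = \{j, j+1\}$. This immediately gives the right-hand side for $k \notin \{j, j+1\}$, as $\delta_{j\, k} - \delta_{j+1\, k} = 0$ in that case. For $k = j$ the only surviving term is $l = j+1$, contributing $-\theta_{j+1\, j} - \theta_{j\, j+1}$; for $k = j+1$ the only surviving term is $l = j$, contributing $\theta_{j\, j+1} + \theta_{j+1\, j}$. Using $\theta_{j\, j+1} + \theta_{j+1\, j} = 1$ on $\bR^N_\n{reg}$, these evaluate to $-1$ and $+1$ respectively, matching $-(\delta_{j\, k} - \delta_{j+1\, k})$.

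The main bookkeeping hazard is the re-indexing step: one must verify that the range $l \ne k$ corresponds bijectively to $l' \ne \bar k$ under $l' = \bar l$, and that in Cases A1/A2 the expressions $s_{\bar k\, \bar l} s_j$ correctly simplify via $s_j^2 = 1$ so that only the multiplication operators $\theta_{j\, j+1}, \theta_{j+1\, j}$ remain. Once that is done carefully, the combinatorial identity on the indicators falls out immediately.
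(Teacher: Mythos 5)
Your proof is correct and follows essentially the same route as the paper: both start from the formula $\Lambda_k = \sum_{l<k}\theta_{k\,l}s_{k\,l} - \sum_{l>k}\theta_{l\,k}s_{k\,l}$ of \rfeqn{Lambdaalt}, conjugate by $s_j$ using the equivariance of the $\theta$'s and transpositions, observe that all terms cancel unless $\{k,l\}=\{j,j+1\}$, and evaluate the surviving contribution via $\theta_{j\,j+1}+\theta_{j+1\,j}=1$. The only cosmetic difference is that the paper computes $s_j\Lambda_k s_j - \Lambda_{\bar k}$ (obtaining $\mp s_j$) and then implicitly right-multiplies by $s_j$, whereas you work with $s_j\Lambda_k - \Lambda_{\bar k}s_j$ directly; the substance is identical.
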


\begin{proof}
We have
\[ s_j \Lambda_k s_j - \Lambda_{\bar k} =  \sum_{l<k} \theta_{\bar k \, \bar l} s_{\bar k \, \bar l} -  \sum_{l<k} \theta_{\bar l \, \bar k}  s_{\bar k \, \bar l} - \sum_{l<\bar k} \theta_{\bar k \, l} s_{\bar k \, l} + \sum_{l>\bar k} \theta_{l \, \bar k}  s_{\bar k \, l} . \]
The terms with $l\ne j, j+1, k,\bar k$ have $\bar l=l$ and all cancel out. 
In the case that $j+1<k=\bar k$ the remaining terms cancel each other as well:
\[ \theta_{k \, j+1} s_{k \, j+1} + \theta_{k \, j} s_{k \, j} - \theta_{k \, j} s_{k \, j} - \theta_{k \, j+1} s_{k \, j+1}  = 0, \]
and similarly in the case that $k = \bar k <j$. Hence, only in the case $k = j,j+1$ is there any contribution; this is easily seen to equal $-\theta_{j+1 \, j} s_{j \, j+1}-\theta_{j \, j+1} s_{j \, j+1} = -s_j$ in the case that $k=j$; in the case $k=j+1$ it equals $\theta_{j+1 \, j} s_{j \, j+1}+\theta_{j \, j+1} s_{j \, j+1} = s_j$.
\end{proof}

From \rfl{Dunklrep1} we immediately obtain
\begin{cor} \label{Dunklrep2}
Let $j=1, \ldots, N-1$ and $k = 1, \ldots, N$.
Then $s_j \partial_{k,\gamma} - \partial_{s_j k,\gamma} s_j = \gamma \left( \delta_{j \, k} - \delta_{j + 1 \, k} \right)$. 
\end{cor}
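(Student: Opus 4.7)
The plan is to deduce the corollary as a direct consequence of Lemma \ref{Dunklrep1} together with the elementary commutation of $s_j$ with the partial derivative operators. I would start by substituting the definition $\partial_{k,\gamma} = \partial_k - \gamma \Lambda_k$ into the expression $s_j \partial_{k,\gamma} - \partial_{\bar k,\gamma} s_j$, where $\bar k = s_j k$. This splits the commutator into a ``free'' part $s_j \partial_k - \partial_{\bar k} s_j$ and a ``correction'' part $-\gamma (s_j \Lambda_k - \Lambda_{\bar k} s_j)$.

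The first step will be to show that the free part vanishes. This is the statement that partial derivatives and symmetric group elements satisfy $s_j \partial_k = \partial_{s_j k} s_j$; concretely, if $f$ is smooth on $\bR^N_\n{reg}$, then differentiating $(s_j f)(\bm x) = f(s_j \bm x)$ in the $k$-th coordinate and applying the chain rule produces precisely $(\partial_{s_j k} f)(s_j \bm x)$, i.e.\ $\partial_{\bar k} s_j f$. This is an easy verification that I would just state.

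The main (and in fact only nontrivial) ingredient is then Lemma \ref{Dunklrep1}, which has already been established and gives $s_j \Lambda_k - \Lambda_{\bar k} s_j = -(\delta_{j\,k} - \delta_{j+1\,k})$. Multiplying by $-\gamma$ and combining with the vanishing of the free part yields
\[ s_j \partial_{k,\gamma} - \partial_{\bar k,\gamma} s_j = \gamma(\delta_{j\,k} - \delta_{j+1\,k}), \]
as required. There is essentially no obstacle here; the work has all been done in Lemma \ref{Dunklrep1}, whose proof required the careful case analysis on $\Lambda_k$ restricted to alcoves. The corollary itself is a two-line consequence.
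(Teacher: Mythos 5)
Your proposal is correct and follows exactly the paper's route: the paper's proof of this corollary is the one-line remark that it follows immediately from Lemma \ref{Dunklrep1}, with the equivariance $s_j \partial_k = \partial_{s_j k} s_j$ and the substitution $\partial_{k,\gamma} = \partial_k - \gamma\Lambda_k$ left implicit. Your write-up simply makes those two routine steps explicit, and the sign bookkeeping is right.
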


\begin{lem} \label{Dunklrep3}
Let $j=1,\ldots,N-1$ and $k,l \in 1,\ldots,N$.
Then
\[ s_j \left[\partial_{k,\gamma},\partial_{l,\gamma}\right] = \left[\partial_{s_j(k),\gamma},\partial_{s_j(l),\gamma}\right] s_j. \]
\end{lem}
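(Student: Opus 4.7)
The plan is to reduce the lemma to \rfc{Dunklrep2} by pushing $s_j$ through each factor of the product $\partial_{k,\gamma} \partial_{l,\gamma}$ (and similarly $\partial_{l,\gamma} \partial_{k,\gamma}$) and then cancelling the inevitable ``correction terms'' that arise.

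Concretely, writing $\bar k = s_j(k)$ and $a_k = \gamma(\delta_{j \, k} - \delta_{j+1 \, k})$ for brevity, \rfc{Dunklrep2} reads $s_j \partial_{k,\gamma} = \partial_{\bar k,\gamma} s_j + a_k$. Since $a_k \in \bC$ is a scalar, applying this identity twice gives
\[ s_j \partial_{k,\gamma} \partial_{l,\gamma} = \partial_{\bar k,\gamma} \partial_{\bar l, \gamma} s_j + a_l \partial_{\bar k, \gamma} + a_k \partial_{l,\gamma}, \]
and analogously for $s_j \partial_{l,\gamma} \partial_{k,\gamma}$. Subtracting yields
\[ s_j[\partial_{k,\gamma},\partial_{l,\gamma}] = [\partial_{\bar k,\gamma},\partial_{\bar l,\gamma}] s_j + a_l(\partial_{\bar k,\gamma}-\partial_{k,\gamma}) - a_k(\partial_{\bar l,\gamma}-\partial_{l,\gamma}). \]

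What remains is to show that the last two terms cancel, which I would do by a short case analysis on how many of $k,l$ lie in $\{j,j+1\}$. If neither does, then $a_k=a_l=0$ and $\bar k = k$, $\bar l = l$, so there is nothing to check. If exactly one does, say $k \in \{j,j+1\}$ and $l \notin \{j,j+1\}$, then $a_l=0$ and $\bar l = l$, so both correction terms vanish individually. The only genuine case is $k,l \in \{j,j+1\}$; in the diagonal subcases $k=l$ the two correction terms are visibly equal, while in the off-diagonal subcases $(k,l)=(j,j+1)$ and $(j+1,j)$ one has $(a_k,a_l) = (\gamma,-\gamma)$ or $(-\gamma,\gamma)$ and $(\bar k, \bar l) = (l,k)$, so the two terms again combine to $0$.

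This is really a bookkeeping argument rather than a conceptual one; the main obstacle is simply ensuring the case analysis is exhaustive, but since $a_k$ is supported on only two values of $k$, there are only finitely many configurations to inspect.
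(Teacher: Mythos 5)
Your proposal is correct and follows essentially the same route as the paper's proof: push $s_j$ through each factor using \rfc{Dunklrep2}, collect the scalar correction terms, and verify by a finite case analysis on whether $k,l$ lie in $\{j,j+1\}$ that they cancel. The only cosmetic difference is the organisation of the cases (the paper assumes $k<l$ without loss of generality), which changes nothing of substance.
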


\begin{proof}
For all $k = 1,\ldots,N$ denote $\bar k:=s_j(k)$.
Repeatedly using \rfc{Dunklrep2}, we have
\begin{align*}
s_j \partial_{k,\gamma} \partial_{l,\gamma} &= \left( \partial_{\bar k,\gamma} s_j + \left( \delta_{j+1 \, k} - \delta_{j \, k} \right) \ii \gamma \right) \partial_{l,\gamma} \\
&= \partial_{\bar k,\gamma} \left( \partial_{\bar l,\gamma} s_j + \left( \delta_{j+1 \, l} - \delta_{j \, l} \right) \ii \gamma \right) + \left( \delta_{j+1 \, k} - \delta_{j \, k} \right) \ii \gamma \partial_{l,\gamma} \\
&= \partial_{\bar k,\gamma} \partial_{\bar l,\gamma} s_j + \ii \gamma \left( \left( \delta_{j+1 \, k} - \delta_{j \, k} \right) \partial_{l,\gamma} + \left( \delta_{j+1 \, l} - \delta_{j \, l} \right) \partial_{\bar k,\gamma} \right).
\end{align*}
By reversing the role of $k$ and $l$ we obtain
\[ s_j \partial_{l,\gamma} \partial_{k,\gamma} =\partial_{\bar l,\gamma} \partial_{\bar k,\gamma} s_j + \ii \gamma \left( \left( \delta_{j+1 \, l} - \delta_{j \, l} \right) \partial_{k,\gamma} + \left( \delta_{j+1 \, k} - \delta_{j \, k} \right) \partial_{\bar l,\gamma} \right), \]
so that
\[ s_j [\partial_{k,\gamma},\partial_{l,\gamma}] = [\partial_{\bar k,\gamma},\partial_{\bar l,\gamma}]s_j + \ii \gamma Y_{j;k,l}, \]
where
\[ Y_{j;k,l} = \left( \delta_{j+1 \, l} - \delta_{j \, l} \right) \left( \partial_{\bar k,\gamma} - \partial_{k,\gamma} \right) - \left( \delta_{j+1 \, k} - \delta_{j \, k} \right) \left(\partial_{\bar l,\gamma} - \partial_{l,\gamma} \right) . \]
It suffices to show that $Y_{j;k,l}=0$.
We can assume without loss of generality that $k<l$.
Hence we must be in either of three cases: $k \ne j,j+1$, or $l \ne j,j+1$, or $k=j,l=j+1$.
If $k \ne j,j+1$, $\partial_{\bar k,\gamma}=\partial_{k,\gamma}$ and $\delta_{j \, k}=\delta_{j+1 \, k}=0$, so that $Y_{j;k,l}=0$.
The argument for $l \ne j,j+1$ is identical.
Finally, if $k=j,l=j+1$ we have
\[ Y_{j;k,l} = (1-0)(\partial_{j+1,\gamma}-\partial_{j,\gamma})-(0-1)(\partial_{j,\gamma}-\partial_{j+1,\gamma}) = \partial_{j+1,\gamma}-\partial_{j,\gamma}+\partial_{j,\gamma}-\partial_{j+1,\gamma} = 0.  \qedhere \]
\end{proof}

\begin{lem} \label{Dunklrep4}
Let $w \in S_N$ and $k,l = 1, \ldots, N$.
Then 
\[ w [\partial_{k,\gamma},\partial_{l,\gamma}] = [\partial_{w(k),\gamma},\partial_{w(l),\gamma}] w. \]
\end{lem}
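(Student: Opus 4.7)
The plan is to prove the statement by induction on the length $l(w)$ of $w \in S_N$, leveraging Lemma \ref{Dunklrep3} which is precisely the case $w = s_j$ a simple transposition.

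The base case $l(w) = 0$, i.e. $w = 1$, is trivial. For the inductive step, given $w \in S_N$ with $l(w) \geq 1$, pick a reduced expression $w = s_j w'$ with $l(w') = l(w) - 1$, so that the induction hypothesis applies to $w'$. I would then compute
\[ w [\partial_{k,\gamma},\partial_{l,\gamma}] = s_j \left( w' [\partial_{k,\gamma},\partial_{l,\gamma}] \right) = s_j [\partial_{w'(k),\gamma},\partial_{w'(l),\gamma}] w', \]
using the induction hypothesis in the last step. Now apply Lemma \ref{Dunklrep3} with $(k,l) \to (w'(k),w'(l))$ to move $s_j$ past the commutator:
\[ s_j [\partial_{w'(k),\gamma},\partial_{w'(l),\gamma}] = [\partial_{s_j w'(k),\gamma},\partial_{s_j w'(l),\gamma}] s_j = [\partial_{w(k),\gamma},\partial_{w(l),\gamma}] s_j. \]
Composing with the trailing $w'$ on the right gives $[\partial_{w(k),\gamma},\partial_{w(l),\gamma}] s_j w' = [\partial_{w(k),\gamma},\partial_{w(l),\gamma}] w$, completing the induction.

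There is no real obstacle here: the result is a routine consequence of Lemma \ref{Dunklrep3} combined with the fact that $S_N$ is generated by the simple transpositions $s_1, \ldots, s_{N-1}$, and the induction on length is the standard mechanism for propagating an identity from generators to arbitrary group elements. The only thing to be careful about is that the identity can be read as an equality in $\End(\ca C^\infty(\bR^N_\n{reg}))$ so that the compositions on both sides make sense, which is guaranteed by \rfp{Dunklrep}.
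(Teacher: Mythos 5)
Your proof is correct and takes essentially the same route as the paper: decompose $w$ into simple transpositions and iterate \rfl{Dunklrep3} to carry the conjugation across the commutator. The paper writes this as a telescoping chain over a word $w = s_{i_1}\ldots s_{i_l}$ rather than a formal induction on length, but the argument is identical.
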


\begin{proof}
Write $w = s_{i_1} \ldots s_{i_l}$ for some positive integer $l$.
We have by induction
\begin{align*} 
w [\partial_{k,\gamma},\partial_{l,\gamma}] &= s_{i_1} \ldots s_{i_l} [\partial_{k,\gamma},\partial_{l,\gamma}] \; = \; s_{i_1} \ldots s_{i_{l-1}} [\partial_{s_{i_l}(k),\gamma},\partial_{s_{i_l}(l),\gamma}] s_{i_l} \; = \; \ldots \; = \\
&= [\partial_{s_{i_1} \ldots s_{i_l}(k),\gamma},\partial_{s_{i_1} \ldots s_{i_l}(l),\gamma}] s_{i_1} \ldots s_{i_l} \; = \; [\partial_{w(k),\gamma},\partial_{w(l),\gamma}] w. \qedhere
\end{align*}
\end{proof}

\begin{lem} \label{Dunklrep5}
Let $k,l=1,\ldots,N$. 
Then $[\partial_{k,\gamma},\partial_{l,\gamma}]=0$.
\end{lem}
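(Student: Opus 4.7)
The plan is to exploit the equivariance already established in \rfl{Dunklrep4} to reduce the statement to a single, very easy computation on the fundamental alcove.

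First I would observe that on the fundamental alcove $\bR^N_+$ we have $\Sigma(1) = \emptyset$, so $\Lambda_j|_{\bR^N_+}=0$ and hence $\partial_{j,\gamma}|_{\bR^N_+} = \partial_j$ for every $j=1,\ldots,N$. Since $\bR^N_+$ is an open subset of $\bR^N_\n{reg}$ and partial differentiation commutes with restriction to open subsets, for any $f \in \ca C^\infty(\bR^N_\n{reg})$ and any $k,l$ I would compute
\[ \left(\partial_{k,\gamma}\partial_{l,\gamma}f\right)\big|_{\bR^N_+} = \partial_k\left(\partial_{l,\gamma}f\big|_{\bR^N_+}\right) = \partial_k \partial_l \left(f\big|_{\bR^N_+}\right), \]
using that $\partial_{l,\gamma}f|_{\bR^N_+} = \partial_l f|_{\bR^N_+}$ by the first observation applied to $f$. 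By symmetry in $k,l$ and the commutativity of ordinary partial derivatives on smooth functions, this immediately yields
\[ \left([\partial_{k,\gamma},\partial_{l,\gamma}]f\right)\big|_{\bR^N_+} = 0. \]

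Next I would propagate this vanishing to every other alcove using \rfl{Dunklrep4}. Given any $\bm x \in \bR^N_\n{reg}$, pick $w \in S_N$ with $\bm x \in w^{-1}\bR^N_+$, so that $w\bm x \in \bR^N_+$. Applying \rfl{Dunklrep4} to the pair $(k,l)$, evaluating on $f$, and using the defining relation $(wg)(w\bm x) = g(\bm x)$ for the $S_N$-action on functions, I get
\[ \left([\partial_{k,\gamma},\partial_{l,\gamma}]f\right)(\bm x) = \left(w[\partial_{k,\gamma},\partial_{l,\gamma}]f\right)(w\bm x) = \left([\partial_{w(k),\gamma},\partial_{w(l),\gamma}](wf)\right)(w\bm x). \]
Since $wf \in \ca C^\infty(\bR^N_\n{reg})$ and $w\bm x \in \bR^N_+$, the right-hand side is zero by the fundamental-alcove computation applied to the indices $(w(k),w(l))$ and the function $wf$. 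As $\bm x \in \bR^N_\n{reg}$ was arbitrary, this gives $[\partial_{k,\gamma},\partial_{l,\gamma}]f = 0$ throughout $\bR^N_\n{reg}$, which is the claim.

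There is no real obstacle here: the whole content of the lemma is packaged into \rfl{Dunklrep4}, whose proof in turn rests on \rfl{Dunklrep3} and ultimately on \rfc{Dunklrep2}. The only subtlety worth flagging is that $\bR^N_\n{reg}$ is a \emph{disconnected} open set, so $\ca C^\infty(\bR^N_\n{reg})$ imposes no matching conditions across walls; this is what allows the alcove-by-alcove argument above to work without any regularity checks at the hyperplanes $V_{j\,k}$.
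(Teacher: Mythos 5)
Your proof is correct and follows essentially the same route as the paper's: reduce to the fundamental alcove via the equivariance in \rfl{Dunklrep4}, where $\partial_{j,\gamma}$ restricts to the ordinary $\partial_j$ and the commutator vanishes by symmetry of mixed partials. The extra care you take with locality of $\partial_k$ on the open alcove and the remark about disconnectedness of $\bR^N_\n{reg}$ are sound but not points of divergence from the paper's argument.
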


\begin{proof}
Let $w \in S_N$.
It is sufficient to prove that $\left( [\partial_{k,\gamma},\partial_{l,\gamma}]f \right)(\bm x)=0$ for all $f \in \ca C^\infty(\bR^N_\n{reg})$ and all $\bm x \in w^{-1} \bR^N_+$.
By virtue of \rfl{Dunklrep4} and \rfeqn{Dunklreprestr}
\begin{align*} 
\left([\partial_{k,\gamma},\partial_{l,\gamma}]f\right)(\bm x) &= \left( w^{-1} [\partial_{w(k),\gamma},\partial_{w(l),\gamma}]  w f\right)(\bm x) && = \left([\partial_{w(k),\gamma},\partial_{w(l),\gamma}] w f\right)(w \bm x) \\
&=  \left([\partial_{w(k)},\partial_{w(l)}] w f\right)(w \bm x) && = 0. \qedhere
\end{align*}
\end{proof}

\newpage

\chapter[The non-symmetric YBA: calculations]{The non-symmetric Yang-Baxter algebra: calculations} \label{nonsymmintopscalcs}

This part of the appendix lists useful properties of the non-symmetric integral operators $a_\lambda $, $b^\pm_\lambda $, $c^\pm_\lambda $ and $d_{\lambda}$. Throughout it, we assume that $\gamma \in \bR$ and $L\in \bR_{>0}$. As usual we denote $J=[-L/2,L/2]$.

\section{Properties of the elementary integral operators}

\begin{lem}\label{eopsadjointness}
Let $n=0,\ldots,N$, $\bm i \in \f i^n_{N}$ and $\lambda \in \bC$. Then
$\hat e^\pm_{\lambda;\bm i}$ is the formal adjoint of $\check e^\mp_{\bar \lambda;\bm i}$ and $\bar e^+_{\lambda;\bm i}$ is the formal adjoint of $\bar e^-_{\bar \lambda;\bm i}$.
\end{lem}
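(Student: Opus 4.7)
The plan is to verify each of the three adjointness identities by a direct computation using the explicit integral formulae in \rfd{eopsdefn}. Thanks to the mirror symmetry in the definitions, all three cases reduce to the same underlying change-of-variables argument, so I would work out the cleanest case $(\bar e^+_{\lambda;\bm i},\bar e^-_{\bar\lambda;\bm i})$ in full and indicate how the other two follow.

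Concretely, for $f,g\in \f d_N$ I would expand
\[
\innerrnd{\bar e^+_{\lambda;\bm i}f}{g}_N = \e^{-\ii\lambda L/2}\int_{J^N}\!\dd^N\bm x\,\theta_{\bm i}(\bm x)\prod_{m=1}^n\int_{x_{i_{m+1}}}^{x_{i_m}}\!\dd y_m\,\e^{\ii\lambda(x_{i_m}-y_m)}\,f(\phi_{\bm i\to\bm y}\bm x)\,\overline{g(\bm x)}
\]
and perform the bijective change of variables $y_m\leftrightarrow x_{i_m}$ for $m=1,\ldots,n$. Under this swap the arguments of $f$ turn into a fresh vector $\bm x'\in J^N$, the old coordinates $x_{i_m}$ become new integration variables $y'_m$ for $g$, the phase negates so that $\e^{\ii\lambda(x_{i_m}-y_m)}=\overline{\e^{\ii\bar\lambda(x'_{i_m}-y'_m)}}$, and $\e^{-\ii\lambda L/2}=\overline{\e^{\ii\bar\lambda L/2}}$. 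The key point is to check that the region cut out by $\theta_{\bm i}(\bm x)$ together with the chain $x_{i_{m+1}}<y_m<x_{i_m}$ becomes, after the swap, the region cut out by $\theta_{\bm i}(\bm x')$ together with the chain $x'_{i_m}<y'_m<x'_{i_{m-1}}$: both describe the same interleaving $L/2\ge y'_1\ge x'_{i_1}\ge y'_2\ge\ldots\ge y'_n\ge x'_{i_n}\ge -L/2$ (using the conventions $x_{i_0}=L/2$, $x_{i_{n+1}}=-L/2$). Reassembling the integrand this is precisely $\innerrnd{f}{\bar e^-_{\bar\lambda;\bm i}g}_N$.

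For the pairs $(\hat e^\pm_{\lambda;\bm i},\check e^\mp_{\bar\lambda;\bm i})$ the argument is the same except that an extra variable is inserted/deleted at one end. For $(\hat e^+_{\lambda;\bm i},\check e^-_{\bar\lambda;\bm i})$, after the analogous swap $y_m\leftrightarrow x_{i_m+1}$ the deleted coordinate $x_1$ of $\hat e^+_{\lambda;\bm i}$ plays the role of the inserted variable $y'_{n+1}$ of $\check e^-_{\bar\lambda;\bm i}$: the factor $\e^{\ii\lambda x_1}$ pairs up with $\overline{\e^{-\ii\bar\lambda y'_{n+1}}}$ to reproduce the missing phase, and the step function $\theta_{\bm i_+\,1}$ — which enforces $x_1<x_{i_n+1}$ as the last link in the chain — becomes, after the swap, the constraint $y'_{n+1}<x'_{i_n}$ that closes the interleaving region appearing in $\check e^-_{\bar\lambda;\bm i}$. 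The case $(\hat e^-_{\lambda;\bm i},\check e^+_{\bar\lambda;\bm i})$ is the mirror image, with $x_{N+1}$ and $-L/2$ playing the roles of $x_1$ and $L/2$.

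The main obstacle is purely bookkeeping: one has to keep careful track of the conventions $x_{i_0}=\pm L/2$, $x_{i_{n+1}}=\mp L/2$ and the shift $\bm i_+=\bm i+1$, and verify for each pair of operators that the chain of step functions and integration limits really does map to its mirror image under the swap. Once the interleaving region is identified as being invariant under the swap in each case, the adjointness identities drop out, because the only other effect of the swap is to turn each phase $\e^{\ii\lambda z}$ into $\overline{\e^{\ii\bar\lambda z}}$ and to re-attach the replacement operators $\phi_{\bm i\to\bm y}$ and insertion operators $\check\phi^\pm$ in the mirrored manner prescribed by \rfd{eopsdefn}.
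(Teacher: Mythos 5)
Your proposal is correct and is essentially the paper's own argument: the paper likewise proves the lemma by changing the order of integration and relabelling so that the integrated variables $y_m$ and the coordinates $x_{i_m}$ (resp.\ $x_{i_m+1}$) exchange roles, which conjugates the phases and maps the interleaved integration region to its mirror image. The only difference is cosmetic — the paper writes out the pair $(\hat e^+_{\lambda;\bm i},\check e^-_{\bar\lambda;\bm i})$ in full while you detail $(\bar e^+_{\lambda;\bm i},\bar e^-_{\bar\lambda;\bm i})$ — so the two proofs coincide in substance.
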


\begin{proof}
These statements are proven by changing the order of integration in $\innerrnd{\cdot}{\cdot}$.
For example, to show that $\hat e^+_{\lambda;\bm i}$ is the formal adjoint of $\check e^-_{\bar \lambda;\bm i}$ it is sufficient to prove, for arbitrary $f \in \f h_N$, $g \in \f h_{N+1}$,
\begin{equation} \label{eqn36} 
\int_{J^{N+1}} \dd^{N+1} \bm x \left(\hat e^+_{\lambda;\bm i} f \right)(\bm x) \overline{g(\bm x)} = 
\int_{J^N} \dd^N \bm x f(\bm x) \overline{\left( \check e^-_{\bar \lambda;\bm i} g \right)(\bm x)}. 
\end{equation}
Write $i_{n+1}=1$.
The left-hand side of \rfeqn{eqn36} is given by
\begin{align*} 
\lefteqn{\int_{J^{N+1}} \dd^{N+1} \bm x \theta(x_{i_1+1}>\ldots>x_{i_n+1}>x_1) \left( \prod_{m=1}^n \int_{x_{i_{m+1}+1}}^{x_{i_m}+1} \dd y_m \right) \cdot} \\
& \hspace{40mm} \cdot \e^{\ii \lambda (y_1+\ldots+y_n-x_{i_1+1}-\ldots-x_{i_n+1}-x_1)}  (\phi_{\bm i_+ \to \bm y} \hat \phi_1  f)(\bm x) \overline{g(\bm x)} \displaybreak[2] \\
&= \left( \prod_{j=2 \atop \forall m\, j \ne i_m+1}^{N+1} \int_J \dd x_j \right)  \left( \prod_{m=1}^n \int_J \dd y_m \right)\left( \prod_{m=1}^{n+1} \int_J \dd z_m \right)  \theta(z_1>y_1>\ldots>z_n>y_n>z_{n+1})  \cdot\\
& \hspace{40mm} \cdot \e^{\ii \lambda (y_1+\ldots+y_n-z_1-\ldots-z_n-z_{n+1})} (\phi_{\bm i_+ \to \bm y} \hat \phi_1  f)(\bm x) \overline{(\phi_{1 \to z_{n+1}, \bm i_+ \to \bm z}g)(\bm x)},
\end{align*} 
where we have relabelled $x_{i_1+1},\ldots,x_{i_n+1}$ as $z_1,\ldots,z_n$ and $x_1$ as $z_{n+1}$. We will now relabel $y_1,\ldots,y_n$ as $x_{i_1+1},\ldots,x_{i_n+1}$ to obtain the expression
\begin{align*} 
\lefteqn{\hspace{-5mm} \left( \prod_{j=2}^{N+1} \int_J \dd x_j \right) \overline{(\hat \phi_1 f)(\bm x)} \theta(x_{i_1+1}>\ldots>x_{i_n+1}) \int_{x_{i_1+1}}^{L/2} \! \! \dd z_1 \left( \prod_{m=2}^{n} \int_{x_{i_m+1}}^{x_{i_{m-1}+1}} \! \! \dd z_m \right) \int_{-L/2}^{x_{i_n+1}} \! \! \dd z_{n+1} \cdot} \\
& \hspace{40mm} \cdot \overline{\e^{\ii \bar \lambda (x_{i_1}+\ldots+x_{i_n}-z_1-\ldots-z_n-z_{n+1})} (\phi_{\bm i_+ \to \bm z}g)(z_{n+1},x_2,\ldots,x_{N+1})} \\
&= \left( \prod_{j=1}^N \int_J \dd x_j \right) \overline{f(x_1,\ldots,x_N)} \theta(x_{i_1}>\ldots>x_{i_n}) \left( \prod_{m=1}^{n+1} \int_{x_{i_m}}^{x_{i_{m-1}}} \dd z_m \right) \cdot \\
& \hspace{40mm} \cdot \overline{\e^{\ii \bar\lambda (x_{i_1}+\ldots+x_{i_n}-z_1-\ldots-z_n-z_{n+1})}  (\phi_{\bm i \to \bm z} \check \phi^+_{z_{n+1}} g)(x_1,\ldots,x_N)},
\end{align*}
where we have relabelled $(x_2,\ldots,x_{N+1})\to(x_1,\ldots,x_N)$.
This equals the right-hand side of \rfeqn{eqn36}. The other statements are proven in a similar way (without the last relabelling of the $x_j$).
\end{proof}

\begin{lem} \label{eopsbounded}
Let $n=0,\ldots,N$, $\bm i \in \f i^n_{N}$ and $\lambda \in \bC$.
Then $\hat e^\pm_{\lambda;\bm i}$, $\bar e^\pm_{\lambda;\bm i}$ and $\check e^-_{\lambda; \bm i}$ are all bounded on $\f h_\n{fin}$.
\end{lem}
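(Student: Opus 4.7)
The plan is to bound the $L^2$-norm of each operator applied to $f$ by a constant (depending on $\lambda$, $L$, $N$, $n$) times $\|f\|$, using Cauchy--Schwarz on the iterated integrals, uniform bounds on the exponential factors, and a change of variables. All the operators in question have the common structure (exponential prefactor)$\times$(characteristic function from $\theta_{\bm i}$)$\times$(iterated integration over bounded subintervals of $J$)$\times$(replacement operator $\phi_{\bm i \to \bm y}$)$\times$(optional insertion/deletion of a single variable). Since $J = [-L/2, L/2]$ is bounded and $\lambda$ is fixed, all exponentials appearing in \rfd{eopsdefn} are uniformly bounded: for $x, y \in J$ we have $|\e^{\pm \ii \lambda L/2}| \leq \e^{|\Im \lambda| L/2}$ and $|\e^{\ii \lambda (x_{i_m} - y_m)}| \leq \e^{|\Im \lambda| L}$, so we can absorb a constant $K_{\lambda,n} := \e^{(n + 1/2) |\Im \lambda| L}$ into the estimate.

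I would carry out the model estimate in detail for $\bar e^+_{\lambda;\bm i} \colon \f h_N \to \f h_N$ and then indicate how the same argument handles the remaining cases. For $f \in \f d_N$ and $\bm x \in J^N$, the Cauchy--Schwarz inequality applied to the iterated $y$-integral gives
\[
\bigl|(\bar e^+_{\lambda;\bm i} f)(\bm x)\bigr|^2
\leq K_{\lambda,n}^2 \, \theta_{\bm i}(\bm x) \cdot V_{\bm i}(\bm x) \cdot \int_{[x_{i_{n+1}},x_{i_n}] \times \cdots \times [x_{i_2},x_{i_1}]} \!\!\! \dd^n \bm y \, \bigl|(\phi_{\bm i \to \bm y} f)(\bm x)\bigr|^2,
\]
where $V_{\bm i}(\bm x) = \prod_{m=1}^n (x_{i_m} - x_{i_{m+1}}) \leq L^n$. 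Integrating over $\bm x \in J^N$, swapping the order of integration (Fubini, applicable since we have positive integrands and bounded domains), and making the change of variables $z_{i_m} := y_m$ while keeping $z_j := x_j$ for $j \notin \bm i$, one finds that the integrations over the $n$ coordinates $x_{i_m}$ (each over $J$) each contribute a factor of $L$, so that $\|\bar e^+_{\lambda;\bm i} f\|_N^2 \leq K_{\lambda,n}^2 L^{2n} \|f\|_N^2$.

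The operators $\bar e^-_{\lambda;\bm i}$, $\hat e^\pm_{\lambda;\bm i}$ and $\check e^-_{\lambda;\bm i}$ are handled by the same scheme. For $\hat e^\pm_{\lambda;\bm i}$ there is an additional exponential prefactor $\e^{\ii \lambda x_1}$ or $\e^{\ii \lambda x_{N+1}}$ (uniformly bounded on $J$) and a deletion $\hat \phi_1$ or $\hat \phi_{N+1}$, which, after the same Cauchy--Schwarz estimate and Fubini step, produces a bound of the form $\|\hat e^\pm_{\lambda;\bm i} f\|_{N+1}^2 \leq C L^{2n+1} \|f\|_N^2$ (the extra factor $L$ coming from integrating the unused coordinate over $J$). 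For $\check e^-_{\lambda;\bm i}$ the structure is similar but with one extra integration (over $y_{n+1} \in [x_{i_{n+1}}, x_{i_n}]$) and the insertion $\check \phi^+(y_{n+1})$ playing the role of an extra replacement after the change of variables. In each case the argument reduces to a finite product of factors of $L$ and of $\e^{|\Im \lambda| L}$, producing a uniform bound on each $\f h_N \subset \f h_\n{fin}$. The main technical care is in the change of variables for $\hat e^\pm$ and $\check e^-$, where one must keep track of the insertion/deletion positions so that the pullback of $|f|^2$ under the composition of replacement and deletion/insertion, integrated over $\bm x$ and $\bm y$, really reassembles to $\|f\|^2$ times a product of factors $L$; otherwise the estimate is entirely routine.
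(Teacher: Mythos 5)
Your proposal is correct and follows essentially the same route as the paper's proof: Cauchy--Schwarz on the iterated $\bm y$-integral, a uniform bound $\e^{O(n)|\Im \lambda| L}$ on the exponential factors, and Fubini plus a change of variables to reassemble $\|f\|_N^2$ up to powers of $L$. The only cosmetic differences are that the paper works out the model case for $\hat e^-_{\lambda;\bm i}$ rather than $\bar e^+_{\lambda;\bm i}$ and remarks that, via the formal adjointness relations of \rfl{eopsadjointness}, only three of the six operators need to be treated directly.
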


\begin{proof}
We prove the statement for $\hat e^-_{\lambda; \bm i}$; the other proofs are along the same lines (and by virtue of \rfl{eopsadjointness} only proofs for three out of the six operators need to be given). 
Let $\bm x \in J^{N+1}$. For $\bm y \in J^n$ we have the estimate
\begin{align} 
\hspace{-3mm}  \left| \e^{\ii \lambda \left( x_{N+1} + \sum_{m=1}^n \left( x_{i_m}-y_m\right)\right)} \right|^2 &=  \e^{-\Im \lambda \left( x_{N+1} + \sum_{m=1}^n \left( x_{i_m}-y_m\right)\right)} \nonumber \\
& \leq  \e^{|\Im \lambda| \left( |x_{N+1}| + \sum_{m=1}^n \left( |x_{i_m}|+|y_m|\right)\right)}  
\leq  \e^{(2n+1)|\Im \lambda|L}  \label{eqnB1}.
\end{align}
Using this for $f \in \f h_N$ we obtain   
\begin{align*}
\lefteqn{ \left| \left( \hat e^-_{\lambda; \bm i}f \right) (\bm x)\right|^2 =}\\
&=
\left| \int_{J^n} \dd^n \bm y \e^{\ii \lambda \left( x_{N+1} + \sum_{m=1}^n \left( x_{i_m}-y_m\right)\right)} \theta(x_{N+1}>y_1>x_{i_1}>\ldots>y_n>x_{i_n}) \left( \rho_{\bm i \to \bm y} f \right) (\bm x') \right|^2 
\\
&\leq \left( \int_{J^n} \dd^n \bm y \left| \e^{\ii \lambda \left( x_{N+1} + \sum_{m=1}^n \left( x_{i_m}-y_m\right)\right)} \right|^2 \theta(x_{N+1}>y_1>x_{i_1}>\ldots>y_n>x_{i_n}) \right)  \cdot \\
& \qquad \cdot
\left( \int_{J^n} \dd^n \bm y \left| \left( \rho_{\bm i \to \bm y} f \right) (\bm x')  \right|^2 \right) \\
&\leq  \e^{(2n+1)|\Im \lambda|L}  \frac{L^n}{n!}
\int_{J^n} \dd^n \bm y \left| \left( \rho_{\bm i \to \bm y} f \right) (\bm x')  \right|^2,
\end{align*}
where $\bm x' = (x_1,\ldots,x_N)$ and we have used standard inequalities for absolute values of integrals and integrals of products with Lebesgue-integrable integrand.\\

For the norm of $\hat e^-_{\lambda; \bm i} f$ we have
\begin{align*}
\| \hat e^-_{\lambda; \bm i} f \|_{N+1}^2 &= \int_{J^{N+1}} \dd^{N+1} \bm x \left| \left( \hat e^-_{\lambda; \bm i}f \right) (\bm x)\right|^2 \\
&\leq \frac{L^n}{n!} \e^{(2n+1)|\Im \lambda|L} \left( \int_{J^{n+1}} \dd^{n+1} \bm y \right) \int_{J^N} \dd^N \bm x \left|f(\bm x)  \right|^2 \; = \; \frac{L^{2n+1}}{n!} \e^{(2n+1)|\Im \lambda|L} \| f\|^2_N,
\end{align*}
i.e.
\[ \| \hat e^-_{\lambda; \bm i} f \|_{N+1} \leq \frac{L^{n+\frac{1}{2}}}{\sqrt{n!}} \e^{(n+1)|\Im \lambda|L} \| f\|_N. \]
It follows that $\hat e^-_{\lambda; \bm i}$ is bounded. 
\end{proof}

Given $w \in S_N$, denote by $w_+$ the element of $S_{N+1}$ determined by
\begin{equation} w_+(1) =1, \qquad w_+(j+1)=w(j)+1 \n{ for } j=1,\ldots, N. \end{equation}%
\nc{rwl}{$w_+$}{Shifted permutation \nomrefeqpage}%
\vspace{-10mm}
\begin{lem}\label{eopspermutation}
Let $n=0,\ldots,N$, $\bm i \in \f i^n_{N}$, $\lambda \in \bC$ and $w \in S_N$.
Then
\begin{align}
w \hat e^-_{\lambda; \bm i} &= \hat e^-_{\lambda;w \bm i}w \hspace{-20mm} && \in \Hom(\f h_N,\f h_{N+1})
\label{ehatmin} \\
w_+ \hat e^+_{\lambda;\bm i} &= \hat e^+_{\lambda;w \bm i}w \hspace{-20mm} && \in \Hom(\f h_N,\f h_{N+1})
\label{ehatplus} \\
w \check e^+_{\lambda;\bm i} &= \check e^+_{\lambda;w \bm i}w \hspace{-20mm} && \in \Hom(\f h_{N+1},\f h_N) \label{echeckplus} \\
w \check e^-_{\lambda; \bm i} &= \check e^-_{\lambda;w \bm i}w_+ \hspace{-20mm} && \in \Hom(\f h_{N+1},\f h_N) \label{echeckmin} \\
w \bar e^\pm_{\lambda;\bm i} &= \bar e^\pm_{\lambda;w \bm i}w \hspace{-20mm} && \in \End(\f h_N) \label{ebar}.
\end{align}
\end{lem}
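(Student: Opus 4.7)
The plan is to verify each identity by decomposing the elementary integral operators into their constituent building blocks and establishing the permutation-equivariance of each block separately. The building blocks are: (a) multiplication by exponentials of the form $\e^{\pm \ii \lambda L/2}$, which are constants; (b) multiplication by $\e^{\ii \lambda(x_{i_m} - y_m)}$ or $\e^{\ii \lambda x_j}$ for $j \in \{1,N+1\}$, which are multiplication operators depending on a single coordinate; (c) the step-function multipliers $\theta_{\bm i}$, $\theta_{N+1\,\bm i}$, $\theta_{\bm i_+\,1}$; (d) the Lebesgue integrals $\int_{x_{i_{m+1}}}^{x_{i_m}} \dd y_m$ with limits depending on the $x_{i_m}$; (e) the replacement operators $\phi_{\bm i \to \bm y}$; and (f) the deletion/insertion operators $\hat\phi_1,\hat\phi_{N+1},\check\phi^+(y),\check\phi^-(y)$.

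First I would establish the two "generic" equivariance facts. For any multiplication operator $M_g$ by a function $g(\bm x)$, the identity $(wM_gf)(\bm x) = g(w^{-1}\bm x)(wf)(\bm x)$ yields $w\theta_{\bm i} = \theta_{w\bm i}w$ and, applied coordinatewise, $w\,\e^{\ii\lambda x_{i_m}} = \e^{\ii\lambda x_{w(i_m)}}w$; the same substitution turns $\int_{x_{i_{m+1}}}^{x_{i_m}}\dd y_m$ (viewed as an operator-valued expression) into $\int_{x_{w(i_{m+1})}}^{x_{w(i_m)}}\dd y_m$. Equation \rf{eqn43} gives $w\phi_{\bm i \to \bm y} = \phi_{w\bm i \to \bm y}w$. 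Together, these yield \rf{ebar} at once: push $w$ through the definition of $\bar e^\pm_{\lambda;\bm i}$ from left to right, relabel every occurrence of $i_m$ as $w(i_m)$, and repackage the result as $\bar e^\pm_{\lambda;w\bm i}w$.

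Next I would handle the operators that move between $\f h_N$ and $\f h_{N+1}$, where the subtlety of the whole lemma lies: because the $w$ on each side of the identity acts on a different Hilbert space, one must check that the "insertion/deletion" step correctly embeds $S_N$ into $S_{N+1}$. For $\hat e^-_{\lambda;\bm i}$ and $\check e^+_{\lambda;\bm i}$ (whose variable shift is at position $N+1$), a direct evaluation shows $w\hat\phi_{N+1} = \hat\phi_{N+1}w$ and $w\check\phi^-(y) = \check\phi^-(y)w$ for the embedding $S_N\hookrightarrow S_{N+1}$ fixing position $N+1$; combined with the facts above, these give \rf{ehatmin} and \rf{echeckplus}. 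For $\hat e^+_{\lambda;\bm i}$ and $\check e^-_{\lambda;\bm i}$ (shift at position $1$), the relevant embedding is $w\mapsto w_+$ (fixing position $1$), and the corresponding identities $w_+\hat\phi_1 = \hat\phi_1 w$ and $w\check\phi^+(y) = \check\phi^+(y)w_+$ must be verified; each is a one-line evaluation on a test function. Note also that $w_+(\bm i_+) = (w\bm i)_+$, which keeps the prefactor $\theta_{\bm i_+\,1}$ transforming correctly into $\theta_{(w\bm i)_+\,1}$, and that $w_+$ commutes with multiplication by $\e^{\ii\lambda x_1}$ since $w_+(1)=1$.

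I do not anticipate any substantial obstacle; the whole proof amounts to a bookkeeping exercise once one separates the two embeddings $S_N \hookrightarrow S_{N+1}$ and checks the compatibility of each building block with the appropriate embedding. The only point that requires care is avoiding confusion between the "same" symbol $w$ acting on function spaces of different dimensions; I would address this by writing out both sides on an arbitrary $f\in \f d_N$ (or $\f d_{N+1}$) and $\bm x$ in the appropriate $J^M$ at least once per identity, then abstracting back to the operator form.
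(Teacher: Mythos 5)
Your proof is correct and follows the same basic strategy as the paper's: push $w$ through the definition, relabel every occurrence of $i_m$ as $w(i_m)$ in the step functions, integral limits and replacement operators via \rfeqn{eqn43}, and keep track of which embedding of $S_N$ into $S_{N+1}$ is compatible with the deletion/insertion at position $N+1$ versus position $1$. The one place you diverge is in the treatment of $\check e^\pm_{\lambda;\bm i}$: the paper establishes \rf{echeckplus} and \rf{echeckmin} indirectly, by taking formal adjoints of the already-proved identities for $\hat e^\mp_{\bar\lambda;\bm i}$ (using \rfl{eopsadjointness} together with $\innerrnd{wf}{g}=\innerrnd{f}{w^{-1}g}$), whereas you verify them directly through the equivariance relations $w\check\phi^-(y)=\check\phi^-(y)w$ and $w\check\phi^+(y)=\check\phi^+(y)w_+$. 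Both routes are valid; the adjoint argument is shorter once the adjointness lemma is available, while your direct verification is self-contained, treats all five identities uniformly, and makes explicit the compatibility checks ($w_+\hat\phi_1=\hat\phi_1 w$, $w_+(\bm i_+)=(w\bm i)_+$, invariance of $\e^{\ii\lambda x_1}$ under $w_+$) that the paper leaves implicit under ``proven analogously''.
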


\begin{proof}
Let $\bm x \in J^{N+1}$. By virtue of \rfeqn{eqn43} we have
\begin{align*}
w \hat e^-_{\lambda; \bm i}  
&= \e^{\ii \lambda x_{N+1}} w \left( \prod_{m=1}^n \theta(x_{i_{m-1}}-x_{i_m}) \int_{x_{i_m}}^{x_{i_{m-1}}} \dd y_m \e^{\ii \lambda (x_{i_m} -y_m)} \right) \phi_{\bm i \to \bm y} \hat \phi_{N+1} \\
&= \e^{\ii \lambda x_{N+1}} \left( \prod_{m=1}^n \theta(x_{w i_{m-1}}-x_{w i_m}) \int_{x_{w i_m}}^{x_{w i_{m-1}}} \dd y_m \e^{\ii \lambda (x_{w i_m} -y_m)} \right) \phi_{w \bm i \to \bm y} \hat \phi_{N+1} w \; = \; \hat e^-_{\lambda;w \bm i} w,
\end{align*}
where $x_{i_0} = x_{N+1}$, which proves \rfeqn{ehatmin}. 
\rfeqn{echeckplus} can be proven by taking adjoints; indeed, from
\begin{align*} 
\innerrnd{f}{\check e^+_{\lambda;\bm i} w^{-1} g}_{N} &= \innerrnd{\hat e^-_{\bar \lambda;\bm i} f}{w^{-1} g}_{N+1}
&&=  \innerrnd{w \hat e^-_{\bar \lambda;\bm i} f}{g}_{N+1} \\
&= \innerrnd{\hat e^-_{\bar \lambda;w \bm i} w f}{g}_{N+1} 
&&= \innerrnd{w f}{\check e^+_{\lambda;w \bm i} g}_{N} \; = \; \innerrnd{f}{w^{-1} \check e^+_{\lambda;w \bm i} g}_{N}, 
\end{align*}
where $f \in \f h_N$, $g \in \f h_{N+1}$, 
we infer that $\check e^+_{\lambda;\bm i} w^{-1} = w^{-1} \check e^+_{\lambda;w \bm i}$, i.e. $w \check e^+_{\lambda;\bm i}= \check e^+_{\lambda;w \bm i} w$.
The other equations are proven analogously.
\end{proof}

\begin{lem} \label{eopssymmetric}
Let $n=0,\ldots,N-1$, $\bm i \in \f i^n_{N-1}$ and $\lambda \in \bC$. We have
\[ \check e^+_{\lambda;\bm i}|_{\ca H_N} = \check e^-_{\lambda; \bm i}|_{\ca H_N}. \]
\end{lem}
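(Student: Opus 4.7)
The plan is to verify the identity by a direct comparison of the defining integrals of $\check e^+_{\lambda;\bm i}$ and $\check e^-_{\lambda;\bm i}$ restricted to a symmetric $F \in \ca H_N$. Two features distinguish the two operators: the choice of boundary variable ($\check\phi^-(y_0)$ versus $\check\phi^+(y_{n+1})$), and the apparent asymmetry in the indexing of the integration variables. Both should evaporate on $\ca H_N$.

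First I would observe that, if $F \in \ca H_N$, then $\check\phi^-(y)F = \check\phi^+(y)F$ in $\f h_{N-1}$ for every $y \in J$, since $F(\bm x, y) = F(y, \bm x)$ by $S_N$-invariance. This eliminates the first distinction after identifying $y_0$ with $y_{n+1}$.

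Next I would apply the shift of indices $\tilde y_{m+1} := y_m$ for $m = 0, 1, \ldots, n$ inside $(\check e^+_{\lambda;\bm i}F)(\bm x)$. Using the conventions $x_{i_0}=L/2$ and $x_{i_{n+1}}=-L/2$, this converts the product $\prod_{m=0}^n \int_{x_{i_{m+1}}}^{x_{i_m}} \dd y_m$ into $\prod_{m=1}^{n+1} \int_{x_{i_m}}^{x_{i_{m-1}}} \dd \tilde y_m$, matching the integration range in the definition of $\check e^-_{\lambda;\bm i}$; simultaneously, the weight $\prod_m \e^{\ii \lambda(x_{i_m}-y_m)}$ transforms into the weight appearing in $\check e^-_{\lambda;\bm i}$ (both ultimately depending only on a sum of the form $\sum x_{i_m} - \sum y_m$ with appropriate boundary conventions, whose values match by virtue of the two endpoint identifications).

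Combining the two observations, the integrand becomes an evaluation of $F$ at a configuration whose multiset of arguments coincides with that of the integrand in $(\check e^-_{\lambda;\bm i}F)(\bm x)$, and $S_N$-symmetry of $F$ finishes the identification. The main obstacle, and essentially the only real content, will be tracking exactly where the appended or prepended boundary variable ends up among the arguments of $F$ after the relabeling $\tilde y_{m+1} := y_m$, and confirming that its position is inconsequential by symmetry of $F$; the remainder is bookkeeping with integration limits and exponential factors. Alternatively, one could reduce matters to a statement about $\hat e^\pm_{\lambda;\bm i}$ on $\ca H_N$ via the adjointness relations of \rfl{eopsadjointness} and prove the dual statement instead, but the direct route appears cleanest.
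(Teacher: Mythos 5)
Your proposal is correct and takes essentially the same route as the paper, whose proof consists of exhibiting the common expression $\e^{\ii \lambda \sum_{m=1}^{n} x_{i_m}} \prod_{m=1}^{n+1}\theta(x_{i_{m-1}}-x_{i_m})\int_{x_{i_m}}^{x_{i_{m-1}}}\dd y_m\,\e^{-\ii\lambda y_m}\, F(\bm x_{\hat{\bm \imath}},\bm y)$ to which both $(\check e^{\pm}_{\lambda;\bm i}F)(\bm x)$ reduce after exactly your relabelling of integration variables and the $S_N$-invariance of $F$. One small slip worth noting: the shift $\tilde y_{m+1}:=y_m$ sends the appended variable $y_0$ to $\tilde y_1$ (not to $y_{n+1}$), so the two ``extra'' variables occupy different slots and have different integration ranges after the shift, but, as you yourself observe, the multiset of arguments of $F$ is the same in both cases and the symmetry of $F$ makes this immaterial.
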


\begin{proof}
The desired statement follows from the observation
that for $F \in \ca H_N$ and $\bm x \in \bR^{N-1}$ both $\left( \check e^\pm_{\lambda;\bm i} F \right)(\bm x)$ are equal to
\[ \e^{\ii \lambda \sum_{m=1}^{n-1} x_{i_m}} \left( \prod_{m=1}^{n+1} \theta(x_{i_{m-1}}-x_{i_m}) \int_{x_{i_m}}^{x_{i_{m-1}}} \dd y_m \e^{-\ii \lambda y_m} \right) F(\bm x_{\hat{\bm \imath}},\bm y). \qedhere \]
\end{proof}

\begin{lem} \label{eopsrestr}
Let $n=0,\ldots,N$, $\bm i \in \f i^n_{N}$ and $\lambda \in \bC$.
Then
\begin{align*}
\hat e^\pm_{\lambda;\bm i}F|_{J^{N+1}_+} & = \begin{cases} \hat E_{\lambda;\bm i}F|_{J^{N+1}_+}, & \bm i \in \f I^n_N, \\ 0, & \n{otherwise}, \end{cases} && \n{for } F \in \ca H_N, \\
\bar e^\pm_{\lambda;\bm i}F|_{J^N_+} &= \begin{cases} \bar E^\pm_{\lambda;\bm i}F|_{J^N_+}, & \bm i \in \f I^n_{N}, \\ 0, & \n{otherwise}, \end{cases} && \n{for } F \in \ca H_N, \\
\check e^\pm_{\lambda;\bm i}F|_{J^N_+} &= \begin{cases} \check E_{\lambda;\bm i}F|_{J^N_+}, & \bm i \in \f I^n_N, \\ 0, & \n{otherwise}, \end{cases}  && \n{for } F \in \ca H_{N+1}.
\end{align*}
\end{lem}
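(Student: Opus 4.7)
The proof splits naturally into two cases according to whether $\bm i \in \f I^n_N$ (i.e.\ whether the multi-index is strictly increasing) or not. The plan is to treat these two cases separately, exploiting (a) the step-function prefactor built into each non-symmetric elementary integral operator, and (b) the $S_N$-invariance of $F$.

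For the ``$0$ otherwise'' direction, I would observe that each non-symmetric elementary operator carries a step-function multiplier: $\bar e^\pm_{\lambda;\bm i}$ and $\check e^\pm_{\lambda;\bm i}$ have $\theta_{\bm i}$, while $\hat e^-_{\lambda;\bm i}$ has $\theta_{N+1\,\bm i}$ and $\hat e^+_{\lambda;\bm i}$ has $\theta_{\bm i_+\,1}$. On the fundamental alcove $J^N_+$ (respectively $J^{N+1}_+$), the coordinates satisfy $x_1 > x_2 > \ldots > x_N$, so $\theta(x_{i_1} > \ldots > x_{i_n})$ equals $1$ precisely when $i_1 < \ldots < i_n$, i.e.\ when $\bm i \in \f I^n_N$. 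For $\hat e^-$ and $\hat e^+$ the additional entry $N+1$ or $1$ in the step function plays the analogous role, enforcing that $\bm i$ lies in the appropriate increasing multi-index set. Hence whenever $\bm i \notin \f I^n_N$, the multiplicative step function vanishes identically on the alcove, and so does the associated operator applied to any test function.

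For $\bm i \in \f I^n_N$, the step function equals $1$ on the alcove and can be dropped, so the remaining difference between the non-symmetric and symmetric definitions is purely a matter of how the arguments of $F$ are arranged: the non-symmetric operator inserts $\bm y$ back into positions $\bm i$ via $\phi_{\bm i \to \bm y}$, whereas the symmetric operator appends $\bm y$ at the end of $\bm x_{\hat{\bm \imath}}$. For $F \in \ca H_N$ these two tuples differ only by a permutation, and $S_N$-invariance of $F$ makes them interchangeable. The exponential prefactors and integration limits match directly once the conventions $x_{i_0} = L/2$ or $x_{N+1}$ and $x_{i_{n+1}} = -L/2$ or $x_1$ from Notations~\ref{notn4} are unwound. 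Summing up, on $J^N_+$ we obtain exactly the symmetric expression from Definition~\ref{Eopsdefn}.

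The treatment of $\hat e^\pm$ and $\check e^\pm$ runs along the same lines, with additional care needed because one variable is inserted or deleted at position $1$ or $N+1$: in these cases the $\bm i$ in the non-symmetric operator is an $n$-tuple in $\{1,\ldots,N\}$ while the corresponding symmetric $\hat E_{\lambda;\bm i}$ (respectively $\check E_{\lambda;\bm i}$) is indexed by an $(n+1)$-tuple incorporating the distinguished position, and one must verify that the boundary conventions for $x_{i_0}$ and $x_{i_{n+1}}$ line up. The main (though entirely routine) obstacle is this careful index bookkeeping; the conceptual content reduces to the two elementary observations above.
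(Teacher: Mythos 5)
Your proposal is correct and follows essentially the same route as the paper's proof: the paper likewise observes that on the fundamental alcove the step-function prefactor $\theta_{\bm i}$ (and its variants for $\hat e^\pm$) equals $1$ precisely when $\bm i$ is increasing and vanishes otherwise, and that in the increasing case the two definitions agree directly, the argument rearrangement being absorbed by the $S_N$-invariance of $F$. Your extra remarks on the index bookkeeping for $\hat e^\pm$ and $\check e^\pm$ only make explicit what the paper dismisses with ``follow in a similar manner.''
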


\begin{proof}
Concerning the statement for $\bar e^\pm_{\lambda;\bm i}$, note that for $\bm x \in J^N_+$, 
\[ \prod_{m=1}^n \theta(x_{i_m}-x_{i_{m+1}}) =\prod_{m=1}^n \theta(x_{i_{m-1}}-x_{i_m})= \begin{cases} 1, & \n{if } i_1<\ldots<i_n, \\ 0, & \n{otherwise}; \end{cases} \]
this yields that $\bar e^\pm_{\lambda;\bm i}F$ restricted to the alcove vanishes unless $i_1>\ldots>i_n$, in which case one obtains the equality with the $\bar E^\pm_{\lambda;\bm i}$ by applying \rfd{eopsdefn} and \rfd{Eopsdefn} straightforwardly. The statements for $\hat e^\pm_{\lambda;\bm i}$ and $\check e^\pm_{\lambda;\bm i}$ follow in a similar manner.
\end{proof}

\section[The non-symmetric particle creation operators and the Dunkl-type operators]{The non-symmetric particle creation operators and the \\ Dunkl-type operators} 
\sectionmark{The non-symm. creation operators and Dunkl-type operators}
\label{nonsymmintopscommrels}

Here we aim to provide auxiliary results from which \rft{bDunkl} can be proven.
It is apparent that we need to study commutation relations of the operators $\hat e^\pm_{\lambda; \bm i}$ (which can be seen as linear operators: $\ca C^\infty(\bR^N_\n{reg}) \to \ca C^\infty(\bR^{N+1}_\n{reg})$) with the partial differential operators $\partial_j$ and the step functions $\theta_{k \, l}$. 
Before we look at these commutators, we establish some useful results.

\begin{lem} \label{intrule1}
Let $h: \bR \to \bC$ be integrable; let $x_0,x \in \bR$ and let $\lambda \in \bC$.
We have
\[ [\partial_1, \int_{x_1}^{x_0} \dd y_1 \e^{\ii \lambda (x_0+x_1-y_1)} \phi_{1 \to y_1}] h(x_1) = -\e^{\ii \lambda x_1}h(x_0). \]
\end{lem}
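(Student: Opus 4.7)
The plan is to expand the commutator directly by the Leibniz (Feynman) rule on one side and integration by parts on the other, and to observe that the resulting bulk integrals cancel, leaving only one of the boundary contributions.

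First I would denote by $T$ the integral operator that acts on $h$ via
\[ (Th)(x_1) = \int_{x_1}^{x_0} \dd y_1 \, \e^{\ii \lambda (x_0+x_1-y_1)} h(y_1), \]
so that the $\phi_{1 \to y_1}$ in the statement is just the substitution $h(x_1) \mapsto h(y_1)$ inside the integral. Computing $\partial_1 (Th)(x_1)$ by differentiation under the integral sign produces two terms: a boundary term $-\e^{\ii \lambda x_0} h(x_1)$ from the lower limit $x_1$, and an interior term $\ii\lambda \int_{x_1}^{x_0} \e^{\ii \lambda(x_0+x_1-y_1)} h(y_1)\dd y_1$ coming from the $x_1$-dependence in the exponential.

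Next I would compute $(T \partial_1 h)(x_1) = \int_{x_1}^{x_0} \e^{\ii \lambda(x_0+x_1-y_1)} h'(y_1)\dd y_1$ by integration by parts in $y_1$, using $\partial_{y_1}\e^{\ii \lambda(x_0+x_1-y_1)} = -\ii \lambda \e^{\ii \lambda(x_0+x_1-y_1)}$. This yields $\e^{\ii \lambda x_1} h(x_0) - \e^{\ii \lambda x_0} h(x_1) + \ii\lambda \int_{x_1}^{x_0}\e^{\ii \lambda(x_0+x_1-y_1)} h(y_1)\dd y_1$.

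Subtracting the two expressions, the interior $\ii \lambda$-integrals and the $-\e^{\ii \lambda x_0} h(x_1)$ boundary contributions cancel exactly, leaving $[\partial_1,T]h(x_1) = -\e^{\ii \lambda x_1}h(x_0)$, as required. The only step that requires any thought is keeping the two boundary contributions (one from Leibniz, one from integration by parts) straight with the correct signs; assuming $h$ is continuously differentiable the manipulations are routine, and for merely integrable $h$ the identity can be interpreted distributionally or obtained by approximation. I expect no substantive obstacle.
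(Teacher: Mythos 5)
Your proposal is correct and is exactly the paper's own proof, which simply says to apply the Leibniz integral rule and integrate by parts; your explicit bookkeeping of the two boundary terms and the cancelling bulk integrals fills in precisely those steps with the right signs.
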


\begin{proof}
Apply the Leibniz integral rule and integrate by parts.
\end{proof}

\begin{lem} \label{intrule2}
Let $h: \bR^2  \to \bC$ be integrable; let $x_0,x_1,x_2 \in \bR$ and let $\lambda \in \bC$.
We have
\begin{align*} 
\lefteqn{[ \partial_1, \int_{x_1}^{x_0} \dd y_1 \int_{x_2}^{x_1} \dd y_2 \e^{\ii \lambda (x_0+x_1+x_2-y_1-y_2)} \phi_{1 \to y_1}] h(x_1,y_2) }\\
 & \qquad = \int_{x_1}^{x_0} \dd y_1 \e^{\ii \lambda(x_0+x_2-y_1)} h(y_1,x_1)  - \int_{x_2}^{x_1} \dd y_2 \e^{\ii \lambda(x_1+x_2-y_2)} h(x_0,y_2).
\end{align*}
\end{lem}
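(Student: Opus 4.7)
The plan is to compute the two pieces of the commutator $[\partial_1,T]h = \partial_1(Th)-T(\partial_1 h)$ separately, where $T$ denotes the integral operator acting on $h(x_1,y_2)$, and then observe that all terms other than the two desired boundary contributions cancel. Writing out
\[ (Th)(x_1)=\e^{\ii\lambda x_1}\int_{x_1}^{x_0}\dd y_1\int_{x_2}^{x_1}\dd y_2\,\e^{\ii\lambda(x_0+x_2-y_1-y_2)}h(y_1,y_2), \]
(where the $\phi_{1\to y_1}$ has already been applied, replacing $x_1$ by $y_1$ in the first argument of $h$) I would first apply $\partial_1$. The factor $\e^{\ii\lambda x_1}$ contributes a term $\ii\lambda\,(Th)$, and the Leibniz integral rule applied to the $x_1$-dependent limits produces two boundary terms: $-\int_{x_2}^{x_1}\dd y_2\,\e^{\ii\lambda(x_0+x_2-y_2)}h(x_1,y_2)$ from differentiating the lower limit $y_1=x_1$ of the outer integral, and $+\int_{x_1}^{x_0}\dd y_1\,\e^{\ii\lambda(x_0+x_2-y_1)}h(y_1,x_1)$ from differentiating the upper limit $y_2=x_1$ of the inner integral.

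For the other piece $T(\partial_1 h)$, I would use that $\phi_{1\to y_1}$ converts $(\partial_1 h)(x_1,y_2)$ into $(\partial/\partial y_1)h(y_1,y_2)$, so one may integrate by parts in $y_1$. This yields an $\ii\lambda\,(Th)$ term from the differentiated exponential, plus boundary terms at $y_1=x_0$ and $y_1=x_1$, namely $\int_{x_2}^{x_1}\dd y_2\,\e^{\ii\lambda(x_1+x_2-y_2)}h(x_0,y_2)$ and $-\int_{x_2}^{x_1}\dd y_2\,\e^{\ii\lambda(x_0+x_2-y_2)}h(x_1,y_2)$.

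Subtracting, the two $\ii\lambda\,(Th)$ contributions cancel, and the two copies of $\int_{x_2}^{x_1}\dd y_2\,\e^{\ii\lambda(x_0+x_2-y_2)}h(x_1,y_2)$ (with opposite signs once all signs are tracked) cancel as well. What remains is precisely
\[ \int_{x_1}^{x_0}\dd y_1\,\e^{\ii\lambda(x_0+x_2-y_1)}h(y_1,x_1)-\int_{x_2}^{x_1}\dd y_2\,\e^{\ii\lambda(x_1+x_2-y_2)}h(x_0,y_2), \]
which is the claimed identity.

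The only subtlety is bookkeeping: $x_1$ appears in three places (the prefactor $\e^{\ii\lambda x_1}$, the lower limit of the outer integral, and the upper limit of the inner integral), so care is needed to list all Leibniz contributions and match them against the boundary terms produced by the integration by parts. The argument is essentially the two-variable generalisation of \rfl{intrule1}, which I would cite (or even reuse) to streamline the $\ii\lambda\,(Th)$-cancellation step.
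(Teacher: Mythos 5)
Your proposal is correct and is essentially the paper's argument: the paper also reduces everything to the Leibniz integral rule plus the integration-by-parts step of \rfl{intrule1}, merely organising the terms by factoring the double integral into the outer $y_1$-operator composed with the inner $y_2$-integral, so that one piece is $A[\partial_1,B]$ (a single Leibniz boundary term at $y_2=x_1$) and the other is $[\partial_1,A]B$ (handled by \rfl{intrule1}). Your fully expanded bookkeeping of $\partial_1(Th)-T(\partial_1 h)$, including the cancellation of the $\ii\lambda\,(Th)$ contributions and of the shared boundary term at $y_1=x_1$, matches the paper's computation term by term.
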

 
\begin{proof}
The desired expression equals
\begin{align*}
\lefteqn{ \int_{x_1}^{x_0} \dd y_1 \e^{\ii \lambda(x_0+x_1-y_1)} \partial_{x_1} \int_{x_2}^{x_1} \dd y_2 \e^{\ii \lambda (x_2-y_2)} h(y_1,y_2) +} \\
& \qquad \qquad +  \int_{x_2}^{x_1} \dd y_2 \e^{\ii \lambda(x_2-y_2)} \left[ \partial_{x_1}, \int_{x_1}^{x_0} \dd y_1 \e^{\ii \lambda (x_0+x_1-y_1)} \phi_{1 \to y_1} \right] h(x_1,y_2).
\end{align*}
By virtue of \rfl{intrule1} and the Leibniz integral rule once more we obtain the result.
\end{proof}

\subsection{The operators $\hat e^\pm_{\lambda;\bm i}$}

In the rest of this section, we assume that $\lambda \in \bC$ and we suppress it in the notation: $\hat e^\pm_{\bm i}=\hat e^\pm_{\lambda;\bm i}$ and $b^\pm_{\lambda;n}=b^\pm_n$. We will make statements for both $\hat e^+_{\bm i}$ and $\hat e^-_{\bm i}$; in general we will provide detailed proofs for $\hat e^-_{\bm i}$ and indicate how the proof is modified for $\hat e^+_{\bm i}$.

\begin{lem} \label{hatelem0}
Let $\bm i \in \f i^n_N$ and $j=1,\ldots, N$ such that none of the $i_m$ equals $j$.
Then
\[ \partial_j^{(N\! +\! 1)} \hat e^-_{\bm i} = \hat e^-_{\bm i}  \partial_j^{(N)} , \qquad  \partial_{j\! +\! 1}^{(N\! +\! 1)} \hat e^+_{\bm i} = \hat e^+_{\bm i}  \partial_j^{(N)}. \]
\end{lem}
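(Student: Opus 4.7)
The plan is to check directly from \rfd{eopsdefn} that, under the hypothesis $j \notin \{i_1, \ldots, i_n\}$, the variable $x_j$ (respectively $x_{j+1}$) appears in the defining integral expression of $\hat e^-_{\bm i}$ (respectively $\hat e^+_{\bm i}$) only through a single coordinate slot of the inner function $f$, namely the $j$-th. Once this is established, $\partial_j^{(N+1)}$ (respectively $\partial_{j+1}^{(N+1)}$) commutes past every multiplicative factor, step function and integral, and acts on $f$ exactly as $\partial_j^{(N)}$, yielding both identities.

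To carry this out for $\hat e^-_{\bm i}$, I would inspect in turn each factor in the defining formula, namely the exponential $\e^{\ii \lambda x_{N+1}}$, the step function $\theta_{N+1\,\bm i}$, the integration limits $x_{i_m}$ and $x_{i_{m-1}}$ (with the convention $x_{i_0} := x_{N+1}$), and the exponentials $\e^{\ii \lambda(x_{i_m}-y_m)}$. Each depends only on $x_{N+1}$ and on the $x_{i_m}$, none of which equals $x_j$ by hypothesis together with the constraint $j \leq N$. Moreover, the composition $\phi_{\bm i \to \bm y}\hat \phi_{N+1}$ does not disturb the $j$-th coordinate slot of $f$: deletion of $x_{N+1}$ leaves slot $j$ untouched since $j < N+1$, and no $y_m$ is substituted into slot $j$ since $j \ne i_m$ for all $m$. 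Consequently $x_j$ enters $\hat e^-_{\bm i} f$ solely as the $j$-th argument of $f$, and the first identity follows.

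For $\hat e^+_{\bm i}$ the same argument applies once the index shift caused by $\hat \phi_1$ is accounted for: after deletion of $x_1$, the $k$-th argument of $f$ becomes $x_{k+1}$, and $\phi_{\bm i_+ \to \bm y}$ substitutes $y_m$ into the $i_m$-th slot of $f$ (equivalently, into the $(i_m+1)$-st position of the $(N+1)$-variable list). The hypothesis $j \notin \{i_1, \ldots, i_n\}$ together with $j+1 \ne 1$ ensures that $x_{j+1}$ appears in none of the multiplicative or integration factors (which only involve $x_1$ and $x_{i_m+1}$) and enters $f$ only as the $j$-th argument. Hence $\partial_{j+1}^{(N+1)}$ passes through every factor and reduces to $\partial_j^{(N)}$ acting on $f$.

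The main (and only) obstacle is the careful bookkeeping of indices, in particular remembering that $\hat \phi_1$ shifts the labelling of $f$'s arguments by one. There is no analytic difficulty and no boundary or Leibniz-type correction terms arise, precisely because the excluded case $j = i_m$ (in which $x_j$ would coincide with an integration endpoint) has been ruled out by hypothesis; the harder commutators in \rfls{bnlem1}{bnlem5} will be where those correction terms must be dealt with.
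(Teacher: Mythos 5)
Your proof is correct and is exactly the argument the paper intends: its own proof consists of the single sentence ``This follows immediately from the definitions of $\hat e^\pm_{\bm i}$,'' and your careful tracking of where $x_j$ (resp.\ $x_{j+1}$) can appear — never in the step functions, exponentials or integration limits, only in the $j$-th slot of $f$ — is precisely the verification being left to the reader. The index bookkeeping for the $\hat\phi_1$ shift in the $\hat e^+$ case is handled correctly.
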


\begin{proof}
This follows immediately from the definitions of $\hat e^\pm_{\bm i}$.
\end{proof}

Conversely, if one of the $i_m$ equals $j$, the commutators $\partial_j^{(N\! +\! 1)} \hat e^-_{\bm i} - \hat e^-_{\bm i}  \partial_j^{(N)} $ and $\partial_{j\! +\! 1}^{(N\! +\! 1)} \hat e^+_{\bm i} - \hat e^+_{\bm i}  \partial_j^{(N)}$ are nonzero.
First of all, we deal with the case that $n=1$.
\begin{lem} \label{hatelem1}
Let $j=1,\ldots,N$.
Then
\[ \partial_j^{(N\! +\! 1)} \hat e^-_j - \hat e^-_j \partial_j^{(N)} = -\theta_{N+1 \, j}^{(N\! +\! 1)} s_{j \, N+1}^{(N\! +\! 1)} \hat e^-, \qquad \partial_{j\! +\! 1}^{(N\! +\! 1)} \hat e^+_{j} - \hat e^+_{j} \partial_j^{(N)} = \theta_{j \! +\! 1 \, 1}^{(N\! +\! 1)} s_{1 \, j \! +\! 1}^{(N\! +\! 1)} \hat e^+. \]
\end{lem}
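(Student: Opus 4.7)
The plan is to prove both identities by direct computation from the $n=1$ specialisation of \rfd{eopsdefn}, reducing each case to an immediate application of \rfl{intrule1}. Setting $\bm i = (j)$ gives
\[ (\hat e^-_j f)(\bm x) = \theta(x_{N+1} - x_j) \int_{x_j}^{x_{N+1}} \dd y \, \e^{\ii \lambda(x_{N+1} + x_j - y)} F(y), \qquad F(y) := f(x_1, \ldots, x_{j-1}, y, x_{j+1}, \ldots, x_N), \]
and since $F$ depends on neither $x_j$ nor $x_{N+1}$, all the $x_j$-dependence sits in the integration limit, the exponential, and the $\theta$-factor. Differentiating the $\theta$-factor by $\partial_j^{(N+1)}$ produces a delta at $x_j=x_{N+1}$ multiplied by the integral, which vanishes because its endpoints then coincide; so only the Leibniz contribution from the integral itself survives.

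Consequently the commutator $[\partial_j^{(N+1)}, \hat e^-_j]f$ equals $\theta(x_{N+1}-x_j)$ times the commutator of $\partial_j$ with $\int_{x_j}^{x_{N+1}} \dd y \, \e^{\ii\lambda(x_{N+1}+x_j-y)} \phi_{j \to y}$ applied to $F(x_j)$. Under the relabelling $x_0 \leftrightarrow x_{N+1}$, $x_1 \leftrightarrow x_j$, $y_1 \leftrightarrow y$, this is literally the content of \rfl{intrule1}, which evaluates it to $-\e^{\ii \lambda x_j} F(x_{N+1})$. Recognising $\e^{\ii \lambda x_j} F(x_{N+1}) = (s_{j\,N+1}^{(N+1)} \hat e^- f)(\bm x)$ and reinserting the surviving $\theta_{N+1\,j}^{(N+1)}$ yields the first identity.

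The argument for $\hat e^+_j$ is entirely parallel. Here the specialisation gives $(\hat e^+_j f)(\bm x) = \theta(x_{j+1} - x_1) \int_{x_1}^{x_{j+1}} \dd y \, \e^{\ii\lambda(x_1 + x_{j+1} - y)} G(y)$ with $G(y) = f(x_2,\ldots,x_j,y,x_{j+2},\ldots,x_{N+1})$, so that $\partial_{j+1}^{(N+1)}$ differentiates with respect to the \emph{upper} endpoint rather than the lower. Repeating the Leibniz-plus-integration-by-parts manipulation, the boundary contribution that survives the commutator subtraction now sits at $y=x_1$ and carries the opposite sign, producing $+\theta_{j+1\,1}^{(N+1)} \e^{\ii\lambda x_{j+1}} G(x_1) = \theta_{j+1\,1}^{(N+1)} s_{1\,j+1}^{(N+1)} \hat e^+ f$ on the right; equivalently, one can appeal to the obvious variant of \rfl{intrule1} with the integration direction reversed. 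There is no conceptual obstacle: the only points requiring care are checking that the $\delta$-contributions from differentiating the $\theta$-factors vanish by the coincident-endpoint argument, and tracking whether the surviving integration-by-parts boundary term lives at the upper or lower limit, which is what distinguishes the signs on the two right-hand sides.
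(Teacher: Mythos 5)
Your proposal is correct and follows essentially the same route as the paper, whose entire proof of this lemma is the single sentence that it ``follows immediately upon applying \rfl{intrule1}''; you have simply made explicit the $n=1$ specialisation of \rfd{eopsdefn}, the vanishing of the $\theta$-derivative contribution, and the identification of the surviving boundary term with $s_{j\,N+1}^{(N+1)}\hat e^-$ (respectively $s_{1\,j+1}^{(N+1)}\hat e^+$). Your sign bookkeeping for the $\hat e^+_j$ case, where the differentiated variable is the upper rather than the lower limit of integration, is also right.
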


\begin{proof} This follows immediately upon applying \rfl{intrule1}. \end{proof}

The following lemmas deal with the case $n>1$.

\begin{lem} \label{hatelem2}
Let $n=1,\ldots,N$.
\begin{itemize}
\item Let $(j,\bm i) \in \f i^n_N$. Then 
\[ \partial_j^{(N\! +\! 1)} \hat e^-_{j \, \bm i} - \hat e^-_{j \, \bm i} \partial_j^{(N)} = \theta_{j \, i_1}^{(N\! +\! 1)} \hat e^-_{\bm i} \theta_{i_1 \, j}^{(N)} s_{i_1 \, j}^{(N)} -  \theta_{N+1 \, j}^{(N\! +\! 1)} s_{j \, N+1}^{(N\! +\! 1)}  \hat e^-_{\bm i} .\]
\item Let $(\bm i,j) \in \f i^n_N$. Then
\[ \partial_{j \! +\! 1}^{(N\! +\! 1)} \hat e^+_{\bm i \, j} - \hat e^+_{\bm i \, j}  \partial_j^{(N)} = - \theta_{i_{n-1}+1 \, j \! +\! 1}^{(N\! +\! 1)} \hat e^+_{\bm i} \theta_{j \, i_{n-1}}^{(N)} s_{i_{n-1} \, j}^{(N)} + \theta_{j \! +\! 1 \, 1}^{(N\! +\! 1)} s_{1 \, j \! +\! 1}^{(N\! +\! 1)} \hat e^+_{\bm i}. \]
\end{itemize}
\end{lem}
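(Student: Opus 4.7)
The strategy is to apply \rfl{intrule2} to the two $x_j$-dependent integrations inside $\hat e^-_{j, \bm i}$ and then match the resulting boundary terms to the right-hand side. First I would observe that $x_j$ enters $\hat e^-_{j, \bm i}$ only through three places: the step-function prefactor $\theta_{N+1 \, j \, i_1 \, \ldots \, i_{n-1}}$; the two outer integrations $\int_{x_j}^{x_{N+1}} \dd y_1\, \e^{\ii \lambda(x_j - y_1)}$ and $\int_{x_{i_1}}^{x_j} \dd y_2\, \e^{\ii \lambda(x_{i_1} - y_2)}$; and the replacement operator $\phi_{j \to y_1}$. All further integrations over $y_3, \ldots, y_n$ and the replacements $\phi_{i_m \to y_{m+1}}$ for $m \geq 1$ are $x_j$-independent. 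Since the first (respectively, second) of these integrals collapses to zero whenever $x_j = x_{N+1}$ (respectively, $x_j = x_{i_1}$), the function $\hat e^-_{j, \bm i} f$ extends continuously across the boundaries where the step factor jumps; consequently the distributional $\delta$-contributions of $\partial_j$ acting on that step factor cancel against the corresponding boundary contributions of the integrals and may be safely ignored in what follows.

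I would then apply \rfl{intrule2} with $x_0 = x_{N+1}$, $x_1 = x_j$, $x_2 = x_{i_1}$, treating the inner integrations together with $f$ as the function $h$. This produces two boundary terms: one (call it $(A)$) from the upper limit $y_2 = x_j$ of the second integral, collapsing that integral; and one (call it $(B)$) from the lower limit $y_1 = x_j$ of the first integral. Simultaneously I would expand $\hat e^-_{j, \bm i} \partial_j^{(N)} f$ by using $\phi_{j \to y_1} \partial_j^{(N)} = \partial_{y_1} \phi_{j \to y_1}$ and integrating by parts in $y_1$, which produces boundary contributions $(I)$ from $y_1 = x_{N+1}$ and $(II)$ from $y_1 = x_j$, together with an $\ii \lambda$-term that exactly cancels the contribution of the explicit $\e^{\ii \lambda x_j}$ in the Leibniz expansion of $\partial_j \hat e^-_{j, \bm i} f$. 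A term-by-term inspection of the integrands shows $(B) = (II)$, so these contributions cancel in the commutator, leaving $\partial_j^{(N+1)} \hat e^-_{j, \bm i} f - \hat e^-_{j, \bm i} \partial_j^{(N)} f = (A) - (I)$.

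To finish, I would identify $(A)$ with $\theta_{j\,i_1}^{(N+1)} \hat e^-_{\bm i} \theta_{i_1\,j}^{(N)} s_{i_1\,j}^{(N)}$ and $(I)$ with $\theta_{N+1\,j}^{(N+1)} s_{j\,N+1}^{(N+1)} \hat e^-_{\bm i}$. For $(A)$, substituting $y_2 = x_j$ places $x_j$ in the $i_1$-th slot of $f$, which is the effect of $s_{i_1\,j}^{(N)}$; the surviving step factor $\theta_{N+1\,j\,i_1\,\ldots}^{(N+1)}$ splits as $\theta_{j\,i_1}^{(N+1)} \cdot \theta_{N+1\,i_1\,i_2\,\ldots}^{(N+1)}$, and the second factor together with the remaining integrations assembles precisely into $\hat e^-_{\bm i}$ acting on a function additionally restricted to the region $x_{i_1} > x_j$ (that is, after $\theta_{i_1\,j}^{(N)}$). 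For $(I)$, substituting $y_1 = x_{N+1}$ places $x_{N+1}$ in the $j$-th slot of $f$, matching the precomposition of $\hat e^-_{\bm i}$ with $s_{j\,N+1}^{(N+1)}$ on the region $x_{N+1} > x_j$, and the factor $\e^{\ii \lambda x_j}$ obtained after combining with $\e^{\ii \lambda x_{N+1}}$ matches the prefactor of $\hat e^-_{\bm i} f$ evaluated at the swapped point.

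The second statement of the lemma follows from the mirror computation, with the ascending chain $x_{i_{n-1}+1} > \ldots > x_{i_1+1} > x_1$ replacing the descending chain, the roles of $x_{N+1}$ and $x_1$ interchanged, and $i_1$ replaced by $i_{n-1}$; the sign flip of the first RHS term arises because the analogue of \rfl{intrule2} for the ascending case gives the collapsing-boundary contributions with opposite signs. I expect the main obstacle to be the careful combinatorial bookkeeping of step-function factors, exponential prefactors, and slot substitutions: every term must be tracked through the transposition $s_{i_1\,j}$ (respectively, $s_{i_{n-1}\,j}$) or $s_{j\,N+1}$ (respectively, $s_{1\,j+1}$) and aligned with the ambient product of $\theta$'s, which is where essentially all the work of the proof will lie.
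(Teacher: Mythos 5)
Your proposal is correct and follows essentially the same route as the paper: isolate the $x_j$-dependent double integration, compute the commutator via \rfl{intrule2} (whose Leibniz/integration-by-parts mechanics you re-derive inline), and identify the two surviving boundary terms with $\theta_{j \, i_1}^{(N+1)} \hat e^-_{\bm i} \theta_{i_1 \, j}^{(N)} s_{i_1 \, j}^{(N)}$ and $-\theta_{N+1 \, j}^{(N+1)} s_{j \, N+1}^{(N+1)} \hat e^-_{\bm i}$ by the same step-function splitting and slot-swapping bookkeeping, with the mirrored argument for $\hat e^+$. The only cosmetic differences are that the paper invokes \rfl{intrule2} as a black box and simply works on the open region where the step factor is locally constant (so the discussion of $\delta$-contributions is unnecessary on $\bR^{N+1}_{\n{reg}}$), and that what you call ``precomposition'' with $s_{j\,N+1}^{(N+1)}$ is really the post-composition $s_{j\,N+1}^{(N+1)}\hat e^-_{\bm i}$ appearing in the statement.
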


\begin{proof}
We prove the statement for $\hat e^-$. The proof for $\hat e^+$ goes completely analogously.
Let $\bm x = (x_1,\ldots,x_{N+1}) \in \bR^{N+1}_\n{reg}$. Write $\bm x' = (x_1,\ldots,x_N)$, $\bm y = (y_1,\ldots,y_{n-1})$, and $i_0 = j$. assume $x_{N+1}>x_j>x_{i_1}>\ldots>x_{i_{n-1}}$.
Using \rfl{intrule2} we obtain
\begin{align*} 
\lefteqn{\partial_j^{(N\! +\! 1)} \hat e^-_{j \, \bm i} - \hat e^-_{j \, \bm i} \partial_j^{(N)} =} \displaybreak[2] \\
&=  \left[ \partial_j, \int_{x_j}^{x_{N\!+\!1}} \dd Y \int_{x_{i_1}}^{x_j} \dd y_1\e^{\ii \lambda(x_{N\!+\!1}+x_j+x_{i_1}-Y-y_1)} \phi_{j \to Y} 
\right]  \left( \prod_{m=2}^{n-1} \int_{x_{i_m}}^{x_{i_{m\!-\!1}}} \dd y_m \e^{\ii \lambda (x_{i_m}-y_m)} \right) \phi_{\bm i \to \bm y} \hat \phi_{N\!+\!1}  \displaybreak[2]  \\
&= \left( \int_{x_j}^{x_{N+1}} \dd Y \e^{\ii \lambda (x_{N+1}+x_{i_1}-Y)} \left( \prod_{m=2}^{n-1} \int_{x_{i_m}}^{x_{i_{m-1}}} \dd y_m \e^{\ii \lambda (x_{i_m}-y_m)} \right)  \phi_{\bm i \to (x_j,y_2,\ldots,y_{n-1})} \hat \phi_{N+1} + \right. \\
& \left. \qquad - \int_{x_{i_1}}^{x_j} \dd y_1 \e^{\ii \lambda(x_j+x_{i_1}-y_1)} \left( \prod_{m=2}^{n-1} \int_{x_{i_m}}^{x_{i_{m-1}}} \dd y_m \e^{\ii \lambda (x_{i_m}-y_m)} \right) 
\phi_{\bm i \to \bm y, j \to x_{N+1}} \hat \phi_{N+1} \right) \\
&=    \left( \int_{x_j}^{x_{N+1}} \dd y_1 \e^{\ii \lambda (x_{N+1}+x_{i_1}-y_1)} \left( \prod_{m=2}^{n-1} \int_{x_{i_m}}^{x_{i_{m-1}}} \dd y_m \e^{\ii \lambda (x_{i_m}-y_m)} \right) s_{i_1 \, j} \phi_{ (j,i_2,\ldots,i_{n-1}) \to \bm y} \hat \phi_{N+1} + \right. \\
& \left. \qquad - s_{j \, N+1} \e^{\ii \lambda x_{N+1}} \int_{x_{i_1}}^{x_{N+1}} \dd y_1 \e^{\ii \lambda(x_{i_1}-y_1)} \left( \prod_{m=2}^{n-1} \int_{x_{i_m}}^{x_{i_{m-1}}} \dd y_m \e^{\ii \lambda (x_{i_m}-y_m)} \right) \phi_{\bm i \to \bm y} \hat \phi_{N+1} \right).
\end{align*}
We can re-write the second term as $-s_{j \, N+1}\hat e^-_{\bm i}$; hence it suffices to show
\begin{align*}
\hat e^-_{\bm i} \theta_{i_1 \, j} s_{i_1 \, j} &= \int_{x_j}^{x_{N+1}} \dd y_1 \e^{\ii \lambda (x_{N+1}+x_{i_1}-y_1)} \left( \prod_{m=2}^{n-1} \int_{x_{i_m}}^{x_{i_{m-1}}} \dd y_m \e^{\ii \lambda (x_{i_m}-y_m)} \right)  \phi_{\bm i \to \bm y} \hat \phi_{N+1} s_{i_1 \, j}.
\end{align*}
This can be established by writing $\int_{x_j}^{x_{N+1}} \dd y_1 = \int_{x_{i_1}}^{x_{N+1}} \dd y_1 \theta(y-x_j)$.
\end{proof}

\begin{lem} \label{hatelem3}
Let $n=1,\ldots,N$.
\begin{itemize}
\item Let $(\bm i,j) \in \f i^n_N$ and $\bm i' = (i_1,\ldots,i_{n-2})$. Then
\[ \partial_j^{(N\! +\! 1)} \hat e^-_{\bm i \, j} - \hat e^-_{\bm i \, j} \partial_j^{(N)} = 
- \theta_{i_{n-1} \, j}^{(N\! +\! 1)} \hat e^-_{\bm i' \, j} \theta_{j \, i_{n-1}}^{(N)} s_{i_{n-1} j}^{(N)}. \]
\item Let $(j, \bm i) \in \f i^n_N$ and $\bm i' = (i_2,\ldots,i_{n-1})$. Then
\[ \partial_{j \! +\! 1}^{(N\! +\! 1)} \hat e^+_{j \, \bm i} - \hat e^+_{j \, \bm i} \partial_j^{(N)} = \theta_{j \! +\! 1 \, i_1+1}^{(N\! +\! 1)} \hat e^+_{j \, \bm i'} \theta_{i_1 \, j}^{(N)} s_{i_1 \, j}^{(N)}. \]
\end{itemize}
\end{lem}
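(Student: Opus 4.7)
The plan is to reduce the commutator $[\partial_j^{(N+1)}, \hat e^-_{\bm i \, j}]$ to a boundary term arising from integration by parts in the innermost integration variable, and then to identify this boundary term with the right-hand side via the swap $s_{i_{n-1}\,j}$. First, I would observe that in $\hat e^-_{\bm i \, j}$ with $(\bm i, j) = (i_1,\ldots,i_{n-1},j) \in \f i^n_N$, the variable $x_j$ occurs only in (a) the step function $\theta_{N+1\,\bm i\,j} = \theta(x_{N+1}>x_{i_1}>\ldots>x_{i_{n-1}}>x_j)$, (b) the lower limit of the innermost integral $\int_{x_j}^{x_{i_{n-1}}} \dd y_n$, and (c) the factor $\e^{\ii \lambda (x_j - y_n)}$ (since $i_n = j$). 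Crucially, after the replacement $\phi_{\bm i \to \bm y, j \to y_n}$ the function $f$ no longer depends on $x_j$. Moreover, on $\bR^{N+1}_\n{reg}$ the distributional derivative of the step function is supported on hyperplanes $x_j = x_{i_m}$ that we may ignore.

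Next I would isolate the effect of $\partial_j$ on the innermost integration. For any integrable $g$ independent of $x_j$, the Leibniz rule gives $\partial_{x_j} \int_{x_j}^{x_{i_{n-1}}} \dd y_n \, \e^{\ii \lambda(x_j - y_n)} g(y_n) = -g(x_j) + \ii \lambda \int_{x_j}^{x_{i_{n-1}}} \dd y_n \, \e^{\ii \lambda(x_j - y_n)} g(y_n)$, while integration by parts yields $\int_{x_j}^{x_{i_{n-1}}} \dd y_n \, \e^{\ii \lambda(x_j - y_n)} g'(y_n) = \e^{\ii \lambda(x_j - x_{i_{n-1}})} g(x_{i_{n-1}}) - g(x_j) + \ii \lambda \int_{x_j}^{x_{i_{n-1}}} \dd y_n \, \e^{\ii \lambda(x_j - y_n)} g(y_n)$. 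Since $\phi_{j \to y_n} \partial_j f = \partial_{y_n}(\phi_{j \to y_n} f)$, subtracting the two expressions cancels both the bulk $\ii \lambda$-term and the boundary contribution at $y_n = x_j$. Only the upper-endpoint boundary term survives, leaving the commutator equal to $-\e^{\ii \lambda x_{N+1}} \theta_{N+1 \, \bm i \, j} \prod_{m=1}^{n-1} \int_{x_{i_m}}^{x_{i_{m-1}}} \dd y_m \, \e^{\ii \lambda(x_{i_m} - y_m)} \cdot \e^{\ii \lambda(x_j - x_{i_{n-1}})} \cdot (\phi_{i_1 \to y_1,\,\ldots,\,i_{n-1}\to y_{n-1},\,j\to x_{i_{n-1}}} \hat \phi_{N+1} f)$.

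Finally I would match this with the right-hand side $-\theta_{i_{n-1}\,j}^{(N+1)} \hat e^-_{\bm i'\,j} \theta_{j\,i_{n-1}}^{(N)} s_{i_{n-1}\,j}^{(N)}$ with $\bm i'=(i_1,\ldots,i_{n-2})$. Using \rfeqn{eqn43}, the key identity $\phi_{i_1\to y_1,\ldots,i_{n-2}\to y_{n-2},\, j\to y_{n-1}} s_{i_{n-1}\,j} = \phi_{i_1\to y_1,\ldots,i_{n-1}\to y_{n-1},\, j\to x_{i_{n-1}}}$ shows that applying $s_{i_{n-1}\,j}$ before $\hat e^-_{\bm i'\,j}$ correctly promotes the boundary substitution $j \to x_{i_{n-1}}$. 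The factor $\theta_{j\,i_{n-1}}^{(N)}$ multiplied by $s_{i_{n-1}\,j}^{(N)} f$ becomes $\theta(y_{n-1}>x_{i_{n-1}})$ after $\hat e^-_{\bm i'\,j}$ replaces $x_j$ with $y_{n-1}$, and combining this with $\theta_{i_{n-1}\,j}^{(N+1)}$ and $\theta_{N+1\,\bm i'\,j}$ reassembles the step function $\theta_{N+1\,\bm i\,j}$ and restricts the $y_{n-1}$ integration to $(x_{i_{n-1}}, x_{i_{n-2}})$, exactly as in the surviving LHS expression. The argument for $\hat e^+_{j \, \bm i}$ is strictly analogous, now applying Leibniz and integration by parts to the \emph{outermost} integration $\int_{x_{i_1+1}}^{x_{j+1}} \dd y_1$; the opposite sign in the lemma arises because $x_{j+1}$ is the \emph{upper} (rather than lower) limit of integration. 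The main obstacle is the index bookkeeping in the identification step — verifying that the composite $\theta_{i_{n-1}\,j}^{(N+1)} \hat e^-_{\bm i'\,j} \theta_{j\,i_{n-1}}^{(N)} s_{i_{n-1}\,j}^{(N)}$ reproduces both the step-function cutoffs and the ``twisted'' replacement $j \to x_{i_{n-1}}$, $i_{n-1} \to y_{n-1}$ without mismatching any slot.
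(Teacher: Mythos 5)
Your proof is correct and takes essentially the same route as the paper: the paper packages your Leibniz-plus-integration-by-parts cancellation (of the bulk $\ii\lambda$ term and the lower-endpoint boundary term) into its \rfl{intrule1}, and then performs the same identification of the surviving endpoint term with $-\theta_{i_{n-1} \, j}^{(N+1)} \hat e^-_{\bm i' \, j} \theta_{j \, i_{n-1}}^{(N)} s_{i_{n-1} \, j}^{(N)}$ via the replacement/transposition identity and the splitting $\int_{x_{i_{n-1}}}^{x_{i_{n-2}}} \dd y_{n-1} = \int_{x_j}^{x_{i_{n-2}}} \dd y_{n-1}\, \theta(y_{n-1}-x_{i_{n-1}})$. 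Your sign analysis for the $\hat e^+$ case (upper versus lower limit) is also the paper's.
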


\begin{proof}
Again, we note that the proof for the statement for $\hat e^+$ is analogous to the following proof for $\hat e^-$.
Let $\bm x = (x_1,\ldots,x_{N+1}) \in \bR^{N+1}_\n{reg}$. Write $\bm y = (y_1,\ldots,y_{n-1})$ and $i_0 = N+1$, and assume that $x_{N+1}>x_{i_1}>\ldots>x_{i_{n-1}}>x_j$.
Note that $\left[\partial_j,\hat e^-_{\bm i \, j}\right] =$
\begin{align*}
&= \e^{\ii \lambda x_{N+1}} \left( \prod_{m=1}^{n-1} \int_{x_{i_m}}^{x_{i_{m-1}}} \dd y_m \e^{\ii \lambda (x_{i_m}-y_m)} \right) \left[ \partial_j, \int_{x_j}^{x_{i_{n-1}}} \dd Y \e^{\ii \lambda (x_j-Y)} \phi_{j \to Y} \right] \phi_{\bm i \to \bm y} \hat \phi_{N+1} \\
&= - \e^{\ii \lambda x_{N+1}} \left( \prod_{m=1}^{n-1} \int_{x_{i_m}}^{x_{i_{m-1}}} \dd y_m \e^{\ii \lambda (x_{i_m}-y_m)} \right) \e^{\ii \lambda(x_j-x_{i_{n-1}})}  \phi_{(\bm i',j) \to \bm y} \hat \phi_{N+1}s_{i_{n-1},j},
\end{align*}
by virtue of \rfl{intrule1}. Using $\int_{x_{i_{n-1}}}^{x_{i_{n-2}}} \dd y_{n-1}= \int_{x_j}^{x_{i_{n-2}}} \dd y_{n-1} \theta(y_{n-1}-x_{i_{n-1}})$ and $\theta_{}$ we obtain the result.
\end{proof}

\begin{lem} \label{hatelem4}
Let $n=1,\ldots,N$, $l=2,\ldots,n-1$ and $(\bm i,j,\bm k) \in \f i^n_N$ such that $\bm i \in \f i^{l-1}_N$.
\begin{itemize}
\item Write $\bm i' = (i_1,\ldots,i_{l-2})$.
Then
\[ \partial_j^{(N\! +\! 1)} \hat e^-_{\bm i \, j \, \bm k} - \hat e^-_{\bm i \, j \, \bm k} \partial_j^{(N)} = \theta_{j \, k_1}^{(N\! +\! 1)} \hat e^-_{\bm i \, \bm k} \theta_{k_1 \, j}^{(N)}  s_{j \, k_1}^{(N)} - \theta_{i_{l-1} \, j}^{(N\! +\! 1)} \hat e^-_{\bm i' \, j \bm k} \theta_{j \, i_{l-1}}^{(N)} s_{i_{l-1} \, j}^{(N)}. \]
\item Write $\bm k' = (k_2,\ldots,k_{n-l})$.
Then
\[  \partial_{j \! +\! 1}^{(N\! +\! 1)} \hat e^+_{\bm i \, j \, \bm k} - \hat e^+_{\bm i \, j \, \bm k} \partial_j^{(N)}
 = -\theta_{i_{l-1}+1 \, j \! +\! 1}^{(N\! +\! 1)} \hat e^+_{\bm i \, \bm k} \theta_{j \, i_{l-1}}^{(N)}  s_{i_{l_1} \, j}^{(N)} + \theta_{j \! +\! 1 \, k_1 \! +\! 1}^{(N\! +\! 1)} \hat e^+_{\bm i \, j \bm k'} \theta_{k_1 \,j}^{(N)} s_{j \, k_1}^{(N)}.  \]
\end{itemize}
\end{lem}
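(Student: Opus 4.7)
The argument should follow the template of \rfl{hatelem2} and \rfl{hatelem3}: the statement for $\hat e^+$ is obtained by a mirror-image calculation, so I focus on the $\hat e^-$ case. The key observation is that $x_j$ enters the integrand of $\hat e^-_{\bm i \, j \, \bm k}$ only through the $l$-th and $(l+1)$-th inner factors, namely as the lower limit $x_j$ of the $y_l$-integration (upper limit $x_{i_{l-1}}$), as the upper limit $x_j$ of the $y_{l+1}$-integration (lower limit $x_{k_1}$), and through the factor $\e^{\ii \lambda(x_j - y_l)}$ coming from the $y_l$-factor. Fix an alcove of $\bR^{N+1}_\n{reg}$ in which all step functions are constant; then one may factor
\[ \hat e^-_{\bm i \, j \, \bm k} = \ca A \cdot K_j \cdot \ca B, \]
where $\ca A$ collects the factors with index $p < l$ together with the prefactor $\e^{\ii \lambda x_{N+1}}\theta_{N+1 \, \bm i \, j \, \bm k}$, where $\ca B$ collects the factors with index $p > l+1$ together with $\hat \phi_{N+1}$, and where
\[ K_j = \int_{x_j}^{x_{i_{l-1}}} \dd y_l \int_{x_{k_1}}^{x_j} \dd y_{l+1} \e^{\ii \lambda (x_j + x_{k_1} - y_l - y_{l+1})} \phi_{j \to y_l} \phi_{k_1 \to y_{l+1}}. \]
Since $x_j$ does not appear in $\ca A$ or $\ca B$, the commutator reduces to $[\partial_j, \hat e^-_{\bm i \, j \, \bm k}] = \ca A \, [\partial_j, K_j] \, \ca B$.

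The commutator $[\partial_j, K_j]$ is the direct analogue of the quantity computed in \rfl{intrule2}, and essentially the same computation applies: differentiating produces four contributions (two boundary terms at $y_l = x_j$ and $y_{l+1} = x_j$, a term from $\ii \lambda$ acting on the exponential, and the $\phi_{j \to y_l} \partial_j$ term that is subtracted to form the commutator). An integration by parts in $y_l$ converts the $\ii \lambda$ contribution into two boundary values at $y_l \in \{x_j, x_{i_{l-1}}\}$ together with a term proportional to $\partial_{y_l}(\phi_{j \to y_l} F)$ that cancels the $\phi_{j \to y_l}\partial_j$ term exactly. After the cancellations one is left with precisely two surviving boundary contributions: one at $y_{l+1} = x_j$ giving $\int_{x_j}^{x_{i_{l-1}}} \dd y_l \, \e^{\ii \lambda(x_{k_1}-y_l)} \phi_{j \to y_l} \phi_{k_1 \to x_j}(\cdots)$, and one at $y_l = x_{i_{l-1}}$ giving $-\int_{x_{k_1}}^{x_j}\dd y_{l+1} \, \e^{\ii \lambda(x_j + x_{k_1} - x_{i_{l-1}} - y_{l+1})} \phi_{j \to x_{i_{l-1}}} \phi_{k_1 \to y_{l+1}}(\cdots)$.

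To cast these into the form on the right-hand side of the lemma I would use the identity $\phi_{j \to y}\, s_{j\, k} = s_{j\, k}\, \phi_{k \to y}$, which is an immediate check from the definitions. Applied to the first surviving term one gets $\phi_{j \to y_l}\phi_{k_1 \to x_j} = \phi_{j \to y_l}\, s_{j\, k_1}$ (since $\phi_{k_1 \to x_j}$ has the same effect as $s_{j\, k_1}$ on the slot $k_1$ whenever it is post-composed with $\phi_{j \to y_l}$), and hence the first term equals $s_{j\, k_1}\, \phi_{k_1 \to y_l}(\cdots)$; the $y_l$-range $[x_j, x_{i_{l-1}}]$ can then be extended to $[x_{k_1}, x_{i_{l-1}}]$ at the cost of a step factor $\theta(y_l > x_j)$, which after swapping $j \leftrightarrow k_1$ becomes the prefactor $\theta_{j\, k_1}^{(N+1)}$ (in the alcove) and the operator $\theta_{k_1\, j}^{(N)}$ acting on $F$. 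The resulting expression reassembles exactly as $\theta_{j\, k_1}^{(N+1)} \hat e^-_{\bm i\, \bm k}\, \theta_{k_1\, j}^{(N)}\, s_{j\, k_1}^{(N)}$, matching the first term of the claim. An entirely parallel manipulation on the second surviving term, with $s_{i_{l-1}\, j}$ playing the role of $s_{j\, k_1}$ and the $y_{l+1}$-range being extended upward to $x_{i_{l-2}}$, produces the second term with the correct sign.

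The main obstacle will be the bookkeeping in the last step: one must verify carefully that the interchange between $\phi$-replacement and transposition (via the identity $\phi_{j \to y}\, s_{j\, k} = s_{j\, k}\, \phi_{k \to y}$) combined with the extensions of integration ranges produces precisely the compatible system of step functions and transpositions appearing on the right-hand side, with the correct prefactor ($\theta_{j\, k_1}^{(N+1)}$ versus $\theta_{i_{l-1}\, j}^{(N+1)}$) determining which alcove each boundary contribution lives in. The overall calculation is longer than that of \rfl{hatelem2} or \rfl{hatelem3} only because two boundary contributions must now be tracked simultaneously; no new ingredient beyond \rfl{intrule2} and the simple commutation identity above is needed.
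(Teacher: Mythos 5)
Your overall strategy is exactly the paper's: fix an alcove where the step functions are constant, isolate the $x_j$-dependent double integral, apply \rfl{intrule2} to produce the two boundary contributions, and reassemble each as a composition of the form $\theta \cdot \hat e^- \cdot \theta \cdot s$ by extending integration ranges at the cost of step functions. The two surviving boundary terms you identify are the correct ones, and the final reassembled operators you write down match the statement of the lemma.

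However, the reassembly step as justified contains a concrete error. The identity $\phi_{j \to y_l}\phi_{k_1 \to x_j} = \phi_{j \to y_l}\, s_{j\,k_1}$ is false: evaluated on $F$ at $\bm x$, the left-hand side places $y_l$ in slot $j$ and $x_j$ in slot $k_1$, whereas the right-hand side (equivalently $s_{j\,k_1}\phi_{k_1 \to y_l}$, by your commutation identity) places $x_{k_1}$ in slot $j$ and $y_l$ in slot $k_1$. A replacement and a transposition sharing a slot do not commute; your parenthetical reasoning implicitly assumes slot $j$ is overwritten \emph{after} the swap, which reverses the order of composition. The identity you need is $\phi_{j \to y_l}\phi_{k_1 \to x_j} = \phi_{k_1 \to y_l}\, s_{j\,k_1}$, so that the transposition ends up to the \emph{right} of all replacements, acting on $F$ first --- which is where it sits in the lemma ($\hat e^-_{\bm i\,\bm k}\,\theta_{k_1\,j}^{(N)}\,s_{j\,k_1}^{(N)}$, not $s_{j\,k_1}\,\hat e^-_{\bm i\,\bm k}\,(\cdots)$). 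With that correction the first term assembles as claimed, since in $\hat e^-_{\bm i\,\bm k}$ the integration attached to $k_1$ runs over $[x_{k_1},x_{i_{l-1}}]$ and the inserted $\theta_{k_1\,j}^{(N)}$, after the replacement of slot $k_1$ by the integration variable, restricts it back to $[x_j,x_{i_{l-1}}]$. The second term needs the analogous correction, and there the range that gets extended is the integration attached to $i_{l-1}$ (over $[x_{i_{l-1}},x_{i_{l-2}}]$), which is extended \emph{down} to $x_j$ so as to become the $j$-integration of $\hat e^-_{\bm i'\,j\,\bm k}$ --- not the $y_{l+1}$-range extended up to $x_{i_{l-2}}$ as you state.
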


\begin{proof}
Let $\bm x = (x_1,\ldots,x_{N+1}) \in \bR^{N+1}_\n{reg}$. Write $\bm y = (y_1,\ldots,y_{l-1})$, $\bm y' = (y_1,\ldots,y_{l-2})$, $\bm z = (z_1,\ldots,z_{n-l})$ and $\bm z' = (z_2,\ldots,z_{n-l})$. Also write $i_0 = N+1$, $k_0=j$, $\bm i' = (i_1,\ldots,i_{l-2})$, $\bm k' = (k_2,\ldots,k_{n-l})$ and assume that $x_{N+1}>x_{i_1}>\ldots>x_{i_{l-1}}>x_j>x_{k_1}>\ldots>x_{k_{n-l}}$.
We apply \rfl{intrule2} to obtain
\begin{align*}
\left[\partial_j,\hat e^-_{\bm i \, j \, \bm k}\right]  &= \e^{\ii \lambda x_{N+1} }\left( \prod_{m=1}^{l-1} \int_{x_{i_m}}^{x_{i_{m-1}}} \dd y_m \e^{\ii \lambda (x_{i_m}-y_m)}\right) \left( \prod_{m=2}^{n-l} \int_{x_{k_m}}^{x_{k_{m-1}}} \dd z_m \e^{\ii \lambda (x_{k_m}-z_m)}\right) \cdot \\
& \qquad \cdot \left[ \partial_j,\int_{x_j}^{x_{i_{l-1}}} \dd Y  \int_{x_{k_1}}^{x_j} \dd z_1 \e^{\ii \lambda(x_j+x_{k_1}-Y-z_1)} \phi_{j \to Y} \right] \phi_{\bm i \to \bm y,\bm k \to \bm z} \hat \phi_{N+1} \displaybreak[2] \\
&= \e^{\ii \lambda x_{N+1}} \left( \prod_{m=1}^{l-1} \int_{x_{i_m}}^{x_{i_{m-1}}} \dd y_m \e^{\ii \lambda (x_{i_m}-y_m)}\right) \left( \prod_{m=2}^{n-l} \int_{x_{k_m}}^{x_{k_{m-1}}} \dd z_m \e^{\ii \lambda (x_{k_m}-z_m)}\right) \cdot \\
& \qquad \cdot \left( \int_{x_j}^{x_{i_{l-1}}} \dd z_1 \e^{\ii \lambda (x_{k_1}-z_1)} 
\phi_{\bm i \to \bm y, \, j \to z_1, \, k_1 \to x_j, \, \bm k' \to \bm z'} \hat \phi_{N+1} + \right. \\
& \qquad \qquad \left.  - \int_{x_{k_1}}^{x_j} \dd z_1 \e^{\ii \lambda(x_j+x_{k_1}-x_{i_{l-1}}-z_1)} \phi_{\bm i \to \bm y, \, j \to x_{i_{l-1}}, \bm k \to \bm z} \hat \phi_{N+1} \right)  \displaybreak[2]  \\
&= \e^{\ii \lambda x_{N+1}} \left( \left( \prod_{m=1}^{l-1} \int_{x_{i_m}}^{x_{i_{m-1}}} \dd y_m \e^{\ii \lambda (x_{i_m}-y_m)}\right) \left( \prod_{m=2}^{n-l} \int_{x_{k_m}}^{x_{k_{m-1}}} \dd z_m \e^{\ii \lambda (x_{k_m}-z_m)}\right) \cdot \right. \\
& \qquad \qquad \cdot  \int_{x_j}^{x_{i_{l-1}}} \dd z_1 \e^{\ii \lambda (x_{k_1}-z_1)} \phi_{\bm i \to \bm y, \, \bm k \to \bm z} \hat \phi_{N+1} s_{j \, k_1} + \\
& \qquad -  \left( \prod_{m=1}^{l-2} \int_{x_{i_m}}^{x_{i_{m-1}}} \dd y_m \e^{\ii \lambda (x_{i_m}-y_m)}\right) \left( \prod_{m=1}^{n-l} \int_{x_{k_m}}^{x_{k_{m-1}}} \dd z_m \e^{\ii \lambda (x_{k_m}-z_m)}\right) \cdot \\
& \left. \qquad \qquad \cdot  \int_{x_{i_l}}^{x_{i_{l-2}}} \dd y_l \e^{\ii \lambda(x_j-y_l)} \phi_{\bm i' \to \bm y', \, j \to y_{l-1}, \bm k \to \bm z} \hat \phi_{N+1} s_{i_{l-1} \, j} \right).
\end{align*}
For the first term in square brackets, write $\int_{x_j}^{x_{i_{l-1}}} \dd z_1 = \int_{x_{k_1}}^{x_{i_{l-1}}} \dd z_1 \theta(z_1-x_j)$, and for the second, write $\int_{x_{i_{l-1}}}^{x_{i_{l-2}}} \dd y_{l-1} = \int_{x_j}^{x_{i_{l-2}}} \dd y_{l-1} \theta(y_l-x_{i_l})$. 
This gives the desired result for $\hat e^-$; the proof for $\hat e^+$ is along the same lines.
\end{proof}

\begin{lem} \label{hatelem5}
Let $n=0,\ldots,N$ and $\bm i \in \f i^n_N$.
For distinct positive integers $j,k$ not exceeding $N$ we have
\begin{align} 
\theta_{j \, k}^{(N\! +\! 1)} \hat e^-_{\bm i} -  \hat e^-_{\bm i} \theta_{j \, k}^{(N)} &= \theta_{j \, k}^{(N\! +\! 1)} \hat e^-_{\bm i} \theta_{k \, j}^{(N)}, \label{eqnB4} \\
\theta_{j \! +\! 1 \, k \! +\! 1}^{(N\! +\! 1)} \hat e^+_{\bm i}-\hat e^+_{\bm i} \theta_{j \, k}^{(N)} &= \theta_{j \! +\! 1 \, k \! +\! 1}^{(N\! +\! 1)} \hat e^+_{\bm i} \theta_{k \, j}^{(N)}. \label{eqnB5} 
\end{align}
In particular $\theta_{j \, k}^{(N\! +\! 1)} \hat e^-_{\bm i} -  \hat e^-_{\bm i} \theta_{j \, k}^{(N)}=\theta_{j \! +\! 1 \, k \! +\! 1}^{(N\! +\! 1)} \hat e^+_{\bm i}-\hat e^+_{\bm i} \theta_{j \, k}^{(N)} =0$ if no $i_l$ equals $j$ or $k$, or if $i_l=j$ and $i_m=k$ for some $l,m$.
\end{lem}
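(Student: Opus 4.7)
To prove \rfl{hatelem5}, the plan is to apply both sides of each claimed identity to an arbitrary test function $f \in \f d_N$, expand using the definition of $\hat e^\pm_{\bm i}$ from \rfd{eopsdefn}, and reduce everything to a pointwise comparison of products of Heaviside step functions inside a common integral. The key observation is that $\phi_{\bm i \to \bm y}$ substitutes $x_{i_m} \mapsto y_m$ inside its argument, so any theta-factor multiplied into $f$ before $\phi_{\bm i \to \bm y}$ acts gets its arguments transformed correspondingly; meanwhile the outer multiplication operator $\theta_{j\,k}^{(N+1)}$ and the chain factor $\theta_{N+1\,\bm i}$ (resp.\ $\theta_{\bm i_+\,1}$) in $\hat e^\pm_{\bm i}$ act on the un-substituted output coordinates.

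First I would rewrite $(\theta_{j\,k}^{(N+1)} \hat e^-_{\bm i} - \hat e^-_{\bm i} \theta_{j\,k}^{(N)}) f$ as a single integral whose integrand equals that of $\hat e^-_{\bm i} f$ times the pointwise defect $\theta(x_j - x_k) - \theta(\tilde x_j - \tilde x_k)$, where $\tilde x_j = y_l$ if $j = i_l$ for some $l$ and $\tilde x_j = x_j$ otherwise (and similarly for $\tilde x_k$). Then I would split into three cases according to how $\{j, k\}$ meets $\{i_1, \ldots, i_n\}$. When neither $j$ nor $k$ lies in $\bm i$, the defect is identically zero and the LHS vanishes; the RHS vanishes because $\theta_{j\,k}^{(N+1)}$ commutes past $\hat e^-_{\bm i}$ to meet its disjoint-support partner $\theta_{k\,j}^{(N+1)}$. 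When both $j = i_l$ and $k = i_m$ lie in $\bm i$, the chain $\theta_{N+1\,\bm i}$ and the integration bounds $y_p \in (x_{i_p}, x_{i_{p-1}})$ force $\sgn(y_l - y_m) = \sgn(x_{i_l} - x_{i_m})$ throughout the support, so the defect vanishes; a parallel calculation gives $\theta_{j\,k}^{(N+1)} \hat e^-_{\bm i} \theta_{k\,j}^{(N)} = 0$, confirming the second clause of the ``in particular'' claim.

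The nontrivial work is the case where exactly one of $j, k$ is an entry of $\bm i$, say $k = i_l$ and $j \notin \bm i$. Here the defect is $\theta(x_j - x_k) - \theta(x_j - y_l)$ with $y_l > x_k$ from the integration bounds, which reduces to the product $\theta(x_j - x_k)\theta(y_l - x_j)$. This is exactly the integrand of $\theta_{j\,k}^{(N+1)} \hat e^-_{\bm i} \theta_{k\,j}^{(N)} f$, since $\phi_{\bm i \to \bm y}$ converts the input factor $\theta(x_k - x_j)$ into $\theta(y_l - x_j)$ while the outer $\theta(x_j - x_k)$ is supplied by $\theta_{j\,k}^{(N+1)}$. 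The remaining subcase follows by an analogous pointwise step-function identity, and the $\hat e^+$ statement is obtained by the same argument after the index shift $\bm i \mapsto \bm i_+$, with the chain $\theta_{\bm i_+\,1}$ in place of $\theta_{N+1\,\bm i}$. The main obstacle is careful bookkeeping --- keeping straight which variables are integration dummies, which are output coordinates, and the source of each surviving $\theta$-factor --- rather than any conceptual difficulty; the underlying Heaviside identities are elementary once the dictionary is fixed.
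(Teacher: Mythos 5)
Your strategy---apply both sides to a test function, pull everything into a single integral, and compare products of unit step functions pointwise on the support cut out by the chain $\theta_{N+1\,\bm i}$ and the integration limits $y_m \in [x_{i_m}, x_{i_{m-1}}]$---is exactly the route the paper takes, and the three configurations you treat explicitly (neither of $j,k$ in $\bm i$; both in $\bm i$; and $k = i_l$ with $j \notin \bm i$) are handled correctly and agree with the paper's argument, including the identification of the defect as $\theta(x_j - x_k)\,\theta(y_l - x_j)$ in the third configuration.

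The gap is the sentence ``the remaining subcase follows by an analogous pointwise step-function identity.'' It does not; this is precisely where the proof, and in fact the stated identity \rfeqn{eqnB4}, breaks down. For $j = i_l$ and $k \notin \bm i$ your own defect computation gives
\[ \theta(x_{i_l} - x_k) - \theta(y_l - x_k) = -\,\theta(y_l - x_k)\,\theta(x_k - x_{i_l}) \]
up to a null set (using $y_l \ge x_{i_l}$), which is generically nonzero because $x_k$ may lie strictly between $x_{i_l}$ and $y_l$; by contrast the integrand of $\theta_{j\,k}^{(N\!+\!1)}\hat e^-_{\bm i}\theta_{k\,j}^{(N)}$ carries the factor $\theta(x_{i_l}-x_k)\,\theta(x_k - y_l)$, which vanishes identically on the support since $x_{i_l} > x_k > y_l \ge x_{i_l}$ is impossible. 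Concretely, with $N=2$, $\bm i = (1)$, $j=1$, $k=2$, $\lambda = 0$, $f \equiv 1$ and $(x_1,x_2,x_3) = (0,\tfrac{1}{2},1)$, the left-hand side of \rfeqn{eqnB4} equals $-\tfrac{1}{2}$ while the right-hand side equals $0$. The structural obstruction is that the left-hand side is antisymmetric under $j \leftrightarrow k$ (since $\theta_{j\,k} + \theta_{k\,j} = 1$) while the right-hand side is not; what your defect calculus actually proves, uniformly in all four configurations, is
\[ \theta_{j\,k}^{(N\!+\!1)} \hat e^-_{\bm i} - \hat e^-_{\bm i}\theta_{j\,k}^{(N)} = \theta_{j\,k}^{(N\!+\!1)}\hat e^-_{\bm i}\theta_{k\,j}^{(N)} - \theta_{k\,j}^{(N\!+\!1)}\hat e^-_{\bm i}\theta_{j\,k}^{(N)}, \]
of which exactly one term survives according to which of $j,k$ lies in $\bm i$; the mirror-image correction is needed for \rfeqn{eqnB5}, where it is the configuration $k \in \bm i$, $j \notin \bm i$ that fails. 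You are in good company---the paper's own proof dismisses this configuration with ``the situation with $j$ and $k$ swapped goes analogously''---but the gap is real, and carrying your defect computation through the last subcase honestly would have exposed it (and flags that the application of \rfl{hatelem5} in \rfl{bnlem4} must be rechecked against the two-term form).
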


\begin{proof}
We prove the statements for $\hat e^-_{\bm i}$;  the proof for the statements for $\hat e^+_{\bm i}$ goes entirely analogously.\\

In the case that $j \ne i_l \ne k$ for all $l$, we immediately have $\theta_{j \, k}^{(N\! +\! 1)} \hat e^-_{\bm i} = \hat e^-_{\bm i} \theta_{j \, k}^{(N)}$.
Also, if $j=i_l, k=i_m$, say, then both $\theta_{j \, k}^{(N\! +\! 1)} \hat e^-_{\bm i}$ and $\hat e^-_{\bm i} \theta_{j \, k}^{(N)}$ vanish if $l>m$ and are equal to $\hat e^-_{\bm i}$ if $l<m$.
This is obvious for $\theta_{j \, k}^{(N\! +\! 1)} \hat e^-_{\bm i}$, and for $\hat e^-_{\bm i} \theta_{j \, k}^{(N)}$ it follows from the definition of $\hat e^-_{\bm i}$ where $y_l<y_m$ precisely if $x_{i_l}<x_{i_m}$. In particular, we have $\theta_{j \, k}^{(N\! +\! 1)} \hat e^-_{\bm i} = \hat e^-_{\bm i} \theta_{j \, k}^{(N)}$ in this case, as well. \\

In the remaining case, $j \ne i_m$ for all $m$ and $k=i_l$, say. The situation with $j$ and $k$ swapped goes analogously.
\rfeqn{eqnB4} is equivalent to $\hat e^-_{\bm i} \theta_{j \, i_l}^{(N)} = \theta_{j \, i_l}^{(N\! +\! 1)} \hat e^-_{\bm i} - \theta_{j \, i_l}^{(N\! +\! 1)} \hat e^-_{\bm i} \theta_{i_l \, j}^{(N)} = \theta_{j \, i_l}^{(N\! +\! 1)} \hat e^-_{\bm i} \theta_{j \, i_l}^{(N)}$, which is true since none of the $i_m$ equals $j$ and $x_j > x_{i_l}$ is implied by $x_j > y_l$ in the integration in the definition of $\hat e^-_{\bm i}$.
\end{proof}

\subsection{The operators $b^\pm_{\lambda;n}$}

The next step in building the operators $b^\pm_\lambda $ are the $b^\pm_{\lambda;n}$, and we list some useful results about these operators here.
For $N \in \bZ_{\geq 0}$, $j=1,\ldots,N$, and $n = 0, \ldots, N$ we introduce the notation
\begin{equation} \f i^n_N(j) = \set{\bm i \in \f i^n_N}{\forall m \, i_m \ne j}. \end{equation}%
\nc{rilwb}{$\f i^n_N(j)$}{Shorthand for $\set{\bm i \in \f i^n_N}{\forall m \, i_m \ne j}$ \nomrefeqpage}%
\vspace{-10mm}
\begin{lem} \label{bnlem1}
Let $n=0,\ldots,N-1$.
Then 
\begin{align}
(\partial_{N+1}^{(N\! +\! 1)}-\ii \lambda)b^-_{(n+1)} &= \Lambda^{(N\! +\! 1)}_{N+1} b^-_{(n)}, \label{eqnB11}\\
(\partial_{1}^{(N\! +\! 1)}-\ii \lambda)b^+_{(n+1)} &= \Lambda^{(N\! +\! 1)}_{1} b^+_{(n)}. \label{eqnB12} 
\end{align}
\end{lem}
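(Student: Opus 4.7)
The plan is to establish the identities by direct computation on each elementary summand $\hat e^-_{\bm i}$ (resp. $\hat e^+_{\bm i}$) and then sum. Since the two statements are dual (with the roles of $x_{N+1}$ and $x_1$, and of the ``$-$''- and ``$+$''-versions of the elementary operators, interchanged), I would focus on \rfeqn{eqnB11} and indicate how the argument adapts to \rfeqn{eqnB12}. Throughout, I view both sides as operators in $\Hom(\ca C^\infty(\bR^N_\n{reg}),\ca C^\infty(\bR^{N+1}_\n{reg}))$, so that on each open alcove all step functions are locally constant and distributional derivatives of $\theta$ can safely be ignored.

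Fix $\bm i = (i_1, \bm i'') \in \f i^{n+1}_N$, where $\bm i'' = (i_2,\ldots,i_{n+1}) \in \f i^n_N(i_1)$. In the definition of $\hat e^-_{\bm i}$ the variable $x_{N+1}$ appears in exactly three places: the prefactor $\e^{\ii \lambda x_{N+1}}$, the step function $\theta_{N+1 \, \bm i}$, and the upper limit $x_{i_0} = x_{N+1}$ of the $y_1$-integration. The derivative of the prefactor contributes $\ii \lambda \hat e^-_{\bm i}$, which cancels the $-\ii \lambda$ on the left-hand side; the step function contributes nothing on alcove interiors. Evaluating the integrand at $y_1 = x_{N+1}$, the boundary term of the $y_1$-integral gives, after using $\e^{\ii \lambda x_{N+1}}\e^{\ii \lambda(x_{i_1}-x_{N+1})} = \e^{\ii \lambda x_{i_1}}$ and relabeling $y_m \to z_{m-1}$, an expression that is easily recognised by direct comparison with \rfd{eopsdefn} to equal $\theta^{(N+1)}_{N+1 \, i_1} s^{(N+1)}_{i_1 \, N+1}\hat e^-_{\bm i''}$ applied to $f$ (the factor $s_{i_1\, N+1}$ accounts for the replacement of argument $i_1$ of $f$ by $x_{N+1}$). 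Hence
\[
(\partial^{(N+1)}_{N+1}-\ii \lambda)\hat e^-_{(i_1,\bm i'')} = \theta^{(N+1)}_{N+1 \, i_1} s^{(N+1)}_{i_1 \, N+1}\hat e^-_{\bm i''} \qquad (\bm i'' \in \f i^n_N(i_1)).
\]

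Summing over $\bm i \in \f i^{n+1}_N$, parametrised as $(k, \bm i'')$ with $k \in \{1,\ldots,N\}$ and $\bm i'' \in \f i^n_N(k)$, the right-hand side becomes $\sum_{k=1}^N \theta^{(N+1)}_{N+1\, k} s^{(N+1)}_{k\, N+1} \sum_{\bm i'' \in \f i^n_N(k)} \hat e^-_{\bm i''}$. One then observes that the inner sum may be extended over all of $\f i^n_N$: if $k \in \bm i''$, then the step function $\theta^{(N+1)}_{N+1\,\bm i''}$ inside $\hat e^-_{\bm i''}$ forces $x_{N+1} > x_k$, so that after applying $s_{k\, N+1}$ it becomes a condition implying $x_k > x_{N+1}$, which is incompatible with the factor $\theta^{(N+1)}_{N+1\, k}$; the resulting term vanishes identically. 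Combining this with the definition $\Lambda^{(N+1)}_{N+1} = \sum_{k=1}^N \theta_{N+1\, k} s_{k\, N+1}$ (no terms with $k > N+1$) yields \rfeqn{eqnB11}. For \rfeqn{eqnB12} the argument is completely analogous: now $x_1$ appears as the lower limit of the last integration in $\hat e^+_{(\bm i'',k)}$, so the boundary term carries an extra minus sign, which matches the minus sign in $\Lambda^{(N+1)}_1 = -\sum_{l=2}^{N+1} \theta_{l\, 1} s_{1\, l}$. The main obstacle is purely notational bookkeeping: correctly tracking the shifting conventions $\bm i \mapsto \bm i_+$ and the interaction of the position-$1$ (or $N+1$) slot with the permutation and step-function operators; once the single-term identity above is in place, the summation step is automatic.
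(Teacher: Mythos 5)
Your proof is correct and follows essentially the same route as the paper's: differentiate each elementary operator $\hat e^\mp_{\bm i}$ term by term, identify the Leibniz boundary term of the $y_1$ (resp.\ last) integration with $\theta^{(N+1)}_{N+1\,i_1}s^{(N+1)}_{i_1\,N+1}\hat e^-_{\bm i'}$ (resp.\ its ``$+$''-analogue with a sign), then sum over $\bm i$ and extend the inner sum over $\f i^n_N(j)$ to all of $\f i^n_N$ using exactly the vanishing argument $\theta^{(N+1)}_{N+1\,j}s^{(N+1)}_{j\,N+1}\hat e^-_{\bm i}=0$ when some $i_m=j$. No substantive differences.
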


\begin{proof}
Let $\bm i \in \f i^{n+1}_N$, $\bm x = (x_1,\ldots,x_{N+1}) \in \bR^{N+1}_\n{reg}$ and $\bm y = (y_1,\ldots,y_{n+1})$.
To prove \rfeqn{eqnB11}, write $\bm i' = (i_2,\ldots,i_{n+1})$, $\bm x' = (x_1,\ldots,x_N)$ and $\bm y' = (y_2,\ldots,y_{n+1})$ and note that
\begin{align*} 
\partial_{N+1}^{(N\! +\! 1)} \hat e^-_{\bm i}  &= 
\ii \lambda \hat e^-_{\bm i}  + \e^{\ii \lambda x_{i_1}} \theta_{N+1 \, \bm i}^{(N\! +\! 1)} s_{i_1 \, N+1}^{(N\! +\! 1)} 
\left( \prod_{m=2}^{n+1} \int_{x_{i_{m}}}^{x_{i_{m-1}}} \dd y_m \e^{\ii \lambda(x_{i_{m}}-y_m)}\right) \phi_{\bm i' \to \bm y'} \hat \phi_{N+1} \\
&= \ii \lambda \hat e^-_{\bm i}  + \theta_{N+1 \, i_1}^{(N\! +\! 1)} s_{i_1 \, N+1}^{(N\! +\! 1)} \hat e^-_{\bm i'} .
\end{align*}
Summing over $\bm i$ gives
\[ (\partial_{N+1}^{(N\! +\! 1)}-\ii \lambda) b^-_{(n+1)} = \sum_{\bm i \in \f i_N^{n+1}} \theta_{N+1 \, i_1}^{(N\! +\! 1)}s_{i_1 \, N+1}^{(N\! +\! 1)} \hat e^-_{\bm i'} = \sum_{j=1}^N \theta_{N+1 \, j}^{(N\! +\! 1)} s_{j \, N+1}^{(N\! +\! 1)} \sum_{\bm i \in \f i^n_N(j)}  \hat e^-_{\bm i}  =  \Lambda^{(N\! +\! 1)}_{N+1} b^-_{(n)}, \]
where we have used that $\theta_{N+1 \, j}^{(N\! +\! 1)} s_{j \, N+1}^{(N\! +\! 1)} \hat e^-_{\bm i} =0$ if one of the $i_m$ equals $j$.
As for \rfeqn{eqnB12}, a similar argument applies, where we write $\bm i' = (i_1,\ldots,i_n)$, $\bm x' = (x_2,\ldots,x_{N+1})$ and $\bm y' = (y_1,\ldots,y_n)$.
We obtain
\[\partial_1^{(N\! +\! 1)} \hat e^+_{\bm i} = \ii \lambda \hat e^+_{\bm i} - \theta_{i_{n+1} \, 1}^{(N\! +\! 1)} s_{1 \, i_{n+1}}^{(N\! +\! 1)} \hat e^+_{\bm i'} ; \]
hence summing over $\bm i$ yields, as required,
\[ (\partial_1^{(N\! +\! 1)}-\ii \lambda) b^+_{(n+1)} = - \sum_{j=2}^{N+1} \theta^{(N\! +\! 1)}_{j \, 1} s^{(N\! +\! 1)}_{1 \, j} \sum_{\bm i \in \f i^n_N(j)}  \hat e^+_{\bm i} = \Lambda^{(N\! +\! 1)}_1 b^+_{(n)}. \qedhere \]
\end{proof}

\begin{lem} \label{bnlem2}
We have $\Lambda^{(N\! +\! 1)}_{N+1} b^-_{(N)} = \Lambda^{(N\! +\! 1)}_{1} b^+_{(N)}  = 0$.
\end{lem}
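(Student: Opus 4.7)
The plan is to observe that the identities follow immediately from the same reasoning already deployed in the proof of \rfl{bnlem1}: summing over all full permutations leaves no ``room'' for the extra reflection $s_{k\,N+1}$ (resp.\ $s_{1\,k}$) to land in a nontrivial alcove.

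Concretely, recall from \rfeqn{Lambdaalt} that $\Lambda^{(N+1)}_{N+1} = \sum_{k=1}^{N} \theta^{(N+1)}_{N+1\,k}\, s^{(N+1)}_{k\,N+1}$, since the sum over $k>N+1$ is empty. Meanwhile $b^-_{(N)} = \sum_{\bm i \in \f i^N_N} \hat e^-_{\bm i}$ is a sum over the set $\f i^N_N$ of all permutations of $(1,\ldots,N)$. So I would write
\[
\Lambda^{(N+1)}_{N+1} b^-_{(N)} \;=\; \sum_{k=1}^{N} \theta^{(N+1)}_{N+1\,k}\, s^{(N+1)}_{k\,N+1} \sum_{\bm i \in \f i^N_N} \hat e^-_{\bm i},
\]
and then invoke the fact, already used inside the proof of \rfl{bnlem1}, that $\theta^{(N+1)}_{N+1\,k}\, s^{(N+1)}_{k\,N+1}\, \hat e^-_{\bm i} = 0$ whenever some $i_m$ equals $k$. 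Equivalently, in the notation $\f i^n_N(k)$ of that proof, we have $\f i^N_N(k)=\emptyset$ for every $k\in\{1,\ldots,N\}$, since any permutation of $(1,\ldots,N)$ must contain $k$. So every term in the double sum vanishes and $\Lambda^{(N+1)}_{N+1} b^-_{(N)}=0$.

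The statement for $b^+_{(N)}$ is proved in exactly the same way. Here $\Lambda^{(N+1)}_{1} = -\sum_{k=2}^{N+1} \theta^{(N+1)}_{k\,1}\, s^{(N+1)}_{1\,k}$, and for every $\bm i \in \f i^N_N$ the shifted tuple $\bm i_+$ is a permutation of $(2,\ldots,N+1)$, so each $k\in\{2,\ldots,N+1\}$ occurs as some $i_m+1$. The analogue for $\hat e^+$ of the support identity used above --- $\theta^{(N+1)}_{k\,1}\, s^{(N+1)}_{1\,k}\, \hat e^+_{\bm i} = 0$ whenever $k-1$ appears in $\bm i$ --- already appeared in \rfl{bnlem1}'s proof, and again annihilates every summand.

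I do not anticipate a genuine obstacle. The only thing to be careful about is the vanishing identity for $\theta\, s \,\hat e^\pm$; if for some reason one did not wish to appeal to the in-passing remark inside the proof of \rfl{bnlem1}, it can be verified directly from \rfd{eopsdefn} in one line by tracing the support constraints: the operator $\hat e^-_{\bm i}$ forces $x_{N+1} > x_{i_1} > \cdots > x_{i_n}$, so after swapping $x_k$ and $x_{N+1}$ (with $k=i_m$) one has $x_k > x_{N+1}$, which is incompatible with the subsequent factor $\theta(x_{N+1}-x_k)$; the computation for $\hat e^+$ is symmetric.
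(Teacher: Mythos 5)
Your proof is correct and follows essentially the same route as the paper: both arguments reduce to the observation that for a full permutation $\bm i \in \f i^N_N$ the support constraint in $\hat e^-_{\bm i}$ (resp.\ $\hat e^+_{\bm i}$) forces $x_{N+1}$ to be maximal (resp.\ $x_1$ minimal), so every term $\theta^{(N+1)}_{N+1\,k}s^{(N+1)}_{k\,N+1}\hat e^-_{\bm i}$ (resp.\ $\theta^{(N+1)}_{k\,1}s^{(N+1)}_{1\,k}\hat e^+_{\bm i}$) vanishes. The only cosmetic difference is that you phrase the vanishing via ``every $k$ occurs in $\bm i$'' and cite the identity from the proof of \rfl{bnlem1}, whereas the paper states the support fact directly; both are equivalent here.
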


\begin{proof}
Writing $\bm x = (x_1,\ldots,x_{N+1})$, we note that 
\[ b^-_{(N)} =\sum_{\bm i \in \f i_N^N} \hat e^-_{\bm i} = \sum_{w \in S_N} \hat e^-_{w(1) \ldots w(N)}. \]
Note that $\hat e^-_{\bm i}$ is nonzero only if $x_{N+1}>x_m$ for all $m=1,\ldots,N$.
Therefore, for any $\bm i \in \f i_N^n$ and any $j=1,\ldots,N$ we have $\theta_{N+1 \, j}^{(N\! +\! 1)} s_{j \, N+1}^{(N\! +\! 1)} \hat e^-_{\bm i} = 0$.
Summing over $j$ and $\bm i$ then proves the lemma. A similar argument may be made for $\Lambda^{(N\! +\! 1)}_{1} b^+_{(N)}$.
\end{proof}

\begin{lem} \label{bnlem3}
Let $n=0,\ldots,N-1$ and $j=1,\ldots,N$.
Then
\begin{align*} 
\partial_j^{(N\! +\! 1)} b^-_{(n+1)} - b^-_{(n+1)} \partial_j^{(N)}  &= 
- \theta_{N+1 \, j}^{(N\! +\! 1)} s_{j \, N+1}^{(N\! +\! 1)} b^-_{(n)} + \sum_{l=1}^n \sum_{\bm i \in \f i^n_N(j)}\left( \theta_{j \, i_l}^{(N\! +\! 1)} \hat e^-_{\bm i} \theta_{i_l \, j}^{(N)} s_{i_l \, j}^{(N)} - s_{i_l \, j}^{(N\! +\! 1)} \theta_{j \, i_l}^{(N\! +\! 1)} \hat e^-_{\bm i} \theta_{i_l \, j}^{(N)} \right), \\
\partial_{j \! +\! 1}^{(N\! +\! 1)} b^+_{(n+1)} - b^+_{(n+1)} \partial_j^{(N)} &= 
\theta_{j \! +\! 1 \, 1}^{(N\! +\! 1)} s_{1 \, j \! +\! 1}^{(N\! +\! 1)} b^+_{(n)}  - \sum_{l=1}^{n} \sum_{\bm i \in \f i^n_N(j)}  \left( \theta^{(N\! +\! 1)}_{i_l \! +\! 1 \, j \! +\! 1} \hat e^+_{\bm i} \theta^{(N)}_{j \, i_l} s^{(N)}_{i_l \, j} - s^{(N\! +\! 1)}_{i_l \! +\! 1 \, j \! +\! 1} \theta^{(N\! +\! 1)}_{i_l \! +\! 1 \, j \! +\! 1} \hat e^+_{\bm i} \theta^{(N)}_{j \, i_l} \right).\end{align*}
\end{lem}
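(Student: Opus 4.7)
The plan is to expand $b^-_{(n+1)} = \sum_{\bm j \in \f i^{n+1}_N} \hat e^-_{\bm j}$ and split the sum according to whether the fixed index $j$ appears in the multi-index $\bm j$. By \rfl{hatelem0}, every term with $j \notin \bm j$ commutes with $\partial_j^{(N)}$ and so contributes nothing to the commutator $\partial_j^{(N+1)} b^-_{(n+1)} - b^-_{(n+1)} \partial_j^{(N)}$. The remaining contributions come from multi-indices $\bm j$ in which $j$ occupies some position $l \in \{1, \ldots, n+1\}$; for each such $\bm j$, the commutator $[\partial_j^{(N+1)}, \hat e^-_{\bm j}]$ will be computed by invoking \rfl{hatelem1}, \rfl{hatelem2}, \rfl{hatelem3}, or \rfl{hatelem4}, according to whether $l$ lies at the first position, the last position, or strictly in the interior of $\bm j$. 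The $b^+_{(n+1)}$ identity will follow by the symmetric argument applied to $\hat e^+$, using the $+$-versions of the same elementary lemmas.

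The key reparametrization is to index the multi-indices $\bm j \in \f i^{n+1}_N$ with $j$ at position $l$ by pairs $(\bm i, l)$, where $\bm i := \bm j_{\hat l} \in \f i^n_N(j)$. Under this identification, the right neighbour of $j$ in $\bm j$ (if $l \le n$) is $i_l$, and the left neighbour (if $l \ge 2$) is $i_{l-1}$. The ``$+$'' contributions appearing in \rfl{hatelem2} and the first term of \rfl{hatelem4} will then be seen to match exactly the terms $\theta_{j\,i_l}^{(N+1)} \hat e^-_{\bm i} \theta_{i_l\,j}^{(N)} s_{i_l\,j}^{(N)}$ with $l$ ranging over $\{1, \ldots, n\}$, giving the first half of the sum on the right-hand side. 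What then remains are the ``$-$'' contributions from \rfl{hatelem3} and the second term of \rfl{hatelem4}, together with the extra boundary term $-\theta_{N+1,j}^{(N+1)} s_{j,N+1}^{(N+1)} \hat e^-_{\bm k}$ from \rfl{hatelem2}.

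The hard part will be handling the ``$-$'' contribution: at position $l \in \{2, \ldots, n+1\}$ in $\bm j$, the elementary lemmas produce a term of the form $-\theta_{i_{l-1}\,j}^{(N+1)} \hat e^-_{\bm j''} \theta_{j\,i_{l-1}}^{(N)} s_{i_{l-1}\,j}^{(N)}$, where $\bm j''$ has length $n$ but still contains $j$, whereas the claim is expressed purely in terms of $\hat e^-_{\bm i}$ with $\bm i \in \f i^n_N(j)$. To bridge this gap I will apply the permutation intertwining identity \rfeqn{ehatmin} with $w = s_{i_{l-1}\,j} \in S_N$, which swaps the entry $j$ in $\bm j''$ with the missing entry $i_{l-1}$ and yields $\hat e^-_{\bm j''} = s_{i_{l-1}\,j}^{(N+1)} \hat e^-_{\bm i} s_{i_{l-1}\,j}^{(N)}$; collapsing the resulting product using $s_{i_{l-1}\,j} \theta_{j\,i_{l-1}} s_{i_{l-1}\,j} = \theta_{i_{l-1}\,j}$ and $\theta_{i_{l-1}\,j} s_{i_{l-1}\,j} = s_{i_{l-1}\,j} \theta_{j\,i_{l-1}}$ then delivers $-s_{i_{l-1}\,j}^{(N+1)} \theta_{j\,i_{l-1}}^{(N+1)} \hat e^-_{\bm i} \theta_{i_{l-1}\,j}^{(N)}$, which after the relabelling $l-1 \mapsto l$ matches the ``$-$'' half of the claim. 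Finally, to upgrade the partial sum $-\theta_{N+1,j}^{(N+1)} s_{j,N+1}^{(N+1)} \sum_{\bm k \in \f i^n_N(j)} \hat e^-_{\bm k}$ to the full expression $-\theta_{N+1,j}^{(N+1)} s_{j,N+1}^{(N+1)} b^-_{(n)}$, I will verify that the missing terms (those with $j \in \bm k$) vanish: in the definition of $\hat e^-_{\bm k}$, swapping $x_j$ and $x_{N+1}$ via $s_{j,N+1}$ forces a step-function chain requiring $x_j > x_{N+1}$, which combined with the outer factor $\theta_{N+1,j}$ requiring $x_{N+1} > x_j$ annihilates the product.
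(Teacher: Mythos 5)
Your proposal is correct and follows essentially the same route as the paper's own proof: the same split according to whether $j$ occurs in the multi-index, the same case analysis via Lemmas \ref{hatelem0}--\ref{hatelem4} on the position of $j$, the same use of \rfeqn{ehatmin} together with $\theta_{i_l \, j} s_{i_l \, j} = s_{i_l \, j} \theta_{j \, i_l}$ to recast the ``$-$'' contributions, and the same vanishing argument (the paper's \rfeqn{eqn453}) to upgrade the restricted sum over $\f i^n_N(j)$ to the full $b^\mp_{(n)}$.
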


\begin{proof}
Let $j=1,\ldots,N$. First we deal with the case $n=0$. \rflser{hatelem0}{hatelem1} give us
\begin{align*} 
\partial_j^{(N\! +\! 1)} b^-_{(1)} - b^-_{(1)} \partial_j^{(N)} &= \sum_{k=1}^N \left( \partial_j^{(N\! +\! 1)} \hat e^-_k - \hat e^-_k \partial_j^{(N)} \right) && = \partial_j^{(N\! +\! 1)} \hat e^-_j - \hat e^-_j \partial_j^{(N)} \\
& = -\theta^{(N\! +\! 1)}_{N+1 \, j} s^{(N\! +\! 1)}_{j \, N+1}\hat e^- &&= -\theta_{N+1\, j}^{(N\! +\! 1)}s_{j \, N+1}^{(N\! +\! 1)} b^-_{(0)}, 
\end{align*}
and similarly
\[ \partial_{j \! +\! 1}^{(N\! +\! 1)} b^+_{(1)} - b^+_{(1)} \partial_j^{(N)} = \partial_{j \! +\! 1}^{(N\! +\! 1)} \hat e^+_j - \hat e^+_j \partial_j^{(N)} = \theta_{j \! +\! 1 \, 1}^{(N\! +\! 1)} s_{1 \, j \! +\! 1}^{(N\! +\! 1)} b^+_{(0)}. \]
Since the summations over $l$ in the equations in the lemma vanish for $n=0$ the results follow.\\

For $n>0$ we have
\begin{align*} 
\partial_j^{(N\! +\! 1)} b^-_{(n+1)} - b^-_{(n+1)} \partial_j^{(N)} &= \sum_{\bm i \in \f i_N^{n+1}} \left( \partial_j^{(N\! +\! 1)} \hat e^-_{\bm i} - \hat e^-_{\bm i} \partial_j^{(N)} \right) \quad = \sum_{\bm i \in \f i_N^{n+1} \atop \exists l: \, i_l=j} \left( 
\partial_j^{(N\! +\! 1)} \hat e^-_{\bm i} - \hat e^-_{\bm i} \partial_j^{(N)} \right) \\
&= \sum_{\bm i \in \f i^n_N(j)} \sum_{l=1}^{n+1} \left( \partial_j^{(N\! +\! 1)} \hat e^-_{i_1 \ldots i_{l-1} \, j \, i_l \ldots i_n} - \hat e^-_{i_1 \ldots i_{l-1} \, j \, i_l \ldots i_n} \partial_j^{(N)} \right). 
\end{align*}
Let $\bm i \in \f i_N^{n}(j)$.
Using \rflser{hatelem2}{hatelem4} we have
\begin{align*}
\lefteqn{\sum_{l=1}^{n+1} \left( \partial_j^{(N\! +\! 1)} \hat e^-_{i_1 \ldots i_{l-1} \, j \, i_l \ldots i_n} - \hat e^-_{i_1 \ldots i_{l-1} \, j \, i_l \ldots i_n} \partial_j^{(N)} \right) = } \\
&= -\theta_{N+1,j}^{(N\! +\! 1)} s_{j \, N+1}^{(N\! +\! 1)} \hat e^-_{\bm i}+  \theta_{j \, i_1}^{(N\! +\! 1)} \hat e^-_{\bm i} \theta_{i_1 \, j}^{(N)} s_{i_1 \,j}^{(N)}+ \\
& \quad + \sum_{l=2}^n  \left( \theta_{j \, i_l}^{(N\! +\! 1)} \hat e^-_{\bm i} \theta_{i_l \, j}^{(N)} s_{i_l \, j}^{(N)}
 - \theta_{i_{l-1} \, j}^{(N\! +\! 1)} \hat e^-_{i_1 \ldots i_{l-2} \, j \, i_l \, \ldots \, i_n} \theta_{j \, i_{l-1}}^{(N)} s_{i_{l-1} \, j}^{(N)} \right)  -  \theta_{i_n \, j}^{(N\! +\! 1)} \hat e^-_{i_1 \ldots i_{n-1} j} \theta_{j \, i_n}^{(N)} s_{i_n \, j}^{(N)}\\
&= -\theta_{N+1 \, j}^{(N\! +\! 1)} s_{j \, N+1}^{(N\! +\! 1)} \hat e^-_{\bm i} + \sum_{l=1}^n \theta_{j \, i_l}^{(N\! +\! 1)} \hat e^-_{\bm i} \theta_{i_l \, j}^{(N)} s_{i_l \, j}^{(N)} - \sum_{l=2}^{n+1} \theta_{i_{l-1} \, j}^{(N\! +\! 1)} \hat e^-_{i_1 \ldots i_{l-2} \, j \, i_l \, \ldots \, i_n} \theta_{j \, i_{l-1}}^{(N)} s_{i_{l-1} \, j}^{(N)}.
\end{align*}
Note that by virtue of \rfeqn{ehatmin} the second summation over $l$ can be written as
\[ - \sum_{l=1}^n \theta_{i_l \, j}^{(N\! +\! 1)} \hat e^-_{i_1 \ldots i_{l-1} \, j \, i_{l+1} \, \ldots \, i_n} \theta_{j \, i_l}^{(N)} s_{i_l \, j}^{(N)} =  
- \sum_{l=1}^n s_{i_l \, j}^{(N\! +\! 1)} \theta_{j \,  i_l}^{(N\! +\! 1)} \hat e^-_{\bm i} \theta_{i_l \, j}^{(N)} . \]
Combining this, we obtain that
\begin{gather*} \sum_{l=1}^{n+1} \left( \partial_j^{(N\! +\! 1)} \hat e^-_{i_1 \ldots i_{l-1} \, j \, i_l \ldots i_n} - \hat e^-_{i_1 \ldots i_{l-1} \, j \, i_l \ldots i_n} \partial_j^{(N)} \right)  = \hspace{40mm} \\
\hspace{25mm} = -\theta_{N+1 \, j}^{(N\! +\! 1)} s_{j \, N+1}^{(N\! +\! 1)} \hat e^-_{\bm i} + \sum_{l=1}^n \left( \theta_{j \, i_l}^{(N\! +\! 1)} \hat e^-_{\bm i} \theta_{i_l \, j}^{(N)} s_{i_l \, j}^{(N)} - s_{i_l \, j}^{(N\! +\! 1)}  \theta_{j \, i_l}^{(N\! +\! 1)} \hat e^-_{\bm i} \theta_{i_l \, j}^{(N)} \right). \end{gather*}
Summing over all $\bm i \in \f i^n_N(j)$ we find that
\begin{align*} 
\partial_j^{(N\! +\! 1)} b^-_{(n+1)} - b^-_{(n+1)} \partial_j^{(N)} &= -\theta_{N+1 \, j}^{(N\! +\! 1)} s_{j \, N+1}^{(N\! +\! 1)} \sum_{\bm i \in \f i^n_N(j)} \hat e^-_{\bm i}  + \\
& \qquad + \sum_{l=1}^n \sum_{\bm i \in \f i^n_N(j)} \left( \theta_{j \, i_l}^{(N\! +\! 1)} \hat e^-_{\bm i} \theta_{i_l \, j}^{(N)} s_{i_l \, j}^{(N)} - s_{i_l \, j}^{(N\! +\! 1)}  \theta_{j \, i_l}^{(N\! +\! 1)} \hat e^-_{\bm i} \theta_{i_l \, j}^{(N)} \right). \end{align*}
Now finally note that for $\bm i \in \f i^n_N$ such that $i_m = j$ for some $m$, we have
\begin{equation} \label{eqn453} \theta_{N+1 \, i_m}^{(N\! +\! 1)}s_{j \, N+1}^{(N\! +\! 1)} \hat e^-_{\bm i} = \theta_{i_m+1 \, 1}^{(N\! +\! 1)}s_{1 \, j \! +\! 1}^{(N\! +\! 1)} \hat e^+_{\bm i} = 0, \end{equation}
since $\theta_{N+1 \, i_m}^{(N\! +\! 1)} s_{i_m \, N+1}^{(N\! +\! 1)} \theta_{N+1 \, \bm i}^{(N\! +\! 1)} = \theta_{i_m+1 \, 1}^{(N\! +\! 1)} s_{1 \, i_m+1}^{(N\! +\! 1)} \theta_{\bm i_+ \, 1}^{(N\! +\! 1)} = 0$. 
This means that
\[ \theta_{N+1 \, j}^{(N\! +\! 1)} s_{j \, N+1}^{(N\! +\! 1)} b^-_{(n)} = \theta_{N+1 \, j}^{(N\! +\! 1)} s_{j \, N+1}^{(N\! +\! 1)} \sum_{\bm i \in \f i^n_N(j) }  \hat e^-_{\bm i}, \]
which completes the proof for $n>0$ for the formula for $\hat e^-$. The proof for the formula for $\hat e_+$ follows the same arguments:
\begin{align*}
\partial_{j \! +\! 1}^{(N\! +\! 1)} b^+_{(n+1)} - b^+_{(n+1)} \partial_j^{(N)} &= \sum_{\bm i \in \f i^n_N(j)} \sum_{l=1}^{n+1} \left( \partial_{j \! +\! 1}^{(N\! +\! 1)} \hat e^+_{i_1 \ldots i_{l-1} \, j \, i_l \ldots i_n} - \hat e^+_{i_1 \ldots i_{l-1} \, j \, i_l \ldots i_n} \partial_j^{(N)}\right) \\
&= \sum_{\bm i \in \f i^n_N(j)} \theta^{(N\! +\! 1)}_{j \! +\! 1 \, 1} s^{(N\! +\! 1)}_{1 \, j \! +\! 1} \hat e^+_{\bm i} + \\
& \quad - \sum_{\bm i \in \f i^n_N(j)} \sum_{l=1}^{n} \left( s^{(N\! +\! 1)}_{i_l \! +\! 1 \, j \! +\! 1} \theta^{(N\! +\! 1)}_{i_l \! +\! 1 \, j \! +\! 1} \hat e^+_{\bm i} \theta^{(N)}_{j \, i_l} - \theta^{(N\! +\! 1)}_{i_l \! +\! 1 \, j \! +\! 1} \hat e^+_{\bm i} \theta^{(N)}_{j \, i_l} s^{(N)}_{i_l \, j} \right). \qedhere
\end{align*} 
\end{proof}

For $j=1,\ldots,N$ denote
\begin{equation}
\begin{aligned}
\left( \Lambda^-_j \right)^{(N\! +\! 1)} &= \sum_{k=1}^{j-1} \theta_{j \, k}^{(N\! +\! 1)} s_{j \, k}^{(N\! +\! 1)} - \sum_{k=j \! +\! 1}^N \theta_{k \, j}^{(N\! +\! 1)} s_{j \, k}^{(N\! +\! 1)} \\
\left( \Lambda^+_j \right)^{(N\! +\! 1)} &= \sum_{k=1}^{j-1} \theta_{j \! +\! 1 \, k \! +\! 1}^{(N\! +\! 1)} s_{j \! +\! 1 \, k \! +\! 1}^{(N\! +\! 1)} - \sum_{k=j \! +\! 1}^N \theta_{k \! +\! 1 \, j \! +\! 1}^{(N\! +\! 1)} s_{j \! +\! 1 \, k \! +\! 1}^{(N\! +\! 1)},
\end{aligned}
\end{equation}%
\nc{glcz}{$( \Lambda^\pm_j )^{(N + 1)}$}{Shifted version of $\Lambda_j^{(N+1)}$ \nomrefeqpage}%
so that
\[ \Lambda^{(N\! +\! 1)}_j  = \left( \Lambda^-_j \right)^{(N\! +\! 1)}-\theta_{N+1 \, j}^{(N\! +\! 1)} s_{j \, N+1}^{(N\! +\! 1)}, \qquad 
\Lambda^{(N\! +\! 1)}_{j \! +\! 1}  = \left( \Lambda^+_j \right)^{(N\! +\! 1)} + \theta_{j \! +\! 1 \, 1}^{(N\! +\! 1)} s_{1 \, j \! +\! 1}^{(N\! +\! 1)}. \]

\begin{lem} \label{bnlem4}
Let $j=1,\ldots,N$ and $n=0,\ldots,N$.
Then
\begin{align*}
\left( \Lambda^-_j \right)^{(N\! +\! 1)} b^-_{(n)} - b^-_{(n)} \Lambda^{(N)}_{j} &= \sum_{l=1}^n  \sum_{\bm i \in \f i^n_N(j)} \left( \theta_{j \, i_l}^{(N\! +\! 1)} \hat e^-_{\bm i} \theta_{i_l \, j}^{(N)} s_{i_l \, j}^{(N)} - s_{i_l \, j}^{(N\! +\! 1)} \theta_{j \, i_l}^{(N\! +\! 1)} \hat e^-_{\bm i} \theta_{i_l \, j}^{(N)} \right), \\ 
\left( \Lambda^+_j \right)^{(N\! +\! 1)} b^+_{(n)} - b^+_{(n)} \Lambda^{(N)}_{j} &= - \sum_{l=1}^{n} \sum_{\bm i \in \f i^n_N(j)}  \left( \theta^{(N\! +\! 1)}_{i_l \! +\! 1 \, j \! +\! 1} \hat e^+_{\bm i} \theta^{(N)}_{j \, i_l} s^{(N)}_{i_l \, j} - s^{(N\! +\! 1)}_{i_l \! +\! 1 \, j \! +\! 1} \theta^{(N\! +\! 1)}_{i_l \! +\! 1 \, j \! +\! 1} \hat e^+_{\bm i} \theta^{(N)}_{j \, i_l} \right).
\end{align*}
\end{lem}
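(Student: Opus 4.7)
My plan is to mimic the structure of the proof of \rfl{bnlem3}, but replacing the partial derivative $\partial_j$ by the operator $\Lambda_j$. The key observation is that, in contrast to $\partial_j$, the operator $\Lambda_j$ is built solely out of step functions $\theta_{a \, b}$ and simple transpositions $s_{j \, k}$ (with $k \le N$), for which the commutation with $\hat e^\pm_{\bm i}$ is entirely governed by \rfl{hatelem5} and \rfl{eopspermutation}. I therefore expect a shorter, ``purely combinatorial'' argument.

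The first step is to use \rfl{eopspermutation} to note that, for all $1 \leq j \ne k \leq N$, summing \rfeqn{ehatmin} over $\bm i \in \f i^n_N$ (and relabelling $\bm i \to s_{j\,k} \bm i$, which is a bijection on $\f i^n_N$ since $s_{j \, k} \in S_N$) yields the intertwining identity $s^{(N\!+\!1)}_{j\,k} b^-_{(n)} = b^-_{(n)} s^{(N)}_{j\,k}$. Applying this to each transposition appearing in $(\Lambda^-_j)^{(N\!+\!1)}$ allows me to factor out the transposition on the right and rewrite
\[ (\Lambda^-_j)^{(N\!+\!1)} b^-_{(n)} - b^-_{(n)} \Lambda^{(N)}_j = \sum_{k=1}^{j-1} \bigl(\theta^{(N\!+\!1)}_{j\,k} b^-_{(n)} - b^-_{(n)} \theta^{(N)}_{j\,k} \bigr) s^{(N)}_{j\,k} - \sum_{k=j+1}^N \bigl(\theta^{(N\!+\!1)}_{k\,j} b^-_{(n)} - b^-_{(n)} \theta^{(N)}_{k\,j}\bigr) s^{(N)}_{j\,k}. \]
The step-function commutators are then evaluated term-by-term with \rfl{hatelem5}. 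By the ``in particular'' clause of that lemma, each $\theta^{(N\!+\!1)}_{a\,b} \hat e^-_{\bm i} \theta^{(N)}_{b\,a}$ survives only when exactly one of $a, b$ belongs to $\bm i$; moreover a quick check of the integration ranges shows the surviving case is precisely $b \in \bm i$, $a \notin \bm i$. Hence the first sum contributes only when $k \in \bm i$, $j \notin \bm i$, giving after relabelling $k = i_l$ the term $\sum_{l,\bm i \in \f i^n_N(j), \, i_l<j} \theta^{(N\!+\!1)}_{j\,i_l}\hat e^-_{\bm i}\theta^{(N)}_{i_l\,j} s^{(N)}_{i_l\,j}$, and similarly for the second sum, which contributes only when $j \in \bm i$, $k \notin \bm i$.

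The main technical obstacle, and the third step, is that the second-sum contributions appear as a summation over $\bm i \ni j$, whereas the target RHS is indexed over $\bm i \in \f i^n_N(j)$. To bridge this, I will reparametrise via $\bm i \mapsto \bm i' = s_{j\,k}\bm i$: when $j \in \bm i$ with $i_l = j$ and $k \notin \bm i$, the image $\bm i'$ lies in $\f i^n_N(j)$ with $i'_l = k > j$. Combining the identity $\hat e^-_{s_{j\,k}\bm i'} = s^{(N\!+\!1)}_{j\,k}\hat e^-_{\bm i'}s^{(N)}_{j\,k}$ from \rfl{eopspermutation} with the swap rules $s_{j\,k}\theta_{k\,j} = \theta_{j\,k}s_{j\,k}$ and $(s^{(N)}_{j\,k})^2 = 1$, the second-sum contribution transforms into $-\sum_{l,\bm i'\in\f i^n_N(j),\,i'_l>j} s^{(N\!+\!1)}_{i'_l\,j}\theta^{(N\!+\!1)}_{j\,i'_l}\hat e^-_{\bm i'}\theta^{(N)}_{i'_l\,j}$. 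An analogous reparametrisation applied to the first-sum contribution (using the same swap identities) shows that the constraints $i_l < j$ and $i_l > j$ from the two sums combine to give a summation over all $\bm i \in \f i^n_N(j)$ in both RHS terms, matching the stated expression exactly.

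The proof of the $b^+_{(n)}$ identity is entirely analogous, using \rfeqn{ehatplus} in place of \rfeqn{ehatmin} and keeping track of the index shift $\bm i \to \bm i_+$; the parallel should be evident from the proof of the $b^+_{(n)}$ case in \rfl{bnlem3}. I expect the sign/shift bookkeeping in this final reparametrisation step to be the trickiest part of the argument.
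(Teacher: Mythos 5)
Your overall strategy is the same as the paper's: intertwine the transpositions through $b^-_{(n)}$ via \rfl{eopspermutation}, reduce to the step-function commutators $\theta^{(N+1)}_{\cdot\,\cdot}\hat e^-_{\bm i}-\hat e^-_{\bm i}\theta^{(N)}_{\cdot\,\cdot}$, evaluate these via \rfl{hatelem5}, and reparametrise $\bm i\mapsto s_{j\,k}\bm i$ to land on $\f i^n_N(j)$. The gap is in your second step. Your integration-range check is correct as a statement about the \emph{product} $\theta^{(N+1)}_{a\,b}\hat e^-_{\bm i}\theta^{(N)}_{b\,a}$: it vanishes unless $b\in\bm i$ and $a\notin\bm i$. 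But it does not follow that the \emph{commutator} $\theta^{(N+1)}_{a\,b}\hat e^-_{\bm i}-\hat e^-_{\bm i}\theta^{(N)}_{a\,b}$ vanishes in the complementary sub-case $a\in\bm i$, $b\notin\bm i$: there the commutator equals $-\theta^{(N+1)}_{b\,a}\hat e^-_{\bm i}\theta^{(N)}_{a\,b}$, which is generically nonzero. (The displayed formula of \rfl{hatelem5} is only derived, in its proof, for the sub-case where the \emph{second} index of the outer step function lies in $\bm i$; in the swapped sub-case the other expression results, as you can check directly for $N=2$, $\bm i=(2)$, by comparing $\theta^{(3)}_{2\,1}\hat e^-_2-\hat e^-_2\theta^{(2)}_{2\,1}$ with $-\theta^{(3)}_{1\,2}\hat e^-_2\theta^{(2)}_{2\,1}$.) By discarding the contributions with $j\in\bm i$ in your first sum and with $k\in\bm i$ in your second sum, you lose exactly half of the answer.

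That this cannot be repaired by your final reindexing is visible from a term count: by your own vanishing criterion, every summand on the right-hand side of the lemma is of the surviving type ($i_l\in\bm i$, $j\notin\bm i$), so the target contains two generically nonzero terms per pair $(l,\bm i)$, whereas your computation produces only one per pair (the first type for $i_l<j$, the second type for $i_l>j$); no bijective relabelling can double the number of terms. Concretely, for $N=2$, $n=1$, $j=1$ your argument yields only $-s^{(3)}_{1\,2}\theta^{(3)}_{1\,2}\hat e^-_2\theta^{(2)}_{2\,1}$, while a direct computation of $(\Lambda^-_1)^{(3)}b^-_{(1)}-b^-_{(1)}\Lambda^{(2)}_1$ gives in addition the nonzero term $\theta^{(3)}_{1\,2}\hat e^-_2\theta^{(2)}_{2\,1}s^{(2)}_{1\,2}$, in agreement with the lemma. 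Once the discarded multi-indices are retained, their commutators evaluated in the form $-\theta^{(N+1)}_{b\,a}\hat e^-_{\bm i}\theta^{(N)}_{a\,b}$ and then reparametrised exactly as in your third step, both types of term appear for every $i_l$ and the stated formula follows; this is in substance what the paper's proof does by keeping both sub-cases throughout.
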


\begin{proof}
Let $1 \leq j \ne k \leq N$.
Note that $\hat e^- = \e^{\ii \lambda x_{N+1}} \hat \phi_{N+1}$ and hence $\theta_{j\,k}^{(N\! +\! 1)} s_{j \, k}^{(N\! +\! 1)} \hat e^- - \hat e^- \theta_{j\,k}^{(N)} s_{j \, k}^{(N)} = \theta_{k\,j}^{(N\! +\! 1)} s_{j \, k}^{(N\! +\! 1)} \hat e^- - \hat e^- \theta_{k\,j}^{(N)} s_{j \, k}^{(N)}=0$, so that the statements for $n=0$ follow.. \\

Now suppose $n \geq 1$, $1 \leq j \ne k \leq N$ and let $\bm i \in \f i_N^n$.
Then
\[ \theta_{j\,k}^{(N\! +\! 1)} s_{j \, k}^{(N\! +\! 1)} b^-_{(n)} - b^-_{(n)} \theta_{j\,k}^{(N)} s_{j \, k}^{(N)} = \sum_{\bm i \in \f i^n_N}  \left( \theta_{j\,k}^{(N\! +\! 1)} \hat e^-_{\bm i} - \hat e^-_{\bm i} \theta_{j\,k}^{(N)} \right) s_{j\,k}^{(N)},\]
because of \rfeqn{ehatmin}.
Similarly we obtain
\[ -\theta_{k\,j}^{(N\! +\! 1)} s_{j \, k}^{(N\! +\! 1)} b^-_{(n)} + b^-_{(n)} \theta_{k\,j}^{(N)} s_{j \, k}^{(N)} 
=\sum_{\bm i \in \f i^n_N} \left( \theta_{k\,j}^{(N\! +\! 1)} \hat e^-_{\bm i} - \hat e^-_{\bm i} \theta_{k\,j}^{(N)} \right)s_{j\,k}^{(N)}  =\sum_{\bm i \in \f i^n_N} \left( \theta_{j\,k}^{(N\! +\! 1)} \hat e^-_{\bm i} - \hat e^-_{\bm i} \theta_{j\,k}^{(N)} \right)  s_{j\,k}^{(N)}, \]
by virtue of $\theta_{j \, k}+\theta_{k\, j} = 1$.
It follows that
\begin{align*}
\left( \Lambda^-_j \right) ^{(N\! +\! 1)} b^-_{(n)} - b^-_{(n)} \Lambda^{(N)}_{j} 
&= \sum_{k \ne j}  \sum_{\bm i \in \f i^n_N} \left( \theta_{j\,k}^{(N\! +\! 1)} \hat e^-_{\bm i} - \hat e^-_{\bm i} \theta_{j\,k}^{(N)} \right)  s_{j\,k}^{(N)} \; = \; \sum_{k \ne j} \sum_{\bm i \in \f i^n_N}  \theta_{j \, k}^{(N\! +\! 1)} \hat e^-_{\bm i} \theta_{k \, j}^{(N)} s_{j\,k}^{(N)} \\
&= \sum_{k \ne j} \left( \sum_{\bm i \in \f i^n_N(j) \atop \exists l: \, i_l= k} \theta_{j \, k}^{(N\! +\! 1)} \hat e^-_{\bm i} \theta_{k \, j}^{(N)} + \sum_{\bm i \in \f i^n_N(k) \atop \exists l: \, i_{l}=j}  \theta_{j \, k}^{(N\! +\! 1)} \hat e^-_{\bm i} \theta_{k \, j}^{(N)} \right) s_{j\,k}^{(N)}, 
\end{align*}
where we have applied \rfl{hatelem5}.
\rfl{hatelem5} and $\theta_{j \, k}+ \theta_{k \, j} = 1$ yield
$\theta_{k \, j}^{(N\! +\! 1)} \hat e^-_{\bm i} \theta_{j \, k}^{(N)} = - \theta_{j \, k}^{(N\! +\! 1)} \hat e^-_{\bm i} \theta_{k \, j}^{(N)}$ so 
that this equals
\begin{align*}
\left( \Lambda^-_j \right) ^{(N\! +\! 1)} b^-_{(n)} - b^-_{(n)} \Lambda^{(N)}_{j} &=  \sum_{k \ne j} \left( \sum_{\bm i \in \f i^n_N(j) \atop \exists l: \, i_l= k} \theta_{j \, k}^{(N\! +\! 1)} \hat e^-_{\bm i} \theta_{k \, j}^{(N)} s_{j\,k}^{(N)} +  \sum_{\bm i \in \f i^n_N(k) \atop \exists l: \, i_{l}=j} s_{j\,k}^{(N\! +\! 1)} \theta_{k \, j}^{(N\! +\! 1)} \hat e^-_{s_{j\,k} \bm i} \theta_{j \, k}^{(N)}  \right) \displaybreak[2] \\
&= \sum_{k \ne j} \sum_{\bm i \in \f i^n_N(j) \atop \exists l: \, i_l= k}  \left( \theta_{j \, k}^{(N\! +\! 1)} \hat e^-_{\bm i} \theta_{k \, j}^{(N)} s_{j\,k}^{(N)} + s_{j\,k}^{(N\! +\! 1)} \theta_{k \, j}^{(N\! +\! 1)} \hat e^-_{\bm i} \theta_{j \, k}^{(N)}  \right) \displaybreak[2] \\
&=  \sum_{k \ne j} \sum_{\bm i \in \f i^n_N(j) \atop \exists l: \, i_l= k}  \left( \theta_{j \, k}^{(N\! +\! 1)} \hat e^-_{\bm i} \theta_{k \, j}^{(N)} s_{j\,k}^{(N)} - s_{j\,k}^{(N\! +\! 1)} \theta_{j \, k}^{(N\! +\! 1)} \hat e^-_{\bm i} \theta_{k \, j}^{(N)}  \right) \displaybreak[2] \\
&= \sum_{l=1}^n \sum_{\bm i \in \f i^n_N(j)} \left( \theta_{j \, i_l}^{(N\! +\! 1)} \hat e^-_{\bm i} \theta_{i_l \, j}^{(N)} s_{i_l \, j}^{(N)} - s_{i_l \, j}^{(N\! +\! 1)} \theta_{j \, i_l}^{(N\! +\! 1)} \hat e^-_{\bm i} \theta_{i_l \, j}^{(N)}  \right). 
\end{align*}
Again, the statement for $b^+_{(n)}$ is proved in a similar way.
\end{proof}

\begin{lem} \label{bnlem5}
Let $j=1,\ldots,N$.
Then 
\[ \theta_{N+1 \, j}^{(N\! +\! 1)} s_{j \, N+1}^{(N\! +\! 1)} b^-_{(N)} = \left( \Lambda_j^- \right)^{(N\!+\!1)} b_{(N)}^- - b_{(N)}^- \Lambda_j^{(N)}, \quad
-\theta_{1 \, j \! +\! 1}^{(N\! +\! 1)} s_{1 \, j \! +\! 1}^{(N\! +\! 1)} b^+_{(N)} = \left( \Lambda_j^+ \right)^{(N \!+ \!1)} b_{(N)}^+ - b_{(N)}^+  \Lambda_j^{(N)}. \]
\end{lem}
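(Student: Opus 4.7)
The plan is to show that both sides of each identity vanish separately. For the right-hand sides, I would invoke \rfl{bnlem4} at $n=N$: the double sum appearing there is indexed by $\bm i \in \f i^N_N(j)$, but any $\bm i \in \f i^N_N$ is a permutation of $\{1,\ldots,N\}$ and must therefore contain the entry $j$, so $\f i^N_N(j) = \emptyset$. This immediately gives $(\Lambda_j^-)^{(N+1)} b^-_{(N)} - b^-_{(N)} \Lambda_j^{(N)} = 0$, and by the same argument applied to the $b^+$ version of \rfl{bnlem4}, also $(\Lambda_j^+)^{(N+1)} b^+_{(N)} - b^+_{(N)} \Lambda_j^{(N)} = 0$.

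For the left-hand sides, I would argue at the level of each summand using the explicit support of its image. For $\bm i \in \f i^N_N$ the operator $\hat e^-_{\bm i}$ carries the factor $\theta_{N+1 \, \bm i}^{(N+1)}$, and since $\bm i$ exhausts $\{1,\ldots,N\}$ this forces $x_{N+1}$ to be the strict maximum of $x_1,\ldots,x_{N+1}$ on the support. Applying $s_{j \, N+1}^{(N+1)}$, which swaps $x_j$ and $x_{N+1}$ in the argument of any function, transports this constraint to ``$x_j$ is the strict maximum''. Multiplying by $\theta_{N+1 \, j}^{(N+1)} = \theta(x_{N+1} > x_j)$ then yields a direct contradiction, so each $\theta_{N+1 \, j}^{(N+1)} s_{j \, N+1}^{(N+1)} \hat e^-_{\bm i}$ is zero; summing over $\bm i \in \f i^N_N$ gives $\theta_{N+1 \, j}^{(N+1)} s_{j \, N+1}^{(N+1)} b^-_{(N)} = 0$. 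The $b^+$ case is entirely parallel, using that $\hat e^+_{\bm i}$ for $\bm i \in \f i^N_N$ carries $\theta_{\bm i_+ \, 1}^{(N+1)}$, which forces $x_1$ to be the strict minimum of $x_1,\ldots,x_{N+1}$, after which the swap $s_{1 \, j+1}^{(N+1)}$ together with the step function in the lemma produces the same contradiction.

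The main obstacle here is purely notational: one must carefully track the convention that $s_{j \, k}$ acts on the argument of a function (so that support constraints get transported between coordinate slots rather than staying fixed), and align the orientation of the step functions $\theta_{a \, b} = \theta(x_a > x_b)$ with the sign conventions in the decompositions $\Lambda_j^{(N+1)} = (\Lambda_j^-)^{(N+1)} - \theta_{N+1 \, j}^{(N+1)} s_{j \, N+1}^{(N+1)}$ and $\Lambda_{j+1}^{(N+1)} = (\Lambda_j^+)^{(N+1)} + \theta_{j+1 \, 1}^{(N+1)} s_{1 \, j+1}^{(N+1)}$. This lemma is precisely the boundary piece: \rfl{bnlem3} and \rfl{bnlem4} together supply the inductive step for $n=0,\ldots,N-1$ of the identities $\partial_{j,\gamma}^{(N+1)} b^\mp_\lambda = b^\mp_\lambda \partial_{j,\gamma}^{(N)}$ in \rft{bDunkl}, and \rfl{bnlem5} provides the top-order-in-$\gamma$ closure where $\f i^N_N(j)$ is forced to be empty.
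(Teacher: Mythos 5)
Your proof is correct and takes essentially the same route as the paper: the right-hand sides vanish by \rfl{bnlem4} at $n=N$ because $\f i^N_N(j)=\emptyset$, and your support argument for the left-hand sides is precisely the content of \rfeqn{eqn453} (established in the proof of \rfl{bnlem3}), which the paper simply cites. The one caveat is that the contradiction in the $b^+$ case requires the step function $\theta_{j+1\,1}^{(N\!+\!1)}$ — consistent with \rfeqn{eqn453} and the decomposition $\Lambda^{(N\!+\!1)}_{j+1}=\left(\Lambda^+_j\right)^{(N\!+\!1)}+\theta_{j+1\,1}^{(N\!+\!1)}s_{1\,j+1}^{(N\!+\!1)}$ — rather than $\theta_{1\,j+1}^{(N\!+\!1)}$ as typeset in the lemma, so the orientation issue you flag is real and should be resolved in that direction.
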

\begin{proof}
Both left- and right-hand sides vanish; the left-hand sides because of \rfeqn{eqn453}, and the right-hand sides because of \rfl{bnlem4} for $n=N$ (in which case $\f i^n_N$ is the $S_N$-orbit of $(1,2,\ldots,N)$).
\end{proof}

\cleardoublepage
\addcontentsline{toc}{chapter}{List of symbols}
\printnomenclature[30mm] \label{listofsymbols}

\bibliographystyle{plain}
\bibliography{H:/bibliography}
\label{references}

\end{document}